\declaretheoremstyle[bodyfont=\it,qed=\qedsymbol]{noproofstyle}
\numberwithin{equation}{chapter}
\declaretheorem[numberlike=equation]{observation}
\declaretheorem[name=Observation,numbered=no]{observation*}
\declaretheorem[numberlike=equation]{remark}
\declaretheorem[name=Remark,numbered=no]{remark*}
\declaretheorem[numberlike=equation]{fact}
\declaretheorem[numberlike=equation]{theorem}
\declaretheorem[name=Theorem,numbered=no]{theorem*}
\declaretheorem[numberlike=equation]{lemma}
\declaretheorem[name=Lemma,numbered=no]{lemma*}
\declaretheorem[numberlike=equation]{corollary}
\declaretheorem[name=Corollary,numbered=no]{corollary*}
\declaretheorem[numberlike=equation]{proposition}
\declaretheorem[name=Proposition,numbered=no]{proposition*}
\declaretheorem[numberlike=equation]{claim}
\declaretheorem[name=Claim,numbered=no]{claim*}
\declaretheorem[name=Conjecture,numbered=no]{conjecture*}
\declaretheorem[name=Question,numbered=no]{question*}
\declaretheorem[name=Open Problem]{openproblem}
\declaretheoremstyle[bodyfont=\it]{defstyle} 
\declaretheorem[numberlike=equation,style=defstyle]{definition}
\declaretheorem[unnumbered,name=Definition,style=defstyle]{definition*}
\declaretheorem[numberlike=equation,style=defstyle]{example}
\declaretheorem[unnumbered,name=Example,style=defstyle]{example*}
\declaretheorem[unnumbered,name=Notation=defstyle]{notation*}
\declaretheorem[numberlike=equation,style=defstyle]{construction}
\declaretheorem[unnumbered,name=Construction,style=defstyle]{construction*}
\declaretheoremstyle[]{rmkstyle} 
\newcommand{\m}[1][]{\textsf{Max-}#1}
\newcommand{\mcsp}[1][]{\textsf{Max-CSP}(#1)}
\newcommand{\mbcsp}[1][]{\textsf{Max-}\overline{\mbox{\textsf{B}}}\textsf{CSP}(#1)}
\newcommand{\bgd}{(\beta,\gamma)\text{-}}
\newcommand{\ebgd}{(\beta+\epsilon,\gamma-\epsilon)\text{-}}
\newcommand{\mF}{\mcsp[\mathcal{F}]}
\newcommand{\mf}{\mcsp[\{f\}]}
\newcommand{\mbf}{\mbcsp[f]}
\newcommand{\val}{\textsf{val}}
\newcommand{\cut}{\textsf{Cut}}
\newcommand{\mcut}{\m[\cut]}
\newcommand{\dcut}{\textsf{DiCut}}
\newcommand{\mdcut}{\m[\dcut]}
\newcommand{\twoand}{\textsf{2AND}}
\newcommand{\mtwoand}{\m[\twoand]}
\newcommand{\threeand}{\textsf{3AND}}
\newcommand{\mthreeand}{\m[\threeand]}
\newcommand{\kand}{k\textsf{AND}}
\newcommand{\mkand}{\m[\kand]}
\newcommand{\twosat}{\textsf{2SAT}}
\newcommand{\mtwosat}{\m[\twosat]}
\newcommand{\threesat}{\textsf{3SAT}}
\newcommand{\mthreesat}{\m[\threesat]}
\newcommand{\ksat}{k\textsf{SAT}}
\newcommand{\mksat}{\m[\ksat]}
\newcommand{\ug}{\textsf{UG}}
\newcommand{\mug}{\m[\ug]}
\newcommand{\mas}{\textsf{MAS}}
\newcommand{\bpd}{\textsf{BPD}}
\newcommand{\seqbpd}{\textsf{sBPD}}
\newcommand{\seqibpd}{\textsf{siBPD}}
\newcommand{\seqirsd}{\textsf{siRSD}}
\newcommand{\rmd}{\textsf{RMD}}
\newcommand{\seqrmd}{\textsf{sRMD}}
\newcommand{\pllrmd}{\textsf{pRMD}}
\newcommand{\Bern}{\CB}
\newcommand{\Unif}{\CU}
\newcommand{\Matchings}{\CM}
\newcommand{\Graphs}{\CG}
\newcommand{\yes}{\textbf{Yes}}
\newcommand{\no}{\textbf{No}}
\newcommand{\y}{\textbf{Y}}
\newcommand{\n}{\textbf{N}}
\newcommand{\veca}{\mathbf{a}}
\newcommand{\vecb}{\mathbf{b}}
\newcommand{\vecc}{\mathbf{c}}
\newcommand{\vece}{\mathbf{e}}
\newcommand{\vecj}{\mathbf{j}}
\newcommand{\vecs}{\mathbf{s}}
\newcommand{\vecu}{\mathbf{u}}
\newcommand{\vecv}{\mathbf{v}}
\newcommand{\vecx}{\mathbf{x}}
\newcommand{\vecy}{\mathbf{y}}
\newcommand{\vecz}{\mathbf{z}}
\newcommand{\vecsigma}{\boldsymbol{\sigma}}
\newcommand{\vecmu}{\boldsymbol{\mu}}
\newcommand{\vecpi}{\boldsymbol{\pi}}
\newcommand{\vecomega}{\boldsymbol{\omega}}
\newcommand{\veczero}{\mathbf{0}}
\newcommand{\vecone}{\mathbf{1}}
\newcommand{\tv}{\mathrm{tv}}
\newcommand{\Alice}{\textsf{Alice}}
\newcommand{\Bob}{\textsf{Bob}}
\newcommand{\Carol}{\textsf{Carol}}
\newcommand{\Player}{\textsf{Player}}
\newcommand{\Alg}{\textsf{Alg}}
\newcommand{\R}{\textsf{R}}
\newcommand{\bias}{\textsf{bias}}
\newcommand{\fold}{\textsf{fold}}
\newcommand{\supp}{\textsf{supp}}
\newcommand{\ord}{\mathbf{ord}}
\newcommand{\Test}{\textsf{Test}}
\newcommand{\Prot}{\textsf{Prot}}
\newcommand{\Th}{\textsf{Th}}
\newcommand{\sym}{\mathfrak{S}}
\newcommand{\mocsp}[1][]{\textsf{Max-OCSP}(#1)}
\newcommand{\mPi}{\mocsp[\Pi]}
\newcommand{\ordval}{\textsf{ordval}}
\newcommand{\Btwn}{\textsf{Btwn}}
\newcommand{\mbtwn}{\m[\Btwn]}
\newcommand{\coarsen}{{\downarrow}}
\newcommand{\refine}{{\uparrow}}
\newcommand{\Sym}{\textsf{Sym}}
\renewcommand{\hat}{\widehat}
\renewcommand{\tilde}{\widetilde}
\titleformat{\part}{\filcenter}{}{8pt}{ {\fontsize{100}{110}\selectfont \color{chaptergrey} \thepart} \vspace{1in} \\ \Huge\bfseries\scshape}
\begin{document}


\title{On streaming approximation algorithms for constraint satisfaction problems}
\author{Noah Singer}

\advisor{Madhu Sudan}

\committeeInternalOne{Salil Vadhan}



\degree{Bachelor of Arts}
\field{Computer Science and Mathematics}
\degreeyear{2022}
\degreeterm{Spring}
\degreemonth{March}
\department{Computer Science}


\maketitle
\setstretch{\dnormalspacing}
\abstractpage
\tableofcontents
\setlength{\epigraphwidth}{0.8\textwidth}
\dedicationpage
\acknowledgments



\chapter{Introduction}\label{chap:introduction}

\newthought{This thesis sits at the intersection} of two broad subfields of computer science: \emph{combinatorial optimization} and \emph{big data}. The former is an umbrella term for computational problems whose goal is to find the ``best'' solution among a finite, ``structured'' set of solutions, including tasks such as routing, packing, scheduling, and resource allocation. The latter encompasses a similar breadth of computational settings involving ``massive'' amounts of input data which necessitate ``highly efficient'' resource usage (quantified in terms of memory, time, energy, etc.), leading to e.g. online, distributed, parallel, and sublinear-time algorithms.

In this thesis, more specifically, we consider a particular class of combinatorial optimization problems called \emph{constraint satisfaction problems (CSPs)}, and a particular class of algorithms for big data called \emph{streaming algorithms}. Roughly, the goal of a CSP is to find a ``global solution'' satisfying as many ``local constraints'' as possible. More precisely, fix a finite set $\Sigma$, called the \emph{alphabet}. A \emph{$k$-ary predicate} is a function $\Sigma^k \to \{0,1\}$. A set $\CF$ of $k$-ary predicates defines a CSP denoted $\mF$. An \emph{instance} of $\mF$ is defined by $n$ \emph{variables}, each of which can be \emph{assigned} to a value drawn from the alphabet $\Sigma$, and $m$ \emph{constraints}, each of which applies a predicate from $\CF$ to some subset of $k$ variables. The goal of the $\mF$ problem is to find an assignment of variables to values satisfying as many constraints as possible. (See \cref{sec:csps} below for a formal definition.) For instance, consider the \emph{maximum cut} problem ($\mcut$), arguably the simplest CSP, which is defined over the Boolean alphabet $\Sigma = \{0,1\}$ by the binary predicate $\cut(a,b) = a \oplus b$ (i.e., $\cut(a,b)=1$ iff $a \neq b$). Thus in a $\mcut$ instance, we have $n$ variables which can be assigned to one of two values ($0$ or $1$), and $m$ constraints, each of which says ``variable $i$ and variable $j$ should have different values''. $\mcut$ has practical applications in e.g. circuit design and statistical physics \cite{BGJR88}; a toy application is splitting children into two groups on a field trip so as to minimize conflict, given a list of pairs of children who dislike each other. The CSP framework includes many other problems which are widely studied both in theory and in practice, such as $\mksat$ and $\m[q\textsf{Cut}]$.

Now, suppose that we want to solve a CSP such as $\mcut$ --- that is, find a good assignment, or at least understand whether good assignments exist --- on instances which are ``big'' in the following sense: The constraints are generated in some sequence, and there are too many of them to store. For example, we could imagine a setting in which many clients transmit many constraints to a server, which tries to satisfy as many of the constraints as possible. The theoretical model of \emph{streaming algorithms} attempts to capture these challenges: An algorithm is presented with a sequence of inputs, has limited memory space, and can only access the inputs in a single sequential pass.\footnote{I.e., the algorithm lacks \emph{random access} to the inputs --- it only sees the first input $C_1$, then $C_2$, etc., and cannot access an input $C_j$ before or after its position in the sequence. Of course, it can choose to \emph{store} an input $C_j$ once it's seen it, but its storage space is very limited.}\footnote{This particular streaming model can be relaxed in various ways, such as allowing multiple passes over the stream, or randomly ordering the stream's contents. See \cref{sec:diff-streaming-models}.} (See \cref{sec:streaming} for a formal definition of the model.) Streaming algorithms were introduced by Alon, Matias, and Szegedy~\cite{AMS99}, and model practical settings such as real-time analysis of network traffic and scientific data. For more on streaming algorithms, see the surveys \cite{Mut05,Cha20}.

Concretely, a ``streaming CSP algorithm'' is presented with the constraints of an input instance $\Psi$ and is tasked with \emph{estimating} the \emph{value} of $\Psi$, denoted $\val_\Psi$, which is the maximum fraction of constraints satisfiable by any assignment to the variables.\footnote{We don't typically require that the algorithm actually \emph{output} a good assignment, since even writing down such an assignment may take too much space.} See \cref{fig:streaming-mcut} on the next page for a visual representation. To be precise, we use the following standard notion of ``estimation'' for CSP values: For $\alpha \in [0,1]$, we say $\tilde{v}$ \emph{$\alpha$-approximates} $\val_\Psi$ if $\alpha \val_\Psi \leq \tilde{v} \leq \val_\Psi$. In other words, $\tilde{v}$ is an underestimate for $\val_\Psi$, but not by a factor smaller than $\alpha$.

\def\hexa{0.5}
\def\hexb{0.866}
\def\hexc{1}
\def\hexs{3.5}
\def\hext{1}
\def\hexu{1.4}

\begin{figure}
\centering
\begin{tikzpicture}[vertex/.style={fill=black}, nxt/.style={-{Triangle[width=15pt,length=8pt]}, line width=8pt,draw=gray}, edge/.style={line width=1.5pt,draw=black!20!blue}]

\foreach \i in {0,...,4} {
    \draw[vertex] (\hexs*\i+\hext*\hexc,0) circle (3pt);
    \draw[vertex] (\hexs*\i+\hext*\hexa,\hext*\hexb) circle (3pt);
    \draw[vertex] (\hexs*\i+\hext*\hexa,-\hext*\hexb) circle (3pt);
    \draw[vertex] (\hexs*\i-\hext*\hexc,0) circle (3pt);
    \draw[vertex] (\hexs*\i-\hext*\hexa,\hext*\hexb) circle (3pt);
    \draw[vertex] (\hexs*\i-\hext*\hexa,-\hext*\hexb) circle (3pt);
}

\foreach \i in {0,...,3} {
    \draw[nxt] (\hexs*\i+\hext*\hexc*\hexu,0) to (\hexs*\i+\hexs-\hext*\hexc*\hexu,0);
}

\filldraw[white!50!green] (\hexs,-1.5) -- (\hexs-0.75,-2.5) -- (\hexs+0.75,-2.5);
\node[black,align=center] at (\hexs,-2.25) {$\Alg$};

\draw[edge] (\hexs*0-\hext*\hexc,0) to (\hexs*0+\hext*\hexc,0);
\draw[edge] (\hexs*1+\hext*\hexc,0) to (\hexs*1+\hext*\hexa,-\hext*\hexb);
\draw[edge] (\hexs*2+\hext*\hexa,-\hext*\hexb) to (\hexs*2-\hext*\hexa,\hext*\hexb);
\draw[edge] (\hexs*3-\hext*\hexa,-\hext*\hexb) to (\hexs*3+\hext*\hexa,\hext*\hexb);
\draw[edge] (\hexs*4-\hext*\hexa,\hext*\hexb) to (\hexs*4-\hext*\hexa,-\hext*\hexb);
\end{tikzpicture}

\caption{A visual representation of an instance of $\mcut$ on $n=6$ variables (``vertices'') with $m=5$ constraints (``edges''). The streaming algorithm $\Alg$ makes a single linear pass through the list of constraints, and tries to decide whether it's possible to find a partition of the vertices which is crossed by most of the edges.}
\label{fig:streaming-mcut}
\end{figure}
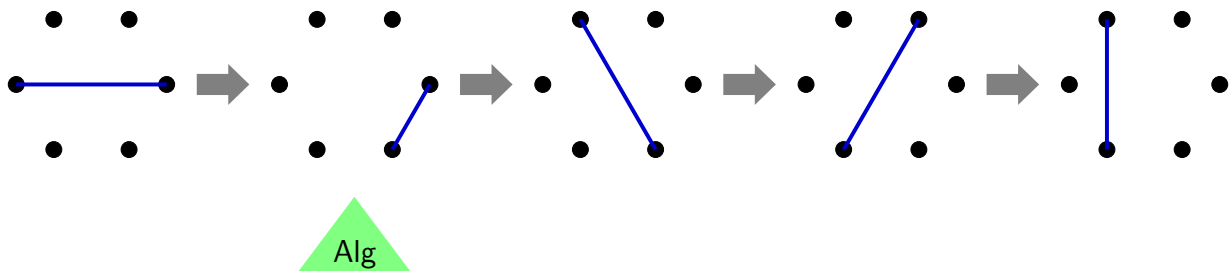

Every $\mcut$ instance has value at least $\frac12$; indeed, a random assignment satisfies half the constraints in expectation. Thus, the estimate $\frac12$ is always a $\frac12$-approximation to $\val_\Psi$. Conversely, instances of $\mcut$ on many variables with many uniformly random constraints have values arbitrarily close to $\frac12$ (see \cref{sec:bpd}). Thus, $\frac12$ is the infimum of $\val_\Psi$ over all $\mcut$ instances $\Psi$, so it is the best possible ``trivial'' (i.e., input-independent) estimate. A \emph{nontrivial} approximation to $\mcut$ is therefore a $(\frac12+\epsilon)$-approximation for some $\epsilon > 0$. For other CSPs we similarly term an $\alpha$-approximation nontrivial if $\alpha$ exceeds the infimum of all instances' values \cite{CGSV21-finite}. If a CSP cannot be nontrivially approximated, it is \emph{approximation-resistant}.

The central question is now:

\begin{center}
    \emph{For which families of predicates, which desired approximation ratios $\alpha$, and which classes of streaming algorithms can we prove positive (a.k.a. algorithmic) or negative (a.k.a. hardness) results for $\mF$?}
\end{center}

Strikingly, until 2015 no research explored the intersection of CSPs and streaming algorithms, though alone each area had been explored extensively. But since then, based on open questions posed at the 2011 Bertorino Workshop on Sublinear Algorithms \cite[Question 10]{IMNO11}, there has been a significant line of research on streaming algorithms for CSPs \cite{KK15,KKS15,GVV17,KKSV17,KK19,CGV20,CGSV21-boolean,CGSV21-finite,SSV21,BHP+22,CGS+22} which has both resolved a good number of the initial questions and advanced a number of new and interesting ones.

\section{Summary of prior work}

Kogan and Krauthgamer~\cite{KK15} and Kapralov, Khanna, and Sudan~\cite{KKS15} studied streaming approximations for $\mcut$; the latter proved that for every $\epsilon > 0$, streaming $(\frac12+\epsilon)$-approximations to $\mcut$ on instances with $n$ variables require $\Omega(\sqrt n)$ space (see \cref{thm:mcut-hardness} below).\footnote{Note that the implicit constant in the $\Omega(\cdot)$ notation can depend on the desired constant in the approximation factor, i.e., on $\epsilon$. To simplify the language, we refer to this as a ``$\sqrt n$-space inapproximability result'', but we are carefully to make the quantifiers explicit in theorem statements.} Their result actually holds in a stronger model than we've described so far, in which the constraints of an instance $\Psi$ are ``randomly ordered''. Guruswami, Velingker, and Velusamy~\cite{GVV17} gave an $O(\log n)$-space streaming $(\frac25-\epsilon)$-approximation algorithm for the \emph{maximum directed cut} ($\mdcut$) problem, and also showed, via reduction from $\mcut$, that $(\frac12+\epsilon)$-approximations require $\Omega(\sqrt n)$ space. Chou, Golovnev, and Velusamy~\cite{CGV20} closed this gap, by showing that for every $\epsilon > 0$, $\mdcut$ can be $(\frac49-\epsilon)$-approximated in $O(\log n)$ space, but $(\frac49+\epsilon)$-approximations require $\Omega(\sqrt n)$ space (\cref{thm:mdcut-characterization} below). \cite{CGV20} also analyzes the approximability of several other problems, including $\mksat$.

Building on \cite{CGV20}, for every $\mF$ problem, Chou, Golovnev, Sudan, and Velusamy~\cite{CGSV21-finite} proved a \emph{dichotomy theorem} for \emph{sketching algorithms}, which are ``composable'' streaming algorithms (see \cref{sec:streaming} for the definition). Their theorem says that for every family of predicates $\CF$, there exists some \emph{sketching approximability threshold} $\alpha(\CF) \in [0,1]$ such that for every $\epsilon > 0$, $\mF$ can be $(\alpha(\CF)-\epsilon)$-approximated in $O(\polylog n)$ space, but $(\alpha(\CF)+\epsilon)$-approximations require $\Omega(\sqrt n)$ space. In an earlier paper \cite{CGSV21-boolean}, they prove the same result in the special case of so-called ``Boolean CSPs with negations'': For a Boolean predicate $f : \{0,1\}^k\to\{0,1\}$, consider the family $\CF_{\neg f}$ of $2^k$ predicates corresponding to negating subsets of $f$'s inputs, which define a CSP $\mbf \eqdef \mcsp[\CF_{\neg f}]$. \cite{CGSV21-boolean} provides a sketching dichotomy theorem for $\mbf$, yielding a sketching approximability threshold $\alpha(f)\eqdef \alpha(\CF_{\neg f})$. However, neither \cite{CGSV21-boolean,CGSV21-finite} provide explicit procedures for calculating these thresholds; indeed, it is not \emph{a priori} clear that they even have closed-form expressions. On the other hand, \cite{CGSV21-boolean,CGSV21-finite} include a number of other results, in particular conditions under which the dichotomy's lower bound extends to \emph{streaming} algorithms. See \cref{sec:cgsv} below for formal descriptions of the \cite{CGSV21-boolean} results.

Another line of work \cite{KKSV17,KK19,CGS+22} extends some of the above inapproximability results to the setting of \emph{linear}-space algorithms. In particular, Kapralov and Krachun~\cite{KK19} show that streaming $(\frac12+\epsilon)$-approximations for $\mcut$ require $\Omega(n)$ space, and Chou, Golovnev, Sudan, Velingker, and Velusamy~\cite{CGS+22} extend this tight inapproximability to certain other ``linear'' CSPs (see \cref{thm:cgsvv} in \cref{sec:cgsvv} below).

\section{Contributions and outline}

\cref{chap:prelims} contains formal definitions for CSPs and streaming algorithms, along with miscellaneous preliminary material.

\paragraph{Expositions of prior work.} In \cref{part:prior-results} of this thesis, we present several of the foundational works on streaming algorithms for CSPs. The aim here is to make these results accessible to a general theory audience and take advantage of newer perspectives from later papers which, in some cases, have substantially simplified earlier constructions.

\cref{chap:mcut} contains the first self-contained writeup of the result, implied by \cite{KKS15}, that nontrivially approximating $\mcut$ requires $\Omega(\sqrt n)$ streaming space. Recall, \cite{KKS15} includes a stronger version of this statement for ``randomly ordered'' streams of constraints; the proof of this stronger statement requires some additional steps which are somewhat tangential to the fundamental question of $\mcut$'s approximability. Our proof combines and simplifies pieces of \cite{GKK+08, KKS15, CGSV21-boolean}. We then discuss various ways in which $\mcut$'s hardness has been strengthened in \cite{KKS15,KK19}.

In \cref{chap:mdcut}, we present \cite{CGV20}'s proof that $\mdcut$'s streaming approximability threshold (for $\log n$ vs. $\sqrt n$ space) is $\frac49$. Our exposition differs from the original in \cite{CGV20} in two important ways. Firstly, we describe a substantially simpler algorithm, which is based on observations from our joint work \cite{BHP+22} (which we later explore in \cref{chap:sym-bool}). Secondly, we emphasize structural similarities between the $(\frac49-\epsilon)$-approximation algorithm and the hardness proof for $(\frac49+\epsilon)$-approximation, namely the use of so-called \emph{template distributions}, which later become the basis for the dichotomy theorems for \emph{all} CSPs of
\cite{CGSV21-boolean,CGSV21-finite}.

We turn to these dichotomy theorems, as well as the \cite{CGS+22} linear-space lower bounds, in \cref{chap:framework-papers}. We give high-level surveys of these works, which are less technical than our discussions in the previous two chapters.\footnote{In \cref{chap:mcut,chap:mdcut} we sometimes omit concentration bounds in proofs in order to focus on more important quantitative aspects; these places are carefully noted.} However, we do take care when stating results which we'll require in our own work, presented in \cref{part:contributions}.

\paragraph{Ordering CSPs.} In \cref{chap:ocsps} (which begins \cref{part:contributions}), we present the main result of our joint work with Sudan and Velusamy~\cite{SSV21}, published in APPROX'21. This result (\cref{thm:ocsp-hardness} below) states that for a certain class of ``CSP-like'' problems called \emph{ordering constraint satisfaction problems (OCSPs)}, all nontrivial streaming approximations require $\Omega(n)$ space. Recall that in a CSP, the solution space is the set of assignments from variables to alphabet values; in an OCSP, the solution space is instead the set of permutations on the variables (see \cref{sec:ocsps} below for details), and thus, OCSPs are good models for scheduling problems.

Our result in \cite{SSV21} is ``triply optimal'': It rules out \emph{all} nontrivial approximations for \emph{all} OCSPs, and the $\Omega(n)$-space bound is \emph{tight} (up to polylogarithmic factors, see \cref{rem:sparsifier} below). Previous works \cite{GVV17,GT19} studied the \emph{maximum acyclic subgraph ($\mas$)} problem --- a simple OCSP defined by the predicate ``variable $i$ is before variable $j$'' --- and proved that some approximations require $\Omega(\sqrt n)$ space; thus, our result in \cite{SSV21} is an improvement in all three ``parameters''.

Additionally, our inapproximability proof for OCSPs relies on linear-space inapproximability results for (non-ordering) CSPs from \cite{CGS+22}, which we'll have described earlier in \cref{sec:cgsvv}. Given this context, in \cref{chap:ocsps} we develop a more modular version of the proof than appeared originally in \cite{SSV21}.

\paragraph{Symmetric Boolean CSPs.} In \cref{chap:sym-bool}, we present our joint work with Boyland, Hwang, Prasad, and Velusamy~\cite{BHP+22} which investigates streaming algorithms for specific types of CSPs over the Boolean alphabet $\{0,1\}$, namely those with \emph{symmetric} predicates (i.e., predicates which depend only on the number of 1's, a.k.a. Hamming weight, of their inputs). These CSPs are an interesting ``lens'' through which we examine several questions left open by the work of Chou \emph{et al.}~\cite{CGSV21-boolean,CGSV21-finite}.

Our main goal in this work is to discover and exploit ``structural'' properties of the \cite{CGSV21-boolean} dichotomy theorem in order to give explicit expressions for the sketching approximability threshold $\alpha(f)$ for several classes of predicates $f$. For instance, letting $\alpha'_k \eqdef 2^{-(k-1)}(1-k^{-2})^{(k-1)/2}$, we show that $\alpha(\kand) = \alpha'_k$ for odd $k$; $\alpha(\kand)=2\alpha'_{k+1}$ for even $k$ (\cref{thm:kand-approximability} below); and $\alpha(\Th^{k-1}_k) = \frac{k}2\alpha'_{k-1}$ for odd $k$ (\cref{thm:k-1-k-approximability} below). We also resolve the thresholds for fifteen other specific functions (\cref{sec:other-analysis} below).

We also present two other results based on our study of symmetric Boolean predicates. Firstly, for \emph{threshold} predicates (which equal $1$ iff their input's Hamming weight is at least $t$ for some constant $t$, a.k.a. monotone symmetric predicates), we develop substantially simpler sketching algorithms which also achieve the optimal approximation thresholds given by \cite{CGSV21-boolean} (see \cref{thm:thresh-bias-alg} below). Secondly, we show that the criteria in \cite{CGSV21-boolean} for proving \emph{streaming} hardness results are ``incomplete'', in the sense that they cannot establish a sharp approximability threshold for $\mthreeand$.

Most proofs and discussions in \cref{chap:sym-bool} are reproduced with few changes from our paper \cite{BHP+22}.

\paragraph{Open questions.} Finally, in \cref{chap:conclusions}, we collect several interesting open questions and directions for further investigation into streaming algorithms for CSPs. Some were posed already in prior work, while others arose in our work with collaborators and appear for the first time here.

\section{Important themes and motivations}

What are the theoretical implications of all this work on streaming algorithms for approximating CSPs? In contrast to streaming algorithms, there is an extensive theory of CSP approximability for \emph{classical} algorithms, where the performance requirement is only running in $\poly(n)$ time; see e.g. the survey \cite{Tre04}. However, this theory assumes complexity-theoretic conjectures such as $\P\neq\NP$. By contrast, as we'll see, the hardness results for streaming CSPs discussed in this thesis are all \emph{unconditional}, i.e., they do not rely on any unproven conjectures!

In some sense, the fact that we have unconditional hardness results may more of a ``bug'' than a ``feature'' of the streaming model. Indeed, almost all useful algorithms from the world of classical CSPs seemingly cannot be implemented as streaming algorithms; thus, it's arguably unsurprising that it is feasible to prove hardness results against the remaining algorithms (though when we instead manage to develop streaming algorithms, it's quite exciting!).

In this section, we argue that the streaming CSP theory has more to offer than the technical statements ``in it of themselves''. In particular, we highlight three themes which seem relevant to the broader areas of CSPs or streaming models, and which will be helpful to keep in mind for the remainder of the thesis. This section may be safely skipped (and hopefully revisited!) for readers unfamiliar with the preliminaries discussed in \cref{chap:prelims}.

\subsection{Parallels to classical hardness-of-approximation}

Many of the first classical hardness-of-approximation results for CSPs, and other combinatorial optimization problems, were proven using the machinery of \emph{probabilistically checkable proofs (PCPs)} \cite{AS98,ALM+98}. Indeed, the classical approximability of many problems have been tightly resolved using PCP techniques, such as the CSP $\mthreesat$ \cite{Has01} as well as non-CSPs including set cover \cite{Fei98}, chromatic number \cite{FK98}, and max-clique \cite{Has99}. PCP-based hardness results typically only assume the $\P \neq \NP$ conjecture. Later, Khot~\cite{Kho02} introduced a new complexity-theoretic hypothesis (stronger than $\P \neq \NP$) called the \emph{unique games conjecture (UGC)}, with an aim towards further understanding the classical approximability of combinatorial optimization problems.\footnote{The UGC roughly posits optimal hardness for the CSP $\mug$ which allows binary predicates with the property fixing one variable uniquely specifies the other variable, over an arbitrarily large alphabet $\Sigma$. In other words, $\mug = \mcsp[\{f_{\vecpi} : \vecpi \text{ is a bijection on }\Sigma\}]$, where $f_{\vecpi} : \Sigma^2 \to \{0,1\}$ is defined by $f_{\vecpi}(a,b)=1$ iff $b=\vecpi(a)$.} The UGC is now known to imply tight approximability results for CSPs including $\mcut$ \cite{KKMO07}, $\mtwosat$ \cite{Aus07}, and $\mkand$ (up to a constant factor) \cite{ST09,CMM09} as well as for other combinatorial optimization problems including OCSPs \cite{GHM+11}, $\mcut$ on instances of value $\frac12+\epsilon$ \cite{KO09}, and vertex cover \cite{KR08}.

Consider two ``worlds'' for a programmer hoping to approximately solve combinatorial optimization problems: World U, where the UGC is true and the programmer can employ polynomial-time algorithms, and World S, where the programmer is restricted to polylogarithmic-space streaming algorithms. There are surprising parallels between what we know about approximability in these two worlds:

\begin{itemize}
    \item The Chou \emph{et al.}~\cite{CGSV21-finite} dichotomy theorem for $\sqrt n$-space sketching in World S, presented in \cref{sec:cgsv}, is analagous to Raghavendra's dichotomy theorem~\cite{Rag08} in World U. The former shows that in World S, bias-based linear sketching algorithms \`a la \cite{GVV17} are optimal for every CSP; the latter shows that in World U, the optimal algorithms are \emph{semidefinite programming (SDP)} ``relax and round'' algorithms \`a la \cite{GW95} for $\mcut$ (see \cite{MM17} for a recent survey).\footnote{There are a number of other ``classical dichotomy theorems'' for CSPs, specifically concerning \emph{exact computation} (i.e., deciding whether $\val_\Psi = 1$) \cite{Sch78,FV98,Bul17,Zhu20} and so-called \emph{coarse approximation} (see \cite{KSTW01} and the book \cite{CKS01}).}
    \item In World S, \cite{CGSV21-finite} also shows that every CSP satisfying a natural property called ``supporting one-wise independence'' is streaming approximation-resistant in $\sqrt n$ space (see \cref{ex:one-wise-indep} below). Austrin and Mossel~\cite{AM09} proved an analogous result in World U for CSPs satisfying a higher-order condition called ``supporting \emph{two}-wise independence''.
    \item For ordering CSPs in World S, the linear-space approximation-resistance result from our joint work~\cite{SSV21}, presented in \cref{chap:ocsps}, is analogous to a theorem of Guruswami, H{\aa}stad, Manokaran, Raghavendra, and Charikar~\cite{GHM+11} which states that OCSPs are approximation-resistant in World U. As we'll see in \cref{chap:ocsps}, there are striking similarities between the proof methods for these two results --- both rely on reducing from inapproximability results for ``coarse'' CSP variants of OCSPs.
    \item $\mkand$ is the most ``approximable'' $\mbf$ problem (see \cref{rem:kand-approx} below), intuitively because its constraints are the most ``informative'' --- every constraint tells us \emph{exactly} what its variables need to be assigned to. Thus, it is fortunate that we can resolve its approximability in both worlds: In World S, our joint work~\cite{BHP+22}, presented in \cref{chap:sym-bool}, shows that the $\sqrt n$-space sketching approximability of the $\mkand$ problem is $(2-o(1)) 2^{-k}$ (\cref{thm:kand-approximability} below). In World U, $\mkand$ is $\Theta(k2^{-k})$-approximable \cite{ST09,CMM09}.\footnote{Indeed, \cite{ST09} show even that $O(k2^{-k})$-approximation is $\NP$-hard, though their result is stronger assuming the UGC.} Moreover, with regard to the algorithmic aspects discussed in the first bullet point, for $\mkand$, our optimal sketching algorithms from \cite{BHP+22} and the optimal SDP rounding algorithms from \cite{Has05,CMM09} have a surprisingly similar structure: Find a good ``guess'' assignment $\vecx \in \BZ_2^n$ and then randomly perturb each of its bits independently with some small constant probability. However, our algorithm chooses $\vecx$ purely combinatorially --- based on whether a variable occurs more often positively or negatively --- while the \cite{Has05,CMM09} algorithms produce it by randomly rounding an SDP.
\end{itemize}

These parallels may be evidence that there is some truth to World U's hardness results, at least for weak classes of algorithms.

\subsection{Random instances of CSPs (and average-case hardness)}

In the classical setting, there has also been a great deal of interest in algorithms for ``random instances'' of CSPs; in a typical example, constraints are sampled by uniformly sampling variables and predicates. Feige's well-known \emph{random $\threesat$ hypothesis}~\cite{Fei02} states that classical algorithms cannot ``refute'' random instances of $\mthreesat$ for any constant ratio $\frac{m}n$ of constraints to variables; we very roughly define ``refuting'' as ``distinguishing from perfectly satisfiable instances.'' Feige's and related conjectures have numerous applications in hardness-of-approximation, cryptography, and learning (see \cite[\S1.2]{KMOW17} for a review). On the other hand, there has been significant algorithmic progress for a wide variety of CSPs in the setting where the constraint-to-variable ratio is larger \cite{BKPS98,FGK05,CGL07,FO07,AOW15,BM16,RRS17,GKM22}.

In contrast, in the setting of streaming algorithms, all our lower bounds come from random instances! Specifically, we'll show that streaming algorithms cannot distinguish between ``$\no$'' instances which are sampled fully randomly and ``$\yes$'' instances sampled randomly \emph{conditioned on having high value on a random ``planted'' assignment.} We'll explore this paradigm in the simple case of $\mcut$ in \cref{chap:mcut}.

Both the algorithmic results in the classical setting, and the hardness results in the streaming setting, rely on careful combinatorial analyses of random graphs. Again, streaming hardness may also be heuristic evidence for the truth of e.g. Feige's random $\threesat$ hypothesis, at least for weak classes of algorithms.

\subsection{Subtleties of streaming models}\label{sec:diff-streaming-models}

An instance of a binary CSP can be viewed as a (directed) graph in which edges are labeled with predicates; more broadly, $k$-ary CSPs are so-called ``$k$-uniform hypergraphs'', or \emph{$k$-hypergraphs} for short, with predicates labeling the hyperedges. Thus, the streaming approximation algorithms for CSPs which we've discussed fall under the purview of algorithms for ``graph streams'', which are problems in which the input is a stream of labeled (hyper)edges in a (hyper)graph (see e.g. the survey \cite{McG14}).

There are a number of interesting variations on the basic graph streaming model of ``small space, one pass''. In this subsection, we focus specifically on two ways in which the model can be weakened, namely randomly ordering the input or allowing multiple passes, and one way it can be strengthened, namely requiring the algorithms to be \emph{composable} (resulting in so-called ``sketching algorithms''). For each of these, we'll cite known separations for various approximate combinatorial optimization problems. 

\paragraph{Input ordering.} What happens when we require that a streaming algorithm succeeds only on randomly ordered input streams (with high probability), instead of on \emph{all} input orderings? Intuitively, this may lessen the burden on the algorithm, because since the algorithm is very ``forgetful'' about its input in the long term, a short sequence of bad inputs may cause catastrophic failures; in randomly ordered streams, such sequences may be less likely. This phenomenon has been widely explored throughout the literature on ``big data'' algorithms; see \S1 of the survey \cite{GS21} for an instructive exposition in the simple case of online algorithms for calculating the maximum element in a list. In the more immediate setting of graph streaming, Peng and Sohler~\cite{PS18} established provable separations between the ``random-order'' and ``adversarial-rder'' settings for graph problems including approximately counting components and approximating minimum spanning tree weight. See \cref{sec:mcut-input-ordering,sec:conc-rand-lspace} for discussions in the setting of streaming algorithms for CSPs.

\paragraph{Sketching vs. streaming.} \emph{Sketching} algorithms are special streaming algorithms which roughly have the following ``composability'' property: We can choose to split the input stream into pieces, and the algorithm has to ``support'' being run independently on each piece and then combining the results; See \cref{sec:streaming} for a formal definition. Kapralov, Kallaugher, and Price~\cite{KKP18,KP20} proved separations between sketching and streaming algorithms for the problem approximate triangle counting in graphs.

\paragraph{Multiple passes.} Finally, we can consider allowing the streaming algorithm to make multiple passes over the input data. Proving lower bounds against streaming algorithms even in two passes is typically very difficult, outside of highly structured contexts such as following paths in graphs (an example of so-called ``pointer chasing'' problems, see e.g. \cite{GO16}). Provable separations between single- and multi-pass streaming algorithms are known for estimating matrix norms \cite{BCK+18,BKKS20} and approximately counting subgraphs (see \cite{BC17}). Recently, Assadi, Kol, Saxena, and Yu~\cite{AKSY20} and Assadi and N~\cite{AN21} proved multipass lower bounds for approximating (variants of) cycle counting in graphs, which rule out \emph{some} nontrivial approximations for CSPs (though we seem quite far from even ruling out all nontrivial approximations for $\mcut$ in two passes).

\vspace{0.1in}

Furthermore, as we'll see in \cref{sec:lb-basics}, lower bounds for streaming problems are typically proven by reducing from \emph{communication problems}. In these problems, several players each get a ``chunk'' of the input stream, and in the corresponding \emph{communication-to-streaming reduction}, each player will run the streaming algorithm on their input chunk and then pass the current state of the streaming algorithm onto the next player. Thus, small-space streaming algorithms make for efficient communication protocols (and our goal is then to prove lower bounds against efficient communication protocols).

Details in the definitions of these communication problems crucially affect the streaming models we can prove lower bounds against. We give some high-level intuition for why this is the case. For starters, proving lower bounds against random-ordering algorithms generally requires \emph{symmetry} between the players: They should receive inputs drawn from the same sources, and behave the same way on these inputs. Otherwise, they'll be constructing a stream in the reduction which is far from randomly-ordered. On the other hand, to prove lower bounds against sketching algorithms, it suffices to consider communication games in which the players communicate in \emph{parallel} instead of sequentially, because the composability property implies that each player can independently run the algorithm on their own ``chunk''. Finally, the difficulty in proving multi-pass lower bounds is that the communication problem has multiple ``rounds'' (corresponding to each pass) in which each player gets to see their chunk again. All of these subtleties mean that the communication problems arising from the study of streaming approximations for CSPs are arguably quite interesting even apart from applications to streaming lower bounds. Finally, another motivation is that these communication problems have compelling connections to Fourier analysis and random graph combinatorics.
\chapter{Preliminaries}\label{chap:prelims}

\newthought{We begin with tools and definitions} which are necessary background for the remainder of the thesis. This chapter reviews topics including CSPs (\cref{sec:csps}), streaming and sketching algorithms (\cref{sec:streaming}), random (hyper)graphs (\cref{sec:hypergraphs}), Fourier analysis over $\BZ_q^n$ (\cref{sec:fourier}), and, importantly, the use of one-way communication lower bounds and Fourier analysis to prove streaming lower bounds (\cref{sec:lb-basics}).

We let $[n]$ denote the set of natural numbers $\{1,\ldots,n\}$ and $\BZ_q$ the integers modulo $q$.\footnote{The main difference between $[q]$ and $\BZ_q$ is the implied addition operation.} We typically use bold to denote vectors but not their components, e.g., $\vecb=(b_1,\ldots,b_k)$. Sequences of vectors are indexed with parentheses (e.g., $\vecb(\ell) = (b(\ell)_1,\ldots,b(\ell)_k)$). For a vector $\vecb \in \BZ_q^n$, we define its \emph{Hamming weight} (a.k.a. \emph{$0$-norm}) $\|\vecb\|_0 \eqdef |\{i \in [n]: b_i \neq 0\}|$ as its number of nonzero entries. For a finite alphabet $\Sigma$, we let $\Sigma^k,\Sigma^{\leq k},$ and $\Sigma^*$ denote strings over $\Sigma$ of length $k$, length at most $k$, and arbitrary length, respectively. Let $\1_S$ denote the indicator variable/function for an event $S$, and given any finite set $\Omega$, let $\Delta(\Omega)$ denote the space of probability distributions over $\Omega$.

We typically write distributions and some sets (such as function families) using calligraphic letters. For instance, $\Unif_S$ denotes the uniform distribution over $S$, $\Bern_p$ the Bernoulli distribution taking value $0$ with probability $p$ and $1$ with probability $1-p$, and and $\CP_X$ the probability distribution for a (discrete) random variable $X$. We use functional notation for distribution probabilities, i.e., $\CD(\omega) \eqdef \Pr_{\omega' \sim \CD}[\omega=\omega']$. The \emph{support} of a distribution $\CD \in \Delta(\Omega)$ is the set $\supp(\CD) \eqdef \{\omega \in \Omega: \CD(\omega) \neq 0\}$. We similarly define $\supp(f) \eqdef \{\omega \in \Omega: f(\omega) \neq 0\}$ for functions $f : \Omega \to \BC$ and $\supp(\vecv) \eqdef \{i \in [n] : v_i \neq 0\}$ for vectors $\vecv \in \BZ_q^n$ (so that $\|\vecv\|_0 = |\supp(\vecv)|$).

\section{Constraint satisfaction problems}\label{sec:csps}

A \emph{constraint satisfaction problem} $\mF$ is defined by $2 \leq q,k \in \BN$ and a labeled set of \emph{predicates} $\CF = \{f_b : \BZ_q^k \to \{0,1\}\}_{b\in B_{\CF}}$ for some finite set of labels $B_{\CF}$.\footnote{Properly speaking, the CSP is defined only by the family of predicates; we include labels for notational convenience.} A \emph{constraint} $C$ on $n \in \BN$ \emph{variables} is given by a triple $(b,\vecj,w)$ consisting of a label $b \in B_\CF$, a $k$-tuple $\vecj = (j_1,\ldots,j_k)\in[n]^k$ of distinct indices, and a weight $w \geq 0$. An \emph{instance} $\Psi$ of $\mF$ consists of a list of $m$ constraints $(C_\ell = (b(\ell),\vecj(\ell),w(\ell)))_{\ell\in[m]}$. We sometimes omit weights for the constraints, in which case we take them to be identically $1$; similarly, if $|\CF|=1$, we omit labels for the constraints.

The \emph{union} of two instances $\Psi_1$ and $\Psi_2$ of $\mF$, denoted $\Psi_1 \cup \Psi_2$, is the instance given by concatenating the lists of constraints for $\Psi_1$ and $\Psi_2$.

For an \emph{assignment} $\vecx \in \BZ_q^n$, let $\vecx|_\vecj \eqdef (x_{j_1},\ldots,x_{j_k}) \in \BZ_q^k$ denote $\vecx$'s ``restriction'' to the indices $\vecj$. An assignment $\vecx$ \emph{satisfies} $C=(b,\vecj,w)$ iff $f_b(\vecx|_{\vecj})=1$. The \emph{value} of an assignment $\vecx \in \BZ_q^n$ on an instance $\Psi$, denoted $\val_\Psi(\vecx)$, is the (fractional) weight of constraints satisfied by $\vecx$, i.e., \[ \val_\Psi(\vecx) \eqdef \frac1{W_\Psi} \sum_{\ell=1}^m w(\ell) f_{b(\ell)}(\vecx|_{\vecj(\ell)}) \] where $W_\Psi \eqdef \sum_{\ell=1}^m w(\ell)$ is the total weight in $\Psi$. Finally, the \emph{value} of $\Psi$, denoted $\val_\Psi$, is the maximum value of any assignment, i.e., \[ \val_\Psi \eqdef \max_{\vecx \in \BZ_q^n}\left( \val_\Psi(\vecx)\right). \] Computationally, the goal of the $\mF$ problem is to ``approximate'' $\val_\Psi$. More precisely, for $\alpha \in (0,1]$, we say a randomized algorithm $\Alg$ \emph{$\alpha$-approximates $\mF$} if $\alpha \, \val_\Psi \leq \Alg(\Psi) \leq \val_\Psi$ with probability at least, say, $\frac23$ over the choice of randomness. For $\beta < \gamma \in[0,1]$, we also consider the closely-related $\bgd\mF$ \emph{gap problem}, the goal of which is to distinguish between the cases $\val_\Psi \leq \beta$ and $\val_\Psi \geq \gamma$, again with probability at least $\frac23$ over the choice of randomness.\footnote{One direction of this ``close relationship'' is that if $\Alg$ $\alpha$-approximates $\mF$ and $\frac{\beta}{\gamma} < \alpha$, then $\Alg$ also solves the $\bgd\mF$ problem. For the other direction, see the proof sketch of \cref{cor:cgsv-bool-approx} below.}

\subsection*{Approximation resistance}

For a CSP $\mF$, define \[ \rho(\CF) \eqdef \inf_\Psi \val_\Psi. \] $\rho(\CF)$ has the following explicit formula:

\begin{proposition}[{\cite[Proposition 2.12]{CGSV21-finite}}]
    For every $\CF$, \[ \rho(\CF) = \min_{\CD \in \Delta(\CF)} \max_{\CD' \in \Delta(\BZ_q)} \left(\E_{f\sim\CD,\veca\sim(\CD')^k}[f(\veca)]\right). \]
\end{proposition}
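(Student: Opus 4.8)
The plan is to prove the two inequalities $\rho(\CF) \le \min_{\CD} \max_{\CD'} \E[f(\veca)]$ and $\rho(\CF) \ge \min_{\CD} \max_{\CD'} \E[f(\veca)]$ separately, treating the value $\min_{\CD \in \Delta(\CF)} \max_{\CD' \in \Delta(\BZ_q)} \E_{f \sim \CD, \veca \sim (\CD')^k}[f(\veca)]$ as a ``two-player game value.'' The key idea is that a distribution $\CD \in \Delta(\CF)$ naturally encodes an instance (sample constraints i.i.d. from $\CD$, placed on random disjoint-or-random variable tuples), and conversely an instance induces an empirical distribution over predicates; meanwhile a distribution $\CD' \in \Delta(\BZ_q)$ encodes a ``product assignment'' where each variable is set i.i.d. from $\CD'$.

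For the upper bound $\rho(\CF) \le \min_{\CD} \max_{\CD'} \E[f(\veca)]$: fix any $\CD \in \Delta(\CF)$ achieving (or approaching) the outer min. I would build an instance $\Psi$ on $n$ variables whose constraints are obtained by drawing a large number $m$ of i.i.d. samples: a label $b \sim \CD$ and a uniformly random tuple $\vecj$ of $k$ distinct indices. For large $n$ and $m$, I claim $\val_\Psi$ is close to $\max_{\CD'} \E_{f \sim \CD, \veca \sim (\CD')^k}[f(\veca)]$: the upper direction is because any fixed assignment $\vecx$ has an empirical distribution of values $\widehat{\CD'}$ on its coordinates, and since the $k$ sampled indices are distinct but essentially independent (for $n \gg k$), the expected fraction of satisfied constraints is close to $\E_{f \sim \CD, \veca \sim (\widehat{\CD'})^k}[f(\veca)] \le \max_{\CD'}(\cdots)$; a union bound / concentration argument over the (finitely many ``types'' of) assignments $\vecx$ upgrades ``expected'' to ``for all $\vecx$ simultaneously, with high probability.'' Hence $\val_\Psi \le \max_{\CD'}(\cdots) + o(1)$, giving $\rho(\CF) \le \max_{\CD'}(\cdots)$; taking the min over $\CD$ finishes this direction.

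For the lower bound $\rho(\CF) \ge \min_{\CD} \max_{\CD'} \E[f(\veca)]$: take an arbitrary instance $\Psi$ with constraints $(b(\ell),\vecj(\ell),w(\ell))$, and let $\CD_\Psi \in \Delta(\CF)$ be the weight-normalized empirical distribution of the labels $b(\ell)$. Pick $\CD'$ achieving $\max_{\CD'} \E_{f \sim \CD_\Psi, \veca \sim (\CD')^k}[f(\veca)]$, and consider the \emph{random} assignment $\vecx$ with coordinates drawn i.i.d. from $\CD'$. Then $\E_{\vecx}[\val_\Psi(\vecx)] = \frac{1}{W_\Psi}\sum_\ell w(\ell)\, \E_{\vecx}[f_{b(\ell)}(\vecx|_{\vecj(\ell)})] = \E_{f \sim \CD_\Psi, \veca \sim (\CD')^k}[f(\veca)]$, using that the indices in each $\vecj(\ell)$ are distinct so $\vecx|_{\vecj(\ell)} \sim (\CD')^k$ exactly. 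Since the max value is at least the expected value, $\val_\Psi \ge \E_{f \sim \CD_\Psi, \veca \sim (\CD')^k}[f(\veca)] = \max_{\CD'} \E_{f \sim \CD_\Psi, \veca}[f(\veca)] \ge \min_{\CD} \max_{\CD'} \E_{f \sim \CD, \veca}[f(\veca)]$. Taking the infimum over $\Psi$ gives this direction; note that, pleasantly, the lower bound needs no concentration at all, just averaging.

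The main obstacle is the upper-bound direction, specifically making rigorous that a random instance drawn from $\CD$ has value not much larger than the game value --- i.e., that \emph{no} assignment does substantially better than a product assignment. This requires (i) noting the value of an assignment depends only on its coordinate-frequency profile, so there are only $\mathrm{poly}(n)$ (in fact $\le n^{q}$) relevant profiles, and (ii) a concentration bound (Chernoff/Azuma over the $m$ i.i.d. constraints) to control, for each profile, the deviation of the satisfied fraction from its expectation, plus a small correction for the fact that the $k$ indices of a constraint are sampled without replacement rather than with. All of this is standard and the exact constants don't matter since we send $n, m \to \infty$; I would cite the earlier discussion of random CSP instances (e.g. the \seqbpd/\bpd\ machinery referenced in \cref{sec:bpd}) rather than redo it, and I would not belabor the concentration computation.
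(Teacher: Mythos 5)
Your lower-bound direction $\rho(\CF) \ge \min_\CD\max_{\CD'}\E[\cdots]$ is correct and clean: for any instance $\Psi$, a random product assignment with each coordinate i.i.d.\ from the optimizing $\CD'$ has expected value exactly $\E_{f\sim\CD_\Psi,\veca\sim(\CD')^k}[f(\veca)]$ because the indices in each constraint tuple are distinct, so averaging requires no concentration at all. The upper-bound direction, however, has a real gap in step (i) of your closing paragraph. You assert that ``the value of an assignment depends only on its coordinate-frequency profile,'' but this is false for a fixed realization of your random instance: if $\vecx$ and $\vecy = \vecx\circ\pi$ share a profile, then $\val_\Psi(\vecy)$ equals the value of $\vecx$ in the \emph{permuted} instance $\pi(\Psi)$, not in $\Psi$ itself, so $\val_\Psi(\vecx)$ and $\val_\Psi(\vecy)$ are identically distributed over the random $\Psi$ but not equal pointwise. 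Concentrating per profile and union-bounding over the $\le n^q$ profiles therefore does not control $\max_\vecx\val_\Psi(\vecx)$.

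The gap is easily patched. One fix: union-bound over all $q^n$ assignments directly; since the Chernoff tail is $e^{-\Omega(\epsilon^2 m)}$ and $m$ is yours to choose, taking $m$ a large constant multiple of $n$ suffices. A cleaner fix dispenses with randomness entirely: take the deterministic ``complete'' instance $\Psi_n^\CD$ which, for every ordered $k$-tuple $\vecj$ of distinct indices in $[n]$ and every label $b \in B_\CF$, includes the constraint $(b,\vecj)$ with weight $\CD(f_b)$. Then $\val_{\Psi_n^\CD}(\vecx) = \E_{\vecj}\E_{f\sim\CD}[f(\vecx|_\vecj)]$ genuinely depends only on the frequency profile of $\vecx$; the without-replacement sampling of the $k$ coordinates is within $O(k^2/n)$ in total variation of i.i.d.\ sampling from the empirical distribution of $\vecx$'s coordinates; and so $\val_{\Psi_n^\CD} \le \max_{\CD'}\E_{f\sim\CD,\veca\sim(\CD')^k}[f(\veca)] + O(k^2/n)$ with no concentration argument whatsoever. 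Letting $n\to\infty$ and then minimizing over $\CD$ completes this direction. This deterministic construction is what your random-instance sketch is implicitly approximating, and it is the natural route to a rigorous proof.
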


In the prototypical case $|\CF|=1$, $\rho(\CF)$ captures the maximum value of any probabilistic assignment to $f$ which is \emph{symmetric} in the sense that every variable is assigned values from the same distribution independently.

By definition, $\mF$ has a $\rho(\CF)$-approximation given by simply outputting $\rho(\CF)$ on every input; we call this the \emph{trivial approximation}. We say $\mF$ is \emph{approximation-resistant} (for a certain class $\CS$ of algorithms) if for every $\epsilon > 0$, no algorithm in $\CS$ can $(\rho(F)+\epsilon)$-approximate $\mF$. Otherwise, we say $\mF$ is \emph{approximable} (for $\CS$).

\subsection*{CSPs of interest}

Specific CSPs which we study in this thesis include the following. In the case $k=q=2$, we let $\cut(x_1,x_2) \eqdef x_1+x_2= \1_{x_1\neq x_2}$, and we consider the problem $\mcut\eqdef\mcsp[\{\cut\}]$. Similarly, we let $\dcut(x_1,x_2) \eqdef x_1(x_2+1) = \1_{x_1=1,x_2=0}$, and we consider the problem $\mdcut\eqdef\mcsp[\{\dcut\}]$.

In the case $q=2$, for a predicate $f : \BZ_2^k \to \{0,1\}$, we define the problem $\mbcsp[f]\eqdef\mcsp[\{f_\vecb : \BZ_2^k \to \{0,1\}\}_{\vecb \in \BZ_2^k}]$ where $f_\vecb(\vecx)=f(\vecb+\vecx)$; the predicates of this CSP correspond to ``$f$ with negations''. For instance, for $k=2$  we let $\twoand(x_1,x_2) \eqdef x_1x_2 = \1_{x_1=x_2=1}$. Then $\mtwoand\eqdef\mbcsp[\twoand]$ contains the four predicates $\twoand_{b_1,b_2}(x_1,x_2)=(x_1+b_1)(x_2+b_2)$ for $b_1,b_2\in\BZ_2$. (Note that $\twoand_{0,0} = \twoand$ and $\twoand_{0,1} = \dcut$.) More generally we define $\kand(x_1,\ldots,x_k)=\prod_{i=1}^k x_i$ and consider $\mkand$.

The reader can check that the trivial approximation ratios $\rho(\CF)$ for $\mcut,\mdcut$, and $\mbcsp[f]$ are $\frac12,\frac14,$ and $\rho(f) \eqdef \E_{\veca\sim\CU_{\BZ_2^k}}[f(\veca)]$, respectively.

\section{Streaming and sketching algorithms}\label{sec:streaming}

\newcommand{\NextState}{\textsf{NextState}}
\newcommand{\Output}{\textsf{Output}}
\newcommand{\FinalState}{\textsf{FinalState}}
\newcommand{\Compose}{\textsf{Compose}}

For predicate families $\CF$, we consider algorithms which attempt to solve the approximation problem $\mF$ or the distinguishing problem $\bgd\mF$ in the \emph{$s(n)$-space streaming setting}, where $s(n)$ is typically small (e.g., $\polylog(n)$). First, we give an informal definition. On input $\Psi$ with $n$ variables, a streaming algorithm is limited to $s(n)$ space and can only access the constraints in $\Psi$ via a single pass through some ordering of $\Psi$'s constraints; this ordering can be chosen either \emph{adversarially} or (uniformly) \emph{randomly}. (When not specified, we assume the input is ordered adversarially.) On the other hand, the algorithm can use randomness and has no time or uniformity restrictions. We also consider a subclass of streaming algorithms called \emph{sketching} algorithms, which have the property that the algorithm can be run independently on two halves of the input stream and the resulting states can be composed. A sketching algorithm is \emph{linear} if the algorithm's state encodes an element of a vector space and composition corresponds to vector addition.

To be (somewhat) more formal, we define streaming and sketching algorithms as follows. Let $\Sigma$ denote the \emph{input space} of the stream (e.g., constraints of a $\mF$ instance on $n$ variables). A \emph{deterministic space-$s$ streaming algorithm} $\Alg$ is specified by a pair of functions $\NextState : \{0,1\}^s \times \Sigma \to \{0,1\}^s$ and $\Output : \{0,1\}^s \to \{0,1\}^*$. For an input stream $\vecsigma = (\sigma_1,\ldots,\sigma_{m}) \in \Sigma^*$, we define $\FinalState(\vecsigma) \in \{0,1\}^s$ as the result of initializing the state $S \gets 0^s$ and iterating $S \gets \NextState(S,\sigma_\ell)$ for $\ell \in [m]$; then $\Alg$ outputs $\Output(\FinalState(\vecsigma))$. Moreover, $\Alg$ is a \emph{sketching algorithm} if there exists another function $\Compose : \{0,1\}^s \times \{0,1\}^s \to \{0,1\}^s$ such that for every two input streams $\vecsigma,\vecsigma'\in\Sigma^*$, we have \[ \FinalState(\vecsigma\vecsigma') = \Compose(\FinalState(\vecsigma),\FinalState(\vecsigma')), \] where $\vecsigma\vecsigma'$ denotes concatenation. \emph{Randomized} streaming and sketching algorithms are distributions over streaming and sketching algorithms, respectively, which succeed with at least $\frac23$ probability.\footnote{Technical note: Since we allow repeated stream elements (and in particular, repeated constraints in $\mF$ instances), we have to pick some \emph{a priori} bound on stream lengths in order to get $\polylog(n)$-space algorithms. Throughout the paper, we assume instances contain at most $O(n^c)$ constraints for some (large) fixed $c < \infty$. Moreover, in order to store constraint weights in the algorithms' states, we assume that they are integers and are bounded by $O(n^c)$ in absolute value. We generally omit these details throughout the paper for ease of presentation.}

One particular sketching algorithm of interest is the following classic algorithm for sketching $1$-norms, which we use as a black box in later chapters:

\begin{theorem}[{\cite{Ind06,KNW10}}]\label{thm:l1-sketching}
For every $\epsilon>0$ and $c < \infty$, there exists an $O(\log n/\epsilon^2)$-space randomized sketching algorithm for the following problem: The input is an (adversarially ordered) stream $\vecsigma$ of updates from the set $\Sigma = [n] \times \{-O(n^c),\ldots,O(n^c)\}$, and the goal is to estimate the $1$-norm of the vector $\vecx \in \BN^n$ defined by $x_i = \sum_{(i,v) \in \vecsigma} v$, up to a multiplicative factor of $1\pm\epsilon$.
\end{theorem}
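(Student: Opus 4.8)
The plan is to implement the standard linear sketch based on $1$-stable (Cauchy) random variables, following \cite{Ind06} with the space refinement of \cite{KNW10}. We view the input as defining a vector $\vecx$ with integer entries of magnitude $\poly(n)$ (the literal case $\vecx \in \BN^n$ is in fact trivial, since then $\|\vecx\|_1 = \sum_i x_i$ is itself a linear function of the stream, but the sketch below handles arbitrary signed vectors, which is what the later applications need). The sketch will be of the form $\big(\sum_i x_i Z_{i1}, \ldots, \sum_i x_i Z_{ir}\big)$ for fixed reals $Z_{ij}$, $j \in [r]$; such a state is automatically a valid sketching algorithm in the sense of \cref{sec:streaming}, since concatenation of streams corresponds to coordinatewise addition of states and an update $(i,v)$ is processed by adding $v Z_{ij}$ to coordinate $j$.

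The key fact is that the standard Cauchy distribution, with density $\frac{1}{\pi(1+t^2)}$, is $1$-stable: if $Z_1, \ldots, Z_n$ are independent standard Cauchys and $a \in \mathbb{R}^n$, then $\sum_i a_i Z_i$ has the same distribution as $\|a\|_1 Z$ for a single standard Cauchy $Z$. Taking the $Z_{ij}$ independent standard Cauchy and $r = \Theta(1/\epsilon^2)$, the coordinates $y_j \eqdef \sum_i x_i Z_{ij}$ of the final state are thus independent copies of $\|\vecx\|_1 \cdot Z$, and we output $\hat v \eqdef \mathrm{median}(|y_1|, \ldots, |y_r|)$. Since the cumulative distribution function of $|Z|$ is $t \mapsto \frac2\pi \arctan t$, it equals exactly $\frac12$ at $t=1$ and has derivative bounded away from $0$ and $\infty$ near $t=1$, so $\Pr[\,|Z| \le 1-\epsilon\,] \le \frac12 - \Omega(\epsilon)$ and $\Pr[\,|Z| \le 1+\epsilon\,] \ge \frac12 + \Omega(\epsilon)$ for small $\epsilon$. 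A Chernoff bound on the number of indices $j$ with $|y_j| \le (1-\epsilon)\|\vecx\|_1$ (resp.\ with $|y_j| \le (1+\epsilon)\|\vecx\|_1$) then shows that with $r = \Theta(1/\epsilon^2)$ repetitions we have $\hat v \in [(1-\epsilon)\|\vecx\|_1,\,(1+\epsilon)\|\vecx\|_1]$ with probability at least $\frac23$.

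It remains to control the space. Rounding each $Z_{ij}$ to $O(\log n)$ bits of precision and truncating its magnitude at $\poly(n)$ changes $\hat v$ by at most a $(1 \pm n^{-\Omega(1)})$ factor (an extreme Cauchy draw exceeds a large $\poly(n)$ threshold only with probability $n^{-\Omega(1)}$, and a union bound over the $\poly(n)$ variables absorbs this into the failure probability), so each $y_j$, and hence the whole state, fits in $O(\log n/\epsilon^2)$ bits. The one genuine subtlety, which I expect to be the main obstacle, is that we cannot store the $\Theta(n/\epsilon^2)$ variables $Z_{ij}$ explicitly: they must be regenerated deterministically from a short seed whenever a coordinate $i$ is updated, and one must argue that correctness survives the use of pseudorandomness. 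This is exactly where \cite{Ind06} and \cite{KNW10} diverge: \cite{Ind06} routes the algorithm's randomness through Nisan's pseudorandom generator, exploiting that the estimator depends on the input only through the small-space-computable state $(y_1,\ldots,y_r)$; \cite{KNW10} gives a sharper derandomization, replacing full independence by a mixture of bounded independence and a tailored generator, that brings the seed length (and thus the total space) down to the claimed $O(\log n/\epsilon^2)$.
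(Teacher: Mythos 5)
This theorem is stated in the paper as a black-box citation to \cite{Ind06,KNW10}; the paper provides no proof of its own, so there is nothing to compare against line-by-line. Your sketch is a correct reconstruction of the standard argument from those sources: the $1$-stability of Cauchy random variables reduces the problem to estimating the scale parameter of a Cauchy, the median estimator with $\Theta(1/\epsilon^2)$ repetitions gives the $(1\pm\epsilon)$ guarantee via a Chernoff bound, the tail bound $\Pr[|Z|>t]=O(1/t)$ justifies truncating and rounding the sketch entries to $O(\log n)$ bits, and the derandomization (Nisan's PRG in \cite{Ind06}, the tighter bounded-independence argument in \cite{KNW10}) is correctly flagged as the step that determines the final space. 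Your side remark that the statement as literally written, with $\vecx\in\BN^n$, would be trivial — since then $\|\vecx\|_1=\sum_i x_i$ is itself a linear functional of the stream — is well taken; the applications in the paper (e.g.\ $\bias_\Psi$, whose coordinates can be negative) do need the signed version, and that is the version the Cauchy sketch actually delivers, so the paper's hypothesis should really read $\vecx\in\BZ^n$.
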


\section{Hypergraphs}\label{sec:hypergraphs}

Let $2 \leq k,n \in \BN$. A \emph{$k$-hyperedge} on $[n]$ is simply a $k$-tuple $\vece=(e_1,\ldots,e_k) \in [n]^k$ of distinct indices, and a \emph{$k$-hypergraph} (a.k.a. ``$k$-uniform hypergraph'') $G$ on $[n]$ is a sequence $(\vece(1),\ldots,\vece(m))$ of (not necessarily distinct) $k$-hyperedges. We assume $k=2$ when $k$ is omitted, and in this case, we drop the prefix ``hyper''. Given an instance $\Psi$ of a $k$-ary CSP $\mF$ with constraints $(\vecb(\ell),\vecj(\ell),w(\ell))_{\ell\in[m]}$, we can define the \emph{constraint (hyper)graph} $G(\Psi)$ of $\Psi$ as the $k$-hypergraph with hyperedges $(\vecj(\ell))_{\ell \in [m]}$. Note that when $|\CF|=1$ (as is the case for e.g., $\mcut$ and $\mdcut$) and we restrict our attention to unweighted instances, $\Psi$ and $G(\Psi)$ carry the exact same data.

To a $k$-hypergraph $G$ with $m$ edges, we associate an \emph{adjacency matrix} $M \in \{0,1\}^{km \times n}$, whose $(\ell k + j,i)$-th entry is $1$ iff $e(\ell)_j = i$ (for $\ell \in [m],j\in[k],i\in[n]$). Since they encode the same information, we will often treat adjacency matrices and $k$-hypergraphs as interchangeable. Importantly, we will often consider products of $M$ and $M^\top$ with vectors over $\BZ_q$. Given $\vecv \in \BZ_q^n$ and $\vecs = M\vecv$, let $\vecs(\ell) = (s_{(k-1)\ell+1},\ldots,s_{k\ell})$ denote the $\ell$-th block of $k$ coordinates in $\vecs$; then $\vecs(\ell) = \vecv|_{\vece(\ell)}$. Thus, we can view $\vecv$ as a ``$\BZ_q$-labeling of vertices'', and $\vecs$ as the corresponding ``$\BZ_q$-labeling of hyperedge-vertex incidences'', where each hyperedge $\vece=(e_1,\ldots,e_k)$ determines $k$ unique incidences $(\vece,e_1),\ldots,(\vece,e_k)$. Conversely, given $\vecs\in\BZ_q^m$, if $\vecv =M^\top \vecs$, for each $i \in[n]$, we have \[ v_i = \sum_{\ell\in[m], j\in[k]} \1_{\vece(\ell)_j=i}\, \vecs(\ell)_j. \] We again view $\vecs$ as labeling hyperedge-vertex incidences; $\vecv$ then describes the sums of $\vecs$-labels over the hyperedges incident at each vertex. Also, we will sometimes consider ``folded'' variants of the adjacency matrix $M$ which ``compress'' each block of $k$ columns (corresponding to a $k$-hyperedge) into fewer columns, e.g., by summing them into a single column, and these will have corresponding interpretations for $M\vecv$ and $M^\top \vecs$.

For $\alpha \in (0,1),n\in\BN$, let $\Graphs_{k,\alpha}(n)$ denote the uniform distribution over $k$-hypergraphs on $[n]$ with $\alpha n$ hyperedges. A $k$-hypergraph $G$ is a \emph{$k$-hypermatching} if no vertex is shared by any hyperedge, i.e., if $v \in e(\ell), e(\ell')$ then $\ell=\ell'$; equivalently, the adjacency matrix $M$ contains at most a single $1$ in each row. We refer to a $k$-hypermatching $G$ with $\alpha n$ edges as \emph{$\alpha$-partial}. We let $\Matchings_{k,\alpha}(n)$ denote the uniform distribution over $k$-hypermatchings on $[n]$ with $\alpha n$ hyperedges (for $\alpha \in (0,1),n\in\BN$).

\section{Fourier analysis}\label{sec:fourier}

Let $q \geq 2 \in \BN$, and let $\omega \eqdef e^{2\pi i/q}$ denote a (fixed primitive) $q$-th root of unity. Here, we summarize relevant aspects of Fourier analysis over $\BZ_q^n$; see e.g. \cite[\S8]{OD14} for details.\footnote{\cite{OD14} uses a different normalization for norms and inner products, essentially because it considers expectations instead of sums over inputs.} Given a function $f : \BZ_q^n \to \BC$ and $\vecs \in \BZ_q^n$, we define the \emph{Fourier coefficient} \[ \hat{f}(\vecs) \eqdef \sum_{\vecx \in \BZ_q^n} \omega^{-\vecs \cdot \vecx} f(\vecx) \] where $\cdot$ denotes the inner product over $\BZ_q$. For $p \in (0,\infty)$, we define $f$'s \emph{$p$-norm} \[ \|f\|_p \eqdef \left(\sum_{\vecx \in \BZ_q^n} |f(\vecx)|^p\right)^{1/p}. \] We also define $f$'s $0$-norm \[ \|f\|_0 \eqdef \sum_{\vecx \in \BZ_q^n} \1_{f(\vecx)\neq0} \] (a.k.a. the size of its support and the Hamming weight of its ``truth table''). Also, for $\ell \in \{0\} \cup [n]$, we define the \emph{level-$\ell$ Fourier ($2$-)weight} as \[ \W^{\ell}[f] \eqdef \sum_{\vecs\in\BZ_q^n : \|\vecs\|_0 = \ell} |\hat{f}(\vecs)|^2. \] These weights are closely connected to $f$'s $2$-norm:

\begin{proposition}[Parseval's identity]\label{prop:parseval}
For every $q,n \in \BN$ and $f : \BZ_q^n \to \BC$, we have \[ \|f\|_2^2 = q^n \sum_{\ell=0}^n \W^\ell[f]. \]
\end{proposition}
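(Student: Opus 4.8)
The plan is the standard argument for Parseval-type identities: express the total Fourier $2$-weight as a single sum over all frequencies, expand $|\hat f|^2$ as a double sum over assignments, and collapse it to the diagonal using character orthogonality over $\BZ_q^n$. First I would observe that the frequencies $\vecs \in \BZ_q^n$ are partitioned according to their Hamming weight $\|\vecs\|_0 \in \{0,1,\ldots,n\}$, so that
\[ \sum_{\ell=0}^n \W^\ell[f] = \sum_{\vecs \in \BZ_q^n} |\hat f(\vecs)|^2 , \]
and it thus suffices to relate $\sum_{\vecs} |\hat f(\vecs)|^2$ to $\|f\|_2^2 = \sum_{\vecx \in \BZ_q^n} |f(\vecx)|^2$ by the appropriate power of $q$.

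The key lemma is character orthogonality: for every $\vecz \in \BZ_q^n$,
\[ \sum_{\vecs \in \BZ_q^n} \omega^{\vecs \cdot \vecz} = q^n \cdot \1_{\vecz = \veczero} . \]
I would prove this by factoring the sum coordinatewise as $\prod_{i=1}^n \bigl( \sum_{s_i \in \BZ_q} \omega^{s_i z_i} \bigr)$; the $i$-th factor equals $q$ when $z_i = 0$, and when $z_i \neq 0$ it is the geometric sum $\frac{\omega^{q z_i} - 1}{\omega^{z_i} - 1} = 0$, since $\omega$ is a primitive $q$-th root of unity and hence $\omega^{z_i} \neq 1$.

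With the lemma in hand, I would expand $|\hat f(\vecs)|^2 = \hat f(\vecs) \overline{\hat f(\vecs)} = \sum_{\vecx, \vecy \in \BZ_q^n} \omega^{\vecs \cdot (\vecy - \vecx)} f(\vecx) \overline{f(\vecy)}$, using $\overline{\omega} = \omega^{-1}$, then sum over $\vecs$, exchange the order of summation, and apply the orthogonality lemma to the inner sum $\sum_{\vecs} \omega^{\vecs \cdot (\vecy - \vecx)}$, which vanishes unless $\vecx = \vecy$. This collapses the double sum to $q^n \sum_{\vecx} |f(\vecx)|^2$, which gives the claimed identity after accounting for the normalization factor in the definition of $\hat f$ (equivalently, one could substitute the Fourier inversion formula $f(\vecx) = q^{-n} \sum_{\vecs} \omega^{\vecs \cdot \vecx} \hat f(\vecs)$, itself a consequence of the same lemma, into one copy of $f$ in $\|f\|_2^2$). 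There is no real obstacle here; the only point requiring care is bookkeeping the powers of $q$ consistently with the chosen normalizations of $\hat f$ and of $\|\cdot\|_2$ so that the constant comes out to exactly $q^n$ as stated.
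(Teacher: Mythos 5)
Your method is the standard proof of Parseval's identity and is structurally correct: regroup the Fourier weights into a single sum over all frequencies, expand $|\hat f(\vecs)|^2$ as a double sum over $\vecx,\vecy$, interchange the order of summation, and collapse to the diagonal via the character-orthogonality identity $\sum_{\vecs\in\BZ_q^n}\omega^{\vecs\cdot\vecz}=q^n\1_{\vecz=\veczero}$, which you prove correctly by factoring coordinatewise into geometric sums. The paper states the proposition without proof, so there is no competing argument to compare against; this is the expected one.

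The one real subtlety is precisely the ``bookkeeping of powers of $q$'' that you defer to the end, and it is not as harmless as you suggest. If you take the definition of the Fourier coefficient as literally displayed in \cref{sec:fourier}, namely $\hat f(\vecs)=\sum_{\vecx\in\BZ_q^n}\omega^{-\vecs\cdot\vecx}f(\vecx)$ with no prefactor --- which is what your expansion $|\hat f(\vecs)|^2=\sum_{\vecx,\vecy}\omega^{\vecs\cdot(\vecy-\vecx)}f(\vecx)\overline{f(\vecy)}$ and your inversion formula $f(\vecx)=q^{-n}\sum_\vecs\omega^{\vecs\cdot\vecx}\hat f(\vecs)$ are both consistent with --- then the computation lands on
\[
\sum_{\ell=0}^n\W^\ell[f]\;=\;\sum_{\vecs\in\BZ_q^n}|\hat f(\vecs)|^2\;=\;\sum_{\vecx,\vecy\in\BZ_q^n}f(\vecx)\overline{f(\vecy)}\sum_{\vecs\in\BZ_q^n}\omega^{\vecs\cdot(\vecy-\vecx)}\;=\;q^n\sum_{\vecx\in\BZ_q^n}|f(\vecx)|^2\;=\;q^n\|f\|_2^2,
\]
i.e.\ $\|f\|_2^2=q^{-n}\sum_\ell\W^\ell[f]$, which is off from the stated identity by a factor of $q^{2n}$. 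To obtain $\|f\|_2^2=q^n\sum_\ell\W^\ell[f]$ as the proposition claims, you must instead use the normalized convention $\hat f(\vecs)=q^{-n}\sum_{\vecx\in\BZ_q^n}\omega^{-\vecs\cdot\vecx}f(\vecx)$. That normalized convention is in fact what the paper uses downstream: the proof of \cref{lemma:bpd-fourier-reduce} unpacks the Fourier coefficient of a distribution on $\BZ_2^{\alpha n}$ with an explicit $2^{-\alpha n}$ prefactor, and the step $q^n\|\CD-\CU\|_2^2=q^{2n}\sum_\vecs|\hat\CD(\vecs)-\hat\CU(\vecs)|^2$ in the proof of \cref{lemma:xor} only balances under the normalized definition. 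So the displayed formula for $\hat f$ in \cref{sec:fourier} is missing a $q^{-n}$; once you supply it, your expansion picks up an overall $q^{-2n}$ and your argument yields exactly the stated constant. Carry this out explicitly rather than deferring it, since the sign of the exponent is the only thing one could plausibly get wrong here, and the stated definitions as written do get it wrong.
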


Moreover, let $\BD \eqdef \{w \in \BC : |w|\leq 1\}$ denote the (closed) unit disk in the complex plane. The following lemma bounding the low-level Fourier weights for functions mapping into $\BD$ is derived from hypercontractivity theorems in \cite{CGS+22}:

\begin{lemma}[{\cite[Lemma 2.11]{CGS+22}}]\label{lemma:low-fourier-bound}
There exists $\zeta > 0$ such that the following holds. Let $q \geq 2,n \in \BN$ and consider any function $f : \BZ_q^n \to \BD$. If for $c \in \BN$, $\|f\|_0 \geq q^{n-c}$, then for every $\ell \in \{1,\ldots,4c\}$, we have \[ \frac{q^{2n}}{\|f\|_0^2} \W^{\ell}[f] \leq \left(\frac{\zeta c}\ell\right)^\ell. \]
\end{lemma}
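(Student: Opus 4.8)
The statement is a ``level-$\ell$ inequality'' in the spirit of Kahn--Kalai--Linial and Talagrand (see \cite[\S9]{OD14} for the Boolean prototype), and I would prove it along the classical route: the noise operator together with hypercontractivity, exploiting the hypothesis $|f| \le 1$. Write $\alpha \eqdef \|f\|_0/q^n$, so that the hypothesis becomes $\alpha \ge q^{-c}$ and the conclusion becomes $\alpha^{-2}\,\W^\ell[f] \le (\zeta c/\ell)^\ell$. Recall the noise operator $\mathrm{T}_\rho$, which multiplies the Fourier coefficient at $\vecs$ by $\rho^{\|\vecs\|_0}$; by \cref{prop:parseval} and the definition of the level weights, $\|\mathrm{T}_\rho f\|_2^2 = \sum_{j=0}^{n}\rho^{2j}\,\W^j[f] \ge \rho^{2\ell}\,\W^\ell[f]$ for every $\rho \in [0,1]$. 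The ``hypercontractivity theorems in \cite{CGS+22}'' that the lemma invokes provide an inequality $\|\mathrm{T}_\rho f\|_2 \le \|f\|_p$ for $p = p(\rho,q)$ in a suitable range; combining the two, $\W^\ell[f] \le \rho^{-2\ell}\,\|f\|_p^2$.

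Next I would bound the low-order $L^p$-norm using that $f$ takes values in $\BD$: pointwise, $|f(\vecx)|^p \le \1_{f(\vecx)\neq 0}$ for all $p \ge 0$, so averaging over $\BZ_q^n$ gives $\|f\|_p^p \le \alpha$, hence $\W^\ell[f] \le \rho^{-2\ell}\,\alpha^{2/p}$. Using $2/p - 2 \ge -2(p-1)$ (valid since $p \ge 1$) together with $\alpha \le 1$, and then $\alpha \ge q^{-c}$, one gets \[ \frac{q^{2n}}{\|f\|_0^2}\,\W^\ell[f] = \alpha^{-2}\,\W^\ell[f] \le \rho^{-2\ell}\,\alpha^{2/p-2} \le \rho^{-2\ell}\,\alpha^{-2(p-1)} \le \rho^{-2\ell}\,q^{\,2c(p-1)}. \]

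It remains to choose $\rho$ (hence $p$) optimally. Substituting the $\BZ_q$-hypercontractive relationship between $p-1$ and $\rho^2$ turns the right-hand side into the shape $\rho^{-2\ell}\exp(\kappa(q,c)\,\rho^2)$ for an appropriate $\kappa$, whose minimum over $\rho \in (0,1]$ is of order $(\kappa(q,c)/\ell)^\ell$, attained at $\rho \asymp \sqrt{\ell/\kappa(q,c)}$. The crux is that with the \emph{sharp} form of hypercontractivity for the uniform measure on $\BZ_q$ one gets $\kappa(q,c) = O(c)$, which produces the claimed bound with a universal $\zeta$; and the restriction $\ell \le 4c$ is exactly what keeps the minimizing $\rho$ inside $(0,1]$ (equivalently, inside the regime where the hypercontractive estimate applies). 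When $\ell$ is near the top of its range one instead takes $\rho$ at the boundary and checks the inequality by hand, which is immediate since the right-hand side there is at least $1$.

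The main obstacle is this last step: one must invoke precisely the right hypercontractive inequality for the uniform distribution on $\BZ_q$ and carry the constants through, since it is exactly the dependence of $p-1$ on $\rho^2$ (and on $q$) that controls whether the final bound collapses to the stated $q$-free form. Everything else --- the reduction to a single Fourier level via the noise operator, the $L^p$-norm estimate from $|f| \le 1$, and the optimization over $\rho$ --- is routine; the only remaining bookkeeping concerns the integrality of $c$, the boundary cases in $\ell$, and staying in a legitimate noise range throughout.
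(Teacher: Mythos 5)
This thesis does not prove \cref{lemma:low-fourier-bound}; it simply cites it as Lemma~2.11 of \cite{CGS+22}, so there is no internal proof to compare your proposal against. That said, your outline --- projecting a single Fourier level via the noise operator, a hypercontractive bound $\|\mathrm{T}_\rho f\|_2 \le \|f\|_p$, the pointwise bound $\|f\|_p^p \le \alpha$ for $\BD$-valued $f$, and optimizing over $\rho$ --- is indeed the standard route to a level-$\ell$ inequality and is almost certainly the skeleton of the actual argument in \cite{CGS+22}.

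The genuine gap is the step you flagged as ``the crux'' and then waved past: the assertion that the $\BZ_q$-hypercontractive relationship between $p-1$ and $\rho^2$ yields $\kappa(q,c) = O(c)$ with a universal constant. It does not, at least not from the off-the-shelf inequality for the uniform measure on $[q]$. The hypercontractivity $\|\mathrm{T}_\rho f\|_2 \le \|f\|_p$ on $\BZ_q^n$ with uniform marginals degrades with $q$: the log-Sobolev constant scales as $\Theta(1/\log q)$, so the admissible relationship is (roughly) $p - 1 \asymp \rho^{2/\log q}$ rather than $p - 1 \asymp \rho^2$. Feeding that into your chain $\rho^{-2\ell}\,q^{2c(p-1)}$ and optimizing over $\rho$ produces a bound of the shape $\bigl((\zeta c/\ell)^{O(\log q)}\bigr)^\ell$ or, with your looser chain of inequalities, something like $\bigl(\zeta c\,\mathrm{poly}(\log q)/\ell\bigr)^\ell$ --- in any case the constant cannot be taken independent of $q$ from this ingredient. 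Your penultimate paragraph is therefore asserting precisely the thing that needs to be proved.

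One further sanity check makes the issue concrete. Take $n = 1$, $f = \1_{\{0\}}: \BZ_q \to \{0,1\}$, and $c = 1$ (so $\|f\|_0 = 1 = q^{n-c}$). With the expectation normalization of the Fourier transform (the one under which \cref{prop:parseval} as written is consistent with the application in the proof of \cref{thm:bpd-hardness}), every $\hat f(s) = 1/q$, so $\W^1[f] = (q-1)/q^2$ and
\[
\frac{q^{2n}}{\|f\|_0^2}\,\W^{1}[f] \;=\; q - 1,
\]
while the claimed right-hand side at $\ell = 1$ is just $\zeta$. So a genuinely $q$-free $\zeta$ cannot exist for the lemma \emph{as transcribed here}: either $\zeta$ is allowed to depend on $q$ in \cite{CGS+22}, or there is a missing $q$-dependent factor on the right, or the definitions of $c$, of $\|\cdot\|_0$, or of the Fourier normalization in \cite{CGS+22} differ from what appears in this chapter. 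Before trying to tighten the hypercontractive step, you should reconcile your argument (and the statement itself) against the original source; as things stand, proving the lemma with a universal $\zeta$ is not merely ``routine bookkeeping'' but impossible, and your proof has correctly, if implicitly, run into that wall.

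A minor secondary point: the claim that $\ell \le 4c$ ``is exactly what keeps the minimizing $\rho$ inside $(0,1]$'' is only approximately right. With $\kappa \approx 2\ln(1/\alpha) \le 2c\ln q$, the minimizer $\rho^* = \sqrt{\ell/\kappa}$ exceeds $1$ for $\ell$ near $4c$ whenever $q$ is small (e.g., $q = 2$), so the boundary regime you mention does kick in there; and for large $q$, $\rho^* \le 1$ already for all $\ell \le 4c$, so the constraint $\ell \le 4c$ is not doing the work you attribute to it. Your fallback of taking $\rho = 1$ in the boundary regime is fine, but you should carry through the resulting estimate $\alpha^{-2}\W^\ell[f] \le \alpha^{-1} \le q^c$ and check it against $(\zeta c/\ell)^\ell$ explicitly rather than asserting it is ``immediate,'' since that comparison again surfaces the $q$-dependence.
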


\section{Concentration inequalities}

We'll use the following concentration inequality for submartingales:

\begin{lemma}[{{\cite[Lemma 2.5]{KK19}}}]\label{lemma:azuma}
Let $X_1,\ldots,X_m$ be (not necessarily independent) $\{0,1\}$-valued random variables, such that for some $p \in (0,1)$, $\E[X_\ell\mid X_0,\ldots,X_{\ell-1}] \leq p$ for every $\ell \in [m]$. Let $\mu = pm$. Then for every $\eta > 0$, \[ \Pr\left[\sum_{\ell=1}^m X_\ell \geq \mu + \eta\right] \leq e^{-\eta^2/(2(\mu+\eta)))}. \] 
\end{lemma}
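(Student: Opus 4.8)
The plan is the standard exponential-moment (Chernoff–Cram\'er) method, adapted to the martingale setting by peeling off conditional expectations one at a time. Write $S_j = \sum_{\ell=1}^j X_\ell$, set $S = S_m$, and let $\mathcal{F}_\ell$ denote the $\sigma$-algebra generated by $X_0,\ldots,X_\ell$ (so $\mathcal{F}_0 = \sigma(X_0)$ absorbs the $X_0$ appearing in the hypothesis). Fix a parameter $\lambda > 0$ to be optimized later. Since each $X_\ell \in \{0,1\}$, we have the exact identity $e^{\lambda X_\ell} = 1 + (e^\lambda-1)X_\ell$, hence
\[ \E\!\left[e^{\lambda X_\ell}\mid \mathcal{F}_{\ell-1}\right] = 1 + (e^\lambda-1)\,\E\!\left[X_\ell\mid\mathcal{F}_{\ell-1}\right] \le 1 + (e^\lambda-1)p \le \exp\!\big((e^\lambda-1)p\big), \]
where the first inequality uses the hypothesis $\E[X_\ell\mid\mathcal{F}_{\ell-1}]\le p$ together with $e^\lambda - 1 > 0$, and the second uses $1+t\le e^t$. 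Iterating via the tower property, $\E[e^{\lambda S}] = \E\big[e^{\lambda S_{m-1}}\,\E[e^{\lambda X_m}\mid\mathcal{F}_{m-1}]\big] \le \exp((e^\lambda-1)p)\,\E[e^{\lambda S_{m-1}}]$, and unwinding all $m$ steps gives $\E[e^{\lambda S}] \le \exp\big((e^\lambda-1)pm\big) = \exp\big((e^\lambda-1)\mu\big)$.

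By Markov's inequality applied to $e^{\lambda S}$, for every $\lambda>0$,
\[ \Pr\!\left[S \ge \mu+\eta\right] \le e^{-\lambda(\mu+\eta)}\,\E[e^{\lambda S}] \le \exp\!\big((e^\lambda-1)\mu - \lambda(\mu+\eta)\big). \]
The exponent on the right is minimized at $e^\lambda = 1+\eta/\mu$, i.e.\ $\lambda = \ln(1+\eta/\mu) > 0$, which yields the classical multiplicative Chernoff bound $\Pr[S\ge\mu+\eta] \le \exp\!\big(-\mu\, h(\eta/\mu)\big)$, where $h(x) \eqdef (1+x)\ln(1+x) - x$. To reach the clean form in the statement it then suffices to prove the elementary one-variable inequality $h(x) \ge \frac{x^2}{2(1+x)}$ for all $x\ge 0$: plugging in $x=\eta/\mu$ gives $\mu\,h(\eta/\mu) \ge \mu\cdot\frac{(\eta/\mu)^2}{2(1+\eta/\mu)} = \frac{\eta^2}{2(\mu+\eta)}$, which is exactly the desired exponent. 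This inequality is a routine calculus check: the difference $g(x) = h(x) - \frac{x^2}{2(1+x)}$ satisfies $g(0)=g'(0)=0$ and $g''(x) = \frac1{1+x} - \frac1{(1+x)^3} \ge 0$ on $[0,\infty)$, so $g'$ is nondecreasing and nonnegative, hence $g$ is nondecreasing and nonnegative.

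The one point requiring genuine care — and the step I would be most deliberate about — is the iterated-conditioning argument: because the $X_\ell$ need not be independent, one must consistently work with the filtration $(\mathcal{F}_\ell)$ and invoke the tower property rather than multiplying expectations, and one must observe that $\E[X_\ell\mid\mathcal{F}_{\ell-1}]\le p$ can only be ``exponentiated'' in the direction $\lambda>0$ (which is why the bound is genuinely one-sided, matching the statement). Beyond this, the argument is the textbook Chernoff computation followed by the convexity estimate above, and no further assumptions on the dependence structure of the $X_\ell$ are needed.
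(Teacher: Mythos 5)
The paper does not give a proof of this lemma; it is cited verbatim from \cite[Lemma 2.5]{KK19}. Your proof is correct: the conditional moment bound $\E\bigl[e^{\lambda X_\ell}\mid\mathcal{F}_{\ell-1}\bigr]\le \exp\bigl((e^\lambda-1)p\bigr)$ for $\lambda>0$, iterated through the tower property and combined with Markov at $\lambda=\ln(1+\eta/\mu)$, gives $\exp(-\mu h(\eta/\mu))$ with $h(x)=(1+x)\ln(1+x)-x$, and your calculus check $h(x)\ge x^2/(2(1+x))$ (via $g(0)=g'(0)=0$ and $g''(x)=\tfrac{1}{1+x}-\tfrac{1}{(1+x)^3}\ge 0$) verifies. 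This is the standard Chernoff--Cram\'er derivation of the Bernstein-form tail for submartingale-difference sequences; nothing is missing, and the one-sidedness and the filtration bookkeeping you flag are handled correctly.
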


\section{Advantage and total variation distance}

Let $\CY,\CN \in \Delta(\Omega)$, and consider a \emph{test function} $\Test : \Omega \to \{0,1\}$ which attempts to distinguish between $\CY$ and $\CN$ by outputting $1$ more often on inputs sampled from $\CY$ than those sampled from $\CN$ (or vice versa). The \emph{advantage} of $f$ measures its success at this distinguishing task: \[ \adv_{\CY,\CN}(\Test) \eqdef \left\lvert\E\left[\Test(\CY)\right]-\E[\Test(\CN)]\right\rvert \in [0,1]. \] The \emph{total variation distance} between two distributions $\CY, \CN$ is the maximum advantage any test $f$ achieves in distinguishing $\CY$ and $\CN$. The optimal $f$ is the so-called ``maximum likelihood estimator'' which, on input $\omega\in\Omega$, outputs $\1_{\CY(\omega) \geq \CN(\omega)}$. Thus, \[ \|\CY-\CN\|_{\tv} \eqdef \max_{\Test : \Omega \to \{0,1\}} (\adv_{\CY,\CN}(\Test)) = \frac12\sum_{\omega \in \Omega} \left\lvert \CY(\omega)- \CN(\omega)\right\rvert. \] Also, for two random variables $Y$ and $N$, we use $\|Y-N\|_{\tv}$ as shorthand for $\|\CP_Y-\CP_N\|_{\tv}$ (recall that e.g. $\CP_Y$ denotes the distribution of $Y$).

The total variation distance satisfies two important inequalities for our purposes:

\begin{lemma}[Triangle inequality]\label{lemma:rv-triangle}
Let $\CY,\CN,\CZ \in \Delta(\Omega)$. Then \[ \|\CY-\CN\|_{\tv} \geq \|\CY-\CZ\|_{\tv} - \|\CZ-\CN\|_{\tv}. \]
\end{lemma}

\begin{lemma}[Data processing inequality]\label{lemma:data-processing}
Let $Y,N$ be random variables with sample space $\Omega$, and let $Z$ be a random variable with sample space $\Omega'$ which is independent of $Y$ and $N$. If $g:\Omega\times\Omega'\to\Omega''$ is any function, then \[ \|Y-N\|_{\tv} \geq \|g(Y,Z) - g(N,Z)\|_{\tv}. \]
\end{lemma}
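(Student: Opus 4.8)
The plan is to reduce the task of distinguishing $Y$ from $N$ to that of distinguishing $g(Y,Z)$ from $g(N,Z)$, using the variational characterization of total variation distance recorded above --- namely that $\|\cdot - \cdot\|_{\tv}$ is the maximum advantage of a $\{0,1\}$-valued test. First I would fix an optimal test $\Test' : \Omega'' \to \{0,1\}$ for the pair $g(Y,Z),\, g(N,Z)$, i.e., one with $\adv_{g(Y,Z),g(N,Z)}(\Test') = \|g(Y,Z)-g(N,Z)\|_{\tv}$; the maximum-likelihood estimator $\Test'(\omega'') = \1_{\CP_{g(Y,Z)}(\omega'') \geq \CP_{g(N,Z)}(\omega'')}$ works. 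From $\Test'$ I would build a test for $Y$ versus $N$ by drawing an independent fresh copy $z$ of $Z$ and evaluating $\Test'(g(\cdot,z))$; concretely, define $h : \Omega \to [0,1]$ by $h(\omega) \eqdef \E_{z \sim \CP_Z}[\Test'(g(\omega,z))]$.

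The next step is to check that $h$ ``transfers'' the advantage. Since $Z$ is independent of $Y$ (and of $N$), interchanging the two expectations gives $\E[h(Y)] = \E[\Test'(g(Y,Z))]$ and $\E[h(N)] = \E[\Test'(g(N,Z))]$, so $|\E[h(Y)] - \E[h(N)]| = \adv_{g(Y,Z),g(N,Z)}(\Test') = \|g(Y,Z)-g(N,Z)\|_{\tv}$. The final step is to bound $|\E[h(Y)] - \E[h(N)]|$ above by $\|Y-N\|_{\tv}$. Here I would expand $\E[h(Y)] - \E[h(N)] = \sum_{\omega \in \Omega} h(\omega)\bigl(\CP_Y(\omega) - \CP_N(\omega)\bigr)$ and split the sum according to the sign of $\CP_Y(\omega) - \CP_N(\omega)$: since $0 \leq h \leq 1$, the positive part contributes at most $\sum_{\omega : \CP_Y(\omega) \geq \CP_N(\omega)} (\CP_Y(\omega) - \CP_N(\omega)) = \|Y-N\|_{\tv}$ and the negative part at least $-\|Y-N\|_{\tv}$, which yields the inequality and hence the lemma.

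I do not expect a genuine obstacle --- this is a standard fact. The one place to be slightly careful is this last step, since $h$ is $[0,1]$-valued rather than $\{0,1\}$-valued and so is not literally one of the tests appearing in the definition of $\tv$ distance; the sign-splitting argument above is exactly what handles this. Equivalently, one can note that $h$ is the ``acceptance-probability function'' of a randomized test, whose advantage is a convex combination of advantages of deterministic tests and is therefore at most $\|Y-N\|_{\tv}$. An alternative route, avoiding tests entirely, would be to manipulate the sum formula $\|g(Y,Z)-g(N,Z)\|_{\tv} = \frac12\sum_{\omega''}|\CP_{g(Y,Z)}(\omega'') - \CP_{g(N,Z)}(\omega'')|$ directly, grouping terms by the partition of $\Omega \times \Omega'$ induced by $g$ and applying the ordinary triangle inequality together with independence of $Z$; but the test-based argument is cleaner and fits the framework already set up in this section.
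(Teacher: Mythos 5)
Your proof is correct. Note, however, that the paper does not actually supply a proof of \cref{lemma:data-processing} --- it is stated as a standard fact (with an informal interpretation sentence after it) and then used as a black box --- so there is no paper argument to compare against. Your route is one of the two canonical ones: pull back the optimal test $\Test'$ for $g(Y,Z)$ vs.\ $g(N,Z)$ to the $[0,1]$-valued function $h(\omega) = \E_{z \sim \CP_Z}[\Test'(g(\omega,z))]$ (using independence of $Z$ to swap expectations), then bound $|\E[h(Y)]-\E[h(N)]|$ by $\|Y-N\|_\tv$ via sign-splitting. The one subtlety you correctly flag is that $h$ is $[0,1]$-valued rather than $\{0,1\}$-valued and hence not literally a test in the sense of the paper's definition; both of your fixes (the direct sign-splitting bound, or viewing $h$ as the acceptance probability of a randomized test, i.e.\ a mixture of deterministic tests) are standard and sound. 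The alternative you mention --- manipulating $\frac12\sum_{\omega''}|\CP_{g(Y,Z)}(\omega'')-\CP_{g(N,Z)}(\omega'')|$ directly via the triangle inequality on the fibers of $g$ --- is equally valid and would avoid the test machinery entirely, at the cost of being slightly less in the spirit of the surrounding section's emphasis on distinguishing tests.
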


Intuitively, \cref{lemma:data-processing} says that to distinguish the distributions of two random variables $Y$ and $N$, it is not helpful to perform any additional (possibly random) transformations first.

\section{Lower bound basics}\label{sec:lb-basics}

Finally, we consider several specific types of tests for distinguishing distributions over particular kinds of sets. These notions will be crucial for the proofs of lower bounds against streaming approximations for CSPs.

Firstly, let $\Sigma$ be a finite input space and consider the case $\Omega = \Sigma^*$. Given a pair of distributions $\CY,\CN \in \Delta(\Omega)$, we can view a deterministic streaming algorithm $\Alg$ as a test for distinguishing $\CY$ from $\CN$. This perspective lets us rule out algorithms for $\bgd\mF$ (and by extension $(\frac{\beta}{\gamma}+\epsilon)$-approximations to $\mF$) by constructing \emph{indistinguishable} $\CY$ and $\CN$ distributions:

\begin{proposition}[Minimax lemma \cite{Yao77}, informal statement for $\mF$]\label{prop:yao}
    Consider a CSP $\mF$, and let $\CS$ denote a ``class'' of randomized algorithms (e.g., $O(\sqrt n)$-space streaming algorithms with adversarial input ordering). Let $\beta < \gamma \in [0,1]$, and suppose that $\CY$ and $\CN$ are distributions over $\mF$ instances such that \[ \Pr_{\Psi \sim \CN}[\val_\Psi \geq \beta] \leq 0.01 \text{ and } \Pr_{\Psi \sim \CY}[\val_\Psi \leq \gamma] \leq 0.01. \] Then if there exists $\Alg \in \CS$ solving the $\bgd\mF$ problem, there is a \emph{deterministic} algorithm in $\CS$ distinguishing $\CY$ and $\CN$ with advantage at least $\frac16$.
\end{proposition}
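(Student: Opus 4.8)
The plan is to combine Yao's minimax principle with a union bound over the two failure modes of the hypothesized algorithm. First I would invoke Yao's principle in its standard form: if there is a randomized algorithm $\Alg \in \CS$ solving the $\bgd\mF$ gap problem with success probability at least $\frac23$, then for \emph{any} distribution over inputs, there is a \emph{fixed} (deterministic) choice of the algorithm's randomness — i.e. a deterministic algorithm in $\CS$ — that succeeds with probability at least $\frac23$ over that input distribution. The input distribution I would use is the ``mixture'' $\CD$ that flips a fair coin and outputs a sample from $\CY$ or from $\CN$ accordingly. So fix the resulting deterministic $\Alg' \in \CS$; by construction, $\Pr_{\Psi \sim \CD}[\Alg' \text{ correctly solves } \bgd\mF \text{ on } \Psi] \geq \frac23$.

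Next I would unpack what ``correctly solves $\bgd\mF$'' means on each side of the mixture and relate it to the distinguishing advantage of $\Alg'$, viewed as a test $\Test : \Sigma^* \to \{0,1\}$ (say $\Test$ outputs $1$ for ``$\val_\Psi \geq \gamma$'' and $0$ for ``$\val_\Psi \leq \beta$''). On an input drawn from $\CN$, with probability at least $1 - 0.01$ we have $\val_\Psi \leq \beta$, so $\Psi$ is a genuine gap instance on the $\beta$ side; conditioned on this, $\Alg'$ must output $0$ with probability $\geq \frac23$. The only subtlety is that Yao's $\frac23$ is an average over the \emph{mixture}, not over $\CN$ alone, so I would instead argue directly: let $p_\CY = \E[\Test(\CY)]$ and $p_\CN = \E[\Test(\CN)]$; the probability that $\Alg'$ errs under $\CD$ is at least $\frac12\Pr_{\CY}[\val \geq \gamma, \Test = 0] + \frac12\Pr_{\CN}[\val \leq \beta, \Test = 1] \geq \frac12(p_\CY - 0.01) \cdot(\text{...})$ — more cleanly, $\Pr_{\CD}[\text{err}] \geq \frac12\big((1-p_\CY) - 0.01\big) + \frac12\big(p_\CN - 0.01\big)$, since $\Test$ errs on a $\CY$-sample whenever $\Test=0$ and $\val \geq \gamma$ (which happens with probability $\geq (1-p_\CY) - \Pr_\CY[\val \leq \gamma] \geq (1-p_\CY) - 0.01$), and similarly for $\CN$. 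Combining with $\Pr_{\CD}[\text{err}] \leq \frac13$ gives $(1-p_\CY) + p_\CN \leq \frac23 + 0.02$, hence $p_\CY - p_\CN \geq \frac13 - 0.02 > \frac16$. Thus $\adv_{\CY,\CN}(\Test) = |p_\CY - p_\CN| \geq \frac16$, and $\Test$ (equivalently $\Alg'$) is the claimed deterministic distinguisher in $\CS$.

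I expect the only real subtlety — rather than a genuine obstacle — to be bookkeeping the conditioning correctly: the $0.01$ bounds on $\Pr_\CN[\val_\Psi \geq \beta]$ and $\Pr_\CY[\val_\Psi \leq \gamma]$ must be threaded through so that they degrade the $\frac13$ error slack by at most $0.02$ total, leaving advantage comfortably above $\frac16$. One should also note why $\Alg'$ lies in the same class $\CS$: fixing the randomness of a randomized $s(n)$-space streaming (or sketching) algorithm yields a deterministic $s(n)$-space streaming (or sketching) algorithm, and the input-ordering regime (adversarial or random) is unchanged — this is exactly the sense in which Yao's principle is ``class-preserving'' here. No heavy machinery is needed; the statement is essentially a packaging of Yao plus a two-term union bound.
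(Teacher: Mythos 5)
Your proof is correct, and since the paper does not actually include a proof of this proposition (it is stated as an ``informal statement'' attributed to Yao's paper), your argument supplies exactly the derivation the paper leaves implicit. The structure — apply the easy direction of Yao's minimax principle to the $50$/$50$ mixture $\CD = \frac12\CY + \frac12\CN$ to obtain a deterministic $\Alg'\in\CS$ with $\Pr_{\Psi\sim\CD}[\Alg'\text{ errs}] \leq \frac13$, then lower-bound the error under $\CD$ by the two terms $\frac12\Pr_{\CY}[\Test=0,\val\geq\gamma]$ and $\frac12\Pr_{\CN}[\Test=1,\val\leq\beta]$, each of which loses only the $0.01$ tail mass from the hypothesis — is the standard and intended argument. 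The arithmetic closes: $(1-p_\CY)+p_\CN \leq \frac23+0.02$ gives $p_\CY - p_\CN \geq \frac13 - 0.02 = \frac{47}{150} > \frac16$. Two points you already flag, worth keeping explicit in any write-up: (i) Yao's averaging step is over $\E_R\bigl[\Pr_{\Psi\sim\CD}[\Alg_R\text{ errs}]\bigr]$, where ``errs'' only charges promise instances, so instances in the $(\beta,\gamma)$ gap contribute $0$ and the averaging goes through unchanged; (ii) the class-preservation step — fixing the coins of a randomized space-$s$ streaming (or sketching) algorithm yields a deterministic space-$s$ streaming (or sketching) algorithm in the same input-ordering regime — is exactly why the resulting distinguisher stays in $\CS$, and is worth a sentence rather than being left tacit.
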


Now consider the case of a product set $\Omega = \Omega_1 \times \cdots \times \Omega_T$. A set of functions $\Prot_t : \{0,1\}^s \times \Omega_t \to \{0,1\}^s$ for $t \in [T]$ defines a \emph{space-$s$ communication protocol} $\Prot : \Omega \to \{0,1\}$ in the following way. Given input $\vecomega = (\omega_1,\ldots,\omega_T) \in \Omega$, set $S \gets 0^s$ and iteratively apply $S \gets \Prot_t(S, \omega_t)$ for $t \in [T]$; finally, output $S$. (We assume that $\Prot_T$'s codomain is $\{0,1\}$.) $\Prot$ is a special type of test for distinguishing distributions $\CY,\CN \in \Delta(\Omega)$. We can also interpret such a protocol as a strategy in the following \emph{one-way communication game} (or \emph{problem}) with players $\Player_1,\ldots,\Player_T$:

\begin{itemize}
    \item We sample $\vecomega = (\omega_1,\ldots,\omega_T)$ either from $\CY$ (the \emph{$\yes$ case}) or $\CN$ (the \emph{$\no$ case}). $\Player_t$ receives the input $\omega_t$.
    \item $\Player_1$ sends a message, based on their input $\omega_1$, to $\Player_2$. For $t \in \{2,\ldots,T-1\}$, $\Player_t$ sends a message, based on $\Player_{t-1}$'s message and their own input $\omega_t$, to $\Player_{t+1}$. $\Player_T$ decides, based on $\Player_{T-1}$'s message and their own input $\omega_T$, whether to output $1$ or $0$.
    \item The players' collective goal is to maximize their advantage in distinguishing the $\yes$ and $\no$ cases.
\end{itemize}

This type of game can be used to model the flow of information during the execution of a streaming algorithm. The intuitive picture is that we can think of a streaming algorithm on a stream of length $m$ as a protocol for an $m$-player one-way communication game, where $\Player_t$ gets the $t$-th element of the stream, and each player transmits the state of the streaming algorithm onto the next player. To prove lower bounds for such a protocol, it suffices to prove lower bounds in the ``coarser'' game with only a constant number $T = O(1)$ of players, each of which gets a ``chunk'' of, say, $m/T$ stream elements. This corresponds to relaxing the definition of the streaming model to only require that the state is succinct in $T$ ``bottleneck'' locations along the stream; thus, to prove streaming lower bounds, we are proving the sufficient condition that at these bottlenecks, the algorithm's state cannot capture enough information about the elements it's already seen in the stream. Through this ``reduction'', lower bounds for a streaming problem can follow from lower bounds for an appropriately defined communication game. (See \cref{sec:bpd} for a more concrete description in the particular case of $\mcut$.)

We now make this \emph{communication-to-streaming (C2S)} reduction precise in a more convenient and general formulation where each player's input is not necessarily a chunk of constraints; rather, each player constructs constraints from their input according to some pre-defined ``reduction functions''. Suppose $\Sigma$ is a finite input space, while $\Omega = \Omega_1 \times \cdots \times \Omega_T$ is still a product space. Given a distribution $\CD \in \Delta(\Omega)$ and reduction functions $\R_t : \Omega_t \to \Sigma^*$ for $t \in [T]$, define $(R_1,\ldots,R_T) \circ \CD$ as the distribution over $\Sigma^*$ given by sampling $(\omega_1,\ldots,\omega_T) \sim \CD$ and outputting the concatenation $\R_1(\omega_1) \cdots \R_T(\omega_T)$. 

\begin{lemma}[Communication-to-streaming reduction]\label{lemma:comm-to-strm}
Let $\Omega = \Omega_1 \times \cdots \times \Omega_t$ and $\Sigma$ be finite sets. Let $\CY,\CN \in \Delta(\Omega)$ and $\R_t : \Omega_t \to \Sigma^*$ for $t \in [T]$. If there exists a deterministic space-$s$ streaming algorithm $\Alg$ for distinguishing $(\R_1,\ldots,\R_T) \circ \CY$ from $(\R_1,\ldots,\R_T) \circ \CN$ with advantage $\delta$, then there exists a space-$s$ communication protocol $\Prot$ for distinguishing $\CY$ from $\CN$ with advantage $\delta$.
\end{lemma}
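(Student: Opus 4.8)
The plan is to construct the communication protocol $\Prot = (\Prot_1, \ldots, \Prot_T)$ directly from the streaming algorithm $\Alg$ by having each player simulate $\Alg$ on the portion of the stream that its reduction function produces. Concretely, recall that $\Alg$ is specified by functions $\NextState : \{0,1\}^s \times \Sigma \to \{0,1\}^s$ and $\Output : \{0,1\}^s \to \{0,1\}^*$, and that running $\Alg$ on a stream $\vecsigma \in \Sigma^*$ means folding $\NextState$ over the elements of $\vecsigma$ starting from $0^s$ to obtain $\FinalState(\vecsigma)$, then applying $\Output$. The key observation is that $\FinalState$ is associative with respect to concatenation in the weak sense that $\FinalState(\vecsigma \vecsigma')$ is obtained from $\FinalState(\vecsigma)$ by continuing to fold $\NextState$ over the elements of $\vecsigma'$ — even without $\Alg$ being a sketching algorithm. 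This is exactly what lets a sequential (rather than parallel) protocol simulate $\Alg$.

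The main steps are as follows. First I would define, for each $t \in [T]$, an auxiliary ``partial run'' function $\textsf{Run} : \{0,1\}^s \times \Sigma^* \to \{0,1\}^s$ that takes a state $S$ and a finite stream segment and returns the state obtained by folding $\NextState$ over the segment starting from $S$; observe $\textsf{Run}(0^s, \vecsigma) = \FinalState(\vecsigma)$ and $\textsf{Run}(S, \vecsigma\vecsigma') = \textsf{Run}(\textsf{Run}(S,\vecsigma), \vecsigma')$. Then I would set $\Prot_t(S, \omega_t) \eqdef \textsf{Run}(S, \R_t(\omega_t))$ for $t < T$, and $\Prot_T(S, \omega_T) \eqdef \Output(\textsf{Run}(S, \R_T(\omega_T)))$ composed with whatever fixed predicate extracts the single output bit. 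By induction on $t$, the state held after player $t$ acts on input $(\omega_1, \ldots, \omega_T)$ equals $\textsf{Run}(0^s, \R_1(\omega_1) \cdots \R_t(\omega_t))$; at $t = T$ this yields $\Output(\FinalState(\R_1(\omega_1) \cdots \R_T(\omega_T)))$, i.e., exactly the output of $\Alg$ on the concatenated stream. Each $\Prot_t$ manipulates only an $s$-bit state, so $\Prot$ is a space-$s$ protocol. Finally, since the distribution of the stream $\R_1(\omega_1)\cdots\R_T(\omega_T)$ when $\vecomega \sim \CY$ is by definition $(\R_1,\ldots,\R_T)\circ\CY$ (and likewise for $\CN$), we get
\[
\adv_{\CY,\CN}(\Prot) = \adv_{(\R_1,\ldots,\R_T)\circ\CY,\,(\R_1,\ldots,\R_T)\circ\CN}(\Alg) = \delta,
\]
which is the claimed bound.

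I do not expect a serious obstacle here; the lemma is essentially a bookkeeping statement identifying the execution trace of a streaming algorithm on a concatenated stream with the message-passing trace of a communication protocol. The one subtlety worth being careful about is that the reduction functions $\R_t$ may output stream segments of varying (input-dependent) length, so the simulation is not literally ``one stream element per player''; this is precisely why the $\textsf{Run}$ abstraction (folding $\NextState$ over a whole segment) is the right object rather than a single application of $\NextState$. A second minor point is that $\Prot_T$ must produce a single bit in $\{0,1\}$ whereas $\Output$ produces a string in $\{0,1\}^*$; since $\Alg$ is being used to \emph{distinguish} $\CY$ from $\CN$, its output is already (wlog) a single bit, or we postcompose with the fixed decision rule defining its advantage, so this causes no loss. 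Everything else — that the protocol uses only $s$ bits of state between players, and that randomness plays no role since $\Alg$ is assumed deterministic — is immediate from the construction.
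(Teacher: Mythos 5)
Your proposal is correct and takes essentially the same approach as the paper: each player simulates the streaming algorithm on the chunk of constraints produced by its reduction function, passing the $s$-bit state along as its message. Your $\textsf{Run}$ abstraction is a slightly more careful formalization of the paper's informal ``$S \gets \NextState(\R_t(\omega_t), S)$'' (which silently folds $\NextState$ over the possibly multi-symbol output of $\R_t$), but the argument is the same.
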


\begin{proof}
Let $\Alg$ be given by $(\NextState,\Output)$. Consider the protocol $\Prot$ in which $\Player_1$, on input $\omega_1$, sets $S \gets \NextState(\R_1(\omega_1),0^s)$ and sends $S$ to $\Player_2$. Now for $t \in \{2,\ldots,T-1\}$, $\Player_t$ receives $S$ from $\Player_{t-1}$, sets $S \gets \NextState(\R_t(\omega_t),S)$, and sends $S$ to $\Player_{t+1}$. Finally, $\Player_T$ outputs $\Output(\NextState(\R_T(\omega_T),S))$. By definition, when the players receive input $(\omega_1,\ldots,\omega_T)$, they are running $\Alg$ on the stream $\vecsigma = \R_1(\omega_1) \cdots \R_T(\omega_T)$. If the players' input comes from $\CY$, then $\vecsigma$ is distributed as $(\R_1,\ldots,\R_T) \circ \CY$, and similarly for $\CN$.
\end{proof}

In our setting, where the reduction produces CSP instances, we typically think of each reduction function as outputting ``subinstances'' for each player, whose union is the output instance $\Psi$.

The final special case of advantage we consider is distinguishing $\CD \in \Delta(\BZ_q^n)$ from the uniform distribution $\CU_{\BZ_q^n}$. Recalling that we view $\CD \in \Delta(\BZ_q^n)$ as a function $\BZ_q^n \to [0,1]$, we can consider the Fourier coefficients $\hat{\CD}(\vecs)$ for $\vecs \in \BZ_q^n$. The following simple but crucial lemma relates the distance-to-uniformity of $\CD$ with these coefficients:

\begin{lemma}\label{lemma:xor}
Let $\CD \in \Delta(\BZ_q^n)$ and let $\CU = \Unif_{\BZ_q^n}$. Then \[ \|\CD-\CU\|_\tv^2 \leq q^{2n} \sum_{\vecs\neq\veczero \in \BZ_q^n} |\hat{\CD}(\vecs)|^2. \]
\end{lemma}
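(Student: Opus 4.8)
The plan is to bound the total variation distance $\|\CD - \CU\|_\tv = \frac12\sum_{\vecx \in \BZ_q^n} |\CD(\vecx) - \CU(\vecx)|$ by relating it to the $2$-norm of the ``error function'' $g \eqdef \CD - \CU$, and then to express $\|g\|_2^2$ in terms of the Fourier coefficients $\hat{g}(\vecs)$ via Parseval's identity (\cref{prop:parseval}). First I would apply Cauchy--Schwarz to the $\ell_1$-norm sum: since there are $q^n$ points in $\BZ_q^n$,
\[
\|\CD - \CU\|_\tv = \frac12 \sum_{\vecx} |g(\vecx)| \leq \frac12 \sqrt{q^n} \cdot \|g\|_2.
\]
So $\|\CD-\CU\|_\tv^2 \leq \frac14 q^n \|g\|_2^2$.

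Next I would compute $\|g\|_2^2$ using Parseval: $\|g\|_2^2 = q^n \sum_{\vecs \in \BZ_q^n} |\hat g(\vecs)|^2$. The key observation is that the Fourier coefficients of $\CU$ vanish except at $\vecs = \veczero$: indeed $\hat{\CU}(\vecs) = \sum_{\vecx} \omega^{-\vecs\cdot\vecx} \CU(\vecx) = q^{-n}\sum_{\vecx} \omega^{-\vecs\cdot\vecx}$, which is $1$ if $\vecs = \veczero$ and $0$ otherwise. Likewise $\hat{\CD}(\veczero) = \sum_\vecx \CD(\vecx) = 1$ since $\CD$ is a probability distribution. Hence $\hat g(\veczero) = \hat\CD(\veczero) - \hat\CU(\veczero) = 0$, and for $\vecs \neq \veczero$, $\hat g(\vecs) = \hat\CD(\vecs)$. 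Combining, $\|g\|_2^2 = q^n \sum_{\vecs \neq \veczero} |\hat\CD(\vecs)|^2$, and therefore
\[
\|\CD - \CU\|_\tv^2 \leq \tfrac14 q^n \cdot q^n \sum_{\vecs \neq \veczero} |\hat\CD(\vecs)|^2 = \tfrac14 q^{2n} \sum_{\vecs \neq \veczero} |\hat\CD(\vecs)|^2,
\]
which is even slightly stronger than the claimed bound (the factor $\frac14$ can be dropped to match the statement exactly, or one can just note $\frac14 \leq 1$).

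I do not anticipate a serious obstacle here — the proof is a standard Cauchy--Schwarz-plus-Parseval argument. The only points requiring a little care are getting the normalization conventions right (the excerpt uses the unnormalized Fourier transform $\hat f(\vecs) = \sum_\vecx \omega^{-\vecs\cdot\vecx} f(\vecx)$ and the unnormalized norm $\|f\|_2 = (\sum_\vecx |f(\vecx)|^2)^{1/2}$, with Parseval reading $\|f\|_2^2 = q^n \sum_\ell \W^\ell[f]$), and verifying that $g$'s zeroth Fourier coefficient vanishes so that the sum over all $\vecs$ in Parseval reduces to the sum over $\vecs \neq \veczero$. The factor of $\frac14$ I obtain is consistent with (indeed tighter than) the stated inequality, so the stated version follows immediately.
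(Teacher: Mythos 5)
Your proof is correct and matches the paper's argument step for step: both compute the Fourier coefficients of $\CD$ and $\CU$ at $\veczero$ and at nonzero $\vecs$, write $\|\CD-\CU\|_\tv = \frac12\|\CD-\CU\|_1$, apply Cauchy--Schwarz, and invoke Parseval to pass to the Fourier side. The only difference is cosmetic: you retain the factor of $\frac14$ from $\|\cdot\|_\tv = \frac12\|\cdot\|_1$, giving a (slightly) tighter constant, whereas the paper simply discards it and writes $\|\CD-\CU\|_\tv^2 \leq q^n\|\CD-\CU\|_2^2$.
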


\begin{proof}
We have $\hat{\CD}(\veczero) = \sum_{\vecz\in\BZ_q^{\alpha n}} \CD(\vecz) = 1$. Similarly, $\hat{\CU}(\veczero) = 1$, while for $\vecs \neq \veczero \in \BZ_q^{\alpha n}$, we have $\hat{\CU}(\vecs) = \frac1{q^{\alpha n}} \sum_{\vecz\in\BZ_2^{\alpha n}} (-1)^{\vecs \cdot \vecz} = 0$ by symmetry. Also by definition, $\|\CD-\CU\|_{\tv} = \frac12 \|\CD-\CU\|_1$, where $\CD-\CU : \BZ_q^n \to [-1,1]$ is the difference of the probability mass functions of $\CD$ and $\CU$.

Thus using Cauchy-Schwarz and Parseval's identify (\cref{prop:parseval}), we have \[ \|\CD-\CU\|_{\tv}^2 \leq q^n \|\CD-\CU\|_2^2 = q^{2n} \sum_{\vecs\in\BZ_q^n} |\hat{\CD}(\vecs)-\hat{\CU}(\vecs)|^2 = q^{2n} \sum_{\vecs\neq\veczero\in\BZ_q^n} |\hat{\CD}(\vecs)|^2, \] as desired.
\end{proof}

\cref{lemma:xor} is an example of a so-called ``\textsc{xor} lemma'' (see \cite[\S1]{Gol11}). In the $q=2$ case, for each $\vecs \in \BZ_q^n$, $\hat{\CD}(\vecs)$ is the advantage of the \emph{linear test} on $\CD$ which, given a sample $\vecz \sim \CD$, outputs $\sum_{i \in [n]:s_i = 1} z_i$. The lemma roughly says that if none of these tests work well, then $\CD$ is in fact close to uniform.

Together, \cref{prop:yao,lemma:comm-to-strm,lemma:xor} give us a ``roadmap'' for proving CSP streaming inapproximability results. Namely, we design a one-way communication game with the following two properties:

\begin{enumerate}
    \item The players can use reduction functions (\`a la \cref{lemma:comm-to-strm}) to produce CSP instances from their inputs with the property that there is a large gap between the instances' values in the $\yes$ and $\no$ cases (with high probability).
    \item The game's hardness itself can be proven using \cref{lemma:xor} and additional Fourier analysis and combinatorics.
\end{enumerate}

In the CSP context, this was first introduced by Kapralov, Khanna, and Sudan~\cite{KKS15} for $\mcut$. We turn to this proof in the next chapter.

\part{Prior results}\label{part:prior-results}

\newcommand{\GW}{\mathrm{GW}}

\chapter{$\mcut$ is approximation-resistant}\label{chap:mcut}

\epigraph{The problem is defined as follows: given a stream of edges of an $n$-node graph $G$, estimate the value of the maximum cut in $G$. \emph{Question:} Is there an algorithm with an approximation factor strictly better than 1/2 that uses $o(n)$ space?}{\cite[Question 10]{IMNO11}, attributed to Robert Krauthgamer}

\newthought{$\mcut$ was the first CSP} whose streaming approximability was tightly characterized. To be precise, Kapralov, Khanna, and Sudan~\cite{KKS15} proved the following theorem:

\begin{theorem}[{\cite{KKS15}}]\label{thm:mcut-hardness}
For every constant $\epsilon > 0$, any streaming algorithm which $(\frac12+\epsilon)$-approximates $\mcut$ requires $\Omega(\sqrt n)$ space.
\end{theorem}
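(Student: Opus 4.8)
The plan is to follow the three-step roadmap assembled in \cref{sec:lb-basics}: (1) design a one-way communication game with a large value gap between $\yes$ and $\no$ instances after applying suitable reduction functions, (2) prove the communication game is hard using the \textsc{xor} lemma (\cref{lemma:xor}) plus Fourier analysis and random-graph combinatorics, and (3) invoke \cref{prop:yao} and \cref{lemma:comm-to-strm} to conclude a streaming space lower bound. Concretely, I would use a $T$-player "Boolean Hidden Matching"-style (or "Boolean partition distribution") game: each $\Player_t$ receives a $\frac1{k}$-partial matching $M_t$ on $[n]$ together with a string $\vecz_t \in \BZ_2^{\alpha n}$; in the $\yes$ case there is a hidden global assignment $\vecx^* \in \BZ_2^n$ and $\vecz_t = M_t \vecx^*$ (so each edge is told its two endpoints differ or agree consistently with $\vecx^*$), while in the $\no$ case the $\vecz_t$ are uniform and independent of the matching. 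Each player's reduction function $\R_t$ turns $(M_t, \vecz_t)$ into $\mcut$ constraints: for an edge $\{u,v\}$ with label bit $z$, emit a $\cut$ constraint on $(u,v)$ if $z=1$ and (using the fact that $\mcut$ here is just $\mcut$, without negations — so actually one encodes agreement differently, e.g.\ by splitting into the standard "planted assignment" construction of \cref{sec:bpd}) handle $z=0$ edges appropriately. In the $\yes$ case $\vecx^*$ satisfies essentially all emitted constraints, so $\val_\Psi \geq 1 - o(1)$; in the $\no$ case the constraint graph is a union of random sparse matchings, i.e.\ a random sparse graph, whose max-cut value is $\frac12 + o(1)$ with high probability (this is exactly the statement referenced as \cref{sec:bpd}, provable by a first-moment/union bound over cuts or an eigenvalue argument).

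For the hardness of the communication game, the heart of the matter is a \emph{one-round} (two-player) indistinguishability statement which then gets amplified across the $T$ players by a hybrid argument. In the two-player version, $\Player_1$ holds $(M_1,\vecz_1)$ and sends an $s$-bit message; I want to show that conditioned on any fixed message $\sigma$, the distribution of $\Player_2$'s "view" — really, the distribution of the induced constraints on the variables touched by $M_1$, as seen through the matching — is $o(1)$-close in total variation in the $\yes$ versus $\no$ cases, unless $s = \Omega(\sqrt n)$. Here is where \cref{lemma:xor} enters: fixing the message partitions $\Player_1$'s input space into $2^s$ cells, and on a typical cell of density $\gtrsim 2^{-s}$ one analyzes the Fourier coefficients $\hat\CD(\vecs)$ of the conditional distribution of $\vecx^*$ (or of the label vector). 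The key combinatorial input is that for the distribution of $M_1 \vecx^*$ to be detectably non-uniform, one needs a Fourier character $\vecs$ supported on a set of edges forming — after pulling back through the matching $M_1$ — a \emph{small even subgraph} (a cycle or union of cycles) in the ambient graph; but a random $\frac1k$-partial matching with $\alpha n$ edges has no short cycles with its "partner" structure unless one looks at $\Omega(\log n / \log\log n)$ edges, and bounding the total Fourier mass on such characters against the information in an $s$-bit message forces $2^{2s} \gtrsim n$, i.e.\ $s = \Omega(\sqrt n)$. This combinatorial heart is what \cite{GKK+08} supplies and what \cite{KKS15} adapts, and it is the step I expect to be the main obstacle: carefully setting up the conditional Fourier analysis so that \cref{lemma:xor} applies, and then extracting the $\sqrt n$ barrier from the scarcity of short even subgraphs in the random matching, is genuinely delicate (one must also control the event that the matching pieces overlap badly across players).

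Finally, the hybrid argument across $T$ players composes these one-round bounds: the streaming algorithm's state at each of $T$ bottlenecks is an $s$-bit message, so by a telescoping/triangle-inequality argument (\cref{lemma:rv-triangle}) together with the data processing inequality (\cref{lemma:data-processing}), distinguishing $\yes$ from $\no$ with constant advantage over the whole stream forces one of the $T$ single-round problems to be solved with advantage $\Omega(1/T)$, which by the above needs $s = \Omega(\sqrt n)$ (for $T = O(1)$, or $T$ growing slowly). Combining: if a streaming algorithm $(\frac12 + \epsilon)$-approximates $\mcut$ in space $s$, then it solves the $(\frac12 + \epsilon', 1 - \epsilon')$-gap problem, hence (via \cref{prop:yao}) deterministically distinguishes $\CY$ from $\CN$ with advantage $\frac16$, hence (via \cref{lemma:comm-to-strm}) yields an $s$-space protocol for the communication game with constant advantage, forcing $s = \Omega(\sqrt n)$. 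The dependence of the hidden constant on $\epsilon$ enters only through how small $\alpha$ must be taken to make the $\no$-case max-cut close to $\frac12$ and the $\yes$-case value close to $1$.
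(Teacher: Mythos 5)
Your roadmap is essentially the paper's: a Boolean-partition-detection game ($\bpd$/$\seqbpd$, \cref{def:bpd,def:seqbpd}), reduction functions producing $\mcut$ instances with value gap $\approx 1$ vs.\ $\approx\frac12$ (\cref{cons:seqbpd-to-mcut,lemma:seqbpd-to-mcut-analysis}), a hybrid/triangle-inequality argument reducing the $T$-player game to the two-player one at a $1/T$ loss in advantage (\cref{lemma:bpd-to-seqbpd}), Fourier-analytic hardness of the two-player game via the \textsc{xor} lemma (\cref{lemma:xor}, \cref{thm:bpd-hardness}, \cref{lemma:bpd-fourier-reduce}), and the final wiring through \cref{prop:yao} and \cref{lemma:comm-to-strm}. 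So this is the same proof.

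One caveat on the combinatorial heart, which you rightly flag as the delicate step: your description in terms of ``Fourier characters pulling back to small even subgraphs / short cycles'' is not the picture that applies here. Because Bob's graph is a \emph{matching}, the map $(M^{\fold})^\top$ is \emph{injective}, so the reduction (\cref{lemma:bpd-fourier-reduce}) leaves exactly one candidate $\vecs$ for each target $\vecv$, and the quantity to bound is simply the probability over $M\sim\Matchings_\alpha(n)$ that a fixed weight-$\ell$ vertex set is a union of $\ell/2$ edges of $M$; this gives $h_\alpha(\ell,n)=\binom{\alpha n}{\ell/2}/\binom{n}{\ell}\leq (2\alpha e\ell/n)^{\ell/2}$ (\cref{lemma:bpd-combo-bound}), and balancing against the hypercontractive bound $(\zeta s'/\ell)^\ell$ yields the $\sqrt n$ threshold — no cycle-counting needed. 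The ``no short even subgraphs'' picture you describe is the one that arises when the matching is replaced by a random (Erdős–Rényi) graph, which \cite{KKS15} needs for the random-ordering strengthening (where $(M^{\fold})^\top$ is no longer injective; see \cref{sec:mcut-input-ordering}), and in the much harder linear-space $\seqibpd$ analysis of \cite{KK19}. For the $\sqrt n$, adversarial-order bound stated here, the matching suffices and keeps the combinatorics elementary.
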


This chapter is devoted to proving \cref{thm:mcut-hardness}. We remark also that in the classical setting, Goemans and Williamson~\cite{GW95} gave an algorithm based on SDP rounding which $\alpha_{\GW}$-approximates $\mcut$, where $\alpha_{\GW} = \min_{\theta \in [0,\Prot]} \frac{2\theta}{\Prot(1-\cos(\theta)} \approx 0.87856$;\footnote{Khot, Kindler, Mossel, and O'Donnell~\cite{KKMO07} showed that $(\alpha_{\GW}+\epsilon)$-approximations are UG-hard. Without the UGC, Trevisan \emph{et al.}~\cite{TSSW00} show that $(\frac{16}{17}+\epsilon)$-approximation is $\NP$-hard, but $\frac{16}{17}\approx 0.94118$.} thus, \cref{thm:mcut-hardness} shows that $\mcut$ is comparatively much \emph{less} approximable in the streaming setting relative to the classical setting.

Now, we begin with some intuition for why $\mcut$ should be hard to approximate with a small-space streaming algorithm. Consider a streaming algorithm solving $\mcut$ on an input instance $\Psi$. Suppose that we pause it halfway through the input stream, and at this point, the algorithm is fairly confident that $\val_\Psi$ is large and has a ``guess'' $\vecx \in \BZ_2^n$ for an assignment with high value. Then during the second half of the stream, the algorithm should be able to confirm that the constraints it sees are also (mostly) consistent with $\vecx$.

In order to prove streaming approximation-resistance for $\mcut$, we begin in \cref{sec:bpd} by defining a one-way communication problem which formalizes this difficulty, which we'll call \emph{Boolean partition detection} ($\bpd$),\footnote{The typical name in the literature is the \emph{Boolean hidden matching problem} (see e.g., \cite{KKS15}). In this thesis, however, we have to accommodate a variety of communication problems and so have chosen to adopt a more consistent naming scheme.} and we give a roadmap for how $\bpd$'s hardness implies $\mcut$'s hardness via the intermediate ``\emph{sequential Boolean partition detection} problem ($\seqbpd$)''. Next, in \cref{sec:bpd-hardness}, we describe the Fourier-analytic proof, originally due to Gavinsky \emph{et al.}~\cite{GKK+08}, that $\bpd$ is hard, and in \cref{sec:bpd-to-seqbpd}, we show how $\seqbpd$ reduces to $\bpd$ via the \emph{hybrid argument} of Kapralov, Khanna, and Sudan~\cite{KKS15}. Finally, in \cref{sec:mcut-discussion}, we make several comments on important features of the $\mcut$ lower bound which will remain important for the other CSPs considered in this thesis.

\section{Boolean partition detection problems}\label{sec:bpd}

Let $M\in \{0,1\}^{2\alpha n \times n}$ be an adjacency matrix for a graph on $n$ vertices and $\alpha n$ edges. Recall that in $M$, each edge corresponds to a $2 \times n$ block. We define a \emph{folded} variant of $M$, denoted $M^{\fold} \in \{0,1\}^{\alpha n \times n}$ by replacing each $2 \times n$ edge-block with the sum of its columns; thus, each column of $M^{\fold}$ corresponds to a single edge, and has $1$'s indicating the two vertices incident to that edge. Then the $\bpd$ problem is defined as follows.

\begin{definition}[$\bpd$]\label{def:bpd}
Let $\alpha \in (0,1)$ and $n \in \BN$. Then $\bpd_\alpha(n)$ is the following two-player one-way communication problem, with players $\Alice$ and $\Bob$:

\begin{itemize}[nosep]
    \item $\Alice$ receives a random vector $\vecx^* \sim \Unif_{\BZ_2^n}$.
    \item $\Bob$ receives an adjacency matrix $M \in \{0,1\}^{2\alpha n \times n}$ sampled from $\Matchings_\alpha(n)$, and a vector $\vecz \in \BZ_2^{\alpha n}$ labelling each edge of $M$ defined as follows:
    \begin{itemize}[nosep]
        \item $\yes$ case: $\vecz = (M^{\fold}) \vecx^*$.
        \item $\no$ case: $\vecz \sim \Unif_{\BZ_2^{\alpha n}}$.
    \end{itemize}
    \item $\Alice$ can send a message to $\Bob$, who must then decide whether they are in the $\yes$ or $\no$ case.
\end{itemize}
\end{definition}

We can view $\Alice$'s vector $\vecx^*$ as a partition of $M$'s vertices. In the $\yes$ case, $\Bob$'s vector $\vecz$ can be interpreted as follows: If $\vece(\ell) = (u,v)$ is the $\ell$-th edge of $M$, then $z_\ell = x^*_u + x^*_v$. Thus, $\vecz$ precisely encodes which edges in $M$ cross the partition $\vecx^*$. On the other hand, in the $\no$ case, $\vecz$ is uniformly random. Thus, $\Bob$'s goal is to decide whether his input $\vecz$ is consistent with partition $\vecx^*$ based on $\Alice$'s message.

In \cref{sec:bpd-hardness} below, we will prove that this task requires significant communication from $\Alice$ to $\Bob$:

\begin{theorem}[{\cite{GKK+08}}]\label{thm:bpd-hardness}
For every $\alpha, \delta \in (0,1)$, there exists $\tau > 0$ and $n_0 \in \BN$ such that for all $n \geq n_0$, any protocol for $\bpd_\alpha(n)$ achieving advantage at least $\delta$ requires $\tau \sqrt n$ communication.
\end{theorem}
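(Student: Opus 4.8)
The plan is to follow the Fourier-analytic ``\textsc{xor}-lemma'' template of \cref{sec:lb-basics}. By an averaging argument over the protocol's coins (cf.\ \cite{Yao77}), it suffices to rule out \emph{deterministic} protocols; then $\Alice$'s message is a function $g:\BZ_2^n\to\{0,1\}^s$, and conditioning on a message $\sigma$ makes $\vecx^*$ uniform on the fiber $g^{-1}(\sigma)$ — write $\mu_\sigma\in\Delta(\BZ_2^n)$ for this conditional law. Since $\vecx^*$ and $M$ are independent and, in the $\no$ case, $\vecz$ is independent of $(\sigma,M)$ with the same marginal as in the $\yes$ case, conditioning on $(\sigma,M)$ reduces $\Bob$'s problem to distinguishing $\CD_{\sigma,M}\eqdef(M^{\fold})_*\mu_\sigma\in\Delta(\BZ_2^{\alpha n})$ (the law of $M^{\fold}\vecx^*$ when $\vecx^*\sim\mu_\sigma$) from $\Unif_{\BZ_2^{\alpha n}}$; hence the protocol's advantage is at most $\E_{\sigma,M}\!\left[\|\CD_{\sigma,M}-\Unif_{\BZ_2^{\alpha n}}\|_\tv\right]$, where $\sigma$ is the message of a uniform $\vecx^*$ and $M\sim\Matchings_\alpha(n)$. (We may assume $\alpha\le\tfrac12$, else $\Matchings_\alpha(n)$ is empty.) The goal is to bound this expectation by $2^{-\Omega(n)}$ whenever $s=O(\sqrt n)$.

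To that end I would apply \cref{lemma:xor} with its ambient group taken to be $\BZ_2^{\alpha n}$, giving $\|\CD_{\sigma,M}-\Unif\|_\tv^2\le 2^{2\alpha n}\sum_{\vecs\neq\veczero}|\hat\CD_{\sigma,M}(\vecs)|^2$. The crucial observations are: (i) pushing forward under the linear map $M^{\fold}$ transposes on the Fourier side, so $\hat\CD_{\sigma,M}(\vecs)=\hat\mu_\sigma\!\left((M^{\fold})^\top\vecs\right)$; and (ii) because $M$ is a matching, $(M^{\fold})^\top\vecs$ is exactly the $\{0,1\}$-indicator of the $2\|\vecs\|_0$ vertices covered by the edges in $\supp(\vecs)$ — so this map doubles Fourier levels, and as $M$ ranges uniformly the covered vertex set of any fixed $j$ of its edges is a uniformly random $2j$-subset of $[n]$. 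Taking $\E_M$ and using exchangeability then yields the identity
\[\E_M\Bigl[\sum_{\vecs\neq\veczero}|\hat\CD_{\sigma,M}(\vecs)|^2\Bigr]=\sum_{j=1}^{\alpha n}\frac{\binom{\alpha n}{j}}{\binom{n}{2j}}\,\W^{2j}[\mu_\sigma],\]
reducing everything to the Fourier weights of the fiber indicator $\1_{g^{-1}(\sigma)}$.

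Next I would dispose of atypical messages and bound the sum. Since $\sum_\sigma|g^{-1}(\sigma)|=2^n$ over $2^s$ messages, $\Pr_\sigma[|g^{-1}(\sigma)|<2^{n-c}]<2^{s-c}$; set $c\eqdef s+t$ with $t=t(\delta)$ chosen so $2^{-t}\le\delta/4$, so the ``light'' messages contribute at most $\delta/4$. For a ``heavy'' $\sigma$ (so $\|\1_{g^{-1}(\sigma)}\|_0\ge 2^{n-c}$), split the sum at $j_0\eqdef 2c$. For $j\le j_0$ (hence $2j\le 4c$), \cref{lemma:low-fourier-bound} gives $\W^{2j}[\mu_\sigma]\le 2^{-2n}\bigl(\tfrac{\zeta c}{2j}\bigr)^{2j}$, while $\binom{\alpha n}{j}/\binom{n}{2j}\le(4e\alpha j/n)^j$ by elementary binomial estimates; their product simplifies to $2^{-2n}\bigl(e\alpha\zeta^2 c^2/(jn)\bigr)^j$, whose sum over $j\ge1$ is $O_\alpha(2^{-2n})$ provided $c^2/n$ is bounded. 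For $j>j_0$ I would use only $\binom{\alpha n}{j}/\binom{n}{2j}\le1$ (valid since $\alpha\le\tfrac12$, e.g.\ via $\binom{n}{2j}\ge\binom{2\alpha n}{2j}\ge\binom{\alpha n}{j}^2$) together with Parseval (\cref{prop:parseval}): $\sum_j\W^{2j}[\mu_\sigma]\le\|\mu_\sigma\|_2^2/2^n=1/(2^n|g^{-1}(\sigma)|)\le 2^{c-2n}$. Thus if $c=O(\sqrt n)$ then $\E_M[\sum_{\vecs\neq\veczero}|\hat\CD_{\sigma,M}(\vecs)|^2]\le 2^{-2n+O(\sqrt n)}$, and by Jensen $\E_M[\|\CD_{\sigma,M}-\Unif\|_\tv]\le 2^{\alpha n}\,2^{-n+O(\sqrt n)}=2^{-(1-\alpha)n+o(n)}$. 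Choosing $\tau=\tau(\alpha)$ so that ``$s<\tau\sqrt n$'' forces the bound on $c^2/n$, and $n_0=n_0(\alpha,\delta)$ large enough that this last quantity is $<\delta/4$, no protocol can achieve advantage $\ge\delta$ with $s<\tau\sqrt n$ once $n\ge n_0$.

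The main obstacle — and the source of the $\sqrt n$ threshold — is the balancing in the last step: the \textsc{xor} lemma costs a factor $2^{2\alpha n}$ that must be recovered entirely from the fiber's Fourier spectrum. Parseval alone only gives $\sum_j\W^{2j}[\mu_\sigma]\approx 2^{-2n}$, which is swamped by $2^{2\alpha n}$; the extra decay has to come from the low levels, where hypercontractivity (\cref{lemma:low-fourier-bound}) makes $\W^{2j}[\mu_\sigma]$ far smaller than $2^{-2n}$ — but only for $2j\ll c$. Requiring that the ``hypercontractive'' regime $j\lesssim c$ and the ``trivially bounded'' regime $j\gtrsim c$ jointly suppress the sum over \emph{all} $j$ is precisely the constraint $c^2=O(n)$, i.e.\ $s=\Omega(\sqrt n)$. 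A secondary point to get right is the binomial comparison $\binom{\alpha n}{j}\le\binom{n}{2j}$ for $\alpha\le\tfrac12$, which is what prevents the high-level coefficients from spoiling the Parseval bound.
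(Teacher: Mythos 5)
Your proposal is correct and follows essentially the same route as the paper: condition on $\Alice$'s message, use the pushforward/transpose Fourier identity together with \cref{lemma:xor} to reduce to bounding $\sum_j\binom{\alpha n}{j}\binom{n}{2j}^{-1}\W^{2j}$ of the conditional law, then split at $j\approx c$ and control the low levels via hypercontractivity (\cref{lemma:low-fourier-bound}) and the tail via Parseval plus a binomial comparison. The only differences are cosmetic — you work with the normalized fiber distribution $\mu_\sigma$ rather than the indicator $\1_A$, you state the $\E_M$ computation as an exact identity rather than via a worst-case $h_\alpha(\ell,n)$ (these coincide here since $(M^{\fold})^\top$ is injective), and for the tail you use the crude $\binom{\alpha n}{j}\le\binom{n}{2j}$ in place of the paper's sharper estimate, which still suffices.
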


While \cref{thm:bpd-hardness} captures the essential obstacle to computing $\mcut$ in the streaming setting, it is not alone sufficient to prove inapproximability. For this purpose, we want $\Alice$ and $\Bob$ to produce $\mcut$ instances using a streaming-to-communication reduction (see \cref{lemma:comm-to-strm}) which have a high value gap between the $\yes$ and $\no$ cases. Indeed, to rule out $\approx \frac12$-approximations, the $\yes$ instances should have value $\approx 1$ while the $\no$ instances should have value $\approx \frac12$. \emph{A priori}, we might hope to produce such instances via a direct reduction from $\bpd$ to $\mcut$. In the $\yes$ case of $\bpd$, suppose that for each edge which crosses the cut (i.e., those for which $z_\ell = 1$), $\Bob$ creates a corresponding $\mcut$ constraint; encouragingly, the resulting instance has value $1$! But unfortunately, the same is true in the $\no$ case, because every $\vecz$ is consistent with \emph{some} partition $\vecx'$ of $M$. For instance, for each $\vece(\ell) = (u,v)$, we could set $x'_u = 0$ and $x'_v = z_\ell$ (and assign all remaining $x$-values arbitrarily); in particular, since $M$ is a matching, none of these assignments will interfere with each other.\footnote{In other words, the graph corresponding to the $\mcut$ instance $\Bob$ creates will always be a matching, and matchings are always bipartite; thus $\Bob$'s instances always have value $1$.}

The issue, in brief, is that the underlying graph in $\bpd$ is too sparse to be of use in constructing low-value $\mcut$ instances. To remedy this, we introduce a \emph{sequential} variant of $\bpd$ which can give rise to much denser graphs:

\begin{definition}[$\seqbpd$]\label{def:seqbpd}
Let $\alpha \in (0,1)$ and $T, n \in \BN$. Then $\seqbpd_{\alpha,T}(n)$ is the following $(T+1)$-player one-way communication problem, with players $\Alice$ and $\Bob_1,\ldots,\Bob_T$:

\begin{itemize}
    \item $\Alice$ receives a random vector $\vecx^* \sim \Unif_{\BZ_2^n}$.
    \item Each $\Bob_t$ receives an adjacency matrix $M_t\in\{0,1\}^{2\alpha n \times n}$ sampled from $\Matchings_\alpha(n)$, and a vector $\vecz(t) \in \BZ_2^{\alpha n}$ labelling each edge of $M_t$ as follows:
    \begin{itemize}[nosep]
        \item $\yes$ case: $\vecz(t) = (M_t^{\fold}) \vecx^*$.
        \item $\no$ case: $\vecz(t) \sim \Unif_{\BZ_2^{\alpha n}}$.
    \end{itemize}
    \item $\Alice$ can send a message to $\Bob_1$; each $\Bob_t$ can send a message to $\Bob_{t+1}$; and at the end, $\Bob_T$ must decide whether they are in the $\yes$ or $\no$ case.
\end{itemize}
\end{definition}

$\seqbpd$ is a ``happy medium'' which allows us to effect reductions both \emph{from} $\bpd$ and \emph{to} $\mcut$. Indeed, we have:

\begin{lemma}\label{lemma:bpd-to-seqbpd}
Let $\alpha,\delta \in (0,1)$ and $T,n,s \in \BN$. Suppose there is a protocol for $\seqbpd_{\alpha,T}(n)$ achieving advantage $\delta$ using $s$ communication. Then there is a protocol for $\bpd_{\alpha}(n)$ achieving advantage at least $\frac{\delta}{T}$ using $s$ communication.
\end{lemma}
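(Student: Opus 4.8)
The plan is to use a \emph{hybrid argument}: we interpolate between the $\yes$ and $\no$ cases of $\seqbpd_{\alpha,T}(n)$ by defining $T+1$ intermediate distributions $\CD_0, \CD_1, \ldots, \CD_T$, where in $\CD_i$ the first $i$ of $\Bob$'s labels $\vecz(1), \ldots, \vecz(i)$ are drawn from the $\yes$ distribution (i.e. $\vecz(t) = M_t^{\fold}\vecx^*$) while the remaining $\vecz(i+1), \ldots, \vecz(T)$ are uniform. Then $\CD_0$ is exactly the $\no$ case and $\CD_T$ is exactly the $\yes$ case. A protocol $\Prot$ for $\seqbpd_{\alpha,T}(n)$ with advantage $\delta$ satisfies $|\E[\Prot(\CD_T)] - \E[\Prot(\CD_0)]| \geq \delta$, so by the triangle inequality there is some index $i \in [T]$ with $|\E[\Prot(\CD_i)] - \E[\Prot(\CD_{i-1})]| \geq \delta/T$; that is, $\Prot$ distinguishes $\CD_i$ from $\CD_{i-1}$ with advantage at least $\delta/T$. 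The only difference between $\CD_i$ and $\CD_{i-1}$ is whether $\Bob_i$'s label $\vecz(i)$ is planted or uniform, with everything else (the shared $\vecx^*$, all other matchings and labels) distributed identically.

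Next I would convert this into a $\bpd_\alpha(n)$ protocol by having the two $\bpd$ players simulate the $\seqbpd$ protocol with $\Bob_i$'s role filled in by the real $\bpd$ input. Concretely: $\bpd$-$\Alice$ gets $\vecx^*$; $\bpd$-$\Bob$ gets $(M_i, \vecz)$. Using shared randomness (or by a standard averaging/fixing argument over the randomness, justified by \cref{prop:yao}-style reasoning, to keep the protocol deterministic), $\bpd$-$\Alice$ and $\bpd$-$\Bob$ together sample the other $T-1$ matchings $M_t$ and the labels $\vecz(t)$ according to the correct conditional distribution: for $t < i$, set $\vecz(t) = M_t^{\fold}\vecx^*$ (Alice can compute this, since she holds $\vecx^*$), and for $t > i$, set $\vecz(t)$ uniform (either player can generate this). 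Then $\bpd$-$\Alice$ runs $\seqbpd$-$\Alice$'s message plus $\Bob_1, \ldots, \Bob_{i-1}$'s messages locally (she has all their inputs), producing the message that would be sent to $\Bob_i$; she forwards this to $\bpd$-$\Bob$ along with whatever she needs. Then $\bpd$-$\Bob$, using $(M_i, \vecz)$ as $\Bob_i$'s input, runs $\Bob_i$'s step and then simulates $\Bob_{i+1}, \ldots, \Bob_T$ locally (he can generate their inputs, which are just uniform labels on fresh matchings), finally outputting $\Bob_T$'s decision. When $(M_i, \vecz)$ is a $\yes$ instance of $\bpd$, the simulated $\seqbpd$ input is distributed as $\CD_i$; when it is a $\no$ instance, it is distributed as $\CD_{i-1}$. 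Hence the resulting $\bpd$ protocol has advantage at least $\delta/T$.

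The communication cost requires a small check: $\bpd$ is a two-player problem, so all of $\seqbpd$-$\Alice$'s and $\Bob_1, \ldots, \Bob_{i-1}$'s work is folded into $\bpd$-$\Alice$, and $\Bob_i, \ldots, \Bob_T$'s work into $\bpd$-$\Bob$, so only \emph{one} message actually crosses the wire — the message from $\seqbpd$-$\Bob_{i-1}$ to $\seqbpd$-$\Bob_i$ (or from $\Alice$ to $\Bob_1$ if $i=1$) — which has size at most $s$. So the $\bpd$ protocol uses $s$ communication, as claimed. The main obstacle, and the one point that needs care, is the \emph{derandomization}: the players need to agree on the fresh matchings $M_t$ ($t \neq i$) and the uniform labels without communication. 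The clean way is to invoke \cref{prop:yao} (or simply allow public randomness and note it doesn't affect the advantage bound or the communication) so that we may fix the shared randomness to the best value and still retain advantage $\geq \delta/T$ in a deterministic protocol; one must verify that the relevant conditional distributions (which matchings/labels each player generates) can indeed be reproduced from a shared random string, which they can since each is either uniform or a deterministic function of data one player holds. I expect this bookkeeping to be the only genuinely delicate part; the hybrid telescoping and the value-gap-free nature of the reduction are routine.
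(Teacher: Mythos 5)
Your proof is correct, and it takes a genuinely different route from the paper's. The paper defines random variables $S^{\y}_t$, $S^{\n}_t$ (the state after $\Bob_t$ when all inputs are $\yes$ resp.\ $\no$), telescopes on the total-variation distances $\|S^{\y}_t - S^{\n}_t\|_{\tv}$, then introduces a hybrid state $\tilde S$ ($\yes$ for $\Bob_1,\ldots,\Bob_{t-1}$, $\no$ for $\Bob_t$) and invokes the data processing inequality to show $\|\tilde S - S^{\n}_t\|_{\tv} \le \|S^{\y}_{t-1} - S^{\n}_{t-1}\|_{\tv}$, which bridges the telescoped gap to the quantity $\|S^{\y}_t - \tilde S\|_{\tv}$ that the $\bpd$ reduction actually achieves (via a maximum-likelihood test right after $\Bob_t$'s step). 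You instead define the hybrids $\CD_0, \ldots, \CD_T$ directly on the \emph{input} distribution, telescope on the protocol's output probability $\E[\Prot(\CD_i)]$, and have $\bpd$-$\Bob$ simply continue simulating $\Bob_{i+1},\ldots,\Bob_T$ and output $\Bob_T$'s decision. Because the advantage of your reduced $\bpd$ protocol \emph{is} the telescoped quantity $|\E[\Prot(\CD_i)] - \E[\Prot(\CD_{i-1})]|$, you never need the data-processing inequality. This is arguably more elementary. Note that both approaches rely on the same structural fact --- that the $\no$ inputs to the suffix $\Bob_{i+1},\ldots,\Bob_T$ are uniform, hence samplable without $\vecx^*$ --- which is precisely the constraint the paper flags in \cref{sec:mcut-hybrid-discussion}; so the two arguments would generalize or fail together. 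One small overcomplication in your writeup: there is no need for shared randomness or agreement on the fresh matchings. $\bpd$-$\Alice$ privately samples $M_1,\ldots,M_{i-1}$ (and computes their $\yes$ labels from $\vecx^*$), while $\bpd$-$\Bob$ privately samples $M_{i+1},\ldots,M_T$ and their uniform labels; these samples appear on disjoint sides of the single message, so the derandomization is a routine averaging over private coins, exactly as in the paper's version of the reduction.
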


We prove \cref{lemma:bpd-to-seqbpd} in \cref{sec:bpd-to-seqbpd} below using the \emph{hybrid argument} of Kapralov, Khanna, and Sudan~\cite{KKS15}. We also have:

\begin{construction}[C2S reduction from $\seqbpd$ to $\mcut$]\label{cons:seqbpd-to-mcut}
$\Alice$'s reduction function, denoted $\R_0$, outputs no constraints. For each $t \in [T]$, $\Bob_t$'s reduction function $\R_t$ outputs an instance $\Psi_t$ as follows: For each $\vece(t,\ell) = (u,v)$ in $M_t$, $\Bob_t$ adds $\vece(t,\ell)$ to $\Psi_t$ iff $z(t)_\ell = 1$.
\end{construction}

The hard instances for $\mcut$ produced by \cref{cons:seqbpd-to-mcut} are represented pictorially in \cref{fig:mcut}.

\begin{figure}
\centering
\begin{subfigure}{0.4\textwidth}
\centering
\begin{tikzpicture}[vertex/.style={fill=black},block/.style={draw=black,fill=white!70!lightgray}, goodedge/.style={line width=1.5pt,draw=black!40!green}, badedge/.style={line width=1.5pt,draw=black!10!red}]

\draw[block] (0,3) ellipse (0.75 and 3.5);
\draw[vertex] (0,0) circle (3pt);
\draw[vertex] (0,1) circle (3pt);
\draw[vertex] (0,2) circle (3pt);
\draw[vertex] (0,3) circle (3pt);
\draw[vertex] (0,4) circle (3pt);
\draw[vertex] (0,5) circle (3pt);
\draw[vertex] (0,6) circle (3pt);

\draw[block] (3,3) ellipse (0.75 and 3.5);
\draw[vertex] (3,0) circle (3pt);
\draw[vertex] (3,1) circle (3pt);
\draw[vertex] (3,2) circle (3pt);
\draw[vertex] (3,3) circle (3pt);
\draw[vertex] (3,4) circle (3pt);
\draw[vertex] (3,5) circle (3pt);
\draw[vertex] (3,6) circle (3pt);

\draw[goodedge] (0,0) to (3,4);
\draw[goodedge] (0,3) to (3,6);
\draw[goodedge] (0,5) to (3,6);
\draw[goodedge] (0,1) to (3,5);
\draw[goodedge] (0,4) to (3,4);
\draw[goodedge] (0,2) to (3,1);
\draw[goodedge] (0,5) to (3,1);

\draw[badedge] (0,0) to[bend left] (0,2);
\draw[badedge] (0,2) to[bend left] (0,6);
\draw[badedge] (0,1) to[bend left] (0,4);
\draw[badedge] (0,3) to[bend left] (0,5);
\draw[badedge] (3,2) to[bend right] (3,5);
\draw[badedge] (3,1) to[bend right] (3,6);
\draw[badedge] (3,0) to[bend right] (3,3);
\end{tikzpicture}
\caption{$\yes$ sample from $\seqbpd$.}
\label{fig:seqbpd-yes}
\end{subfigure}
\begin{subfigure}{0.4\textwidth}
\centering
\begin{tikzpicture}[vertex/.style={fill=black},block/.style={draw=black,fill=white!70!lightgray}, goodedge/.style={line width=1.5pt,draw=black!40!green}, badedge/.style={line width=1.5pt,draw=black!10!red}]

\draw[block] (0,3) ellipse (0.75 and 3.5);
\draw[vertex] (0,0) circle (3pt);
\draw[vertex] (0,1) circle (3pt);
\draw[vertex] (0,2) circle (3pt);
\draw[vertex] (0,3) circle (3pt);
\draw[vertex] (0,4) circle (3pt);
\draw[vertex] (0,5) circle (3pt);
\draw[vertex] (0,6) circle (3pt);

\draw[block] (3,3) ellipse (0.75 and 3.5);
\draw[vertex] (3,0) circle (3pt);
\draw[vertex] (3,1) circle (3pt);
\draw[vertex] (3,2) circle (3pt);
\draw[vertex] (3,3) circle (3pt);
\draw[vertex] (3,4) circle (3pt);
\draw[vertex] (3,5) circle (3pt);
\draw[vertex] (3,6) circle (3pt);

\draw[goodedge] (0,0) to (3,4);
\draw[badedge] (0,3) to (3,6);
\draw[badedge] (0,5) to (3,6);
\draw[badedge] (0,1) to (3,5);
\draw[goodedge] (0,4) to (3,4);
\draw[badedge] (0,2) to (3,1);
\draw[badedge] (0,5) to (3,1);

\draw[goodedge] (0,0) to[bend left] (0,2);
\draw[goodedge] (0,2) to[bend left] (0,6);
\draw[badedge] (0,1) to[bend left] (0,4);
\draw[goodedge] (0,3) to[bend left] (0,5);
\draw[goodedge] (3,2) to[bend right] (3,5);
\draw[goodedge] (3,1) to[bend right] (3,6);
\draw[badedge] (3,0) to[bend right] (3,3);
\end{tikzpicture}
\caption{$\no$ sample from $\seqbpd$.}
\label{fig:seqbpd-no}
\end{subfigure}

\begin{subfigure}{0.4\textwidth}
\centering
\begin{tikzpicture}[vertex/.style={fill=black},block/.style={draw=black,fill=white!70!lightgray}, goodedge/.style={line width=1.5pt,draw=black!40!green}, badedge/.style={line width=1.5pt,draw=black!10!red}]

\draw[block] (0,3) ellipse (0.75 and 3.5);
\draw[vertex] (0,0) circle (3pt);
\draw[vertex] (0,1) circle (3pt);
\draw[vertex] (0,2) circle (3pt);
\draw[vertex] (0,3) circle (3pt);
\draw[vertex] (0,4) circle (3pt);
\draw[vertex] (0,5) circle (3pt);
\draw[vertex] (0,6) circle (3pt);

\draw[block] (3,3) ellipse (0.75 and 3.5);
\draw[vertex] (3,0) circle (3pt);
\draw[vertex] (3,1) circle (3pt);
\draw[vertex] (3,2) circle (3pt);
\draw[vertex] (3,3) circle (3pt);
\draw[vertex] (3,4) circle (3pt);
\draw[vertex] (3,5) circle (3pt);
\draw[vertex] (3,6) circle (3pt);

\draw[goodedge] (0,0) to (3,4);
\draw[goodedge] (0,3) to (3,6);
\draw[goodedge] (0,5) to (3,6);
\draw[goodedge] (0,1) to (3,5);
\draw[goodedge] (0,4) to (3,4);
\draw[goodedge] (0,2) to (3,1);
\draw[goodedge] (0,5) to (3,1);
\end{tikzpicture}
\caption{$\yes$ instance of $\mcut$.}
\label{fig:mcut-yes}
\end{subfigure}
\begin{subfigure}{0.4\textwidth}
\centering
\begin{tikzpicture}[vertex/.style={fill=black},block/.style={draw=black,fill=white!70!lightgray}, goodedge/.style={line width=1.5pt,draw=black!40!green}, badedge/.style={line width=1.5pt,draw=black!10!red}]

\draw[block] (0,3) ellipse (0.75 and 3.5);
\draw[vertex] (0,0) circle (3pt);
\draw[vertex] (0,1) circle (3pt);
\draw[vertex] (0,2) circle (3pt);
\draw[vertex] (0,3) circle (3pt);
\draw[vertex] (0,4) circle (3pt);
\draw[vertex] (0,5) circle (3pt);
\draw[vertex] (0,6) circle (3pt);

\draw[block] (3,3) ellipse (0.75 and 3.5);
\draw[vertex] (3,0) circle (3pt);
\draw[vertex] (3,1) circle (3pt);
\draw[vertex] (3,2) circle (3pt);
\draw[vertex] (3,3) circle (3pt);
\draw[vertex] (3,4) circle (3pt);
\draw[vertex] (3,5) circle (3pt);
\draw[vertex] (3,6) circle (3pt);

\draw[goodedge] (0,0) to (3,4);
\draw[goodedge] (0,4) to (3,4);
\draw[goodedge] (0,0) to[bend left] (0,2);
\draw[goodedge] (0,2) to[bend left] (0,6);
\draw[goodedge] (0,3) to[bend left] (0,5);
\draw[goodedge] (3,2) to[bend right] (3,5);
\draw[goodedge] (3,1) to[bend right] (3,6);
\end{tikzpicture}
\caption{$\no$ instance of $\mcut$.}
\label{fig:mcut-no}
\end{subfigure}

\caption[]{Example hard instances for $\mcut$. \cref{fig:seqbpd-yes,fig:seqbpd-no} depict samples from the $\yes$ and $\no$ distributions of $\seqbpd$, respectively. Recall, in $\seqbpd$, $\Alice$ receives a hidden partition $\vecx^* \in \BZ_2^n$ and each $\Bob_t$ receive a matching $M_t$ along with a vector $\vecz(t)$ annotating $M_t$'s edges. In the $\yes$ case, $\vecz(t)$ marks the edges of $M_t$ which cross the partition $\vecx^*$, while in the $\no$ case, $\vecz(t)$ is uniformly random. The graphs in \cref{fig:seqbpd-yes,fig:seqbpd-no} represent the union of the matchings $M_1,\ldots,M_t$; $\vecx^*$ partitions the vertices into ``left'' ($0$) and ``right'' ($1$); and the edges' $z$-values are either ``green'' ($1$) or ``red'' ($0$). In our reduction from $\seqbpd$ to $\mcut$ (\cref{cons:seqbpd-to-mcut}), $\Alice$ adds no edges, and each $\Bob_t$ adds all edges with $z$-value ``green'' ($1$). In the $\yes$ case, the resulting graph is bipartite (\cref{fig:mcut-yes}) and thus has $\mcut$ value $1$, while in the $\no$ case, the graph is random (\cref{fig:mcut-no}) and has value $\approx \frac12$ with high probability (for sufficiently large $T$).
}
\label{fig:mcut}
\end{figure}

\begin{lemma}\label{lemma:seqbpd-to-mcut-analysis}
For all $\alpha \in (0,1)$ and $\epsilon \in (0,\frac12)$, there exist $T, n_0 \in \BN$ such that for every $n \geq n_0$, the following holds. Let $\CY$ and $\CN$ denote the $\yes$ and $\no$ distributions for $\seqbpd_{\alpha,T}(n)$, and let $(\R_0,\ldots,\R_T)$ be the reduction functions from \cref{cons:seqbpd-to-mcut}. Then \[ \Pr_{\Psi \sim (\R_0,\ldots,\R_T) \circ \CY}\left[\val_\Psi =1\right]=1 \text{ and } \Pr_{\Psi \sim (\R_0,\ldots,\R_T) \circ \CN}\left[\val_\Psi \geq \frac12+\epsilon\right]\leq \exp(-n). \]
\end{lemma}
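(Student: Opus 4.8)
The $\yes$ case is essentially immediate. By \cref{cons:seqbpd-to-mcut}, in the $\yes$ case $\Bob_t$ places the edge $\vece(t,\ell)=(u,v)$ of $M_t$ into $\Psi$ exactly when $z(t)_\ell=(M_t^{\fold}\vecx^*)_\ell=x^*_u+x^*_v$ equals $1$, i.e.\ exactly when $\cut(x^*_u,x^*_v)=1$. Hence the assignment $\vecx^*$ satisfies \emph{every} constraint of $\Psi$, so $\val_\Psi=1$ (with the usual convention that the empty instance has value $1$, which is the only borderline case, occurring when $\vecx^*$ is constant). For the $\no$ case, I would fix an assignment $\vecy\in\BZ_2^n$, prove $\Pr[\val_\Psi(\vecy)\geq\tfrac12+\epsilon]\leq \exp(-c\epsilon^2\alpha nT)$ for an absolute constant $c>0$, and then take a union bound over the $2^n$ choices of $\vecy$; choosing $T=T(\alpha,\epsilon)$ large enough that $c\epsilon^2\alpha T\geq\ln 2+1$ makes $2^n\exp(-c\epsilon^2\alpha nT)\leq\exp(-n)$, as desired.

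To get the per-assignment bound, enumerate the $\alpha nT$ ``candidate edges'' $\vece(t,\ell)$ in the order $t=1,\dots,T$ and $\ell=1,\dots,\alpha n$ (i.e.\ matching by matching), and for each set $Z_{t,\ell}\eqdef z(t)_\ell\in\{0,1\}$ (the indicator that this edge is placed in $\Psi$) and $Y_{t,\ell}\eqdef Z_{t,\ell}\cdot\1_{y_u\neq y_v}$ where $\vece(t,\ell)=(u,v)$ (the indicator that it becomes a constraint of $\Psi$ which $\vecy$ satisfies). Then $m\eqdef\sum_{t,\ell}Z_{t,\ell}$ is the number of constraints and $\val_\Psi(\vecy)=\big(\sum_{t,\ell}Y_{t,\ell}\big)/m$. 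The key observation is that, conditioned on the history preceding step $(t,\ell)$: the coin $z(t)_\ell$ is fresh and independent of everything, so $\E[Z_{t,\ell}\mid\mathrm{hist}]=\tfrac12$; and $\vece(t,\ell)$ is a uniformly random pair among the vertices of $M_t$ not yet used, whose endpoints lie on opposite sides of $\vecy$ with probability at most $\tfrac12+o_n(1)$ uniformly over $\vecy$ (the crossing probability of a random pair from a set of size $r$ is $\leq \tfrac{r}{2(r-1)}$, and within each matching $r=\Theta(n)$ except for at most $O(1)$ final edges, which are immaterial). Since the pair and the coin are conditionally independent, $\E[Y_{t,\ell}\mid\mathrm{hist}]\leq\tfrac14+o_n(1)$. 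Now apply \cref{lemma:azuma} twice. Applied to $\{Y_{t,\ell}\}$ with $p=\tfrac14+\tfrac{\epsilon}{10}$ (valid once $n\geq n_0(\alpha,\epsilon)$), it gives $\Pr[\sum_{t,\ell}Y_{t,\ell}\geq(\tfrac14+\tfrac{\epsilon}{5})\alpha nT]\leq\exp(-\Omega(\epsilon^2\alpha nT))$. Applied to $\{1-Z_{t,\ell}\}$, which are i.i.d.\ fair coins (and the lower tail of $\mathrm{Bin}(N,\tfrac12)$ is its upper tail), with $p=\tfrac12$, it gives $\Pr[m\leq(\tfrac12-\tfrac{\epsilon}{10})\alpha nT]\leq\exp(-\Omega(\epsilon^2\alpha nT))$. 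Off both bad events, $\val_\Psi(\vecy)\leq(\tfrac14+\tfrac{\epsilon}{5})/(\tfrac12-\tfrac{\epsilon}{10})\leq\tfrac12+\epsilon$ for all $\epsilon\in(0,\tfrac12)$ by a one-line computation. This establishes the per-assignment bound; the union bound over $\vecy$ and the choice of $T$ above then finish the proof.

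The one place that requires care is the interaction of the \emph{two} independent sources of randomness in the $\no$ case: a large cut for $\vecy$ could arise either because $\vecy$ happens to cross an unusually large fraction of the (random) matching edges, or because the (independent) coin flips $\vecz(t)$ happen to select disproportionately many of the crossing edges. Controlling $\E[Y_{t,\ell}\mid\mathrm{hist}]\leq\tfrac14+o_n(1)$ directly handles both at once, and the fact that the edges \emph{within} a single matching are not independent is exactly what forces the use of the submartingale tail bound \cref{lemma:azuma} rather than a plain Chernoff bound. Everything else — the balanced-cut estimate $\Pr[\text{random pair crosses}]\leq\tfrac12+o_n(1)$, the arithmetic combining the two tail events, and the final choice of $T$ and $n_0$ as functions of $\alpha$ and $\epsilon$ — is routine.
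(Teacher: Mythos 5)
Your proof is correct and follows the same high-level route as the paper's sketch: the $\yes$ case is a deterministic observation about $\vecx^*$, and the $\no$ case is a per-assignment concentration bound plus a union bound over $2^n$ assignments, with $T$ chosen large enough to dominate the union-bound factor. The one place the two arguments diverge is in how the two sources of randomness — the random constraint count and the within-matching dependence — are disentangled. The paper first conditions on the constraint counts $m(\Psi_t)$ (so that each $\Psi_t$ becomes a random matching of fixed size), applies \cref{lemma:azuma} to the resulting satisfaction indicators, and separately uses Chernoff to discard the event that the total count is too small. You instead avoid conditioning altogether: you run the submartingale bound over all $\alpha n T$ candidate edges twice, once for the number of satisfied constraints ($Y_{t,\ell}$) and once for the total constraint count ($Z_{t,\ell}$), then combine the two tails. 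Both decompositions are valid and yield the same exponent; yours is arguably a bit cleaner since it treats both quantities uniformly. A small bonus: you correctly note that the conditional crossing probability is only $\tfrac12 + O(1/r)$ rather than exactly $\leq\tfrac12$ (the paper's sketch elides this), so your $p=\tfrac14+\tfrac{\epsilon}{10}$ and the requirement $n\geq n_0(\alpha,\epsilon)$ are the honest way to invoke \cref{lemma:azuma}.
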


Note that \cref{lemma:seqbpd-to-mcut-analysis} may force us to make $T$ very large; yet it is constant, in which case \cref{lemma:bpd-to-seqbpd} gives a small-but-constant advantage for $\bpd$, and fortunately, \cref{thm:bpd-hardness} rules out \emph{every} constant advantage for $\bpd$.

To conclude this section, we give proofs for \cref{thm:mcut-hardness} and \cref{lemma:seqbpd-to-mcut-analysis}.

\begin{proof}[Proof of \cref{thm:mcut-hardness}]
Consider any $\epsilon > 0$, let $\tau > 0$ be determined later, and let $\Alg$ be a randomized space-$s(n)$ streaming algorithm which $(\frac12+\epsilon)$-approximates $\mcut$. By \cref{lemma:seqbpd-to-mcut-analysis}, we can pick sufficiently large $T,n_0 \in \BN$ such that if we fix any $n \geq n_0$, we have \[ \Pr_{\Psi \sim (\R_0,\ldots,\R_T) \circ \CY}\left[\val_\Psi =1\right] =1 \text{ and } \Pr_{\Psi \sim (\R_0,\ldots,\R_T) \circ \CN}\left[\val_\Psi \geq \frac12+\frac{\epsilon}2\right]\leq \exp(-n) \] where $\CY,\CN$ are the $\yes$ and $\no$ distributions for $\seqbpd_{\alpha,T}(n)$ and $(\R_0,\ldots,\R_T)$ are as in \cref{cons:seqbpd-to-mcut}. Since $\Alg$ solves the $(1,1-\epsilon/2)\text{-}\mcut$ problem, by \cref{prop:yao}, there is a \emph{deterministic} space-$s(n)$ streaming algorithm which distinguishes $(\R_0,\ldots,\R_T) \circ \CY$ and $(\R_0,\ldots,\R_T) \circ \CN$ with advantage at least $\frac16$. By \cref{lemma:comm-to-strm}, there is a deterministic space-$s(n)$ communication protocol for $\seqbpd_{\alpha,T}(n)$ with advantage at least $\frac16$. By \cref{lemma:bpd-to-seqbpd}, there is a deterministic space-$s(n)$ communication protocol for $\bpd_\alpha(n)$ with advantage at least $\frac1{6T}$. Finally, by \cref{thm:bpd-hardness}, there is some $\tau > 0$ and $n_0' \in \BN$ such that further assuming $n \geq n_0'$, we can conclude $s(n) \geq \tau \sqrt{n}$, as desired.
\end{proof}

\begin{proof}[Proof sketch of \cref{lemma:seqbpd-to-mcut-analysis}]
Let $\Psi = \Psi_1\cup \cdots \cup \Psi_T$ be the instance created by the reduction. In the $\yes$ case, regardless of $T$, we always have $\val_\Psi(\vecx^*) = 1$, since every constraint $(u,v)$ in $\Psi$ is chosen such that $x^*_u + x^*_v = 1$.

For the $\no$ case, it is sufficient to show that for every fixed assignment $\vecx \in \BZ_2^n$,
\begin{equation}\label{eq:mcut-no-ub}
    \Pr\left[\val_\Psi(\vecx) \geq \frac12+\epsilon\right] \leq \exp(-\epsilon^2\alpha T n),
\end{equation}
since then we can take a union bound over $\vecx$ and set $T$ sufficiently large. In the ``nicer'' model where $\Psi$ has $\alpha T n$ constraints chosen uniformly at random, \cref{eq:mcut-no-ub} would follow immediately from the Chernoff bound, since $\vecx$ would satisfy each of the $\alpha T n$ constraints independently w.p. $\frac12$. Unfortunately, there are two issues:

\begin{enumerate}
    \item Since $\vecz(t)$ is uniformly random, $\Bob_t$ adds each edge in $M_t$ as a constraint in $\Psi_t$ only w.p. $\frac12$ (independently). Thus, the number of constraints in each sub-instance $\Psi_t$ is distributed binomially. In particular, \emph{the number of constraints in $\Psi$ is not constant.} 
    \item Each $M_t$ is a random \emph{matching}, so its edges are not independent. Thus, \emph{$\Psi$'s constraints are not independent}, although $\Psi_t$ and $\Psi_{t'}$ have independent constraints if $t \neq t'$.
\end{enumerate}

Issue (1) can be addressed by treating the number of constraints in each $\Psi_t$ as a random variable and conditioning. To be precise, we define $\beta_t \eqdef \frac{m(\Psi_t)}{n}$ for each $t \in [T]$ and $\beta \eqdef \frac{m(\Psi)}{Tn} = \frac1T\sum_{t=1}^T \beta_t$, we have $\E[\beta_1] = \cdots = \E[\beta_T] = \E[\beta] = \frac{\alpha}2$ and condition on fixed values $\beta_1,\ldots,\beta_T$. We can then treat the constraints of each $\Psi_t$ as the edges of a random matching drawn from $\Matchings_{\beta_t}(n)$.

Now, suppose we define random variables $\{X_{t,\ell}\}_{t \in [T],\ell\in[\beta_t n]}$, each of which is the indicator for the event that $\vecx$ satisfies the $\ell$-th constraint of $\Psi_t$. We have $\val_\Psi(\vecx) = \frac1{\beta T n} \sum_{t=1}^T \sum_{\ell=1}^{\beta_t n} X_{t,\ell}$. Because of Issue (2), we can't use the Chernoff bound on $\val_\Psi(\vecx)$, but we can use \cref{lemma:azuma}. For $t \neq t'$, $X_{t,\ell}$ and $X_{t',\ell'}$ will be independent, and even though $X_{t,\ell}$ is not independent of $X_{t,1},\ldots,X_{t,\ell-1}$, we have $\E[X_{t,\ell} \mid X_{t,1},\ldots,X_{t,\ell-1}] \leq \frac12$. Indeed, the $\ell$-th constraint in $\Psi_t$ is sampled uniformly from the set of constraints which do not share variables with the first $\ell-1$ constraints, and at most half of these are satisfied by $\vecx$.

(There's one other small issue: The probability bound \cref{lemma:azuma} gives us will be exponentially small in $\beta T n$, not $\alpha T n$. But by the Chernoff bound, we can assume WLOG, say, $\beta \geq \frac{\alpha}4$. This contributes an additional union bound term which is exponentially small in $\alpha n$.)
\end{proof}

\newcommand{\GOOD}{\textsf{GOOD}}

\section{$\bpd$ is hard: Proving \cref{thm:bpd-hardness}}\label{sec:bpd-hardness}

The goal of this section is to prove \cref{thm:bpd-hardness}, due to Gavinsky, Kempe, Kerenidis, Raz, and de Wolf~\cite{GKK+08}, which states that $\bpd$ requires significant communication.

Let $\CU \eqdef \Unif_{\BZ_2^{\alpha n}}$. To begin, suppose that $\Alice$, using a deterministic protocol, sends some fixed message $a \in \{0,1\}^s$ to $\Bob$, and let $A \subseteq \BZ_2^n$ be the set of $\vecx^*$'s consistent with this message. For each matching $M \in \{0,1\}^{2\alpha n \times n}$, we consider the conditional distribution of $\Bob$'s second input $\vecz \in \BZ_2^{\alpha n}$ in the $\yes$ case: \[ \CZ_{A,M}(\vecz) \eqdef \Pr_{\vecx^* \sim \Unif_A}[\vecz = (M^{\fold})\vecx^*]. \] We prove \cref{thm:bpd-hardness} by showing that if $A$ is sufficiently large (which will be the case w.h.p. when the communication $s$ is sufficiently small), then w.h.p. over $M$, the distribution $\CZ_{A,M}$ is statistically close to $\CU$, and so $\Bob$ cannot distinguish the $\yes$ and $\no$ cases. To achieve this, we rely crucially on the following ``reduction'':

\begin{lemma}[Fourier-analytic reduction]\label{lemma:bpd-fourier-reduce}
Let $A \subseteq \BZ_2^n$ and $\1_A : \BZ_2^n \to \{0,1\}$ be the indicator for $A$, and let $\alpha \in (0,1)$. Then \[ \E_{M\sim\Matchings_\alpha(n)}[\|\CZ_{A,M}-\CU\|_{\tv}^2] \leq \frac{2^{2n}}{|A|^2} \sum_{\ell \geq 2}^{2 \alpha n} h_\alpha(\ell,n) \W^{\ell}[\1_A] \] where for $\ell \in [n]$, \[ h_\alpha(\ell,n) \eqdef \max_{\vecv \in\BZ_2^n,  \|\vecv\|_0=\ell} \left(\Pr_{M \sim \Matchings_\alpha(n)} \left[\exists \vecs \neq\veczero \in \BZ_2^{\alpha n} \text{ s.t. } (M^{\fold})^\top \vecs = \vecv \right]\right). \]
\end{lemma}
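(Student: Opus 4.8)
The plan is to translate everything into Fourier analysis over $\BZ_2$ and reduce the statistical distance $\|\CZ_{A,M}-\CU\|_{\tv}$ to the Fourier weights $\W^\ell[\1_A]$ of the fixed indicator, postponing the average over the random matching to the very end. Throughout, $A\subseteq\BZ_2^n$ is fixed, and for each matching $M$ the distribution $\CZ_{A,M}$ lives on $\BZ_2^{\alpha n}$ (it is the law of the edge-labeling $(M^{\fold})\vecx^*$ when $\vecx^*\sim\Unif_A$). First I would apply the \textsc{xor} lemma (\cref{lemma:xor}) with $q=2$ in dimension $\alpha n$ to get $\|\CZ_{A,M}-\CU\|_{\tv}^2\le 2^{2\alpha n}\sum_{\vecs\neq\veczero\in\BZ_2^{\alpha n}}|\hat{\CZ}_{A,M}(\vecs)|^2$, so it remains to control the Fourier coefficients of $\CZ_{A,M}$.

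Next I would compute those coefficients explicitly. Expanding the transform and using the $\BZ_2$-adjointness $\vecs\cdot((M^{\fold})\vecx^*)=((M^{\fold})^\top\vecs)\cdot\vecx^*$, one gets for every $\vecs$ that $\hat{\CZ}_{A,M}(\vecs)=\E_{\vecx^*\sim\Unif_A}[(-1)^{((M^{\fold})^\top\vecs)\cdot\vecx^*}]=\frac1{|A|}\hat{\1_A}((M^{\fold})^\top\vecs)$. Substituting this and bounding $2^{2\alpha n}\le 2^{2n}$ gives, for each fixed $M$, the bound $\|\CZ_{A,M}-\CU\|_{\tv}^2\le\frac{2^{2n}}{|A|^2}\sum_{\vecs\neq\veczero}\bigl|\hat{\1_A}((M^{\fold})^\top\vecs)\bigr|^2$.

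The crux is then to reindex and average. Because $M$ is a matching, the rows of $M^{\fold}$ are $0/1$ vectors of weight $2$ with pairwise-disjoint supports, so $(M^{\fold})^\top$ is injective and its image consists exactly of vectors of even Hamming weight at most $2\alpha n$. Injectivity lets me rewrite the sum over edge-labelings $\vecs\in\BZ_2^{\alpha n}$ as a sum over vertex-labelings $\vecv\in\BZ_2^n$:
\[ \sum_{\vecs\neq\veczero}\bigl|\hat{\1_A}((M^{\fold})^\top\vecs)\bigr|^2 \;=\; \sum_{\vecv\neq\veczero\in\BZ_2^n}\bigl|\hat{\1_A}(\vecv)\bigr|^2\,\1_{\vecv\in\mathrm{Im}((M^{\fold})^\top)} \]
(the excluded $\vecs=\veczero$ matches the excluded $\vecv=\veczero$). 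Now I take $\E_M$, pull it inside by linearity and nonnegativity, group $\vecv$ by $\ell=\|\vecv\|_0$, and bound each $\Pr_M[\vecv\in\mathrm{Im}((M^{\fold})^\top)]=\Pr_M[\exists\,\vecs\neq\veczero:(M^{\fold})^\top\vecs=\vecv]$ by $h_\alpha(\ell,n)$; recognizing $\sum_{\|\vecv\|_0=\ell}|\hat{\1_A}(\vecv)|^2=\W^\ell[\1_A]$ yields $\E_M[\|\CZ_{A,M}-\CU\|_{\tv}^2]\le\frac{2^{2n}}{|A|^2}\sum_{\ell=1}^n h_\alpha(\ell,n)\W^\ell[\1_A]$. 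Finally, since the image of $(M^{\fold})^\top$ contains only even weights $\le 2\alpha n$, $h_\alpha(\ell,n)=0$ for all other $\ell$ --- in particular for $\ell=1$ --- so the sum collapses to $\sum_{\ell\ge 2}^{2\alpha n}$, which is exactly the claim.

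The only step I expect to require care is the reindexing in the third paragraph: it is what cleanly separates the \emph{random} matching $M$ from the \emph{fixed} set $A$, and it relies on $(M^{\fold})^\top$ having trivial kernel, which is precisely where the matching structure is used. If injectivity failed, a nonzero $\vecs$ could map to $\veczero$ and contribute the catastrophic term $|\hat{\1_A}(\veczero)|^2=|A|^2$, destroying the bound. The remaining points --- matching the restriction $\vecs\neq\veczero$ in the definition of $h_\alpha$ with the restriction to $\vecv\neq\veczero$, absorbing $2^{2\alpha n}$ into $2^{2n}$, and pinning down the summation range via the even-weight observation --- are routine bookkeeping.
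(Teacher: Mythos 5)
Your proof is correct and follows essentially the same route as the paper's: apply the \textsc{xor} lemma, compute $\hat{\CZ}_{A,M}(\vecs)$ via the adjointness identity to express it as a scaled Fourier coefficient of $\1_A$, exploit the injectivity of $(M^\fold)^\top$ (coming from the disjointness of the matching's edges) to reindex the sum over $\vecs$ as a sum over $\vecv$, and only then take the expectation over $M$ and group by Hamming weight. Your explicit note that $h_\alpha(\ell,n)=0$ for $\ell$ odd or $\ell>2\alpha n$, which justifies truncating the sum, is actually slightly more careful than what appears in the paper. One small remark: the ad hoc step ``bounding $2^{2\alpha n}\le 2^{2n}$'' is an artifact of a normalization inconsistency in the paper's Fourier conventions --- the definition in \cref{sec:fourier} omits a $1/q^n$ factor that \cref{prop:parseval} and \cref{lemma:low-fourier-bound} implicitly require; under the normalized convention the paper actually uses, $\hat{\CZ}_{A,M}(\vecs)=\frac{2^n}{2^{\alpha n}|A|}\hat{\1_A}((M^\fold)^\top\vecs)$ and the factor $2^{2\alpha n}$ from \cref{lemma:xor} cancels exactly against the $2^{-2\alpha n}$ in $|\hat{\CZ}_{A,M}|^2$, yielding $\frac{2^{2n}}{|A|^2}$ on the nose with no slack. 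Your derivation still establishes the stated bound, so this is a bookkeeping discrepancy rather than a gap, and your instinct to flag that step as the place requiring care was sound.
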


To interpret the definition of $h_\alpha(\ell,n)$, we can view $\vecs$ as ``marking'' some edges of the matching $M$ with $1$'s; then the vector $(M^{\fold})^\top \vecs$ simply marks which vertices are incident to a marked edge.

To bound the sum from \cref{lemma:bpd-fourier-reduce}, we rely on two separate inequalities in the regimes of small and large $\ell$. In the small-$\ell$ regime, we apply \cref{lemma:low-fourier-bound}, and in the large-$\ell$ regime, we apply the following bound:

\begin{lemma}[Combinatorial bound on $h$]\label{lemma:bpd-combo-bound}
Let $h_\alpha(\ell,n)$ be defined as in \cref{lemma:bpd-fourier-reduce}. For every $\alpha \in (0,1)$, and for every $n, \ell \in \BN$ with even $\ell \leq n/2$, we have \[ h_\alpha(\ell,n) = \frac{\binom{\alpha n}{k/2}}{\binom{n}\ell} \leq \left(\frac{2\alpha e \ell}n\right)^{\ell/2}. \] For odd $\ell$, $h_\alpha(\ell,n)= 0$.
\end{lemma}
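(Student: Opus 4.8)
The plan is first to reinterpret the event inside $h_\alpha(\ell,n)$ combinatorially, then to reduce the probability to a ratio of matching-counts, and finally to bound that ratio with the standard binomial estimates.

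The first step is to observe that since $M$ is a matching, each vertex lies in at most one edge, so for a marking $\vecs \in \BZ_2^{\alpha n}$ of the edges, with $S \eqdef \{\ell : s_\ell = 1\}$, the vector $(M^{\fold})^\top\vecs$ is exactly the indicator of the vertex set $\bigcup_{\ell \in S} \vece(\ell)$ (a disjoint union). Hence $(M^{\fold})^\top\vecs = \vecv$ for some $\vecs \neq \veczero$ if and only if $\supp(\vecv)$ is a nonempty disjoint union of edges of $M$ --- equivalently, the restriction of $M$ to the vertex set $\supp(\vecv)$ is a perfect matching of $\supp(\vecv)$. In particular this forces $\ell = \|\vecv\|_0$ to be even, so $h_\alpha(\ell,n) = 0$ for odd $\ell$; and since $\Matchings_\alpha(n)$ is invariant under relabeling vertices, for even $\ell$ the probability in the definition of $h_\alpha(\ell,n)$ is the same for every $\vecv$ with $\|\vecv\|_0 = \ell$, so the maximum is just the probability $p$ that $M$ perfectly matches one fixed $\ell$-subset $V_0 \subseteq [n]$.

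Next I would compute $p$ by counting matchings. Writing $m \eqdef \alpha n$ and $N(a,b) \eqdef \frac{a!}{(a-2b)!\, 2^b\, b!}$ for the number of matchings with $b$ edges on $a$ labelled vertices, the matchings $M$ that perfectly match $V_0$ are exactly those that decompose as (a perfect matching of $V_0$, using $\ell/2$ of the $m$ edges) together with (a matching of $[n] \setminus V_0$ using the remaining $m - \ell/2$ edges), and these two choices are independent. Thus $p = N(\ell,\ell/2)\, N(n-\ell, m-\ell/2)\,/\,N(n,m)$; substituting the formula for $N$, the powers of $2$ cancel and the factorials telescope, leaving $p = \binom{m}{\ell/2}\big/\binom{n}{\ell} = \binom{\alpha n}{\ell/2}\big/\binom{n}{\ell}$, which is the claimed identity. (When $\ell/2 > \alpha n$ there are no such matchings and the numerator binomial is $0$, so the identity still holds; the constraint $\alpha \le \tfrac12$ implicit in $\Matchings_\alpha(n)$ being nonempty guarantees $2(m - \ell/2) \le n - \ell$, so the second matching genuinely fits.) For the final bound, apply $\binom{a}{b} \le (ea/b)^b$ to the numerator and $\binom{a}{b} \ge (a/b)^b$ to the denominator, giving $\binom{\alpha n}{\ell/2} \le (2e\alpha n/\ell)^{\ell/2}$ and $\binom{n}{\ell} \ge (n/\ell)^\ell$; then $h_\alpha(\ell,n) \le (2e\alpha n/\ell)^{\ell/2}(\ell/n)^\ell = (2\alpha e\ell/n)^{\ell/2}$.

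I expect the only real work to lie in the counting step: recognizing the product structure of the favorable matchings and carrying out the (routine but fiddly) cancellation of double factorials down to the clean binomial ratio. The combinatorial reinterpretation of $(M^{\fold})^\top\vecs = \vecv$ and the closing binomial estimate are both short.
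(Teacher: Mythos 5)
Your proof is correct, but it takes a genuinely different route from the paper's main argument --- in fact, it takes the route the paper describes in a footnote as the original Gavinsky \emph{et al.} argument. Both proofs begin with the same combinatorial reinterpretation (since $M$ is a matching, $(M^{\fold})^\top\vecs = \vecv$ for some $\vecs\neq\veczero$ iff $\supp(\vecv)$ is a disjoint union of edges) and the same appeal to relabeling symmetry. From there you \emph{fix} $\vecv$ to a canonical $\ell$-subset $V_0$ and \emph{randomize over $M$}, writing the probability that $M$ perfectly matches $V_0$ as a ratio $N(\ell,\ell/2)\,N(n-\ell,m-\ell/2)/N(n,m)$ of matching-counts and then cancelling the double factorials down to $\binom{\alpha n}{\ell/2}/\binom{n}{\ell}$. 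The paper's main proof instead \emph{fixes} $M$ to the canonical matching $\{(1,2),(3,4),\ldots,(2\alpha n-1,2\alpha n)\}$ and \emph{randomizes over $\vecv$}: of the $\binom{n}{\ell}$ weight-$\ell$ vectors $\vecv$, exactly $\binom{\alpha n}{\ell/2}$ are unions of edges of $M$ (choose which $\ell/2$ edges), giving the same ratio with essentially no algebra. The two are dual under the relabeling symmetry. What the paper's route buys is that it skips the factorial telescoping entirely; what your route buys is that it reads more directly as ``compute the probability of the event,'' and it makes the degenerate cases (e.g.\ $\ell/2 > \alpha n$, where the numerator vanishes) transparent, which you correctly flag. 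Your closing binomial estimate $\binom{\alpha n}{\ell/2}\big/\binom{n}{\ell}\le (2e\alpha n/\ell)^{\ell/2}(\ell/n)^\ell = (2\alpha e\ell/n)^{\ell/2}$ is identical to the paper's. One small note: the lemma statement in the paper has a typo, $\binom{\alpha n}{k/2}$ for $\binom{\alpha n}{\ell/2}$, which you silently and correctly fix.
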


Before proving \cref{lemma:bpd-fourier-reduce,lemma:bpd-combo-bound}, let us show how they suffice to prove \cref{thm:bpd-hardness}.

\begin{proof}[Proof of \cref{thm:bpd-hardness}]
Suppose $\Alice$ and $\Bob$ use a one-way communication protocol $\Prot$ for $\bpd_\alpha$ which uses at most $s = \tau \sqrt n$ communication and achieves advantage $\delta$, where $\tau$ is a constant to be determined later. From $\Bob$'s perspective, $\Alice$'s message partitions the set of possible $\vecx^*$'s into sets $\{A_i \subseteq \BZ_2^n\}_{i\in[2^s]}$. Conditioned on a fixed set $A \subseteq \BZ_n^2$, $\Prot$ is distinguishing the distributions $\CZ_{A,M}$ and $\CU$ for random $M \sim \Matchings_\alpha(n)$, and thus it achieves advantage at most $\delta_A \eqdef \E_{M\sim\Matchings_\alpha(n)}[\|\CZ_{A,M} - \CU\|_{\tv}]$. Letting $\CA$ denote the distribution which samples each $A_i$ w.p. $|A_i|/2^n$, we have
\begin{equation}\label{eqn:bpd-cond-adv}
\delta \leq \E_{A \sim \CA}[\delta_A].
\end{equation}
Our goal is to find a contradiction to \cref{eqn:bpd-cond-adv} for a sufficiently small choice of $\tau$. We set $\tau = 2\tau'$, where $\tau'>0$ is to be determined later, and let $s' = \tau'\sqrt{n}$.

A ``typical'' $A \sim \CA$ is large, so to contradict \cref{eqn:bpd-cond-adv}, it is sufficient to show that $\delta_A$ is small for large $A$. Indeed, since $s' < s-\log(2/\delta)$ (for sufficiently large $n$), we have $\Pr_{A \sim \CA}[|A| \leq 2^{n-s'}] \leq \frac{\delta}2$, and it therefore suffices to prove that if $|A| \geq 2^{n-s'}$, then $\delta_A \leq \frac{\delta}2$.

Let $A \subseteq \BZ_2^n$ with $|A| \geq 2^{n-s'}$. By Jensen's inequality,
\begin{equation}\label{eqn:bpd-jensen}
    \delta_A \leq \sqrt{\E_{M\sim\Matchings_\alpha(n)}[\|\CZ_{A,M}-\CU\|_{\tv}^2]}.
\end{equation}

Now we apply \cref{lemma:bpd-fourier-reduce}:
\begin{align}
    \E_{M\sim\Matchings_\alpha(n)}[\|\CZ_{A,M}-\CU\|_{\tv}^2] &\leq \frac{2^{2n}}{|A|^2} \sum_{\ell=2}^{2\alpha n} h_\alpha(\ell,n) \W^{\ell}[\1_A]. \nonumber\\
    \intertext{We split the sum at $\ell=4s'$, using \cref{lemma:low-fourier-bound} for the first term and \cref{prop:parseval} for the second:}
    &= \frac{2^{2n}}{|A|^2} \sum_{\ell=2}^{4s'} h_\alpha(\ell,n) \W^\ell[\1_A] + \frac{2^{2n}}{|A|^2} \sum_{\ell=4s'}^{2\alpha n} h_\alpha(\ell,n) \W^\ell[\1_A] \nonumber\\
    &\leq \sum_{\ell=2}^{4s'} h_\alpha(\ell,n) \left(\frac{\zeta s'}{\ell}\right)^\ell + \frac{2^{2n}}{|A|^2} \max_{4s'\leq\ell\leq2\alpha n} h_\alpha(\ell,n). \label{eqn:bpd-split-form} \\
    \intertext{Applying \cref{lemma:bpd-combo-bound} and the inequality $|A| \geq 2^{n-s'}$:}
    &\leq \sum_{\text{even }\ell=2}^{4s'} \left(\frac{2\alpha e (\zeta s')^2}{\ell n}\right)^{\ell/2} +\left(\frac{16\alpha e s'}n\right)^{2s'}. \nonumber \\
    \intertext{Finally, we use the inequalities $\frac{s'}n \leq \frac{(s')^2}n = (\tau')^2$, $\ell \geq 2$, $2s' \geq 1$ and upper-bound with a geometric series:}
    &\leq \sum_{\text{even }\ell=2}^{4s'} \left(\tau' \zeta\sqrt{\alpha e}\right)^{\ell} + 16\alpha e (\tau')^2 \nonumber \\
    &\leq \sum_{\text{even }\ell=2}^{\infty} \left(\tau' \zeta\sqrt{\alpha e}\right)^{\ell} + 16 \alpha e(\tau')^2\nonumber\\
    &= \frac{\alpha e(\tau' \zeta)^2}{1-\alpha e(\tau' \zeta)^2} + 16\alpha e(\tau')^2 \nonumber.
\end{align}

Assuming WLOG $\alpha e(\tau' \zeta)^2 \leq \frac12$, \cref{eqn:bpd-jensen} then gives $\delta_A \leq \tau' \sqrt{4\zeta^2+8}$, yielding $\delta_A \leq \frac{\delta}2$ for a small enough choice of $\tau' = \Theta(\delta)$, as desired.
\end{proof}

\begin{remark}\label{rem:bpd-low-ell-terms}
In \cref{eqn:bpd-split-form}, the ``low-$\ell$ terms'' are qualitatively the most important; they are the site of ``balancing'' between powers of $n$ between the low-level Fourier weight bounds (\cref{lemma:low-fourier-bound}) and the random-graph analysis (\cref{lemma:bpd-combo-bound}). In particular, for $\ell \in \{2,\ldots,4s'\}$ we get terms of the form $h_\alpha(\ell,n) \left(\frac{\zeta s'}{\ell}\right)^\ell$, which are $\left(O\left(\frac{(s')^2}{\ell n}\right)\right)^{\ell /2}$ by \cref{lemma:bpd-combo-bound}. Even for e.g. $\ell=2$, this term is super-constant if $s = \omega(\sqrt n)$.
\end{remark}

We now prove \cref{lemma:bpd-combo-bound,lemma:bpd-fourier-reduce}.

\begin{proof}[Proof of \cref{lemma:bpd-fourier-reduce}]
Let $M \in \{0,1\}^{2\alpha n \times n}$ be a fixed matching. For fixed $\vecs \neq \veczero \in \{0,1\}^{\alpha n}$, we have
\begin{align*}
    \hat{\CZ_{A,M}}(\vecs) &= \frac1{2^{\alpha n}} \sum_{\vecz \in \BZ_2^{\alpha n}} (-1)^{-\vecs \cdot \vecz} \, \CZ_{A,M}(\vecz) \tag{definition of $\hat{\CZ_{A,M}}$} \\
    &= \frac1{2^{\alpha n}} \sum_{\vecz \in \BZ_2^{\alpha n}} (-1)^{\vecs \cdot \vecz} \left(\E_{\vecx^* \sim \Unif_A}\left[\1_{\vecz=(M^{\fold})\vecx^*}\right]\right) \tag{definition of $\CZ_{A,M}$} \\
    &= \frac1{2^{\alpha n}} \E_{\vecx^* \sim \Unif_A} \left[(-1)^{-\vecs \cdot ((M^{\fold})\vecx^*)}\right] \tag{linearity of expectation} \\
    &= \frac1{2^{\alpha n}} \E_{\vecx^* \sim \Unif_A} \left[(-1)^{-((M^\fold)^\top \vecs) \cdot \vecx^*}\right] \tag{adjointness} \\
    &= \frac{2^n}{2^{\alpha n}|A|} \hat{\1_A}((M^\fold)^\top \vecs) \tag{definition of $\hat{\1_A}$}.
\end{align*}
Combining with \cref{lemma:xor}, we get \[ \|\CZ_{A,M}-\CU\|_{\tv}^2 \leq \frac{2^{2n}}{|A|^2} \sum_{\vecs \neq \veczero \in \BZ_2^{\alpha n}} \hat{\1_A}((M^\fold)^\top \vecs). \] Finally, we observe that $(M^{\fold})^\top$ is an injective map, since there is at most a single $1$ in each row (because $M$ is a matching). Hence, taking expectation over $M$, we have \[ \E_M[\|\CZ_{A,M}-\CU\|_{\tv}^2] \leq \frac{2^{2n}}{|A|^2} \sum_{\vecv \neq \veczero \in \BZ_2^n} \hat{\1_A}(\vecv)\left(\Pr_M[\exists \vecs \neq \veczero \in \BZ_2^{\alpha n} \text{ s.t. }(M^\fold)^\top \vecs = \vecv]\right), \] proving the lemma.
\end{proof}

\begin{proof}[Proof of \cref{lemma:bpd-combo-bound}]

Suppose $\vecz \in \BZ_2^n$ and $M$ is an $\alpha$-partial matching on $[n]$. A vector $\vecs \neq \veczero \in \BZ_2^n$ such that $(M^{\fold})^\top \vecs = \vecv$ marks edges of $M$ such that $\vecv$ marks vertices incident to a marked edge. Thus, such a vector exists iff every pair of vertices in $\vecv$ is connected by an edge in $M$.

Under uniform relabeling of vertices, a fixed matching $M$ becomes a uniform matching. Thus, it is equivalent to fix $M$ to WLOG have edges $\{(1,2),(3,4),\ldots,(2 \alpha n -1,2 \alpha n)\}$ and let $\vecv$ be uniform among vectors in $\BZ_2^n$ with Hamming weight $\ell$. $\vecs$ exists iff $\vecv$ is supported entirely on $[2\alpha n]$ and whenever $\vecv$ is supported on $2i-1$ it is also supported on $2i$ and vice versa. There are $\binom{n}\ell$ total possibilities for $\vecv$, but only $\binom{\alpha n}{\ell/2}$ ways to pick $\vecv$'s support on odd vertices up to $2\alpha n -1$. Thus, \[ h_\alpha(\ell,n) = \frac{\binom{\alpha n}{\ell/2}}{\binom{n}\ell}, \] as desired.\footnote{The original proof of Gavinsky \emph{et al.}~\cite{GKK+08} used a different argument to arrive at the same answer. Consider a fixed vector $\vecv$, WLOG $1^\ell0^{n-\ell}$, and a random matching $M$. $\vecs$ exists iff $M$ is the disjoint union of a total matching on $[\ell]$ and a $(\alpha - \frac{\ell}{2n})$-partial matching on $[n]\setminus[\ell]$. Let $m_{\alpha,n}$ denote the number of $\alpha$-partial matchings on $n$ vertices. Then it can be shown that \[ m_{\alpha,n} = \frac{n!}{2^{\alpha n}(\alpha n)!(n-2\alpha n)!} \quad \text{and therefore} \quad \frac{m_{1,\ell} \cdot m_{\alpha-\frac{\ell}{2n},n-\ell}}{m_{\alpha,n}} = \frac{\binom{\alpha n}{\ell/2}}{\binom{n}\ell}. \]}

Finally, using the inequalities $\left(\frac{a}b\right)^b \leq \binom{a}b \leq \left(\frac{ea}b\right)^b$, we have \[ \frac{\binom{\alpha n}{\ell/2}}{\binom{n}\ell} \leq \frac{\left(\frac{e\alpha n}{\ell/2}\right)^{\ell/2}}{\left(\frac{n}{\ell}\right)^\ell} = (2\alpha e)^{\ell/2} \left(\frac{\ell}n\right)^{\ell/2}, \] as desired.
\end{proof}

\section{The hybrid argument: Proving \cref{lemma:bpd-to-seqbpd}}\label{sec:bpd-to-seqbpd}

To reduce $\seqbpd_{\alpha,T}(n)$ to $\bpd_\alpha(n)$ (and prove \cref{lemma:bpd-to-seqbpd}), we use a standard \emph{hybrid argument}, introduced in this context by Kapralov, Khanna, and Sudan~\cite{KKS15}. Intuitively, in $\seqbpd_{\alpha,T}(n)$, each $\Bob_t$ has to solve his own $\bpd_\alpha(n)$ instance (though he ``gets help'' from $\Bob_1,\ldots,\Bob_{t-1}$). Thus, in our proof, we use the triangle inequality to show that one of these $\Bob_t$'s must be ``doing a decent job'' at solving his $\bpd_\alpha(n)$ instance, and then we convert this to a general algorithm for $\bpd_\alpha(n)$ by simulating the ``help'' of $\Bob_1,\ldots,\Bob_{t-1}$.

\begin{proof}[Proof of \cref{lemma:bpd-to-seqbpd}]
Let $\Prot$ be a space-$s$ protocol for $\seqbpd_{\alpha,T}(n)$, given by message functions $\Prot_0,\ldots,\Prot_t$, such that $\Alice$'s message is determined by the function $\Prot_0$, which takes input $\vecx^* \in \BZ_2^n$, and $\Bob_t$'s by the function $\Prot_t$, which takes input $(m_{t-1},M_t,\vecz(t)) \in \{0,1\}^s \times \{0,1\}^{\alpha n \times n} \times \BZ_2^{\alpha n}$.

Now, we consider the ``coupled'' experiment where we sample $\Alice$'s input and then examine $\Bob$'s behavior in both the $\yes$ and $\no$ cases. Let $S^{\y}_0 = S^{\n}_0 \eqdef \Prot_0(\Unif_{\BZ_2^n})$ denote $\Alice$'s output (as a random variable). Then for $t \in [T]$, define \[ S^{\y}_t \eqdef \Prot_t(S^{\y}_{t-1},M_t,(M^{\fold})^\top\vecx^*) \text{ and } S^{\n}_t \eqdef \Prot_t(S^{\n}_{t-1},M_t,\Unif_{\BZ_2^{\alpha n}}) \in \Delta(\{0,1\}^s)\] as $\Bob_t$'s output message in the $\yes$ and $\no$ cases, respectively. Since $\Prot$ distinguishes the $\yes$ and $\no$ distributions with advantage $\delta$, we have \[ \|S^{\y}_T-S^{\n}_T\|_{\tv} \geq \delta. \] By the triangle inequality (\cref{lemma:rv-triangle}), there exists $t \in [T]$ such that
\begin{equation}\label{eqn:seqbpd-informative-index}
    \|S^{\y}_t-S^{\n}_t\|_{\tv} - \|S^{\y}_{t-1}-S^{\n}_{t-1}\|_{\tv} \geq \frac{\delta}T.
\end{equation}

Now, let $\tilde{S} \eqdef \Prot_t(S^{\y}_{t-1},M_t,\Unif_{\BZ_2^{\alpha n}})$, i.e., $\tilde{S}$ is $\Bob_t$'s output message in the following \emph{hybrid} experiment: $\Bob_1,\ldots,\Bob_{t-1}$ receive $\yes$ inputs, and $\Bob_t$ receives a $\no$ input. By the triangle inequality,
\begin{equation}\label{eqn:seqbpd-hybrid-triangle}
    \|S^{\y}_t - \tilde{S}\|_{\tv} \geq \|S^{\y}_t - S^{\n}_t\|_{\tv} - \|S^{\n}_t-\tilde{S}\|_{\tv}.
\end{equation}

Note that $S_t^{\n} = \Prot_t(S^{\n}_{t-1},M_t,\Unif_{\BZ_2^{\alpha n}}$ and $\tilde{S} = \Prot_t(S^{\y}_{t-1},M_t,\Unif_{\BZ_2^{\alpha n}}$. (I.e., in the two experiments, $\Bob_t$ receives an input sampled from the $\no$ distribution, while $\Bob_1,\ldots,\Bob_{t-1}$ receive inputs from the $\no$ and $\yes$ distributions, respectively.) In both cases, $\Bob_t$'s $\no$ input $(M_t,\Unif_{\BZ_2^{\alpha n}})$ is independent of both $S^{\n}_{t-1}$ and $S^{\y}_{t-1}$. Thus, by the data processing inequality (\cref{lemma:data-processing}), we have:
\begin{equation}\label{eqn:seqbpd-hybrid-data-processing}
    \|S^{\n}_t - \tilde{S}\|_{\tv} \leq \|S^{\y}_{t-1} - S^{\n}_{t-1}\|_{\tv}.
\end{equation}

Putting \cref{eqn:seqbpd-hybrid-data-processing,eqn:seqbpd-informative-index,eqn:seqbpd-hybrid-triangle} together gives \[ \|S^{\y}_t-\tilde{S}\|_{\tv} \geq \frac{\delta}T. \] But $\tilde{S} = \Prot_t(S^{\y}_{t-1},M_t,\Unif_{\BZ_2^{\alpha n}})$ and $S_t^{\y} = \Prot_t(S^{\y}_{t-1},M_t,(M_t^{\fold}) \vecx^*)$. (I.e., in the two experiments, $\Bob_1,\ldots,\Bob_{t-1}$ receive inputs sampled from the $\yes$ distribution, while $\Bob_t$ receives input from the $\no$ and $\yes$ distributions, respectively.) This yields an algorithm for $\bpd$ achieving advantage $\frac{\delta}T$: $\Alice$ can simulate $\yes$ inputs for $\Bob_1,\ldots,\Bob_{t-1}$, and then send $\Bob_{t-1}$'s message to $\Bob$, who can distinguish $S_t^{\y}$ and $\tilde{S}$ with advantage $\|S_t^{\y}-\tilde{S}\|_{\tv}$.\footnote{Explicitly, $\Bob$ should output $\yes$ or $\no$ based on whether $\Bob_t$'s output message has higher probability in $S_t^{\y}$ or $\tilde{S}$, respectively.}
\end{proof}

\section{Discussion}\label{sec:mcut-discussion}

We conclude this chapter by discussing some key features of the proof of \cref{thm:mcut-hardness} which will be relevant for the remainder of this thesis.

\subsection{Strengths of the model}

We proved \cref{thm:mcut-hardness} using a reduction from $\seqbpd$ to $\mcut$. The lower bound holds against ``streaming algorithms'', but what properties, exactly, of these algorithms do we require? We make no assumptions about their uniformity or time complexity. We do assume $O(\sqrt n)$ space, but only actually invoke this assumption when each player sends the state of the algorithm onto the next player. Moreover, the instances are \emph{constant-degree}, that is, each variable is involved in at most $O(1)$ constraints. Indeed, $\Psi$ is a union of $T=O(1)$ subinstances $\Psi_1 \cup \cdots \cup \Psi_T$, and each $\Psi_t$ corresponds to a matching, so each variable has degree at most $T$. Thus, the lower bounds actually hold in a stronger model where the streaming algorithm can process the input instance in $O(1)$ ``equally spaced chunks'' and the instance is promised to have constant degree.

\subsection{Weaknesses of the model: Input ordering}\label{sec:mcut-input-ordering}

Yet the lower bounds, and the techniques used to prove them so far, also have a number of weaknesses. Firstly, we focus on the assumption of adversarially-ordered input streams. The instances produced by the reduction (\cref{cons:seqbpd-to-mcut}) are not randomly ordered. Indeed, recall that in \cref{cons:seqbpd-to-mcut}, $\Alice$ adds no constraints, and each $\Bob_t$ adds a subinstance $\Psi_t$ corresponding to a random matching. Thus, the instance $\Psi = \Psi_1 \cup \cdots \cup \Psi_t$ has the property that in each chunk of $\alpha n$ constraints corresponding to $\Psi_t$, there are no repeated variables; this property is unlikely if we randomly reorder the constraints. Fortunately, Kapralov, Khanna, and Sudan~\cite{KKS15} were able to fix this issue by considering a variant of $\bpd$ based on Erd{\H o}s-R\'enyi graphs (i.e., $\Graphs_\alpha(n)$), instead of random matchings (i.e., $\Matchings_\alpha(n)$).\footnote{Note that graphs sampled from $\Graphs_\alpha(n)$ are simple, so the instances will still have the property that there are no repeated \emph{constraints} in each ``chunk'' of $\alpha n$ constraints. However, since there are only $O(1)$ chunks, this property remains likely when the constraints are randomly reordered; see \cite[Lemma 4.7]{KKS15}.} This change slightly complicates the proof of hardness; specifically, in the Fourier-analytic reduction (i.e., \cref{lemma:bpd-fourier-reduce}), $(M^\fold)^\top$ is no longer an injection, so $h_\alpha(\ell,n)$ must be redefined as \[ \max_{\vecv\in\BZ_2^n,  \|\vecv\|_0=\ell} \left(\E_{M\sim\Graphs_\alpha(n)}\left[\left|\left\{\vecs\in \BZ_2^{\alpha n}: \vecs\neq\veczero, (M^\fold)^\top \vecs = \vecv \right\}\right|\right]\right). \] Correspondingly, the bound on $h_\alpha(\ell,n)$ (cf. \cref{lemma:bpd-combo-bound}) becomes slightly more intricate, but the proof ultimately goes through.

\subsection{Weaknesses of the model: Space bound}\label{sec:mcut-linear-space}

There is also the question of extending the space bound for $\mcut$'s hardness beyond $o(\sqrt n)$. As discussed in \cref{rem:bpd-low-ell-terms} above, the \cite{GKK+08} proof of $\bpd$'s hardness (\cref{thm:bpd-hardness}) only works for $\sqrt n$-space protocols. But $\sqrt{n}$-dependence is not simply an artifact of this proof. Indeed, the \cref{thm:bpd-hardness} is tight in the following sense: For any $\alpha, \delta \in (0,1)$, there exists a protocol for $\bpd_\alpha(n)$ achieving advantage $\delta$ in $O(\sqrt n)$ communication. Indeed, $\Alice$ can just uniformly sample a set $S \subseteq [n]$ of size $\tilde{n}$ to be chosen later, and sends $x_s$ to $\Bob$ for each $s \in S$. Let $\tilde{m}$ denote the number of edges in $\Bob$'s matching between vertices in $S$. In the $\yes$ case, $\Bob$'s input $\vecz$ will always match the information he receives from $\Alice$ on each of these edges, while in the $\no$ case, they will all match only with probability $2^{-\tilde m}$. Moreover, $\E[\tilde m] \approx \frac{\alpha \tilde n^2}n$ by linearity of expectation, so (using concentration bounds, w.h.p.) $\tilde m$ is an arbitrarily large constant for arbitrarily large choice of $\tilde n = O(\sqrt n)$. (The fact that $\tilde n = O(\sqrt n)$ is the right number of vertices to sample in order to expect to see edges in the induced subgraph was termed an example of the ``birthday paradox'' by \cite{GKK+08}.) Therefore, $\Bob$ can distinguish the $\yes$ and $\no$ cases. Since this protocol also works for $\seqbpd$, it implies that better space lower bounds for $\mcut$ cannot rely on reductions from the $\seqbpd$ problem. 

To get around this issue, Kapralov, Khanna, Sudan, and Velingker~\cite{KKSV17} introduced an \emph{implicit} variant of $\seqbpd$, which we'll denote by $\seqibpd$. In the $\seqibpd$ problem, unlike $\seqbpd$, no party receives the hidden partition as input (i.e., there is no $\Alice$). Building on \cite{KKSV17}, Kapralov and Krachun~\cite{KK19} proved the following:

\begin{theorem}[\cite{KK19}]\label{thm:seqibpd-hardness}
For every $\alpha, \delta \in (0,1)$ and $T \in \BN$, there exists an $n_0 \in \BN$ such that for all $n \geq n_0$, any protocol for $\seqibpd_{\alpha,T}(n)$ achieving advantage at least $\delta$ requires $\Omega(n)$ communication.
\end{theorem}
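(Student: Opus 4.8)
The statement is a communication lower bound, so the plan is to adapt the template behind \cref{thm:bpd-hardness,lemma:bpd-to-seqbpd}. First I would set up a hybrid over the $T$ players: let $\mathcal{H}_j$ be the distribution in which $\Bob_1,\ldots,\Bob_j$ receive labels $M_t^{\fold}\vecx^*$ correlated with a common hidden $\vecx^*\sim\Unif_{\BZ_2^n}$ while $\Bob_{j+1},\ldots,\Bob_T$ receive fresh uniform labels. Since the folded matching of a uniform vector is itself uniform, $\mathcal{H}_1$ is the $\no$ distribution and $\mathcal{H}_T$ is the $\yes$ distribution, and by the data-processing inequality (\cref{lemma:data-processing}) the advantage gained in the switch $\mathcal{H}_j\to\mathcal{H}_{j+1}$ is at most $\E_{M,\sigma}\left[\|\CZ_{\CD_\sigma,M}-\CU\|_{\tv}\right]$, where $M\sim\Matchings_\alpha(n)$ is $\Bob_{j+1}$'s matching, $\sigma$ is the incoming $s$-bit message, and $\CD_\sigma$ is the posterior of $\vecx^*$ given $\sigma$. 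One would then want to bound this by a Fourier computation as in \cref{lemma:bpd-fourier-reduce}.

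\textbf{The catch, and why I would not expect the naive plan to succeed.} Pushing this through \cref{lemma:bpd-fourier-reduce,lemma:bpd-combo-bound,lemma:low-fourier-bound} exactly as for $\bpd$ yields only $\Omega(\sqrt n)$, and this is intrinsic to the hybrid-over-players route: as observed in \cref{sec:mcut-linear-space}, the birthday-paradox protocol solves $\seqbpd$ in $O(\sqrt n)$ communication, so no argument that reduces $\seqibpd$ to a \emph{per-round} $\bpd$-style statement can beat $\sqrt n$. The loss appears because, to bound $\W^{\ell}[\CD_\sigma]$, one passes to the posterior conditioned \emph{also} on $M_1,\ldots,M_j$; this makes it uniform on a fixed set whose level-$2$ Fourier weight can sit entirely on one Hamming-weight-$2$ character (say the indicator of one edge of $M_1$, if an earlier message reported the parity across that edge) --- and a fresh matching hits that character with constant probability. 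Tracking constants, the level-$2$ term of $\E_M\|\CZ_{\CD_\sigma,M}-\CU\|_{\tv}^2$ is $\Theta(s^2/n)$, which is $o(1)$ only when $s=o(\sqrt n)$.

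\textbf{The approach I would actually pursue (following \cite{KKSV17,KK19}).} The key realization is that the earlier matchings must \emph{not} be conditioned on: their randomness is what saves the argument. I would track, as a potential, a \emph{pre-smoothed} low-level Fourier mass of the posterior $\CD_t$ of $\vecx^*$ given the entire $t$-round transcript --- roughly $\Phi_t \eqdef \sum_{\ell\ge 2} h_\alpha(\ell,n)\, 2^{2n}\, \W^{\ell}[\CD_t]$, which is essentially $\E_{M\sim\Matchings_\alpha(n)}\|\CZ_{\CD_t,M}-\CU\|_{\tv}^2$ and thus already averages over the \emph{next} matching. The crucial discipline is to \emph{not} pull this expectation inside each Fourier coefficient (that Jensen step is exactly the lossy move above); instead one shows that the randomness of $M_1,\ldots,M_t$ forces genuine cancellation, so the low-level Fourier mass of $\CD_t$ is diffused over $\Omega(n)$-many characters --- those aligned with edges and short alternating paths of $M_1\cup\cdots\cup M_t$ --- and a fresh matching meets any fixed heavy one only with probability $O(1/n)$. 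The engine of the proof is then an inductive step showing that one more round --- a fresh $M_{t+1}$, its labels, and an $s$-bit message --- increases $\Phi$ by at most a small additive amount, exploiting that the new message depends on $\vecx^*$ only through the $\alpha n$ parities read off by $M_{t+1}$ (so the newly injected mass is itself structured by a fresh random matching) and that an $s$-bit bottleneck costs only a $2^{O(s)}$-type factor on the Fourier side, harmless when $s=cn$ for a small $c=c(\alpha,\delta,T)$ --- this is where \cref{lemma:low-fourier-bound} is invoked. Telescoping over $t=1,\ldots,T$ and applying a variant of \cref{lemma:bpd-fourier-reduce} together with \cref{lemma:xor} would give total advantage below $\delta$.

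\textbf{Where I expect the difficulty to concentrate.} The hard part is designing a potential that is simultaneously (a) an upper bound on each round's advantage, (b) stable under conditioning on an adversarial $O(s)$-bit message, and (c) stable against the fact that earlier messages are chosen by the protocol --- not by nature --- and may deliberately correlate $\vecx^*$ with chosen coordinates. Reconciling (c) with keeping $M_1,\ldots,M_t$ usable as genuinely random objects seems to require a careful decoupling/re-randomization argument, interleaved with the combinatorics of unions of sparse random matchings (component sizes, the presence or absence of a giant component, counts of short alternating paths). Essentially all of the real work should live here; the outer telescoping, and --- once the communication bound is established --- the reduction to streaming $\mcut$ via \cref{prop:yao} and \cref{lemma:comm-to-strm} with a value-gap analysis analogous to \cref{lemma:seqbpd-to-mcut-analysis}, I would expect to be routine.
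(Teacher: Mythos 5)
The paper does not actually prove this theorem: it is imported verbatim from \cite{KK19}, stated without proof, and the thesis confines itself to a short qualitative sketch of the obstacles at the end of \cref{sec:cgsvv}. So your proposal should be judged against that sketch, and against what we know of the \cite{KK19} argument itself.

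Measured that way, you have the main ideas right. You correctly identify that a naive per-round reduction to a $\bpd$-style statement cannot beat $\sqrt n$ because of the birthday-paradox protocol; this matches \cref{rem:bpd-low-ell-terms}. You also correctly identify the engine of the actual proof: an inductive Fourier-weight bound on the posterior of $\vecx^*$ given the transcript, the fact that each round's message can only depend on $\vecx^*$ through the $\alpha n$ parities read off by that round's matching, and the need for an $n$-factor improvement over \cref{lemma:low-fourier-bound} on the low-level weights. This is exactly what the paper's discussion points to when it says the proof works by applying \cref{lemma:low-fourier-bound} to a carefully defined ``reduced'' version of $\CZ_{A_t,M_t}$, using the structure that each $z$-entry only reports a parity of two coordinates, and establishing Fourier weight bounds on the indicators $\1_{A_t}$ inductively.

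One point where your framing is off, or at least potentially misleading: you say ``the earlier matchings must \emph{not} be conditioned on.'' In the \cite{KK19} argument one very much does condition on $M_1,\ldots,M_{t-1}$ (the sets $A_t$ are defined for a fixed history), and the inductive Fourier weight bounds on $\1_{A_t}$ are proven to hold with high probability over the random draw of those matchings, not by refusing to condition on them. Your later sentences about the Fourier mass being diffused over characters aligned with edges and short alternating paths --- and a fresh matching hitting any particular one with probability $O(1/n)$ --- capture the right intuition for \emph{why} the w.h.p. bound holds; it is only the lead sentence that states the discipline in a way that does not match what actually happens in the proof. Beyond that, you are right that the hard part is formulating the inductive hypothesis and controlling the interaction between the protocol's adversarial choice of message and the random-matching combinatorics, which is where essentially all of the technical difficulty of \cite{KK19} (and its extension in \cite[\S5]{CGS+22}) lives.
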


The \cite{KK19} proof of \cref{thm:seqibpd-hardness} and its extensions in \cite{CGS+22} are very technically demanding; see the discussion at the end of \cref{sec:cgsvv} for some brief insights.

Recall that in \cref{cons:seqbpd-to-mcut}, $\Alice$ did not contribute any constraints, and so \cref{cons:seqbpd-to-mcut} might as well be a reduction from $\seqibpd$, instead of $\seqbpd$, to $\mcut$. Thus, \cref{thm:seqibpd-hardness} immediately implies an extension of $\mcut$'s hardness to the linear-space setting. That is, any $(\frac12+\epsilon)$-approximation to $\mcut$ requires $\Omega(n)$ space.

\begin{remark}\label{rem:sparsifier}
The $\Omega(n)$ bound is optimal up to logarithmic factors. Indeed, for every $\epsilon > 0$, we can pick a large constant $C$, and given an input instance $\Psi$, we can sample $Cn$ constraints to get a subinstance $\tilde{\Psi}$ of $\Psi$, and outputting $\val_{\tilde{\Psi}}$ will give a $(1-\epsilon)$-approximation to $\val_\Psi$ for sufficiently large $C$! This ``sparsification'' algorithm only requires $\tilde{O}(n)$ space, and the same technique yields arbitrarily-good approximations in $\tilde{O}(n)$ space for every CSP.
\end{remark}

\subsection{Weaknesses of the proof: Choosing the hybrid}\label{sec:mcut-hybrid-discussion}

Finally, we highlight a subtlety in the choice of the hybrid variable $\tilde{S}$ in the reduction from $\bpd$ to $\seqbpd$ (\cref{lemma:bpd-to-seqbpd}, see \cref{sec:bpd-to-seqbpd}). Recall that we applied the data processing inequality (\cref{lemma:data-processing}) to argue that $\Bob_t$ can't help distinguish $\tilde{S}$ from the $\no$ case $S^{\n}_t$ (see \cref{eqn:seqbpd-hybrid-data-processing}). But using this inequality relies on the fact that $\Bob_t$'s $\no$ input is independent of the inputs to $\Bob_1,\ldots,\Bob_{t-1}$ in both the $\yes$ and $\no$ cases. Thus, we couldn't have, for instance, defined $\tilde{S}$ by mixing $\no$ inputs for $\Bob_1,\ldots,\Bob_{t-1}$ and a $\yes$ input for $\Bob_t$. This same issue occurs in later works in streaming hybrid arguments for general CSPs \cite{CGSV21-boolean,CGSV21-finite}. In particular, in the appropriate generalizations of $\bpd$ and $\seqbpd$, $\Bob_t$ must have a uniformly distributed input, (typically) in the $\no$ case.

\chapter{$\mdcut$ is mildly approximable}\label{chap:mdcut}

\newcommand{\mus}{\mu_{\textsf{S}}}

\epigraph{[$\mcut$] raises the question whether streaming algorithms operating in small space can non-trivially approximate (i.e., beat the random assignment threshold) for \emph{some} CSP, or whether every CSP is approximation resistant in the streaming model.}{Guruswami \emph{et al.}~\cite{GVV17}}

\newthought{Unlike $\mcut$, $\mdcut$ is approximable} in the streaming setting. Indeed, Guruswami, Velingker, and Velusamy~\cite{GVV17} showed that $\mdcut$ can be $(\frac25-\epsilon)$-approximated by $O(\log n)$-space linear sketching algorithms, but not $(\frac12+\epsilon)$-approximated by $\sqrt n$-space streaming algorithms, for every $\epsilon > 0$. A tighter characterization was later given by Chou, Golovnev, and Velusamy~\cite{CGV20}:

\begin{theorem}[{\cite{CGV20}}]\label{thm:mdcut-characterization}
For every constant $\epsilon > 0$:
\begin{enumerate}[label={\roman*.},ref={\roman*}]
\item There is a $O(\log n)$-space linear sketching algorithm which $(\frac49-\epsilon)$-approximates $\mdcut$ (and even $\mtwoand$).\label{item:mdcut-algorithm}
\item Any streaming algorithm which $(\frac49+\epsilon)$-approximates $\mdcut$ requires $\Omega(\sqrt n)$ space.\label{item:mdcut-hardness}
\end{enumerate}
\end{theorem}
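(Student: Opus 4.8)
Both halves of the theorem rest on the same one-variable optimization that outputs $\frac49$, and it suffices throughout to work with $\mtwoand$, since $\mdcut$ is the sub-CSP of $\mtwoand$ using only $\dcut=\twoand_{0,1}$. View each constraint of a $\mtwoand$ instance $\Psi$ as a pair of literals that must both be true, and for each variable $i\in[n]$ let $d_i$ be the number of constraints in which $i$ occurs positively minus the number in which it occurs negatively; as constraints stream by, $\mathbf d\in\BZ^n$ receives additive updates, so by \cref{thm:l1-sketching} we maintain an estimate $\hat B$ of $\|\mathbf d\|_1$ within a factor $1\pm\epsilon'$ in $O(\log n/\epsilon'^2)$ space, and set $\hat b\eqdef\hat B/m$ as a proxy for the normalized bias $b\eqdef\|\mathbf d\|_1/m$. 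The analytic core is to sandwich $\val_\Psi$ between two explicit increasing functions of $b$: for the lower bound, exhibit a concrete bias-dependent rounding (send each variable independently to its majority sign with probability $\tfrac12+\theta$ and to the minority sign with probability $\tfrac12-\theta$), compute the expected satisfied fraction as a function of $(\theta,b)$, and optimize over $\theta$, giving $\val_\Psi\geq\phi(b)$ with $\phi$ explicit and $\phi(0)=\tfrac14$; for the upper bound, fix an optimal assignment $\vecx$ and run an LP-type counting argument charging each satisfied constraint through both of its true literals, giving $\val_\Psi\leq\psi(b)$ with $\psi$ explicit. The algorithm outputs $\tilde v\eqdef\phi\big((1-O(\epsilon'))\hat b\big)$: since $\phi$ is monotone and $\hat b$ concentrates, $\tilde v\leq\phi(b)\leq\val_\Psi$ always, while $\tilde v\geq(1-O(\epsilon'))\phi(b)\geq(1-O(\epsilon'))\big(\inf_t\tfrac{\phi(t)}{\psi(t)}\big)\psi(b)\geq(1-O(\epsilon'))\tfrac49\val_\Psi$, the crucial fact being that $\phi(t)/\psi(t)$ attains its minimum $\frac49$ at one specific bias $t^\star$. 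Choosing $\epsilon'$ small in terms of $\epsilon$ finishes part (i).

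\textbf{The hardness reduction (part ii).} Here I would mimic \cref{chap:mcut}: define a distributional one-way communication problem, give a communication-to-streaming reduction (\cref{lemma:comm-to-strm}) producing $\mdcut$ instances with a value gap, and then prove the communication problem needs $\Omega(\sqrt n)$ space. The problem is the natural $\dcut$-analogue of $\seqbpd$, now parametrized by \emph{template distributions} $\DY,\DN\in\Delta(\BZ_2^2)$ over the two endpoints of a $\dcut$-constraint, chosen so that (1) $\DY$ and $\DN$ have identical single-coordinate marginals (``one-wise indistinguishable''), and (2) the value a planted structure induces in the $\yes$ case, versus the best value attainable in the $\no$ case, stand in ratio exactly $\frac49$, governed by the same $t^\star$ as in part (i). In the game $\Alice$ holds a hidden $\vecx^*\sim\Unif_{\BZ_2^n}$ and each $\Bob_t$ holds a random matching together with, per edge, a label that in the $\yes$ case reveals the restriction of $\vecx^*$ to that edge (used to build $\dcut$-constraints according to $\DY$) and in the $\no$ case is an independent $\DN$-sample; the reduction is then in the style of \cref{cons:seqbpd-to-mcut}. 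Its value-gap analysis --- via \cref{prop:yao}, a union bound over all $2^n$ assignments, and a submartingale concentration bound \`a la \cref{lemma:azuma} (the matchings $M_t$ contribute mutually independent, and internally negatively correlated, constraint blocks) --- shows that $\yes$ instances have value $\geq\gamma-o(1)$ while $\no$ instances have value $\leq\beta+o(1)$ w.h.p., with $\beta/\gamma=\frac49$, so by the standard relationship between the gap problem and approximation this rules out streaming $(\frac49+\epsilon)$-approximations once the communication lower bound is in hand.

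\textbf{The communication lower bound, and the main obstacle.} The $\Omega(\sqrt n)$ bound for this game follows the two-step route of \cref{sec:bpd-to-seqbpd,sec:bpd-hardness}: a hybrid argument (the analogue of \cref{lemma:bpd-to-seqbpd}) reduces the $T$-player version to the single-$\Bob$ version, and for the latter a Fourier-analytic reduction (the analogue of \cref{lemma:bpd-fourier-reduce}, built on \cref{lemma:xor}) bounds $\Bob$'s distinguishing advantage --- conditioned on $\Alice$'s message pinning down a set $A\subseteq\BZ_2^n$ of candidate hidden strings --- by the low-level Fourier weights of $\1_A$, which are then killed by combining \cref{lemma:low-fourier-bound} with a random-matching combinatorial estimate in the spirit of \cref{lemma:bpd-combo-bound}; the one-wise indistinguishability of $\DY,\DN$ is precisely what makes the level-$1$ Fourier contribution vanish, so the sum effectively starts at $\ell\geq2$ and the $\sqrt n$ threshold re-emerges just as for $\bpd$. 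I expect the genuine difficulty to be the \emph{coordinated} design of $\phi,\psi$ on the algorithmic side and of $\DY,\DN$ on the hardness side, so that the same bias $t^\star$ is simultaneously the algorithm's worst case and the hard instances' optimal gap location and both ratios land exactly on $\frac49$; on the algorithmic side, making the upper bound $\val_\Psi\leq\psi(b)$ \emph{tight} (rather than merely valid) is the delicate step, and on the hardness side, re-running the \cite{GKK+08}-style Fourier and combinatorial estimates for the biased, type-labelled matchings of $\dcut$ --- rather than the plain parity labels of $\bpd$ --- is heavier bookkeeping than in \cref{chap:mcut}, though it needs no essentially new idea.
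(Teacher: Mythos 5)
Your algorithmic half is substantively correct and matches the paper's structure: sketch the $\ell_1$-norm of the signed-occurrence vector to estimate bias, then sandwich $\val_\Psi$ between a rounding-derived lower bound and an averaging upper bound whose ratio bottoms out at $\frac49$. The only difference is stylistic: you optimize the perturbation parameter $\theta$ as a function of bias (the original \cite{CGV20} route), whereas the paper's exposition (Lemmas~\ref{lemma:mdcut-alg-lb} and~\ref{lemma:mdcut-alg-ub}) fixes the perturbation probability at $p^*=\frac23$ throughout and observes this already achieves the optimal constant --- a simplification borrowed from \cite{BHP+22} that the paper explicitly highlights in Remark~\ref{rem:cgsv-vs-bhp-2and}. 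Your proposal is a valid alternate route; the paper's is cleaner.

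\textbf{Part (ii) has a genuine gap.} You set up a game "parametrized by template distributions $\DY,\DN$ with matching single-coordinate marginals," have $\Bob_t$ generate constraints by masking $\vecx^*|_\vece$ according to $\DY$ or $\DN$, and then claim "the one-wise indistinguishability of $\DY,\DN$ is precisely what makes the level-$1$ Fourier contribution vanish, so the sum effectively starts at $\ell\geq2$." That claim is false as stated. In the Fourier-analytic reduction (cf. the discussion following Theorem~\ref{thm:rmd-onewise-hardness}), the block-$\ell$ factor is $\E_{\vecb\sim\CD}[(-1)^{\vecs(\ell)\cdot\vecb}]$, and this vanishes for $\|\vecs(\ell)\|_0=1$ precisely when the marginal $\mu(\CD)_j$ is \emph{zero} --- not merely when $\DY$ and $\DN$ have equal (possibly nonzero) marginals. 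For $\mdcut$ the needed template pair has $\mus(\CD_Y)=\mus(\CD_N)=\frac15\neq0$ (Example~\ref{ex:cgsv-2and}), so if $\Bob_t$ alone produces constraints realizing these templates, the level-$1$ terms do not die, and moreover the hybrid argument breaks: as the paper stresses in \cref{sec:mcut-hybrid-discussion} and again in \cref{sec:cgsv-discussion}, the data-processing step requires $\Bob_t$'s no-case input to be independent of the earlier players' inputs (typically uniform), which fails when the no-mask distribution is non-uniform. That is exactly why the general matching-marginals case in \cite{CGSV21-boolean} only yields a \emph{sketching} lower bound via the parallel game $\pllrmd$, not a streaming one.

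The paper circumvents this in two coupled ways that your proposal omits. First, $\Alice$ contributes roughly a $\frac15$-fraction of the stream, consisting of constraints always satisfied by $\vecx^*$; this is where \emph{all} the marginal bias lives (this is the "padded one-wise pair" decomposition, cf. Example~\ref{ex:cgsv-2and}). Second, the remaining $\Bob$ constraints are governed by $\seqbpd'$, in which the $\no$ case is not a $\DN$-sample independent of $\vecx^*$ but rather the deterministic complement $\vecz(t)=\vecone+(M^{\fold})\vecx^*$. Both $\yes$ and $\no$ labels are then deterministic parity functions of $\vecx^*$, the folding trick keeps the Fourier support on even weight exactly as in the $\bpd$ analysis, and hardness of $\seqbpd'$ follows from hardness of $\seqbpd$ by the elementary triangle-inequality/XOR argument of Lemma~\ref{lemma:seqbpd-to-seqbpd'} --- no new Fourier or combinatorial analysis is needed. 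Your sketch neither has $\Alice$ produce constraints (you say the reduction is "in the style of \cref{cons:seqbpd-to-mcut}," where $\Alice$ outputs nothing) nor addresses where the hybrid argument's independence requirement is satisfied, so as written the communication lower bound does not go through.

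A smaller point: template distributions over $\BZ_2^2$ parametrize negation patterns and naturally yield $\mtwoand$ constraints; producing a lower bound specifically against $\mdcut$ (the smaller predicate family) requires the reduction to emit only $\dcut$-constraints, which the paper handles explicitly (each $\Bob_t$ adds both orientations $(u,v)$ and $(v,u)$) and which a generic $\DY,\DN$ formulation would not automatically provide.
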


We remark that in the classical setting, several works \cite{GW95,FG95,Zwi00,MM01,LLZ02} have given algorithms achieving increasingly good approximation ratios for $\mtwoand$ and/or $\mdcut$. Most recently, Lewin, Livnat, and Zwick~\cite{LLZ02} presented and analyzed an algorithm for $\mtwoand$ which achieves an approximation ratio $\alpha_{\mathrm{LLZ}} \geq 0.87401$.\footnote{Austrin~\cite{Aus10} shows that $(\alpha_{\mathrm{Aus}}+\epsilon)$-approximating $\mtwoand$ is UG-hard, where $\alpha_{\mathrm{Aus}} \approx 0.87435$. Without the UGC, Trevisan \emph{et al.}~\cite{TSSW00} show that $(\frac{12}{13}+\epsilon)$-approximation is $\NP$-hard.} Thus, although $\mtwoand$ is nontrivially approximable in the streaming setting, its streaming approximability still falls far short of its classical approximability.

The goal of this chapter is to prove \cref{thm:mdcut-characterization}. We prove its two components separately --- addressing the algorithmic result, \cref{item:mdcut-algorithm}, in \cref{sec:mdcut-algorithm}, and the hardness result, \cref{item:mdcut-hardness}, in \cref{sec:mdcut-hardness}. In both cases, we highlight the crucial role played by certain information about CSP instances which we'll call \emph{template distributions}. Later, these will form the basis of the dichotomy theorems from \cite{CGSV21-boolean,CGSV21-finite} (see \cref{sec:cgsv} below). Finally, we conclude with more discussion in \cref{sec:mdcut-discussion}.

\section{Bias-based algorithms for $\mtwoand$}\label{sec:mdcut-algorithm}

The optimal $(\frac49-\epsilon)$-approximate sketching algorithm for $\mtwoand$ from \cite{CGV20} (i.e., \cref{item:mdcut-algorithm} of \cref{thm:mdcut-characterization}) is based on measuring a quantity called the \emph{bias} of the input instance $\Psi$. This quantity was introduced by Guruswami, Velingker, and Velusamy~\cite{GVV17} (who used it to achieve a weaker approximation factor of $\frac25$). In this section, we present a cleaner analysis due to subsequent simplifications in our joint work \cite{BHP+22}. The analysis in this latter work, which we present in \cref{sec:thresh-alg} below, generalizes our argument for $\mtwoand$ to $\mbcsp[f]$ for every ``threshold function'' $f$.

\subsection{Setup: Bias-based algorithms}

Throughout this section, we assume $\Psi$ has constraints $(\vecb(\ell),\vecj(\ell),w(\ell))_{\ell\in[m]}$. For each variable $i \in [n]$ we let
\begin{equation}\label{eqn:2and-bias-var}
    \bias_\Psi(i) \eqdef \sum_{\ell\in[m],t\in[2]:~j(\ell)_t=i} (-1)^{b(\ell)_t} w_\ell,
\end{equation}
and then we define
\begin{equation}\label{eqn:2and-bias}
    \bias_\Psi \eqdef \frac1{2W} \sum_{i\in[n]} |\bias_\Psi(i)|.
\end{equation}
(Note that $\bias_\Psi(i) \in [-W_\Psi,W_\Psi]$ where $W_\Psi$ is the total weight in $\Psi$, while $\bias_\Psi \in [0,1]$.)

For each variable $i$, $|\bias_\Psi(i)|$ measures the imbalance between $i$'s negated and non-negated appearances, and thus correlates with ``how easy $i$ is to assign''. For instance, if $\bias_\Psi(i) \gg 0$, then $x_i$ is rarely negated (i.e., the $b$-value is typically $0$), and so we should assign $x_i = 1$. Thus, we should expect to see some positive relationship between $\bias_\Psi$ and $\val_\Psi$. Indeed, we have:

\begin{lemma}[{\cite[Theorem 11]{GVV17}}]\label{lemma:mdcut-alg-lb}
Let $\Psi$ be a $\m[\twoand]$ instance. Then \[ \val_\Psi \geq \frac29 (1+\bias_\Psi). \]
\end{lemma}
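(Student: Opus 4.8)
The plan is to produce a single (randomized) assignment whose \emph{expected} value is at least $\frac29(1+\bias_\Psi)$; since $\val_\Psi$ is the maximum of $\val_\Psi(\vecx)$ over all $\vecx$, this will immediately give the lemma. For each variable $i$ let $\sigma_i\in\{+1,-1\}$ be the sign of $\bias_\Psi(i)$ (ties broken arbitrarily), and let $v_i\in\{0,1\}$ be the corresponding ``preferred value'' ($v_i=1$ iff $\sigma_i=+1$) --- intuitively the value the net weight of constraints wants $x_i$ to take, recalling that a $\twoand_\vecb$ constraint is satisfied exactly when $x_{j_t}=1+b_t$ for both $t$, and that $b(\ell)_t=0$ pushes $\bias_\Psi(i)$ in the positive direction. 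Now sample $\vecx$ by setting, independently for each $i$, $x_i=v_i$ with probability $\frac23$ and $x_i=1+v_i$ with probability $\frac13$. (The particular bias $\frac23$ vs.\ $\frac13$ is exactly what makes the constants come out right; see the end.)

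The computation has two pieces. First, classify each constraint $C_\ell=(\vecb(\ell),\vecj(\ell),w(\ell))$ by $a_\ell\in\{0,1,2\}$, the number of its two literals that ``agree'' with the global preference, i.e.\ the number of $t$ with $(-1)^{b(\ell)_t}=\sigma_{j(\ell)_t}$, and write $W_k=\sum_{\ell:\,a_\ell=k}w(\ell)$ so $W_0+W_1+W_2=W_\Psi$. I claim $\bias_\Psi=(W_2-W_0)/W_\Psi$: indeed $\sum_i|\bias_\Psi(i)|=\sum_i\sigma_i\bias_\Psi(i)=\sum_{\ell}w(\ell)\sum_{t\in[2]}\sigma_{j(\ell)_t}(-1)^{b(\ell)_t}$, and the inner summand is $+1$ on agreeing literals and $-1$ on disagreeing ones, so constraint $\ell$ contributes $w(\ell)(2a_\ell-2)$, giving $\sum_i|\bias_\Psi(i)|=2(W_2-W_0)$; dividing by $2W_\Psi$ yields the claim. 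Second, since $j(\ell)_1\neq j(\ell)_2$ the two literals of $C_\ell$ involve distinct (hence independent) coordinates of $\vecx$, and each lands on the value the constraint wants with probability $\frac23$ if that literal agrees and $\frac13$ if it disagrees; therefore $\Pr[\vecx\text{ satisfies }C_\ell]=(2/3)^{a_\ell}(1/3)^{2-a_\ell}$, which equals $\frac49,\frac29,\frac19$ for $a_\ell=2,1,0$ respectively.

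Combining the two pieces, $\E[\val_\Psi(\vecx)]=\frac1{W_\Psi}\bigl(\tfrac49 W_2+\tfrac29 W_1+\tfrac19 W_0\bigr)=\frac1{9W_\Psi}\bigl(4W_2+2W_1+W_0\bigr)$. Using $W_1=W_\Psi-W_0-W_2$ this is $\frac1{9W_\Psi}\bigl(2W_\Psi+2W_2-W_0\bigr)$, and since $W_0\ge 0$ we have $2W_2-W_0\ge 2(W_2-W_0)=2W_\Psi\bias_\Psi$, whence $\E[\val_\Psi(\vecx)]\ge\frac29(1+\bias_\Psi)$, proving the lemma. The analysis is lossless exactly when $W_0=0$. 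The only real subtlety --- more bookkeeping than difficulty --- is fixing the $\BZ_2$ literal semantics and sign conventions so that ``agreement with the bias'' lines up \emph{simultaneously} with the formula for $\bias_\Psi$ and with the satisfaction probability under $\vecx$; once aligned, the rest is a one-line algebraic identity plus $W_0\ge 0$. (If one prefers to see where $\frac23$ comes from: rounding to $v_i$ with probability $\frac{1+q}{2}$ gives $\E[\val_\Psi(\vecx)]\ge\frac14\bigl(1+2q\,\bias_\Psi+q^2(2\bias_\Psi-1)\bigr)$ after the same manipulation, and $q=\frac13$ makes the right-hand side identically $\frac29(1+\bias_\Psi)$, so committing to $q=\frac13$ from the start is cleanest.)
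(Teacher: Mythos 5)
Your proof is correct and follows essentially the same strategy as the paper's: both round to the majority assignment $\tilde{\vecx}$, perturb each bit independently with flip probability $\frac13$, partition the constraint weight by the number of literals that ``agree'' with $\tilde{\vecx}$ (your $W_0,W_1,W_2$ are exactly $W_\Psi\cdot\CD_\Psi^{\tilde\vecx}\langle0\rangle,\,W_\Psi\cdot\CD_\Psi^{\tilde\vecx}\langle1\rangle,\,W_\Psi\cdot\CD_\Psi^{\tilde\vecx}\langle2\rangle$), and then drop the nonnegative $W_0$ slack. The paper simply packages these steps through the template-distribution abstraction ($\CD_\Psi^{\tilde\vecx}$, $\lambda$, $\mus$) and cites the facts $\lambda(\CD_\Psi^\vecx,p)=\E_{\veca\sim\Bern_p^n}[\val_\Psi(\veca+\vecx)]$ and $\mus(\CD_\Psi^{\tilde\vecx})=\bias_\Psi$, both of which you rederive inline.
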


\begin{lemma}[{\cite[Lemma 3.3]{CGV20}}]\label{lemma:mdcut-alg-ub}
Let $\Psi$ be a $\m[\twoand]$ instance. Then \[ \val_\Psi \leq \frac12 (1+\bias_\Psi). \]
\end{lemma}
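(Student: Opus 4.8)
The plan is to prove the inequality pointwise: I will show that $\val_\Psi(\vecx) \le \frac12(1+\bias_\Psi)$ holds for \emph{every} assignment $\vecx \in \BZ_2^n$, and then take the maximum over $\vecx$. The one idea needed is that a $\mtwoand$ constraint is a \emph{conjunction of two literals}. Concretely, if $C_\ell$ has label $\vecb(\ell) = (b_1,b_2)$ and scope $\vecj(\ell) = (j_1,j_2)$, then $f_{\vecb(\ell)}(\vecx|_{\vecj(\ell)}) = L^{(\ell)}_1(\vecx)\cdot L^{(\ell)}_2(\vecx)$, where $L^{(\ell)}_t(\vecx) \eqdef \1[x_{j_t} + b_t = 1] \in \{0,1\}$ indicates that the $t$-th literal of $C_\ell$ is satisfied. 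Since $ab \le \tfrac{a+b}{2}$ for all $a,b \in \{0,1\}$, this gives
\[ \val_\Psi(\vecx) = \frac1{W_\Psi}\sum_{\ell=1}^m w(\ell)\, L^{(\ell)}_1(\vecx)L^{(\ell)}_2(\vecx) \;\le\; \frac1{2W_\Psi}\sum_{\ell=1}^m w(\ell)\bigl(L^{(\ell)}_1(\vecx) + L^{(\ell)}_2(\vecx)\bigr), \]
i.e.\ the satisfied weight is at most half the total weight of satisfied \emph{literal-occurrences}.

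The next step is to evaluate (and then maximize) the right-hand side by passing to the $\pm1$ encoding $\sigma_i \eqdef (-1)^{x_i}$. Writing $L^{(\ell)}_t(\vecx) = \tfrac12\bigl(1 - \sigma_{j_t}(-1)^{b_t}\bigr)$ and regrouping the double sum over pairs $(\ell,t)$ according to which variable $i = j_t(\ell)$ is hit, the total weight of satisfied literal-occurrences equals $W_\Psi - \tfrac12\sum_{i\in[n]} \sigma_i\,\bias_\Psi(i)$, where $\bias_\Psi(i)$ is exactly the quantity defined in \eqref{eqn:2and-bias-var}. This is a linear function of the signs $\sigma_i$ that is \emph{separable} across variables, so its maximum over $\vecx$ is achieved by choosing each $\sigma_i$ to oppose the sign of $\bias_\Psi(i)$, giving $W_\Psi + \tfrac12\sum_i |\bias_\Psi(i)| = W_\Psi(1 + \bias_\Psi)$ by the definition \eqref{eqn:2and-bias} of $\bias_\Psi$. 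Substituting back, $\val_\Psi(\vecx) \le \frac1{2W_\Psi}\cdot W_\Psi(1+\bias_\Psi) = \frac12(1+\bias_\Psi)$ for every $\vecx$, and taking the max over $\vecx$ on the left finishes the proof.

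I do not expect a genuine obstacle here; the whole argument is a short computation. The only point that needs a little care is the bookkeeping that identifies "$\sum_{(\ell,t)} w(\ell)L^{(\ell)}_t(\vecx)$" with "$W_\Psi - \tfrac12\sum_i \sigma_i\bias_\Psi(i)$" — in particular the observation that once each constraint is split into its two literals, the objective becomes separable, so the per-variable optimizations decouple. (An equivalent route, avoiding the $ab \le \tfrac{a+b}{2}$ step, is to expand the product $L^{(\ell)}_1 L^{(\ell)}_2$ directly into a constant term equal to $\tfrac14 W_\Psi$ in total, a linear term contributing $-\tfrac14\sum_i \sigma_i \bias_\Psi(i)$, and a quadratic term whose total contribution has absolute value at most $\tfrac14\sum_\ell w(\ell) = \tfrac14 W_\Psi$; the bound falls out identically.) Finally, it is worth noting the estimate is tight: equality holds, e.g., on the single-constraint instance $\{\twoand(x_1,x_2)\}$ (both sides equal $1$), and also in the unbiased case on $\{\twoand(x_1,x_2),\ \twoand_{1,1}(x_1,x_2)\}$, where $\bias_\Psi = 0$ and $\val_\Psi = \tfrac12$.
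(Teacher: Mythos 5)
Your proof is correct, and it reaches the same destination as the paper's, but it works below the abstraction the paper is constructing. The paper's argument routes through the template distribution $\CD_\Psi^{\vecx^*}$: it writes $\val_\Psi(\vecx^*) = \gamma(\CD_\Psi^{\vecx^*}) = \CD_\Psi^{\vecx^*}\langle 2\rangle$ (\cref{lemma:2and-gamma}), then applies the inequality $\CD\langle 2\rangle \le \CD\langle 2\rangle + \tfrac12\CD\langle 1\rangle = \tfrac12\bigl(1+\mus(\CD)\bigr)$, and finally $\mus(\CD_\Psi^{\vecx^*})\le\bias_\Psi$ (\cref{lemma:2and-mu}). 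Your $ab \le \tfrac{a+b}{2}$ step is exactly the same inequality in disguise: averaging it over the constraint distribution gives ``$\Pr[\text{both literals hold}] \le \tfrac12\,\E[\#\text{literals satisfied}]$,'' which in the $\CD\langle\cdot\rangle$ notation is precisely $\CD\langle 2\rangle \le \CD\langle 2\rangle + \tfrac12 \CD\langle 1\rangle$; and your $\pm1$-encoding bookkeeping identifying the linear term with $\sum_i \sigma_i\,\bias_\Psi(i)$, plus the per-coordinate maximization, is exactly what \cref{lemma:2and-mu} packages as $\mus(\CD_\Psi^\vecx)\le\bias_\Psi$ with equality at the majority assignment.

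So I would call this the same argument, unfolded. What the paper's phrasing buys is reusability: the identity $\gamma(\CD)\le\tfrac12(1+\mus(\CD))$ is stated entirely in terms of the template distribution, which is the object that later generalizes to all $\mbcsp[f]$ (via \cref{prop:d} and \cref{lemma:thresh-value-ub}) and that also appears on the hardness side of the dichotomy. What your phrasing buys is transparency for the specific case: the factor $\tfrac12$ manifestly comes from the two literals, and the optimization over $\vecx$ is visibly a decoupled sign-choice. Both your main route and your parenthetical alternative (expanding $L_1L_2$ into constant, linear, and quadratic terms and bounding the quadratic piece by $\tfrac14 W_\Psi$) are sound, and your tightness examples are correct. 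No gap.
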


Together \cref{lemma:mdcut-alg-lb,lemma:mdcut-alg-ub} imply that outputting $\frac29(1+\bias_\Psi)$ gives a $\frac49$-approximation to $\val_\Psi$. To implement measure $\bias_\Psi$ in the $O(\log n)$-space streaming setting algorithm, we observe that $\bias_\Psi$ is the $\ell_1$-norm of the vector $\bias(\Psi) \eqdef (\bias_1(\Psi),\ldots,\bias_n(\Psi))$. We can thus calculate it using $\ell_1$-sketching (\cref{thm:l1-sketching}) for $\bias(\Psi)$: Given each new constraint $(\vecb(\ell),\vecj(\ell))$ with weight $w_\ell$, we simply add $(-1)^{b(\ell)_t}w_\ell$ to $\bias(\Psi)_{j(\ell)_t}$ for each $t \in [2]$.

\subsection{Analysis: template distributions}\label{sec:mdcut-template-alg}

To prove \cref{lemma:mdcut-alg-lb,lemma:mdcut-alg-ub}, we define a few useful notions. For a distribution $\CD \in \Delta(\BZ_2^2)$, let $\CD\langle0\rangle=\CD(0,0),\CD\langle1\rangle=\CD(1,0)+\CD(0,1),$ and $\CD\langle2\rangle=\CD(1,1)$, i.e., $\CD\langle t \rangle$ is the probability mass on Hamming weight $t$.

Given an assignment $\vecx \in \BZ_2^n$ to an instance $\Psi$ of $\mtwoand$ with constraints $((\vecb(\ell),\vecj(\ell),w(\ell))_{\ell \in [m]}$, we define a \emph{template distribution} $\CD_\Psi^\vecx \in \Delta(\BZ_2^2)$ as follows: We sample $\ell$ with probability $\frac{w(\ell)}{W_\Psi}$ and output $\vecb(\ell) + \vecx|_{\vecj(\ell)}$. Thus, $\CD_\Psi^\vecx\langle t \rangle$ is the fraction weight of constraints such that $t$ of the variables equal $1$ under assignment $\vecx$.

\begin{example}
Let $\Psi$ consist of $n=2$ variables and $m=3$ constraints $((0,0),(1,2))$, $((0,1),(1,2))$, and $((1,1),(1,2))$ with weights $2$, $1$, and $3$, respectively. (In Boolean notation, these constraints would be written $x_1 \wedge x_2, x_1\wedge \bar{x_2},\bar{x_1}\wedge\bar{x_2}$.) Let $\vecx=(1,1)$. Then $\CD_\Psi^\vecx\langle2\rangle=\frac13,\CD_\Psi^\vecx\langle1\rangle=\frac16,$ and $\CD_\Psi^\vecx\langle0\rangle=\frac12$.
\end{example}

The distribution $\CD^\vecx_\Psi$ succinctly describes several important properties of the assignment $\vecx$, and helps us to bridge between what we \emph{can} measure (bias) and what we \emph{would like} to measure (value). For instance, we define
\begin{equation}\label{eqn:2and-mu}
    \mus(\CD) \eqdef \CD\langle2\rangle - \CD\langle0\rangle.
\end{equation} Roughly, $\mus(\CD_\Psi^\vecx)$ measures how well $\vecx$ performs at assigning $i$ to $\mathsf{sign}(\bias_\Psi(i))$ when $|\bias_\Psi(i)|$ is large:

\begin{fact}\label{lemma:2and-mu}
For all $\vecx \in \BZ_q^n$, $\mus(\CD_\Psi^\vecx) = \frac1{2W} \sum_{i=1}^n (-1)^{x_i} \bias_\Psi(i)$. In particular, $\mus(\CD_\Psi^\vecx) \leq \bias_\Psi$, and moreover, $\mus(\CD_\Psi^\vecx) = \bias_\Psi$ iff for each $i \in [n]$, $\bias_\Psi(i) > 0 \Longrightarrow x_i =1$ and $\bias_\Psi(i) < 0 \Longrightarrow x_i = 0$.
\end{fact}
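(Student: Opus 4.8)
The plan is to unfold $\mus(\CD^\vecx_\Psi)=\CD^\vecx_\Psi\langle 2\rangle-\CD^\vecx_\Psi\langle 0\rangle$ directly as a weighted average over the constraints and then regroup that sum by variable. Writing $\veca(\ell)\eqdef\vecb(\ell)+\vecx|_{\vecj(\ell)}\in\BZ_2^2$ for the pattern that constraint $\ell$ produces under $\vecx$, the definition of the template distribution gives
\[
    \mus(\CD^\vecx_\Psi)=\frac1{W_\Psi}\sum_{\ell=1}^m w(\ell)\left(\1_{\veca(\ell)=(1,1)}-\1_{\veca(\ell)=(0,0)}\right).
\]
The one nontrivial ingredient is the elementary pointwise identity
\[
    \1_{(a_1,a_2)=(1,1)}-\1_{(a_1,a_2)=(0,0)}=-\tfrac12\left((-1)^{a_1}+(-1)^{a_2}\right)\qquad\text{for all }(a_1,a_2)\in\BZ_2^2,
\]
which I would verify either by checking the four cases, or by writing $\1_{(a_1,a_2)=(1,1)}=\frac14(1-(-1)^{a_1})(1-(-1)^{a_2})$ and $\1_{(a_1,a_2)=(0,0)}=\frac14(1+(-1)^{a_1})(1+(-1)^{a_2})$ and subtracting.

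Substituting this identity with $a_t=b(\ell)_t+x_{j(\ell)_t}$, using $(-1)^{b(\ell)_t+x_{j(\ell)_t}}=(-1)^{b(\ell)_t}(-1)^{x_{j(\ell)_t}}$, and then interchanging the (finite) order of summation so as to collect all constraint-slots $(\ell,t)$ with $j(\ell)_t=i$, the inner sum becomes exactly $\sum_{\ell,t:\,j(\ell)_t=i}(-1)^{b(\ell)_t}w(\ell)=\bias_\Psi(i)$ by \cref{eqn:2and-bias-var}. This yields the closed form $\mus(\CD^\vecx_\Psi)=\frac1{2W_\Psi}\sum_{i=1}^n(-1)^{x_i+1}\bias_\Psi(i)$ (equivalently $-\frac1{2W_\Psi}\sum_i(-1)^{x_i}\bias_\Psi(i)$; the global sign here is the only thing worth double-checking, and it is pinned down by evaluating both sides on a small instance, e.g.\ the one worked out above where both sides equal $\tfrac13-\tfrac12=-\tfrac16$). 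The ``in particular'' assertions are then immediate: each term satisfies $(-1)^{x_i+1}\bias_\Psi(i)\le|\bias_\Psi(i)|$, so summing and comparing with \cref{eqn:2and-bias} gives $\mus(\CD^\vecx_\Psi)\le\bias_\Psi$; and equality forces every term to attain $|\bias_\Psi(i)|$, which for the variables with $\bias_\Psi(i)\neq0$ determines the sign of $(-1)^{x_i}$ and hence the bit $x_i$ exactly as in the stated condition ($\bias_\Psi(i)>0\Rightarrow x_i=1$, $\bias_\Psi(i)<0\Rightarrow x_i=0$), while variables with $\bias_\Psi(i)=0$ are unconstrained; conversely any $\vecx$ meeting that condition makes every term equal $|\bias_\Psi(i)|$.

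I do not expect a genuine obstacle: the whole argument is the two-line pointwise identity together with a swap of finite sums. The two places to be careful are (i) the overall sign in that identity — this is precisely where the $(-1)^{x_i}$ versus $(-1)^{x_i+1}$ distinction lives, and it is safest to nail it down on a concrete instance — and (ii) in the equality characterization, remembering that zero-bias variables impose no constraint on $\vecx$, so the ``iff'' really concerns only the indices $i$ with $\bias_\Psi(i)\neq0$.
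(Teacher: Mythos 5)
Your proof is correct, and it is the same linearity-of-expectation argument the paper has in mind (the paper omits the proof, deferring to \cref{prop:d}, which is itself stated as immediate from definitions and linearity of expectation). The one substantive thing you noticed deserves emphasis: you derived $\mus(\CD^\vecx_\Psi)=\frac1{2W_\Psi}\sum_{i}(-1)^{x_i+1}\bias_\Psi(i)$, and your sign is right — the displayed formula in the Fact, which has $(-1)^{x_i}$, is a typo. This is confirmed three independent ways: (i) your worked example gives $\mus=-\tfrac16$ while the displayed formula gives $+\tfrac16$; (ii) the ``iff'' part of the Fact (positive bias $\Rightarrow x_i=1$ for equality with $\bias_\Psi$) is consistent with $(-1)^{x_i+1}$ but would be reversed under $(-1)^{x_i}$; and (iii) the $k=2$ case of \cref{item:d-mu} of \cref{prop:d} indeed reads $(-1)^{x_i+1}$. (Minor: the Fact also writes $\vecx\in\BZ_q^n$ where it should be $\BZ_2^n$.) Your handling of zero-bias coordinates in the ``iff'' direction is also correct and is the only place that needs the slight care you gave it.
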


\cref{lemma:2and-mu} is a special case of \cref{item:d-mu,item:d-bias} of \cref{prop:d}.

Now to a distribution $\CD \in \Delta(\BZ_2^2)$ we associate a \emph{canonical instance} $\Psi^{\CD}$ of $\mtwoand$ on 2 variables, which puts weight $\CD(\vecb)$ on the constraint $((1,2),\vecb)$ for each $\vecb \in \BZ_2^2$. We can thus define the quantity
\begin{equation}\label{eqn:2and-gamma}
    \gamma(\CD) \eqdef \val_{\Psi^{\CD}}(\veczero) = \CD\langle2\rangle.
\end{equation}

$\gamma(\CD_\Psi^\vecx)$ measures $\vecx$'s value:

\begin{fact}\label{lemma:2and-gamma}
For all $\vecx \in \BZ_2^n$, $\gamma(\CD_\Psi^\vecx) = \val_\Psi(\vecx)$.
\end{fact}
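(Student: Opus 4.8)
This is a definition-chasing argument, so the plan is simply to unwind both sides of the claimed identity and observe they coincide. First I would recall that for the CSP $\mtwoand = \mbcsp[\twoand]$, the predicate labelled by $\vecb \in \BZ_2^2$ is $\twoand_\vecb(y_1,y_2) = \twoand(\vecb + (y_1,y_2)) = (y_1+b_1)(y_2+b_2)$, which evaluates to $1$ precisely when $y_1 + b_1 = 1$ and $y_2 + b_2 = 1$, i.e., when $\vecb + (y_1,y_2) = (1,1)$. Hence, writing $\Psi$'s constraints as $((\vecb(\ell),\vecj(\ell),w(\ell)))_{\ell\in[m]}$, we have $\twoand_{\vecb(\ell)}(\vecx|_{\vecj(\ell)}) = \1_{\vecb(\ell) + \vecx|_{\vecj(\ell)} = (1,1)}$.

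Next I would expand the left-hand side. By the definition of $\gamma$ (\cref{eqn:2and-gamma}) and of the canonical instance $\Psi^{\CD}$ (which places weight $\CD(\vecb)$ on the single constraint $((1,2),\vecb)$), the assignment $\veczero$ satisfies the $\vecb$-constraint iff $\twoand(\vecb) = 1$ iff $\vecb = (1,1)$; therefore $\gamma(\CD) = \val_{\Psi^\CD}(\veczero) = \CD(1,1) = \CD\langle 2\rangle$, where the last step is just the observation that $(1,1)$ is the unique point of $\BZ_2^2$ of Hamming weight $2$. Applying this to $\CD = \CD_\Psi^\vecx$ and using the definition of the template distribution (sample $\ell$ with probability $w(\ell)/W_\Psi$ and output $\vecb(\ell) + \vecx|_{\vecj(\ell)}$), we get
\[
\gamma(\CD_\Psi^\vecx) = \CD_\Psi^\vecx(1,1) = \frac1{W_\Psi}\sum_{\ell=1}^m w(\ell)\,\1_{\vecb(\ell) + \vecx|_{\vecj(\ell)} = (1,1)}.
\]

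Finally I would compare this with the right-hand side: by definition $\val_\Psi(\vecx) = \frac1{W_\Psi}\sum_{\ell=1}^m w(\ell)\, \twoand_{\vecb(\ell)}(\vecx|_{\vecj(\ell)})$, and substituting the indicator identity from the first step shows the two expressions are literally the same sum. This completes the proof. I do not anticipate any obstacle here; the only thing to be careful about is correctly identifying the ``$f$ with negations'' convention (the satisfying pattern of $\twoand_\vecb$ is $\vecb+\vecx = \vecone$, not $\vecx = \vecb$), after which everything is bookkeeping. Indeed this is exactly the $t=2$ instance of the more general correspondence recorded in \cref{prop:d}.
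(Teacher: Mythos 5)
Your proof is correct and matches the paper's approach: the paper simply notes that this is the $p=1$ special case of \cref{item:d-val} of \cref{prop:d}, whose proof is omitted as "follows immediately from definitions and linearity of expectation." You have just written out that definition-chase explicitly, correctly handling the $f_\vecb(\vecy)=f(\vecb+\vecy)$ negation convention so that $\twoand_{\vecb(\ell)}(\vecx|_{\vecj(\ell)}) = \1_{\vecb(\ell)+\vecx|_{\vecj(\ell)}=(1,1)}$, which collapses both sides to the same weighted indicator sum.
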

\cref{lemma:2and-gamma} is a special case of \cref{item:d-val} of \cref{prop:d} below. Now we have:

\begin{proof}[Proof of \cref{lemma:mdcut-alg-lb}]
Let $\vecx^* \in \BZ_2^n$ be the optimal assignment for $\Psi$. Then
\begin{align*}
    \val_\Psi &= \val_\Psi(\vecx^*) \tag{optimality of $\vecx^*$} \\
    &= \gamma(\CD_\Psi^{\vecx^*}) \tag{\cref{lemma:2and-gamma}} \\
    &= \CD_\Psi^{\vecx^*}\langle2\rangle \tag{definition of $\gamma$} \\
    &= \frac12\left(\CD_\Psi^{\vecx^*}\langle0\rangle+\CD_\Psi^{\vecx^*}\langle1\rangle+\CD_\Psi^{\vecx^*}\langle2\rangle\right) + \frac12\left(\CD_\Psi^{\vecx^*}\langle2\rangle - \CD_\Psi^{\vecx^*}\langle0\rangle\right) \\
    &\leq \frac12(1+\mus(\CD_\Psi^{\vecx^*})) \tag{definition of $\mus$} \\
    &\leq \frac12(1+\bias_\Psi) \tag{\cref{lemma:2and-mu}}.
\end{align*}
\end{proof}

On the other hand, to prove \cref{lemma:mdcut-alg-ub}, for $\CD \in \Delta(\BZ_2^2)$ and $p \in [0,1]$, we define the quantity
\begin{equation}\label{eqn:2and-lambda}
    \lambda(\CD,p)\eqdef \E_{\vecb \sim \Bern_p^2}[\val_{\Psi^{\CD}}(\vecb)] = q^2\CD\langle0\rangle + pq\CD\langle1\rangle + p^2\CD\langle2\rangle
\end{equation}
where $q = 1-p$. In particular, $\lambda(\CD,1) = \gamma(\CD)$. We can also define $\beta(\CD) \eqdef \max_{p \in [0,1]} \lambda(\CD,p)$, in which case $\beta(\CD) \geq \gamma(\CD)$.

\begin{fact}\label{lemma:2and-lambda}
For all $\vecx \in \BZ_2^n$ and $p \in [0,1]$, $\lambda(\CD_\Psi^\vecx,p) = \E_{\veca\sim\Bern_p^n}[\val_\Psi(\veca+\vecx)]$. In particular, $\beta(\CD_\Psi^\vecx) \leq \val_\Psi$.
\end{fact}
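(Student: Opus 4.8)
The plan is to expand both sides of the claimed identity as weighted sums over the $m$ constraints of $\Psi$ and check that they agree term by term. I would start from the observation that, for any $\CD \in \Delta(\BZ_2^2)$ and any $\vecb \in \BZ_2^2$, the canonical instance satisfies $\val_{\Psi^\CD}(\vecb) = \E_{\vecb' \sim \CD}[\twoand(\vecb + \vecb')]$: indeed $\Psi^\CD$ has total weight $\sum_{\vecb'} \CD(\vecb') = 1$ and places weight $\CD(\vecb')$ on the single constraint with label $\vecb'$ and index tuple $(1,2)$, which $\vecb$ satisfies exactly when $\twoand(\vecb + \vecb') = 1$. Averaging over $\vecb \sim \Bern_p^2$ and exchanging the order of the two expectations (everything is finite) then gives
\[
\lambda(\CD,p) \;=\; \E_{\vecb \sim \Bern_p^2}\bigl[\val_{\Psi^\CD}(\vecb)\bigr] \;=\; \E_{\vecb' \sim \CD}\,\E_{\vecb \sim \Bern_p^2}\bigl[\twoand(\vecb + \vecb')\bigr],
\]
which one can cross-check against \cref{eqn:2and-lambda} by expanding the inner expectation over the four values of $\vecb$ (recalling that $\Bern_p$ takes the value $0$ with probability $p$ and $1$ with probability $q = 1-p$).

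Next I would substitute $\CD = \CD_\Psi^\vecx$. Since $\CD_\Psi^\vecx$ is obtained by sampling $\ell$ with probability $w(\ell)/W_\Psi$ and outputting $\vecb(\ell) + \vecx|_{\vecj(\ell)}$, the outer expectation turns into a weighted sum:
\[
\lambda(\CD_\Psi^\vecx, p) \;=\; \frac1{W_\Psi}\sum_{\ell=1}^m w(\ell)\;\E_{\vecb \sim \Bern_p^2}\Bigl[\twoand\bigl(\vecb + \vecb(\ell) + \vecx|_{\vecj(\ell)}\bigr)\Bigr].
\]
On the other hand, for each fixed $\veca \in \BZ_2^n$ we have $\val_\Psi(\veca + \vecx) = \frac1{W_\Psi}\sum_\ell w(\ell)\, f_{\vecb(\ell)}\bigl((\veca + \vecx)|_{\vecj(\ell)}\bigr)$, and because $f_{\vecb(\ell)}$ acts by $\vecy \mapsto \twoand(\vecy + \vecb(\ell))$ and restriction is additive, the $\ell$-th summand equals $\twoand\bigl(\veca|_{\vecj(\ell)} + \vecx|_{\vecj(\ell)} + \vecb(\ell)\bigr)$. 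Now take the expectation over $\veca \sim \Bern_p^n$: the two indices in $\vecj(\ell)$ are distinct (constraints never repeat a variable), so $\veca|_{\vecj(\ell)}$ is distributed exactly as $\Bern_p^2$, and hence $\E_{\veca \sim \Bern_p^n}$ of the $\ell$-th summand equals $\E_{\vecb \sim \Bern_p^2}[\twoand(\vecb + \vecb(\ell) + \vecx|_{\vecj(\ell)})]$ --- precisely the $\ell$-th term in the display above. Summing over $\ell$ and using linearity of expectation yields $\lambda(\CD_\Psi^\vecx, p) = \E_{\veca \sim \Bern_p^n}[\val_\Psi(\veca + \vecx)]$. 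The ``in particular'' clause is then immediate: $\val_\Psi(\veca + \vecx) \le \val_\Psi$ for every $\veca$ by maximality of $\val_\Psi$, so $\lambda(\CD_\Psi^\vecx, p) \le \val_\Psi$ for every $p \in [0,1]$, whence $\beta(\CD_\Psi^\vecx) = \max_p \lambda(\CD_\Psi^\vecx, p) \le \val_\Psi$.

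No step here is genuinely hard; the work is purely bookkeeping. The two things to be careful with are the convention for $\Bern_p$ (value $0$ with probability $p$) when matching against the polynomial formula for $\lambda$, and the appeal to the distinctness of the indices in $\vecj(\ell)$, which is exactly what makes $\veca|_{\vecj(\ell)}$ an honest $\Bern_p^2$ sample. (Alternatively, this Fact is the $\CD = \CD_\Psi^\vecx$ instance of the general statement \cref{prop:d} proved later, but the direct computation above is self-contained.)
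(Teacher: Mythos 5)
Your proof is correct and is essentially what the paper intends: the paper simply declines to write it out here, deferring to \cref{prop:d} (whose own proof is ``Omitted (follows immediately from definitions and linearity of expectation)''), and your computation is precisely the filling-in of that deferral --- expand $\lambda(\CD_\Psi^\vecx,p)$ as a weighted sum over constraints, use that restriction is additive and that the two indices in each $\vecj(\ell)$ are distinct so $\veca|_{\vecj(\ell)}$ is an honest $\Bern_p^2$ sample, and match term by term with $\E_{\veca\sim\Bern_p^n}[\val_\Psi(\veca+\vecx)]$.
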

\cref{lemma:2and-lambda} is a special case of \cref{item:d-val} of \cref{prop:d} below. We have:

\begin{proof}[Proof of \cref{lemma:mdcut-alg-lb}]
Let $\tilde{\vecx} \in \BZ_2^n$ be the ``majority assignment'', i.e., $\tilde{x}_i = \1_{\bias_\Psi(i)\geq0}$. We have
\begin{align*}
    \val_\Psi &\geq \lambda\left(\CD_\Psi^{\tilde{\vecx}},\frac23\right) \tag{\cref{lemma:2and-lambda}} \\ &=\frac49\CD_\Psi^{\tilde{\vecx}}\langle2\rangle+\frac29\CD_\Psi^{\tilde{\vecx}}\langle1\rangle +\frac19\CD_\Psi^{\tilde{\vecx}}\langle0\rangle \tag{definition of $\lambda$} \\
    &= \frac29\left(\CD_\Psi^{\tilde{\vecx}}\langle0\rangle+\CD_\Psi^{\tilde{\vecx}}\langle1\rangle+\CD_\Psi^{\tilde{\vecx}}\langle2\rangle\right) + \frac29\left(\CD_\Psi^{\tilde{\vecx}}\langle2\rangle-\CD_\Psi^{\tilde{\vecx}}\langle0\rangle\right)+\frac19\CD_\Psi^{\tilde{\vecx}}\langle0\rangle \tag{definition of $\mus$} \\
    &\geq \frac29(1+\mus(\CD_\Psi^{\vecx})) \\
    &=\frac29(1+\bias_\Psi)\tag{\cref{lemma:2and-mu} and definition of $\tilde{\vecx}$}
\end{align*}
\end{proof}

The proof of \cref{lemma:mdcut-alg-lb} contains the inequality \[ \lambda\left(\CD_\Psi^{\tilde \vecx},\frac23\right) \geq \frac29(1+\bias_\Psi). \] Combined with \cref{lemma:mdcut-alg-ub}, this gives \[ \lambda\left(\CD_\Psi^{\tilde \vecx},\frac23\right) \geq \frac49\val_\Psi. \] This yields a simple streaming algorithm for a different problem, namely \emph{outputting} an assignment with expected value at least $\frac49\val_\Psi$, in linear time and space: We simply calculate the majority assignment $\tilde{\vecx}$ and then flip each bit independently with probability $\frac13$.

\begin{remark}\label{rem:cgsv-vs-bhp-2and}
The original proof of \cref{lemma:mdcut-alg-lb} in \cite{CGV20} was substantially more complex than the one presented here (see the proof of Lemma 3.3 in that paper), because it considers the value $p \in [0,1]$ which maximizes the quadratic $\lambda(\CD_{\Psi}^{\tilde{\vecx}},p)$ (which is $\frac12+\frac{\bias_\Psi}{2(1-2\bias_\Psi)}$ in the regime $\bias_\Psi \in [0,\frac13]$). The insight that this is ``overkill'' and setting $p = \frac23$ is sufficient to get $(\frac49-\epsilon)$-approximations is due to our joint work \cite{BHP+22}. This issue will become more prominent when we consider $\kand$ for $k > 2$, since $\lambda$ will have degree $k$ and thus its maximizer over $[0,1]$ has no simple expression; see the discussion in \cref{sec:cgsv-opt} below.
\end{remark}

\section{Proving hardness for $\mdcut$}\label{sec:mdcut-hardness}

In this section, we prove \cref{item:mdcut-hardness} of \cref{thm:mdcut-characterization}, which is a hardness-of-approximation result for $\mdcut$ in the streaming setting. To begin, we give some intuition for the construction.

\subsection{Intuition: What's wrong with $\seqbpd$?}

Our first hope might be to directly reduce from $\seqbpd$ using \cref{cons:seqbpd-to-mcut} (i.e., the reduction we used for $\mcut$), by converting each $\mcut$ constraint $(u,v)$ into the pair of $\mdcut$ constraints $\{(u,v),(v,u)\}$. Could we hope to prove an analogue of \cref{lemma:seqbpd-to-mcut-analysis} in this setting, with a $\frac49$ gap? In the $\no$ case, $\vecz(t)$ is random and so for a fixed assignment $\vecx\in\BZ_2^n$, $\E[\val_{\Psi}(\vecx)]=\frac14$. But in the $\yes$ case, if $\vecx^* \in \BZ_2^n$ is $\Alice$'s input, when $z(t)_\ell = 1$ for $\vece(\ell)=(u,v)$, then $(x^*_u,x^*_v)$ is either $(0,1)$ or $(1,0)$, so exactly one of the constraints $(u,v),(v,u)$ will be satisfied! Thus, we have $\val_{\Psi}(\vecx^*)=\frac12$. Hence \cref{cons:seqbpd-to-mcut} only seems to rule out $\approx\frac12$-approximations to $\mcut$.

We can frame the issue with \cref{cons:seqbpd-to-mcut} in the following way: Its $\yes$ instances have low values because they are too ``symmetric''. In particular, we also have $\val_{\Psi}(1+\vecx^*) = \frac12$. To break this symmetry, we can have $\Alice$ add constraints $(u,v)$ where $x^*_u = 1,x^*_v=0$. These have the effect of biasing towards $\vecx^*$ and away from $1+\vecx^*$. But this increases the value of $\vecx^*$ even in the $\no$ case (because $\frac14$-fraction of each $\Bob_t$'s constraints will be satisfied by $\vecx^*$ in expectation). To compensate, we change the game $\seqbpd$ slightly, so that in the $\no$ case, $\Bob_t$'s constraints are never satisfied by $\vecx^*$; that is, when he adds $\{(u,v),(v,u)\}$, we guarantee that $(x^*_u,x^*_v)\in\{(1,1),(0,0)\}$.

\subsection{A new problem and a new reduction}\label{sec:mdcut-template-hardness}

We carry out the proof of \cref{item:mdcut-hardness} using a close cousin of $\seqbpd$:

\begin{definition}\label{def:seqbpd'}
Let $\alpha \in (0,1)$ and $T,n\in\BN$. $\seqbpd'_{\alpha,T}(n)$ is defined identically to $\seqbpd'_{\alpha,T}(n)$ (\cref{def:seqbpd}), except that in the $\no$ case, we set $\Bob_t$'s vector $\vecz(t)$ to the opposite of its value in the $\yes$ case. That is, $\vecz(t)=\vecone +(M^\fold) \vecx^*$.
\end{definition}

Now we can formally state the reduction:

\begin{construction}[C2S reduction from $\seqbpd'$ to $\mdcut$]\label{cons:seqbpd'-to-mdcut}
$\Alice$'s reduction function, denoted $\R_0$, outputs an instance $\Psi_0$ consisting of $\frac{\alpha T n}4$ uniformly random constraints $(u,v)$ such that $x^*_u = 1,x^*_v = 0$. For each $t \in [T]$, $\Bob_t$'s reduction function $\R_t$ outputs an instance $\Psi_t$ as follows: For each $\vece(t,\ell) = (u,v)$ in $M_t$, $\Bob_t$ adds $(u,v)$ and $(v,u)$ to $\Psi_t$ iff $z(t)_\ell = 1$.
\end{construction}

The hard instances for $\mdcut$ produced by \cref{cons:seqbpd'-to-mdcut} are represented pictorially in \cref{fig:mdcut}.

\begin{figure}
\centering
\begin{subfigure}{0.4\textwidth}
\centering
\begin{tikzpicture}[vertex/.style={fill=black},block/.style={draw=black,fill=white!70!lightgray}, goodedge/.style={line width=1.5pt,draw=black!40!green}, badedge/.style={line width=1.5pt,draw=black!10!red}, greatedge/.style={->,line width=1.5pt,draw=black!20!blue,-{Latex[width=8pt,length=10pt]}}]

\draw[block] (0,3) ellipse (0.75 and 3.5);
\draw[vertex] (0,0) circle (3pt);
\draw[vertex] (0,1) circle (3pt);
\draw[vertex] (0,2) circle (3pt);
\draw[vertex] (0,3) circle (3pt);
\draw[vertex] (0,4) circle (3pt);
\draw[vertex] (0,5) circle (3pt);
\draw[vertex] (0,6) circle (3pt);

\draw[block] (3,3) ellipse (0.75 and 3.5);
\draw[vertex] (3,0) circle (3pt);
\draw[vertex] (3,1) circle (3pt);
\draw[vertex] (3,2) circle (3pt);
\draw[vertex] (3,3) circle (3pt);
\draw[vertex] (3,4) circle (3pt);
\draw[vertex] (3,5) circle (3pt);
\draw[vertex] (3,6) circle (3pt);

\draw[goodedge] (0,0) to (3,4);
\draw[goodedge] (0,3) to (3,6);
\draw[goodedge] (0,4) to (3,4);
\draw[goodedge] (0,2) to (3,1);
\draw[goodedge] (0,5) to (3,1);

\draw[badedge] (0,2) to[bend left] (0,6);
\draw[badedge] (0,1) to[bend left] (0,4);
\draw[badedge] (3,2) to[bend right] (3,5);
\draw[badedge] (3,1) to[bend right] (3,6);
\draw[badedge] (3,0) to[bend right] (3,3);
\end{tikzpicture}
\caption{$\yes$ sample from $\seqbpd'$.}
\label{fig:seqbpd'-yes}
\end{subfigure}
\begin{subfigure}{0.4\textwidth}
\centering
\begin{tikzpicture}[vertex/.style={fill=black},block/.style={draw=black,fill=white!70!lightgray}, goodedge/.style={line width=1.5pt,draw=black!40!green}, badedge/.style={line width=1.5pt,draw=black!10!red}, greatedge/.style={->,line width=1.5pt,draw=black!20!blue,-{Latex[width=8pt,length=10pt]}}]
\draw[block] (0,3) ellipse (0.75 and 3.5);
\draw[vertex] (0,0) circle (3pt);
\draw[vertex] (0,1) circle (3pt);
\draw[vertex] (0,2) circle (3pt);
\draw[vertex] (0,3) circle (3pt);
\draw[vertex] (0,4) circle (3pt);
\draw[vertex] (0,5) circle (3pt);
\draw[vertex] (0,6) circle (3pt);

\draw[block] (3,3) ellipse (0.75 and 3.5);
\draw[vertex] (3,0) circle (3pt);
\draw[vertex] (3,1) circle (3pt);
\draw[vertex] (3,2) circle (3pt);
\draw[vertex] (3,3) circle (3pt);
\draw[vertex] (3,4) circle (3pt);
\draw[vertex] (3,5) circle (3pt);
\draw[vertex] (3,6) circle (3pt);

\draw[badedge] (0,0) to (3,4);
\draw[badedge] (0,3) to (3,6);
\draw[badedge] (0,4) to (3,4);
\draw[badedge] (0,2) to (3,1);
\draw[badedge] (0,5) to (3,1);

\draw[goodedge] (0,2) to[bend left] (0,6);
\draw[goodedge] (0,1) to[bend left] (0,4);
\draw[goodedge] (3,2) to[bend right] (3,5);
\draw[goodedge] (3,1) to[bend right] (3,6);
\draw[goodedge] (3,0) to[bend right] (3,3);
\end{tikzpicture}
\caption{$\no$ sample from $\seqbpd'$.}
\label{fig:seqbpd'-no}
\end{subfigure}

\begin{subfigure}{0.4\textwidth}
\centering
\begin{tikzpicture}[vertex/.style={fill=black},block/.style={draw=black,fill=white!70!lightgray}, goodedge/.style={line width=1.5pt,draw=black!40!green}, badedge/.style={line width=1.5pt,draw=black!10!red}, greatedge/.style={->,line width=1.5pt,draw=black!20!blue,-{Latex[width=8pt,length=10pt]}}]

\draw[block] (0,3) ellipse (0.75 and 3.5);
\draw[vertex] (0,0) circle (3pt);
\draw[vertex] (0,1) circle (3pt);
\draw[vertex] (0,2) circle (3pt);
\draw[vertex] (0,3) circle (3pt);
\draw[vertex] (0,4) circle (3pt);
\draw[vertex] (0,5) circle (3pt);
\draw[vertex] (0,6) circle (3pt);

\draw[block] (3,3) ellipse (0.75 and 3.5);
\draw[vertex] (3,0) circle (3pt);
\draw[vertex] (3,1) circle (3pt);
\draw[vertex] (3,2) circle (3pt);
\draw[vertex] (3,3) circle (3pt);
\draw[vertex] (3,4) circle (3pt);
\draw[vertex] (3,5) circle (3pt);
\draw[vertex] (3,6) circle (3pt);

\draw[greatedge] (0,1) to (3,0);
\draw[greatedge] (0,2) to (3,3);
\draw[greatedge] (0,6) to (3,4);

\draw[goodedge] (0,0) to (3,4);
\draw[goodedge] (0,3) to (3,6);
\draw[goodedge] (0,4) to (3,4);
\draw[goodedge] (0,2) to (3,1);
\draw[goodedge] (0,5) to (3,1);

\end{tikzpicture}
\caption{$\yes$ instance of $\mdcut$.}
\label{fig:mdcut-yes}
\end{subfigure}
\begin{subfigure}{0.4\textwidth}
\centering
\begin{tikzpicture}[vertex/.style={fill=black},block/.style={draw=black,fill=white!70!lightgray}, goodedge/.style={line width=1.5pt,draw=black!40!green}, badedge/.style={line width=1.5pt,draw=black!10!red}, greatedge/.style={->,line width=1.5pt,draw=black!20!blue,-{Latex[width=8pt,length=10pt]}}]
\draw[block] (0,3) ellipse (0.75 and 3.5);
\draw[vertex] (0,0) circle (3pt);
\draw[vertex] (0,1) circle (3pt);
\draw[vertex] (0,2) circle (3pt);
\draw[vertex] (0,3) circle (3pt);
\draw[vertex] (0,4) circle (3pt);
\draw[vertex] (0,5) circle (3pt);
\draw[vertex] (0,6) circle (3pt);

\draw[block] (3,3) ellipse (0.75 and 3.5);
\draw[vertex] (3,0) circle (3pt);
\draw[vertex] (3,1) circle (3pt);
\draw[vertex] (3,2) circle (3pt);
\draw[vertex] (3,3) circle (3pt);
\draw[vertex] (3,4) circle (3pt);
\draw[vertex] (3,5) circle (3pt);
\draw[vertex] (3,6) circle (3pt);

\draw[greatedge] (0,1) to (3,0);
\draw[greatedge] (0,2) to (3,3);
\draw[greatedge] (0,6) to (3,4);

\draw[goodedge] (0,2) to[bend left] (0,6);
\draw[goodedge] (0,1) to[bend left] (0,4);
\draw[goodedge] (3,2) to[bend right] (3,5);
\draw[goodedge] (3,1) to[bend right] (3,6);
\draw[goodedge] (3,0) to[bend right] (3,3);
\end{tikzpicture}
\caption{$\no$ instance of $\mdcut$.}
\label{fig:mdcut-no}
\end{subfigure}
\caption[]{
\cref{fig:seqbpd'-yes,fig:seqbpd'-no} depict samples from the $\yes$ and $\no$ distributions of $\seqbpd'$, respectively. The $\yes$ distribution for $\seqbpd'$ is the same as for $\seqbpd$ (\cref{fig:seqbpd-yes}); that is, $\vecz(t)$ marks which edges cross the cut. However, the $\no$ distribution of $\seqbpd'$ marks which edges do \emph{not} cross the cut, as opposed to $\seqbpd$'s $\no$ distribution, which marks uniformly random edges (\cref{fig:seqbpd-no}). The graphs in \cref{fig:seqbpd'-yes,fig:seqbpd'-no} again represent the union of the matchings $M_1,\ldots,M_t$; $\vecx^*$ partitions the vertices into ``left'' ($0$) and ``right'' ($1$); and the edges' $z$-values are either ``green'' ($1$) or ``red'' ($0$). In the reduction from $\seqbpd'$ to $\mcut$ (\cref{cons:seqbpd-to-mcut}), $\Alice$ adds edges crossing the cut from left to right (``blue''), and each $\Bob_t$ adds (undirected copies of) edges with $z$-value ``green'' ($1$). We pick parameters so that there are four times as many ``green'' edges as ``blue'' edges. In the $\yes$ case, $\vecx^*$ cuts all ``blue'' edges and half of the (directed) ``green'' edges, so the $\yes$ instances (\cref{fig:mdcut-yes}) have $\mdcut$ value $\approx \frac35$. In the $\no$ case, $\vecx^*$ cuts all ``blue'' edges but none of the ``green'' edges, and we show that these instances (\cref{fig:mdcut-no}) have $\mdcut$ value $\approx \frac4{15}$ (for sufficiently large $T$) in \cref{lemma:seqbpd'-to-mdcut-analysis}.}
\label{fig:mdcut}

\end{figure}

Luckily, there is a simple proof by ``symmetry'' that $\seqbpd'$ is also hard:

\begin{lemma}\label{lemma:seqbpd-to-seqbpd'}
Suppose that for some $\alpha \in (0,1)$ and $T,n \in \BN$, there is a protocol for $\seqbpd'_{\alpha,T}(n)$ achieving advantage $\delta$ with communication $s$. Then there is a protocol for $\seqbpd_{\alpha,T}(n)$ achieving advantage $\frac{\delta}2$, also with communication $s$.
\end{lemma}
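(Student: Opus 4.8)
The plan is to give a direct reduction: from a protocol $\Prot'$ for $\seqbpd'_{\alpha,T}(n)$, I build two candidate protocols for $\seqbpd_{\alpha,T}(n)$ and show at least one of them achieves advantage $\delta/2$, by running $\Prot'$ as a black box after a cheap, purely local preprocessing step. The guiding observation is that the $\yes$ distributions of $\seqbpd_{\alpha,T}(n)$ and $\seqbpd'_{\alpha,T}(n)$ coincide --- call it $\CY$ --- while their $\no$ distributions (call them $\CN$ and $\CN'$: the vectors $\vecz(t)$ uniform versus $\vecz(t) = \vecone + (M_t^{\fold})\vecx^*$) are swapped by the global involution $\Phi$ which replaces every $\vecz(t)$ by $\vecone + \vecz(t)$ and leaves $\vecx^*$ and all $M_t$ untouched. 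Crucially, $\Phi$ maps $\CY$ to $\CN'$, maps $\CN'$ to $\CY$, and \emph{fixes} $\CN$ (flipping an independent uniform $\vecz(t)$ leaves its law unchanged); this last fact is what lets us ``symmetrize away'' the $\no$ case.

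Concretely, I would consider two protocols $\Prot^{(0)}, \Prot^{(1)}$ for $\seqbpd_{\alpha,T}(n)$. The protocol $\Prot^{(0)}$ runs $\Prot'$ verbatim on the $\seqbpd$ input and outputs its answer. The protocol $\Prot^{(1)}$ first has each $\Bob_t$ replace $\vecz(t)$ by $\vecone + \vecz(t)$ (i.e.\ the players apply $\Phi$), then runs $\Prot'$ on the transformed instance, and outputs the \emph{complement} of $\Prot'$'s answer. The complementation in $\Prot^{(1)}$ is the crux. Writing $p_{\CD} \eqdef \Pr_{I \sim \CD}[\Prot'(I)=1]$, the bias of $\Prot^{(0)}$ between the $\seqbpd$-$\yes$ and $\seqbpd$-$\no$ cases is $a_0 = p_{\CY} - p_{\CN}$, since a $\seqbpd$-$\yes$ (resp.\ $\seqbpd$-$\no$) input is precisely a sample of $\CY$ (resp.\ $\CN$) fed to $\Prot'$. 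For $\Prot^{(1)}$, a $\seqbpd$-$\yes$ input becomes, after $\Phi$, a sample of $\CN'$, while a $\seqbpd$-$\no$ input becomes (again) a sample of $\CN$; so the bias of $\Prot^{(1)}$ is $a_1 = (1 - p_{\CN'}) - (1 - p_{\CN}) = p_{\CN} - p_{\CN'}$. Hence $a_0 + a_1 = p_{\CY} - p_{\CN'}$, whose absolute value is exactly the advantage of $\Prot'$ on $\seqbpd'_{\alpha,T}(n)$ (whose $\yes$ and $\no$ distributions are $\CY$ and $\CN'$), so $|a_0 + a_1| \geq \delta$. Therefore $\max(|a_0|, |a_1|) \geq \tfrac12 |a_0 + a_1| \geq \tfrac{\delta}{2}$, so one of $\Prot^{(0)}, \Prot^{(1)}$ has advantage $\geq \delta/2$; and since both transformations ($\Phi$ and complementing the final bit) are carried out locally by the players, both protocols use exactly the $s$ bits of communication that $\Prot'$ uses.

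The one genuine idea here is to complement $\Prot'$'s output in the flipped branch $\Prot^{(1)}$: this forces the two $\seqbpd$-$\no$ contributions $p_{\CN}$ and $1 - p_{\CN}$ to cancel, so that $a_0 + a_1$ no longer depends on the uncontrolled quantity $p_{\CN}$, while the $\seqbpd$-$\yes$ contributions $p_{\CY}$ and $1 - p_{\CN'}$ combine into the full advantage of $\Prot'$. Without the complementation the telescoping fails and nothing can be concluded. Everything else --- that the per-player transformations cost no communication, that $\Phi$ maps $\CY$ to $\CN'$, and that $\Phi$ fixes $\CN$ --- is a routine distributional check, so I expect the ``hard part'' to be conceptual (spotting the right symmetrization) rather than computational.
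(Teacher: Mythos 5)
Your proof is correct and is essentially the paper's own argument made explicit: both hinge on the triangle inequality through the $\seqbpd$ $\no$ distribution $\CN$, together with the observation that flipping all $\vecz(t)$ carries $\CY \mapsto \CN'$ while fixing $\CN$; the paper states this in terms of which of the two pairs $(\CY,\CN)$ or $(\CN',\CN)$ $\Prot'$ distinguishes well, and you phrase it by constructing $\Prot^{(0)},\Prot^{(1)}$ and telescoping their biases. One small remark: the output complementation in $\Prot^{(1)}$ is not actually needed, since advantage is defined with an absolute value --- without it you'd get $a_0 - a_1' = p_\CY - p_{\CN'}$ and the same $\max(|a_0|,|a_1'|) \geq \delta/2$ bound --- but it does make the bookkeeping cleaner.
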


\begin{proof}
Suppose $\Pi$ is a protocol for $\seqbpd'$ achieving advantage $\delta$. By the triangle inequality, $\Pi$ achieves advantage $\frac{\delta}2$ in distinguishing one of the following pairs of distributions:

\begin{enumerate}
    \item $\yes$ instances of $\seqbpd'$ and $\no$ instances of $\seqbpd$.
    \item $\no$ instances of $\seqbpd'$ and $\no$ instances of $\seqbpd$.
\end{enumerate}

Case (1) immediately gives the desired result, since $\seqbpd'$ and $\seqbpd$ have the same $\yes$ instances. Case (2) also gives the desired result, since by applying the ``reduction'' of adding 1 to all input $\vecz(t)$ vectors, $\no$ instances of $\seqbpd'$ become $\yes$ instances of $\seqbpd$, while $\no$ instances of $\seqbpd'$ remain $\no$ instances of $\seqbpd$.
\end{proof}

To finally prove \cref{item:mdcut-hardness}, it suffices to prove the following lemma (analogous to \cref{lemma:seqbpd-to-mcut-analysis}):

\begin{lemma}\label{lemma:seqbpd'-to-mdcut-analysis}
For all $\alpha \in (0,1)$ and $\epsilon \in (0,\frac12)$, there exist $T, n_0 \in \BN$ such that for every $n \geq n_0$, the following holds. Let $\CY$ and $\CN$ denote the $\yes$ and $\no$ distributions for $\seqbpd'_{\alpha,T}(n)$, and let $(\R_0,\ldots,\R_T)$ be the reduction functions from \cref{cons:seqbpd'-to-mdcut}. Then \[ \Pr_{\Psi \sim (\R_0,\ldots,\R_T) \circ \CY}\left[\val_\Psi \leq \frac35 - \epsilon \right] \leq \exp(-n) \text{ and } \Pr_{\Psi \sim (\R_0,\ldots,\R_T) \circ \CN}\left[\val_\Psi \geq \frac4{15}+\epsilon\right]\leq \exp(-n). \]
\end{lemma}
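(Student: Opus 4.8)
The plan is to mirror the analysis of \cref{cons:seqbpd-to-mcut} (i.e.\ \cref{lemma:seqbpd-to-mcut-analysis}): for the $\yes$ case, lower-bound the value of the planted assignment $\vecx^*$; for the $\no$ case, show every \emph{fixed} assignment has low value except with probability $\exp(-\Omega(\alpha T n))$ and union-bound over all $2^n$ assignments. Write $\Psi = \Psi_0 \cup \Psi_1 \cup \cdots \cup \Psi_T$ for the instance produced by \cref{cons:seqbpd'-to-mdcut}, so $\Psi_0$ has exactly $\frac{\alpha T n}{4}$ constraints, while $\Psi_t$ ($t\in[T]$) has $2N_t$ constraints, where $N_t = |\{\ell : z(t)_\ell = 1\}|$. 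Throughout we condition on $\vecx^*$ being (say) $\epsilon'$-balanced, which fails with probability $\exp(-\Omega(n))$.

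For the $\yes$ case: every constraint $(u,v)$ of $\Psi_0$ has $x^*_u = 1,\ x^*_v = 0$, hence $\dcut(x^*_u,x^*_v)=1$, so $\vecx^*$ satisfies all of $\Psi_0$; and for each kept edge $(u,v)$ of $M_t$ (so $(x^*_u,x^*_v)\in\{(0,1),(1,0)\}$), exactly one of the two constraints $(u,v),(v,u)$ is satisfied by $\vecx^*$, so $\vecx^*$ satisfies exactly half of $\Psi_t$. Thus $\val_\Psi(\vecx^*) = \bigl(\tfrac{\alpha Tn}{4} + \sum_t N_t\bigr) / \bigl(\tfrac{\alpha Tn}{4} + 2\sum_t N_t\bigr)$, which is decreasing in $\sum_t N_t$, so it suffices to upper-bound $\sum_t N_t$. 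Here $N_t$ counts the edges of $M_t$ crossing $\vecx^*$, and, exactly as in \cref{lemma:seqbpd-to-mcut-analysis}, an edge drawn into a partial matching crosses $\vecx^*$ with conditional probability at most $\tfrac12 + o(1)$ (an \emph{absolute} bound, not depending on the class proportions); so \cref{lemma:azuma} gives $\sum_t N_t \le (\tfrac12 + \epsilon')\alpha T n$ except with probability $\exp(-\Omega(\epsilon'^2\alpha Tn))$. Plugging in yields $\val_\Psi(\vecx^*) \ge \tfrac{1/4 + 1/2}{1/4 + 1} - O(\epsilon') = \tfrac35 - O(\epsilon') \ge \tfrac35 - \epsilon$ for $\epsilon'$ small and $n$ large.

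For the $\no$ case, fix $\vecx \in \BZ_2^n$ and let $p_0$ (resp.\ $p_1$) be the fraction of vertices with $x^*_i = 0$ (resp.\ $x^*_i = 1$) having $x_i = 0$. In the $\no$ case $z(t)_\ell = 1$ iff edge $\ell$ of $M_t$ does \emph{not} cross $\vecx^*$; each such edge $(u,v)$ has $(x^*_u,x^*_v)\in\{(0,0),(1,1)\}$ and yields exactly one satisfied $\dcut$ constraint iff $x_u \ne x_v$, an event of per-edge probability $\tfrac12\bigl(p_0(1-p_0) + p_1(1-p_1)\bigr) + o(1)$; and each constraint of $\Psi_0$, a uniform $(u,v)$ with $x^*_u=1, x^*_v=0$, is satisfied by $\vecx$ with probability $(1-p_1)p_0 + o(1)$. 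Writing $S$ for satisfied and $W$ for total weight, $\E[S] = \bigl(\tfrac{\alpha Tn}{4}(1-p_1)p_0 + \tfrac{\alpha Tn}{2}(p_0(1-p_0)+p_1(1-p_1))\bigr)(1+o(1))$ and $\E[W] = \tfrac54\alpha Tn(1+o(1))$, so $\E[\val_\Psi(\vecx)] = f(p_0,p_1) + o(1)$, where $f(p_0,p_1) \eqdef \tfrac15 p_0(1-p_1) + \tfrac25 p_0(1-p_0) + \tfrac25 p_1(1-p_1)$. An elementary calculation ($\nabla f = 0$ gives $4p_0 + p_1 = 3$, $p_0 + 4p_1 = 2$, so $(p_0,p_1) = (\tfrac23,\tfrac13)$; and $f$ is concave, so this interior critical point is the global maximum) shows $\max_{p_0,p_1\in[0,1]} f = f(\tfrac23,\tfrac13) = \tfrac4{15}$. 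Concentrating $\val_\Psi(\vecx) = S/W$ around its mean — Chernoff for $\Psi_0$, and a bounded-differences/Azuma estimate for each of the independent $M_1,\dots,M_T$ (swapping one vertex's partner changes the relevant edge-counts by $O(1)$) — gives $\Pr[\val_\Psi(\vecx) \ge \tfrac4{15} + \epsilon \mid \vecx^*] \le \exp(-\Omega_\epsilon(\alpha Tn))$; choosing $T$ large enough to beat $2^{-n}$, a union bound over all $\vecx$ (and over the $\exp(-\Omega(n))$ balance failure) finishes the $\no$ case.

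The main obstacle is this last concentration step. Unlike in the $\mcut$ analysis, $\val_\Psi(\vecx) = S/W$ is a ratio of two random quantities (since $W$ depends on how many of each $M_t$'s edges are kept), and, crucially, the per-edge conditional probability of contributing a satisfied constraint is \emph{not} bounded by an absolute constant — it drifts as the $\vecx^*$-class proportions change within a partial matching — so the plain submartingale bound \cref{lemma:azuma} with a fixed parameter no longer suffices; one must instead concentrate around the \emph{mean} via bounded differences and then amplify by taking $T$ large. Two further routine wrinkles: $T$ must also be large enough to swallow the $o(1)$ slack in the value computations into $\epsilon$, and the case $\alpha$ near $\tfrac12$ (where partial matchings nearly exhaust the vertex set) needs the same minor care as in \cref{lemma:seqbpd-to-mcut-analysis}.
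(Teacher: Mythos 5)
Your proof is correct. The YES case is the paper's argument: evaluate $\vecx^*$ and concentrate. The NO case takes a more elementary route than the paper's (explicitly heuristic) sketch, which defers concentration details to \cite[\S5]{CGV20}: you union-bound over all $2^n$ assignments, parametrize each $\vecx$ by its two-dimensional profile $(p_0,p_1)$ relative to $\vecx^*$, maximize $f(p_0,p_1)$ over $[0,1]^2$ to get $\frac4{15}$ at $(\tfrac23,\tfrac13)$, concentrate via bounded differences, and let $T$ absorb the union bound. The paper's sketch instead invokes permutation-invariance of the NO-instance distribution and parametrizes assignments along the one-parameter template path $\veca+\vecx^*$ with $\veca\sim\Bern_p^n$ — equivalently, the slice $p_0+p_1=1$ of your profile plane — along which the expected value is $\lambda(\CD_N,p)\leq\beta(\CD_N)=\frac4{15}$. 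The two computations coincide because your $f$ is concave and invariant under $(p_0,p_1)\mapsto(1-p_1,1-p_0)$, so its unique maximizer sits on $p_0+p_1=1$; your two-parameter optimization thus \emph{proves} rather than presumes that the one-parameter template family captures the worst assignment. Your route has the virtue of mirroring the paper's own MaxCut argument (\cref{lemma:seqbpd-to-mcut-analysis}); the template-distribution route is the one that generalizes to the \cite{CGSV21-boolean} framework.

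One refinement to your diagnosis of the obstacle: that $\val_\Psi(\vecx)=S/W$ is a ratio of random quantities, and that per-edge conditional probabilities drift within a partial matching, are both already present in the MaxCut analysis (there the paper conditions on the $\beta_t$'s and uses that the crossing probability is at most $\frac12+o(1)$ \emph{uniformly}). The genuinely new feature in $\mdcut$ is that the per-constraint satisfaction rate is not uniformly bounded by $\frac4{15}$ — Alice's constraints are satisfied at rate $\approx p_0(1-p_1)$ and Bob's at a different rate, both depending on $(p_0,p_1)$, and only their weighted average is at most $\frac4{15}$ — so \cref{lemma:azuma} with a single $p$ cannot serve all $\vecx$. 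Your fix, concentrating $S$ and $W$ separately around their conditional means (Chernoff for Alice's i.i.d.\ constraints, a swap-coupling edge-exposure martingale for each $M_t$) and then bounding $\E S/\E W$ by $\max f$, is the right one. It would be worth stating explicitly that by exchangeability of the edges of a random hypermatching, $\E[X_{t,\ell}]$ is identical for every $\ell$ and equals a uniform-pair probability, so the drift affects only the martingale differences and not the mean; this is what cleanly dispatches the $\alpha$-near-$\tfrac12$ concern you flag.
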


However, we only give a heuristic proof, omitting concentration bounds and independence arguments. The full proof can be found in e.g. \cite[\S5]{CGV20}.

\begin{proof}[Proof sketch of \cref{lemma:seqbpd'-to-mdcut-analysis}]
Recall our definition of template distributions from the previous section; let's compute the expected template distributions $\CD^{\vecx^*}_\Psi$ in the $\yes$ and $\no$ cases, which we will denote $\CD_Y$ and $\CD_N$, respectively. In expectation, $\Alice$ adds $\frac{\alpha}4Tn$ constraints, and $\Bob_t$ adds $\alpha Tn$ constraints. In both the $\yes$ and $\no$ cases, the constraints introduced by $\Alice$ are always satisfied by $\vecx^*$. In the $\yes$ case, $\Bob_t$ sees edges $(u,v)$ such that $x^*_u\neq x^*_v$, and he adds the constraints $\{(u,v),(v,u)\}$; thus, $\vecx^*$ satisfies both literals in one of the clauses, and neither in the other. Thus, $\CD_Y\langle2\rangle = \frac{1/4+1/2}{1/4+1}=\frac35$ and $\CD_Y\langle0\rangle = \frac{1/2}{1/4+1}=\frac25$; and so $\gamma(\CD_Y)=\frac35$. On the other hand, in the $\no$ case, $\Bob_t$'s edges $(u,v)$ satisfy $x^*_u \neq x^*_v$; thus, $\vecx^*$ satisfies one literal in both of the clauses $(u,v)$ and $(v,u)$. Hence in expectation, $\CD_N\langle2\rangle = \frac{1/4}{1/4+1}=\frac15$ and $\CD_N\langle1\rangle = \frac{1}{1/4+1}=\frac45$. Now \[ \lambda(\CD_N,p) = \frac45p(1-p) + \frac25p^2 = \frac15 p(4-3p), \] so $\lambda(\CD_N,p)$ is maximized at $p=\frac23$, yielding $\beta(\CD_N) = \frac4{15}$.\footnote{Note also that $\mus(\CD_Y) = \mus(\CD_N) = \frac15$; thus, the algorithm presented in the previous section (\cref{sec:mdcut-algorithm}) fails to solve the $(\frac4{15},\frac35)\text{-}\m[\twoand]$ problem.}

To prove the claimed bound on $\yes$-instance values, \cref{lemma:2and-gamma} implies that $\val_\Psi(\vecx^*) \geq \gamma(\CD_\Psi^{\vecx^*}) \approx \gamma(\CD_Y) = \frac35$. The bound on $\no$-instance values is trickier. The key observation is that the distribution $(\R_0,\ldots,\R_T) \circ \CN$ is invariant under permutations of variables. Thus, it suffices to show that for each $\ell \in \{0\}\cup[n]$, a uniformly random solution of Hamming weight $\ell$ has value below $\frac4{15}+\epsilon$; but the expected value of such a solution is precisely $\lambda(\CD_\Psi^{\vecx^*},\frac{\ell}n) \approx \lambda(\CD_N,\frac{\ell}n) \leq \beta(\CD_N) = \frac4{15}$.
\end{proof}

\section{Discussion}\label{sec:mdcut-discussion}

Again, we conclude with some discussion on various aspects of the reductions and algorithms in this chapter.

\subsection{Weaknesses of the reduction (it's $\Alice$'s fault)}\label{sec:mdcut-rand-linear}

Thinking back to our discussion for $\mcut$ (\cref{sec:mdcut-discussion}), the fact that $\Alice$ did not add any constraints in $\mcut$ reduction (\cref{cons:seqbpd-to-mcut}) was crucial in extending $\mcut$'s hardness to the random-ordering and linear-space settings \cite{KKS15,KK19}. For $\mdcut$, the picture is much less rosy, because in the $\mdcut$ reduction (\cref{cons:seqbpd'-to-mdcut}), $\Alice$ has a significant role to play, creating around $\frac15$-fraction of the constraints. Thus, it is not clear at all how to derive randomly ordered instances --- even if each $\Bob_t$ receives a random graph instead of a random matching --- since the distribution of constraints created by $\Alice$ is very different from the distribution of constraints created by each $\Bob_t$, and the constraints are added to the stream in sequence. Nor is it clear how to define an appropriate variant of $\seqibpd$ (which, recall, omitted $\Alice$ entirely!) to effect a linear-space hardness reduction. (However, $(\frac12+\epsilon)$-hardness for $\mcut$ in these settings does imply $(\frac12+\epsilon)$-hardness for $\mdcut$, by the reduction which given a $\mcut$ constraint $(u,v)$ randomly outputs either $(u,v)$ or $(v,u)$ as a $\mdcut$ constraint.)

Indeed, we know from personal communication with Chen, Kol, Paramonov, Saxena, Song, and Yu~\cite{CKP+21} and Chou, Golovnev, Sudan, Velingker, and Velusamy~\cite{CGS+21} that the hard instances produced by \cref{cons:seqbpd'-to-mdcut} are distinguishable by streaming algorithms in the $O(\log n)$-space random-order and $o(n)$-space adversarial-order settings, respectively. In the remainder of this subsection, we roughly sketch both algorithms.

Recall the definition of the bias $\bias_\Psi(i)$ of variable $i$ in an instance $\Psi$ of $\mtwoand$. We can view instances of $\mdcut$ as instances of $\mtwoand$ where every constraint has negation pattern $(0,1)$ (since $\dcut(a,b) = \twoand(a,b+1)$). Then by definition, $\bias_\Psi(i)$ is the difference in total weight of constraints in which $i$ appears on the left vs. on the right. On the other hand, we can also view an instance of $\mdcut$ as a (weighted) graph on $n$ vertices (see \cref{sec:hypergraphs}); under this interpretation, the bias $\bias_\Psi(i)$ of a vertex $i$ is the difference between its out-weight and its in-weight.

Now let's examine the distributions of $\yes$ and $\no$ instances for $\mdcut$ from \cite{CGV20} (produced by \cref{cons:seqbpd'-to-mdcut} from $\seqbpd'$, see \cref{fig:mdcut}). Letting $b \eqdef \frac{\alpha T n}4$, we see that vertices with $x^*=1$ have nonnegative bias ($b$ in expectation) and vertices with $x^*=0$ have nonpositive bias ($-b$ in expectation). Furthermore, in the $\yes$ case, all edges go from vertices with nonnegative bias to those with nonpositive bias, while in the $\no$ case, there is a mixture of nonnegative to nonpositive ($\approx \frac15$ fraction of edges), nonnegative to nonnegative ($\approx \frac25$ fraction), and nonpositive to nonpositive ($\approx \frac25$ fraction).

\paragraph{The random-ordering algorithm.} If we have the ability to randomly sample edges and measure the biases of their endpoints, it will quickly become apparent whether we are seeing $\yes$ or $\no$ instances, and this can be accomplished in the random-ordering setting. Indeed, it is sufficient even to store the first $m'$ edges for some large constant $m' = O(1)$ and measure the biases of all their endpoints. This technique is similar in spirit to the random-ordering algorithms for counting components and calculating minimum spanning trees in \cite{PS18}. Note that while randomly sampling edges is still possible in the adversarial-ordering setting (with e.g. reservoir sampling), there is no clear way to do so while also measuring the biases of their endpoints; indeed, the adversarial-ordering lower bound shows that this is impossible.

\paragraph{The super-$\sqrt n$-space algorithm.} Another strategy to distinguish $\yes$ and $\no$ instances is to randomly sample a subset $V \subseteq [n]$ of the vertices and, during the stream, both measure the bias of every vertex in $V$ and store the induced subgraph on $V$.\footnote{We can store the induced subgraph in $O(|V|)$ space since the instances produced in the reduction have constant max-degree (with high probability).} At the end of the stream, we can simply check for the presence of any edge in the induced subgraph which does not go from a nonnegative-bias vertex to a nonpositive-bias vertex. However, in order for this to succeed, $|V|$ needs to be sufficiently large; picking a random set of $n^{0.51}$ vertices will suffice by the ``birthday paradox'' argument of \cref{sec:mcut-linear-space}.
 
\subsection{Duality, towards dichotomy}\label{sec:towards-dichotomy}

It is quite surprising that using the lower bound $\val_\Psi \geq \lambda(\CD^\vecx_\Psi,\frac23)$, instead of the more general $\val_\Psi \geq \beta(\CD^\vecx_\Psi)$, suffices to prove \cref{lemma:mdcut-alg-lb}. In particular, we can't get a better approximation ratio using the latter inequality, since $\frac49$ is already optimal given \cref{item:mdcut-hardness}.\footnote{However, we certainly can do worse! Guruswami \emph{et al.}'s $(\frac25-\epsilon)$-approximation uses $p=1$, i.e., it \emph{greedily} assigns positively-biased variables to $1$ and negatively-biased variables to $0$. This algorithm is ``overconfident'' and setting $p=\frac23$ instead improves the approximation ratio.} However, one significant reason for studying the quantity $\beta(\CD)$ is that it also arises in the proof of the hardness result (see the end of the proof sketch of \cref{lemma:seqbpd'-to-mdcut-analysis}).

To give further perspective on the quantities $\beta(\CD)$ and $\gamma(\CD)$, we shift gears slightly to the $\bgd\m[\twoand]$ problem for fixed $\beta < \gamma \in [0,1]$, and interpret the algorithm from \cref{sec:mdcut-algorithm} for this problem. Suppose that $(\beta,\gamma)$ satisfies the equation
\begin{equation}\label{eqn:2and-beta-gamma-gap}
    \max_{\CD \in \Delta(\CD_2^2):~ \beta(\CD) \leq \beta} \mus(\CD) < \min_{\CD \in \Delta(\CD_2^2):~ \gamma(\CD) \geq \gamma} \mus(\CD)
\end{equation}
and consider some fixed threshold $\tau$ in between these two values. \cref{lemma:2and-gamma} and \cref{lemma:2and-lambda} imply, respectively, that (1) there exists $\vecx^*\in\BZ_2^n$ such that $\val_\Psi = \gamma(\CD_\Psi^{\vecx^*})$ and (2) for all $\vecx\in\BZ_2^n$, $\val_\Psi \geq \beta(\CD_\Psi^\vecx)$. Thus, \cref{eqn:2and-beta-gamma-gap} implies that measuring $\bias_\Psi = \max_{\vecx\in\BZ_2^n} \mus(\CD_\Psi^\vecx)$ and comparing it to $\tau$ suffices to distinguish the cases $\val_\Psi \leq \beta$ and $\val_\Psi \geq \gamma$. On the other hand, from the proofs of \cref{lemma:mdcut-alg-lb} and \cref{lemma:mdcut-alg-ub} we can extract the inequalities \[ \beta(\CD) \geq \frac29(1+\mus(\CD)) \text{ and } \gamma(\CD) \leq \frac12(1+\mus(\CD)), \] respectively. Thus, whenever $\frac{\beta}{\gamma} > \frac49$, \cref{eqn:2and-beta-gamma-gap} holds, and the $\bgd\m[\twoand]$ problem is tractable!

On the other hand, the lower bound (ruling out $(\frac49+\epsilon)$-approximations) we proved in \cref{sec:mdcut-hardness} was based on constructing $\CD_N,\CD_Y \in \Delta(\BZ_2^2)$ with $\mus(\CD_N)=\mus(\CD_Y)$ and $\frac{\beta(\CD_N)}{\gamma(\CD_Y)}=\frac49$. So there is a kind of duality between the algorithm and the lower bound for $\mdcut$. Indeed, the algorithm works whenever \cref{eqn:2and-beta-gamma-gap} holds --- that is, there is a separation between the $\mus$-values. On the other hand, the lower bound's hard distributions have matching $\mus$-values. A natural question becomes, can we hope to generalize this ``matching $\mus$-value'' criterion to give an algorithm-or-hardness dichotomy for CSPs beyond $\mdcut$ and $\mcut$? In the next chapter, we'll see the answer from \cite{CGSV21-boolean,CGSV21-finite}: \emph{Yes!}, at least for sketching algorithms.

\newcommand{\kq}{(k,q)\text{-}}
\newcommand{\dndy}{(\CD_N,\CD_Y)\text{-}}

\chapter{Prior results: General CSPs}\label{chap:framework-papers}

\epigraph{At the heart of our characterization is a family of algorithms for $\mF$ in the linear sketching streaming setting. We will describe this family soon, but the main idea of our proof is that if no algorithm in this family solves $\bgd\mF$, then we can extract a pair of instances, roughly a family of $\gamma$-satisfiable ``$\yes$'' instances and a family of at most $\beta$-satisfiable ``no'' instances, that certify this inability. We then show how this pair of instances can be exploited as gadgets in a negative result.}{Chou, Golovnev, Sudan, and Velusamy~\cite{CGSV21-finite}}

\newthought{A wide variety of CSPs fall under} several recent generalizations \cite{CGSV21-boolean,CGSV21-finite,CGS+22} of the $\mcut$ and $\mdcut$ results described in the previous two chapters. Our primary goal is to articulate technical statements of these results, since we'll use them later in \cref{chap:ocsps,chap:sym-bool}. We also give some broad-strokes discussions of the techniques involved, building on our work in the previous two chapters.

Specifically, in \cref{sec:cgsv}, we describe the results of Chou, Golovnev, Sudan, and Velusamy~\cite{CGSV21-boolean,CGSV21-finite} on $\sqrt n$-space streaming algorithms, which generalize the results we've already seen for $\mcut$ (\cref{thm:mcut-hardness}, due to \cite{KKS15}) and $\mdcut$ (\cref{thm:mdcut-characterization}, due to \cite{CGV20}). They include a so-called \emph{dichotomy theorem}, which completely characterizes CSP approximability for $\sqrt n$-space \emph{sketching} algorithms (see \cref{sec:streaming}) and builds on our ``template distribution'' analysis for $\mdcut$ in \cref{chap:mdcut}. This dichotomy will later be the basis for the \cite{BHP+22} analysis of $\mbf$ problems for symmetric $f:\BZ_2^k\to\{0,1\}$, described in \cref{chap:sym-bool} below. Next, in \cref{sec:cgsvv}, we present results due to Chou, Golovnev, Sudan, Velingker, and Velusamy~\cite{CGS+22} in the linear-space streaming setting, which generalize the result we've seen for $\mcut$ (discussed in \cref{sec:mcut-linear-space}, due to \cite{KK19}). We will use these in \cref{chap:ocsps} to prove linear-space streaming approximation-resistance results for so-called ``ordering constraint satisfaction problems'' from our joint work \cite{SSV21}.

\section{$\sqrt{n}$-space algorithms and the sketching dichotomy}\label{sec:cgsv}

In the case of $\mtwoand$, we discussed in \cref{sec:towards-dichotomy} how the presence or absence of ``template distributions'' with certain properties imply $\Omega(\sqrt{n})$-space streaming lower bounds (\cref{sec:mdcut-template-hardness}) and $O(\log n)$-space sketching algorithms (\cref{sec:mdcut-template-alg}) for $\bgd\mtwoand$ problems, respectively. Chou, Golovnev, Sudan, and Velusamy~\cite{CGSV21-finite} proved a dichotomy theorem which generalizes this result to every $\mF$ problem (and every $\beta < \gamma \in [0,1]$): Either $\bgd\mF$ has an $O(\polylog n)$-space sketching algorithm, or for every $\epsilon > 0$, sketching algorithms for $\ebgd\mF$ must use at least $\Omega(\sqrt{n})$ space! Importantly, however, the lower bound holds in generality only against sketching algorithms, though we'll discuss important special cases in which it extends to streaming algorithms below.

In \cref{sec:cgsv-bool}, we give precise definitions of template distributions, their properties, and the ensuing upper and lower bounds for the important special case of $\mbf$ problems. These statements are from \cite{CGSV21-boolean}, and we'll require them in \cref{chap:sym-bool} below, where we present results from \cite{BHP+22} which further investigate the case where $f : \BZ_2^k\to\{0,1\}$ is symmetric. Defining templates and their properties for general $\mF$ problems requires significant elaboration which is out of scope for this thesis, but in \cref{sec:cgsv-discussion}, we outline these notions and provide some structural overviews of the techniques used in \cite{CGSV21-boolean,CGSV21-finite}.

\subsection{Technical statements (Boolean case)}\label{sec:cgsv-bool}

Recall that $\Delta(\BZ_2^k)$ denotes the space of all distributions over $\BZ_2^k$. Following the example of $\twoand$ (see \cref{sec:mdcut-template-alg}), for a ``template distribution'' $\CD \in \Delta(\BZ_2^k)$ and a predicate $f : \BZ_2^k \to \{0,1\}$ we define a ``canonical instance'' $\Psi^{\CD}$ of $\mbf$ on $k$ variables putting weight $\CD(\vecb)$ on the constraint $((1,\ldots,k),\vecb)$. Then we define
\begin{equation}\label{eqn:cgsv-bool-lgb}
    \lambda_f(\CD,p) \eqdef \E_{\veca\sim\Bern_p^k}[\val_{\Psi^{\CD}}(\veca)], \quad \gamma_f(\CD) \eqdef \lambda_f(\CD,1), \quad \text{ and } \beta_f(\CD) \eqdef \sup_{p \in [0,1]} \left(\lambda_f(\CD,p)\right).
\end{equation}

Like we did for $\mtwoand$ in \cref{sec:mdcut-template-alg}, we can interpret these template distributions as distributions of negation patterns for a particular assignment to an instance; see the beginning of \cref{sec:template-dists-mbf} below.

We also define a \emph{marginal vector} $\vecmu(\CD) = (\mu(\CD)_1,\ldots,\mu(\CD)_k) \in [-1,1]^k$ whose $i$-th entry $\mu(\CD)_i \eqdef -\E_{\vecb\sim\CD}[(-1)^{a_i}]$.\footnote{Note that in the $\twoand$ case discussed in \cref{sec:mdcut-template-alg}, for $\CD \in \Delta(\BZ_2^2)$ we defined a \emph{scalar} quantity $\mus(\CD) = \CD(1,1)-\CD(0,0)$. According to the definition we just made, however, $\vecmu(\CD) = (\mu(\CD)_1,\mu(\CD)_2)$ where $\mu(\CD)_1 = \CD(1,1)+\CD(1,0)-\CD(0,1)-\CD(0,0)$ and $\mu(\CD)_2 = \CD(1,1)+\CD(0,1)-\CD(1,0)-\CD(0,0)$. Thus, $\mus(\CD) = \frac12(\mu(\CD)_1+\mu(\CD)_2)$. As we'll discuss in \cref{sec:sym-setup}, for $\mtwoand$ it suffices to only consider a ``symmetric'' distributions $\CD$ and a ``scalar'' marginal because $\twoand$ is a \emph{symmetric} predicate.}\footnote{We add an extra negative sign in order to agree with the convention for $\twoand$ we established in \cref{sec:mdcut-template-alg} that ``positively biased vertices want to be assigned $1$.''}

Now for a predicate $f : \BZ_2^k \to \{0,1\}$, we define two sets of marginals \[ K_{f,N}(\beta) \eqdef \{\vecmu(\CD_N): \beta_f(\CD_N) \leq \beta\} \quad \text{ and } K_{f,Y}(\gamma) \eqdef \{\vecmu(\CD_Y): \gamma_f(\CD_Y) \geq \gamma\}. \] The dichotomy theorem of \cite{CGSV21-boolean} states, roughly, that $\bgd\mbf$ is hard iff these two sets intersect: 

\begin{theorem}[Sketching dichotomy, {\cite[Theorem 2.3]{CGSV21-boolean}}]\label{thm:cgsv-bool-dichotomy}
For every $f : \BZ_2^k \to \{0,1\}$ and $\beta < \gamma \in [0,1]$:
\begin{enumerate}[label={\roman*.},ref={\roman*}]
\item If $K_{f,N}(\beta) \cap K_{f,Y}(\gamma) = \emptyset$, then there exists $\tau > 0$ and a linear sketching algorithm for $\bgd\mbf$ using at most $\tau \log n$ space.\label{item:cgsv-bool-alg}
\item If $K_{f,N}(\beta) \cap K_{f,Y}(\gamma) \neq \emptyset$, then for every $\epsilon > 0$, there exists $\tau > 0$ such that every sketching algorithm for $\ebgd\mbf$ uses at least $\tau \sqrt{n}$ space (for sufficiently large $n$).\label{item:cgsv-bool-hardness}
\end{enumerate}
\end{theorem}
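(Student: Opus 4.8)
I would prove the two halves separately, leaning on the ``template distribution'' calculus of \cref{sec:mdcut-template-alg,sec:mdcut-template-hardness}; in particular I will use the $\mbf$-analogues of \cref{lemma:2and-gamma,lemma:2and-lambda}, namely $\gamma_f(\CD_\Psi^\vecx)=\val_\Psi(\vecx)$ and, more generally, $\lambda_f(\CD_\Psi^\vecx,p)=\E_{\veca\sim\Bern_p^n}[\val_\Psi(\veca+\vecx)]$ (so in particular $\beta_f(\CD_\Psi^\vecx)\le\val_\Psi$ for every $\vecx$), which one checks exactly as there.

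\textbf{The algorithm (\cref{item:cgsv-bool-alg}).} The first observation is that $K_{f,N}(\beta)$ and $K_{f,Y}(\gamma)$ are \emph{compact convex} subsets of $[-1,1]^k$: they are images under the linear map $\CD\mapsto\vecmu(\CD)$ of $\{\CD:\beta_f(\CD)\le\beta\}$ and $\{\CD:\gamma_f(\CD)\ge\gamma\}$, which are convex (since $\beta_f$ is a supremum of maps linear in $\CD$, hence convex, and $\gamma_f=\lambda_f(\cdot,1)$ is linear in $\CD$) and compact (as $\Delta(\BZ_2^k)$ is compact and all maps involved are continuous). If either set is empty the gap problem is trivial --- e.g. $\{\CD:\beta_f(\CD)\le\beta\}=\emptyset$ forces $\val_\Psi\ge\beta_f(\CD_\Psi^{\vecx^*})>\beta$ always, so one may always answer ``$\val_\Psi\ge\gamma$''. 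Otherwise, strict separation of disjoint compact convex sets gives $\vecv\in\BR^k$ and $c_1<c_2$ with $\langle\vecv,\cdot\rangle\le c_1$ on $K_{f,N}(\beta)$ and $\langle\vecv,\cdot\rangle\ge c_2$ on $K_{f,Y}(\gamma)$. The algorithm, generalizing \cref{sec:mdcut-algorithm}, maintains the vector $\vecb_\vecv\in\BR^n$ with $(\vecb_\vecv)_u=\sum_{\ell\in[m],\,i\in[k]:~j(\ell)_i=u} v_i\,(-1)^{b(\ell)_i}\,w(\ell)$ using the $\ell_1$-sketch of \cref{thm:l1-sketching}, and outputs ``$\val_\Psi\ge\gamma$'' iff $\tfrac1{W_\Psi}\|\vecb_\vecv\|_1 \ge \tfrac{c_1+c_2}2$. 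Correctness rests on the identity $\max_\vecx\langle\vecv,\vecmu(\CD_\Psi^\vecx)\rangle=\tfrac1{W_\Psi}\|\vecb_\vecv\|_1$ (each $x_u$ is optimized independently): if $\val_\Psi\ge\gamma$, the optimal $\vecx^*$ has $\gamma_f(\CD_\Psi^{\vecx^*})=\val_\Psi\ge\gamma$, so $\vecmu(\CD_\Psi^{\vecx^*})\in K_{f,Y}(\gamma)$ and the statistic is $\ge c_2$; if $\val_\Psi\le\beta$, then \emph{every} $\vecx$ has $\beta_f(\CD_\Psi^\vecx)\le\val_\Psi\le\beta$, so $\vecmu(\CD_\Psi^\vecx)\in K_{f,N}(\beta)$ and the statistic is $\le c_1$. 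Running the $\ell_1$-sketch to precision below $\tfrac{c_2-c_1}3$ then gives an $O(\log n)$-space linear sketch.

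\textbf{The lower bound (\cref{item:cgsv-bool-hardness}).} When the intersection is nonempty, fix $\CD_N,\CD_Y$ with $\vecmu(\CD_N)=\vecmu(\CD_Y)$, $\beta_f(\CD_N)\le\beta$, $\gamma_f(\CD_Y)\ge\gamma$. Following \cref{sec:mdcut-template-hardness}, I would introduce a $(T{+}1)$-player one-way problem generalizing $\seqbpd'$: $\Alice$ gets $\vecx^*\sim\Unif_{\BZ_2^n}$, and each $\Bob_t$ gets a random $k$-hypermatching $M_t\sim\Matchings_{k,\alpha}(n)$ together with, for each hyperedge $\vece$, a label $\vecz(\vece)=\vecx^*|_\vece+\vecw(\vece)$ with $\vecw(\vece)\sim\CD_Y$ in the $\yes$ case and $\vecw(\vece)\sim\CD_N$ in the $\no$ case. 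In the C2S reduction (\cref{lemma:comm-to-strm}) each $\Bob_t$ emits, for each $\vece$, the $\mbf$ constraint with negation pattern $\vecz(\vece)$ on variables $\vece$; then the template of the output instance $\Psi$ relative to $\vecx^*$ concentrates on $\CD_Y$ (resp. $\CD_N$), so $\yes$ instances have $\val_\Psi\ge\val_\Psi(\vecx^*)=\gamma_f(\CD_\Psi^{\vecx^*})\approx\gamma_f(\CD_Y)\ge\gamma$ w.h.p. For $\no$ instances I would use that the instance distribution is invariant under variable permutations and, parametrizing an arbitrary assignment as $\vecx^*+\veca$, show via $\lambda_f(\CD_\Psi^{\vecx^*},p)=\E_{\veca'\sim\Bern_p^n}[\val_\Psi(\veca'+\vecx^*)]$ and an Azuma-type bound (\cref{lemma:azuma}) that for each \emph{fixed} $\veca$ with $\|\veca\|_0=(1-p)n$, $\val_\Psi(\vecx^*+\veca)\approx\lambda_f(\CD_N,p)\le\beta_f(\CD_N)\le\beta$ except with probability $\exp(-\Omega(Tn))$, which for $T$ a large enough constant survives a union bound over $\veca$. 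It remains to prove hardness of the communication problem, which I would do in two stages paralleling \cref{sec:bpd-to-seqbpd,sec:bpd-hardness}: a hybrid argument (\cref{lemma:rv-triangle,lemma:data-processing}) reducing the $T$-player problem to the $2$-player ($\Alice$, one $\Bob$) case at the cost of a factor $T$ in the advantage; and an analysis of the $2$-player case à la \cref{thm:bpd-hardness}, conditioning on $\Alice$'s message (fixing a large $A\subseteq\BZ_2^n$), bounding the advantage via an $\textsc{xor}$ lemma (\cref{lemma:xor}) by the Fourier weights $\W^\ell[\1_A]$, and bounding those by hypercontractivity (\cref{lemma:low-fourier-bound}) for small $\ell$ and by random-hypermatching combinatorics for large $\ell$ --- where the $\ell=1$ contribution vanishes precisely because $\vecmu(\CD_N)=\vecmu(\CD_Y)$. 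Since sketching algorithms compose, it is enough to run this argument against protocols that communicate in parallel, which yields the sketching (rather than streaming) statement; retaining the sequential structure is what is needed for the streaming extensions discussed later.

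\textbf{Main obstacle.} The crux is the communication hardness for a \emph{general} matched-marginal pair $(\CD_N,\CD_Y)$: since the $\no$-case labels $\vecz(\vece)$ are no longer uniform, one cannot invoke \cref{thm:bpd-hardness} as a black box, and the Fourier-analytic argument of \cref{sec:bpd-hardness} must be redone tracking the full distributions (using that $\CD_N-\CD_Y$ annihilates degree-$\le1$ characters). I also expect the $\no$-instance value bound to be delicate --- extracting $\exp(-\Omega(Tn))$ concentration uniformly over the $2^n$ assignments while handling the dependencies among hyperedges sharing a vertex, as in \cref{lemma:seqbpd'-to-mdcut-analysis}.
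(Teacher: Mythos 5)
The statement is a cited theorem ([CGSV21-boolean, Theorem~2.3]); the paper does not prove it but sketches the argument in \cref{sec:cgsv-discussion}. Your algorithmic half (\cref{item:cgsv-bool-alg}) matches that sketch closely and is essentially correct: compactness/convexity of $K_{f,N},K_{f,Y}$, strict hyperplane separation, the identity $\max_{\vecx}\langle\vecv,\vecmu(\CD_\Psi^\vecx)\rangle = \tfrac1{W_\Psi}\|\vecb_\vecv\|_1$, and $\ell_1$-sketching via \cref{thm:l1-sketching}. (One needs the hyperplane oriented so that $K_{f,N}$ is on the ``low'' side, which is always achievable since $K_{f,N}(\beta)$ is symmetric about $\veczero$ --- by $\vecx\mapsto\vecx+\vecone$, the achieved marginals $\{\vecmu(\CD_\Psi^\vecx)\}$ are symmetric too --- so $\veczero\in K_{f,N}(\beta)$ whenever it is nonempty.)

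Your hardness half has two concrete gaps. First, the hybrid argument via \cref{lemma:data-processing} requires $\Bob_t$'s input in one of the two cases to be \emph{independent of $\vecx^*$} (cf.\ \cref{sec:mcut-hybrid-discussion}); for a general matched-marginals pair $(\CD_N,\CD_Y)$ with $\vecmu(\CD_N)=\vecmu(\CD_Y)\neq\veczero$, the $\no$-input $M\vecx^*+\vecw$ with $\vecw\sim\CD_N^{\alpha n}$ still depends on $\vecx^*$, so the hybrid fails. This is precisely why \cite{CGSV21-boolean} resorts to the \emph{parallel} game $\pllrmd$ (and why the general statement is a sketching, not streaming, lower bound); there the $T$-player-to-$2$-player reduction is by independence and triangle inequality, not a hybrid.

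Second, and more seriously, the cancellation you invoke --- ``the $\ell=1$ contribution vanishes because $\vecmu(\CD_N)=\vecmu(\CD_Y)$'' --- is far too weak. What vanishes is $\hat{\CZ^{\y}}(\vecs)-\hat{\CZ^{\n}}(\vecs)$ when \emph{every} nonzero block $\vecs(\ell)$ has $\|\vecs(\ell)\|_0=1$: the size-$1$ factors $\hat{\CD_Y}(\vecs(\ell))=-\mu(\CD_Y)_i=-\mu(\CD_N)_i=\hat{\CD_N}(\vecs(\ell))$ agree. But a single block of size $\geq 2$ alongside arbitrarily many size-$1$ blocks does \emph{not} cancel. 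For $\vecv$ with $\|\vecv\|_0=\ell$, the relevant combinatorial quantity is therefore $\Pr_M[\supp(\vecv)\text{ covered and some edge double-touches }\supp(\vecv)]\approx (k\alpha)^{\ell-2}\,\ell^2\alpha k^2/n$, not $(\alpha\ell/n)^{\ell/2}$ as in the one-wise case. Plugging into the split sum of \cref{eqn:bpd-split-form}, the $\ell$-th term scales like $(k\alpha\zeta\tau)^\ell\,\ell^{2-\ell}\,n^{\ell/2-1}$ for $s'=\tau\sqrt n$; the $\ell=2$ term is $O(1)$, but the $\ell=3$ term grows like $\sqrt n$ and worse for larger $\ell$, so the direct Fourier route cannot yield $\sqrt n$-hardness. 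The actual argument in \cite{CGSV21-boolean} instead chains $\CD_N$ and $\CD_Y$ through $O(1)$ \emph{polarization} steps $\CD\mapsto\CD_{\vecu,\vecv}$ to a common canonical chain-supported distribution, proving hardness for each step by writing $\CD,\CD_{\vecu,\vecv}$ as mixtures sharing a base and restricting to the coordinates where $\vecu,\vecv$ differ (which reduces to the genuinely one-wise-independent case), and then applying the triangle inequality along the chain. Your approach does cover the one-wise-independent special case and hence \cref{thm:cgsv-streaming-lb} (via padding), but it does not prove \cref{item:cgsv-bool-hardness} in full generality.
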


Temporarily peeling back the $K_f$ notation, \cref{item:cgsv-bool-hardness} is a statement about ``hard templates'' \`a la \cref{sec:mdcut-template-hardness}: If $\vecmu(\CD_N)=\vecmu(\CD_Y)$ for some $\CD_N,\CD_Y\in\Delta(\BZ_2^k)$, then sketching $(\beta_f(\CD_N)+\epsilon,\gamma_f(\CD_Y)-\epsilon)\text{-}\mbf$ requires $\Omega(\sqrt n)$ space for every $\epsilon > 0$.

Using a standard reduction, \cref{thm:cgsv-bool-dichotomy} implies a dichotomy for \emph{approximation} problems:

\begin{corollary}[{\cite[Proposition 2.10]{CGSV21-boolean}}]\label{cor:cgsv-bool-approx}
For $f : \BZ_2^k \to \{0,1\}$, let \[ \alpha(f) \eqdef \inf_{\CD_N,\CD_Y \in \Delta(\BZ_2^k): ~\vecmu(\CD_N)=\vecmu(\CD_Y)} \left(\frac{\beta_f(\CD_N)}{\gamma_f(\CD_Y)}\right). \] Then:
\begin{enumerate}[label={\roman*.},ref={\roman*}]
    \item For every $\epsilon > 0$, there exists $\tau > 0$ and a $\tau \log n$-space linear sketching algorithm which  $(\alpha(f)-\epsilon)$-approximates $\mbf$.\label{item:cgsv-bool-approx-alg}
    \item For every $\epsilon > 0$, there exists $\tau > 0$ such that every sketching algorithm which $(\alpha(f)+\epsilon)$-approximates $\mbf$ uses at least $\tau \sqrt{n}$ space (for sufficiently large $n$).\label{item:cgsv-bool-approx-hardness}
\end{enumerate}
\end{corollary}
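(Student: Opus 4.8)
The plan is to derive the corollary directly from the gap-problem dichotomy (\cref{thm:cgsv-bool-dichotomy}) by quantifying over gap pairs $(\beta,\gamma)$, exactly as one moves between approximation and gap formulations of any optimization problem (cf. the footnote after the definition of the $\bgd\mF$ gap problem, and the proof sketch of \cref{cor:cgsv-bool-approx} the text promises). First I would unpack the definition of $\alpha(f)$: it is the infimum of $\beta_f(\CD_N)/\gamma_f(\CD_Y)$ over pairs of template distributions with matching marginal vectors $\vecmu(\CD_N)=\vecmu(\CD_Y)$. The key translation is that $K_{f,N}(\beta)\cap K_{f,Y}(\gamma)\neq\emptyset$ holds exactly when there exist $\CD_N,\CD_Y$ with $\vecmu(\CD_N)=\vecmu(\CD_Y)$, $\beta_f(\CD_N)\le\beta$, and $\gamma_f(\CD_Y)\ge\gamma$; so the two sets are disjoint precisely when every matching-marginal pair has $\beta_f(\CD_N)/\gamma_f(\CD_Y) > \beta/\gamma$ is \emph{not} quite right — I must be careful: disjointness means no single pair simultaneously achieves $\beta_f(\CD_N)\le\beta$ and $\gamma_f(\CD_Y)\ge\gamma$, which (after checking that $\beta_f,\gamma_f$ each range over closed sets of achievable values, using compactness of $\Delta(\BZ_2^k)$ and continuity of $\beta_f,\gamma_f,\vecmu$) is equivalent to: for every matching-marginal pair, $\beta_f(\CD_N) > \beta$ \emph{or} $\gamma_f(\CD_Y) < \gamma$. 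When $\beta/\gamma < \alpha(f)$ this is forced; when $\beta/\gamma > \alpha(f)$ it fails for some near-optimal pair. This is the crux of both directions.

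For the algorithmic direction \cref{item:cgsv-bool-approx-alg}: given $\epsilon>0$, I would choose a finite grid of gap pairs $(\beta_i,\gamma_i)$ with $\beta_i/\gamma_i$ slightly above $\alpha(f)-\epsilon$ but still below $\alpha(f)$, dense enough that any value of $\val_\Psi$ is ``caught'' between consecutive $\gamma_i$ and $\beta_{i+1}$. For each such pair, $\beta_i/\gamma_i<\alpha(f)$ forces $K_{f,N}(\beta_i)\cap K_{f,Y}(\gamma_i)=\emptyset$ (using the compactness argument above to rule out the infimum being approached but not attained in a way that breaks disjointness — this needs the sets $K_{f,N},K_{f,Y}$ to be closed, which follows from continuity of the relevant functions on the compact simplex), so \cref{item:cgsv-bool-alg} gives an $O(\log n)$-space linear sketch for $\bgd[\beta_i,\gamma_i]\mbf$. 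Running all $O(1/\mathrm{poly}(\epsilon))$ of these sketches in parallel (the class of linear sketching algorithms is closed under this, and total space remains $\tau\log n$ for a suitable $\tau=\tau(\epsilon)$), one reads off an estimate $\tilde v$ that is sandwiched to within the desired multiplicative factor $\alpha(f)-\epsilon$ of $\val_\Psi$; I would spell out the arithmetic showing $(\alpha(f)-\epsilon)\val_\Psi\le\tilde v\le\val_\Psi$.

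For the hardness direction \cref{item:cgsv-bool-approx-hardness}: given $\epsilon>0$, pick a matching-marginal pair $(\CD_N,\CD_Y)$ with $\beta_f(\CD_N)/\gamma_f(\CD_Y)<\alpha(f)+\epsilon/2$, and set $\beta=\beta_f(\CD_N)$, $\gamma=\gamma_f(\CD_Y)$. Then $K_{f,N}(\beta)\cap K_{f,Y}(\gamma)\ni\vecmu(\CD_N)=\vecmu(\CD_Y)$ is nonempty, so \cref{item:cgsv-bool-hardness} says sketching $\ebgd\mbf$ needs $\Omega(\sqrt n)$ space for every $\epsilon'>0$; choosing $\epsilon'$ small relative to $\epsilon$ and $\gamma$, any sketching algorithm $(\alpha(f)+\epsilon)$-approximating $\mbf$ would in particular solve $(\beta+\epsilon',\gamma-\epsilon')\mbf$ (since $(\beta+\epsilon')/(\gamma-\epsilon')<\alpha(f)+\epsilon$ for small $\epsilon'$), contradiction. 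I expect the main obstacle to be the routine-but-delicate compactness/closedness bookkeeping around the infimum defining $\alpha(f)$ — ensuring that ``disjoint iff ratio bounded below'' holds with the right non-strict/strict inequalities, and handling the edge case where $\alpha(f)$ is attained — together with verifying that finitely many parallel linear sketches still yield a \emph{linear} sketching algorithm of the claimed space bound. Everything else is the standard gap-to-approximation dance.
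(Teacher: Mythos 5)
Your proposal is correct and takes essentially the same route as the paper: both directions reduce to the gap dichotomy (\cref{thm:cgsv-bool-dichotomy}), with the algorithm built by running a constant-size grid of gap distinguishers in parallel (amplified so a union bound over the grid survives) and outputting the largest $\beta$ among the $\yes$-outputs (floored at $\rho(f)$), and the hardness by perturbing a near-optimal matching-marginal pair to feed into \cref{item:cgsv-bool-hardness}. Your grid is one-dimensional with a fixed ratio $r\in(\alpha(f)-\epsilon,\alpha(f))$, whereas the paper's is the two-dimensional $\delta$-grid over all disjoint pairs $(b\delta,g\delta)$; both work and the difference is cosmetic.

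Two small corrections. First, you over-engineer the forward implication: ``$\beta/\gamma<\alpha(f)\Rightarrow K_{f,N}(\beta)\cap K_{f,Y}(\gamma)=\emptyset$'' is a one-line contrapositive with no compactness needed --- a common marginal vector $\vecmu\in K_{f,N}(\beta)\cap K_{f,Y}(\gamma)$ directly exhibits a matching-marginal pair $(\CD_N,\CD_Y)$ with $\beta_f(\CD_N)/\gamma_f(\CD_Y)\le\beta/\gamma<\alpha(f)$, contradicting the infimum. Second, your aside that ``when $\beta/\gamma>\alpha(f)$ it fails for some near-optimal pair'' is false in general: a near-optimal pair $(\CD_N,\CD_Y)$ need not simultaneously satisfy $\beta_f(\CD_N)\le\beta$ and $\gamma_f(\CD_Y)\ge\gamma$ for an arbitrary $(\beta,\gamma)$ with that ratio, so the converse implication can fail. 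Fortunately you never invoke this converse --- for the hardness you (correctly) choose $\beta=\beta_f(\CD_N)$ and $\gamma=\gamma_f(\CD_Y)$ from the near-optimal pair itself, which makes the intersection nonempty by construction --- so the proof is unaffected.
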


We sketch the proof in order to provide a point of comparison for the simpler sketching algorithms (for certain $\mbf$ problems) from our joint work \cite{BHP+22}, which we present in \cref{sec:thresh-alg} below.

\begin{proof}[Proof sketch]
To prove \cref{item:cgsv-bool-approx-hardness}, for every $\epsilon > 0$, there exist $\CD_N,\CD_Y\in\Delta(\BZ_2^k)$ with $\vecmu(\CD_N)=\vecmu(\CD_Y)$ such that $\beta_f(\CD_N)/\gamma_f(\CD_Y) \leq \alpha(f)+\epsilon/2$. Letting $\beta'=\beta_f(\CD_N)+\epsilon/2$ and $\gamma'=\gamma_f(\CD_N)-\epsilon/2$, by \cref{thm:cgsv-bool-dichotomy}, $(\beta',\gamma')\text{-}\mbf$ requires $\Omega(\sqrt n)$ space. But for sufficiently small $\epsilon > 0$, $\beta'/\gamma' \leq \alpha(f)+\epsilon$ and thus $\mbf$ requires $\Omega(\sqrt n)$ space to $(\alpha(f)+\epsilon)$-approximate.

The proof of \cref{item:cgsv-bool-approx-alg} uses the following observation: If an algorithm for $\bgd\mbf$ is correct on an instance $\Psi$, then outputs of $\yes$ and $\no$ imply $\val_\Psi \geq \beta$ and $\val_\Psi \leq \gamma$, respectively.\footnote{Note that since $\bgd\mbf$ is a promise problem, if $\beta \leq \val_\Psi \leq \gamma$ then the algorithm's output can be arbitrary.} Thus, given an instance $\Psi$, a reasonable strategy for approximating $\val_\Psi$ is to pick some pairs $\{(\beta_s,\gamma_s)\}_{s \in S}$ such that $K_{f,N}(\beta_s) \cap K_{f,Y}(\gamma_s) = \emptyset$ for each $s$; calculate an output $A_s$ for each $(\beta_s,\gamma_s)\text{-}\mbf$ problem using \cref{thm:cgsv-bool-dichotomy};\footnote{The algorithms given by \cref{thm:cgsv-bool-dichotomy} for $(\beta_s,\gamma_s)\text{-}\mbf$ are randomized, and thus each will fail with some constant probability. However, as long as $|S|$ is a constant (given $\epsilon$), we can amplify every algorithm's success probability and take a union bound.} and then output $\beta_{s^*}$, where $s^*$ maximizes $\beta_s$ over $\{s \in S : A_s = \yes\}$. $\beta_{s^*}$ is an underestimate for $\val_\Psi$ by our observation, but how can we guarantee that it is at least $(\alpha(f)-\epsilon) \val_\Psi$?

The idea is to pick some $\delta > 0$, and consider a ``distinguisher grid'' $S = \{(b\delta,g\delta): b,g \in [\lfloor 1/\delta\rfloor], K_{f,N}(b\delta) \cap K_{f,Y}(g\delta)=\emptyset\}$. Then given $s^* = (b^*,g^*)$ maximizing $\beta_s$ over $\{s \in S: A_s = \yes\}$, we set $b' = b+1$ and $g' = \lceil (b+1)/\alpha \rceil$. By construction, $g'/b' > \alpha$; thus by assumption, $K_{f,N}(b'\delta) \cap K_{f,Y}(g'\delta) = \emptyset$, and so $s' = (b',g') \in S$. Now by maximality of $\beta_{s^*}$, we have $A_{(b',g')} = \no$. Hence \[ b^*\delta \leq \val_\Psi \leq g'\delta, \] and we chose $g'$ such that $b^*/g' \approx \alpha$. Setting $\delta = \epsilon \rho(f)/2$ and tweaking the algorithm to output $\max\{\beta_{s^*},\rho(f)\}$ suffices; see the proof of \cite[Proposition 2.10]{CGSV21-boolean} for details.
\end{proof}

While the lower bounds from \cite{CGSV21-boolean} (i.e., \cref{item:cgsv-bool-hardness} of \cref{thm:cgsv-bool-dichotomy}, and by extension, \cref{item:cgsv-bool-approx-hardness} of \cref{cor:cgsv-bool-approx}) only apply generally for \emph{sketching} algorithms,  \cite{CGSV21-boolean} also provides an extension of the lower bound to streaming algorithms when a certain condition holds. In particular, it is a condition on distributions $\CD_N,\CD_Y \in \Delta(\BZ_2^k)$ which is \emph{stronger} than the condition $\vecmu(\CD_N)=\vecmu(\CD_Y)$. We say $\CD\in\BZ_2^k$ is \emph{one-wise independent} (or has \emph{uniform marginals}) if $\vecmu(\CD)=\veczero$, i.e., for all $i \in [k]$, $\E_{\veca\sim\CD}[a_i]=0$. Then $(\CD_N,\CD_Y)$ are a \emph{padded one-wise pair} if there exists $\CD_0,\CD_N',\CD_Y'\in\Delta(\BZ_2^k)$ and $\eta \in [0,1]$ such that $\CD'_N$ and $\CD'_Y$ have uniform marginals, $\CD_N=\eta \CD_0+(1-\eta) \CD_N'$, and $\CD_Y=\eta \CD_0+(1-\eta)\CD_Y'$. Then:

\begin{theorem}[{\cite[Theorem 2.11]{CGSV21-boolean}}]\label{thm:cgsv-streaming-lb}
For every $f : \BZ_2^k \to \{0,1\}$ and padded one-wise pair $\CD_N,\CD_Y \in \Delta(\BZ_2^k)$, there exists $\tau > 0$ such that every \emph{streaming} algorithm for $(\beta_f(\CD_N)+\epsilon,\gamma_f(\CD_Y)-\epsilon)\text{-}\mbf$ requires at least $\tau \sqrt n$ space (for sufficiently large $n$). Thus, for every $\epsilon > 0$, every \emph{streaming} algorithm which $(\beta_f(\CD_N)/\gamma_f(\CD_Y)+\epsilon)$-approximates $\mbf$ requires at least $\tau \sqrt n$ space (for sufficiently large $n$).
\end{theorem}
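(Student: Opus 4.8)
The plan is to prove \cref{thm:cgsv-streaming-lb} by the four-step recipe behind \cref{thm:mcut-hardness,thm:mdcut-characterization}: establish a one-way communication lower bound, carry it to a \emph{sequential} game via a hybrid argument, convert the sequential game into $\mbf$ instances with a value gap via a C2S reduction (\cref{lemma:comm-to-strm}), and invoke the minimax lemma (\cref{prop:yao}). The new feature relative to $\mcut$ and $\mdcut$ is that the gadget is a \emph{pair} of template distributions, and the padded one-wise structure $\CD_N=\eta\CD_0+(1-\eta)\CD_N'$, $\CD_Y=\eta\CD_0+(1-\eta)\CD_Y'$ with $\CD_N',\CD_Y'$ one-wise independent is exactly what makes the \emph{sequential} (hence streaming, rather than merely sketching) version go through.

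I would first define a sequential \emph{Randomized Mask Detection} game $\seqrmd$ generalizing $\seqbpd'$ of \cref{def:seqbpd'}: $\Alice$ holds $\vecx^*\sim\Unif_{\BZ_2^n}$, and each $\Bob_t$ holds a $k$-hypermatching $M_t\sim\Matchings_{k,\alpha}(n)$ together with the label vector $\vecz(t)\in\BZ_2^{k\alpha n}$ obtained from $M_t\vecx^*$ by adding, independently in each $k$-block, a perturbation drawn from $\CD_Y'$ in the $\yes$ case and from $\CD_N'$ in the $\no$ case; $\Alice$ sends a message, each $\Bob_t$ forwards one, and $\Bob_T$ decides. The key observation is that because the hyperedges of $M_t$ are vertex-disjoint and $\vecx^*$ is uniform, one-wise independence of $\CD_N'$ (resp.\ $\CD_Y'$) makes each block of $\vecz(t)$ --- hence $\vecz(t)$ as a whole --- marginally uniform on $\BZ_2^{k\alpha n}$ in the $\no$ (resp.\ $\yes$) case; the two cases differ only in the joint correlation between $\vecz(t)$ and $\vecx^*$. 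This is precisely the property flagged in \cref{sec:mcut-hybrid-discussion}: in the hybrid argument one introduces a variable in which $\Bob_t$ alone receives a $\no$-type input and then applies the data-processing inequality (\cref{lemma:data-processing}), and that step needs $\Bob_t$'s $\no$-case input to be independent (after conditioning on $M_t$ and on $\vecx^*$ restricted off $M_t$'s matched vertices) of the earlier players' messages --- which holds because $\vecz(t)$ is marginally uniform.

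The assembly then proceeds in four parts. \emph{(i)}~Hardness of the single-round game $\rmd$: following the Fourier-analytic reduction of \cref{lemma:bpd-fourier-reduce}, conditioning on a large set $A\subseteq\BZ_2^n$ consistent with $\Alice$'s message, every test has advantage at most $\E_{M}[\|\CZ_{A,M}-\Unif\|_\tv]$, and now $\widehat{\CZ_{A,M}}(\vecs)$ factors as $\widehat{\1_A}(M^\top\vecs)$ times local Fourier factors $\widehat{\CD_N'}$ (or $\widehat{\CD_Y'}$) over the hyperedges touched by $\vecs$; one-wise independence kills every such factor on a block of Hamming weight $1$, so the surviving $\vecs$ touch each hyperedge at least twice, pushing the effective level of $\vecv=M^\top\vecs$ upward, and --- splitting at level $\Theta(s)$ and using \cref{lemma:low-fourier-bound} for low levels together with a $k$-hypermatching analogue of \cref{lemma:bpd-combo-bound} for high levels --- one gets $\delta_A$ small whenever $|A|\geq 2^{n-O(\sqrt n)}$, hence an $\Omega(\sqrt n)$ bound for $\rmd$ that rules out \emph{every} constant advantage. \emph{(ii)}~A hybrid argument, verbatim as in \cref{sec:bpd-to-seqbpd}, reduces $\seqrmd$ with $T$ players to $\rmd$ with an $O(1/T)$ loss. \emph{(iii)}~A C2S reduction mirroring \cref{cons:seqbpd'-to-mdcut}: $\Alice$, who knows $\vecx^*$, emits $\tfrac{\eta}{1-\eta}\alpha Tn$ constraints whose negation pattern relative to $\vecx^*$ is distributed as $\CD_0$, and each $\Bob_t$ turns a hyperedge $\vece$ with label $\vecz$ into the constraint $(\vece,\vecz)$, so the empirical template $\CD_\Psi^{\vecx^*}$ concentrates on $\eta\CD_0+(1-\eta)\CD_Y'=\CD_Y$ ($\yes$) or $\eta\CD_0+(1-\eta)\CD_N'=\CD_N$ ($\no$). \emph{(iv)}~Value-gap analysis, the analogue of \cref{lemma:seqbpd'-to-mdcut-analysis}: in the $\yes$ case $\val_\Psi\geq\val_\Psi(\vecx^*)=\Pr_{\vecc\sim\CD_\Psi^{\vecx^*}}[f(\vecc)=1]\to\gamma_f(\CD_Y)$; in the $\no$ case the reduction's output distribution is invariant under relabeling variables, so it suffices to bound the value of a uniformly random assignment of each Hamming weight $\ell\in\{0,\dots,n\}$, whose expectation tends to $\lambda_f(\CD_N,\ell/n)\leq\beta_f(\CD_N)$ (generalizing \cref{lemma:2and-lambda}), and a martingale concentration bound (\cref{lemma:azuma}, absorbing the non-independence within each $M_t$ and the binomially distributed number of constraints exactly as in the proof sketch of \cref{lemma:seqbpd-to-mcut-analysis}) plus a union bound over the $2^n$ assignments gives $\val_\Psi\leq\beta_f(\CD_N)+\epsilon$ except with probability $\exp(-\Omega(n))$ for $T$ a large enough constant. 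Chaining \emph{(i)}--\emph{(iv)} through \cref{prop:yao,lemma:comm-to-strm} yields the $\tau\sqrt n$ lower bound for $(\beta_f(\CD_N)+\epsilon,\gamma_f(\CD_Y)-\epsilon)\text{-}\mbf$; the approximation statement is then immediate, since a streaming $(\beta_f(\CD_N)/\gamma_f(\CD_Y)+\epsilon)$-approximation would solve $(\beta_f(\CD_N)+\epsilon',\gamma_f(\CD_Y)-\epsilon')\text{-}\mbf$ for small enough $\epsilon'$.

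I expect step \emph{(i)} to be the main obstacle: carrying the Fourier argument through for $k$-ary masks with correlated blocks, pinning down that one-wise independence of $\CD_N'$ and $\CD_Y'$ suppresses \emph{exactly} the low-level Fourier weight that would otherwise destroy the $\sqrt n$ barrier, and re-deriving the combinatorial estimate for the image of $M^\top$ over a random $k$-hypermatching, which is more delicate than \cref{lemma:bpd-combo-bound} since a $k$-hyperedge supports a richer family of ``patterns'' than a single edge. A secondary subtlety, as in \cref{sec:mcut-hybrid-discussion}, is the independence bookkeeping in the hybrid step given that $\Bob_t$'s $\no$-case label vector is only \emph{marginally} uniform and stays correlated with $\vecx^*$; this is handled by conditioning on $M_t$ (and on $\vecx^*$ off its matched vertices) before applying \cref{lemma:data-processing}.
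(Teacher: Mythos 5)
Your overall recipe (single-round communication lower bound, hybrid argument to a sequential game, C2S reduction with template distributions, concentration and union bound, minimax lemma) matches the broad architecture of \cite{CGSV21-boolean} that the paper outlines in \cref{sec:cgsv-discussion}, and steps \emph{(iii)} and \emph{(iv)} are essentially correct generalizations of \cref{cons:seqbpd'-to-mdcut,lemma:seqbpd'-to-mdcut-analysis}. However, step \emph{(ii)} --- the hybrid argument --- contains a genuine gap, and it is precisely the gap that forces the ``padded one-wise'' structure into the statement in the first place.

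You claim that the data-processing step (\cref{lemma:data-processing}) in the hybrid argument goes through ``because $\vecz(t)$ is marginally uniform,'' possibly ``after conditioning on $M_t$ and on $\vecx^*$ restricted off $M_t$'s matched vertices.'' Marginal uniformity of $\vecz(t)$ is not what the data-processing inequality requires: it requires $\Bob_t$'s $\no$-case input to be \emph{independent} of the random variables $S^{\y}_{t-1},S^{\n}_{t-1}$ (compare the coupled-experiment setup in the proof of \cref{lemma:bpd-to-seqbpd}). When $\CD_N'$ is merely one-wise independent and not the uniform distribution, the $\no$-case label vector $\vecz(t)=M_t\vecx^*+\vecb(t)$ remains correlated with $\vecx^*$: conditioned on $M_t$ and $\vecx^*$, the block $\vecz(t,\ell)$ is distributed as a \emph{shift} of $\CD_N'$, which is non-uniform. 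Since the earlier players' states $S^{\y}_{t-1},S^{\n}_{t-1}$ are also functions of $\vecx^*$ (the matchings $M_1,\ldots,M_{t-1}$ touch $V(M_t)$), $\vecz(t)$ and $S^{\y}_{t-1}$ share the common ``cause'' $\vecx^*|_{V(M_t)}$ and your proposed conditioning does not break this correlation. This is exactly the obstruction the paper flags in \cref{sec:mcut-hybrid-discussion}: ``$\Bob_t$ must have a uniformly distributed input, (typically) in the $\no$ case.''

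The correct route, sketched in \cref{sec:cgsv-discussion}, is therefore a three-step decomposition rather than a direct hybrid on $(\CD_N',\CD_Y')\text{-}\seqrmd$: (a) prove hardness of $(\Unif_{\BZ_2^k},\CD)\text{-}\rmd$ for one-wise independent $\CD$ (\cref{thm:rmd-onewise-hardness}); (b) lift this to $(\Unif_{\BZ_2^k},\CD)\text{-}\seqrmd$ by the hybrid argument, which is now legitimate because the $\no$-case input $(M_t,\Unif_{\BZ_2^{k\alpha n}})$ is genuinely independent of $\vecx^*$ and hence of the earlier players' states; and (c) deduce hardness of $(\CD_N',\CD_Y')\text{-}\seqrmd$ by the triangle inequality through the uniform distribution, exactly as in the proof of \cref{lemma:seqbpd-to-seqbpd'}: a protocol distinguishing $\CD_Y'$ from $\CD_N'$ with advantage $\delta$ distinguishes either $\CD_Y'$ from $\Unif_{\BZ_2^k}$ or $\CD_N'$ from $\Unif_{\BZ_2^k}$ with advantage $\delta/2$, and both of those are instances of $(\Unif,\CD)\text{-}\seqrmd$. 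With that replacement for your step \emph{(ii)}, the rest of your argument assembles into the paper's proof.
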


\begin{example}\label{ex:cgsv-2and}
In \cref{sec:mdcut-template-hardness}, we constructed a pair of hard template distributions for $\mdcut$: $\CD_Y$, which is $(1,1)$ w.p. $\frac35$ and $(0,0)$ w.p. $\frac25$, and $\CD_N$, which is $(1,1)$ w.p. $\frac15$ and $(0,1)$ and $(1,0)$ each w.p. $\frac25$. Recalling that $\mtwoand$ is a generalization of $\mcut$, we can consider this pair $(\CD_Y,\CD_N)$ in light of \cref{thm:cgsv-streaming-lb} for $\mtwoand$. Our calculations in \cref{sec:mdcut-template-hardness} imply that $\beta_{\twoand}(\CD_N)=\frac4{15}$ and $\gamma_{\twoand}(\CD_Y) = \frac35$. Now setting $\eta=\frac15$, $\CD_0$ to be $(1,1)$ w.p. $1$, $\CD'_Y = \Unif_{\{(1,1),(0,0)\}}$, and $\CD'_N=\Unif_{\{(1,0),(0,1)\}}$, we see that $\CD_Y$ and $\CD_N$ are a padded one-wise pair. Thus, \cref{thm:cgsv-streaming-lb} implies that $(\frac4{15}+\epsilon,\frac35-\epsilon)\text{-}\mtwoand$ requires $\Omega(\sqrt n)$ streaming space, and hence that $(\frac49+\epsilon)$-approximating $\mtwoand$ requires $\Omega(\sqrt n)$. This morally recovers the $\mdcut$ lower bound we proved in \cref{sec:mdcut-hardness} (\cref{item:mdcut-hardness} of \cref{thm:mdcut-characterization}), although it is formally weaker (since $\mdcut$ has a smaller set of predicates). Recovering the full bound requires the more general framework of \cite{CGSV21-finite}.

On the other hand, in \cref{sec:mdcut-template-alg}, we showed that if $\frac{\beta}{\gamma} < \frac49$, then $\max_{\CD_N:~\beta_{\twoand}(\CD)} \mus(\CD) < \min_{\CD_Y:~\gamma_{\twoand}(\CD)} \mus(\CD)$, and hence that $K_{\twoand,N}(\beta) \cap K_{\twoand,Y}(\gamma) = \emptyset$. Thus, $\alpha(\twoand)=\frac49$.
\end{example}

\begin{example}\label{ex:one-wise-indep}
An important special case of \cref{thm:cgsv-streaming-lb} is when $\CD_N = \Unif_{\BZ_2^k}$ and $\gamma_f(\CD_Y)=1$. In this case, $\CD_N,\CD_Y$ are one-wise independent (since $\vecmu(\Unif_{\BZ_2^k})=\veczero$ by definition), and thus they form a trivial padded one-wise pair (with $\eta=0$). Furthermore, all $f : \BZ_2^k\to\{0,1\}$ and $p \in [0,1]$, \[ \lambda_f(\CD_N,p) = \E_{\veca\sim\Unif_{\BZ_2^k},\vecb\sim\Bern_p^k}[f(\veca + \vecb)] = \E_{\veca\sim\Unif_{\BZ_2^k}}[f(\veca)] = \rho(f)\] and thus $\beta_f(\CD_N)=\rho(f)$. Hence \cref{thm:cgsv-streaming-lb} states implies that $\mbf$ is \emph{streaming approximation-resistant} (in $\sqrt n$ space), i.e., $\sqrt n$-space streaming algorithms cannot $(\rho(f)+\epsilon)$-approximate $\mbf$ for any $\epsilon > 0$.

If there exists one-wise independent $\CD_Y \in \BZ_2^k$ with $\gamma_f(\CD_Y)=1$, we say that the predicate $f$ \emph{supports one-wise independence}, since the support of any distribution $\CD_Y \in \BZ_2^k$ with $\gamma_f(\CD_Y) =1$ is necessarily contained in the support of $f$. Thus, if $f$ supports one-wise independence, then $\mbf$ is streaming approximation-resistant \cite[Theorem 1.3]{CGSV21-boolean}.
\end{example}

\subsection{Discussion}\label{sec:cgsv-discussion}

In this subsection, we (briefly) discuss various aspects of the \cite{CGSV21-boolean} results presented in the previous subsection, as well as their extension in \cite{CGSV21-finite} to CSPs defined by general predicate families over all alphabets. To begin, we mention that the dichotomy criterion for $\bgd\mbf$ in \cref{thm:cgsv-bool-dichotomy} is ``expressible in the quantified theory of reals'', i.e., it is equivalent to some quantified polynomial system of (in)equalities over real variables; thus, it is known to be decidable, in particular in polynomial-space relative to the size of $f$'s truth table and the bit complexities of $\beta$ and $\gamma$ (see \cite[Theorem 1.1]{CGSV21-boolean} for the formal statement). This is an exciting property of \cref{thm:cgsv-bool-dichotomy} because, in contrast, the dichotomy criterion of Raghavendra~\cite{Rag08} for polynomial-time algorithms (assuming the UGC) is not known to be decidable.

The proof of \cref{item:cgsv-bool-alg} of \cref{thm:cgsv-bool-dichotomy} is quite similar to the proof of \cref{item:mdcut-algorithm} of \cref{thm:mdcut-characterization} for $\mdcut$. Recall the proof for $\bgd\mtwoand$ which we discussed in \cref{sec:towards-dichotomy}: We used a ``separation'' between $\mus$ values (\cref{eqn:2and-beta-gamma-gap}) to get an algorithm for $\bgd\mtwoand$. Now, for arbitrary $\bgd\mbf$ problems, we still harness the disjointness of $K_{f,N}(\beta)$ and $K_{f,Y}(\gamma)$: As in \cite[\S4]{CGSV21-boolean}, we consider ``template distributions'' $\CD_\Psi^\vecx \in \Delta(\BZ_2^k)$ for potential assignments $\vecx$; we invoke the separating hyperplane theorem on the sets $K_{f,N}(\beta)$ and $K_{f,Y}(\gamma)$ (after checking that they are compact and convex), which we think of as sets of template distributions; and then use $1$-norm sketching (i.e., \cref{thm:l1-sketching}) to ``maximize over $\vecx$''.

The hardness results from \cite{CGSV21-boolean} (\cref{item:cgsv-bool-hardness} of \cref{thm:cgsv-bool-dichotomy} and \cref{thm:cgsv-streaming-lb}) also share certain similarities with \cref{item:mdcut-hardness} of \cref{thm:mdcut-characterization}. Indeed, \cite[\S5]{CGSV21-boolean} considers a variant of $\bpd$ (see \cref{def:bpd}) called \emph{randomized mask detection} ($\rmd$). In the $\dndy\rmd_\alpha(n)$ problem, $\Alice$ gets a hidden assignment $\vecx^* \in \BZ_2^n$ and communicates with $\Bob$, who gets a random hypermatching $M \sim \Matchings_{k,\alpha}(n)$ and a vector $\vecz = M^\top \vecx^* + \vecb \in (\BZ_2^k)^{\alpha n}$, where in the $\yes$ case $\vecb \sim \CD_Y^{\alpha n}$ and in the $\no$ case $\vecb \sim \CD_N^{\alpha n}$. That is, for each edge-index $\ell \in [\alpha n]$, if we let $\vecz(\ell) \in \BZ_q^k$ denote the $\ell$-th block of $k$ coordinates in $\vecz$ and $\vece(\ell)$ the $\ell$-th edge in $M$, $\vecz(\ell)$ equals $\vecx^*|_{\vece(\ell)}$ plus a random ``mask'' drawn either from $\CD_Y$ ($\yes$ case) or $\CD_N$ ($\no$ case). The core communication lower bound from \cite{CGSV21-boolean} is an analogue of \cite{GKK+08}'s hardness for $\bpd$ (\cref{thm:bpd-hardness}):

\begin{theorem}[{\cite[Theorem 6.2]{CGSV21-boolean}}]\label{thm:rmd-onewise-hardness}
For every $k \geq 2$ and $\CD \in \Delta(\BZ_2^k)$ such that $\vecmu(\CD)=\veczero$, there exists $\alpha_0 \in (0,1)$ such that for all $\alpha \in (0,\alpha_0), \delta \in (0,1)$, there exists $\tau > 0$ and $n_0 \in \BN$ such that for all $n \geq n_0$, any protocol for $(\Unif_{\BZ_2^k},\CD)\text{-}\rmd_\alpha(n)$ achieving advantage at least $\delta$ requires $\tau \sqrt n$ communication.
\end{theorem}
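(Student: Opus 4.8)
The plan is to mirror the proof of \cref{thm:bpd-hardness} (the \cite{GKK+08} bound for $\bpd$), adapting it in two places: the underlying object is now a $k$-hypermatching rather than a graph matching, and in the $\yes$ case each length-$k$ block of $\vecz$ is $\vecx^*|_{\vece(\ell)}$ masked by an independent copy of $\CD$, rather than being exactly determined by $\vecx^*$. It suffices to rule out deterministic protocols (a randomized protocol of advantage $\delta$ contains a deterministic one of advantage at least $\delta$). Fix $\Alice$'s message and let $A\subseteq\BZ_2^n$ be the set of hidden assignments $\vecx^*$ consistent with it; for a fixed hypermatching with adjacency matrix $M\in\{0,1\}^{k\alpha n\times n}$, let $\CZ_{A,M}$ be the conditional law of $\Bob$'s input $\vecz$ in the $\yes$ case, i.e.\ $\vecz = M\vecx^* + \vecb$ with $\vecx^*\sim\Unif_A$ and $\vecb\sim\CD^{\alpha n}$. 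In the $\no$ case $\vecz$ is exactly uniform on $\BZ_2^{k\alpha n}$. Exactly as for $\bpd$, if the communication budget is $s=\tau\sqrt n$ with $\tau$ small then all but a $\delta/2$-fraction of messages leave $|A|\ge 2^{n-s'}$ with $s'=\tau'\sqrt n$, $\tau'<\tau$, so it is enough to prove that $|A|\ge 2^{n-s'}$ forces $\E_{M\sim\Matchings_{k,\alpha}(n)}[\|\CZ_{A,M}-\Unif\|_\tv]\le\delta/2$.

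The first step is a Fourier-analytic reduction in the style of \cref{lemma:bpd-fourier-reduce}. Writing $\vecs(\ell)$ for the $\ell$-th length-$k$ block of $\vecs\in\BZ_2^{k\alpha n}$, a direct computation using independence of $\vecx^*$ and $\vecb$ and the adjointness $\langle\vecs,M\vecx\rangle=\langle M^\top\vecs,\vecx\rangle$ shows that $\hat{\CZ_{A,M}}(\vecs)$ is proportional to $\hat{\1_A}(M^\top\vecs)\cdot\prod_{\ell}\hat{\CD}(\vecs(\ell))$. This is precisely where one-wise independence enters: $\vecmu(\CD)=\veczero$ is equivalent to $\hat{\CD}$ vanishing on every Hamming-weight-$1$ vector, so $\hat{\CZ_{A,M}}(\vecs)=0$ unless every block $\vecs(\ell)$ has weight $0$ or at least $2$, while $|\hat\CD(\vecs(\ell))|\le 1$ always. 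As $M$ is a hypermatching, $M^\top$ is injective and $\|M^\top\vecs\|_0=\sum_\ell\|\vecs(\ell)\|_0$, so every surviving term with $\vecs\ne\veczero$ has $\|M^\top\vecs\|_0\ge 2$. Applying the \textsc{xor} lemma (\cref{lemma:xor}) and averaging over $M$ then gives a bound of the form
\[
\E_{M}\bigl[\|\CZ_{A,M}-\Unif\|_\tv^2\bigr]\;\le\;\frac{2^{2n}}{|A|^2}\sum_{\ell=2}^{k\alpha n} h(\ell,n)\,\W^{\ell}[\1_A],
\]
where $h(\ell,n)$ bounds, over worst-case weight-$\ell$ vectors $\vecv$, the expected number of nonzero $\vecs$ with $M^\top\vecs=\vecv$ and no weight-$1$ block; since the mask factors $|\hat\CD(\vecs(\ell))|^2$ are at most $1$, this is at most the probability that $\supp(\vecv)$ is covered by hyperedges of $M$ each meeting it in at least two vertices.

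From here the argument is structurally the same as for $\bpd$. By Jensen it suffices to bound the displayed quantity; split the sum at $\ell=4s'$. For $\ell\le 4s'$, combine the hypercontractive Fourier-weight bound (\cref{lemma:low-fourier-bound}), which uses $|A|\ge 2^{n-s'}$, with a combinatorial estimate $h(\ell,n)\le\bigl(O_k(\alpha\ell/n)\bigr)^{\ell/2}$ — the analogue of \cref{lemma:bpd-combo-bound}, obtained by summing over the possible shapes of a disjoint cover of an $\ell$-set by sub-hyperedges of sizes between $2$ and $k$, the all-size-$2$ shapes being dominant and reproducing the $k=2$ ratio up to a $k$-dependent constant. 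This leaves a geometric series in $\ell$ with ratio $O_k(\alpha)(\tau')^2<1$, summing to $O_k(\alpha(\tau')^2)$. For $\ell>4s'$, bound $\sum_\ell\W^\ell[\1_A]\le\|\1_A\|_2^2/2^n=|A|/2^n$ by Parseval (\cref{prop:parseval}), collapsing the tail to $2^{s'}\max_{\ell>4s'}h(\ell,n)$; provided $\alpha$ lies below a $k$-dependent threshold $\alpha_0$, the combinatorial estimate keeps the base here strictly below $1$, making the tail exponentially small in $\sqrt n$. (This threshold is the one genuine concession relative to $\bpd$, where $\alpha\le 1/2$ holds automatically and every valid $\alpha$ works.) Choosing $\tau'$ small enough then makes the total at most $(\delta/2)^2$, completing the proof.

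The main obstacle is the combinatorial estimate on $h(\ell,n)$. For $k=2$ this is the exact binomial ratio $\binom{\alpha n}{\ell/2}/\binom{n}{\ell}$ of \cref{lemma:bpd-combo-bound}; for general $k$ one must enumerate and bound, uniformly over $\ell$, all disjoint covers of a worst-case $\ell$-set by sub-hyperedges of size at least $2$ in a random $k$-hypermatching, and it is the high-$\ell$ portion of this bound that forces $\alpha<\alpha_0$. Everything else — the Fourier factorization, the decisive role of $\vecmu(\CD)=\veczero$ in eliminating weight-$1$ blocks, and the split-and-bound calculation — transfers from the $\bpd$ proof with only bookkeeping changes.
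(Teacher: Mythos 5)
Your proposal follows the same route as the paper's account of this theorem (which is cited from \cite{CGSV21-boolean} and only sketched in \cref{sec:cgsv-discussion}): the Fourier factorization $\hat{\CZ_{A,M}}(\vecs)\propto\hat{\1_A}(M^\top\vecs)\prod_\ell\hat{\CD}(\vecs(\ell))$, the decisive use of $\vecmu(\CD)=\veczero$ to kill all coefficients with a weight-$1$ block, the \textsc{xor} lemma, and the split-at-$4s'$ structure carried over from $\bpd$. Your identification of the generalized combinatorial bound on $h_{k,\alpha}(\ell,n)$ (covers of $\supp(\vecv)$ by hyperedge intersections of size $\ge 2$) as the genuinely new technical step is also consistent with the paper's description.
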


We briefly describe why the one-wise independence of $\CD$ is important for proving this lower bound. In the generalization of the Fourier-analytic reduction (\cref{lemma:bpd-fourier-reduce}), we have to consider an analogue of $h_\alpha(\ell,n)$, which we'll denote $h_{k,\alpha}(\ell,n)$, which na\"ively measures the probability that there exists $\vecs \in \BZ_2^{k\alpha n}$ such that $M^\top \vecs = \vecv$ for $\vecv \in \BZ_2^n$ of Hamming weight $\|\vecv\|_0=\ell$.\footnote{Note that unlike in our analysis in \cref{lemma:bpd-fourier-reduce}, $M$ is not folded, and $\vecs$ has length $k\alpha n$.} Unfortunately, this na\"ive event is ``too likely'' because it occurs whenever every vertex in $\supp(v)$ is touched by $M$; this has probability roughly $\alpha^\ell$, which is not small enough even for the $\ell=2$ term, which contains a factor of $n$ from \cref{lemma:low-fourier-bound} (see \cref{rem:bpd-low-ell-terms}). Fortunately, one-wise independence actually lets us restrict the set of $\vecs$ vectors we consider. In particular, we can derive the equation for the Fourier coefficients of $\Bob$'s conditional input distribution $\CZ_{A,M}$: \[ \hat{\CZ_{A,M}(\vecs)} = \frac1{2^{k\alpha n}} \E_{\vecx^*\sim \Unif_A, \vecb \sim \CD^{\alpha n}} [(-1)^{-\vecs \cdot (M\vecx^*+\vecb)}] \] (compare to the proof of \cref{lemma:bpd-fourier-reduce}). By independence, we can pull out a factor of $\E_{\vecb \sim \CD^n}[(-1)^{\vecs \cdot \vecb}] = \prod_{\ell=1}^{\alpha n} \E_{\vecb \sim \CD^n}[(-1)^{\vecs(\ell) \cdot \vecb}]$ where $\vecs = (\vecs(1),\ldots,\vecs(\alpha n))$. Suppose $\|\vecs(\ell)\|_0 = 1$ for some $\ell$; WLOG, $\vecs(1)=(1,0,\ldots,0)$, in which case $\hat{\CZ_{A,M}(\vecs)}$ is a multiple of $\E_{\vecb \sim \CD^n}[(-1)^{\vecs(1) \cdot \vecb}] = \E_{\vecb \sim \CD^n}[(-1)^{b_1}] = -\mu(\vecb)_1 = 0$. In other words, all Fourier coefficients of $\CZ_{A,M}$ which are supported on exactly one coordinate in any block vanish. Thus, we can redefine $h_{k,\alpha}(\ell,n)$ as the probability that $\vecs=(\vecs(1),\ldots,\vecs(\alpha n))$ exists satisfying \emph{both} (1) for all $\ell \in [\alpha n]$, $\vecs(\ell) = \veczero$ or $\|\vecs(\ell)\|_0 \geq 2$, \emph{and} (2) $M^\top \vecs = \vecv$. $h_{k,\alpha}(\ell,n)$ then becomes sufficiently small to carry out the proof of the lower bound.

At this point, the \cite{CGSV21-boolean} hardness result ``bifurcates'' into \cref{thm:cgsv-streaming-lb} and \cref{item:cgsv-bool-hardness} of \cref{thm:cgsv-bool-dichotomy}. On one hand, we can define a sequential $(T+1)$-player version $\dndy\seqrmd_{\alpha,T}(n)$ of $\dndy\rmd$ where --- as in $\seqbpd$ vs. $\bpd$ --- there are $T$ players $\Bob_1,\ldots,\Bob_T$, each of whom receive independent $\Bob$ inputs. Suppose we want to apply the hybrid argument we used to reduce $\seqbpd$ to $\bpd$ (\cref{lemma:bpd-to-seqbpd}, see \cref{sec:bpd-to-seqbpd}) to reduce $\seqrmd$ to $\rmd$. As we mentioned in \cref{sec:mcut-hybrid-discussion}, this requires applying the data processing inequality, which holds only when one of the distributions is the uniform distribution $\Unif_{\BZ_2^k}$. By doing so, and then using the same triangle inequality argument we used to prove \cref{lemma:seqbpd-to-seqbpd'}, we get that if $\vecmu(\CD_Y)=\vecmu(\CD_N)=\veczero$ then $\dndy\seqrmd_{\alpha,T}(n)$ also requires $\Omega(\sqrt n)$ space.

Now for a padded one-wise pair $\CD_N,\CD_Y \in \Delta(\BZ_2^k)$ with $\CD_N=\eta \CD_0+(1-\eta)\CD_N'$, $\CD_Y=\eta \CD_0+(1-\eta)\CD_Y'$, and $\vecmu(\CD_N)=\vecmu(\CD_Y)=\veczero$, we can apply the following C2S reduction from $(\CD_N',\CD_Y')\text{-}\seqrmd$ to $\mbf$: Let $m = \alpha T n$. $\Alice$ uniformly samples a list of $k$-hyperedges $\vece(0,1),\ldots,\vece(0,\eta m/(1-\eta))$ on $[n]$, and she creates the subinstance $\Psi_0$ with constraints $(\vece(0,\ell),\vecx^*|_{\vece(0,\ell)} + \vecb(0,\ell))$ where $\vecb(0,\ell) \sim \CD_0$ for $\ell \in [\eta m/(1-\eta)]$. $\Bob_t$, on input $(M_t,\vecz(t))$ where $M_t$ has edges $(\vece(t,1),\ldots,\vece(t,\alpha n))$, creates the subinstance $\Psi_t$ with constraints $(\vece(t,\ell),\vecz(t,\ell))$ for $\ell \in [\alpha n]$. Now since $\vecz(t,\ell) = \vecx^*|_{\vece(t,\ell)} + \vecb(t,\ell)$ where $\vecb(t,\ell) \sim \CD'_Y$ ($\yes$ case) or $\vecb(t,\ell) \sim \CD'_N$ ($\no$ case), the effect is that the template distributions of the $\yes$ and $\no$ instances are $\CD_Y$ ($\yes$ case) or $\CD_N$ ($\no$ case), which roughly proves \cref{thm:cgsv-streaming-lb}.

On the other hand, what happens if $\vecmu(\CD_Y)=\vecmu(\CD) \neq \veczero$? In this case, \cite{CGSV21-boolean} also proves hardness of $\dndy\rmd_{\alpha}(n)$:

\begin{theorem}[{\cite[Theorem 5.3]{CGSV21-boolean}}]\label{thm:rmd-hardness}
For every $k \geq 2$ and $\CD_N,\CD_Y \in \Delta(\BZ_2^k)$ such that $\vecmu(\CD_N)=\vecmu(\CD_Y)$, there exists $\alpha_0 \in (0,1)$ such that for all $\alpha \in (0,\alpha_0), \delta \in (0,1)$, there exists $\tau > 0$ and $n_0 \in \BN$ such that for all $n \geq n_0$, any protocol for $\dndy\rmd_\alpha(n)$ achieving advantage at least $\delta$ requires $\tau \sqrt n$ communication.
\end{theorem}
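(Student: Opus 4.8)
The plan is to mirror the Fourier-analytic argument of the one-wise case (\cref{thm:rmd-onewise-hardness}), which in turn follows the template of \cref{thm:bpd-hardness}, with ``singleton Fourier coefficients vanish'' replaced throughout by ``singleton Fourier coefficients agree''. Concretely, I would fix a deterministic protocol $\Prot$ for $\dndy\rmd_\alpha(n)$ using $s = \tau\sqrt n$ communication. $\Alice$'s message partitions the $\vecx^*$-space $\BZ_2^n$ into sets $\{A_i\}$, and conditioned on $\vecx^* \in A$, $\Bob$ must distinguish the two conditional distributions of his input $\vecz = M\vecx^* + \vecb$, over random $M \sim \Matchings_{k,\alpha}(n)$ and $\vecx^* \sim \Unif_A$: call them $\CZ^Y_{A,M}$ (with $\vecb \sim \CD_Y^{\alpha n}$) and $\CZ^N_{A,M}$ (with $\vecb \sim \CD_N^{\alpha n}$). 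Exactly as in \cref{thm:bpd-hardness}, a typical $A$ is large (say $|A| \geq 2^{n - \tau'\sqrt n}$ for $\tau' = \Theta(\delta)$ with $\tau = 2\tau'$), so it suffices to show that for such $A$ we have $\E_M[\|\CZ^Y_{A,M} - \CZ^N_{A,M}\|_\tv] \leq \delta/2$, and by Jensen's inequality it is enough to bound $\E_M[\|\CZ^Y_{A,M} - \CZ^N_{A,M}\|_\tv^2]$.

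For the Fourier step, the same Cauchy--Schwarz/Parseval computation as in \cref{lemma:xor} gives that $\|\CZ^Y_{A,M} - \CZ^N_{A,M}\|_\tv^2$ is bounded by a constant multiple of $2^{2k\alpha n}\sum_{\vecs\neq\veczero}|\hat{\CZ}^Y_{A,M}(\vecs) - \hat{\CZ}^N_{A,M}(\vecs)|^2$, and independence across hyperedges (as in the proof of \cref{lemma:bpd-fourire-reduce}, i.e.\ \cref{lemma:bpd-fourier-reduce}) yields $\hat{\CZ}^\CD_{A,M}(\vecs) \propto \hat{\1_A}(M^\top\vecs)\prod_{\ell=1}^{\alpha n}\E_{\vecb\sim\CD}[(-1)^{\vecs(\ell)\cdot\vecb}]$, where $\vecs = (\vecs(1),\dots,\vecs(\alpha n))$ is split into length-$k$ blocks. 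The key observation is that $\E_{\vecb\sim\CD}[(-1)^{\vecs(\ell)\cdot\vecb}]$ depends on $\CD$ only through $\vecmu(\CD)$ whenever $\|\vecs(\ell)\|_0 \le 1$ — it equals $1$ if $\vecs(\ell) = \veczero$ and $-\mu(\CD)_i$ if $\vecs(\ell) = \vece_i$ — so, since $\vecmu(\CD_N) = \vecmu(\CD_Y)$, the two products coincide, and hence $\hat{\CZ}^Y_{A,M}(\vecs) = \hat{\CZ}^N_{A,M}(\vecs)$, unless some block $\vecs(\ell)$ is ``heavy'', i.e.\ $\|\vecs(\ell)\|_0 \ge 2$. (This is the analogue of the vanishing-on-singletons phenomenon used in \cref{thm:rmd-onewise-hardness}, but weaker: there \emph{every} nonzero block was forced to be heavy, here only \emph{some} block is.) Using that $M^\top$ is injective on $\BZ_2^{k\alpha n}$ because $M$ is a hypermatching, and taking expectations over $M$, I would reduce to bounding $\frac{2^{2n}}{|A|^2}\sum_{\ell}h_{k,\alpha}(\ell,n)\,\W^\ell[\1_A]$, where $h_{k,\alpha}(\ell,n)$ is (roughly) the probability that a fixed $\vecv$ of Hamming weight $\ell$ equals $M^\top\vecs$ for some $\vecs$ having at least one heavy block.

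It then remains to bound this weighted Fourier sum, split at level $\ell = 4\tau'\sqrt n$: for low levels one applies \cref{lemma:low-fourier-bound} (which, as in \cref{rem:bpd-low-ell-terms}, is where the dangerous powers of $n$ enter) and for high levels \cref{prop:parseval}, and one wants to conclude via a convergent geometric series that is small once $\alpha$ and $\tau'$ are taken below thresholds depending on $\CD_N,\CD_Y$. The hard part — and the reason the statement only promises $\alpha$ below some $\alpha_0 = \alpha_0(\CD_N,\CD_Y)$ — is the combinatorial bound on $h_{k,\alpha}(\ell,n)$, the analogue of \cref{lemma:bpd-combo-bound}. The one-wise analysis used that \emph{all} nonzero blocks are heavy, which forces $\Omega(\ell)$ support vertices of $\vecv$ to cluster into common hyperedges and gives $h_{k,\alpha}(\ell,n)\lesssim (O(\alpha\ell)/n)^{\ell/2}$, enough to kill the level-$\ell$ weight bound $(\zeta\tau'\sqrt n/\ell)^\ell$. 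Here only one heavy block is guaranteed, so the crude event ``$\supp(\vecv)$ is covered by $M$ and one hyperedge meets $\supp(\vecv)$ at least twice'' yields only a single factor of $1/n$, which is too weak at small (and odd) levels. To push the estimate through one must retain more structure — in particular the cancellation coming from the singleton factors $-\mu(\CD)_i$ in the product (one may assume every $|\mu(\CD)_i| < 1$, since a coordinate with $|\mu_i| = 1$ is frozen and can be dropped) together with a finer accounting of how $\supp(\vecv)$ distributes among the hyperedges of a random hypermatching. I expect this combinatorial estimate, rather than any of the surrounding reductions (which are a routine adaptation of \cref{sec:bpd-hardness}), to be where essentially all of the work lies; note also that, unlike \cref{thm:cgsv-streaming-lb}, this cannot simply be reduced to the one-wise case via a padding construction, since the matching-marginals hypothesis is strictly more general than being a padded one-wise pair.
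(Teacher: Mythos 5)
Your diagnosis of where the direct Fourier attack breaks down is exactly right: once $\vecmu(\CD_N) = \vecmu(\CD_Y) \neq \veczero$, the cancellation on singleton blocks only forces \emph{some} block of $\vecs$ to be heavy rather than \emph{every} nonzero block, and the resulting bound on $h_{k,\alpha}(\ell,n)$ collapses to roughly $\alpha^{\lceil \ell/k \rceil}$ (one heavy hyperedge plus free coverage of the remaining $\supp(\vecv)$) instead of $(O(\alpha\ell)/n)^{\ell/2}$. That is fatal for $\ell = O(1)$, where \cref{lemma:low-fourier-bound} supplies the worst $\mathrm{poly}(n)$ factors (cf.\ \cref{rem:bpd-low-ell-terms}). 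But the proposal then stalls: ``retain more structure from the singleton factors $-\mu_i$'' and ``finer accounting of $\supp(\vecv)$'' is not a plan, and to my knowledge nobody has made a direct Fourier estimate of this kind work --- the singleton factors $-\mu_i$ do not decay with $\ell$, so they cannot supply the missing powers of $n$.

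The paper's proof of \cref{thm:rmd-hardness} abandons the single-shot Fourier bound entirely and instead \emph{decomposes} the distance from $\CD_N$ to $\CD_Y$ into a bounded chain of ``polarization'' moves, each of which is individually reducible to the one-wise theorem you already have (\cref{thm:rmd-onewise-hardness}). A polarization step replaces mass $\epsilon$ on two incomparable points $\vecu, \vecv$ with mass $\epsilon$ on $\vecu\wedge\vecv$ and $\vecu\vee\vecv$; this preserves $\vecmu$ exactly, and the pair $(\CD, \CD_{\vecu,\vecv})$ can be written as a common mixture $(1-2\epsilon)\CD_0 + 2\epsilon\,\Unif_{\{\vecu,\vecv\}}$ vs.\ $(1-2\epsilon)\CD_0 + 2\epsilon\,\Unif_{\{\vecu\wedge\vecv,\vecu\vee\vecv\}}$, where on the coordinates where $\vecu,\vecv$ differ the two $\Unif$ pieces have \emph{zero} marginals. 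Restricting to those coordinates hands you back precisely the one-wise situation, where every nonzero block must be heavy and the strong combinatorial bound applies. The paper then shows that a bounded number $C = C(k)$ of polarizations carries any $\CD$ down to a unique ``canonical'' distribution depending only on $\vecmu(\CD)$; since $\vecmu(\CD_N) = \vecmu(\CD_Y)$, the two chains meet, and a triangle inequality of length $\le 2C$ finishes. So the approaches differ fundamentally: you try to push a monolithic Fourier estimate through and run into the exact obstacle you identify, whereas the paper engineers a combinatorial reduction so that the only Fourier estimate ever needed is the one-wise one. Note this also explains why the paper's \cref{thm:rmd-hardness} only yields \emph{sketching} (not streaming) hardness --- the data-processing step in the hybrid argument still needs a uniform $\no$ distribution, which the polarization endpoints don't provide --- a distinction your proposal doesn't engage with.
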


There are a few ``disadvantages'' to \cref{thm:rmd-hardness} in comparison with its special case \cref{thm:rmd-onewise-hardness}. Firstly, we can no longer use the hybrid argument to get hardness for $\dndy\seqrmd_{\alpha,T}(n)$, since the data processing inequality no longer applies. Thus, we have to settle for proving lower bounds for the \emph{parallel} randomized mask detection problem $\dndy\pllrmd_{\alpha,T}(n)$, which is a $(T+1)$-player game with the following structure: $\Bob_1,\ldots,\Bob_T$ each get an independent $\Bob$ input for $\rmd$ and send a message to a ``referee'' $\Carol$, who has to decide which case they are in; in particular, the $\Bob_t$'s cannot communicate with each other in any way, unlike in the $\seqrmd$ game. By independence, hardness for $\rmd$ (i.e., \cref{thm:rmd-hardness}) extends immediately to hardness for $\pllrmd$, but this ``parallel'' communication game can only rule out \emph{sketching} algorithms.

Moreover, \cref{thm:rmd-hardness} has a significantly more complex proof. The basic outline is the following. Let's think of $\BZ_2^k$ as a \emph{lattice}: it has a partial order, namely entrywise comparison, denoted $\leq$, where we define $0 \leq 1$. Moreover, we can define $\vecu \wedge \vecv, \vecu \vee \vecv \in \BZ_2^k$ as entrywise \textsc{and}'s and \textsc{or}'s for $\vecu,\vecv \in \BZ_2^k$.\footnote{In the context of lattices, $\wedge$ and $\vee$ are typically called the \emph{join} and \emph{meet} operations.} If $\vecu \not\leq \vecv$ and $\vecv \not\leq \vecu$, we say $\vecu$ and $\vecv$ are \emph{incomparable}, denoted $\vecu\parallel \vecv$. Now given a distribution $\CD \in \Delta(\BZ_2^k)$ supported on two incomparable elements $\vecu \parallel \vecv$, we can consider the \emph{polarized} distribution $\CD_{\vecu,\vecv}$ which, letting $\epsilon = \min\{\CD(\vecu),\CD(\vecv)\}$, decreases $\CD(\vecu)$ and $\CD(\vecv)$ by $\epsilon$, and increases $\CD(\vecu \wedge \vecv)$ and $\CD(\vecu\vee\vecv)$ by $\epsilon$. Note that polarization preserves marginals, i.e., $\vecmu(\CD)=\vecmu(\CD_{\vecu,\vecv})$. \cite{CGSV21-boolean} proves two key theorems about this operation:

\begin{enumerate}[label={\roman*.},ref={\roman*}]
    \item $(\CD,\CD_{\vecu,\vecv})\text{-}\rmd_\alpha(n)$ requires $\Omega(\sqrt n)$ communication. This is essentially because we can write $\CD$ and $\CD_{\vecu,\vecv}$ as ``mixtures'' \[ \CD = (1-2\epsilon)\CD_0 + 2\epsilon \; \Unif_{\{\vecu,\vecv\}} \text{ and } \CD_{\vecu,\vecv} = (1-2\epsilon)\CD_0 + 2\epsilon \; \Unif_{\{\vecu\wedge\vecv,\vecu\vee\vecv\}} \] for the same ``base'' distribution $\CD_0 \in \Delta(\BZ_2^k)$, which (roughly) allows us to reduce from $(\Unif_{\{\vecu,\vecv\}},\Unif_{\{\vecu\wedge\vecv,\vecu\vee\vecv\}})\text{-}\rmd_\alpha(n)$. Moreover, on the coordinates where $\vecu$ and $\vecv$ differ, $\Unif_{\{\vecu,\vecv\}}$ and $\Unif_{\{\vecu\wedge\vecv,\vecu\vee\vecv\}}$ have zero marginals; thus, we can (roughly) reduce from the ``restrictions'' to these coordinates and apply \cref{thm:rmd-onewise-hardness}.\label{item:cgsv-polar-hardness}
    \item There is some constant $C \in \BN$ such that after applying at most $C$ polarizations, $\CD$ is no longer polarizable, i.e., its support is a \emph{chain}. Moreover, this final distribution is unique for each starting marginal vector $\vecmu \in [-1,1]^k$; we call it the ``canonical distribution'', denoted $\CD_{\vecmu}$.\label{item:cgsv-polar-path}
\end{enumerate}

Together, \cref{item:cgsv-polar-hardness,item:cgsv-polar-path} suffice to prove \cref{thm:bpd-hardness}. Indeed, given two starting distributions $\CD_N,\CD_Y \in\Delta(\BZ_2^k)$ with matching marginals $\vecmu(\CD_N)=\vecmu(\CD_Y)$, we can use \cref{item:cgsv-polar-path} to repeatedly polarize both $\CD_Y$ and $\CD_N$ to produce a ``path'' of distributions of length at most $2C$ connecting them via their common canonical distribution, such that each adjacent pair of distributions is the result of polarization; \cref{item:cgsv-polar-hardness} then implies $\rmd$-hardness for each such pair; finally, we apply the triangle inequality to conclude $\rmd$-hardness for the path's endpoints, i.e., $\CD_Y$ and $\CD_N$. For more details, see \cite[\S7]{CGSV21-boolean}.

Finally, we mention that \cite{CGSV21-finite} reproves all the algorithmic and hardness results of \cite{CGSV21-boolean} which we've discussed so far in the much more general setting of $\mF$ problems defined by families of predicates over general alphabets $\BZ_q$. We briefly describe some of these results. For a distribution $\CD \in \Delta(\BZ_q^k)$, let $\vecmu(\CD) \in (\Delta(\BZ_q))^k$ denote the vector of ``marginal distributions'' resulting from projecting onto each coordinate. $\CD$ is \emph{one-wise independent} if $\vecmu(\CD) = (\Unif_{\BZ_q})^k$; $\CF$ \emph{weakly supports one-wise independence} if there exists $\CF'\subseteq \CF$ such that $\rho(\CF')=\rho(\CF)$ and for each $f \in \CF'$, there exists a one-wise independent distribution $\CD_f \in \Delta(\BZ_q^k)$ supported on $f^{-1}(1)$. In this case, \cite[Theorem 2.17]{CGSV21-finite} shows that $\mF$ is streaming approximation-resistant in $\sqrt n$ space. More generally, \cite{CGSV21-finite} proves a dichotomy theorem for $\mF$ along the lines of \cref{thm:cgsv-bool-dichotomy}, based on distributions $\CD_N,\CD_Y \in \Delta(\CF \times \BZ_q^k)$ such that projected onto each $f \in \CF$, the marginals $\vecmu$ match. Defining $\lambda,\beta,\gamma$ for these distributions is out of scope for this thesis --- see \cite[\S2.1]{CGSV21-finite} --- but we do mention that the algorithmic result now requires computing the so-called ``$(1,\infty)$-norm'' of an $n \times k$ matrix, which is the $1$-norm of the vector consisting of the largest element in each row, corresponding to greedily assigning each variable to the element in $\BZ_q$ it ``most wants to be'', while the hardness result goes through a more complex version of ``polarization''.

\section{Lower bounds for linear-space streaming}\label{sec:cgsvv}

In this section, we discuss the recent linear-space streaming lower bound of Chou, Golovnev, Sudan, Velingker, and Velusamy~\cite{CGS+22}, which extend the $\mcut$ lower bound of Kapralov and Krachun~\cite{KK19} to a large family of so-called ``wide'' CSPs. We'll begin with some statements of these lower bounds, which we'll need in order to prove lower bounds against ordering CSPs in \cref{chap:ocsps}. (Specifically, we mostly restrict to the single-predicate case $|\CF|=1$; the general formulation is given in \cite{CGS+22}.)

Fix $k,q \in \BN$, and let $\CC \subseteq [q]^k$ denote the subspace of constant vectors (i.e., vectors $(a,\ldots,a)$ for $a \in \BZ_q$). Roughly, a predicate $f : \BZ_q^k \to \{0,1\}$ is ``wide'' if it is support has a large intersection with a coset of $\CC$ in $\BZ_q^k$. To be precise, for $\vecb \in [q]^k$, let
\begin{equation}\label{eqn:omega}
    \omega_\vecb(f) \eqdef \E_{\vecc \sim \Unif_\CC}[f(\vecc+\vecb)],
\end{equation}
and define the \emph{width} of $f$ by \[ \omega(f) \eqdef \max_{\vecb \in \BZ_q^k} \left(\omega_\vecb(f)\right). \]

The lower bounds for wide predicates in \cite{CGS+22} are based on the following communication problem, called \emph{sequential implicit randomized shift detection ($\seqirsd$)}:\footnote{\cite{CGS+22} defines more generally a ``(sequential) implicit randomized \emph{mask} detection'' problem, but proves hardness only when the masks are uniform shifts (i.e., uniform elements of $\CC$).}

\begin{definition}[$\seqirsd$]\label{def:seqirsd}
Let $2\leq q,k \in \BN$, $\alpha \in (0,1)$, and $T, n \in \BN$. Then $\kq\seqirsd_{\alpha,T}(n)$ is the following $T$-player one-way communication problem with players $\Bob_1,\ldots,\Bob_T$:

\begin{itemize}
    \item Sample $\vecx^* \sim \Unif_{\BZ_q^n}$.
    \item Each $\Bob_t$ receives an adjacency matrix $M_t\in\{0,1\}^{k\alpha n \times n}$ sampled from $\Matchings_{k,\alpha}(n)$, and a vector $\vecz(t) \in \BZ_q^{k\alpha n}$ labelling each edge of $M_t$ as follows:
    \begin{itemize}[nosep]
        \item $\yes$ case: $\vecz(t) = M_t \vecx^* + \vecb(t)$, where $\vecb(t) \sim \Unif_C^{\alpha n}$.
        \item $\no$ case: $\vecz(t) \sim (\Unif_{\BZ_q}^k)^{ \alpha n}$.
    \end{itemize}
    \item Each $\Bob_t$ can send a message to $\Bob_{t+1}$, and at the end, $\Bob_T$ must decide whether they are in the $\yes$ or $\no$ case.
\end{itemize}
\end{definition}

To provide some interpretation for this definition, for each player $\Bob_t$ and edge-index $\ell \in [\alpha n]$, let $\vece(t,\ell)$ denote the $\ell$-th edge in $M_t$ and write $\vecz(t) = (\vecz(t,1),\ldots,\vecz(t,\alpha n))$ for $\vecz(t,\ell) \in \BZ_q^k$. In the $\yes$ case, each block $\vecz(t,\ell)$ equals $\vecx^*|_{\vece(t,\ell)}$ plus a random shift (i.e., a random element of $\CC$); in the $\no$ case, each block $\vecz(t,\ell)$ is uniformly random. Note also that this problem is ``implicit'', like the $\seqibpd$ problem we described in \cref{sec:mcut-linear-space}, in the sense that there is no $\Alice$ who knows the hidden assignment.

Moreover, consider the case $k=q=2$. For $t \in [T],\ell\in[\alpha n]$, if $\vece(t,\ell)=(u,v)$, then in the $\yes$ case $\Bob_t$'s $\ell$-th block $\vecz(t,i) = (x^*_u+b(t)_\ell,x^*+b(t)_\ell)$ where $b(t)_\ell \sim \Bern_{\frac12}$; hence, $\vecz(t,\ell)$ is information-theoretically equivalent to the bit $x^*_u+x^*_v$. On the other hand, in the $\no$ case, $\vecz(t,i)$ is simply uniformly random. Thus, in the $k=q=2$ case $\kq\seqirsd_{\alpha,T}(n)$ is equivalent to the $\seqibpd_{\alpha,T}(n)$ problem which we described in \cref{sec:mcut-discussion}, used in \cite{KK19} to prove linear-space hardness of approximating $\mcut$.

The technical core of the lower bounds in \cite{CGS+22} is the following hardness result for $\seqirsd$:

\begin{theorem}[{\cite[Theorem 3.2]{CGS+22}}]\label{thm:seqirsd-hardness}
For every $2 \leq q,k \in \BN$, there exists $\alpha_0 \in (0,1)$ such that for every $\delta \in (0,1)$, $\alpha \in (0,\alpha_0)$, $T \in \BN$, there exist $\tau > 0$ and $n_0 \in \BN$, such that for all $n \geq n_0$, any protocol for $\kq\seqirsd_{\alpha,T}(n)$ achieving advantage at least $\delta$ requires $\tau n$ communication.
\end{theorem}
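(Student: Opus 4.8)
The plan is to follow the strategy of Kapralov and Krachun~\cite{KK19} for $\seqibpd$ --- which, as the excerpt notes, is exactly the $k=q=2$ special case of $\kq\seqirsd$ --- and carry out each step over the alphabet $\BZ_q^k$ instead of $\BZ_2$. First, by Yao's principle (\cref{prop:yao}-style reasoning) it suffices to rule out \emph{deterministic} one-way protocols: fix $\delta$ and $T$, and suppose $\Prot$ is a deterministic protocol for $\kq\seqirsd_{\alpha,T}(n)$ using $s \le \tau n$ communication, where $\tau = \tau(\delta,T,q,k) > 0$ will be chosen small. Since the game is one-way, the message $\Bob_t$ sends is an $s$-bit function of $(M_1,\vecz(1)),\ldots,(M_t,\vecz(t))$. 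The key object to track is the distribution $\CD_t \in \Delta(\BZ_q^n)$ of the hidden shift $\vecx^*$ conditioned on the first $t$ messages in the $\yes$ experiment (a random distribution, depending on the messages). Because the global shifts by constant vectors are never revealed, $\CD_t$ is at best uniform on a coset of the global-shift line $\{(a,\ldots,a)\} \le \BZ_q^n$, and the goal is to show that for small $s$ it is essentially \emph{this} maximally-spread distribution, so that $\Bob_T$'s input $(M_T,\vecz(T))$ carries almost no information distinguishing $\yes$ from $\no$. Via the $\textsc{xor}$ lemma (\cref{lemma:xor}) together with a reduction in the spirit of \cref{lemma:bpd-fourier-reduce}, the residual advantage is controlled by the nonzero Fourier weight $q^{2n}\sum_{\vecv\ne\veczero}|\widehat{\CD_t}(\vecv)|^2$ of these conditional distributions, so the entire proof reduces to bounding this quantity.

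The heart of the argument is an inductive invariant: with high probability over the first $t$ messages, the level-$\ell$ Fourier weight of $\CD_t$ obeys $q^{2n}\,\W^\ell[\CD_t] \le g_t(\ell)$ for an explicit, only mildly growing family of functions $g_t$ (morally of the form $(C\ell/n)^{\Theta(\ell)}$ up to the per-round losses). The base case is round $1$: since $s \le \tau n$, the set of $\vecx^*$ consistent with $\Bob_1$'s message has at least $q^{n-s} \ge q^{n-\tau n}$ elements, so \cref{lemma:low-fourier-bound} --- already stated over $\BZ_q$ --- bounds its low-level Fourier weights. For the inductive step, passing from round $t-1$ to round $t$ splits into two operations. \textbf{(a) Revealing $(M_t,\vecz(t))$:} in the $\yes$ case this conditions $\vecx^*$ on the edgewise linear system $M_t\vecx^* \equiv \vecz(t) \pmod{\CC}$; in Fourier space this \emph{aliases} each coefficient $\widehat{\CD_{t-1}}(\vecv)$ with coefficients at the levels reachable from $\vecv$ by adding supports of the (at most $\alpha n$, with $\alpha$ small) edge-difference vectors of $M_t$. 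The combinatorics of how a random sparse $k$-hypermatching mixes Fourier levels must be controlled here, in the spirit of \cref{lemma:bpd-combo-bound} --- except that, unlike in the $\sqrt n$ regime, the estimate must stay useful for levels $\ell$ all the way up to $\Omega(n)$; as in \cite{KK19} one also interposes a tiny amount of noise to damp the highest frequencies. \textbf{(b) Compressing to $s$ bits:} conditioning on $\Bob_t$'s message partitions the distribution into at most $q^{o(n)}$ atoms, and to avoid a catastrophic $q^s$ blow-up of the $\ell_2$-type potential one uses a second-moment/convexity argument, exploiting that the estimate in (a) is super-polynomially strong so that a $q^s$ factor at low levels is affordable given the choice of $g_t$ and $\tau$ small. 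Iterating over the $T = O(1)$ rounds, the per-round constant-factor losses compound only to $C^T = O(1)$, and one concludes $q^{2n}\sum_{\vecv\ne\veczero}|\widehat{\CD_T}(\vecv)|^2 \to 0$, hence the advantage falls below $\delta$ for $n$ large, giving the contrapositive of the theorem.

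I expect the main obstacle to be exactly the interplay in the inductive step between the random-matching aliasing of (a) and the message compression of (b) in the \emph{linear}-space regime: for the $\sqrt n$ bound one splits the Fourier sum at level $O(\sqrt n)$ and discards the tail via \cref{prop:parseval} (cf.\ the proof of \cref{thm:bpd-hardness}), but here the levels $\ell = \Theta(n)$ genuinely contribute, so $g_t$ must be engineered to decay fast enough at those levels while still absorbing a $q^s$ factor with $s$ linear in $n$ --- which seems possible only with the round-by-round smoothing and a delicate choice of exponents. By contrast, the generalization from $q=2$ to general $q$ should be comparatively routine, since the $\BZ_q$ Fourier analysis of \cref{sec:fourier} and \cref{lemma:low-fourier-bound} already accommodate it; the one genuinely $q$- and $k$-dependent point is the aliasing combinatorics in (a), because an edge now pins down a coset of $\CC$ rather than a single parity and a $k$-hyperedge contributes $k-1$ coupled difference constraints. (One might alternatively hope to reduce the $k$-ary case to a bounded-degree binary one by replacing each hyperedge $(e_1,\ldots,e_k)$ with the chain of difference constraints $e_1e_2, e_2e_3, \ldots, e_{k-1}e_k$, but this produces a union of short paths rather than a matching, so it would require first extending the $k=2$ analysis to bounded-degree graphs.)
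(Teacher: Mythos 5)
This theorem is not proved in the thesis: it is cited as \cite[Theorem~3.2]{CGS+22}, and the paper offers only a one-paragraph gloss of the proof strategy at the end of \cref{sec:cgsvv}. So there is no full in-paper proof to compare against, only that discussion.

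That said, your reconstruction matches that discussion quite closely in spirit: track the conditional distribution of the hidden $\vecx^*$ round by round, reduce the advantage to nonzero Fourier weight via \cref{lemma:xor}, maintain an inductive level-by-level Fourier weight invariant, and observe that the hard part relative to the $\sqrt n$-space argument of \cref{thm:bpd-hardness} is that one can no longer truncate at level $O(\sqrt n)$ and absorb the tail via Parseval --- the estimate must survive out to $\ell = \Theta(n)$. You also correctly identify the noise-smoothing trick and the constant-factor-per-round compounding from \cite{KK19}, and your closing remark about reducing $k$-ary to binary is a sensible instinct that you are right to distrust.

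Two points where your sketch is softer than the paper's own description. First, the base case is not as clean as ``$|A_1| \ge q^{n-s}$'': unlike in $\bpd$, there is no Alice here, so $\Bob_1$'s message is a function of $(M_1,\vecz(1))$, not of $\vecx^*$; the conditional distribution of $\vecx^*$ after round $1$ already reflects the linear constraints $M_1\vecx^* \in \vecz(1)+\CC^{\alpha n}$, not just an $s$-bit partition. The relevant invariant is on the Fourier weight of the \emph{sets $A_t$ of histories}, and its formulation is more delicate than a support-size lower bound. Second, and more substantively, the paper singles out as \emph{the} key observation that $\CZ_{A_t,M_t}$ is a \emph{structured} distribution --- each coordinate of $\vecz(t)$ reveals only \emph{sums} of coordinates of $\vecx^*$ --- so that one applies \cref{lemma:low-fourier-bound} to a carefully defined \emph{reduced} version of $\CZ_{A_t,M_t}$ rather than directly. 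Your ``aliasing'' picture gestures at this, but doesn't isolate the dimension-reduction that makes \cref{lemma:low-fourier-bound} usable at linear scale; directly applying it to $\1_{A_t}$ fails, as the paper notes. Your step (b) (``second-moment/convexity to absorb a $q^s$ factor'') is plausible heuristics, but the actual inductive hypothesis in \cite[\S5]{CGS+22} is, by the paper's own admission, considerably more intricate than anything a second-moment bound alone would give.

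In short: you have the right skeleton and the right diagnosis of where the difficulty lies, but the decisive technical moves --- the ``reduced'' distribution and the precise inductive hypothesis --- are exactly the parts your sketch leaves as black boxes, and they are the parts the paper says are hard.
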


The following construction and analysis generalize \cref{cons:seqbpd-to-mcut} and \cref{lemma:seqbpd-to-mcut-analysis} for $\mcut$, respectively:

\begin{construction}[C2S reduction from $\seqirsd$ to $\mf$]
\label{cons:seqirsd}
Let $\vecb \in \BZ_q^k$ and $f : \BZ_q^k\to\{0,1\}$. For each $t \in [T]$, $\Bob_t$'s reduction function $\R_t$ outputs an instance $\Psi_t$ as follows: For each $\vece(t,\ell) \in M_t$ and corresponding block $\vecz(t,\ell)\in\BZ_q^k$ of $\vecz(t)$, $\Bob_t$ adds $\vece(t,\ell)$ to $\Psi_t$ iff $\vecz(t,\ell)-\vecb \in \CC$.
\end{construction}

\begin{lemma}\label{lemma:seqirsd-analysis}
For all $f : \BZ_q^k \to \{0,1\}$, $\alpha \in (0,1)$, $\epsilon \in (0,\frac12)$, and $\vecb \in \BZ_q^k$, there exist $T, n_0 \in \BN$ such that for every $n \geq n_0$, the following holds. Let $\CY$ and $\CN$ denote the $\yes$ and $\no$ distributions for $\kq\seqirsd_{\alpha,T}(n)$, and let $(\R_0,\ldots,\R_T)$ be the reduction functions from \cref{cons:seqirsd}. Then \[ \Pr_{\Psi \sim (\R_0,\ldots,\R_T) \circ \CY}\left[\val_\Psi \leq \omega_\vecb(f)-\epsilon \right]\leq\exp(-n) \text{ and } \Pr_{\Psi \sim (\R_0,\ldots,\R_T) \circ \CN}\left[\val_\Psi \geq \rho(f) + \epsilon\right]\leq \exp(-n). \]
\end{lemma}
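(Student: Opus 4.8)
The plan is to mirror the analysis of \cref{cons:seqbpd-to-mcut} for $\mcut$ (i.e.\ \cref{lemma:seqbpd-to-mcut-analysis}), handling the $\yes$ and $\no$ cases separately. Two observations simplify matters. First, in the $\yes$ case the mask $\vecc(t,\ell)\in\CC$ is irrelevant to \cref{cons:seqirsd}: since $\CC$ is a subgroup of $(\BZ_q^k,+)$, we have $\vecz(t,\ell)-\vecb=\vecx^*|_{\vece(t,\ell)}+\vecc(t,\ell)-\vecb\in\CC$ iff $\vecx^*|_{\vece(t,\ell)}-\vecb\in\CC$, so the constraint set of $\Psi$ is a deterministic function of $(\vecx^*,M_1,\dots,M_T)$, namely the hyperedges $\vece$ with $\vecx^*|_\vece\in\vecb+\CC$. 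Second, in the $\no$ case $\Psi$ does not involve $\vecx^*$ at all, and the blocks $\vecz(t,\ell)\sim\Unif_{\BZ_q^k}$ are i.i.d., so each $\vece(t,\ell)$ is kept independently with probability $|\CC|/q^k=q^{1-k}$; thus $\Psi$ is a union of $T$ independent random $k$-hypermatchings with $\mathrm{Bin}(\alpha n,q^{1-k})$ edges each.

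For the $\yes$ case I would prove the stronger, \emph{probability-one} bound $\val_\Psi\ge\omega_\vecb(f)$. Consider the $q$ cyclic shifts $\vecx^*+j\vecone$, $j\in\BZ_q$. Each constraint $\vece$ of $\Psi$ has $\vecx^*|_\vece\in\vecb+\CC$, and $(\vecx^*+j\vecone)|_\vece=\vecx^*|_\vece+j\vecone$ runs over all $q$ elements of the coset $\vecb+\CC$ as $j$ ranges over $\BZ_q$, each exactly once; hence $\frac1q\sum_j f\bigl((\vecx^*+j\vecone)|_\vece\bigr)=\frac1q\sum_{\vecc\in\CC}f(\vecb+\vecc)=\omega_\vecb(f)$. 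Averaging over the (unit-weight) constraints of $\Psi$ gives $\frac1q\sum_j\val_\Psi(\vecx^*+j\vecone)=\omega_\vecb(f)$, so some shift attains value $\ge\omega_\vecb(f)$, whence $\val_\Psi\ge\omega_\vecb(f)>\omega_\vecb(f)-\epsilon$. (If $\Psi$ is empty I use the convention $\val_\Psi=1$; this case occurs with probability $\exp(-\Omega(n))$ anyway.) So the $\yes$ estimate needs no concentration argument at all.

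For the $\no$ case I would show, exactly as for $\mcut$, that for each fixed $\vecx\in\BZ_q^n$, $\Pr[\val_\Psi(\vecx)\ge\rho(f)+\epsilon]\le\exp(-c\,\alpha Tn)$ for a constant $c=c(q,k,\epsilon)>0$, then union-bound over the $q^n$ assignments $\vecx$ (or, using the permutation-invariance of the $\no$ distribution, over the $\poly(n)$ value-multiplicity types) and pick $T$ large enough that $q^n\exp(-c\alpha Tn)\le\exp(-n)$. For the per-$\vecx$ bound: condition on the keep pattern, so that $\Psi$ is a union of $T$ independent random $k$-hypermatchings with $m(\Psi)$ total edges, and $m(\Psi)\ge\tfrac12 q^{1-k}\alpha Tn$ except with probability $\exp(-\Omega(\alpha Tn))$ by a Chernoff bound. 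Conditioned on the earlier edges of a matching, its next edge is a uniform ordered $k$-tuple of distinct indices from the remaining set $R$, so its expected contribution $\E_{\vece\subseteq R}[f(\vecx|_\vece)]$ is at most $\E_{\veca\sim(\CD')^k}[f(\veca)]+O(k^2/|R|)\le\rho(f)+O(k^2/|R|)$, where $\CD'$ is the empirical distribution of $\vecx$ on $R$ and I use the $|\CF|=1$ case of \cite[Proposition 2.12]{CGSV21-finite}, $\rho(f)=\max_{\CD'\in\Delta(\BZ_q)}\E_{\veca\sim(\CD')^k}[f(\veca)]$, together with the without-replacement correction; since $\alpha\le 1/k$, either $|R|=\Omega(n)$ throughout (if $\alpha<1/k$) or I discard the final $O(\sqrt n)$ edges of each matching — a $o(1)$ fraction of the constraints — to keep $|R|\ge\sqrt n$, so these conditional means stay below $\rho(f)+\epsilon/2$ for $n$ large. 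Applying \cref{lemma:azuma} to the $\{0,1\}$-valued sequence $(f(\vecx|_\vece))_\vece$ of kept constraints with $p=\rho(f)+\epsilon/2$ then yields $\val_\Psi(\vecx)\le\rho(f)+\epsilon$ except with probability $\exp(-\Omega(\alpha Tn))$, which combined with the Chernoff bound gives the claim.

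Given the symmetry argument, the $\yes$ direction is essentially free, and all the real content — and the only subtlety I anticipate — is in the $\no$ case, which is a routine generalization of the $\mcut$ analysis sketched after \cref{lemma:seqbpd-to-mcut-analysis}. The two points to be careful about are (i) upgrading the optimality of $\rho(f)$ over product distributions to a \emph{uniform} conditional-mean bound along each random hypermatching, which costs only the $O(k^2/|R|)$ without-replacement term plus the $O(\sqrt n)$-edge truncation in the boundary case $\alpha=1/k$; and (ii) combining the $\exp(-\Omega(\alpha Tn))$ error terms (one from \cref{lemma:azuma}, one Chernoff term for $m(\Psi)$, and the $q^n$-fold union bound) into a single $\exp(-n)$ by choosing $T=T(q,k,\alpha,\epsilon)$ sufficiently large, just as for $\mcut$. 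I do not expect a genuine obstacle beyond this bookkeeping.
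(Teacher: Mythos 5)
Your proof is correct, and since the paper states \cref{lemma:seqirsd-analysis} without proof (citing \cite{CGS+22}) it is a genuine filling-in rather than a comparison. The structure mirrors the paper's own proof sketch of \cref{lemma:seqbpd-to-mcut-analysis} for $\mcut$ in the way the paper itself suggests, but the $\yes$ case requires one new idea beyond the $\mcut$ sketch (where $\omega_{(0,1)}(\cut)=1$ makes the $\yes$-value trivially $1$), and you supply it cleanly: since each kept hyperedge $\vece$ has $\vecx^*|_\vece\in\vecb+\CC$ and the orbit $\{\vecx^*+j\vecone\}_{j\in\BZ_q}$ hits each element of that coset exactly once, $\frac1q\sum_j\val_\Psi(\vecx^*+j\vecone)=\omega_\vecb(f)$ and pigeonhole gives $\val_\Psi\ge\omega_\vecb(f)$ deterministically (modulo the $\exp(-\Omega(n))$ empty-instance event). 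This is even stronger than the claimed high-probability bound, and your observation that the mask $\vecb(t)\sim\Unif_\CC$ is absorbed by the coset test (so $\Psi$ is a deterministic function of $(\vecx^*,M_1,\ldots,M_T)$) is the right simplification.

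The $\no$ case is the routine generalization you describe, and the bookkeeping you flag — (i) the without-replacement correction $O(k^2/|R|)$ on top of $\rho(f)=\max_{\CD'}\E_{\veca\sim(\CD')^k}[f(\veca)]$ from \cite[Proposition 2.12]{CGSV21-finite}, together with truncating the last $O(\sqrt n)$ edges when $\alpha$ is near $1/k$; and (ii) conditioning on the keep pattern and using a Chernoff bound to ensure $m(\Psi)=\Omega(\alpha Tn)$ before invoking \cref{lemma:azuma} and the $q^n$-fold union bound — is exactly the bookkeeping in the $\mcut$ sketch, adapted to $q>2$ and $k>2$. One small thing worth making explicit when you write this up: after conditioning on the keep pattern, the fact that the retained edges of each $M_t$ are themselves distributed as a uniform $k$-hypermatching of the appropriate size uses the exchangeability of the edge sequence under $\Matchings_{k,\alpha}(n)$; that is the hypothesis under which "the next edge is a uniform ordered $k$-tuple of distinct indices from $R$" holds conditionally on the past, which is what \cref{lemma:azuma} needs. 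No genuine gap.
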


Together, \cref{thm:seqirsd-hardness,lemma:seqirsd-analysis} give the following corollary (which, without too much extra work, can be generalized to all families of predicates):

\begin{corollary}[{\cite[Theorem 4.3]{CGS+22}}]\label{thm:cgsvv}
For every $f : \BZ_q^k \to \{0,1\}$ and constant $\epsilon > 0$, any streaming algorithm which $(\frac{\omega(f)}{\rho(f)}+\epsilon)$-approximates $\mf$ requires $\Omega(n)$ space. Moreover, for every \emph{family} of predicates $\CF$, streaming algorithms which $(\frac{\omega(\CF)}{\rho(\CF)}+\epsilon)$-approximate $\mF$ require $\Omega(n)$ space, where $\omega(\CF) \eqdef \min_{f\in\CF} \omega(f)$.
\end{corollary}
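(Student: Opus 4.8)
The plan is to follow the four‑step template used to prove \cref{thm:mcut-hardness}, with the problem $\seqirsd$ (\cref{def:seqirsd}) playing the role that $\seqbpd$ played there, and with one simplification: no hybrid argument is needed, because \cref{thm:seqirsd-hardness} already gives an $\Omega(n)$ communication lower bound for the \emph{sequential}, $T$‑player problem directly (for $\mcut$ one first had to pass from $\seqbpd$ to $\bpd$ via \cref{lemma:bpd-to-seqbpd}). The four ingredients to chain together are: the C2S reduction \cref{cons:seqirsd} together with its value analysis \cref{lemma:seqirsd-analysis}; the minimax lemma \cref{prop:yao}; the communication‑to‑streaming reduction \cref{lemma:comm-to-strm}; and finally \cref{thm:seqirsd-hardness}.

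For the single‑predicate case, fix $f$ with $\omega(f)>0$ and $\rho(f)>0$ (else the statement is vacuous), let $\vecb^* \in \BZ_q^k$ attain $\omega(f)=\omega_{\vecb^*}(f)$, and fix $\alpha \in (0,\alpha_0)$ with $\alpha_0 = \alpha_0(q,k)$ as in \cref{thm:seqirsd-hardness}. Suppose $\Alg$ is a space‑$s(n)$ streaming algorithm that $(\rho(f)/\omega(f) + \epsilon)$‑approximates $\mf$. Choose $\epsilon' > 0$ small enough that $(\rho(f)+\epsilon')/(\omega(f)-\epsilon') < \rho(f)/\omega(f) + \epsilon$; then $\Alg$ solves the $(\rho(f)+\epsilon',\,\omega(f)-\epsilon')\text{-}\mf$ gap problem. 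Apply \cref{lemma:seqirsd-analysis} with $f,\alpha,\epsilon'$ and $\vecb = \vecb^*$ to get constants $T,n_0$ so that, writing $\CY',\CN'$ for the images under the reduction functions of \cref{cons:seqirsd} of the $\yes$/$\no$ distributions of $\kq\seqirsd_{\alpha,T}(n)$, we have $\Pr_{\Psi\sim\CN'}[\val_\Psi \geq \rho(f)+\epsilon'] \leq \exp(-n)$ and $\Pr_{\Psi\sim\CY'}[\val_\Psi \leq \omega(f)-\epsilon'] \leq \exp(-n)$. Now invoke \cref{prop:yao} (valid since these probabilities are $\leq 0.01$ for large $n$): there is a deterministic space‑$s(n)$ streaming algorithm distinguishing $\CY'$ from $\CN'$ with advantage $\geq \frac16$. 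By \cref{lemma:comm-to-strm} (applied with the $T$ players of $\seqirsd$ and the reduction functions of \cref{cons:seqirsd}) this gives a deterministic space‑$s(n)$ protocol for $\kq\seqirsd_{\alpha,T}(n)$ with advantage $\geq \frac16$. Since $T$ is a fixed constant, \cref{thm:seqirsd-hardness} (with $\delta = \frac16$) then forces $s(n) \geq \tau n$ for some $\tau > 0$ and all sufficiently large $n$, i.e., $s(n) = \Omega(n)$.

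For the family case, the identical chain works once \cref{cons:seqirsd} and \cref{lemma:seqirsd-analysis} are upgraded to use several predicates, as the excerpt indicates can be done. Pick $\CD^* \in \Delta(\CF)$ attaining the outer minimum in $\rho(\CF)=\min_{\CD}\max_{\CD'}\E_{f\sim\CD,\veca\sim(\CD')^k}[f(\veca)]$, and for each $f \in \CF$ fix a width‑maximizer $\vecb^*(f)$. In the upgraded reduction $\Bob_t$ draws an independent predicate $f(t,\ell)\sim\CD^*$ for each edge $\vece(t,\ell)$ of $M_t$ and adds the constraint $(f(t,\ell),\vece(t,\ell))$ iff $\vecz(t,\ell)-\vecb^*(f(t,\ell))\in\CC$. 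In the $\yes$ case the random shift cancels, an edge is kept iff $\vecx^*|_{\vece(t,\ell)}\in\vecb^*(f(t,\ell))+\CC$, and conditioned on this $\vecx^*|_{\vece(t,\ell)}$ is uniform on that coset, so the kept constraint is satisfied by $\vecx^*$ with probability exactly $\omega_{\vecb^*(f(t,\ell))}(f(t,\ell))=\omega(f(t,\ell))\geq\omega(\CF)$; since the number of kept constraints is random, a submartingale bound (\cref{lemma:azuma}, exactly as in the proof of \cref{lemma:seqbpd-to-mcut-analysis}) gives $\val_\Psi\geq\omega(\CF)-\epsilon'$ w.h.p. In the $\no$ case each edge is kept independently with probability $|\CC|/q^k=q^{1-k}$, with predicate still distributed as $\CD^*$, and for any fixed $\vecx$ the value concentrates around $\E_{f\sim\CD^*,\veca\sim(\CD_\vecx)^k}[f(\veca)]\leq\rho(\CF)$, where $\CD_\vecx\in\Delta(\BZ_q)$ is the empirical distribution of $\vecx$'s entries (a random hyperedge reads $k$ near‑independent coordinates); a union bound over $\vecx\in\BZ_q^n$ gives $\val_\Psi\leq\rho(\CF)+\epsilon'$ w.h.p. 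Feeding these into \cref{prop:yao}, \cref{lemma:comm-to-strm} and \cref{thm:seqirsd-hardness} as above yields the $\Omega(n)$ bound for $(\rho(\CF)/\omega(\CF)+\epsilon)$‑approximating $\mF$. The one bookkeeping point is that $\Bob_t$'s reduction is now randomized, which is harmless: a randomized reduction is a distribution over deterministic ones, and Yao's principle (\cref{prop:yao}) lets us fix a good choice.

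The genuinely hard content is entirely imported: \cref{thm:seqirsd-hardness}, the $\Omega(n)$ communication lower bound for $\seqirsd$, is the technically demanding theorem of \cite{CGS+22} (extending \cite{KK19}), and the argument above is just the by‑now‑routine wrapper around it. Within that wrapper the only places needing care are (i) picking $\epsilon'$ so that the assumed approximation algorithm really does solve the intended gap problem; (ii) the concentration and union‑bound steps behind \cref{lemma:seqirsd-analysis} and its family version, which must handle the random number of retained constraints and the mild dependence among the edges of a single matching; and (iii) in the family case, the two structural facts that make the gap come out to $\rho(\CF)/\omega(\CF)$ — drawing predicates from the $\rho(\CF)$‑optimal $\CD^*$ pins the $\no$‑value at $\rho(\CF)$, while $\omega(\CF)=\min_{f}\omega(f)$ is exactly what guarantees every retained $\yes$‑constraint is satisfied with probability at least $\omega(\CF)$.
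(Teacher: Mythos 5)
Your proof is correct and takes the approach the paper intends. The paper does not spell out this proof; it simply remarks that \cref{thm:seqirsd-hardness,lemma:seqirsd-analysis} give the corollary (and that the family case follows ``without too much extra work''). Your single‑predicate argument is precisely the chain used to prove \cref{thm:mcut-hardness} — Yao's lemma (\cref{prop:yao}), the C2S reduction (\cref{lemma:comm-to-strm}), then the communication lower bound — and you correctly observe that no hybrid argument is needed here because \cref{thm:seqirsd-hardness} already bounds the sequential $T$‑player game directly. Your family‑case sketch (randomize the predicate from the $\rho(\CF)$‑optimal $\CD^*$, choose each $f$'s width‑maximizing $\vecb^*(f)$, and absorb the reduction's randomness into Yao's principle) is a reasonable fleshing‑out of the paper's one‑line claim and is consistent with the definitions of $\rho(\CF)$ and $\omega(\CF)$. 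One point worth making explicit rather than silently fixing: the stated approximation ratio $\omega(f)/\rho(f)$ is a typo for $\rho(f)/\omega(f)$ — since $\omega(f)\geq\rho(f)$ the former exceeds $1$ and is not a legal approximation factor, and $\rho/\omega$ is what the gap between $\yes$‑value $\approx\omega$ and $\no$‑value $\approx\rho$ actually gives (it also matches the $\mcut$ bound $\rho(\cut)/\omega(\cut)=\tfrac12$ and the derivation of \cref{cor:cgsvv-2rho} from $\omega(f)\geq 1/q$). You correctly use $\rho/\omega$ throughout, but the discrepancy with the statement as printed should be noted.
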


Noting that by definition $\omega(f) \geq \frac1q$ for every $f : \BZ_q^k \to \{0,1\}$, we have a further corollary which narrows the linear-space streaming approximability of every predicate family $\CF$ to the interval $[\rho(\CF),q\cdot \rho(\CF)]$:

\begin{corollary}[{\cite[Theorem 1.2]{CGS+22}}]\label{cor:cgsvv-2rho}
For every family of predicates $\CF$ over $\BZ_q$ and every $\epsilon > 0$, every streaming algorithm which $(q\cdot\rho(\CF)+\epsilon)$-approximates $\mF$ uses at least $\Omega(n)$ space.
\end{corollary}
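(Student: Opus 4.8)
The plan is to obtain \cref{cor:cgsvv-2rho} as an essentially immediate consequence of \cref{thm:cgsvv}, once one records the elementary fact that every satisfiable predicate has width at least $\frac1q$.

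First I would verify that $\omega(f) \geq \frac1q$ for every $f : \BZ_q^k \to \{0,1\}$ with $f \not\equiv 0$, and hence $\omega(\CF) \geq \frac1q$ for every family $\CF$ defining a nondegenerate CSP. To see this, fix any $\vecx$ with $f(\vecx) = 1$ and take $\vecb = \vecx$ in \cref{eqn:omega}: since $\CC$ contains $\veczero$ and $|\CC| = q$, the average $\omega_{\vecx}(f) = \E_{\vecc \sim \Unif_\CC}[f(\vecc + \vecx)]$ includes the term $f(\vecx) = 1$, so $\omega_{\vecx}(f) \geq \frac1q$, and therefore $\omega(f) = \max_\vecb \omega_\vecb(f) \geq \frac1q$.

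Next I would invoke monotonicity of the approximation relation: if $\Alg$ is an $\alpha$-approximation for $\mF$ and $\alpha' \leq \alpha$, then $\Alg$ is also an $\alpha'$-approximation, since $\alpha' \val_\Psi \leq \alpha \val_\Psi \leq \Alg(\Psi) \leq \val_\Psi$; in particular a space lower bound proven for one approximation ratio persists for every larger ratio. Now \cref{thm:cgsvv} asserts that beating the ratio $\rho(\CF)/\omega(\CF)$ — the ratio between the $\no$-instance value $\rho(\CF)$ and the $\yes$-instance value $\omega(\CF)$ produced by \cref{cons:seqirsd} and analyzed in \cref{lemma:seqirsd-analysis} — requires $\Omega(n)$ space. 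Combining with $\omega(\CF) \geq \frac1q$ gives $\rho(\CF)/\omega(\CF) \leq q\,\rho(\CF)$, so for every $\epsilon > 0$ any streaming algorithm that $(q\,\rho(\CF)+\epsilon)$-approximates $\mF$ is in particular a $\bigl(\rho(\CF)/\omega(\CF)+\epsilon\bigr)$-approximation, hence needs $\Omega(n)$ space; when $q\,\rho(\CF) \geq 1$ there is nothing to prove. Together with the trivial $\rho(\CF)$-approximation (output $\rho(\CF)$), this pins the linear-space approximability of $\mF$ to the interval $[\rho(\CF), q\,\rho(\CF)]$.

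The main point is that there is essentially no obstacle \emph{in this corollary itself}: all of the difficulty lives upstream, in the $\Omega(n)$ communication lower bound for $\seqirsd$ (\cref{thm:seqirsd-hardness}) and in verifying the value gap of \cref{lemma:seqirsd-analysis}. Within the present deduction the only things that require a moment's care are tracking the direction in which the ratio in \cref{thm:cgsvv} degrades under worse approximation factors, and noticing that it is the \emph{universal} bound $\omega(\cdot) \geq \frac1q$ — rather than any $\CF$-dependent estimate — that makes the interval width a uniform multiplicative factor of $q$.
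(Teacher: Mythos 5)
Your proof is correct and takes exactly the paper's route: the paper derives the corollary from \cref{thm:cgsvv} by the same observation that $\omega(f)\geq 1/q$ (stated there without the short verification you supply) together with the monotonicity of $\alpha$-approximation in $\alpha$. You also correctly read the ratio in \cref{thm:cgsvv} as $\rho(\CF)/\omega(\CF)$ — the numerator and denominator as printed appear to be inadvertently transposed, as the $\mcut$ ($k=q=2$) sanity check confirms — and your remark that all the difficulty lives upstream in \cref{thm:seqirsd-hardness} and \cref{lemma:seqirsd-analysis} is accurate.
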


Finally, we remark that qualitatively, the main obstacle involved in proving linear-space lower bounds (i.e., \cref{thm:seqirsd-hardness}, or its special case \cref{thm:seqibpd-hardness} for $\mcut$) is in ``improving the low-Fourier weight bounds to $\left(\frac{\zeta c}\ell\right)^{\ell/2}$'' in comparison to \cref{lemma:low-fourier-bound}. (See \cref{rem:bpd-low-ell-terms} for a discussion in the special case of $\seqibpd$ for $\mcut$.) Though \cref{lemma:low-fourier-bound} itself is tight, if we consider the distribution $\CZ_{A_t,M_t}$ of $\Bob_t$'s second input, where $M_t$ is $\Bob_t$'s matching and $A_t$ the set of $\vecx^*$'s consistent with $\Bob_{t-1}$'s output message, and directly apply \cref{lemma:low-fourier-bound}, we will immediately disqualify ourselves from proving linear-space lower bounds. The key observation is that we can do better than directly applying \cref{lemma:low-fourier-bound} because $\CZ_{A_t,M_t}$ is a ``structured'' distribution, in the sense that when we draw $\vecz(t) \sim \CZ_{A_t,M_t}$, each entry of $\vecz(t)$ only tells us about \emph{sums} of entries of $\vecx^*$. The proof ultimately does succeed by showing inductively that the indicators of the sets $A_t$ satisfy certain Fourier weight bounds (with high probability) by applying \cref{lemma:low-fourier-bound} to a carefully defined ``reduced'' version of $\CZ_{A_t,M_t}$. However, exactly stating these bounds, and formulating the right inductive hypothesis with which to prove them, is much more involved than in the $\bpd$ case; see \cite[\S5]{CGS+22} for details.

\part{Contributions}\label{part:contributions}

\newcommand{\cmas}{\Pi_\mas^{\coarsen q}}
\newcommand{\cPi}{\Pi^{\coarsen q}}
\newcommand{\Piq}{(\Pi,\vecb,q)\text{-}}

\chapter{Ordering constraint satisfaction problems}\label{chap:ocsps}

\epigraph{A natural direction would be to pose the $\mas$ [problem] as a CSP. $\mas$ is fairly similar to a CSP, with each vertex being a variable taking values in domain $[n]$ and each directed edge a constraint between two variables. However, the domain, $[n]$, of the CSP is not fixed but grows with input size. We stress here that this is not a superficial distinction but an essential characteristic of the problem.}{Guruswami, H{\aa}stad, Raghavendra, Manokaran, and Charikar~\cite{GHM+11}}

\newthought{Scheduling problems can be modeled} as \emph{ordering constraint satisfaction problems (OCSPs)}, variants of CSPs in which assignments correspond to orderings of $n$ objects, and constraints to allowed orderings for small sets of objects. That is, in the scheduling interpretation, the goal of an OCSP is to find the best ``schedule'' for $n$ ``tasks'' given a list of \emph{precedence} constraints such as ``task $j$ must come between task $i$ and task $k$''. In this chapter, we prove a strong streaming approximation-resistance result for every OCSP from our joint work with Sudan and Velusamy~\cite{SSV21}: For every OCSP, linear-space streaming algorithms cannot perform better than the trivial approximation ratio (see \cref{thm:ocsp-hardness} below). But we begin by formally defining OCSPs and two specific examples, the \emph{maximum acyclic subgraph} and \emph{maximum betweenness} problems.

\section{OCSPs: Definitions, motivations, history}\label{sec:ocsps}

A vector $\vecsigma = (\sigma_1,\ldots,\sigma_n) \in [n]^n$ is a \emph{permutation} if all its elements are distinct, i.e., if $\sigma_i = \sigma_{i'}$ then $i=i'$. Let $\sym_n \subseteq [n]^n$ denote the set of all permutations.\footnote{We use this non-standard ``vector notation'' for permutations to emphasize the analogy with CSP assignments, which come from $[q]^n$.} We interpret a permutation $\vecsigma \in \sym_n$ as an ordering on $n$ objects, labeled $1$ through $n$, which places the $i$-th object in position $\sigma_i$.\footnote{\emph{I.e.}, this is an interpretative convention; the other would be that the $\sigma_i$-th object is in position $i$.}

Let $\veca = (a_1,\ldots,a_k) \in \BZ^k$ be a $k$-tuple of integers. We define a symbol $\ord(\veca) \in \sym_k \cup \{\bot\}$ which captures the ordering of the entries of $\veca$, in the following way: If $\veca$'s entries are not all distinct, then $\ord(\veca) = \bot$; otherwise, $\ord(\veca)$ is the unique permutation $\vecpi \in \sym_k$ such that $a_{\pi^{-1}_1} < \cdots < a_{\pi^{-1}_k}$, where $\vecpi^{-1}$ is the inverse permutation to $\vecpi$. In particular, if $\veca \in \sym_k$ is a permutation then $\ord(\veca) = \veca$.

For a permutation $\vecsigma \in \sym_n$ and a $k$-tuple of distinct indices $\vecj = (j_1,\ldots,j_k) \in [n]^k$, we define the \emph{induced permutation} $\vecsigma|_\vecj \eqdef \ord(\sigma_{j_1},\ldots,\sigma_{j_k})$. Thus, for instance, the permutation $\vecsigma = (5,1,3,2,4) \in \sym_5$ places the third object in the third position, and since $\vecsigma|_{(1,5,3)} = (3,2,1) \in \sym_3$, we see that if we restrict to only the first, third, and fifth objects, the fifth is in the second position.

For $k \leq 2 \in \BN$, an \emph{ordering predicate} $\Pi : \sym_k \to \{0,1\}$ defines the \emph{ordering constraint satisfaction problem} $\mPi$ as follows. A \emph{constraint} on $n \in \BN$ variables is given by a $k$-tuple $\vecj = (j_1,\ldots,j_k)\in[n]^k$ of distinct indices.\footnote{For simplicity, and since we're proving lower bounds, we define only unweighted instances of OCSPs.} An assignment $\vecsigma \in \sym_n$ \emph{satisfies} $C=\vecj$ iff $\Pi(\vecsigma|_\vecj)=1$. An \emph{instance} $\Phi$ of $\mPi$ consists of $m$ constraints $(C_\ell = (\vecj(\ell))_{\ell\in[m]}$, and the \emph{value} of an assignment $\vecsigma \in \sym_n$ on $\Phi$, denoted $\ordval_\Phi(\vecsigma)$, is the (fractional) weight of constraints satisfied by $\vecsigma$, i.e., \[ \ordval_\Phi(\vecsigma) \eqdef \frac1m \sum_{\ell=1}^m \Pi(\vecsigma|_{\vecj(\ell)}). \] The \emph{value} of $\Phi$, denoted $\ordval_\Phi$, is the maximum value of any assignment, i.e., \[ \ordval_\Phi \eqdef \max_{\vecsigma \in \sym_n}\left( \ordval_\Phi(\vecsigma)\right). \] We consider, in the streaming setting, the problem of $\alpha$-approximating $\mPi$, as well as the distinguishing problem $\bgd\mPi$; the setup is the same as in the standard, non-ordering CSP case (see \cref{sec:csps,sec:streaming}). One important difference between OCSPs and (non-ordering) CSPs is that the solution space of an OCSP has super-exponential size $|\sym_n| = n! \geq (n/e)^n$, while a CSP over $\BZ_q$ has exponential solution space size $|\BZ_q^n|=q^n$.

The \emph{maximum acylic subgraph} problem ($\mas$) is the prototypical ordering CSP. $\mas$ is the problem $\mocsp[\Pi_\mas]$ for the predicate $\Pi_\mas : \sym_2\to\{0,1\}$ which is supported on $(1,2)$. Thus, an $\mas$ constraint $(u,v)$ is satisfied by an ordering $\vecsigma \in \sym_n$ iff $\sigma_u < \sigma_v$.\footnote{Like in the standard CSP case, an instance $\Phi$ of $\mas$ corresponds to a directed constraint graph $G(\Psi)$, where each constraint $(u,v)$ corresponds to a directed edge. Any ordering $\vecsigma \in \sym_n$ induces an acyclic subgraph of $G(\Psi)$ consisting of all the \emph{forward} edges with respect to $\vecsigma$, i.e., those such that $\sigma_u < \sigma_v$. Thus, $\val_\Psi$ corresponds to measuring the size of the largest acyclic subgraph in $G(\Psi)$, justifying the name ``maximum acyclic subgraph''.} In the scheduling interpretation, a constraint $(u,v)$ is satisfied by a schedule $\vecsigma \in \sym_n$ iff $u$ is scheduled earlier than $v$ in $\vecsigma$. Karp's classic enumeration of 21 $\NP$-complete problems~\cite{Kar72} includes the problem of, given an instance $\Phi$ of $\mas$ and $\gamma \in [0,1]$, deciding whether $\ordval_\Phi \geq \gamma$.\footnote{Also, in the classical setting, depth-first search can be used to decide whether $\val_\Psi=1$, i.e., to test whether $G(\Psi)$ is \emph{acyclic}. In the streaming setting, however, acylicity testing is known to take $\Theta(n^2)$ space~\cite{CGMV20}.} Several works \cite{New00,AMW15,BK19} have studied the $\NP$-hardness of \emph{approximating} $\mas$; \cite{BK19} shows that $(\frac23+\epsilon)$-approximating $\mas$ is $\NP$-hard for every $\epsilon > 0$.

Another ordering CSP of interest is the \emph{maximum betweenness} problem $\mbtwn \eqdef \mocsp[\Pi_\Btwn]$ where $\Pi_\Btwn : \sym_3 \to \{0,1\}$ is supported on $(1,2,3)$ and $(3,2,1)$. Thus, a $\mbtwn$ constraint $(u,v,w)$ is satisfied by an ordering $\vecsigma \in \sym_n$ iff $\sigma_u < \sigma_v < \sigma_w$ or $\sigma_w < \sigma_v < \sigma_u$. This OCSP was introduced by Opatrny~\cite{Opa79}, who showed that even deciding whether $\ordval_\Phi = 1$ is $\NP$-hard. $\mbtwn$'s $\NP$-hardness of approximation has been studied in \cite{CS98,AMW15}; the latter work shows that $(\frac12+\epsilon)$-approximating $\mbtwn$ is $\NP$-hard for every $\epsilon > 0$.

However, in another analogy with the CSP case, for every $\Pi : \sym_k \to \{0,1\}$, defining $\rho(\Pi) \eqdef \E_{\vecpi\sim\Unif_{\sym_k}}[\Pi(\vecpi)]$, \emph{every} instance of $\mPi$ has value at least $\rho(\Pi)$, and thus every $\mPi$ is trivially $\rho(\Pi)$-approximable. Again, we consider, for various predicates $\Pi$ and classes of algorithms $\CS$, whether $\mPi$ is \emph{approximation-resistant} (for every $\epsilon>0$, no algorithm in $\CS$ can $(\rho(\Pi)+\epsilon)$-approximate $\mPi$) or \emph{nontrivially approximable} (there is some $(\rho(\Pi)+\epsilon)$-approximation). Note that $\rho(\Pi_{\mas}) = \frac12$ and $\rho(\Pi_{\Btwn}) = \frac13$; thus, the results of \cite{AMW15,BK19} are not strong enough to show that it is $\NP$-hard to nontrivially approximate $\mas$ or $\mbtwn$. However, Guruswami, H{\aa}stad, Manokaran, Raghavendra, and Charikar~\cite{GHM+11} showed that it is \emph{unique games}-hard to $(\rho(\Pi)+\epsilon)$-approximate $\mPi$, for every $\Pi : \sym_k \to \{0,1\}$ and $\epsilon > 0$.

The result of our work~\cite{SSV21} is that $\mPi$ is also approximation-resistant in the streaming setting, even to \emph{linear-space} algorithms:

\begin{theorem}[\cite{SSV21}]\label{thm:ocsp-hardness}
For every $k \leq 2 \in \BN$, predicate $\Pi : \sym_k \to \{0,1\}$, and $\epsilon > 0$, there exists $\tau > 0$ such that every streaming algorithm which $(\rho(\Pi)+\epsilon)$-approximates $\mPi$ uses at least $\tau n$ space.
\end{theorem}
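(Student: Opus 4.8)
The strategy is the communication-to-streaming roadmap of \cref{prop:yao,lemma:comm-to-strm}, instantiated with the linear-space hard problem $\seqirsd$ (\cref{thm:seqirsd-hardness}); the new ingredient is a \emph{coarsening} of the ordering predicate that lets us pass between the $\BZ_q$-valued world of $\seqirsd$ and the permutation-valued world of $\mPi$. Assume $\Pi \not\equiv 0$ (else $\mPi$ is trivially exactly solvable), fix a permutation $\vecpi^* \in \sym_k$ with $\Pi(\vecpi^*) = 1$, let $q$ be a large constant to be chosen last, and define the \emph{$q$-coarsening} $\cPi : \BZ_q^k \to \{0,1\}$ by: $\cPi(\vecx) = 1$ iff the entries of $\vecx$ are pairwise distinct \emph{and} $\Pi(\ord(\vecx)) = 1$. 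I would reduce $\kq\seqirsd_{\alpha,T}(n)$ (for suitably small $\alpha$ and large $T$) to $\mPi$ by first applying \cref{cons:seqirsd} with predicate $f = \cPi$ and shift $\vecb = \vecpi^* - \vecone \in \BZ_q^k$ --- so $\vecb$ has the $k$ consecutive entries $0,1,\dots,k-1$ and $\ord(\vecb) = \vecpi^*$ --- obtaining an instance $\Psi$ of $\mcsp[\{\cPi\}]$, and then reading $\Psi$ verbatim, with the same variables and constraint tuples, as an instance $\Phi$ of $\mPi$.

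Two simple observations relate $\Phi$ and $\Psi$. First, \emph{completeness transfers for free}: if $\vecx \in \BZ_q^n$ satisfies a constraint tuple $\vecj$ of $\Psi$, then $\vecx|_\vecj$ is all-distinct, so \emph{every} permutation $\vecsigma \in \sym_n$ refining the partial order induced by $\vecx$ has $\vecsigma|_\vecj = \ord(\vecx|_\vecj)$ and hence satisfies $\vecj$ in $\Phi$; taking $\vecx$ optimal and a linear extension of its partial order gives $\ordval_\Phi \geq \val_\Psi$. Second, \emph{soundness transfers up to collisions}: for any $\vecsigma$, bucketing the $n$ positions into $q$ equal blocks gives a balanced assignment $\vecx = \vecsigma^{\coarsen q} \in \BZ_q^n$ such that a tuple $\vecj$ satisfied by $\vecsigma$ in $\Phi$ is also satisfied by $\vecx$ in $\Psi$ \emph{unless} $\vecx|_\vecj$ has two equal entries; writing $c_\Phi(\vecx)$ for the fraction of $\Phi$'s tuples on which $\vecx$ has such a collision, this yields $\ordval_\Phi \leq \val_\Psi + \max_{\vecx \text{ balanced}} c_\Phi(\vecx)$.

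For the value analysis I would invoke \cref{lemma:seqirsd-analysis}: in the $\yes$ case $\val_\Psi \geq \omega_\vecb(\cPi) - \epsilon'$ and in the $\no$ case $\val_\Psi \leq \rho(\cPi) + \epsilon'$, each except with probability $\exp(-\Omega(n))$. Here $\omega_\vecb(\cPi) = \E_{s \sim \Unif_{\BZ_q}}[\cPi(\vecb + s\vecone)] = \E_s[\Pi(\ord(\vecb + s\vecone))]$, and since $\vecb$ has spread $k-1$ there is no wraparound for a $(1-(k-1)/q)$-fraction of shifts $s$, on which $\ord(\vecb + s\vecone) = \ord(\vecb) = \vecpi^*$; thus $\omega_\vecb(\cPi) \geq 1 - (k-1)/q$. (This is exactly where the ``spread out'' choice of $\vecb$ is used: a constant $\vecb$ makes every surviving constraint monochromatic, so $\cPi(\vecb + s\vecone) = 0$ and the $\yes$ value collapses to $\rho(\Pi)$.) On the $\no$ side, the formula for $\rho$ (\cite[Proposition 2.12]{CGSV21-finite}) gives $\rho(\cPi) = \max_{\CD \in \Delta(\BZ_q)} \E_{\veca \sim \CD^k}[\cPi(\veca)] = \max_\CD \big(\Pr_{\veca\sim\CD^k}[\veca\text{ distinct}] \cdot \rho(\Pi)\big) \leq \rho(\Pi)$, using that the rank order of $k$ i.i.d.\ samples, conditioned on their being distinct, is uniform on $\sym_k$. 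Finally, $\max_{\vecx \text{ balanced}} c_\Phi(\vecx)$ must be controlled: for a fixed balanced $\vecx$, a random edge of a random matching has two endpoints in a common block (each of size $n/q$) with probability $O(k^2/q)$, so by the submartingale bound \cref{lemma:azuma} the collision fraction on $\Phi$'s $\approx \alpha q^{1-k} T n$ tuples exceeds, say, $O(k^2/q)$ only with probability $\exp(-\Omega(\alpha T n q^{-k}))$; a union bound over the at most $q^n$ balanced assignments is beaten once $T$ is a large enough constant (depending on $q,k,\alpha$). Combining, in the $\no$ case $\ordval_\Phi \leq \rho(\Pi) + \epsilon' + O(k^2/q)$ and in the $\yes$ case $\ordval_\Phi \geq 1 - (k-1)/q - \epsilon'$, each w.h.p.

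Hence, via \cref{thm:seqirsd-hardness} and the roadmap, distinguishing these $\yes$/$\no$ distributions over $\mPi$ instances --- a $(\rho(\Pi) + \epsilon' + O(k^2/q),\, 1 - (k-1)/q - \epsilon')$-gap problem for $\mPi$ --- requires $\Omega(n)$ space. The ratio of the two gap thresholds tends to $\rho(\Pi)$ as $q \to \infty$ and $\epsilon' \to 0$, so given $\epsilon > 0$ I would first fix $q$ large, then $\epsilon'$, $\alpha < \alpha_0(q,k)$, and $T$, so that this ratio is strictly below $\rho(\Pi) + \epsilon$; since any $(\rho(\Pi)+\epsilon)$-approximation of $\mPi$ solves this gap problem, it too requires $\Omega(n)$ space. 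The step I expect to be the main obstacle --- and the one the coarsening is designed to handle --- is the $\no$-case soundness bound: a direct union bound over all $n!$ orderings is hopeless for a constant number $T$ of matchings, and replacing it by a union bound over only $q^n$ coarse assignments is what saves us --- but this forces $\cPi$ to \emph{forbid} collisions (so that $\rho(\cPi) \leq \rho(\Pi)$), which is precisely what introduces the $O(k^2/q)$ collision error one must then kill by taking $q$ large.
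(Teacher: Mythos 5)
Your proof is correct and structurally mirrors the paper's: the same coarsening $\cPi$ (the paper writes $\cPi(a_1,\dots,a_k) = \Pi(\ord(\iota_q(a_1),\dots,\iota_q(a_k)))$, which is your "distinct \& ordered" predicate), the same choice of a spread-out $\vecb \in \supp(\Pi) \subseteq \sym_k$, the same completeness lemma ($\val_\Psi \leq \ordval_{\Psi^\refine}$), the same reduction from $\seqirsd$, and the same bounds $\omega_\vecb(\cPi) \geq 1 - (k-1)/q$ and $\rho(\cPi) \leq \rho(\Pi)$. You also correctly identify the crux: one cannot union-bound over $n!$ orderings directly, and coarsening to $\BZ_q^n$ is exactly what makes the soundness side tractable.

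The one place you diverge from the paper is in \emph{how} the soundness lemma (your "$\ordval_\Phi \leq \val_\Psi + \max_{\vecx \text{ balanced}} c_\Phi(\vecx)$" combined with a bound on that max; the paper's \cref{lemma:ocsp-coarsening}) is proven. You do a direct union bound over the $\leq q^n$ balanced assignments, applying the submartingale bound (\cref{lemma:azuma}) to the per-edge collision indicators for each fixed $\vecx$. The paper instead establishes a single stronger structural property --- that the constraint hypergraph is w.h.p. a $(\gamma, O(k^2\gamma^2))$-small-set expander, which requires a union bound only over the $2^n$ subsets of $[n]$ --- and then deduces the collision bound \emph{deterministically}: SSE implies balanced-partition expansion (\cref{lemma:sse-to-bpe}), which in turn implies $\ordval_{\Psi^\refine} \leq \val_\Psi + O(k^2/q)$ (\cref{lemma:bpe-gap-bound}). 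Both routes kill the union bound with a constant $T$; yours is a bit more direct, the paper's is more modular (SSE/BPE are properties one can reuse). Your side remark on $\rho(\cPi)$ is actually a useful clarification: you invoke Proposition 2.12 (the min-max formula) and use exchangeability of i.i.d.\ ranks to show the uniform distribution is the maximizer, whereas the paper writes "$\rho(\cPi) = \E_{\veca \sim \Unif}[\cPi(\veca)]$ by definition," which is not the definition of $\rho$ for a single predicate and deserves exactly the argument you give. The only hand-waved point in your writeup is that the collision \emph{fraction} $c_\Phi(\vecx)$ has a random denominator (the number of kept edges), so one should, as the paper does, first condition on the subsample sizes $m_1,\dots,m_T$ being concentrated; this is routine and does not affect correctness.
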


The space bound in this theorem is optimal up to logarithmic factors; indeed, just as in the case of CSPs, $(1-\epsilon)$-approximations are possible for every $\epsilon > 0$ in $\tilde{O}(n)$ space (see \cref{rem:sparsifier} above)!

In the next section, we begin by giving some intuition for this theorem, and highlighting surprising similarities with the classical proof of \cite{GHM+11}.

\section{Proof outline for $\mas$}

The following observation is due to Guruswami \emph{et al.}~\cite{GHM+11}. Since $\sym_n \subseteq [n]^n$, we can view $\mas$ as a CSP over the alphabet $[n]$, with predicate $f_\mas : [n]^2 \to \{0,1\}$ given by $f_\mas(b_1,b_2)=\1_{b_1<b_2}$. The hope is then to analyze this predicate using machinery for CSPs. Unfortunately, this predicate does not actually define a CSP, since the alphabet size $n$ is non-constant. We can, however, attempt to salvage this strategy by decreasing the alphabet size to a large constant and ``seeing what happens''. To be precise, for $q \in \BN$, let $\iota_q : \BZ_q \to [q]$ denote the map taking elements of $\BZ_q$ to their representatives in $\{1,\ldots,q\}$. We define the predicate $\cmas : \BZ_q^2 \to \{0,1\}$ by $\cmas(b_1,b_2)=\1_{\iota_q(b_1)<\iota_q(b_2)}$,\footnote{Note that the comparison operator is not \emph{a priori} defined for $\BZ_q$, only for $\BZ$. Under the convention we just picked, $\iota(0) = q$, and in particular, $\iota(0) > \iota(1)$. This choice is for intercompatibility with the notation $[n] = \{1,\ldots,n\}$, and can be safely ignored.} and consider the problem $\mcsp[\cmas]$.

In the previous section, we observed that $\mas$ can be interpreted as a scheduling problem in which a constraint $(u,v)$ is satisfied iff $u$'s position is earlier than $v$'s position. Under this view, $\mcsp[\cmas]$ is a ``batched'' scheduling problem, where the goal is to assign $n$ tasks to $q = O(1)$ batches, and a constraint $(u,v)$ is satisfied iff $u$'s batch is earlier than $v$'s batch. Thus, $\mcsp[\cmas]$ is a \emph{coarser} version of $\mas$, because in $\mas$ we have flexibility in assigning execution orders even \emph{within} the same batch.

To make this precise, for every instance $\Phi$ of $\mas$ and $q \in \BN$, let $\Phi^{\coarsen q}$ denote the instance of $\mcsp[\cmas]$ with the exact same list of constraints. Conversely, given an instance $\Psi$ of $\mcsp[\cmas]$, let $\Psi^\refine$ denote the instance of $\mas$ with the same list of constraints. The operations $\coarsen q$ (\emph{$q$-coarsening}) and $\refine$ (\emph{refinement}) are inverses. Also, for an assignment $\vecx = (x_1,\ldots,x_n) \in\BZ_q^n$ to $\mcsp[\cmas]$ and $b \in \BZ_q$, let $\vecx^{-1}(b) \eqdef \{i\in[n]:x_i=b\}$. (If $\vecx$ is a batched schedule, then $\vecx^{-1}(b)$ is the set of jobs in batch $b$.) Then we have the following:

\begin{claim}\label{claim:mas-refinement}
For every $q \in \BN$ and instance $\Psi$ of $\mcsp[\cmas]$, \[ \val_\Psi \leq \ordval_{\Psi^\refine}. \]
\end{claim}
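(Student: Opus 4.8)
The plan is to show that any assignment $\vecx \in \BZ_q^n$ to $\Psi$ can be ``refined'' into a permutation $\vecsigma \in \sym_n$ which satisfies at least as many constraints of $\Psi^\refine$ as $\vecx$ does of $\Psi$. Taking $\vecx$ to be the optimal assignment for $\Psi$ then yields $\val_\Psi = \val_\Psi(\vecx) \leq \ordval_{\Psi^\refine}(\vecsigma) \leq \ordval_{\Psi^\refine}$, which is the claim. The key observation is that a $q$-coarsening constraint $(u,v)$ is satisfied by $\vecx$ iff $\iota_q(x_u) < \iota_q(x_v)$, i.e., iff $u$ is in a strictly earlier ``batch'' than $v$; and whenever this happens, $u$ will also precede $v$ in any permutation $\vecsigma$ that respects the batch ordering (places all of batch $b$ before all of batch $b+1$, for each $b$ in the order $1,2,\ldots,q$).

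Concretely, I would define $\vecsigma$ as follows: order the variables first by their batch $\iota_q(x_i)$ (ascending), breaking ties arbitrarily (say by index), and let $\sigma_i$ be the resulting position of variable $i$. Then for any constraint $(u,v)$ with $\iota_q(x_u) < \iota_q(x_v)$, every variable in batch $\iota_q(x_u)$ comes before every variable in batch $\iota_q(x_v)$, so $\sigma_u < \sigma_v$, meaning $\vecsigma|_{(u,v)} = (1,2)$ and the $\mas$ constraint is satisfied. Hence every constraint of $\Psi^\refine$ satisfied by $\vecx$ (as a $\mcsp[\cmas]$ constraint) is also satisfied by $\vecsigma$ (as an $\mas$ constraint), and since $\Psi$ and $\Psi^\refine$ have the same list of constraints, $\val_\Psi(\vecx) \leq \ordval_{\Psi^\refine}(\vecsigma)$.

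This is essentially a ``monotonicity under refinement'' argument: coarsening can only destroy information (the relative order within a batch), never create spurious satisfied constraints, so passing to the finer problem can only help. There is no real obstacle here — the one point requiring a moment's care is the convention that $\iota_q(0) = q$, so that $0$ is treated as the \emph{last} batch; but this is just bookkeeping and the argument goes through verbatim with that convention. I would also note in passing that this is one half of a tighter relationship (the other direction, relating $\ordval_{\Psi^\refine}$ back to $\val$ of coarsenings for large $q$, is what actually drives the reduction in the subsequent proof of \cref{thm:ocsp-hardness}), but only the inequality stated in \cref{claim:mas-refinement} is needed here.
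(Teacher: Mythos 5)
Your proof is correct and takes essentially the same approach as the paper: both construct the refinement $\vecsigma$ by sorting variables according to their batch $\iota_q(x_i)$ (ties broken arbitrarily), and then observe that strict batch inequality $\iota_q(x_u) < \iota_q(x_v)$ forces $\sigma_u < \sigma_v$, so any satisfied $\cmas$ constraint remains satisfied in $\Psi^\refine$. Your remark about the $\iota_q(0)=q$ convention is a helpful clarification but does not change the argument.
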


\begin{proof}
For $\vecx\in\BZ_q^n$, let $s_b = |\vecx^{-1}(b)|$ for each $b \in \BZ_q$. We construct an ordering $\vecx^\refine \in \sym_n$ by assigning $\vecx^{-1}(1)$ the first $s_0$ positions (in some arbitrary order), and then iteratively assigning $\vecx^{-1}(b)$ the next $s_b$ positions for $b \in \{2,\ldots,q\}$. Then if $\iota_q(x_u) < \iota_q(x_v)$, $\vecx^\refine_u < \vecx^\refine_v$ by construction. Thus, $\val_\Psi(\vecx) \leq \ordval_{\Psi^\refine}(\vecx^\refine)$, and \cref{claim:mas-refinement} follows.
\end{proof}

Now how can this notion of coarsening help us prove that $\mas$ is streaming approximation-resistant? Recall the linear-space streaming lower bounds of Chou \emph{et al.}~\cite{CGS+22} which rule out $(\omega(\CF)/\rho(\CF)+\epsilon)$-approximations for $\mF$ (see \cref{thm:cgsvv} and the definitions in \cref{sec:cgsvv} above). Fix a large constant $q \in \BN$. Then the trivial approximation threshold for $\mcsp[\cmas]$ is \[ \rho(\cmas) = \Pr_{(b_1,b_2) \sim \Unif_{\BZ_q^2}}[\iota_q(b_1) < \iota_q(b_2)] = \frac{q(q-1)}{2q^2} \approx \frac12. \] On the other hand, for $\vecb=(1,2) \in \BZ_q^2$, we have \[ \omega_\vecb(\cmas) = \Pr_{c \sim \Unif_{\BZ_q}}[\iota_q(c) < \iota_q(c+1)] = \frac{q-1}q \approx 1, \] since the only way that $\iota_q(c) \not< \iota_q(c+1)$ is if $c=0$. Thus, $\mcsp[\cmas]$ is almost approximation-resistant to linear-space streaming algorithms! Indeed, \cref{thm:seqirsd-hardness,cons:seqirsd,lemma:seqirsd-analysis} together give us a pair of distributions $\CY'$ and $\CN'$ over instances of $\mcsp[\cmas]$ which (1) require $\Omega(n)$ space to distinguish and (2) have values close to $\omega_{\vecb}(\cmas) \approx 1$ and $\rho(\cmas)\approx \frac12$, respectively, with high probability. 

Now to get back to $\mas$, \cref{claim:mas-refinement} shows that for $\Psi \sim \CY'$, $\ordval_{\Psi^\refine} \geq \val_\Psi$, and thus $\ordval_{\Psi^\refine} \approx 1$ with high probability. To show that $\mas$ is approximation-resistant, therefore, it suffices to show that for $\Psi \sim \CN'$, $\ordval_{\Psi^\refine} \approx \frac12$ --- i.e., the inequality in \cref{claim:mas-refinement} is not too loose --- with high probability. To do this, we need to actually look at the structure of $\CN'$. Recall, $\CN'$ is defined by composing \cref{cons:seqirsd} with the $\no$-distribution of $\seqirsd_{T,\alpha}$.\footnote{A natural idea, following the $\no$-case analysis for e.g. $\mcut$ (see \cref{sec:bpd}), is to show using concentration bounds that for every fixed ordering $\vecsigma \in \sym_n$, $\Pr_{\Psi \sim \CN'}[\ordval_{\Psi^\refine}(\vecsigma) > \frac12+\epsilon] \leq \exp(-n)$, and then take a union bound over $\vecsigma$. However, $|\sym_n| = n!$ grows faster than $\exp(n)$, so the union bound fails.} We make the following (informal) claim:

\begin{claim}\label{claim:mas-coarsening}
For fixed $\epsilon > 0$ and sufficiently large choice of $q$, \[ \Pr_{\Psi \sim \CN'}[\ordval_{\Psi^{\refine}} > \val_\Psi + \epsilon] \leq \exp(-n). \]
\end{claim}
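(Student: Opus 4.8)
The plan is to follow the $\no$-case template used for $\mcut$ in \cref{lemma:seqbpd-to-mcut-analysis}: fix a candidate solution, bound via the submartingale inequality \cref{lemma:azuma} the probability that it performs too well, and union-bound. As the footnote above warns, applying this directly over orderings $\vecsigma \in \sym_n$ is hopeless, since $|\sym_n| = n!$ swamps any $\exp(-\Theta(n))$ tail bound. The fix is to \emph{round each ordering down to a balanced coarse assignment in $\BZ_q^n$}, which collapses the union bound to the family of balanced assignments, of size only $\approx q^n$, which the (constant, but freely chosen) number of players $T$ can be made large enough to overpower.

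Concretely, I would first fix $q \geq 2/\epsilon$ and, once and for all, partition the \emph{position} set $[n]$ into $q$ consecutive intervals $P_1 < \cdots < P_q$ of size $\lfloor n/q\rfloor$ or $\lceil n/q\rceil$. For $\vecsigma \in \sym_n$ define $\vecx(\vecsigma) \in \BZ_q^n$ by letting $\iota_q(x_i)$ be the index of the interval containing $\sigma_i$; this assignment is always balanced. The key elementary observation is that a constraint $(u,v)$ satisfied by $\vecsigma$ in $\Psi^\refine$ (that is, with $\sigma_u < \sigma_v$) is also satisfied by $\vecx(\vecsigma)$ in $\Psi$ unless $u$ and $v$ lie in the same interval; hence, writing $m = m(\Psi)$ and $E_{\mathrm{in}}(\vecx)$ for the number of constraints whose two variables fall in a common part of $\vecx$,
\[ \ordval_{\Psi^\refine}(\vecsigma) \;\leq\; \val_\Psi(\vecx(\vecsigma)) + \frac{E_{\mathrm{in}}(\vecx(\vecsigma))}{m} \;\leq\; \val_\Psi + \frac{E_{\mathrm{in}}(\vecx(\vecsigma))}{m}. \]
Taking the maximum over $\vecsigma$, it then suffices to show that with probability $\geq 1 - \exp(-\Omega(n))$ over $\Psi \sim \CN'$, \emph{every} balanced $\vecx \in \BZ_q^n$ has $E_{\mathrm{in}}(\vecx) \leq \epsilon m$; on that event $\ordval_{\Psi^\refine} \leq \val_\Psi + \epsilon$.

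For this remaining, purely combinatorial statement I would argue as in \cref{lemma:seqbpd-to-mcut-analysis}. In the $\no$ case, after conditioning (via Chernoff over the $T$ players) on each player contributing $\approx \alpha n/q$ constraints, the constraint graph $G(\Psi)$ is a union of $T$ independent uniformly random partial matchings on $[n]$. Fixing a balanced $\vecx$ and listing its $m$ edges player-by-player as $e_1,\ldots,e_m$, let $X_\ell$ indicate that both endpoints of $e_\ell$ lie in one part of $\vecx$. Edges of earlier matchings are independent of $e_\ell$, and within a single matching, conditioning on the $O(\alpha n)$ previously-used vertices leaves the probability that $e_\ell$ is monochromatic equal to $\frac{\sum_b \binom{n_b}{2}}{\binom{N}{2}}$, where each part has $n_b \leq n/q$ available vertices and $N \geq (1 - O(\alpha)) n$; this is at most $\tfrac{1}{q} + \tfrac{\epsilon}{4}$ once $\alpha$ is a small enough constant. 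Hence $\E[X_\ell \mid X_1,\ldots,X_{\ell-1}] \leq \tfrac{1}{q} + \tfrac{\epsilon}{4}$, and \cref{lemma:azuma} gives $\Pr[E_{\mathrm{in}}(\vecx) \geq (\tfrac{1}{q} + \tfrac{\epsilon}{2}) m] \leq \exp(-\Omega(\epsilon^2 m)) = \exp(-\Omega(\epsilon^2 \alpha T n / q))$. Since there are at most $q^n = \exp(n \ln q)$ balanced assignments, the union bound closes provided $T$ is a large enough constant (roughly $T = \Omega(q \ln q / (\epsilon^2 \alpha))$), which is legitimate since $q, \epsilon, \alpha$ are all fixed and $T$ need only be finite --- noting that $\alpha$ must also lie below the threshold $\alpha_0$ of \cref{thm:seqirsd-hardness}, and that $q, \alpha, T$, chosen in that order after $\epsilon$, feed into the definition of $\CN'$ via \cref{cons:seqirsd}.

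The hard part is conceptual, not computational: one must round to a \emph{balanced} coarse assignment. Rounding $\vecsigma$ to an arbitrary $\vecx \in \BZ_q^n$ is useless, since a lopsided colouring can have $E_{\mathrm{in}}(\vecx)$ as large as $m$; insisting on balanced $\vecx$ (through position cut-points fixed independently of $\vecsigma$) simultaneously keeps the monochromatic fraction near $1/q \leq \epsilon/2$ \emph{and} shrinks the union bound from $n!$ to $\approx q^n$, which the free constant $T$ can absorb. Everything past that point is a routine reprise of the submartingale argument of \cref{lemma:seqbpd-to-mcut-analysis}; combined with \cref{lemma:seqirsd-analysis}, \cref{claim:mas-refinement}, and the trivial bound $\val_\Psi(\vecx) \leq \val_\Psi$ used above, it gives the full streaming approximation-resistance of $\mas$.
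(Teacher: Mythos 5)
Your proof is correct and captures the key insight (rounding each ordering to a \emph{balanced} coarse assignment so that the union bound runs over $\approx q^n$ solutions rather than $n!$, which the free constant $T$ can absorb). It does, however, take a somewhat different route from the paper's. The paper (see \cref{lemma:ocsp-coarsening} and the supporting \cref{lemma:n'-sse,lemma:sse-to-bpe,lemma:bpe-gap-bound}) factors the argument through two intermediate notions: it first shows, via a union bound over arbitrary subsets $S\subseteq[n]$ of size at most $n/q$, that $\Psi$ is a ``small-set expander'' (every such $S$ touches at most $O(k^2/q^2)\cdot m$ constraints twice), and then converts this per-set bound into the desired per-balanced-partition bound by a purely deterministic merging argument, which it calls ``SSE $\Rightarrow$ BPE''. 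You instead union-bound directly over balanced assignments $\vecx\in\BZ_q^n$ and bound $E_{\mathrm{in}}(\vecx)$ for each by the submartingale inequality. Your direct argument is shorter, skips the SSE/BPE formalism, and happens to give mildly sharper constants (roughly $q = O(1/\epsilon)$ and $T = \tilde O(1/(\epsilon^2\alpha))$ for $\mas$, versus $q = O(k^2/\epsilon)$ and $T = O(q^3/\alpha)$ in the paper), at the cost of a larger union bound ($q^n$ rather than $2^n$ --- both benign). The paper's decomposition buys modularity: isolating the probabilistic step as a statement about arbitrary small sets makes the generalization from $\mas$ to $k$-ary OCSPs (where a hyperedge is ``bad'' when two of its $k$ endpoints fall in the same block) essentially free, whereas you would need to re-run your per-edge probability estimate with $\binom{k}{2}/q$ in place of $1/q$. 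Since \cref{claim:mas-coarsening} is stated only for $\mas$, what you have written suffices.
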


\cref{claim:mas-coarsening} is sufficient to complete the proof, since we know that $\val_\Psi \approx \frac12$ with high probability over $\Psi \sim \CN'$. \cref{claim:mas-coarsening} is stated and proven formally for general OCSPs below as \cref{lemma:ocsp-coarsening}, but for now, we give a proof sketch. For any $\vecsigma \in \sym_n$, define an assignment $\vecsigma^{\coarsen q} \in \BZ_q^n$ by $\sigma_i^{\coarsen q} = \lceil q\sigma_i/n \rceil$. In the scheduling interpretation, $\vecsigma^{\coarsen q}$'s first batch contains the first $\approx n/q$ tasks scheduled by $\vecsigma$, the second batch the next $\approx n/q$, etc. Then we have:

\begin{proof}[Proof sketch of \cref{claim:mas-coarsening}]
It suffices to show that with probability $1-\exp(-n)$ over the choice of $\Psi$, for every $\vecsigma \in \sym_n$, $\ordval_{\Psi^{\refine}} < \val_{\Psi}+\epsilon$. A constraint $(u,v)$ is satisfied by $\vecsigma$ (in $\Psi^\refine$) but not by $\vecsigma^{\coarsen q}$ (in $\Psi$) iff $\sigma_u < \sigma_v$ but $\sigma^{\coarsen q}_u = \sigma^{\coarsen q}_v$. Thus, it suffices to upper-bound, for every partition of $[n]$ into subsets of size $\leq q$, the fraction of constraints for which both variables are in the same subset.

Looking at the definition of $\seqirsd$ (\cref{def:seqirsd}) and the reduction which produces $\CN'$ (\cref{cons:seqirsd}), we see that the constraints of a $\mcsp[\cmas]$ instances drawn from $\CN'$ correspond to a random graph in a particular model: we sample a union of random matchings and then subsample each edge independently with probability $q^{-1}$. This graph is, with high probability, a ``small set expander'' in the following sense: For every subset $S \subseteq [n]$ of size at most $q$, at most $O(q^{-2})$ fraction of the edges lie entirely within $S$. (This fact can be proven using concentration inequalities, although the subsampling makes the calculations a bit messy --- just like in the analysis of $\mcut$ (see \cref{sec:bpd}).) This small set expansion implies another property, which we'll call ``balanced partition expansion'': In any partition of $[n]$ into subsets $S_1,\ldots,S_t$ of size at most $q$, at most $O(q^{-1})$ fraction of the edges do not cross between two distinct subsets. Thus, setting $q = \Omega(\epsilon^{-1})$ gives the desired bound.
\end{proof}

We formalize exactly what we mean by small set expansion and balanced partition expansion (including for hypergraphs) below; see \cref{sec:ocsp-coarsening}.

In summary, we can show that $\mas$ is streaming approximation-resistant by (1) defining an appropriate predicate $\cmas$ over $\BZ_q$ which is ``coarser'' than $\mas$ (in the sense of \cref{claim:mas-refinement}), (2) showing the corresponding problem $\mcsp[\cmas]$ is almost approximation-resistant using pre-existing tools for CSPs (i.e., the results of \cite{CGS+22}), and (3) ensuring that the values of the $\no$ instances for $\mcsp[\cmas]$ do not increase ``too much'' when refined into instances of $\mas$ (\cref{claim:mas-coarsening}).\footnote{The same ``coarsening'' construction arises in the classical $\ug$-hardness proof of Guruswami \emph{et al.}~\cite{GHM+11}, but its usage is significantly more sophisticated. In particular, the \cite{GHM+11} proof follows the typical $\ug$-hardness paradigm which constructs so-called \emph{dictatorship tests} using the predicate at hand (in this case, $\mas$); the construction and soundness analysis of these tests is based on Raghavendra's CSP dichotomy theorem~\cite{Rag08} applied to the coarsened CSP $\mcsp[\cmas]$. For an introduction to this approach to classical $\ug$-hardness, see the surveys of Khot~\cite{Kho10} and Trevisan~\cite{Tre12} as well as \cite[\S7 and \S11.7]{OD14}.} In the next subsection, we carry out this proof formally and generalize it to OCSPs.

\section{Linear-space approximation resistance of all OCSPs: Proving \cref{thm:ocsp-hardness}}

The goal of this section is to prove \cref{thm:ocsp-hardness}, which states that every OCSP is approximation-resistant to linear-space streaming algorithms. We begin by generalizing the definitions of ``coarsening'' and ``refinement'' from the previous subsection to all OCSPs. Consider an arbitrary ordering predicate $\Pi : \sym_k \to \{0,1\}$ and an alphabet size $q \in \BN$. We define the coarse predicate $\cPi : \BZ_q^k \to \{0,1\}$ by $\cPi(a_1,\ldots,a_k) = \Pi(\ord(\iota_q(a_1),\ldots,\iota_q(a_k)))$.\footnote{$\cPi(a_1,\ldots,a_k)= 0$ in the case $\ord(\iota_q(a_1),\ldots,\iota_q(a_k))=\bot$, i.e., $a_1,\ldots,a_k$ are not all distinct.} For every instance $\Phi$ of $\mPi$ and $q \in \BN$, we let $\Phi^{\coarsen q}$ denote the instance of $\mcsp[\cPi]$ with the same constraints, and given an instance $\Psi$ of $\mcsp[\cPi]$, we let $\Psi^\refine$ denote the instance of $\mas$ with the same constraints. Then we have the following analogue of \cref{claim:mas-refinement}:

\begin{lemma}\label{lemma:ocsp-refinement}
For every $k,q \leq 2 \in \BN$, $\Pi : \sym_k \to \{0,1\}$, and instance $\Psi$ of $\mcsp[\cPi]$, \[ \val_\Psi \leq \ordval_{\Psi^\refine}. \]
\end{lemma}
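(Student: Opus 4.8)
The plan is to directly mimic the proof of \cref{claim:mas-refinement}, generalizing the construction of an ordering from a $\BZ_q$-assignment. Given an instance $\Psi$ of $\mcsp[\cPi]$ and an assignment $\vecx \in \BZ_q^n$, I would first exhibit an ordering $\vecx^\refine \in \sym_n$ whose value on $\Psi^\refine$ is at least $\val_\Psi(\vecx)$, and then take the maximum over $\vecx$ to conclude $\ordval_{\Psi^\refine} \geq \max_{\vecx} \val_\Psi(\vecx) = \val_\Psi$.

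The construction of $\vecx^\refine$ is exactly as before: for $b \in \BZ_q$ let $s_b = |\vecx^{-1}(b)|$, and build $\vecx^\refine$ by assigning the $s_1$ elements of $\vecx^{-1}(1)$ (using the convention $\iota_q(1)=1$) to positions $1,\ldots,s_1$ in arbitrary order, then the $s_2$ elements of $\vecx^{-1}(2)$ to the next $s_2$ positions, and so on through $\vecx^{-1}(q)$ (where $\iota_q(0)=q$). The key elementary observation is that this assignment is \emph{order-preserving} with respect to $\iota_q$: for any two indices $u,v$, if $\iota_q(x_u) < \iota_q(x_v)$ then $x^\refine_u < x^\refine_v$, and if $\iota_q(x_u) = \iota_q(x_v)$ (equivalently $x_u = x_v$) then $u$ and $v$ land in the same block but in some definite relative order. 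Consequently, for any $k$-tuple $\vecj$ of distinct indices, whenever $\iota_q(x_{j_1}),\ldots,\iota_q(x_{j_k})$ are all distinct we have $\vecx^\refine|_\vecj = \ord(\iota_q(x_{j_1}),\ldots,\iota_q(x_{j_k}))$, so $\Pi(\vecx^\refine|_\vecj) = \cPi(\vecx|_\vecj)$. If the $\iota_q(x_{j_i})$ are not all distinct, then $\cPi(\vecx|_\vecj)=0$ by definition of $\cPi$, so $\Pi(\vecx^\refine|_\vecj) \geq \cPi(\vecx|_\vecj)$ trivially. Summing over all constraints $\ell \in [m]$ gives $\ordval_{\Psi^\refine}(\vecx^\refine) \geq \val_\Psi(\vecx)$.

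There is essentially no hard step here — the whole content is the bookkeeping that $\vecx^\refine$ realizes the ordering $\ord \circ \iota_q$ on every tuple of indices with distinct $\iota_q$-values, and that the ``collision'' tuples are handled for free because $\cPi$ vanishes on them. The one point requiring a small amount of care is the convention on $\iota_q$ (namely $\iota_q(0) = q$, which makes $\iota_q(0) > \iota_q(1)$), so that the block ordering in the construction of $\vecx^\refine$ matches the linear order on $\{1,\ldots,q\}$ used to define $\cPi$; I would flag this explicitly to avoid an off-by-one in the direction of the inequality. Taking the maximum over $\vecx \in \BZ_q^n$ on the left and noting $\ordval_{\Psi^\refine} \geq \ordval_{\Psi^\refine}(\vecx^\refine)$ for every such $\vecx$ completes the proof. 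I would write this up in a few lines, paralleling the displayed proof of \cref{claim:mas-refinement}.
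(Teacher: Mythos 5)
Your proof is correct and takes essentially the same approach as the paper: construct $\vecx^\refine$ as in \cref{claim:mas-refinement}, use the monotonicity/order-preserving property to identify $\vecx^\refine|_\vecj$ with $\ord(\iota_q(x_{j_1}),\ldots,\iota_q(x_{j_k}))$ on each constraint, and sum. You are slightly more explicit than the paper in separating out the ``collision'' case where the $\iota_q$-values are not all distinct (the paper handles it implicitly, since $\cPi(\vecx|_\vecj)=1$ already forces distinctness), and your flag about the $\iota_q(0)=q$ convention is a sensible point of care, but the argument is the same.
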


\begin{proof}
For $\vecx\in\BZ_q^n$, we construct $\vecx^\refine \in \sym_n$ as in the proof of \cref{claim:mas-refinement}, which has the ``monotonicity'' property that $\iota_q(x_u) < \iota_q(x_v) \implies x^\refine_u < x^\refine_v$. Consider some constraint $\vecj = (j_1,\ldots,j_k)$. If $\vecx$ satisfies $\vecj$ (as a $\mcsp[\cPi]$ constraint), then $\Pi(\ord(\iota_q(x_{j_1}),\ldots,\iota_q(x_{j_k})))=1$. By monotonicity, $\ord(\iota_q(x_{j_1}),\ldots,\iota_q(x_{j_k})) = \vecx^\refine|_\vecj$, and so $\vecx^\refine$ satisfies $\vecj$ (as a $\mPi$ constraint). Thus, $\val_\Psi(\vecx) \leq \ordval_{\Psi^\refine}(\vecx^\refine)$.
\end{proof}

Now, we ``import'' the linear-space hardness results of \cite{CGS+22} we need from \cref{sec:cgsvv}. Specifically, for $\Pi : \sym_k \to \{0,1\}$ and $k \leq q \in \BN$, we define a hard pair of distributions over $\mcsp[\cPi]$ instances which, under refinement ($\refine$), will become hard instances for $\mPi$. Fix $\vecb \in \supp(\Pi)$, and let $\R_1,\ldots,\R_T$ denote the reduction functions from \cref{cons:seqirsd} (using $\vecb$ and $\mcsp[\cPi]$). For $\alpha \in (0,1)$ and $T,n\in\BN$, let $\kq\CY_{\alpha,T}(n)$ and $\kq\CN_{\alpha,T}(n)$ denote the $\yes$ and $\no$ distributions for $\kq\seqirsd_{\alpha,T}(n)$, respectively. Then we let $\Piq\CY'_{\alpha,T}(n)\eqdef (\R_1,\ldots,\R_T) \circ \kq\CY_{\alpha,T}(n)$ and $\Piq\CN'_{\alpha,T}(n)\eqdef (\R_1,\ldots,\R_T) \circ \kq\CN_{\alpha,T}(n)$. These are distributions over $\mcsp[\cPi]$ instances, which are indistinguishable to linear-space algorithms (\cref{thm:cgsvv}), and which have values close to $\omega_\vecb(\cPi)$ and $\rho(\cPi)$ with high probability for large enough $T$ and $n$, respectively (\cref{lemma:seqirsd-analysis}).

Now we have
\begin{equation}\label{eqn:ocsp-omega}
\omega_{\vecb}(\cPi) \geq 1-\frac{k-1}q
\end{equation}
since by definition $\omega_{\vecb}(\cPi) = \E_{\vecc \sim \CC}[\cPi(\vecb+\vecc)]$ (see \cref{eqn:omega}), $\cPi(\ord(\vecb))=1$ by definition, and $\ord(\vecb+\vecc)=\ord(\vecb)$ unless $c \geq q-(k-1)$ (where $\vecc=(c,\ldots,c)$). On the other hand,
\begin{equation}\label{eqn:ocsp-rho}
    \rho(\cPi) = \frac{q!}{(q-k)!q^k}\rho(\Pi) \leq \rho(\Pi),
\end{equation}
since by definition $\rho(\cPi) = \E_{\veca\sim\Unif_{\BZ_q^k}}[\cPi(\veca)]$; $\ord(\veca)\neq\bot$ (i.e., $\veca$'s entries are all distinct) with probability $\frac{q\cdots (q-(k-1))}{q^k} = \frac{q!}{(q-k)!q^k} \leq 1$, and if $\veca$'s entries are all distinct, it satisfies $\cPi$ with probability $\rho(\Pi)$. We also claim the following lemma, which generalizes \cref{claim:mas-coarsening}:

\begin{lemma}\label{lemma:ocsp-coarsening}
For every $\Pi : \sym_k \to \{0,1\}$, $\vecb \in \supp(\Pi)$, and $\epsilon > 0$, there exists $q_0 \in \BN$ and $\alpha_0 > 0$ such that for all $q \geq q_0 \in \BN$ and $\alpha \in (0,\alpha_0)$, there exists $T_0 \in \BN$ such that for all $T \geq T_0 \in \BN$ and $n \in \BN$, \[ \Pr_{\Psi \sim \Piq\CN'_{\alpha,T}(n)}[\ordval_{\Psi^\refine} \geq \val_\Psi + \epsilon] \leq \exp(-n). \]
\end{lemma}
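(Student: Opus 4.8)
The plan is to decouple the probabilistic and combinatorial content: first reduce Lemma~\ref{lemma:ocsp-coarsening} to a deterministic ``expansion'' statement about the constraint hypergraph $G(\Psi)$, and then show that this statement holds with probability $1-\exp(-n)$. Recall that under the $\no$ distribution of $\seqirsd$ composed with the reduction functions $\R_1,\ldots,\R_T$ of \cref{cons:seqirsd}, the hypergraph $G(\Psi)$ is distributed as follows: it is the union of $T$ independent $k$-hypermatchings, each drawn from $\Matchings_{k,\alpha}(n)$, in which each hyperedge is retained independently with probability $|\CC|/q^k = q^{-(k-1)}$ (in particular $G(\Psi)$ does not depend on $\vecb$). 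Call a partition of $[n]$ into at most $q$ parts, each of size at most $\lceil n/q\rceil$, \emph{balanced}, and call a hyperedge \emph{$P$-bad} if two of its $k$ vertices lie in a common part of $P$. The goal is: with probability $\geq 1-\exp(-n)$ (for suitable $q_0,\alpha_0$ and $T$ large), $G(\Psi)$ has at least $m_0 \eqdef \tfrac12 q^{-(k-1)}\alpha T n$ hyperedges and, simultaneously over all balanced $P$, fewer than $\epsilon m_0$ of them are $P$-bad.

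Granting this, the lemma follows by a rounding argument mirroring \cref{lemma:ocsp-refinement}. For $\vecsigma\in\sym_n$ let $\vecsigma^{\coarsen q}\in\BZ_q^n$ be the coarsening with $\iota_q(\sigma^{\coarsen q}_i) = \lceil q\sigma_i/n\rceil$, which induces a balanced partition $P(\vecsigma)$ of $[n]$. Since $x\mapsto\lceil qx/n\rceil$ is nondecreasing, for any constraint $\vecj$ whose $k$ variables land in $k$ distinct parts of $P(\vecsigma)$ we have $\ord(\iota_q(\sigma^{\coarsen q}_{j_1}),\ldots,\iota_q(\sigma^{\coarsen q}_{j_k})) = \vecsigma|_\vecj$, so $\vecsigma^{\coarsen q}$ satisfies $\vecj$ in $\Psi$ exactly when $\vecsigma$ satisfies $\vecj$ in $\Psi^\refine$; and if $\vecj$ is $P(\vecsigma)$-bad then $\cPi$ evaluates to $0$ on $\vecsigma^{\coarsen q}|_\vecj$. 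Hence the number of constraints satisfied by $\vecsigma$ in $\Psi^\refine$ is at most the number satisfied by $\vecsigma^{\coarsen q}$ in $\Psi$ plus the number of $P(\vecsigma)$-bad constraints, so on the good event $\ordval_{\Psi^\refine}(\vecsigma) \leq \val_\Psi(\vecsigma^{\coarsen q}) + \epsilon \leq \val_\Psi + \epsilon$ for every $\vecsigma$; taking the maximum over $\vecsigma$ gives $\ordval_{\Psi^\refine} \leq \val_\Psi + \epsilon$ (running the argument with $\epsilon/2$ in place of $\epsilon$ yields the strict inequality the lemma requires).

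To prove the expansion statement, first note that the number of retained hyperedges is a sum of independent $\Bern_{1-q^{-(k-1)}}$-type variables, so it is $\geq m_0$ except with probability $\exp(-\Omega(q^{-(k-1)}\alpha T n))$. Now fix a balanced $P$ and let $Z_P$ be the number of retained $P$-bad hyperedges. Revealing the $T$ matchings edge by edge exactly as in the $\no$-case analysis of \cref{lemma:seqbpd-to-mcut-analysis}, the conditional probability that the next hyperedge is retained and $P$-bad is at most $q^{-(k-1)}\cdot O(\binom k2/q)$ — here we use $\alpha\leq\alpha_0$, so that at least $n/2$ vertices remain, together with the fact that every part of $P$ has size at most $\lceil n/q\rceil\leq 2n/q$. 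Thus \cref{lemma:azuma} applies with $\mu \leq O(\binom k2/q)\, m_0$, which is $\leq \tfrac\epsilon2 m_0$ once $q\geq q_0(\epsilon,k)$; taking $\eta = \tfrac\epsilon2 m_0$ gives $\mu+\eta\le\epsilon m_0$ and hence $\Pr[Z_P\geq\epsilon m_0] \leq \exp(-\eta/4) = \exp(-\Omega(\epsilon q^{-(k-1)}\alpha T n))$. Finally take a union bound over the at most $q^n$ balanced partitions of $[n]$.

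The main obstacle is getting this last union bound to close, and it is exactly what dictates the order of quantifiers in the statement. The $q^n$ balanced partitions cost a factor $\exp(n\log q)$, which no per-partition bound of the form $\exp(-cn)$ with $c$ fixed can absorb. The resolution is that $q$ and $\alpha$ are chosen first, depending only on $\Pi$ and $\epsilon$, and $T_0$ only afterwards: since the per-partition exponent $\Omega(\epsilon q^{-(k-1)}\alpha T n)$ is \emph{linear} in $T$, taking $T\geq T_0(q,\alpha,\epsilon,k)$ large enough makes it exceed $2n\log q$, so both the union bound and the Chernoff bound on the hyperedge count are at most $\tfrac12\exp(-n)$, and this works uniformly in $n$ (for $n\le q$ the coarsening is injective, so $Z_P=0$ and the statement is trivial). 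A secondary point, routine but requiring care, is the edge-by-edge submartingale setup accounting for the dependence among the edges of a single random hypermatching, which is handled just as for $\mcut$ in the proof of \cref{lemma:seqbpd-to-mcut-analysis}.
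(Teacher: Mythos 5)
Your proposal is correct and amounts to a direct, slightly streamlined version of the paper's argument. The paper modularizes the combinatorics: it first establishes (via Azuma and a union bound over the $\le 2^n$ subsets $S \subseteq [n]$ of size $\le \gamma n$) that $\Psi$ is a \emph{small-set expander}, then deterministically converts SSE into \emph{balanced partition expansion} (\cref{lemma:sse-to-bpe}), losing a $\Theta(1/\gamma)$ factor since a balanced partition has $\Theta(1/\gamma)$ parts, and finally shows that BPE bounds $\ordval_{\Psi^\refine} - \val_\Psi$ via the $\lceil q\sigma_i/n\rceil$ coarsening (\cref{lemma:bpe-gap-bound}). You instead union-bound directly over the $\le q^n$ balanced partitions, folding the SSE-to-BPE step into the probabilistic estimate. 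Both routes rely on the same three ingredients --- the coarsening/refinement rounding, the edge-by-edge submartingale bound (\cref{lemma:azuma}) with the at-least-$n/2$-vertices-remain estimate, and the crucial observation that because $q,\alpha$ are fixed before $T$, the per-partition failure probability is $\exp(-\Omega(T n))$ and swallows the $\exp(O(n\log q))$ union-bound cost. The direct route costs a larger union bound ($q^n$ vs.\ $2^n$) but saves the intermediate lemma and its $1/\gamma$ loss; both are absorbed by $T$ in exactly the same way. Your handling of the small-$n$ regime ($n\le q$ makes the coarsening injective), the Chernoff step for $m\ge m_0$, and the $\epsilon/2$ adjustment for the non-strict inequality are all consistent with what the paper does implicitly, and I see no gaps.
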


Modulo the proof of \cref{lemma:ocsp-coarsening}, we can now prove \cref{thm:ocsp-hardness}:

\begin{proof}[Proof of \cref{thm:ocsp-hardness}]
Let $\epsilon' = \epsilon/4$. Pick $q_0 \in \BN$ such that \cref{lemma:ocsp-coarsening} holds with error probability $\epsilon'$, and let $q = \max\{q_0,(k-1)/\epsilon'\}$. Now let $\alpha$ be the smaller of the $\alpha_0$'s from \cref{thm:cgsvv,lemma:ocsp-coarsening}, and let $T$ be the larger of the $T_0$'s from \cref{lemma:seqirsd-analysis,lemma:ocsp-coarsening} applied with error probabilities $\epsilon'$. Let $n_0 \in \BN$ be the larger of the $n_0$'s from \cref{thm:cgsvv,lemma:seqirsd-analysis}.

Now fix $\vecb \in \supp(\Pi)$. For $n \geq n_0 \in \BN$, let $\CY^\refine$ and $\CN^\refine$ denote the distributions of $\mPi$ instances $\Psi^\refine$ where $\Psi \sim \Piq\CY'_{\alpha,T}(n)$ and $\Piq\CN'_{\alpha,T}(n)$, respectively. By \cref{thm:cgsvv}, distinguishing $\CY^\refine$ and $\CN^\refine$ requires space $\tau n$ for some $\tau >0$. On the other hand, by \cref{lemma:seqirsd-analysis,lemma:ocsp-refinement,eqn:ocsp-omega}, \[ \Pr_{\Phi \sim \CY^\refine}[\ordval_\Phi \leq 1-\epsilon/2] \leq \exp(-n), \] while by \cref{lemma:seqirsd-analysis,lemma:ocsp-coarsening,eqn:ocsp-rho}, \[ \Pr_{\Phi \sim \CN^\refine}[\ordval_\Phi \geq \rho(\Pi) + \epsilon/2] \leq \exp(-n). \] Thus, $(\rho(\Pi)+\epsilon)$-approximating $\mPi$ requires at least $\tau n$ space, as desired (see \cref{prop:yao}).
\end{proof}

It remains to prove \cref{lemma:ocsp-coarsening}; we do so in the final section.

\section{The coarsening analysis: Proving \cref{lemma:ocsp-coarsening}}\label{sec:ocsp-coarsening}

The goal of this section is to prove \cref{lemma:ocsp-coarsening}. We do so by carrying out the plan based on ``balanced partition expanders'', as described in the proof sketch for \cref{claim:mas-coarsening}.

Given an instance $\Psi$ of $\mcsp[\cPi]$ on $n$ variables and a subset $S \subseteq [n]$, we denote by $N(\Psi,S)$ the number of constraints $\vecj = (j_1,\ldots,j_k)$ in $\Psi$ which ``touch $S$ twice'', i..e, such that $j_i,j_{i'} \in S$ for some $i \neq i'$.

\begin{definition}[Small set expansion (SSE)]
Let $\Psi$ be an instance of $\mcsp[\cPi]$ on $n$ variables and $m$ constraints. For $\gamma,\epsilon\in(0,1)$, $\Psi$ is a \emph{$(\gamma,\epsilon)$-small set expander (SSE)} if for every subset $S \subseteq [n]$ of size at most $\gamma n$, $N(\Psi,S) \leq \epsilon m$.
\end{definition}

\begin{definition}[Balanced partition expansion (BPE)]
Let $\Psi$ be an instance of $\mcsp[\cPi]$ on $n$ variables and $m$ constraints. For $\gamma,\epsilon\in(0,1)$, $\Psi$ is a \emph{$(\gamma,\epsilon)$-balanced partition expander (BPE)} if for every $\vecb \in \BZ_q^n$ where each block $\vecb^{-1}(a)$ has size at most $\gamma n$, \[ \sum_{a \in \BZ_q} N(\Psi,\vecb^{-1}(a)) \leq \epsilon m. \]
\end{definition}

Now we give several lemmas which connect these notions to the $\no$-distribution $\CN'$ of the previous subsection:

\begin{lemma}\label{lemma:n'-sse}
Let $\Pi : \sym_k \to \{0,1\}$, $\vecb \in \supp(\Pi)$, $q \in \BN$, and $\gamma > 0$. There exists $\alpha_0 \in \BN$ such that for every $\alpha \in (0,\alpha_0)$, there exists $T_0 \in \BN$ such that for $T \geq T_0 \in \BN$ and every $n \in \BN$, \[ \Pr_{\Psi \sim \Piq\CN'_{\alpha,T}(n)} \left[\Psi\text{ is not a }(\gamma,8k^2\gamma^2)\text{-SSE} \right] \leq \exp(-n). \]
\end{lemma}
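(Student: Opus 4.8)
The plan is to analyze the constraint hypergraph $G(\Psi)$ of a random instance $\Psi \sim \Piq\CN'_{\alpha,T}(n)$ directly. Unpacking \cref{cons:seqirsd} and the $\no$ distribution of \cref{def:seqirsd}: each player $\Bob_t$ contributes a subinstance $\Psi_t$ whose hyperedges are exactly the $\vece(t,\ell)$ of the random hypermatching $M_t \sim \Matchings_{k,\alpha}(n)$ for which $\vecz(t,\ell)-\vecb \in \CC$; since in the $\no$ case $\vecz(t,\ell)$ is uniform over $\BZ_q^k$ independently of $M_t$, each hyperedge survives independently with probability $p \eqdef |\CC|/q^k = q^{1-k}$, regardless of the choice of $\vecb$. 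Hence $G(\Psi)$ is a union of $T$ independent copies of a random $k$-hypermatching on $[n]$ with each hyperedge $p$-subsampled, and the total number of constraints $m$ is a $\mathrm{Binomial}(T\alpha n, p)$ random variable.

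First I would fix $S \subseteq [n]$ with $|S| = s \leq \gamma n$ and control $N(\Psi,S)$, the number of constraints containing two vertices of $S$. Enumerate the potential hyperedges as $\{X_{t,\ell}\}_{t \in [T],\ell\in[\alpha n]}$, where $X_{t,\ell}$ indicates that $\vece(t,\ell)$ both survives subsampling and contains $\geq 2$ vertices of $S$, so $N(\Psi,S) = \sum_{t,\ell} X_{t,\ell}$. Taking $\alpha_0 = 1/(2k)$, for $\alpha < \alpha_0$ at most $n/2$ vertices are covered before hyperedge $\ell$ of $M_t$ is revealed; conditioned on the earlier hyperedges of $M_t$, the hyperedge $\vece(t,\ell)$ is uniform over $k$-tuples of distinct vertices from the remaining set (of size $\geq n/2$), and a union bound over the $\binom{k}{2}$ pairs of coordinates gives that it touches $S$ twice with conditional probability at most $\binom{k}{2}(2s/n)^2 \leq 2k^2\gamma^2$. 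Since subsampling is independent, the tower property (conditioning on the richer $\sigma$-algebra of all previously revealed hyperedges and masks) gives $\E[X_{t,\ell}\mid \text{all earlier } X\text{'s}] \leq p' \eqdef 2pk^2\gamma^2$. Applying \cref{lemma:azuma} with $\mu = p'T\alpha n$ and $\eta = \mu$ yields $\Pr[N(\Psi,S) \geq 2\mu] \leq \exp(-\mu/4)$. Separately, a Chernoff bound gives $\Pr[m < \tfrac12 pT\alpha n] \leq \exp(-\Omega(pT\alpha n))$, and when $m \geq \tfrac12 pT\alpha n$ one has $8k^2\gamma^2 m \geq 4pk^2\gamma^2 T\alpha n = 2\mu$, so on the intersection of both good events $N(\Psi,S) \leq 8k^2\gamma^2 m$.

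Then I would take a union bound over the at most $2^n$ choices of $S$: the probability that $\Psi$ fails to be a $(\gamma, 8k^2\gamma^2)$-SSE is at most $2^n\bigl(\exp(-\mu/4) + \exp(-\Omega(pT\alpha n))\bigr)$, with $\mu/4 = \tfrac12 pk^2\gamma^2 T\alpha n$. Since $p = q^{1-k}$, $k$, $\gamma$, $\alpha$ are all fixed constants, choosing $T_0$ large enough that $\tfrac12 pk^2\gamma^2 T_0\alpha \geq 3$ and the Chernoff exponent exceeds $3n$ makes each summand at most $\exp(-3n)$, so the whole bound is at most $2^{n+1}\exp(-3n) \leq \exp(-n)$ (the finitely many small $n$ with $\lfloor\alpha n\rfloor = 0$ give an empty instance and are trivial). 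This respects the required quantifier order: $\alpha_0 = 1/(2k)$ depends only on $k$, while $T_0$ depends on $q, k, \gamma, \alpha$.

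The main obstacle is the concentration step: the hyperedges within a single matching $M_t$ are not independent (they are vertex-disjoint), so a plain Chernoff bound does not apply, and one must phrase $\sum X_{t,\ell}$ as a submartingale and verify the conditional-expectation bound $\E[X_{t,\ell}\mid\text{past}] \leq p'$ — in particular that conditioning only on the earlier indicators, rather than the full history of sampled hyperedges, still yields it, which is exactly where the tower property enters. A secondary wrinkle, mirroring the $\no$-case analysis for $\mcut$ (cf. the proof sketch of \cref{lemma:seqbpd-to-mcut-analysis}), is that the SSE threshold $8k^2\gamma^2 m$ is itself a random quantity, so a separate lower-tail bound on $m$ must be combined with the per-$S$ estimate before the union bound closes.
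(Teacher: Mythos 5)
Your proof is correct and follows essentially the same approach as the paper: both rely on the observation that each hyperedge of $M_t$ survives the mask test independently with probability $q^{1-k}$, both take $\alpha_0 = 1/(2k)$ so that the conditional probability of a potential hyperedge touching $S$ twice is $O(k^2\gamma^2)$, both invoke \cref{lemma:azuma}, and both separately Chernoff-bound $m$ from below before the union bound over $S$. The one cosmetic difference is in the martingale setup: the paper first conditions on the constraint counts $m_1,\ldots,m_T$ and then treats each $\Psi_t$ as a uniform partial hypermatching with indicators $X_{t,\ell}$ for $\ell \in [m_t]$ tracking only the touching-$S$-twice event, whereas you keep $\ell \in [\alpha n]$ and fold the mask subsampling directly into the indicator $X_{t,\ell}$ (using the tower property to justify the conditional-expectation bound); both settings yield the same conclusion with the same constants up to unimportant factors.
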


We defer the proof of \cref{lemma:n'-sse} until the end of this subsection, as it involves some somewhat messy concentration bounds.

\begin{lemma}[Good SSEs are good BPEs]\label{lemma:sse-to-bpe}
For every $\gamma,\epsilon\in(0,1)$, if $\Psi$ is a $(\gamma,\epsilon)$-SSE, then it is a $\left(\gamma,3\gamma/\epsilon\right)$-BPE.
\end{lemma}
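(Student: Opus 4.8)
The plan is a greedy \emph{super-block} argument: any balanced partition of $[n]$ into blocks of size at most $\gamma n$ can be coarsened into at most $\approx 3/\gamma$ groups, each still of size at most $\gamma n$, after which the SSE hypothesis bounds the number of ``internal'' constraints inside each group by $\epsilon m$.

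First I would record the obvious monotonicity $S \subseteq S' \implies N(\Psi,S) \le N(\Psi,S')$ (a constraint whose index tuple meets $S$ in at least two coordinates also meets $S'$ in at least two coordinates), and note that, since the blocks $S_a \eqdef \vecb^{-1}(a)$ are pairwise disjoint, a single constraint is counted by $N(\Psi,S_a)$ for at most $\lfloor k/2\rfloor$ values of $a$. Hence it suffices to bound the number of constraints that are \emph{internal} to the partition $\{S_a\}$ (those with at least $2$ indices inside a common block) and then multiply by $\lfloor k/2\rfloor$; for $k\le 3$ the two quantities coincide.

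Next I would build the super-blocks greedily: list the nonempty blocks in an arbitrary order and pack them into groups $T_1,T_2,\ldots$, closing the current group and starting a fresh one seeded by the next block whenever adding that block would push the current group's total size above $\gamma n$. Every $T_j$ then has size at most $\gamma n$ by construction, and seeding a fresh group with a single block is legitimate precisely because each block has size at most $\gamma n$. A short counting argument bounds the number of groups $G$: whenever a group $T_j$ with $j<G$ was closed, the ``overflow'' block seeding $T_{j+1}$ gives $|T_j|+|T_{j+1}| \ge |T_j| + (\text{overflow size}) > \gamma n$, so summing $|T_j|+|T_{j+1}| > \gamma n$ over alternate indices $j$ and using $\sum_{j=1}^G |T_j| = n$ yields $G \le 3/\gamma$. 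Finally, each internal constraint has at least $2$ indices in some $S_a$, hence in the group $T_j \supseteq S_a$, so it is counted by $N(\Psi,T_j)$; thus the number of internal constraints is at most $\sum_{j=1}^G N(\Psi,T_j) \le G\epsilon m \le 3\epsilon m/\gamma$ by SSE, and therefore $\sum_{a\in\BZ_q} N(\Psi,\vecb^{-1}(a)) \le 3\lfloor k/2\rfloor \epsilon m/\gamma$, the desired $(\gamma,O(\epsilon/\gamma))$-BPE bound (exactly $(\gamma,3\epsilon/\gamma)$ when $k\le 3$).

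There is no serious obstacle here. The only step requiring care is the greedy packing, where one must simultaneously keep every super-block of size at most $\gamma n$ \emph{and} keep the number of super-blocks $O(1/\gamma)$ --- both of which hinge on each \emph{individual} block being small --- and then extract the constant by pairing consecutive groups. The $\lfloor k/2\rfloor$ loss is cosmetic, invisible for the $\mas$ and $\mbtwn$ cases, and in any event harmless for the application to \cref{lemma:ocsp-coarsening}, where only $O_k(\epsilon/\gamma)$ is needed.
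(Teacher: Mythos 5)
Your proof is correct and follows the same overall strategy as the paper's: coarsen the partition $\{S_a\}$ into $O(1/\gamma)$ super-blocks, each still of size at most $\gamma n$, and then apply the SSE hypothesis block-by-block. The cosmetic difference is that you pack greedily while the paper iteratively merges pairs of blocks both of size $< \gamma n/2$; both yield at most $2/\gamma + 1 \le 3/\gamma$ super-blocks of size $\le \gamma n$. But you have also spotted, and correctly repaired, a genuine gap in the paper's argument. The paper asserts that replacing two blocks by their union ``only increases the sum $\sum_a N(\Psi,\vecb^{-1}(a))$,'' and this is false for $k\ge 4$: a constraint with two indices in $S_a$ and two more in $S_{a'}$ contributes $2$ to the sum before merging and only $1$ to $N(\Psi, S_a\cup S_{a'})$ after, so the sum can strictly decrease. (The claim is true for $k\le 3$, where a $2$--$2$ split across two disjoint blocks is impossible.) Your fix is exactly the right one: since the blocks are disjoint, each constraint is counted by $N(\Psi,S_a)$ for at most $\lfloor k/2\rfloor$ values of $a$, and the quantity that \emph{is} monotone under coarsening is the number of constraints internal to the partition, so you bound that by $\sum_j N(\Psi,T_j) \le (3/\gamma)\epsilon m$ and then multiply by $\lfloor k/2\rfloor$. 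The resulting $(\gamma, 3\lfloor k/2\rfloor \epsilon/\gamma)$-BPE conclusion is weaker by a factor of $\lfloor k/2\rfloor$ than what the paper states, but as you note this is harmless in the sole application (Lemma~\ref{lemma:ocsp-coarsening}), where it only inflates the required $q_0$ by an $O(k)$ factor. One small bookkeeping correction to record: the lemma statement as printed, $(\gamma, 3\gamma/\epsilon)$-BPE, has $\gamma$ and $\epsilon$ transposed --- the intended parameter is $3\epsilon/\gamma$, as can be read off from the way the lemma is invoked in the proof of Lemma~\ref{lemma:ocsp-coarsening}; your derivation recovers the correct orientation.
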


\begin{proof}
Consider any $\vecb \in \BZ_q^n$ of $[n]$ where each block $\vecb^{-1}(a)$ has size at most $\gamma n$. Firstly, note that if two blocks have sizes $|\vecb^{-1}(a)|,|\vecb^{-1}(a')|$ both smaller than $\frac{\gamma n}2$, we can reassign $\vecb^{-1}(a')$ to $a$, since this only increases the sum $\sum_{a \in \BZ_q} N(\Psi,\vecb^{-1}(a))$ and every block still has size at most $\gamma n$. Thus, we can assume WLOG that $\vecb$ consists of empty blocks, a single block of size at most $\frac{\gamma n}2$, and blocks of size between $\frac{\gamma n}2$ and $\gamma n$. The number of non-empty blocks is at most $\frac{n}{\lfloor \gamma n / 2 \rfloor}+1 \leq 3\gamma$, and each such block has $N(\Psi,\vecb^{-1}(a)) \leq \epsilon m$ by the SSE assumption.
\end{proof}

\begin{lemma}[Refining roughly preserves value in BPEs]\label{lemma:bpe-gap-bound}
Let $\Pi : \sym_k \to \{0,1\}$, $q \in \BN$ and $\epsilon > 0$. If $\Psi$ is a $\mcsp[\cPi]$ instance which is a $(1/q,\epsilon)$-BPE, then \[ \ordval_{\Psi^\refine} \leq \val_\Psi + \epsilon. \]
\end{lemma}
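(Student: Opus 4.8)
The plan is to pass from an optimal ordering for $\Psi^\refine$ to a nearly-as-good $\BZ_q$-assignment for $\Psi$ by ``coarsening''. Fix an optimal ordering $\vecsigma^* \in \sym_n$, so $\ordval_{\Psi^\refine} = \ordval_{\Psi^\refine}(\vecsigma^*)$, and form the coarsened assignment $\vecx \eqdef \vecsigma^{*\coarsen q} \in \BZ_q^n$ defined, as in the proof sketch of \cref{claim:mas-coarsening}, by $x_i = \lceil q\sigma^*_i/n\rceil$. The preimages $\vecx^{-1}(1),\ldots,\vecx^{-1}(q)$ partition $[n]$ into consecutive ``position blocks'', each of size at most $\lceil n/q \rceil$, which for the purposes of invoking the $(1/q,\epsilon)$-BPE hypothesis we treat as being of size at most $n/q$ (the rounding is harmless and may be removed, e.g., by taking $q \mid n$ or running the SSE/BPE chain with $\gamma$ enlarged to $\lceil n/q\rceil/n$).

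The crux is that coarsening is order-preserving: $i \mapsto \lceil qi/n\rceil$ is non-decreasing, so $\sigma^*_u < \sigma^*_v$ forces $\iota_q(x_u) \le \iota_q(x_v)$, strictly when $x_u \ne x_v$. Call a constraint $\vecj = (j_1,\ldots,j_k)$ of $\Psi$ \emph{spread} if $x_{j_1},\ldots,x_{j_k}$ are pairwise distinct, and \emph{clumped} otherwise. For a spread constraint the values $\iota_q(x_{j_1}),\ldots,\iota_q(x_{j_k})$ are distinct and, by monotonicity, appear in exactly the same relative order as $\sigma^*_{j_1},\ldots,\sigma^*_{j_k}$ (which are automatically distinct since $\vecsigma^*$ is a permutation); hence $\ord(\iota_q(x_{j_1}),\ldots,\iota_q(x_{j_k})) = \vecsigma^*|_\vecj$ and therefore $\cPi(\vecx|_\vecj) = \Pi(\vecsigma^*|_\vecj)$ — the spread constraint is satisfied by $\vecx$ in $\Psi$ exactly when it is satisfied by $\vecsigma^*$ in $\Psi^\refine$. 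A clumped constraint is precisely one that ``touches some block $\vecx^{-1}(a)$ twice'', so the number of clumped constraints is at most $\sum_{a \in \BZ_q} N(\Psi, \vecx^{-1}(a)) \le \epsilon m$ by the BPE hypothesis applied to $\vecb = \vecx$ (here $m$ is the number of constraints; the double-counting in the sum only weakens the needed bound).

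Combining these finishes the proof: every constraint satisfied by $\vecsigma^*$ in $\Psi^\refine$ is either clumped (at most $\epsilon m$ of them) or spread, and each spread constraint satisfied by $\vecsigma^*$ is also satisfied by $\vecx$ in $\Psi$; hence $m\cdot\ordval_{\Psi^\refine}(\vecsigma^*) \le m\cdot\val_\Psi(\vecx) + \epsilon m$, so $\ordval_{\Psi^\refine} = \ordval_{\Psi^\refine}(\vecsigma^*) \le \val_\Psi(\vecx) + \epsilon \le \val_\Psi + \epsilon$. I do not anticipate a genuine obstacle: the entire content is the monotonicity of coarsening plus the bookkeeping identity relating clumped constraints to the quantities $N(\Psi,\vecx^{-1}(a))$, and the only point requiring minor care is the rounding in block sizes flagged above.
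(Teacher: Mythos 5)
Your proof is correct and takes essentially the same approach as the paper's: both coarsen the ordering $\vecsigma^*$ to $\vecx = \vecsigma^{*\coarsen q}$ via $x_i = \lceil q\sigma^*_i / n\rceil$, use monotonicity of this map to argue that ``spread'' constraints (those with pairwise-distinct coarsened values) satisfied by $\vecsigma^*$ in $\Psi^\refine$ are also satisfied by $\vecx$ in $\Psi$, and then bound the remaining ``clumped'' constraints by $\sum_{a \in \BZ_q} N(\Psi, \vecx^{-1}(a)) \le \epsilon m$ via the BPE hypothesis. Your rounding remark (block sizes are $\le \lceil n/q\rceil$ rather than $\le n/q$) is a fair observation of a triviality the paper glosses over; it is harmless in the chain of constants.
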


\begin{proof}
Let $n$ and $m$ denote the number of variables and constraints in $\Psi$, respectively. Consider any ordering $\vecsigma \in \sym_n$, and, as in the proof sketch of \cref{claim:mas-coarsening}, let $\vecsigma^{\coarsen q} \in \BZ_q^n$ be defined by $\sigma^{\coarsen q}_i = \lceil q\sigma_i /n\rceil$. It suffices to show that $\ordval_{\Psi^\refine}(\vecsigma) \leq \val_{\Psi}(\vecsigma^{\coarsen q}) + \epsilon$. $\vecsigma^{\coarsen q}$ has the ``monotonicity'' property that for every $u,v \in [n]$, if $\sigma_u < \sigma_v$ then $\sigma^{\coarsen q}_u \leq \sigma^{\coarsen q}_v$, and each block $(\vecsigma^{\coarsen q})^{-1}(a)$ has size at most $\frac{n}q$.

Suppose a constraint $\vecj = (j_1,\ldots,j_k)$ is satisfied by $\vecsigma$ (in $\Psi^{\refine}$). If $\sigma^{\coarsen q}_{j_1},\ldots,\sigma^{\coarsen q}_{j_k}$ are all distinct, then by monotonicity $\ord(\vecsigma^{\coarsen q}|_\vecj) = \vecsigma|_\vecj$ and so $\vecj$ is satisfied by $\vecsigma^{\coarsen q}$ (in $\Psi$). Thus, it suffices to show that at most $\epsilon m$ constraints $\vecj$ have the property that $\sigma^{\coarsen q}_{j_1},\ldots,\sigma^{\coarsen q}_{j_k}$ are not all distinct; this is precisely the BPE property of $\Psi$.
\end{proof}

Modulo the proof of \cref{lemma:n'-sse}, we can now prove \cref{lemma:ocsp-coarsening}:

\begin{proof}[Proof of \cref{lemma:ocsp-coarsening}]
Let $q_0 = \lceil 24k^2/\epsilon \rceil$; consider $q \geq q_0 \in \BN$ and let $\gamma = 1/q$; let $\alpha_0$ be as in \cref{lemma:n'-sse}; for any $\alpha \in (0,\alpha_0)$, let $T_0$ be as in \cref{lemma:n'-sse}; and consider any $T \geq T_0 \in \BN$.

If $\Psi$ is a $(\gamma,8k^2\gamma^2)$-SSE, then by \cref{lemma:sse-to-bpe} it is a $(\gamma,24k^2\gamma)$-BPE, in which case by \cref{lemma:bpe-gap-bound} we have $\ordval_{\Psi^\refine} \leq \val_\Psi + 24k^2\gamma$, and $24k^2\gamma \leq \epsilon$ by assumption. But by \cref{lemma:n'-sse}, $\Psi \sim \Piq\CN'_{\alpha,T}(n)$ is a $(\gamma,8k^2\gamma^2)$-SSE except with probability $\exp(-n)$.
\end{proof}

Finally, we prove \cref{lemma:ocsp-coarsening}.

\begin{proof}[Proof of \cref{lemma:ocsp-coarsening}]
Recall that we sample instances $\Psi \sim \Piq\CN'_{\alpha,T}(n)$ by first sampling from the $\no$ distribution $\kq\CN_{\alpha,T}(n)$ of $\kq\seqirsd_{\alpha,T}(n)$ (see \cref{def:seqirsd}) and then applying the reduction functions $\R_1,\ldots,\R_T$ from \cref{cons:seqirsd} with some fixed base vector $\vecb \in \supp(\Pi)$.

For concreteness, we'll repeat the definitions for this case here. For each $t \in [T]$, we get an instance $\Psi_t$ (produced by $\Bob_t$):

\begin{enumerate}
    \item Sample a matching $M_t \sim \Matchings_{k,\alpha}(n)$ and a vector $\vecz(t) \sim (\CU_{\BZ_q}^k)^{\alpha n}$, which we think of as consisting of $\alpha n$ independently sampled blocks $\vecz(t,\ell) \sim \CU_{\BZ_q}^k$.
    \item Including the $\ell$-th hyperedge $\vece(t,\ell) = (e(t,\ell)_1,\ldots,e(t,\ell)_k)$ of $M_t$ as a constraint in $\Psi_t$ iff $\vecz(t,\ell) - \vecb \in \CC$ where $\CC = \{(c,\ldots,c):c\in\BZ_q\}$.
\end{enumerate}

and then $\Psi = \Psi_1 \cup \cdots \cup \Psi_T$. Since $\vecz(t,\ell)$ is a uniform vector in $\BZ_q^k$ and $\CC$ is a one-dimensional subspace of $\BZ_q^k$, each hyperedge of $M_t$ is included in $\Psi_t$ independently w.p. $q^{-(k-1)}$.

Let $m_t$ denote the number of constraints in $\Psi_t$ for each $t$, and $m = \sum_{t=1}^T m_t$ the number of constraints in $\Pi$. Therefore, each $m_t$ is distributed as the sum of $\alpha n$ independent $\Bern(q^{-(k-1)})$ random variables. Now, consider the event that $m \geq \alpha T n/(2q^{k-1})$. Since $m$ is distributed as the sum of $\alpha T n$ independent $\Bern(q^{-(k-1)})$'s, by the Chernoff bound, this event fails to occur with probability at most $\exp(-\alpha T n/(8q^{k-1}))$, which is $\exp(-n)$ for sufficiently large $T_0$. Thus, it suffices to prove the lemma conditioned on fixed $m_1,\ldots,m_T$ satisfying $m \geq \alpha T n/(2q^{k-1})$. With this conditioning, each sub-instance $\Psi_t$ is the result of a simpler sampling process: the constraints are the hyperedges of a hypermatching drawn from $\Matchings_{k,m/n}(n)$.

Now fix any set $S \subseteq [n]$ of size at most $\gamma n$. (We will later take a union bound over all $S$.) Label the hyperedges of each $M_t$ as $\vece(t,1),\ldots,\vece(t,m_t)$. Consider the collection of $m$ random variables $\{X_{t,\ell}\}_{t\in[T],\ell\in[m_t]}$, each of which is the indicator for the event that two distinct vertices of $\vece(t,\ell)$ are in $S$. By definition, $N(\Psi,S) = \sum_{t=1}^T \sum_{\ell=1}^{m_t} X_{t,\ell}$.

For fixed $t \in [T]$, we first bound $\E[X_{t,\ell} \mid X_{t,1},\ldots,X_{t,\ell-1}]$ for each $\ell\in[m_t]$. Conditioned on $\vece(t,1),\ldots,\vece(t,\ell-1)$ incident to some subset $V_{t,\ell} \subseteq [n]$ of $k(\ell-1)$ vertices, the hyperedge $\vece(t,\ell)$ is uniformly distributed over $[n]\setminus V_{t,\ell}$. It suffices to union-bound, over distinct pairs $\{j_1,j_2\} \in \binom{[k]}2$, the probability that the $j_1$-st and $j_2$-nd vertices of $\vece(t,\ell)$ are in $S$ (conditioned on $X_{t,0},\ldots,X_{t,\ell-1}$). We can sample the $j_1$-st and $j_2$-nd vertices of $\vece(t,\ell)$ first, and then ignore the remaining vertices. Setting $\alpha_0 = 1/(2k)$, we have the upper bound 

\begin{align*}
    \E[X_{t,\ell} \mid X_{t,1},\ldots,X_{t,\ell-1}] & \leq \binom{k}2 \cdot \frac{|S|(|S|-1)}{(n-k(\ell-1))(n-k(\ell-1)-1)}\\
    &\leq \binom{k}2 \cdot \left(\frac{|S|}{n-k(\ell-1)}\right)^2 \\
    &\leq \binom{k}2 \cdot \left(\frac{|S|}{n-km_t}\right)^2 \\
    & \leq 4k^2\gamma^2,
\end{align*} since $m_t \leq \alpha n \leq n/(2k)$.

Now $X_{t,\ell}$ is independent of $X_{t',\ell'}$ if $t \neq t'$ since $M_t$ and $M_{t'}$ are independent. Thus, \cref{lemma:azuma} implies that \[ \Pr_{\Psi \sim \Piq\CN'_{\alpha,T}(n)}\left[N(\Psi,S) \geq 8 k^2 \gamma^2 m \mid m_1,\ldots,m_T \right] \leq \exp\left(-2k^2 \gamma^2 m \right). \] Finally, we use the inequality $m \geq \alpha T n /(2k^2)$, take the union bound over $S \subseteq [n]$, and set $T_0$ large enough to ensure that $2^n \exp(-\gamma^2 \alpha T n) \leq \exp(-n)$.
\end{proof}

\newcommand{\kz}{[k]\cup\{0\}}
\newcommand{\rroot}{\mathsf{root}_\BR}

\newcommand{\mbfSk}{\mbcsp[f_{S,k}]}
\newcommand{\mbTh}{\mbcsp[\Th^t_k]}

\chapter{Symmetric Boolean predicates}\label{chap:sym-bool}

\epigraph{In our algorithm [for $\mkand$], we use the approach of Hast~\cite{Has05}: We first obtain a ``preliminary'' solution $z_1,\ldots,z_n \in \{-1,1\}^n$ and then independently flip the values of $z_i$ using a slightly biased distribution (i.e., we keep the old value of $z_i$ with probability slightly larger than $1/2$).}{Charikar, Makarychev, and Makarychev~\cite{CMM09}}

\newthought{$\mkand$ is the simplest $k$-ary Boolean CSP} which is nontrivially approximable, and is also, in some sense, the easiest to approximate (see \cref{rem:kand-approx} below). In this chapter, we present our joint work with Boyland, Hwang, Prasad, and Velusamy~\cite{BHP+22} which studies several questions regarding the results of \cite{CGSV21-boolean,CGS+22} (\cref{chap:framework-papers} above) for $\mkand$ and other $\mbf$ problems:

\begin{enumerate}
    \item Can the dichotomy theorem in \cite{CGSV21-boolean} (i.e., \cref{thm:cgsv-bool-dichotomy}) be used to find closed-form sketching approximability ratios $\alpha(f)$ for nontrivially approximable problems $\mbf$ beyond $\mtwoand$ (\cref{ex:cgsv-2and})? We note that to the best of our knowledge, in the classical setting Raghavendra's UG-dichotomy theorem~\cite{Rag08} has never been used for this purpose, but we may have more ``hope'' for using \cite{CGSV21-boolean}'s dichotomy since it is at least decidable (see the beginning of \cref{sec:cgsv-discussion} above).
    \item \cite{CGS+22} shows that for every predicate $f : \BZ_2^k \to \{0,1\}$ and $\epsilon > 0$, $(2\rho(f)+\epsilon)$-approximating $\mbf$ with a streaming algorithm requires $\Omega(n)$ space (\cref{cor:cgsvv-2rho}). How tight is this upper bound on the approximation ratio?
    \item Does the streaming lower bound in \cite{CGSV21-boolean} based on padded one-wise pairs (\cref{thm:cgsv-streaming-lb}) suffice to resolve the streaming approximability of $\mbf$ for every predicate $f : \BZ_2^k \to \{0,1\}$?
    \item For every predicate $f : \BZ_2^k \to \{0,1\}$, \cite{CGSV21-boolean} gives an optimal sketching $(\alpha(f)-\epsilon)$-approximation algorithm for $\mbf$ in \cite{CGSV21-boolean}, but this algorithm runs a ``grid'' of $O(1/\epsilon^2)$ distinguishers for $\bgd\mbf$ distinguishing problems in parallel (see \cref{cor:cgsv-bool-approx} and its proof sketch). Can we obtain simpler optimal sketching approximations for interesting predicates?
\end{enumerate}

Specifically, we investigate the sketching (and streaming) approximability of $\mbf$ when $f:\BZ_2^k \to \{0,1\}$ is a \emph{symmetric} predicate; that is, $f$ depends only on the Hamming weight (a.k.a. number of $1$'s) $\|\vecb\|_0$ of its input $\vecb \in \BZ_2^k$. For a set $S \subseteq [k]$, let $f_{S,k} : \BZ_2^k \to \{0,1\}$ denote the symmetric predicate defined by $f_{S,k}(\vecb) = \1_{\|\vecb\|_0 \in S}$. Some well-studied examples of predicates in this class include $\kand = f_{\{k\},k}$ and the \emph{threshold functions} $\Th^t_k = f_{\{t,t+1,\ldots,k\},k}$. As we'll see in \cref{sec:sym-setup} below, we consider symmetric predicates because for such predicates, the \cite{CGSV21-boolean} results (specifically \cref{cor:cgsv-bool-approx,thm:cgsv-streaming-lb}) take on significantly simpler forms.

We use computer assistance for algebraic manipulations in several of the proofs in this chapter; our code is available on the Mathematica Notebook Archive at \url{https://notebookarchive.org/2022-03-a5vpzhg}.

\section{Results}

We begin by presenting the major results of our work \cite{BHP+22}.

\subsection{The sketching approximability of ${\m}\kand$}\label{sec:kand-overview}

Recall from \cref{ex:cgsv-2and} that the $\sqrt n$-space sketching approximability of $\mtwoand$ is $\alpha(\twoand) = \frac49$, and $(\frac49+\epsilon)$-approximations can be ruled out even for $\sqrt n$-space \emph{streaming} algorithms using the padded one-wise pair criterion (\cref{thm:cgsv-streaming-lb}). We build on this result by obtaining closed-form expressions for the $\sqrt n$-space sketching approximation ratio $\alpha(\kand)$ for every $k$. For odd $k \geq 3$, define the constant
\begin{equation}\label{eqn:alpha'_k}
    \alpha'_k \eqdef \left(\frac{(k-1)(k+1)}{4k^2}\right)^{(k-1)/2} = 2^{-(k-1)} \cdot \left(1-\frac1{k^2}\right)^{(k-1)/2}.
\end{equation}

Then in \cref{sec:kand-analysis}, we prove the following:

\begin{theorem}\label{thm:kand-approximability}
For odd $k \geq 3$, $\alpha(\kand) = \alpha'_k$, and for even $k \geq 2$, $\alpha(\kand) = 2\alpha'_{k+1}$.
\end{theorem}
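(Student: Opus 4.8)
The plan is to unwind the formula for $\alpha(\kand)$ from \cref{cor:cgsv-bool-approx} into a one-parameter optimization and solve it in closed form. Since $\kand$ is symmetric, I would first move to the reformulation over symmetric template distributions from \cref{sec:sym-setup}: the quantities $\beta_{\kand}(\CD),\gamma_{\kand}(\CD),\vecmu(\CD)$ depend only on the Hamming-weight distribution $\nu=(\nu_0,\dots,\nu_k)$ of $\CD$, where $\nu_j=\Pr_{\vecb\sim\CD}[\|\vecb\|_0=j]$. Unwinding the definitions through the canonical instance $\Psi^{\CD}$ gives $\lambda_{\kand}(\CD,p)=\sum_{j=0}^k\nu_j\,p^j(1-p)^{k-j}$, hence $\gamma_{\kand}(\CD)=\nu_k=\CD(\vecone)$ and $\beta_{\kand}(\CD)=\sup_{p\in[0,1]}\sum_j\nu_j p^j(1-p)^{k-j}$, while $\vecmu(\CD_N)=\vecmu(\CD_Y)$ becomes equality of mean weights $\sum_j j\nu^N_j=\sum_j j\nu^Y_j=:w$. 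For a fixed common mean $w$, the choice of $\CD_Y$ maximizing $\gamma_{\kand}(\CD_Y)=\nu^Y_k$ is the two-point distribution with $\CD_Y(\vecone)=w/k$, $\CD_Y(\veczero)=1-w/k$ (using $\nu_k\le\tfrac1k\sum_j j\nu_j$), with value $w/k$. Thus
\[ \alpha(\kand)=\inf_{w\in(0,k]}\ \frac{k}{w}\,\beta^\star(w),\qquad \beta^\star(w):=\min_{\nu:\ \sum_j j\nu_j=w}\ \sup_{p\in[0,1]}\ \sum_{j=0}^k\nu_j\,p^j(1-p)^{k-j}. \]

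The crux is identifying $\beta^\star(w)$. For each fixed $p$, $j\mapsto p^j(1-p)^{k-j}$ is convex on $\{0,1,\dots,k\}$ (it is $c\cdot r^j$ away from the boundary cases $p\in\{0,1\}$, which are checked directly), and this convexity passes to $p$-mixtures; applying von Neumann's minimax theorem to the bilinear game between $\nu$ (on $\{0,\dots,k\}$ of mean $w$) and a distribution over $p$, the inner minimum over $\nu$ is attained at the two-point distribution on the consecutive weights $\lfloor w\rfloor,\lceil w\rceil$, so
\[ \beta^\star(w)=\sup_{p\in[0,1]}\Big[(1-t)\,p^{\,j}(1-p)^{k-j}+t\,p^{\,j+1}(1-p)^{k-j-1}\Big],\qquad j=\lfloor w\rfloor,\ t=w-j. \]
In particular the extremal ``no'' distribution $\CD_N$ always lives on at most two adjacent Hamming weights. (Only the easy direction $\min\sup\ge\sup\min$, plus the convexity fact, is needed for the lower bound.)

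For the lower bound I would substitute one well-chosen $p_0$ into the supremum: $\tfrac kw\beta^\star(w)\ge L_j(t):=\tfrac{k}{j+t}\big[(1-t)p_0^{\,j}(1-p_0)^{k-j}+t\,p_0^{\,j+1}(1-p_0)^{k-j-1}\big]$. Each $L_j$ is monotone in $t\in[0,1]$ (it has the form $b+c/(j+t)$), with $L_j(0)=M_j$ and $L_j(1)=M_{j+1}$ where $M_j:=\tfrac kj\,p_0^{\,j}(1-p_0)^{k-j}$, so $\inf_w\tfrac kw\beta^\star(w)\ge\min_{1\le j\le k}M_j$. Taking $p_0=\tfrac{k+1}{2k}$ for odd $k$ gives $M_{j+1}/M_j=\tfrac{j}{j+1}\cdot\tfrac{k+1}{k-1}$, which is $<1$ for $j<\tfrac{k-1}{2}$, $=1$ at $j=\tfrac{k-1}{2}$, and $>1$ after, so $\min_j M_j=M_{(k+1)/2}=\alpha'_k$; taking $p_0=\tfrac{k+2}{2(k+1)}$ for even $k$ gives $M_{j+1}/M_j=\tfrac{j}{j+1}\cdot\tfrac{k+2}{k}$, minimized at $j=\tfrac k2$ (with $M_{k/2}=M_{k/2+1}$), so $\min_j M_j=2\alpha'_{k+1}$. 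For the matching upper bound I would exhibit the optimal pair directly: for odd $k$, $\CD_N$ uniform over weight-$\tfrac{k+1}{2}$ strings and $\CD_Y$ two-point with $\CD_Y(\vecone)=\tfrac{k+1}{2k}$ (both of mean weight $\tfrac{k+1}{2}$), for which $\beta_{\kand}(\CD_N)=(\tfrac{k+1}{2k})^{(k+1)/2}(\tfrac{k-1}{2k})^{(k-1)/2}$ and the ratio is $\alpha'_k$; for even $k$, $\CD_N$ supported on weights $\tfrac k2$ and $\tfrac k2+1$ with $\Pr[\text{weight }\tfrac k2+1]=\tfrac{k^2}{2(k^2+2k+2)}$ (the value making the supremum defining $\beta^\star$ occur at $p=\tfrac{k+2}{2(k+1)}$) and $\CD_Y$ two-point of the same mean, whence the ratio is $2\alpha'_{k+1}$.

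I expect the main work to be twofold. The first is locating the correct $p_0$ — equivalently, the saddle point of the combined min/max over $(\nu,p,w)$ — on which the closed form depends; the appearance of $k+1$ in the even case is precisely the symptom that for even $k$ the optimal common mean $w^\star$ is \emph{not} an integer, so the extremal $\CD_N$ genuinely spreads over two weights and $(w^\star,p^\star)$ must be pinned down by solving the stationarity equations (which happen to be rational for $\kand$). The second is care in the minimax/convexity step — the $p\in\{0,1\}$ boundary cases, and the fact that $\beta^\star$ need not be differentiable at integer $w$ — together with the even-$k$ upper-bound computation, which is the one genuinely tedious algebraic verification and is best checked with the computer assistance already employed elsewhere in the chapter.
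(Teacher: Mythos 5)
Your proposal is correct and follows essentially the same route as the paper: it reformulates $\alpha(\kand)$ via the symmetric template distributions, pins down the same saddle point $(\CD_N^*,p^*)$ (weight-$\frac{k+1}{2}$ point mass with $p^*=\frac{k+1}{2k}$ for odd $k$; the two-point distribution on weights $\frac k2,\frac k2+1$ with the mass $\frac{k^2}{2(k^2+2k+2)}$ on the upper weight and $p^*=\frac{k+2}{2(k+1)}$ for even $k$), and proves the lower bound by evaluating $\lambda$ at that fixed $p^*$ and showing the resulting ratio is minimized at $\alpha'_k$ (resp.\ $2\alpha'_{k+1}$). The one organizational difference is in how you reduce the lower bound to finitely many cases: you observe that $j\mapsto p^j(1-p)^{k-j}$ is convex, which forces the inner minimizer over $\nu$ of fixed mean onto two adjacent weights, and then note monotonicity of $L_j$ to land on the integer-weight endpoints $M_j$; the paper instead invokes a short ratio-of-linear-functions inequality (\cref{prop:lin-opt}) that reduces directly to the one-point template distributions $\CD_i$, yielding the same quantities $M_j=\frac ki(1-p^*)^{k-i}(p^*)^i$. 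Your convexity/two-point framing also makes explicit the structural fact (noted only informally in \cref{sec:max-min}) that the extremal $\CD_N$ lives on at most two adjacent Hamming weights, and cleanly explains why the even-$k$ case forces a non-integer mean and a genuinely two-point $\CD_N$. One small caution: you do not actually need the full von Neumann minimax theorem anywhere — the lower bound uses only the easy $\min\sup\ge\sup\min$ direction plus convexity (as you say), and the upper bound is a direct evaluation on the exhibited pair; invoking minimax to ``identify $\beta^\star(w)$ exactly'' is a nice piece of intuition but should not be relied on as a load-bearing step without verifying the hypotheses (compactness/convexity of the mean-$w$ slice and of the space of mixed $p$-strategies), and the proof is cleaner without it.
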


For instance, $\alpha(\threeand) = \alpha'_3 = \frac29$. Since $\rho(\kand) = 2^{-k}$, \cref{thm:kand-approximability} also has the following important corollary:

\begin{corollary}\label{cor:kand-asympt}
$\lim_{k \to \infty} \frac{\alpha(\kand)}{2\rho(\kand)} = 1$.
\end{corollary}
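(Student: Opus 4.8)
The plan is to substitute the closed-form expressions from \cref{thm:kand-approximability} together with the identity $\rho(\kand) = 2^{-k}$ (recorded in \cref{sec:csps}) directly into the ratio $\alpha(\kand)/(2\rho(\kand))$, and then finish with an elementary squeeze argument. The key observation that makes this clean is that $2\rho(\kand) = 2^{-(k-1)}$, which is exactly the leading factor appearing in the definition \cref{eqn:alpha'_k} of $\alpha'_k$; so the powers of $2$ cancel and only the ``correction factor'' $(1 - 1/k^2)^{(k-1)/2}$ survives.

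First I would treat odd $k \geq 3$. Here $\alpha(\kand) = \alpha'_k = 2^{-(k-1)}\left(1 - \tfrac1{k^2}\right)^{(k-1)/2}$, so
\[
    \frac{\alpha(\kand)}{2\rho(\kand)} = \left(1 - \frac1{k^2}\right)^{(k-1)/2}.
\]
For even $k \geq 2$, the relevant value is governed by the odd index $k+1$: $\alpha(\kand) = 2\alpha'_{k+1} = 2 \cdot 2^{-k}\left(1 - \tfrac1{(k+1)^2}\right)^{k/2} = 2^{-(k-1)}\left(1 - \tfrac1{(k+1)^2}\right)^{k/2}$, so
\[
    \frac{\alpha(\kand)}{2\rho(\kand)} = \left(1 - \frac1{(k+1)^2}\right)^{k/2}.
\]

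In both parity classes the quantity is at most $1$, while Bernoulli's inequality $(1-x)^n \geq 1 - nx$ (valid for $x \in [0,1]$ and $n \geq 0$) gives a matching lower bound tending to $1$: in the odd case it is at least $1 - \tfrac{k-1}{2k^2} \geq 1 - \tfrac1{2k}$, and in the even case at least $1 - \tfrac{k}{2(k+1)^2} \geq 1 - \tfrac1{2k}$. Sending $k \to \infty$ along each parity class and combining, the squeeze theorem yields $\lim_{k\to\infty}\alpha(\kand)/(2\rho(\kand)) = 1$. I do not expect any genuine obstacle here beyond \cref{thm:kand-approximability} itself, which is taken as given; the only points requiring care are tracking the parities correctly (the even-$k$ value uses $\alpha'_{k+1}$, not $\alpha'_k$) and verifying that the extra factor of $2$ in the even case cancels precisely one power of $2$ so that the leading term is again $2^{-(k-1)}$.
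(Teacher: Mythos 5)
Your proof is correct, and since the paper states this corollary without giving its own proof, the direct substitute-and-squeeze argument you give is exactly the intended one. Both parity cases are handled correctly (in particular, you correctly note that the even case yields $2\alpha'_{k+1} = 2^{-(k-1)}\bigl(1 - 1/(k+1)^2\bigr)^{k/2}$, so the $2^{-(k-1)}$ cancels cleanly against $2\rho(\kand)$), and the Bernoulli bound $(1-x)^n \geq 1-nx$ gives a lower bound tending to $1$ in both cases, so the squeeze goes through.
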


Recall that Chou, Golovnev, Sudan, Velusamy, and Velingker~\cite{CGS+22} show that any predicate $f$ cannot be $(2\rho(f)+\epsilon)$-approximated even by linear-space streaming algorithms (see \cref{cor:cgsvv-2rho}). On the other hand, in \cref{sec:thresh-alg-overview} below, we describe simple $O(\log n)$-space sketching algorithms for $\mkand$ achieving the optimal ratio from \cite{CGSV21-boolean}. Thus, as $k \to \infty$, these algorithms achieve an asymptotically optimal approximation ratio even among linear-space streaming algorithms!

\subsection{The sketching approximability of other symmetric predicates}

We also analyze the sketching approximability of a number of other symmetric Boolean predicates. For instance, we show that:

\begin{theorem}\label{thm:k-1-k-approximability}
For even $k \geq 2$, $\alpha(\Th^{k-1}_k) = \frac{k}2\alpha'_{k-1}$.
\end{theorem}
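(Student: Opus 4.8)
The plan is to evaluate the infimum in \cref{cor:cgsv-bool-approx} defining $\alpha(\Th^{k-1}_k)$ explicitly, for $k\ge 4$ even, using the simplifications available for symmetric predicates (developed in \cref{sec:sym-setup}). Recall $\Th^{k-1}_k$ is satisfied exactly by inputs of Hamming weight $k-1$ or $k$. First, since $\Th^{k-1}_k$ is invariant under permuting coordinates, replacing any pair $\CD_N,\CD_Y\in\Delta(\BZ_2^k)$ by their symmetrizations (averages of $\pi\CD$ over coordinate permutations $\pi$) preserves $\beta_{\Th^{k-1}_k}$ and $\gamma_{\Th^{k-1}_k}$ and turns the ``matching marginals'' condition into equality of a single scalar $\mu\in[-1,1]$; so it suffices to optimize over symmetric templates, which are determined by their weight profiles $(\CD\langle 0\rangle,\dots,\CD\langle k\rangle)$ ($\CD\langle t\rangle$ the mass on Hamming weight $t$). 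For such templates $\mu(\CD)=\tfrac2k\sum_t t\,\CD\langle t\rangle-1$, while $\gamma_{\Th^{k-1}_k}(\CD)=\CD\langle k-1\rangle+\CD\langle k\rangle$ and $\lambda_{\Th^{k-1}_k}(\CD,p)$ is the $\CD$-average of the ``bit-flip survival'' probabilities $g_t(p)$ that a weight-$t$ string still satisfies $\Th^{k-1}_k$ after each coordinate is flipped independently with probability $1-p$.

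Next I would reduce to low-support templates. For fixed $\mu$, maximizing $\gamma_{\Th^{k-1}_k}(\CD_Y)$ over weight profiles is a linear program with two equality constraints, so the maximizer is supported on at most two weights; a short case analysis identifies it (support $\{k-1,k\}$, giving $\gamma=1$, when $\mu\ge\tfrac{k-2}{k}$, and support $\{0,k-1\}$, giving $\gamma=\tfrac{k(\mu+1)}{2(k-1)}$, when $\mu<\tfrac{k-2}{k}$). For $\CD_N$ the quantity to control is $\beta_{\Th^{k-1}_k}(\CD_N)=\sup_{p\in[0,1]}\lambda_{\Th^{k-1}_k}(\CD_N,p)$; using that $\lambda_{\Th^{k-1}_k}(\cdot,p)$ is linear in the weight profile while $\lambda_{\Th^{k-1}_k}(\CD,\tfrac12)=\rho(\Th^{k-1}_k)$ for every $\CD$, a minimax/compactness argument — as in the $\kand$ analysis behind \cref{thm:kand-approximability} — lets one exchange $\inf_{\CD_N}\sup_p$ and conclude that the $\beta$-minimizing $\CD_N$ of marginal $\mu$ is also supported on at most two (adjacent) weights. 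This makes both $\gamma_{\Th^{k-1}_k}$ and $\beta_{\Th^{k-1}_k}$ explicit, piecewise-analytic functions of $\mu$ alone.

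The remaining task is the one-variable problem of minimizing $\beta_{\Th^{k-1}_k}(\mu)/\gamma_{\Th^{k-1}_k}(\mu)$ over $\mu$. I expect the optimizing configuration to be a two-weight $\CD_Y$ and a two-weight $\CD_N$ of the form $\{s,k\}$ at a specific intermediate $\mu$, at which $\sup_p\lambda_{\Th^{k-1}_k}(\CD_N,\cdot)$ is attained at the nondegenerate critical point of a degree-$k$ polynomial in $p$; the same calculus that produces $\alpha'_k$ in \eqref{eqn:alpha'_k} (locating that critical point and simplifying the resulting radical) should collapse the answer to $\tfrac{k}{2}\alpha'_{k-1}$. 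The factor $\tfrac k2$ is heuristically explained by the identity $\Th^{k-1}_k(\vecb)=\bigvee_{i\in[k]}\bigl(\bigwedge_{j\ne i}b_j\bigr)$, which displays $\Th^{k-1}_k$ as a symmetric ``OR'' of $k$ copies of $(k{-}1)\textsf{AND}$; one appealing route is to make this precise so as to import the computation already done for $\alpha((k{-}1)\textsf{AND})=\alpha'_{k-1}$ (valid since $k-1$ is odd) rather than redo it. For the matching lower bound $\alpha(\Th^{k-1}_k)\ge\tfrac k2\alpha'_{k-1}$ one then verifies from the explicit formulas that $\beta_{\Th^{k-1}_k}(\mu)\ge\tfrac k2\alpha'_{k-1}\,\gamma_{\Th^{k-1}_k}(\mu)$ for all admissible $\mu$.

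The main obstacle is the analysis of $\beta_{\Th^{k-1}_k}=\sup_p\lambda_{\Th^{k-1}_k}$: in contrast to $\gamma$, this inner supremum has no closed form for a general profile, $\beta$ is not convex in $\mu$ in a way that trivially localizes the minimum of the ratio, and establishing the ``two adjacent weights'' structure of the optimal $\CD_N$ together with the exact maximizing $p$ will require careful, likely computer-assisted, algebra (as the paper does elsewhere). Secondary nuisances are the boundary behavior — deciding which of the two $\gamma$-regimes contains the optimum and ruling out nearby profiles — and the degenerate small cases (e.g.\ $k=2$, where $\alpha'_1$ is not defined and $\Th^1_2$ is in fact streaming approximation-resistant with $\alpha(\Th^1_2)=\rho(\Th^1_2)=\tfrac34$, so the statement is meant for $k\ge4$).
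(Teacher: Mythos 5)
Your high-level plan — reduce to symmetric distributions, compute explicit formulas for $\gamma_{S,k}$ and $\lambda_S$, and compare a guessed $\CD_N$ against a low-support $\CD_Y$ — is indeed the skeleton of the paper's argument. But the key step you lean on does not hold, and the paper itself warns against it.

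You propose to justify the ``two-adjacent-weights'' structure of the $\beta$-minimizing $\CD_N$ by a minimax/compactness exchange of $\inf_{\CD_N}\sup_p$. The paper explicitly notes (footnote in \cref{sec:max-min}) that von Neumann's minimax theorem does \emph{not} apply here (the objective is not concave–convex even for $\threeand$), and that no generalization they are aware of does. More damningly, the paper records a concrete counterexample to the structural claim you want: for $f_{\{3\},4}$ the infimum is \emph{not} attained by any $\CD_N$ supported on two weights. So the general LP/minimax reduction cannot be the route. What the paper does instead is the ``max-min method'': it \emph{guesses} a specific $(\CD_N^*,p^*)$ (here $\CD_N^*\langle k/2\rangle = \frac{(k/2)^2}{(k/2)^2+(k/2-1)^2}$, $\CD_N^*\langle k/2+1\rangle = \frac{(k/2-1)^2}{(k/2)^2+(k/2-1)^2}$, $p^*=\tfrac12+\tfrac1{2(k-1)}$), proves the upper bound by showing $p^*$ maximizes $\lambda_{\{k-1,k\}}(\CD_N^*,\cdot)$ (\cref{lemma:k-1-k-ub}, with a Descartes-style root argument for the residual cubic), and proves the matching lower bound by substituting $p^*$ into the max-min inequality \eqref{eqn:max-min}, linearizing $\gamma_{\{k-1,k\},k}$, and reducing via \cref{prop:lin-opt} to $k+1$ scalar inequalities checked by induction (\cref{lemma:k-1-k-lb}). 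The saddle-point property is \emph{verified post hoc} for this specific pair, not derived from a general theorem. Your guessed support $\{s,k\}$ for $\CD_N^*$ is also off — the paper's optimum sits in the ``middle'' at $\{k/2,k/2+1\}$ (e.g.\ $(0,0,\tfrac45,\tfrac15,0)$ for $\Th^3_4$), while $\CD_Y$ sits at $\{0,k-1\}$ as you say.

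Your ``OR of $(k{-}1)$-ANDs'' reduction is an appealing heuristic for the factor $k/2$ but is not made precise in the paper, and it is not clear the sketching approximability threshold $\alpha$ composes under such a disjunction in any simple way; the paper derives the formula directly. Your remark about the $k=2$ degeneracy is correct and worth keeping: $\Th^1_2$ supports one-wise independence so $\alpha(\Th^1_2)=\rho(\Th^1_2)=\tfrac34$, while $\alpha'_1$ is not even defined by \eqref{eqn:alpha'_k}; the theorem's content is really for even $k\geq4$ (which is also consistent with the restriction at the end of \cref{sec:sym-setup} to $S$ strictly above $k/2$).
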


We prove \cref{thm:k-1-k-approximability} in \cref{sec:k-1-k-analysis} using techniques similar to our proof of \cref{thm:kand-approximability}. We also provide partial results for $f_{\{(k+1)/2\},k}$, including closed forms for small $k$ and an asymptotic analysis of $\alpha(f_{\{(k+1)/2\},k})$:

\begin{theorem}[Informal version of \cref{thm:k+1/2-approximability}]\label{thm:k+1/2-approximability-informal}
For odd $k \in \{3,\ldots,51\}$, there is an explicit expression for $\alpha(f_{\{(k+1)/2\},k})$ as a function of $k$.
\end{theorem}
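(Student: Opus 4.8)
The plan is to instantiate the framework of \cite{CGSV21-boolean} for the specific symmetric predicate $f = f_{\{(k+1)/2\},k}$, and then exploit the symmetry to reduce the computation of $\alpha(f)$ to a one-dimensional optimization problem amenable to explicit (computer-assisted) solution. Since $f$ is symmetric, by the discussion in \cref{sec:sym-setup} it suffices to work with \emph{symmetric} template distributions $\CD \in \Delta(\BZ_2^k)$, i.e., distributions in which the probability mass depends only on Hamming weight, and to replace the marginal vector $\vecmu(\CD) \in [-1,1]^k$ by a single scalar marginal. Concretely, a symmetric $\CD$ is specified by its weight-distribution $(\CD\langle 0\rangle,\ldots,\CD\langle k\rangle) \in \Delta(\{0,\ldots,k\})$, its scalar marginal is $\mu(\CD) = \frac{1}{k}\sum_{j=0}^k (2j-k)\, \CD\langle j\rangle$ (up to sign conventions), and $\lambda_f(\CD,p)$ becomes a univariate polynomial in $p$ of degree $k$ whose coefficients are linear in the $\CD\langle j\rangle$'s, namely $\lambda_f(\CD,p) = \sum_{j=0}^k \CD\langle j\rangle \cdot \Pr_{\veca \sim \Bern_p^k}[\,\|\vecb+\veca\|_0 = (k+1)/2 \mid \|\vecb\|_0 = j\,]$ where one writes out the binomial transition probabilities. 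Then $\gamma_f(\CD) = \lambda_f(\CD,1) = \CD\langle (k+1)/2\rangle$, and $\beta_f(\CD) = \sup_{p\in[0,1]} \lambda_f(\CD,p)$.

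Next I would analyze the optimization $\alpha(f) = \inf\{\beta_f(\CD_N)/\gamma_f(\CD_Y) : \mu(\CD_N) = \mu(\CD_Y)\}$. The standard simplification (mirroring \cref{ex:cgsv-2and} and the $\mtwoand$ analysis in \cref{sec:mdcut-template-alg}) is: to maximize $\gamma_f(\CD_Y) = \CD_Y\langle (k+1)/2\rangle$ subject to a given marginal value, take $\CD_Y$ supported on as few weights as possible consistent with that marginal, so $\gamma_f(\CD_Y)$ is an explicit (piecewise-linear or rational) function of the target marginal $\mu$; symmetrically, to minimize $\beta_f(\CD_N)$ subject to the same marginal, one shows the optimal $\CD_N$ is also supported on a bounded number of weights (a ``chain''-type support, as in the polarization argument of \cref{item:cgsv-polar-path}), reducing $\beta_f(\CD_N)$ to optimizing a low-degree polynomial $\lambda_f(\CD_N,\cdot)$ over $p\in[0,1]$ and over the few free mass parameters. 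After these reductions, $\alpha(f)$ is the infimum over the single scalar $\mu$ (and a bounded number of auxiliary parameters) of an explicit algebraic expression; one then performs the calculus — differentiate, find critical points, compare boundary values — to get a closed form. For each fixed odd $k \in \{3,\ldots,51\}$ the relevant polynomial has bounded degree, so Mathematica can carry out the critical-point computation exactly (via \texttt{Solve}/\texttt{Resolve} over the reals), which is precisely why the theorem is stated only for this finite range rather than all $k$: the critical points are roots of polynomials whose degree grows with $k$ and which need not have radical expressions, so there is no uniform closed form.

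The main obstacle, I expect, is establishing rigorously that the optimal $\CD_N$ and $\CD_Y$ have small support — i.e., reducing an a priori $(k+1)$-dimensional optimization to one with $O(1)$ free parameters. For $\CD_Y$ this is easy since $\gamma_f$ is linear in the mass vector and only cares about one coordinate. For $\CD_N$ it is genuinely delicate: one must argue that for the purpose of \emph{minimizing} $\sup_p \lambda_f(\CD_N,p)$ at fixed marginal, mass can be ``polarized'' onto a chain without increasing $\beta_f$ — this is the content-analogue of \cref{item:cgsv-polar-hardness,item:cgsv-polar-path} specialized to the symmetric setting, and it requires a convexity or exchange argument on the function $p \mapsto \lambda_f(\CD_N,p)$. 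Once the support structure is pinned down, the rest is a finite family of one-variable calculus problems that the Mathematica notebook dispatches; the write-up would present the reduction lemmas in general (valid for all odd $k$), then tabulate the resulting closed forms for $k \le 51$, flagging exactly which algebraic step fails to yield a uniform formula for larger $k$.
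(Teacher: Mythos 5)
Your setup is right: reducing to symmetric distributions, parameterizing by the Hamming-weight profile $(\CD\langle 0\rangle,\ldots,\CD\langle k\rangle)$, observing that $\gamma_{S,k}(\mu)$ is explicit and piecewise-linear, and reducing $\alpha(f)$ to the optimization in \cref{eqn:alpha-optimize-over-dn}. But the centerpiece of your plan — a general reduction lemma asserting that the minimizer $\CD_N^*$ of $\beta_{S}(\CD_N)/\gamma_{S,k}(\mu(\CD_N))$ has chain support, to be proved by a polarization/exchange argument — is a gap. Polarization (\cref{item:cgsv-polar-hardness,item:cgsv-polar-path}) preserves marginals and is used in \cite{CGSV21-boolean} to argue \emph{hardness} of the communication game, not to argue that polarizing decreases $\beta_f$; there is no reason $\sup_p \lambda_S(\CD_N,p)$ is monotone non-increasing under polarization, and no such lemma appears in the paper. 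Indeed, \cref{sec:other-analysis} explicitly records that for $f_{\{3\},4}$ and $f_{\{3,5\},5}$ the support-size-$2$ ansatz provably does not yield the true optimum, so a general small-support reduction is false.

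What the paper actually does is the guess-and-verify ``max-min method'' of \cref{sec:max-min}: it conjectures a specific $\CD_{0,k}$ with $\CD_{0,k}\langle 0\rangle = \frac{k-1}{2k}$, $\CD_{0,k}\langle k\rangle = \frac{k+1}{2k}$ (so $\mu(\CD_{0,k}) = \frac1k$, hence $\gamma_{S,k}(\mu(\CD_{0,k}))=1$), proves the upper bound by maximizing the univariate polynomial $\lambda_{\{(k+1)/2\}}(\CD_{0,k},\cdot)$ over $p$ to obtain the critical point $p'_k$ (\cref{lemma:k+1/2-ub}), and then proves the matching lower bound not by a support reduction on $\CD_N$ but by a linear-fractional optimization argument: split $\Delta_k$ into $\{\mu\leq\frac1k\}$ and $\{\mu\geq\frac1k\}$ (on each piece $\gamma_{S,k}(\mu(\cdot))$ is linear), identify the extreme points (the point masses $\CD_i$ together with the two-point distributions $\CD_{i,j}$ with $\mu=\frac1k$), and verify the required inequalities at these finitely many extreme points via Mathematica, for each odd $k\leq 51$ (\cref{lemma:k+1/2-lb}). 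Notably, \cref{lemma:k+1/2-lb} is even stronger than the max-min method requires — it holds for \emph{every} $p\in[0,1]$, not just $p=p'_k$ — which is essential here precisely because $p'_k$ has no clean algebraic form. Your plan gets the flavor of ``finite Mathematica verification per $k$'' right, but the supporting reduction you propose to make it rigorous is the wrong one; the correct supporting tool is \cref{prop:lin-opt} plus an explicit extreme-point enumeration, applied after guessing $\CD_{0,k}$.
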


\begin{theorem}\label{thm:k+1/2-asymptotic-lb}
$\lim_{\text{odd } k \to \infty} \frac{\alpha\left(f_{\{(k+1)/2\},k}\right)}{\rho\left(f_{\{(k+1)/2\},k}\right)}=1$.
\end{theorem}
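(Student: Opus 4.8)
\textbf{Proof proposal for \cref{thm:k+1/2-asymptotic-lb}.} The plan is to exhibit, for each odd $k$, a padded one-wise pair $\CD_N, \CD_Y \in \Delta(\BZ_2^k)$ and estimate the ratio $\beta_{f}(\CD_N)/\gamma_f(\CD_Y)$, where $f = f_{\{(k+1)/2\},k}$; by \cref{cor:cgsv-bool-approx} this ratio upper-bounds $\alpha(f)$, and since $\alpha(f) \geq \rho(f)$ always, it suffices to show the ratio is $(1+o(1))\rho(f)$ as $k \to \infty$. Since $f$ is symmetric, following the remarks in \cref{sec:sym-setup} (referenced from \cref{sec:mdcut-template-alg}) we may restrict to \emph{symmetric} template distributions $\CD$ — those invariant under permuting coordinates — which are determined by the weights $\CD\langle j\rangle$ on each Hamming weight $j \in \{0,\ldots,k\}$, and whose marginal vector is the scalar $\sum_j (2j/k - 1)\CD\langle j\rangle$. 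A symmetric $\CD$ is one-wise independent exactly when this scalar vanishes, i.e. when $\sum_j j\, \CD\langle j\rangle = k/2$; the uniform distribution $\Unif_{\BZ_2^k}$, which puts weight $\binom{k}{j}2^{-k}$ on weight $j$, is the canonical such example.

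The natural first choice is the ``trivial'' padded one-wise pair with $\eta = 0$: take $\CD_N = \Unif_{\BZ_2^k}$, so $\beta_f(\CD_N) = \rho(f)$ by the computation in \cref{ex:one-wise-indep}, and take $\CD_Y$ to be any one-wise independent distribution maximizing $\gamma_f(\CD_Y) = \CD_Y\langle (k+1)/2\rangle$ subject to the marginal constraint $\sum_j j\,\CD_Y\langle j\rangle = k/2$. First I would observe that $f_{\{(k+1)/2\},k}$ is supported \emph{only} at Hamming weight $(k+1)/2$, which is not $k/2$, so a distribution concentrated entirely there is not one-wise independent; but we can put weight close to $1$ at $(k+1)/2$ and a small correcting amount at weight $(k-1)/2$ (both are ``near the middle''), achieving $\gamma_f(\CD_Y)$ arbitrarily close to $\frac12$ while keeping the mean at $k/2$. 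Concretely, putting weight $\tfrac12$ at $(k+1)/2$ and $\tfrac12$ at $(k-1)/2$ gives mean exactly $k/2$ and $\gamma_f(\CD_Y) = \tfrac12$. Then the ratio is $\rho(f)/\tfrac12 = 2\rho(f)$, which is \emph{not} good enough — it only recovers the generic bound of \cref{cor:cgsvv-2rho}. So the trivial pair fails, and the real work is to use a genuinely padded pair with $\eta > 0$: we want $\CD_N = \eta\CD_0 + (1-\eta)\CD_N'$ and $\CD_Y = \eta\CD_0 + (1-\eta)\CD_Y'$ with $\CD_N', \CD_Y'$ one-wise independent, chosen so that the shared ``padding'' $\CD_0$ — which should be supported away from weight $(k+1)/2$ — drags $\beta_f(\CD_N)$ down toward $\rho(f)$ faster than it drags $\gamma_f(\CD_Y)$ down from its near-$\frac12$ value, exploiting that $\rho(f) = \binom{k}{(k+1)/2}2^{-k} = \Theta(k^{-1/2})$ is small.

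The key steps, in order, would be: (i) set up the symmetric reformulation and the scalar marginal constraint; (ii) choose $\CD_0$ to be concentrated at a Hamming weight far from $(k+1)/2$ — e.g. at weight $0$ or at a weight chosen so the mixture mean works out — and choose $\CD_N'$ so that $\beta_f(\CD_N)$, i.e. $\sup_{p}\lambda_f(\CD_N,p)$, is dominated by the $\rho(f)$-scale contribution of $\CD_N'$ plus a small contribution from $\CD_0$ (here I would use $\lambda_f(\CD,p) = \sum_j \CD\langle j\rangle \sum_{i} \binom{k}{i}\text{-type}$ weights evaluated at the symmetric predicate, and bound the supremum over $p$ crudely); (iii) choose $\CD_Y' = \Unif_{\BZ_2^k}$ (which is one-wise independent) so $\CD_Y = \eta\CD_0 + (1-\eta)\Unif$, and compute $\gamma_f(\CD_Y) = \CD_Y\langle(k+1)/2\rangle = \eta\CD_0\langle(k+1)/2\rangle + (1-\eta)\rho(f)$ — wait, this is too small; so instead I would take $\CD_Y'$ supported near the middle as above while $\CD_Y$ inherits the shared $\CD_0$, and recompute; (iv) optimize $\eta$ as $k \to \infty$, showing $\eta \to 1$ at a controlled rate so that $\beta_f(\CD_N)/\gamma_f(\CD_Y) \to \rho(f)/(\text{const})$ with the constant itself tending to $1$. \textbf{The main obstacle} is step (ii)–(iii): making the \emph{same} padding distribution $\CD_0$ simultaneously collapse $\beta_f(\CD_N)$ to scale $\rho(f)$ and keep $\gamma_f(\CD_Y)$ at scale $\rho(f)\cdot(1+o(1))$ — the padding is forced to be identical in both, and since $\gamma_f$ only counts mass at the single weight $(k+1)/2$, any padding mass placed elsewhere is ``wasted'' for the $\yes$ side, so the budget $\eta$ must be spent very carefully. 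I expect this to require an explicit near-optimal choice (perhaps $\CD_0 = \Unif_{\{\vecb : \|\vecb\|_0 = (k-1)/2\}}$ or a two-point distribution) together with Stirling-type estimates comparing $\binom{k}{(k\pm1)/2}$ to $2^k$, and possibly a small search over which Hamming weights to use — mirroring how \cref{thm:k+1/2-approximability-informal} is proved only for $k \leq 51$ by explicit computation, whereas here we only need the asymptotic limit and can afford cruder bounds. If the padded-pair construction is too delicate to push through by hand, an alternative is to directly analyze the optimization defining $\alpha(f)$ in \cref{cor:cgsv-bool-approx} — minimizing $\beta_f(\CD_N)/\gamma_f(\CD_Y)$ over all symmetric pairs with matching scalar marginal — and show the infimum is $(1+o(1))\rho(f)$ via a duality or hyperplane argument, but I would try the explicit construction first.
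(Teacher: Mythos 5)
Your overall plan (produce an explicit upper bound of the form $(1+o(1))\rho(f)$ and combine it with the trivial lower bound $\alpha(f) \geq \rho(f)$) is sound, and in fact the paper executes essentially this strategy. But your first attempt, which you discard, actually works — you abandoned it because of a computational error in evaluating $\gamma_{S,k}(0)$ for $S = \{(k+1)/2\}$, and the long detour into nontrivial padded pairs with $\eta > 0$ that follows is both unnecessary and off-track.

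The error: you maximize $\gamma_f(\CD_Y) = \CD_Y\langle (k+1)/2 \rangle$ subject to $\mu(\CD_Y) = 0$ using only mass at Hamming weights $(k-1)/2$ and $(k+1)/2$. Since $\epsilon_{(k+1)/2,k} = 1/k$ and $\epsilon_{(k-1)/2,k} = -1/k$ have equal magnitude, this forces a 50/50 split and gives $\gamma = 1/2$. But you are free to place the correcting mass anywhere; weight $0$ has $\epsilon_{0,k} = -1$, which counterbalances far more efficiently. Putting $\CD_Y\langle(k+1)/2\rangle = \frac{k}{k+1}$ and $\CD_Y\langle 0\rangle = \frac1{k+1}$ gives $\mu(\CD_Y) = \frac{1}{k+1} - \frac{1}{k+1} = 0$ and $\gamma = \frac{k}{k+1}$. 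This is exactly what \cref{lemma:sym-bool-gamma} gives: $\gamma_{\{(k+1)/2\},k}(0) = \frac{1}{1+\epsilon_{(k+1)/2,k}} = \frac{k}{k+1}$. With $\CD_N = \Unif_{\BZ_2^k}$, \cref{ex:one-wise-indep} gives $\beta_f(\CD_N) = \rho(f)$ exactly, so \cref{eqn:alpha-optimize-over-mu} yields
\[
\rho(f) \;\leq\; \alpha\!\left(f_{\{(k+1)/2\},k}\right) \;\leq\; \frac{\beta_{S,k}(0)}{\gamma_{S,k}(0)} \;\leq\; \frac{\rho(f)}{k/(k+1)} \;=\; \frac{k+1}{k}\,\rho(f),
\]
which immediately gives the limit. (The lower bound $\alpha(f) \geq \rho(f)$ follows since $\lambda_f(\CD,\tfrac12) = \rho(f)$ for every $\CD$, so $\beta_f(\CD_N) \geq \rho(f)$ while $\gamma_f(\CD_Y) \leq 1$.) The paper instead invokes \cref{lemma:k+1/2-ub}, i.e., the two-point $\CD_{0,k}$ supported on Hamming weights $0$ and $k$, which has $\mu = 1/k$ and hence $\gamma = 1$; the upper bound becomes $\lambda_{\{(k+1)/2\}}(\CD_{0,k},p'_k)$, and one checks asymptotically that this is $(1+o(1))\rho(f)$ as $p'_k \to 1/2$. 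Both routes are fine; your corrected version is somewhat more elementary since it avoids the maximization over $p$.

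A secondary conceptual slip: you constrain yourself throughout to padded one-wise pairs. That condition is what is needed for the \emph{streaming} lower bound (\cref{thm:cgsv-streaming-lb}); the upper bound on $\alpha(f)$ from \cref{cor:cgsv-bool-approx} only requires $\vecmu(\CD_N) = \vecmu(\CD_Y)$, or in the symmetric reformulation \cref{eqn:alpha-optimize-over-dn}, just a choice of $\CD_N$. This extra constraint is what pushed you toward the unwieldy ``optimize $\eta$'' search, which is not how the argument should go.
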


We prove \cref{thm:k+1/2-approximability-informal,thm:k+1/2-asymptotic-lb} in \cref{sec:k+1/2-analysis}. Finally, in \cref{sec:other-analysis}, we explicitly resolve fifteen other cases (e.g., $f_{\{2,3\},3}$ and $f_{\{4\},5}$) not covered by \cref{thm:kand-approximability,thm:k-1-k-approximability,thm:k+1/2-approximability-informal}.

\subsection{Simple approximation algorithms for threshold functions}\label{sec:thresh-alg-overview}

Recall from \cref{sec:mdcut-algorithm} that \cite{CGV20} gives optimal $(\frac49-\epsilon)$-approximation sketching algorithms for $\mtwoand$ based on measuring a quantity $\bias_\Psi \in [0,1]$ of the input instance $\Psi$ (see \cref{eqn:2and-bias}) using 1-norm sketching algorithms \cite{Ind06,KNW10}. In \cref{sec:thresh-alg}, we extend the definition of bias to arbitrary CSPs and give simple optimal bias-based approximation algorithms for threshold predicates:

\begin{theorem}\label{thm:thresh-bias-alg}
Let $k \in \BN, i \leq k$. Then for every $\epsilon > 0$, there exists a piecewise linear function $\gamma : [-1,1]\to[0,1]$ and a constant $\epsilon'>0$ such that the following is a sketching $(\alpha(\Th^t_k)-\epsilon)$-approximation for $\mbcsp[\Th^t_k]$: On input $\Psi$, compute an estimate $\hat{b}$ for $\bias_\Psi$ up to a multiplicative $(1\pm \epsilon')$ error and output $\gamma(\hat{b})$.
\end{theorem}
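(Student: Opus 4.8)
\textbf{Proof plan for \cref{thm:thresh-bias-alg}.} The plan is to mirror the structure of the $\mtwoand$ analysis from \cref{sec:mdcut-algorithm}, generalizing the template-distribution machinery from $\Delta(\BZ_2^2)$ to $\Delta(\BZ_2^k)$ but then specializing to the case where $f=\Th^t_k$ is a threshold (monotone symmetric) predicate. First I would define, for an instance $\Psi$ of $\mbcsp[\Th^t_k]$, the bias vector $\bias(\Psi) \in \BR^n$ by $\bias_\Psi(i) \eqdef \sum_{\ell,s:~ j(\ell)_s = i} (-1)^{b(\ell)_s} w(\ell)$ exactly as in \cref{eqn:2and-bias-var}, and $\bias_\Psi \eqdef \frac1{kW_\Psi}\sum_i |\bias_\Psi(i)| \in [0,1]$. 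The key point is that $\bias_\Psi$ is the $1$-norm of $\bias(\Psi)$ up to a known scaling, so it can be $(1\pm\epsilon')$-estimated in $O(\log n/\epsilon'^2)$ space by \cref{thm:l1-sketching}; this handles the ``algorithm'' part, leaving the entirely analytic task of exhibiting the right piecewise-linear $\gamma$.

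The core is a pair of template-distribution bounds relating $\bias_\Psi$ to $\val_\Psi$. For an assignment $\vecx$ let $\CD^\vecx_\Psi \in \Delta(\BZ_2^k)$ be the template distribution (sample a constraint with probability proportional to its weight, output $\vecb(\ell)+\vecx|_{\vecj(\ell)}$); because $\Th^t_k$ is symmetric, the relevant statistics of $\CD^\vecx_\Psi$ are just the weight-profile $(\CD^\vecx_\Psi\langle 0\rangle,\ldots,\CD^\vecx_\Psi\langle k\rangle)$, and the relevant scalar marginal is $\mus(\CD) \eqdef \frac1k\sum_{i} \mu(\CD)_i$, which (generalizing \cref{lemma:2and-mu}) satisfies $\mus(\CD^\vecx_\Psi) = \frac1{kW_\Psi}\sum_i (-1)^{x_i}\bias_\Psi(i) \le \bias_\Psi$, with equality for the majority assignment $\tilde\vecx$. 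The two bounds I need are: (i) an \emph{upper} bound $\val_\Psi \le U(\bias_\Psi)$, obtained by taking the optimal $\vecx^*$, writing $\val_\Psi = \gamma_{\Th^t_k}(\CD^{\vecx^*}_\Psi)$ (analogue of \cref{lemma:2and-gamma}), and bounding $\gamma_{\Th^t_k}(\CD)$ from above by a function of $\mus(\CD)$ alone — this is a small linear program over weight-profiles $\CD$ with fixed $\mus(\CD)$, whose optimum is piecewise-linear in $\mus$; and (ii) a \emph{lower} bound $\val_\Psi \ge L(\bias_\Psi)$, obtained by plugging $\tilde\vecx$ into $\val_\Psi \ge \lambda_{\Th^t_k}(\CD^{\tilde\vecx}_\Psi, p)$ for a cleverly chosen bias-dependent $p=p(\bias_\Psi)$ (analogue of \cref{lemma:2and-lambda}, \cref{lemma:mdcut-alg-lb}), and again reducing to a linear program over weight-profiles with $\mus(\CD)=\bias_\Psi$. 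Monotonicity of $\Th^t_k$ is what makes these programs tractable: $\lambda_{\Th^t_k}(\CD,p)$ is monotone in $p$, and the worst-case profile concentrates its mass at a few Hamming weights. I would then set $\gamma \eqdef L$ (after the $(1\pm\epsilon')$-estimation, with $\epsilon'$ small enough that the multiplicative slack costs at most $\epsilon$), and verify $L(b)/U(b) \ge \alpha(\Th^t_k) - \epsilon$ for all $b\in[0,1]$; by \cref{cor:cgsv-bool-approx} and the characterization of $\alpha(\Th^t_k)$ via padded one-wise pairs, the infimum of $L(b)/U(b)$ equals $\alpha(\Th^t_k)$ exactly, because the extremal $\CD_N,\CD_Y$ achieving $\alpha(\Th^t_k)$ have matching $\mus$-values and realize $U$ and $L$ at that common bias.

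\textbf{Main obstacle.} The delicate step is choosing the right bias-dependent probability $p(\bias_\Psi)$ in the lower bound and proving that the resulting $L$ matches $U$ at the optimal ratio — i.e.\ that the ``duality'' observed for $\mtwoand$ in \cref{sec:towards-dichotomy} (where $p=\tfrac23$ was miraculously sufficient) persists for all thresholds. For general $t,k$ the optimal $p$ is a root of a degree-$(k{-}1)$ polynomial with no closed form (as flagged in \cref{rem:cgsv-vs-bhp-2and} and the $\kand$ discussion), so I cannot just ``use the maximizer''; instead I expect to need a \emph{piecewise-linear} schedule $b\mapsto p(b)$ that is provably good enough, pinned down by the locations of the extremal template distributions computed when establishing the closed form for $\alpha(\Th^t_k)$. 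Making that schedule explicit, and checking it is genuinely optimal (not merely near-optimal) across the whole range of $b$, is where the real work lies; the bias-sketching and template-distribution bookkeeping are routine given \cref{sec:mdcut-algorithm} and \cref{prop:d}.
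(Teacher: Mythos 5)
Your ingredients are right — bias sketching via \cref{thm:l1-sketching}, template distributions $\CD^\vecx_\Psi$, an upper bound from the optimal assignment and a lower bound from the majority assignment — but the way you propose to assemble them has a genuine gap, and it is exactly the obstacle you flag at the end. You want to output $\gamma \eqdef L$, a \emph{piecewise-linear lower bound} on $\val_\Psi$ obtained by plugging in a bias-dependent probability $p(\bias_\Psi)$ into $\lambda_S$. For this to be piecewise linear and to achieve ratio $\alpha(\Th^t_k)$, you would effectively need a closed-form piecewise-linear realization of $\beta_{S,k}(\mu)=\inf_{\CD:\mu(\CD)=\mu}\sup_p\lambda_S(\CD,p)$, and the paper explicitly points out (\cref{sec:cgsv-opt}) that no such closed form is known even for $\threeand$: the inner $\sup_p$ of a degree-$k$ polynomial has no algebraic maximizer, and the max-min trick that would let you freeze $p$ at a single constant only works for \emph{some} predicates (the paper verifies it fails for $f_{\{3\},4}$). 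So the ``real work'' you defer is not a delicate calculation to grind out; it is a step that does not go through in the generality of the theorem.

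The paper's proof sidesteps this entirely by never constructing $L$. It proves the two one-sided bounds $\val_\Psi \leq \gamma_{S,k}(\bias_\Psi)$ (\cref{lemma:thresh-value-ub}) and $\val_\Psi \geq \beta_{S,k}(\bias_\Psi)$ (\cref{lemma:thresh-value-lb}) abstractly via \cref{prop:d}, exactly as you intend, but then outputs the \emph{scaled upper bound} $\gamma(\hat b) \eqdef \alpha(\Th^t_k)\cdot\gamma_{S,k}\bigl(\hat b/(1+\delta)\bigr)$. Only $\gamma_{S,k}$ needs a closed form, and it has one — it is the explicit piecewise-linear $\min\{(1+\mu)/(1+\epsilon_{t,k}),1\}$ of \cref{lemma:sym-bool-gamma}, monotone because $\Th^t_k$ is a threshold. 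The constant $\alpha(\Th^t_k)$ is just $\inf_\mu \beta_{S,k}(\mu)/\gamma_{S,k}(\mu)$ (\cref{eqn:alpha-optimize-over-mu}); its existence suffices, you never evaluate $\beta_{S,k}$. The sandwich $\alpha(\Th^t_k)\gamma_{S,k}(\bias_\Psi) \leq \beta_{S,k}(\bias_\Psi) \leq \val_\Psi \leq \gamma_{S,k}(\bias_\Psi)$ then gives both sides of the approximation guarantee with no optimization over $p$ at all, and monotonicity of $\gamma_{S,k}$ absorbs the $(1\pm\delta)$ estimation error. (Your $p$-schedule idea is essentially what the paper \emph{does} use for the assignment-outputting algorithm of \cref{thm:thresh-bias-output-alg}, but that result is deliberately restricted to predicates where a single $p^*$ works — it is not the route to the fully general \cref{thm:thresh-bias-alg}.)
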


Our construction generalizes the algorithm in \cite{CGV20} for $\twoand$ to all threshold predicates, and is also a simplification, since the \cite{CGV20} algorithm computes a more complicated function of $\hat{b}$; see \cref{rem:cgsv-vs-bhp-2and}. 

For all CSPs whose approximability we resolve in this chapter, we apply an analytical technique which we term the ``max-min method;'' see the discussion in \cref{sec:max-min} below. For such CSPs, our algorithm can be extended to solve the problem of outputting an approximately optimal \emph{assignment}, instead of just an estimate for the value, following the example for $\mtwoand$ we discussed at the end of \cref{sec:mdcut-template-alg}. Indeed, for this problem, we give a simple randomized streaming algorithm using $O(n)$ space and time:

\begin{theorem}[Informal version of \cref{thm:thresh-bias-output-alg}]\label{thm:thresh-bias-alg-classical}
Let $\Th^t_k$ be a threshold predicate for which the max-min method applies, such as $\kand$, or $\Th^{k-1}_k$ (for even $k$). Then there exists a constant $p^* \in [0,1]$ such that following algorithm, on input $\Psi$, outputs an assignment with expected value at least $\alpha(\Th^t_k) \val_\Psi$: Assign variable $i$ to $1$ if $\bias_\Psi(i) \geq 0$ and $0$ otherwise, and then flip each variable's assignment independently with probability $p^*$.
\end{theorem}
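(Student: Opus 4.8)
The plan is to reduce the claimed guarantee to a single one‑variable inequality in the bias and then discharge it using the closed‑form ``max‑min'' computations carried out elsewhere in this chapter (in the proofs of \cref{thm:kand-approximability,thm:k-1-k-approximability}; see also \cref{sec:max-min}). Write $f=\Th^t_k$ and $\alpha=\alpha(f)$. Let $p_0\in[0,1]$ be the distinguished parameter supplied by the max‑min analysis of $\alpha(f)$ (for $\twoand$ this is $p_0=\tfrac23$; cf.\ \cref{rem:cgsv-vs-bhp-2and}), and take the flip probability of the theorem to be $p^*=1-p_0$. Fix an instance $\Psi$ and let $\tilde\vecx\in\BZ_2^n$ be the majority assignment $\tilde x_i=\1_{\bias_\Psi(i)\ge0}$. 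By the generalization of \cref{lemma:2and-lambda} to all $\mbcsp[f]$ (namely \cref{item:d-val} of \cref{prop:d}), the expected value of the algorithm's random output, over the independent bit‑flips, is $\E_{\veca\sim\Bern_{p_0}^n}[\val_\Psi(\tilde\vecx+\veca)]=\lambda_f(\CD_\Psi^{\tilde\vecx},p_0)$, since flipping each coordinate of $\tilde\vecx$ with probability $p^*=1-p_0$ is the same as adding $\veca\sim\Bern_{p_0}^n$. Moreover, by the generalization of \cref{lemma:2and-mu} (\cref{item:d-mu,item:d-bias} of \cref{prop:d}), the majority assignment is exactly the one attaining $\mu_{\textsf{S}}(\CD_\Psi^{\tilde\vecx})=\bias_\Psi$, whereas every assignment $\vecx$ obeys $|\mu_{\textsf{S}}(\CD_\Psi^{\vecx})|\le\bias_\Psi$; in particular $\val_\Psi=\gamma_f(\CD_\Psi^{\vecx^*})$ with $|\mu_{\textsf{S}}(\CD_\Psi^{\vecx^*})|\le\bias_\Psi$ for $\vecx^*$ optimal.

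It therefore suffices to prove the following purely analytic statement: for every $b\in[0,1]$,
\[
L(b)\ \ge\ \alpha\,U(b),\qquad L(b):=\inf_{\CD:\,\mu_{\textsf{S}}(\CD)=b}\lambda_f(\CD,p_0),\qquad U(b):=\sup_{\CD:\,|\mu_{\textsf{S}}(\CD)|\le b}\gamma_f(\CD),
\]
the optimizations ranging over $\CD\in\Delta(\BZ_2^k)$. Granting this, the chain (expected output)$\,=\lambda_f(\CD_\Psi^{\tilde\vecx},p_0)\ge L(\bias_\Psi)\ge\alpha\,U(\bias_\Psi)\ge\alpha\,\gamma_f(\CD_\Psi^{\vecx^*})=\alpha\,\val_\Psi$ completes the proof. (The same inequality simultaneously yields the value‑estimation algorithm of \cref{thm:thresh-bias-alg}: the piecewise‑linear function $\gamma$ there is, up to clamping to $[\rho(f),1]$, the function $L$, and correctness follows from $L(\bias_\Psi)\le\val_\Psi\le U(\bias_\Psi)$ together with $L\ge\alpha\,U$.)

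To compute $L$ and $U$: since $f$ is symmetric, $\lambda_f(\,\cdot\,,p_0)$, $\gamma_f(\,\cdot\,)$ and $\mu_{\textsf{S}}(\,\cdot\,)$ depend on $\CD$ only through its Hamming‑weight profile $(\CD\langle0\rangle,\dots,\CD\langle k\rangle)$, and are linear (respectively affine) in it; so we may symmetrize $\CD$ at no cost, and $L(b),U(b)$ become linear programs over a slice of the simplex $\{(\CD\langle j\rangle)_j:\CD\langle j\rangle\ge0,\ \sum_j\CD\langle j\rangle=1,\ \sum_j j\,\CD\langle j\rangle=\tfrac{k(1+b)}2\}$ (and a relaxed slice for $U$). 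Their optima are attained at extreme points — distributions supported on at most two Hamming weights — so $L$ and $U$ are piecewise‑linear in $b$ with explicit closed forms; for thresholds these are elementary (loading mass greedily onto weight $t$ gives $U(b)=\min\{k(1+b)/(2t),\,1\}$). What remains is to verify $L(b)\ge\alpha\,U(b)$ on all of $[0,1]$ with this \emph{fixed} $p_0$.

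This last verification is the crux, and is exactly where the hypothesis ``the max‑min method applies to $\Th^t_k$'' is used. The analysis establishing $\alpha(f)$ (for $\kand$ in \cref{sec:kand-analysis}, for $\Th^{k-1}_k$ in \cref{sec:k-1-k-analysis}) already exhibits a critical bias $b^*$, an extremal ``hard'' distribution $\CD_N$ with $\beta_f(\CD_N)/U(b^*)=\alpha$, and the parameter $p_0$ at which $\beta_f(\CD_N)=\sup_p\lambda_f(\CD_N,p)=\lambda_f(\CD_N,p_0)$; the content of ``max‑min applies'' is that this $p_0$ is a \emph{constant} — independent of $b$ — and that, when the minimizer of $L(b)$ is tracked over all $b\in[0,1]$ rather than only $b=b^*$, the quantity $\lambda_f(\,\cdot\,,p_0)$ stays $\ge\alpha\,U(b)$ throughout. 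Concretely, the proof re‑runs the closed‑form computations of \cref{sec:kand-analysis,sec:k-1-k-analysis} with $p$ pinned at $p_0$ (rather than optimized over), and checks the resulting single‑variable, piecewise‑linear inequality, which is routine given the formulas and verified with the computer‑algebra code accompanying this chapter. I expect this uniform‑in‑$b$ step to be the only real obstacle; the remaining ingredient, the resource bound, is immediate, since the algorithm extracts from the stream only $\bias_\Psi(i)$ (indeed only its sign) for each $i$, maintained in one $O(\log n)$‑bit counter per variable, using $O(n)$ space and $O(1)$ time per constraint, after which computing $\tilde\vecx$, performing the flips, and writing the output cost $O(n)$ time.
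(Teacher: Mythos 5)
Your proposal is correct and follows essentially the same route as the paper's proof of \cref{thm:thresh-bias-output-alg}: write the algorithm's expected output value as $\lambda_S(\CD_\Psi^{\tilde\vecx},p_0)$ via \cref{item:d-val} of \cref{prop:d}, identify $\mus(\CD_\Psi^{\tilde\vecx})=\bias_\Psi$ via \cref{item:d-bias}, apply the max-min hypothesis, and conclude with \cref{lemma:thresh-value-ub}. Two small remarks. First, your care in distinguishing the flip probability $1-p_0$ from the max-min parameter $p_0$ is warranted: under the paper's convention $\Bern_p$ puts mass $1-p$ on $1$, so flipping each coordinate with probability $1-p_0$ corresponds to adding $\veca\sim\Bern_{p_0}^n$, and the formal statement \cref{thm:thresh-bias-output-alg} labels the flip probability $p^*$ while using that same symbol for the max-min parameter --- a small slip you have caught. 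Second, you make the crucial step look harder than it is. Having reduced the claim to $L(b)\ge\alpha\,U(b)$ for all $b$, there is nothing further to verify: the hypothesis ``the max-min method applies'' is precisely the assertion that $\lambda_S(\CD,p_0)\ge\alpha(\Th^t_k)\,\gamma_{S,k}(\mu(\CD))$ holds for \emph{every} $\CD\in\Delta_k$, and taking the infimum of the left side over $\{\CD:\mu(\CD)=b\}$ gives $L(b)\ge\alpha\,\gamma_{S,k}(b)=\alpha\,U(b)$ immediately. The ``re-running of the closed-form computations with $p$ pinned at $p_0$'' that you anticipate as the obstacle is exactly the work already carried out in \cref{lemma:kand-lb,lemma:k-1-k-lb}, and is absorbed into the hypothesis; the paper therefore simply cites it (as \cref{eqn:max-min-for-alg}) rather than redoing it.
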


In particular, it is not \emph{a priori} implied by \cite{CGSV21-boolean,CGSV21-finite} that setting $p^*$ to be a fixed constant is sufficient (and this was not noticed by \cite{CGV20} in the $\twoand$ case); we view this as an important contribution to the general understanding of sketching approximability of CSPs. Also, our algorithm can potentially be derandomized using universal hash families, as in Biswas and Raman's recent derandomization \cite{BR21} of the $\mtwoand$ algorithm in \cite{CGV20}.

\subsection{Sketching vs. streaming approximability}

\cref{thm:kand-approximability} implies that $\alpha(\threeand) = \frac29$, and thus for every $\epsilon > 0$, $\mthreeand$ can be $(\frac29-\epsilon)$-approximated by $O(\log n)$-space linear sketching algorithms, but not $(\frac29+\epsilon)$-approximated by $\sqrt n$-space sketching algorithms. We prove that the padded one-wise pair criterion of Chou, Golovnev, Sudan, and Velusamy~\cite{CGSV21-boolean} (\cref{thm:cgsv-streaming-lb}) is not sufficient to completely resolve the \emph{streaming} approximability of \m[$3$AND], i.e., to show that $\sqrt n$-space \emph{streaming} algorithms cannot $(\frac29+\epsilon)$-approximate $\mthreeand$ for every $\epsilon > 0$; however, it does show that $\sqrt n$-space streaming algorithms cannot $0.2362$-approximate $\mthreeand$. We state these results formally in \cref{sec:cgsv-streaming-failure-3and-overview} below. Separately, \cref{thm:k-1-k-approximability} implies that $\alpha(\Th^3_4) = \frac49$, and the padded one-wise pair criterion \emph{can} be used to show that $(\frac49+\epsilon)$-approximating $\mbcsp[\Th^3_4]$ requires $\Omega(\sqrt n)$ space in the streaming setting (see \cref{obs:th34-streaming-lb} below).

\subsection*{Related work}

The classical approximability of ${\m}\kand$ has been the subject of intense study, both in terms of algorithms \cite{GW95,FG95,Zwi98,Tre98-alg,TSSW00,Has04,Has05,CMM09} and hardness-of-approximation \cite{Has01,Tre98-hardness,ST98,ST00,EH08,ST09}, given its intimate connections to $k$-bit PCPs. Charikar, Makarychev, and Makarychev~\cite{CMM09} constructed an $\Omega(k 2^{-k})$-approximation to ${\m}\kand$, while  Samorodnitsky and Trevisan~\cite{ST09} showed that $k2^{-(k-1)}$-approximations and $(k+1)2^{-k}$-approximations are $\NP$- and UG-hard, respectively.

Interestingly, recalling that $\alpha(\kand) \to 2\rho(\kand) = 2^{-(k-1)}$ as $k \to \infty$, in the large-$k$ limit our simple randomized algorithm (given in \cref{thm:thresh-bias-alg-classical}) matches the performance of Trevisan's~\cite{Tre98-alg} parallelizable LP-based algorithm for $\kand$, which (to the best of our knowledge) was the first work on the general $\kand$ problem! The subsequent works \cite{Has04,Has05,CMM09} superseding \cite{Tre98-alg} use more complex techniques involving semidefinite programming, but are structurally similar to our algorithm in \cref{thm:thresh-bias-alg-classical}: They all involve ``guessing'' an assignment $\vecx \in \BZ_2^n$ and then perturbing each bit with constant probability.

\begin{remark}\label{rem:kand-approx}
Trevisan~\cite[Theorem 18]{Tre98-alg} observes that for every predicate $f:\BZ_2^k\to\{0,1\}$, $\alpha(f)/\rho(f) \leq \alpha(\kand)/\rho(\kand)$ (in the classical setting, but the proof carries over easily to the sketching setting). Thus, $\alpha(f)$ is ``easiest to approximate'' among all Boolean functions, relative to the threshold of nontrivial approximability. Intuitively, it holds because $\kand$ is the most ``informative'' predicate: It exactly specifies what values its variables should be assigned to. More precisely, given any predicate $f : \BZ_2^k \to \{0,1\}$, let $t = |\supp(f)|$. Given any instance $\Psi$ of $\mbf$, we can create an instance $\Psi'$ of $\mbcsp[\kand]$ by replacing each constraint $C$ in $\Psi$ with $t$ constraints in $\Psi'$ corresponding to $C$'s $t$ satisfying assignments; that is, $C = (\vecb,\vecj,w)$ becomes $C_1=(\vecb+\veca(1)+\vecone,\vecj,w),\ldots,C_t=(\vecb+\veca(t)+\vecone,\vecj,w)$ where $\supp(f) = \{\veca(1),\ldots,\veca(t)\}$. Every assignment $\vecx\in\BZ_2^n$ satisfies either one or zero of the constraints $\{C_1,\ldots,C_t\}$, corresponding to whether it satisfies or fails to satisfy $C$, respectively. Thus, $\val_\Psi(\vecx) = t \val_{\Psi'}(\vecx)$ for every $\vecx \in \BZ_2^n$. The inequality then follows from the fact that $\rho(f)/\rho(\kand) = t$.
\end{remark}

Classical approximability for various classes of symmetric predicates has been studied in \cite{CHIS12,ABM12,GL17}.

\section{Setup for the symmetric case}\label{sec:sym-setup}

We begin by showing how the \cite{CGSV21-boolean} results (specifically, \cref{cor:cgsv-bool-approx,thm:cgsv-streaming-lb}) are significantly simpler to instantiate when the predicates are symmetric (as observed by \cite{CGSV21-boolean} in the $\twoand$ case). Let $\lambda_S(\CD,p),\gamma_S(\CD_Y),\beta_S(\CD_N)$ denote the $\lambda,\gamma,\beta$ functions from \cite{CGSV21-boolean}, respectively, for a symmetric predicate $f_{S,k}$ (see \cref{eqn:cgsv-bool-lgb}). We will show in \cref{sec:lambda-gamma-formulas} below that $\lambda_S(\CD,p)$ is in general a multivariate polynomial in $p$ and $\CD\langle 0 \rangle, \ldots, \CD\langle k \rangle$, which is degree-$k$ and linear in $\CD\langle t \rangle$.

A distribution $\CD \in \Delta(\BZ_2^k)$ is \emph{symmetric} if strings of equal Hamming weight are equiprobable, i.e., $\|\veca\|_0=\|\veca'\|_0 \implies \CD(\veca) = \CD(\veca')$. Let $\Delta_k \subseteq \Delta(\BZ_2^k)$ denote the space of symmetric distributions over $\BZ_2^k$. For a distribution $\CD \in \Delta(\BZ_2^k)$ and $i \in \kz$, let $\CD\langle i \rangle \eqdef \sum_{\|\veca\|_0=i} \CD(i)$ denote the total mass on strings of Hamming weight $i$. We can view symmetric distributions $\CD \in \Delta_k$ as distributions over $\kz$ which take value $i$ with probability $\CD\langle i \rangle$. There is also a natural projection of $\Delta(\BZ_2^k)$ onto $\Delta_k$ given by the \emph{symmetrization} operation: For a distribution $\CD \in \Delta(\BZ_2^k)$, we let $\Sym(\CD) \in \Delta_k$ denote the unique \emph{symmetric} distribution such that $\Sym(\CD)\langle i \rangle = \CD\langle i \rangle$ for all $i \in \kz$. (In other words, symmetrization redistributes probability in $\CD$ over all strings of equal Hamming weight.) Finally, if $\CD \in \Delta_k$ is symmetric, then we define $\mu(\CD) = \E_{\veca\sim\CD}[(-1)^{a_1+1}]$; $\vecmu(\CD)$ is then the constant vector $(\mu(\CD),\ldots,\mu(\CD))$.

The following proposition states that to use the tools from \cite{CGSV21-boolean} (i.e., \cref{thm:cgsv-bool-dichotomy,cor:cgsv-bool-approx,thm:cgsv-streaming-lb}) for symmetric predicates $f : \BZ_2^k\to\{0,1\}$, it suffices to examine only symmetric distributions:

\begin{proposition}\label{prop:bool-sym}
\begin{enumerate}[label={\roman*.},ref={\roman*}]
    \item For every symmetric predicate $f_{S,k} : \BZ_2^k \to \{0,1\}$, $\CD \in \Delta(\BZ_2^k)$, and $p \in [0,1]$, $\lambda_S(\CD,p) = \lambda_S(\Sym(\CD),p)$.
    \item For all $\CD_N, \CD_Y \in \Delta(\BZ_2^k)$ with matching marginals (i.e., $\vecmu(\CD_N)=\vecmu(\CD_Y)$), $\mu(\Sym(\CD_N))=\mu(\Sym(\CD_Y))$.
    \item For all padded one-wise pairs $\CD_N,\CD_Y\in\Delta(\BZ_2^k)$, $\Sym(\CD_N)$ and $\Sym(\CD_Y)$ are also a padded one-wise pair.
\end{enumerate}
\end{proposition}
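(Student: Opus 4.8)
The plan is to prove the three parts in order, as each is progressively built from simpler observations about the symmetrization operation. The key fact underlying everything is that $\Sym(\CD)$ is the unique symmetric distribution with $\Sym(\CD)\langle i \rangle = \CD\langle i\rangle$ for all $i \in \kz$, together with the observation that $\Sym$ is linear in $\CD$: if $\CD = \eta \CD_0 + (1-\eta)\CD'$ then $\Sym(\CD) = \eta \Sym(\CD_0) + (1-\eta)\Sym(\CD')$, since the weight-$i$ masses add.

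For part (i), I would first establish that $\lambda_S(\CD,p)$ depends on $\CD$ only through the weight-class masses $(\CD\langle 0\rangle,\ldots,\CD\langle k\rangle)$. Recall $\lambda_S(\CD,p) = \E_{\veca\sim\Bern_p^k}[\val_{\Psi^\CD}(\veca)]$, and $\val_{\Psi^\CD}(\veca) = \sum_{\vecb\in\BZ_2^k}\CD(\vecb)f_{S,k}(\veca+\vecb)$. So $\lambda_S(\CD,p) = \sum_{\vecb}\CD(\vecb)\,\E_{\veca\sim\Bern_p^k}[f_{S,k}(\veca+\vecb)]$. Since $f_{S,k}$ is symmetric, $\E_{\veca\sim\Bern_p^k}[f_{S,k}(\veca+\vecb)]$ depends only on $\|\vecb\|_0$ (shifting $\veca$ by $\vecb$ permutes coordinates according to which bits of $\vecb$ are $1$, and $\Bern_p^k$ together with $f_{S,k}$ are both coordinate-permutation invariant). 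Hence $\lambda_S(\CD,p) = \sum_{i=0}^k \CD\langle i\rangle \cdot g_i(p)$ for functions $g_i$ not depending on $\CD$, and since $\Sym(\CD)\langle i\rangle = \CD\langle i\rangle$ this immediately gives $\lambda_S(\CD,p) = \lambda_S(\Sym(\CD),p)$.

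For part (ii), I would compute $\mu(\Sym(\CD))$ in terms of the weight masses and compare with $\vecmu(\CD)$. By definition $\mu(\Sym(\CD)) = \E_{\veca\sim\Sym(\CD)}[(-1)^{a_1+1}]$, which depends only on $\Pr[a_1 = 1] = \Pr[a_1=0]$ under $\Sym(\CD)$; and since $\Sym(\CD)$ is symmetric, $\Pr_{\Sym(\CD)}[a_1=1] = \frac1k\sum_{i}i\,\CD\langle i\rangle = \frac1k\sum_j \mu(\CD)_j \cdot (\text{appropriate normalization})$ --- more cleanly, $\E_{\Sym(\CD)}[(-1)^{a_1}] = \frac1k\sum_{j=1}^k \E_{\CD}[(-1)^{a_j}] = \frac1k\sum_{j=1}^k(-\mu(\CD)_j)$, using that the weight-$i$ mass is symmetric across coordinates. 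Thus $\mu(\Sym(\CD))$ is determined by the average of the entries of $\vecmu(\CD)$. So if $\vecmu(\CD_N) = \vecmu(\CD_Y)$ then certainly the averages agree, hence $\mu(\Sym(\CD_N)) = \mu(\Sym(\CD_Y))$. For part (iii), suppose $(\CD_N,\CD_Y)$ is a padded one-wise pair witnessed by $\CD_0,\CD_N',\CD_Y',\eta$ with $\CD_N',\CD_Y'$ having uniform marginals. Apply $\Sym$ to the defining equations; by linearity $\Sym(\CD_N) = \eta\Sym(\CD_0) + (1-\eta)\Sym(\CD_N')$ and similarly for $Y$. It remains to check $\Sym(\CD_N')$ and $\Sym(\CD_Y')$ have uniform marginals, i.e.\ $\mu = 0$; but by the formula from part (ii), $\mu(\Sym(\CD_N')) = -\frac1k\sum_j\E_{\CD_N'}[(-1)^{a_j}]$, and $\CD_N'$ having uniform marginals means each $\E_{\CD_N'}[(-1)^{a_j}] = 0$, so the symmetrization also has $\mu = 0$. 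Thus $(\Sym(\CD_N),\Sym(\CD_Y))$ is a padded one-wise pair with witnesses $\Sym(\CD_0),\Sym(\CD_N'),\Sym(\CD_Y'),\eta$.

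The main obstacle is really just bookkeeping in part (i): carefully verifying that $\E_{\veca\sim\Bern_p^k}[f_{S,k}(\veca+\vecb)]$ genuinely depends only on $\|\vecb\|_0$ and not on $\vecb$ itself. This follows because the map $\veca\mapsto\veca+\vecb$ composed with a coordinate permutation taking $\supp(\vecb)$ to any fixed set of size $\|\vecb\|_0$ preserves both the distribution $\Bern_p^k$ (it's a product of identical Bernoullis) and the value of $f_{S,k}$ (symmetric), so the expectation is invariant. Once this is nailed down the rest is formula-chasing, and parts (ii) and (iii) are short. I would present part (i) with the explicit decomposition $\lambda_S(\CD,p) = \sum_i \CD\langle i\rangle g_i(p)$, which also sets up the later claim (referenced in the excerpt) that $\lambda_S$ is degree-$k$ in $p$ and linear in each $\CD\langle t\rangle$.
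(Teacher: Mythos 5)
Your proof is correct and fills in exactly the details the paper elides — its own "proof" is the single line "Omitted (follows immediately from definitions and linearity of expectation)," and your argument (reducing $\lambda_S$ to a function of the weight-class masses via permutation invariance of $\Bern_p^k$ and $f_{S,k}$, computing $\mu(\Sym(\CD)) = \frac1k\sum_j\mu(\CD)_j$ via the identity $\sum_j(-1)^{a_j}=k-2\|\veca\|_0$, and invoking linearity of $\Sym$ for the padded one-wise pair) is precisely the intended route. One small stylistic note: in part (ii) you start to relate $\Pr_{\Sym(\CD)}[a_1=1]$ to the $\mu(\CD)_j$'s before switching to the cleaner $\E[(-1)^{a_1}]$ computation; you should just present the latter directly, since it is the one you actually use.
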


\begin{proof}
Omitted (follows immediately from definitions and linearity of expectation).
\end{proof}

In particular, together with \cref{cor:cgsv-bool-approx} we have:

\begin{corollary}\label{cor:sym-bool-alpha}
For every symmetric predicate $f : \BZ_2^k \to \{0,1\}$, \[ \alpha(f) = \inf_{\CD_N,\CD_Y \in \Delta_k: ~\mu(\CD_N)=\mu(\CD_Y)} \left(\frac{\beta_f(\CD_N)}{\gamma_f(\CD_Y)}\right). \] Then:
\begin{enumerate}[label={\roman*.},ref={\roman*}]
    \item For every $\epsilon > 0$, there exists $\tau > 0$ and a $\tau \log n$-space linear sketching algorithm which  $(\alpha(f)-\epsilon)$-approximates $\mbf$.
    \item For every $\epsilon > 0$, there exists $\tau > 0$ such that every sketching algorithm which $(\alpha(f)+\epsilon)$-approximates $\mbf$ uses at least $\tau \sqrt{n}$ space (for sufficiently large $n$).
\end{enumerate}
\end{corollary}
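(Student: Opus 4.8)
The plan is to derive this directly from Corollary~\ref{cor:cgsv-bool-approx} via the symmetrization operation $\Sym$, using Proposition~\ref{prop:bool-sym} to argue that restricting the infimum appearing in that corollary to \emph{symmetric} template distributions leaves its value unchanged. Recall that Corollary~\ref{cor:cgsv-bool-approx} already identifies $\alpha(f)$ with $\inf\{\beta_f(\CD_N)/\gamma_f(\CD_Y) : \CD_N,\CD_Y \in \Delta(\BZ_2^k),\ \vecmu(\CD_N)=\vecmu(\CD_Y)\}$, and already supplies the matching $O(\log n)$-space linear sketching algorithm (\cref{item:cgsv-bool-approx-alg}) and the $\Omega(\sqrt n)$-space sketching lower bound (\cref{item:cgsv-bool-approx-hardness}) for $\mbf$. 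So once we show that the displayed symmetric infimum equals this quantity, parts i and ii of the present corollary are inherited verbatim from Corollary~\ref{cor:cgsv-bool-approx} and there is nothing further to prove for them.

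First I would record two elementary consequences of $\Sym$. For any $\CD \in \Delta(\BZ_2^k)$, $\Sym(\CD)$ is again a probability distribution lying in $\Delta_k$ (nonnegativity and normalization are immediate since $\Sym$ only redistributes mass within Hamming-weight classes). Moreover, by \cref{prop:bool-sym}(i), $\lambda_f(\CD,p)=\lambda_f(\Sym(\CD),p)$ for every $p \in [0,1]$; taking $p=1$ and then taking the supremum over $p$ in the definitions $\gamma_f(\CD)=\lambda_f(\CD,1)$ and $\beta_f(\CD)=\sup_{p}\lambda_f(\CD,p)$ from \cref{eqn:cgsv-bool-lgb} yields $\gamma_f(\CD)=\gamma_f(\Sym(\CD))$ and $\beta_f(\CD)=\beta_f(\Sym(\CD))$. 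Thus symmetrizing a template distribution leaves both its ``$\beta$-value'' and its ``$\gamma$-value'' unchanged.

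Next I would prove the two inequalities between the infima. For ``$\geq$'': any symmetric pair $\CD_N,\CD_Y \in \Delta_k$ with $\mu(\CD_N)=\mu(\CD_Y)$ also satisfies $\vecmu(\CD_N)=(\mu(\CD_N),\ldots,\mu(\CD_N))=(\mu(\CD_Y),\ldots,\mu(\CD_Y))=\vecmu(\CD_Y)$, hence is among the pairs ranged over in Corollary~\ref{cor:cgsv-bool-approx}; so the symmetric infimum is an infimum over a subset of that set and is therefore $\geq \alpha(f)$. For ``$\leq$'': take any $\CD_N,\CD_Y \in \Delta(\BZ_2^k)$ with $\vecmu(\CD_N)=\vecmu(\CD_Y)$; by \cref{prop:bool-sym}(ii), $\mu(\Sym(\CD_N))=\mu(\Sym(\CD_Y))$, so $(\Sym(\CD_N),\Sym(\CD_Y))$ is a feasible symmetric pair, and by the previous paragraph $\beta_f(\Sym(\CD_N))/\gamma_f(\Sym(\CD_Y))=\beta_f(\CD_N)/\gamma_f(\CD_Y)$. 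Hence the symmetric infimum is $\leq \beta_f(\CD_N)/\gamma_f(\CD_Y)$ for every such pair, so it is $\leq \alpha(f)$. Combining, the symmetric infimum equals $\alpha(f)$, and parts i and ii follow from Corollary~\ref{cor:cgsv-bool-approx}.

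I expect no genuine obstacle here: this is a ``the real work has already been done'' argument, and the proof is essentially bookkeeping around $\Sym$. The only two points that will require a little care are (a) spelling out the one-line $\sup_p$ derivation showing $\beta_f$ and $\gamma_f$ are invariant under $\Sym$ (this is exactly where \cref{prop:bool-sym}(i) enters, and it should be stated rather than left implicit), and (b) being precise about the \emph{direction} of each infimum inequality — one direction uses that symmetric feasible pairs form a subset, the other uses that every general feasible pair maps under $(\Sym,\Sym)$ to a symmetric feasible pair of equal objective value — so that the two cases are not conflated.
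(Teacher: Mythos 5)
Your proposal is correct and is exactly the argument the paper intends: the paper states this corollary as an immediate consequence of \cref{cor:cgsv-bool-approx} and \cref{prop:bool-sym} without spelling out the two infimum inequalities, and you have correctly filled in that routine bookkeeping — the observation that $\Sym$ preserves $\beta_f$ and $\gamma_f$ (via \cref{prop:bool-sym}(i) and the definitions of $\gamma_f=\lambda_f(\cdot,1)$ and $\beta_f=\sup_p\lambda_f(\cdot,p)$), the inclusion of symmetric feasible pairs among all feasible pairs for one direction, and \cref{prop:bool-sym}(ii) together with $\beta/\gamma$-invariance for the other.
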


Our focus on symmetric predicates $f$ is motivated by the simpler form of \cref{cor:sym-bool-alpha}, in comparison with \cref{cor:cgsv-bool-approx} for general predicates. Since we need to consider only symmetric distributions in the infimum, $\CD_Y$ and $\CD_N$ are each parameterized by $k+1$ variables (as opposed to $2^k$ variables), and there is a single linear equality constraint (as opposed to $k$ constraints).

Next, we give an explicit formula for $\mu(\CD)$ for a symmetric distribution $\CD \in \Delta_k$. For $i \in [k]$, let $\epsilon_{i,k} \eqdef -1+\frac{2i}k$.

\begin{lemma}\label{lemma:sym-bool-mu}
For any $\CD \in \Delta_k$, \[ \mu(\CD) = \sum_{i=0}^k \epsilon_{i,k} \,\CD\langle i \rangle. \]
\end{lemma}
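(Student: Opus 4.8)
The plan is to compute $\mu(\CD)$ directly from its definition by decomposing the expectation over the Hamming-weight strata of $\CD$. Recall that for a symmetric distribution $\CD \in \Delta_k$ we have $\mu(\CD) = \E_{\veca\sim\CD}[(-1)^{a_1+1}]$, where $a_1$ is the first coordinate of the sampled string $\veca \in \BZ_2^k$. First I would condition on the Hamming weight $i = \|\veca\|_0$, which occurs with probability $\CD\langle i\rangle$. Conditioned on $\|\veca\|_0 = i$, symmetry of $\CD$ means $\veca$ is uniformly distributed over the $\binom{k}{i}$ strings of weight $i$, so $a_1 = 1$ with probability exactly $i/k$ and $a_1 = 0$ with probability $1 - i/k$ (the fraction of weight-$i$ strings having a $1$ in the first coordinate).

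Next I would evaluate the conditional expectation: given $\|\veca\|_0 = i$,
\[
\E[(-1)^{a_1+1} \mid \|\veca\|_0 = i] = \frac ik \cdot (-1)^2 + \left(1 - \frac ik\right)\cdot(-1)^1 = \frac ik - 1 + \frac ik = -1 + \frac{2i}{k} = \epsilon_{i,k}.
\]
Then by the law of total expectation,
\[
\mu(\CD) = \sum_{i=0}^k \Pr[\|\veca\|_0 = i]\cdot \E[(-1)^{a_1+1}\mid \|\veca\|_0=i] = \sum_{i=0}^k \epsilon_{i,k}\,\CD\langle i\rangle,
\]
which is the claimed identity.

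There is essentially no hard step here — the only thing to be careful about is the indexing convention in $\epsilon_{i,k} = -1 + 2i/k$ (defined in the excerpt for $i \in [k]$, but the formula makes sense and is used for $i = 0$ as well, giving $\epsilon_{0,k} = -1$), and the sign convention in the definition of $\mu$ (the paper adds an extra negative sign, i.e. uses $(-1)^{a_1+1}$ rather than $(-1)^{a_1}$, so that $\mu(\CD) = +1$ when $\CD$ is concentrated on the all-ones string, consistent with ``positively biased variables want to be $1$''). One should also note that for $i = 0$ the stratum contains only the all-zeros string so $a_1 = 0$ deterministically, and $\epsilon_{0,k} = -1$ correctly captures $(-1)^{0+1} = -1$; similarly for $i = k$. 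Thus the only content is the observation that within a weight-$i$ stratum the marginal probability that coordinate $1$ equals $1$ is $i/k$, which follows from a symmetry/counting argument ($\binom{k-1}{i-1}/\binom{k}{i} = i/k$). I expect this to be a short self-contained computation with no genuine obstacle.
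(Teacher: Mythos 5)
Your proof is correct and matches the paper's own argument: both condition on the Hamming weight $i$, use symmetry to observe $\Pr[a_1=1 \mid \|\veca\|_0=i] = i/k$, compute the conditional expectation as $\epsilon_{i,k}$, and sum with weights $\CD\langle i\rangle$. Your write-up is a bit more explicit about the sign convention and the $i=0,k$ edge cases, but there is no substantive difference.
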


\begin{proof}
Recall that we defined $\epsilon_{i,k} = -1+\frac{2i}k$. By definition, $\mu(\CD) = \E_{\vecb\sim\CD}[b_1]$. We use linearity of expectation; the contribution of weight-$i$ vectors to $\mu(\CD)$ is $\CD\langle i \rangle \cdot \frac1k (i \cdot 1 + (k-i) \cdot (-1)) = \epsilon_{i,k} \,\CD\langle i \rangle$.
\end{proof}

\begin{example}\label{ex:one-wise-indep-sym}
In \cref{ex:one-wise-indep}, we showed that if a predicate $f : \BZ_2^k \to \{0,1\}$ supports one-wise independence, i.e., there exists $\CD_Y \in \Delta(\BZ_2^k)$ supported on $\supp(f)$ such that $\vecmu(\CD_Y)=\veczero$, then $\mbf$ is approximation-resistant for $\sqrt n$-space streaming algorithms. \cite[Lemma 2.14]{CGSV21-boolean} shows that if $f = f_{S,k}$ is symmetric, this condition is also \emph{necessary} for $\sqrt n$-space streaming approximation-resistance. By \cref{prop:bool-sym}, this condition is equivalent to the existence of a \emph{symmetric} distribution $\CD_Y \in \Delta_k$ supported on $S$ (in the sense that $\CD_Y\langle s \rangle = 0$ for all $s \not\in S$) such that $\mu(\CD_Y) = 0$.

Now if $k$ is even and $k/2 \in S$, then the distribution $\CD_Y \in \Delta_k$ with $\CD_Y \langle k/2\rangle = 1$ has $\mu(\CD_Y) = 0$ (by \cref{lemma:sym-bool-mu}) and is supported on $S$; thus, $\mbfSk$ is streaming approximation-resistant. Moreover, if $S$ contains elements $s \leq k/2$ and $t \geq k/2$, we can let $\delta = \frac{\epsilon_{t,k}}{\epsilon_{t,k}-\epsilon_{s,k}}$ (note that $\epsilon_{t,k}> 0$ and $\epsilon_{s,k} < 0$), and let $\CD_Y \in \Delta_k$ be defined by $\CD_Y\langle s \rangle = \delta$ and $\CD_Y\langle t \rangle = 1-\delta$. Then again, $\mu(\CD_Y) = 0$ and $\CD_Y$ is supported on $S$, so $\mbfSk$ is streaming approximation-resistant.
\end{example}

Given \cref{ex:one-wise-indep-sym}, in the remainder of this chapter we focus on the case where all elements of $S$ are either larger than or smaller than $k/2$. Note also that if $S' = \{k-s : s \in S\}$, every instance of $\mbfSk$ can be viewed as an instance of $\mbcsp[f_{S',k}]$ with the same value, since for any constraint $C=(\vecb,\vecj,w)$ and assignment $\vecx \in \BZ_2^n$, we have $f_{S,k}(\vecb+ \vecx\vert_\vecj)$ = $f_{S',k}(\vecb+ (\vecx+\vecone)\vert_\vecj)$. Thus, we further narrow our focus to the case where every element of $S$ is larger than $k/2$.

\section{Techniques}\label{sec:techniques}

\subsection{Formulations of the optimization problem}\label{sec:cgsv-opt}

In order to show that $\alpha(\twoand) = \frac49$, Chou, Golovnev, Sudan, and Velusamy~\cite[Example 1]{CGSV21-boolean} use the following reformulation of the optimization problem from \cref{cor:cgsv-bool-approx}. For a symmetric predicate $f_{S,k}$ and $\mu \in [-1,1]$, let
\begin{equation}\label{eqn:beta_Sk-gamma_Sk-def}
    \beta_{S,k}(\mu) = \inf_{\CD_N \in \Delta_k:~\mu(\CD_N) = \mu} \beta_S(\CD_N) \text{ and } \gamma_{S,k}(\mu) = \sup_{\CD_Y \in \Delta_k:~\mu(\CD_Y) = \mu} \gamma_S(\CD_Y);
\end{equation}
then
\begin{equation}\label{eqn:alpha-optimize-over-mu}
    \alpha(f_{S,k}) = \inf_{\mu \in [-1,1]} \left(\frac{\beta_{S,k}(\mu)}{\gamma_{S,k}(\mu)}\right).
\end{equation}

The optimization problem on the right-hand side of \cref{eqn:alpha-optimize-over-mu} appears simpler than that of \cref{cor:cgsv-bool-approx} because it is univariate, but there is a hidden difficulty: Finding an explicit solution requires giving explicit formulas for $\beta_{S,k}(\mu)$ and $\gamma_{S,k}(\mu)$. In the case of $\twoand = f_{\{2\},2}$, Chou, Golovnev, Sudan, and Velusamy~\cite{CGSV21-boolean} first show that $\gamma_{\{2\},2}(\mu)$ is a linear function in $\mu$. Then, to find $\beta_{S,k}(\mu)$, they maximize the quadratic $\lambda_{\{2\}}(\CD_N,p)$ over $p \in [0,1]$ to find $\beta_{\{2\}}(\CD_N)$ (see \cref{ex:cgsv-2and}), and then optimize over $\CD_N$ such that $\mu(\CD_N)=\mu$ to find $\beta_{\{2\},2}(\mu)$.

While we'll see in \cref{sec:lambda-gamma-formulas} below that $\gamma_{S,k}(\mu)$ is piecewise linear in $\mu$ for all symmetric predicates $f_{S,k}$, we do not know how to find closed forms for $\beta_{S,k}(\mu)$ even for $\threeand$ (though $\gamma_{S,k}(\mu)$ is in general a piecewise linear function of $\mu$, see \cref{lemma:sym-bool-gamma} below). Thus, in this work we introduce a different formulation of the optimization problem:

\begin{equation}\label{eqn:alpha-optimize-over-dn}
    \alpha(f_{S,k}) =  \inf_{\CD_N \in \Delta_k} \left(\frac{\beta_S(\CD_N)}{\gamma_{S,k}(\mu(\CD_N))}\right).
\end{equation}

We view optimizing directly over $\CD_N \in \Delta_k$ as an important conceptual switch. In particular, our formulation emphasizes the calculation of $\beta_{S}(\CD_N)$ as the centrally difficult feature, yet we can still take advantage of the relative simplicity of calculating $\gamma_{S,k}(\mu)$.

\subsection{Our contribution: The max-min method}\label{sec:max-min}

\emph{A priori}, solving the optimization problem on the right-hand side of \cref{eqn:alpha-optimize-over-dn} still requires calculating $\beta_S(\CD_N)$, which involves maximizing a degree-$k$ polynomial. To get around this difficulty, we have made a key discovery, which was not noticed by Chou, Golovnev, Sudan, and Velusamy~\cite{CGSV21-boolean} even in the $\twoand$ case (see \cref{rem:cgsv-vs-bhp-2and}). Let $\CD_N^*$ minimize the right-hand side of \cref{eqn:alpha-optimize-over-dn}, and $p^*$ maximize $\lambda_S(\CD_N^*,\cdot)$. After substituting $ \beta_S(\CD) = \sup_{p \in [0,1]} \lambda_S (\CD,p)$ in \cref{eqn:alpha-optimize-over-dn}, and applying the max-min inequality, we get
\begin{equation}
\begin{aligned}
    \alpha(f_{S,k})  = \inf_{\CD_N \in \Delta_k}\sup_{p\in[0,1]} \left(\frac{\lambda_S(\CD_N,p)}{\gamma_{S,k}(\mu(\CD_N))}\right)
    &\geq \sup_{p\in[0,1]}  \inf_{\CD_N \in \Delta_k} \left(\frac{\lambda_S(\CD_N,p)}{\gamma_{S,k}(\mu(\CD_N))}\right)
    \\
    & \ge \inf_{\CD_N \in \Delta_k} \left(\frac{\lambda_S(\CD_N,p^*)}{\gamma_{S,k}(\mu(\CD_N))}\right)\, .\label{eqn:max-min}
\end{aligned}
\end{equation}

Given $p^*$, the right-hand side of \cref{eqn:max-min} is relatively easy to calculate, being a ratio of a linear and piecewise linear function of $\CD_N$. Our discovery is that, in a wide variety of cases, the quantity on the right-hand side of \cref{eqn:max-min} \emph{equals} $\alpha(f_{S,k})$; that is, $(\CD_N^*,p^*)$ is a \emph{saddle point} of $\frac{\lambda_S(\CD_N,p)}{\gamma_{S,k}(\mu(\CD_N))}$.\footnote{This term comes from the optimization literature; such points are also said to satisfy the ``strong max-min property'' (see, e.g., \cite[pp. 115, 238]{BV04}). The saddle-point property is guaranteed by von Neumann's minimax theorem for functions which are concave and convex in the first and second arguments, respectively, but this theorem and the generalizations we are aware of do not apply even to $\threeand$.}

This yields a novel technique, which we call the ``max-min method'', for finding a closed form for $\alpha(f_{S,k})$. First, we guess $\CD_N^*$ and $p^*$, and then, we show analytically that $\frac{\lambda_S(\CD_N,p)}{\gamma_{S,k}(\mu(\CD_N))}$ has a saddle point at $(\CD_N^*,p^*)$ and that $\lambda_S(\CD_N,p)$ is maximized at $p^*$. These imply that $\frac{\lambda_S(\CD_N^*,p^*)}{\gamma_{S,k}(\mu(\CD_N^*))}$ is a lower and upper bound on $\alpha(f_{S,k})$, respectively. For instance, in \cref{sec:kand-analysis}, in order to give a closed form for $\alpha(\kand)$ for odd $k$ (i.e., the odd case of \cref{thm:kand-approximability}), we guess $\CD_N^*\langle (k+1)/2 \rangle=1$ and $p^* = \frac{k+1}{2k}$ (by using Mathematica for small cases), and then check the saddle-point and maximization conditions in two separate lemmas (\cref{lemma:kand-lb,lemma:kand-ub}, respectively). Then, we show that $\alpha(\kand) = \alpha'_k$ by analyzing the right hand side of the appropriate instantiation of \cref{eqn:max-min}. We use similar techniques for $\kand$ for even $k$ (also \cref{thm:kand-approximability}) and for various other cases in \cref{sec:other-analysis,sec:k-1-k-analysis,sec:k+1/2-analysis}.

In all of these cases, the $\CD_N^*$ we construct is supported on at most two distinct Hamming weights, which is the property which makes finding $\CD_N^*$ tractable (using computer assistance). However, this technique is not a ``silver bullet'': it is not the case that the sketching approximability of every symmetric Boolean CSP can be exactly calculated by finding the optimal $\CD_N^*$ supported on two elements and using the max-min method. Indeed, (as mentioned in \cref{sec:other-analysis}) we verify using computer assistance that this is not the case for $f_{\{3\},4}$.

Finally, we remark that the saddle-point property is precisely what defines the value $p^*$ required for our simple classical algorithm for outputting approximately optimal assignments for $\mbTh$ where $f_{S,k} = \Th^t_k$ is a threshold function (see \cref{thm:thresh-bias-output-alg}).

To actually carry out the max-min method, we rely on the following simple inequality for optimizing ratios of linear functions:

\begin{proposition}\label{prop:lin-opt}
Let $f:\BR^n \to \BR$ be defined by the equation $f(\vecx) = \frac{\veca \cdot \vecx}{\vecb\cdot \vecx}$ for some $\veca,\vecb \in \BR_{\geq 0}^n$. For every $\vecy(1),\ldots,\vecy(r) \in \BR_{\geq 0}^n$, and every $\vecx = \sum_{i=1}^r \alpha_i \vecy(i)$ with each $x_i \geq 0$, we have $ f(\vecx) \geq \min_i f(\vecy(i))$. In particular, taking $r = n$ and $\vecy(1),\ldots,\vecy(n)$ as the standard basis for $\BR^n$, for every $\vecx \in \BR_{\geq 0}^n$, we have $f(\vecx) \geq \min_i \frac{a_i}{b_i}$.
\end{proposition}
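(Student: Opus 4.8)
\textbf{Proof plan for \cref{prop:lin-opt}.} The statement is an elementary ``mediant inequality'' for ratios of nonnegative linear functionals, and the plan is to prove the general version first and then specialize. The key observation is that for nonnegative reals, if $\frac{p_1}{q_1}, \ldots, \frac{p_r}{q_r}$ are fractions with $q_i > 0$ and $m = \min_i \frac{p_i}{q_i}$, then $p_i \geq m q_i$ for every $i$, so for any nonnegative coefficients $\alpha_1, \ldots, \alpha_r$ (not all zero, and chosen so that the relevant denominator is positive) we have $\sum_i \alpha_i p_i \geq m \sum_i \alpha_i q_i$, whence $\frac{\sum_i \alpha_i p_i}{\sum_i \alpha_i q_i} \geq m$.

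\emph{First}, I would set $p_i \eqdef \veca \cdot \vecy(i)$ and $q_i \eqdef \vecb \cdot \vecy(i)$ for each $i \in [r]$; since $\veca, \vecb \in \BR_{\geq 0}^n$ and $\vecy(i) \in \BR_{\geq 0}^n$, both are nonnegative. \emph{Second}, given $\vecx = \sum_{i=1}^r \alpha_i \vecy(i)$ with all $x_j \geq 0$, I would use bilinearity of the dot product to write $\veca \cdot \vecx = \sum_{i=1}^r \alpha_i p_i$ and $\vecb \cdot \vecx = \sum_{i=1}^r \alpha_i q_i$. \emph{Third}, writing $m = \min_i f(\vecy(i)) = \min_i \frac{p_i}{q_i}$ (where we may assume each $q_i > 0$, as otherwise $f(\vecy(i))$ is not well-defined and that index can be discarded, or interpreted as $+\infty$), the bound $p_i \geq m q_i$ holds termwise; multiplying by $\alpha_i \geq 0$ and summing gives $\veca \cdot \vecx \geq m \, (\vecb \cdot \vecx)$, and dividing by $\vecb \cdot \vecx > 0$ yields $f(\vecx) \geq m$. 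The final sentence of the proposition is just the case $r = n$, $\vecy(i) = \vece_i$ the $i$-th standard basis vector, for which $p_i = a_i$ and $q_i = b_i$, and where the condition that all $x_j \geq 0$ is automatic for any $\vecx \in \BR_{\geq 0}^n$ with $\alpha_i = x_i$.

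\emph{The main obstacle} — really the only subtlety — is bookkeeping about denominators vanishing: one must either assume $\vecb \cdot \vecx > 0$ and each $\vecb \cdot \vecy(i) > 0$ as a standing hypothesis (which is the implicit convention whenever $f$ is ``defined by the equation $f(\vecx) = \frac{\veca\cdot\vecx}{\vecb\cdot\vecx}$''), or handle the degenerate cases by convention. Since in all our applications the relevant denominator $\gamma_{S,k}(\mu(\CD_N))$ is strictly positive (it is at least $\rho(f_{S,k}) > 0$ on the relevant domain), this causes no difficulty, and I would simply state the nondegeneracy as part of the hypothesis. No heavier machinery is needed.
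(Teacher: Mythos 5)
Your argument is essentially the paper's: both rest on the termwise mediant inequality $\veca\cdot\vecy(i) \geq m\, \vecb\cdot\vecy(i)$ with $m = \min_i f(\vecy(i))$, multiplied by $\alpha_i \geq 0$ and summed. The only cosmetic difference is that the paper first reduces the general case to the standard-basis case and then proves that, whereas you prove the general case directly and specialize; your explicit remark about nondegenerate denominators is a small extra care the paper leaves implicit.
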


\begin{proof}
Firstly, we show that it suffices WLOG to take the special case where $r=n$ and $\vecy(1),\ldots,\vecy(n)$ is the standard basis for $\BR^n$. Indeed, assume the special case and note that for a general case, we can let $\veca'=(\veca\cdot\vecy(1),\ldots,\veca\cdot\vecy(r))$, $\vecb'=(\vecb\cdot\vecy(1),\ldots,\vecb\cdot\vecy(r))$, $\vecx'=(x_1,\ldots,x_r)$, and let $\vecy'(1),\ldots,\vecy'(r)$ be the standard basis for $\BR^r$. Then $\vecx' = \sum_{i=1}^r \alpha_i \vecy'(i)$ and \[ f(\vecx) = \frac{\sum_{i=1}^r (\veca \cdot \vecy(i)) \alpha_i}{\sum_{i=1}^r (\vecb \cdot \vecy(i)) \alpha_i} = \frac{\veca' \cdot \vecx'}{\vecb' \cdot \vecx'} \geq \min_{i \in [r]} \frac{\veca' \cdot \vecy'(i)}{\vecb' \cdot \vecy'(i)} = \min_{i\in[r]} \frac{\veca \cdot \vecy(i)}{\vecb \cdot \vecy(i)}. \]

Now we prove the special case: Assume $r=n$ and $\vecy(1),\ldots,\vecy(n)$ is the standard basis for $\BR^n$. We have $f(\vecy(i)) = \frac{a_i}{b_i}$. Assume WLOG that $f(\vecy(1)) = \min \{f(\vecy(i)): i \in [n]\}$, i.e., $\frac{a_1}{b_1} \leq \frac{a_i}{b_i}$ for all $i \in [n]$. Then $a_i \geq \frac{a_1b_i}{b_1}$ for all $i \in [n]$, so \[ \veca \cdot \vecx \geq \sum_{i=1}^n \frac{a_1b_i}{b_1} \alpha_i = \frac{a_1}{b_1} (\vecb \cdot \vecx). \] Hence \[ f(\vecx) = \frac{\veca \cdot \vecx}{\vecb \cdot \vecx} \geq \frac{a_1}{b_1} = f(\vecy(1)), \] as desired.
\end{proof}

\subsection{Streaming lower bounds}\label{sec:cgsv-streaming-failure-3and-overview}

Given this setup, we also can state our results on \cite{CGSV21-boolean}'s streaming lower bounds' applicability (or lack thereof) to $\mthreeand$:

\begin{theorem}\label{thm:cgsv-streaming-failure-3and}
There is no infinite sequence $(\CD_Y^{(1)},\CD_N^{(1)}),(\CD_Y^{(2)},\CD_N^{(2)}),\ldots$ of padded one-wise pairs on $\Delta_3$ such that \[ \lim_{t \to \infty} \frac{\beta_{\{3\}}(\CD_N^{(t)})}{\gamma_{\{3\}}(\CD_Y^{(t)})} = \frac29. \]
\end{theorem}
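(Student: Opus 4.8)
The plan is to argue by contradiction: suppose such a sequence of padded one-wise pairs $(\CD_Y^{(t)},\CD_N^{(t)})$ exists with $\beta_{\{3\}}(\CD_N^{(t)})/\gamma_{\{3\}}(\CD_Y^{(t)}) \to \tfrac29$. By \cref{prop:bool-sym}, we may assume each pair lies in $\Delta_3$ (symmetrization preserves the padded one-wise property and the values of $\beta_{\{3\}},\gamma_{\{3\}}$). A padded one-wise pair decomposes as $\CD_N^{(t)} = \eta_t \CD_0^{(t)} + (1-\eta_t)\CD_N'^{(t)}$ and $\CD_Y^{(t)} = \eta_t \CD_0^{(t)} + (1-\eta_t)\CD_Y'^{(t)}$ with $\CD_N'^{(t)},\CD_Y'^{(t)}$ one-wise independent, i.e. $\mu(\CD_N'^{(t)}) = \mu(\CD_Y'^{(t)}) = 0$. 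Since $\Delta_3$ and $[0,1]$ are compact, pass to a convergent subsequence so that $\eta_t \to \eta^*$, $\CD_0^{(t)} \to \CD_0^*$, $\CD_N'^{(t)} \to \CD_N'^*$, $\CD_Y'^{(t)} \to \CD_Y'^*$; by continuity of $\mu$ and of the $\lambda,\beta,\gamma$ functionals (which are polynomial, hence continuous, in the distribution), the limit $(\CD_Y^*,\CD_N^*)$ is again a padded one-wise pair achieving $\beta_{\{3\}}(\CD_N^*)/\gamma_{\{3\}}(\CD_Y^*) = \tfrac29$ \emph{exactly}. So it suffices to show that \emph{no} padded one-wise pair on $\Delta_3$ attains the ratio $\tfrac29 = \alpha(\threeand)$.

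The next step is to characterize one-wise independent symmetric distributions on $\Delta_3$. By \cref{lemma:sym-bool-mu}, $\mu(\CD) = \sum_{i=0}^3 \epsilon_{i,3}\CD\langle i\rangle$ with $\epsilon_{0,3}=-1, \epsilon_{1,3}=-\tfrac13, \epsilon_{2,3}=\tfrac13, \epsilon_{3,3}=1$; so $\mu(\CD)=0$ forces $\CD\langle 3\rangle - \CD\langle 0\rangle = \tfrac13(\CD\langle 1\rangle - \CD\langle 2\rangle)$, a one-parameter family once we fix the total mass, i.e. a two-dimensional polytope inside the $3$-simplex $\Delta_3$. I would then write $\CD_0^*\langle i\rangle = c_i$, $\CD_N'^*\langle i\rangle = n_i$, $\CD_Y'^*\langle i\rangle = y_i$ with $\sum c_i = \sum n_i = \sum y_i = 1$, $n_3 - n_0 = \tfrac13(n_1-n_2)$, $y_3 - y_0 = \tfrac13(y_1-y_2)$, and substitute into $\CD_N^* = \eta^* \CD_0^* + (1-\eta^*)\CD_N'^*$, $\CD_Y^* = \eta^*\CD_0^* + (1-\eta^*)\CD_Y'^*$. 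Using the explicit formulas for $\lambda_{\{3\}}(\CD,p)$ and $\gamma_{\{3\}}(\CD)$ from \cref{sec:lambda-gamma-formulas} (to be derived there; $\gamma_{\{3\}}(\CD) = \CD\langle 3\rangle$ and $\lambda_{\{3\}}(\CD,p)$ is the explicit degree-$3$ polynomial $\sum_i \CD\langle i\rangle \binom{3}{i}p^i(1-p)^{3-i}$ evaluated against the negation structure of $\threeand$), one gets $\beta_{\{3\}}(\CD_N^*) = \sup_{p\in[0,1]}\lambda_{\{3\}}(\CD_N^*,p)$ and $\gamma_{\{3\}}(\CD_Y^*)$ as concrete functions of the parameters. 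The goal is to show $\beta_{\{3\}}(\CD_N^*)/\gamma_{\{3\}}(\CD_Y^*) > \tfrac29$ for every choice of these parameters (with $\gamma_{\{3\}}(\CD_Y^*) > 0$).

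I expect the main obstacle to be exactly this last step: ruling out the ratio $\tfrac29$ over the whole parameter family of padded one-wise pairs, rather than just over arbitrary matching-marginal pairs. The point is subtle because $\alpha(\threeand) = \tfrac29$ \emph{is} attained by some pair with matching marginals (namely $\CD_N^*$ supported on Hamming weight $2$, from the max-min analysis in \cref{sec:kand-analysis}), but that optimal $\CD_N^*$ is \emph{not} of padded one-wise form — the restriction to padded one-wise pairs strictly shrinks the feasible set. So I would (i) identify all pairs $(\CD_Y,\CD_N)$ with $\mu(\CD_N)=\mu(\CD_Y)$ achieving ratio $\tfrac29$ — from the max-min saddle-point argument this locus is small, essentially pinned down by the equality conditions in \cref{prop:lin-opt} applied at $p^* = \tfrac23$ — and (ii) check that none of these lies in the padded one-wise subvariety, equivalently that writing $\CD_N = \eta\CD_0 + (1-\eta)\CD_N'$ with $\mu(\CD_N')=0$ is impossible for the relevant $\CD_N$ (e.g. because $\CD_N$ is supported on a single Hamming weight $\neq k/2$, so the only decomposition has $\eta = 1$, forcing $\CD_Y = \CD_N$, which then fails $\gamma_{\{3\}}(\CD_Y) > \beta_{\{3\}}(\CD_N)\cdot(9/2)$). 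This reduces to a finite check of boundary cases plus a continuity/compactness argument that the infimum over the padded one-wise subvariety is \emph{strictly} above $\tfrac29$; the computer-assisted algebra in the accompanying Mathematica notebook can discharge the remaining polynomial inequalities. This also matches the paper's stated conclusion that the padded one-wise criterion only yields the weaker bound $0.2362$ rather than $\tfrac29$.
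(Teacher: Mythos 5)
Your high-level skeleton matches the paper's: use compactness and continuity to reduce the limiting-ratio statement to showing that \emph{no} padded one-wise pair on $\Delta_3$ exactly attains ratio $\frac29$, and then argue that the unique minimizer $\CD_N^*$ is not of padded one-wise form. The paper packages the first part into the abstract topological \cref{lemma:3and-top-lemma}, but your passage to a convergent subsequence of $(\eta_t, \CD_0^{(t)}, \CD_N'^{(t)}, \CD_Y'^{(t)})$ is the same idea and is sound.

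The genuine gap is in step (i), where you propose to ``identify all pairs achieving ratio $\frac29$ … from the max-min saddle-point argument … essentially pinned down by the equality conditions in \cref{prop:lin-opt} applied at $p^* = \tfrac23$.'' This does \emph{not} pin down a unique (or even small) locus, and the paper explicitly notes this and routes around it. With $p^*=\tfrac23$ fixed, the ratio $\lambda_{\{3\}}(\CD,\tfrac23)/\gamma_{\{3\},3}(\mu(\CD))$ equals $\frac29$ at both vertices $\CD_1=(0,1,0,0)$ and $\CD_2=(0,0,1,0)$, hence on the entire segment between them. In particular it equals $\frac29$ at $\CD=(0,\tfrac12,\tfrac12,0)$, which is one-wise independent ($\mu=0$) and therefore is trivially part of a padded one-wise pair (take $\eta=0$). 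So the fixed-$p^*$ lower bound is tight on a set that \emph{does} meet the padded one-wise subvariety, and your step (ii) would fail: you would not be able to conclude anything about the actual ratio at those points from the max-min inequality alone. (Of course the \emph{true} $\beta_{\{3\}}(\cdot)/\gamma_{\{3\},3}(\mu(\cdot))$ at $(0,\tfrac12,\tfrac12,0)$ is $\tfrac14>\tfrac29$, but that is exactly what the fixed-$p^*$ max-min inequality is too weak to certify.) The paper's fix (\cref{lemma:3and-unique-minimum}) is to lower-bound $\beta_{\{3\}}(\CD)$ by evaluating $\lambda_{\{3\}}(\CD,\cdot)$ not at the fixed $p^*=\tfrac23$ but at the $\CD$-dependent point $p(\CD) = \tfrac13\CD\langle1\rangle + \tfrac23\CD\langle2\rangle + \CD\langle3\rangle$ (the convex combination of the per-vertex maximizers $0,\tfrac13,\tfrac23,1$), which \emph{does} yield $\CD_N^*=(0,0,1,0)$ as the unique minimizer, verified by computer algebra; this is the new idea your proposal is missing. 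A small secondary error: the formula $\lambda_{\{3\}}(\CD,p)=\sum_i \CD\langle i\rangle\binom{3}{i}p^i(1-p)^{3-i}$ has a spurious binomial coefficient; the correct form is $\sum_i \CD\langle i\rangle p^i (1-p)^{3-i}$ (cf. \cref{eqn:kand-lambda}).
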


Yet we still can achieve decent bounds using padded one-wise pairs:

\begin{observation}\label{obs:cgsv-streaming-3and-lb}
The padded one-wise pair $\CD_N=(0,0.45,0.45,0.1),\CD_Y=(0.45,0,0,0.55)$ (discovered by numerical search) \emph{does} prove a streaming approximability upper bound of $\approx .2362$ for $\threeand$, which is still quite close to $\alpha(\threeand)=\frac29$.
\end{observation}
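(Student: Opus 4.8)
\textbf{Proof proposal for \cref{obs:cgsv-streaming-3and-lb}.}

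The plan is to directly verify that the stated pair $(\CD_N,\CD_Y)$ with $\CD_N = (0, 0.45, 0.45, 0.1)$ and $\CD_Y = (0.45, 0, 0, 0.55)$ (written in the $\langle 0 \rangle,\langle 1\rangle,\langle 2\rangle,\langle 3 \rangle$ coordinates on $\Delta_3$) is a padded one-wise pair, and then to compute $\beta_{\{3\}}(\CD_N)$ and $\gamma_{\{3\}}(\CD_Y)$ and check that their ratio is $\leq 0.2362$. By \cref{thm:cgsv-streaming-lb} (in its symmetric instantiation via \cref{prop:bool-sym}), this immediately yields the claimed streaming approximability upper bound for $\mthreeand$. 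So there are really three things to do: confirm the padded one-wise structure, evaluate $\gamma_{\{3\}}$, and evaluate $\beta_{\{3\}}$.

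First I would check the padded one-wise pair condition. By \cref{lemma:sym-bool-mu}, a symmetric $\CD \in \Delta_3$ has $\mu(\CD) = \sum_{i=0}^3 \epsilon_{i,3}\CD\langle i\rangle$ with $\epsilon_{i,3} = -1 + \tfrac{2i}{3}$, i.e. $(\epsilon_{0,3},\epsilon_{1,3},\epsilon_{2,3},\epsilon_{3,3}) = (-1,-\tfrac13,\tfrac13,1)$. One computes $\mu(\CD_N) = -\tfrac13(0.45) + \tfrac13(0.45) + 1(0.1) = 0.1$ and $\mu(\CD_Y) = -1(0.45) + 1(0.55) = 0.1$, so the marginals match. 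To exhibit the padding, I would write $\CD_N = \eta\CD_0 + (1-\eta)\CD_N'$ and $\CD_Y = \eta\CD_0 + (1-\eta)\CD_Y'$ with $\CD_N', \CD_Y'$ one-wise independent (i.e. $\mu = 0$), using the recipe from \cref{ex:one-wise-indep-sym}: take $\eta = 0.1$, let $\CD_0$ be the point mass on Hamming weight $3$, let $\CD_N' = \tfrac12(\text{weight }1) + \tfrac12(\text{weight }2)$ (which has $\mu = \tfrac12(-\tfrac13) + \tfrac12(\tfrac13) = 0$), and let $\CD_Y' = \tfrac12(\text{weight }0) + \tfrac12(\text{weight }3)$ (which has $\mu = 0$). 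Then $\eta\CD_0 + (1-\eta)\CD_N' = (0, 0.45, 0.45, 0.1)$ and $\eta\CD_0 + (1-\eta)\CD_Y' = (0.45, 0, 0, 0.55)$, as needed, so the pair is indeed a padded one-wise pair.

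Next, evaluating $\gamma_{\{3\}}(\CD_Y) = \lambda_{\{3\}}(\CD_Y, 1) = \gamma_{\{3\}}$ of the canonical instance: by definition $\gamma_{\{3\}}(\CD) = \CD\langle 3\rangle$ (this is the $\kand$ analogue of \cref{eqn:2and-gamma} — assigning all-ones to the canonical instance satisfies exactly the weight-$3$ constraints), so $\gamma_{\{3\}}(\CD_Y) = 0.55$. For $\beta_{\{3\}}(\CD_N) = \sup_{p\in[0,1]} \lambda_{\{3\}}(\CD_N, p)$, I would use the explicit polynomial form of $\lambda_S$: for $\mathrm{3AND}$, perturbing the all-ones assignment on the canonical instance of $\CD_N$ by $\Bern_p$ noise satisfies a weight-$i$ constraint with probability $p^i(1-p)^{3-i}$ summed appropriately — concretely $\lambda_{\{3\}}(\CD_N, p) = \sum_{i=0}^3 \CD_N\langle i\rangle \cdot (\text{prob. a weight-}i\text{ constraint is satisfied})$; since the satisfying assignment of a weight-$i$ constraint differs from all-ones in $3-i$ coordinates, this is $\sum_i \CD_N\langle i\rangle \, p^i (1-p)^{3-i}$, giving $\lambda_{\{3\}}(\CD_N,p) = 0.45\,p(1-p)^2 + 0.45\,p^2(1-p) + 0.1\,p^3 = 0.45\,p(1-p) + 0.1\,p^3$. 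Maximizing this cubic over $p\in[0,1]$ (differentiate: $0.45(1-2p) + 0.3p^2 = 0$) gives the optimal $p^*$ and hence $\beta_{\{3\}}(\CD_N)$; numerically this comes out to $\approx 0.1299$, so $\beta_{\{3\}}(\CD_N)/\gamma_{\{3\}}(\CD_Y) \approx 0.1299/0.55 \approx 0.2362$, which is the claimed bound. Then \cref{thm:cgsv-streaming-lb} finishes the proof.

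I do not expect any genuine obstacle here — this observation is a pure verification, so the only ``hard'' part is making sure the formula for $\lambda_{\{3\}}$ is correctly derived from \cref{eqn:cgsv-bool-lgb} (orientation of negations, which Hamming weight plays the role of ``satisfying''), and doing the cubic optimization carefully enough to get the $0.2362$ digits right; this is exactly the kind of routine algebra the paper elsewhere offloads to Mathematica, and one could cite the accompanying notebook for the numerics.
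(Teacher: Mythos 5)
Your verification is correct and is essentially the only possible approach — the paper states this observation without a separate proof, just asserting that the pair (found by numerical search) gives the bound, so directly checking the padded one-wise structure, evaluating $\gamma_{\{3\}}$, and optimizing $\lambda_{\{3\}}(\CD_N,\cdot)$ over $p$ is exactly what is required. Your decomposition with $\eta = 0.1$, $\CD_0 = $ point mass on weight $3$, $\CD_N' = \Unif\{\text{wt }1,\text{wt }2\}$, $\CD_Y' = \Unif\{\text{wt }0,\text{wt }3\}$ is valid, and the cubic $0.45p(1-p) + 0.1p^3$ is the correct specialization of \cref{lemma:sym-bool-lambda} to $\CD_N$; its maximum at $p^* = (3-\sqrt3)/2 \approx 0.634$ gives $\beta_{\{3\}}(\CD_N) \approx 0.1299$, and $0.1299/0.55 \approx 0.2362$ as claimed.

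One small caution on the intuition you attach to the formulas (not to the formulas themselves, which are right): the paper's convention has $\Bern_p$ take value $0$ with probability $p$, so $\gamma_f(\CD) = \lambda_f(\CD,1) = \val_{\Psi^{\CD}}(\veczero)$ is the value of the \emph{all-zeros} assignment on the canonical instance (see the paper's definition in \cref{eqn:2and-gamma}), and a weight-$i$ negation pattern $\vecb$ has satisfying assignment $\vecone + \vecb$, which differs from $\vecone$ in $i$ coordinates, not $3-i$. Your written heuristic (``perturb all-ones; the satisfying assignment differs in $3-i$ coordinates'') is off on both counts, yet you land on the correct polynomial $\sum_i p^i(1-p)^{3-i}\CD\langle i\rangle$ because the two sign errors cancel. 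For this observation it doesn't matter, but it's worth fixing the convention in your head before tackling, say, the $\Th^{k-1}_k$ analysis where $\lambda_S$ is not a single monomial in each $\CD\langle i\rangle$ and the cancellation wouldn't happen automatically.
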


\cref{thm:cgsv-streaming-failure-3and} is proven formally in \cref{sec:cgsv-streaming-failure-3and}; here is a proof outline:

\begin{proof}[Proof outline]
As discussed in \cref{sec:max-min}, since $k=3$ is odd, to prove \cref{thm:kand-approximability} we show, using the max-min method, that $\CD_N^* = (0,0,1,0)$ minimizes $\frac{\beta_{\{3\}}(\cdot)}{\gamma_{\{3\},3}(\mu(\cdot))}$. We can show that the corresponding $\gamma_{\{3\},3}$ value is achieved by $\CD_Y^* = (\frac13,0,0,\frac23)$. In particular, $(\CD_N^*,\CD_Y^*)$ are not a padded one-wise pair.

We can show that the minimizer of $\gamma_{\{3\}}$ for a particular $\mu$ is in general unique. Hence, it suffices to furthermore show that $\CD_N^*$ is the \emph{unique} minimizer of $\frac{\beta_{\{3\}}(\cdot)}{\gamma_{\{3\},3}(\mu(\cdot))}$. For this purpose, the max-min method is not sufficient because $\frac{\lambda_{\{3\}}(\cdot,p^*)}{\gamma_{\{3\},3}(\mu(\cdot))}$ is not uniquely minimized at $\CD_N^*$ (where we chose $p^* = \frac23$). Intuitively, this is because $p^*$ is not a good enough estimate for the maximizer of $\lambda_{\{3\}}(\CD_N,\cdot)$. To remedy this, we observe that $\lambda_{\{3\}}((1,0,0,0),\cdot),\lambda_{\{3\}}((0,1,0,0),\cdot)$, $\lambda_{\{3\}}((0,0,1,0),\cdot)$ and $\lambda_{\{3\}}((0,0,0,1),\cdot)$ are minimized at $0,\frac13,\frac23$, and $1$, respectively. Hence, we instead lower-bound $\lambda_{\{3\}}(\CD_N,\cdot)$ by evaluating at $\frac13 \CD_N\langle 1 \rangle + \frac23 \CD_N\langle 2 \rangle + \CD_N\langle 3 \rangle$, which does suffice to prove the uniqueness of $\CD_N^*$. The theorem then follows from continuity arguments.
\end{proof}

\section{Explicit formulas for $\lambda_S$ and $\gamma_{S,k}$}\label{sec:lambda-gamma-formulas}

In this section, we state and prove explicit formula for $\lambda_S(\CD,p)$ and $\gamma_{S,k}(\mu)$ which will be useful in later sections.

\begin{lemma}\label{lemma:sym-bool-lambda}
For any $\CD \in \Delta_k$ and $p \in [0,1]$, we have \[ \lambda_S(\CD,p) = \sum_{s \in S} \sum_{i=0}^k \left(\sum_{j=\max\{0,s-(k-i)\}}^{\min\{i,s\}} {i \choose j} {k-i \choose s-j} q^{s+i-2j} p^{k-s-i+2j} \right) \CD\langle i \rangle \] where $q \eqdef 1-p$.
\end{lemma}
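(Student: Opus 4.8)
\textbf{Proof proposal for \cref{lemma:sym-bool-lambda}.}

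The plan is to unwind the definitions of $\lambda_S$ and the canonical instance $\Psi^\CD$, and to compute the expectation directly by conditioning on the Hamming weight of a sample from $\CD$. Recall that $\lambda_S(\CD,p) = \lambda_{f_{S,k}}(\CD,p) = \E_{\veca\sim\Bern_p^k}[\val_{\Psi^\CD}(\veca)]$, where $\Psi^\CD$ is the instance of $\mbcsp[f_{S,k}]$ on $k$ variables putting weight $\CD(\vecb)$ on the constraint $((1,\ldots,k),\vecb)$ for each $\vecb\in\BZ_2^k$. Since $\val_{\Psi^\CD}(\veca) = \sum_{\vecb\in\BZ_2^k}\CD(\vecb)\,f_{S,k}(\vecb+\veca)$, we get $\lambda_S(\CD,p) = \E_{\veca\sim\Bern_p^k,\vecb\sim\CD}[f_{S,k}(\veca+\vecb)] = \E[\1_{\|\veca+\vecb\|_0\in S}]$. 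The first step is therefore to use \cref{prop:bool-sym}(i) to assume WLOG that $\CD$ is symmetric (which it is here, being in $\Delta_k$), and then condition on $\|\vecb\|_0 = i$, which has probability $\CD\langle i\rangle$; conditioned on this event, $\vecb$ is uniform over weight-$i$ strings, so by symmetry we may fix $\vecb = 1^i 0^{k-i}$.

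The second, and computational, step is to evaluate $\Pr_{\veca\sim\Bern_p^k}[\|\veca + 1^i0^{k-i}\|_0 = s]$ for each target weight $s\in S$. Writing $q = 1-p$ (and recalling the paper's convention that $\Bern_p$ takes value $0$ with probability $p$ and $1$ with probability $1-p$, so each coordinate $a_\ell$ is $1$ with probability $q$), I would split the $k$ coordinates into the first $i$ (where the bit gets flipped, contributing a $1$ to $\veca+\vecb$ iff $a_\ell = 0$) and the last $k-i$ (where the bit is unchanged, contributing a $1$ iff $a_\ell = 1$). If $j$ of the first $i$ coordinates have $a_\ell = 1$ (hence contribute $0$) and the remaining $i-j$ contribute $1$, and if $s-(i-j)$ of the last $k-i$ coordinates contribute $1$, then the total weight is $s$. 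The number of $1$'s among $\veca$'s coordinates is then $j + (s-(i-j)) = s - i + 2j$, so the probability of this configuration is $\binom{i}{j}\binom{k-i}{s-i+j}\, q^{\,s-i+2j}\, p^{\,k-s+i-2j}$. Re-indexing $j' = i-j$ (so $j'$ counts the flipped-bit positions that become $1$) converts this to the summand $\binom{i}{j}\binom{k-i}{s-j}q^{s+i-2j}p^{k-s-i+2j}$ appearing in the statement, with $j$ now ranging over $\max\{0,s-(k-i)\}\le j\le\min\{i,s\}$ to keep both binomial coefficients' arguments in range. Summing over $s\in S$ and over $i\in\{0,\dots,k\}$ with weights $\CD\langle i\rangle$, and invoking linearity of expectation, yields the claimed formula.

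The main obstacle I anticipate is purely bookkeeping: getting the index substitution and the two binomial coefficients consistent with the paper's sign/Bernoulli conventions (in particular that $\Bern_p$ outputs $1$ with probability $1-p=q$, which is why $q$ pairs with the ``number of $1$'s'' exponent), and pinning down the summation limits so that no out-of-range binomial coefficient silently contributes. There is no conceptual difficulty — von Neumann-type arguments, convexity, or the dichotomy machinery play no role here; it is a direct expansion. I would double-check the final expression against a small case (e.g. $k=2$, $S=\{2\}$, which should recover $\lambda_{\{2\}}(\CD,p) = q^2\CD\langle 0\rangle + pq\,\CD\langle 1\rangle + p^2\CD\langle 2\rangle$ from \cref{eqn:2and-lambda}, noting the paper's convention there has $q = 1-p$ and the roles of $p,q$ as stated) to make sure the exponents and limits are right before committing to the general form.
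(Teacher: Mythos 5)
Your proof is correct and follows essentially the same approach as the paper's: condition on the Hamming weight $i$ of the $\CD$-sample, use symmetry to fix the canonical representative $1^i0^{k-i}$, and enumerate the Bernoulli samples by casework on which positions contribute a $1$ to the sum. In fact your bookkeeping is tighter than the paper's own write-up, which parameterizes by $j=|A\cap B|$ and asserts $\|\veca+\vecb\|_0 = |A\cap B| + |([k]\setminus A)\cap([k]\setminus B)|$ — that is the number of \emph{agreements}, not disagreements, and taken at face value it swaps the roles of $p$ and $q$ in the final exponents; your re-indexing $j'=i-j$ together with the sanity check against \cref{eqn:2and-lambda} correctly lands on the formula as stated.
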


\begin{proof}
By linearity of expectation and symmetry, it suffices to fix $s$ and $i$ and calculate, given a fixed string $\veca = (a_1,\ldots,a_k)$ of Hamming weight $i$ and a random string $\vecb = (b_1,\ldots,b_k) \sim \Bern_p^k$, the probability of the event $\|\veca + \vecb\|_0 = s$.

Let $A = \supp(\veca) = \{t \in [k] : a_t = 1\}$ and similarly $B = \supp(\vecb)$. We have $|A| = i$ and \[ s = \|\veca + \vecb\|_0 = |A \cap B| + |([k] \setminus A) \cap ([k] \setminus B)|. \] Let $j = |A \cap B|$, and consider cases based on $j$.

Given fixed $j$, we must have $|A \cap B| = j$ and $|([k] \setminus A) \cap ([k] \setminus B)| = s-j$. Thus if $j$ satisfies $j \leq i, s-j \leq k-i,j\geq0,j\leq s$, we have $\binom{i}{j}$ choices for $A \cap B$ and $\binom{k-i}{s-j}$ choices for $([k]\setminus A) \cap ([k] \setminus B)$; together, these completely determine $B$. Moreover $\|\vecb\|_0 = |B| = |B \cap A| + |B \cap ([k] \setminus A)| = j + (k-i)-(s-j) = k - s - i + 2j$, yielding the desired formula.
\end{proof}

\begin{lemma}\label{lemma:sym-bool-gamma}
Let $S \subseteq [k]$, and let $s$ be its smallest element and $t$ its largest element (they need not be distinct). Then for $\mu \in [-1,1]$, \[\gamma_{S,k}(\mu) = \begin{cases}\frac{1+\mu}{1+\epsilon_{s,k}} & \mu \in [-1,\epsilon_{s,k}) \\
1 & \mu \in [\epsilon_{s,k},\epsilon_{t,k}] \\ \frac{1-\mu}{1-\epsilon_{t,k}} & \mu \in (\epsilon_{t,k},1] \end{cases} \] (which also equals $\min\left\{\frac{1+\mu}{1+\epsilon_{s,k}}, 1, \frac{1-\mu}{1-\epsilon_{t,k}}\right\}$).
\end{lemma}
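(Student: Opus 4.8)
The plan is to prove the formula for $\gamma_{S,k}(\mu)$ by a direct optimization argument over symmetric distributions $\CD_Y \in \Delta_k$ subject to the constraint $\mu(\CD_Y) = \mu$. Recall that $\gamma_S(\CD_Y) = \lambda_S(\CD_Y,1)$, and from \cref{lemma:sym-bool-lambda} with $p=1$, $q=0$, only the term $j = \min\{i,s\} = i$ (when $s \geq i$) or the appropriate boundary survives; more simply, $\gamma_S(\CD_Y) = \sum_{s \in S}\sum_{i=0}^k \1_{i = s}\,\CD_Y\langle i\rangle = \sum_{i \in S}\CD_Y\langle i\rangle$, i.e., $\gamma_S(\CD_Y)$ is just the total mass $\CD_Y$ places on Hamming weights lying in $S$. (Indeed this matches the intuition from \cref{sec:mdcut-template-alg}: $\gamma$ measures the value of the all-zeros assignment on the canonical instance, and a weight-$i$ string $\veca$ satisfies $f_{S,k}$ iff $i \in S$.) So the problem reduces to: maximize $\sum_{i \in S}\CD_Y\langle i\rangle$ over probability vectors $(\CD_Y\langle 0\rangle,\ldots,\CD_Y\langle k\rangle)$ subject to $\sum_{i=0}^k \epsilon_{i,k}\CD_Y\langle i\rangle = \mu$ (using \cref{lemma:sym-bool-mu}).

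The key steps are then as follows. First, observe the trivial upper bound $\gamma_{S,k}(\mu) \le 1$, which is achieved whenever we can place all mass on weights in $S$ while still hitting the marginal constraint: since $\epsilon_{i,k}$ ranges over $[\epsilon_{s,k},\epsilon_{t,k}]$ as $i$ ranges over $S$ (it need not take all intermediate values, but $\epsilon_{s,k}$ and $\epsilon_{t,k}$ are attained, and any convex combination of these two is achievable by a distribution supported on $\{s,t\} \subseteq S$), we get $\gamma_{S,k}(\mu) = 1$ for all $\mu \in [\epsilon_{s,k},\epsilon_{t,k}]$. Second, handle the case $\mu \in [-1,\epsilon_{s,k})$: here we cannot place all mass in $S$, since every $i \in S$ has $\epsilon_{i,k} \ge \epsilon_{s,k} > \mu$. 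I would argue that the optimal strategy is to put mass $\delta$ on weight $s$ (the element of $S$ with smallest $\epsilon$) and mass $1-\delta$ on weight $0$ (with $\epsilon_{0,k} = -1$, the most negative available value), solving $\delta\,\epsilon_{s,k} + (1-\delta)(-1) = \mu$ to get $\delta = \frac{1+\mu}{1+\epsilon_{s,k}}$, hence $\gamma_{S,k}(\mu) = \delta = \frac{1+\mu}{1+\epsilon_{s,k}}$. For the matching upper bound, I would use \cref{prop:lin-opt} (or an elementary exchange argument): any mass on a weight $i \in S$ with $i > s$ can be ``traded down'' — moved partly to $s$ and partly to a weight $\le k/2$ outside $S$, or directly the bound $\sum_{i\in S}\CD_Y\langle i\rangle \le \frac{1+\mu}{1+\epsilon_{s,k}}$ follows since each unit of mass in $S$ contributes at least $\epsilon_{s,k}$ to $\mu$ while mass outside contributes at least $-1$, so $\mu \ge \epsilon_{s,k}\sum_{i\in S}\CD_Y\langle i\rangle - (1 - \sum_{i\in S}\CD_Y\langle i\rangle)$, rearranging to the claim. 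The case $\mu \in (\epsilon_{t,k},1]$ is symmetric, pairing weight $t$ (largest $\epsilon$ in $S$) with weight $k$ ($\epsilon_{k,k}=1$), giving $\frac{1-\mu}{1-\epsilon_{t,k}}$. Finally, I would check that the three pieces agree at the breakpoints $\mu = \epsilon_{s,k}$ and $\mu = \epsilon_{t,k}$ (both give $1$), confirming the function equals $\min\{\frac{1+\mu}{1+\epsilon_{s,k}},\,1,\,\frac{1-\mu}{1-\epsilon_{t,k}}\}$.

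The main obstacle, such as it is, is making the upper-bound (optimality) direction airtight: one must verify that restricting to distributions supported on just two Hamming weights — one endpoint of $S$ and one extreme weight ($0$ or $k$) — is genuinely optimal, rather than some three-point distribution doing better. The cleanest way is the linear-programming observation that maximizing a linear functional over the polytope $\{\CD_Y \in \Delta_k : \mu(\CD_Y) = \mu\}$ is attained at a vertex, and vertices of this polytope (an intersection of the simplex with one hyperplane) are supported on at most two coordinates; then one enumerates which two-coordinate supports can satisfy the constraint and maximize $\sum_{i\in S}\CD_Y\langle i\rangle$, which quickly narrows to the claimed pairs. Everything else is bookkeeping with the explicit values $\epsilon_{i,k} = -1 + 2i/k$.
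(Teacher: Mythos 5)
Your proposal is correct, and the upper-bound argument you give is genuinely different from (and cleaner than) the one in the paper. The paper establishes the upper bound by an exchange argument: it defines a ``three-point redistribution'' operation (move the mass at a middle weight $v$ to $u<v<w$ with the marginal-preserving weights $\frac{w-v}{w-u}$ and $\frac{v-u}{w-u}$) and applies it repeatedly to reduce to $\supp(\CD_Y) \subseteq \{0,s,t,k\}$, then argues further to reduce to $\{0,s\}$, at which point the distribution is uniquely determined. By contrast, your one-line bound $\mu = \sum_i \epsilon_{i,k}\,\CD_Y\langle i\rangle \ge \epsilon_{s,k}\,\gamma_S(\CD_Y) - (1 - \gamma_S(\CD_Y))$, rearranging to $\gamma_S(\CD_Y) \le \frac{1+\mu}{1+\epsilon_{s,k}}$, sidesteps the redistribution machinery entirely and in fact holds for \emph{all} $\mu \in [-1,1]$ without case analysis (as does the symmetric bound $\gamma_S(\CD_Y) \le \frac{1-\mu}{1-\epsilon_{t,k}}$ and the trivial $\gamma_S(\CD_Y) \le 1$), so the $\min$ form of the statement drops out immediately; only the achievability direction then needs the three-way case split on $\mu$, which you handle exactly as the paper does (two-point distributions on $\{0,s\}$, $\{s,t\}$, $\{t,k\}$). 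Your alternative LP-vertex route at the end is also sound (the feasible set is a simplex cut by one hyperplane, so vertices have support of size $\le 2$), but the direct inequality is preferable since it avoids the subsequent case enumeration. One small nit: your parenthetical remark about which $j$ survives in \cref{lemma:sym-bool-lambda} at $p=1$ is garbled — the surviving term is $j=(s+i)/2$, which forces $s=i$ — but this doesn't matter since, as you note, $\gamma_S(\CD_Y)=\sum_{i\in S}\CD_Y\langle i\rangle$ follows directly from the definition of $\gamma_S$ as the value of $\veczero$ on the canonical instance.
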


\begin{proof}
For $\mu \in [-1,1]$, in (\cref{eqn:beta_Sk-gamma_Sk-def}) we defined \[ \gamma_{S,k}(\mu) = \sup_{\CD_Y \in \Delta_k : \mu(\CD_Y) = \mu} \gamma_S(\CD_Y), \] where by \cref{eqn:cgsv-bool-lgb}, $\gamma_S(\CD_Y) = \sum_{i \in S} \CD_Y\langle i \rangle$. For $\CD_Y \in \Delta_k$, let $\supp(\CD_Y) = \{i \in [k]:\CD_Y\langle i \rangle > 0\}$. We handle cases based on $\mu$.

\subparagraph*{Case 1: $\mu \in [-1,\epsilon_{s,k}]$.} Our strategy is to reduce to the case $\supp(\CD_Y) \subseteq \{0,s\}$ while preserving the marginal $\mu$ and (non-strictly) increasing the value of $\gamma_S$.

Consider the following operation on a distribution $\CD_Y \in \Delta_k$: For $u < v < w \in [k]$, increase $\CD_Y\langle u \rangle$ by $\CD_Y\langle v \rangle \,\frac{w-v}{w-u}$, increase $\CD_Y\langle w \rangle$ by $\CD_Y\langle v \rangle\, \frac{v-u}{w-u}$, and set $\CD_Y\langle v \rangle$ to zero. Note that this results in a new distribution with the same marginal, since \[ \CD_Y\langle v \rangle \frac{w-v}{w-u} \epsilon_{u,k} + \CD_Y\langle v \rangle \frac{v-u}{w-u} \epsilon_{w,k} = \CD_Y\langle v \rangle \,\epsilon_{v,k}. \] Given an initial distribution $\CD_Y$, we can apply this operation to zero out $\CD_Y\langle v \rangle$ for $v \in \{1,\ldots,s-1\}$ by redistributing to $\CD_Y\langle 0 \rangle$ and $\CD_Y\langle s \rangle$, preserving the marginal and only increasing the value of $\gamma_S$ (since $v \not\in S$ while $s \in S$). Similarly, we can redistribute $\CD_Y\langle v \rangle$ to $\CD_Y\langle t \rangle$ and $\CD_Y\langle k \rangle$ when $v \in \{t+1,\ldots,k-1\}$, and to $\CD_Y\langle s \rangle$ and $\CD_Y\langle t \rangle$ when $v \in \{s+1,\ldots,t-1\}$. Thus, we need only consider the case $\supp(\CD) \subseteq \{0,s,t,k\}$. We assume for simplicity that $0,s,t,k$ are distinct.

By definition of $\epsilon_{i,k}$ we have \[ \mu(\CD) = -\CD_Y\langle 0 \rangle + \CD_Y\langle s \rangle\left(-1+\frac{2s}k\right) + \CD_Y\langle t \rangle\left(-1+\frac{2t}k\right) + \CD_Y\langle k \rangle \leq -1 + \frac{2s}k \] (by assumption for this case). Substituting $\CD_Y\langle s \rangle = 1-\CD_Y\langle 0 \rangle-\CD_Y\langle t \rangle-\CD_Y\langle k \rangle$ and multiplying through by $\frac{k}2$, we have \[ k\CD_Y\langle k \rangle-s\CD_Y\langle 0 \rangle-s\CD_Y\langle t \rangle-s\CD_Y\langle k \rangle+t\CD_Y\langle t \rangle \leq 0; \] defining $\delta = \CD_Y\langle t \rangle (\frac{t}s-1)+\CD_Y\langle k \rangle(\frac{k}s-1)$, we can rearrange to get $\CD_Y\langle 0 \rangle \geq \delta$. Then given $\CD_Y$, we can zero out $\CD_Y\langle t \rangle$ and $\CD_Y\langle k \rangle$, decrease $\CD_Y\langle 0 \rangle$ by $\delta$, and correspondingly increase $\CD_Y\langle s \rangle$ by $\CD_Y\langle t \rangle+\CD_Y\langle k \rangle+\delta$. This preserves the marginal since \[ (\delta + \CD_Y\langle t \rangle + \CD_Y\langle k \rangle) \,\epsilon_{s,k} = -\delta + \CD_Y\langle t \rangle \,\epsilon_{t,k} + \CD_Y\langle k \rangle \] and can only increase $\gamma_S$.

Thus, it suffices to only consider the case $\supp(\CD_Y) \subseteq \{0,s\}$. This uniquely determines $\CD_Y$ (because $\mu$ is fixed); we have $\CD_Y\langle 0 \rangle = \frac{\epsilon_{s,k}-\mu}{\epsilon_{s,k}+1}$ and $\CD_Y\langle s \rangle = \frac{1+\mu}{\epsilon_{s,k}+1}$, yielding the desired value of $\gamma_S$.

\subparagraph*{Case 2: $\mu \in [\epsilon_{s,k},\epsilon_{t,k}]$.} We simply construct $\CD_Y$ with $\CD_Y\langle s \rangle = \frac{\epsilon_{t,k}-\mu}{\epsilon_{s,k}-\epsilon_{t,k}}$ and $\CD_Y\langle t \rangle = \frac{\mu-\epsilon_{s,k}}{\epsilon_{s,k}-\epsilon_{t,k}}$; we have $\mu(\CD_Y) = \mu$ and $\gamma_S(\CD_Y) = 1$.

\subparagraph*{Case 3: $\mu \in [\epsilon_{t,k},1]$.} Following the symmetric logic to Case 1, we consider $\CD_Y$ supported on $\{t,k\}$ and set $\CD_Y\langle t \rangle = \frac{1-\mu}{1-\epsilon_{t,k}}$ and $\CD_Y\langle k \rangle = \frac{\mu-\epsilon_{t,k}}{1-\epsilon_{t,k}}$, yielding $\mu(\CD_Y) = \mu$ and $\gamma_S(\CD_Y) = \CD_Y\langle t \rangle$.
\end{proof}

\section{Sketching approximation ratio analyses}

\subsection{$\mkand$}\label{sec:kand-analysis}

In this section, we prove \cref{thm:kand-approximability} (on the sketching approximability of $\mkand$). Recall that in \cref{eqn:alpha'_k}, we defined \[ \alpha'_k = \left(\frac{(k-1)(k+1)}{4k^2}\right)^{(k-1)/2}. \] \cref{thm:kand-approximability} follows immediately from the following two lemmas:

\begin{lemma}\label{lemma:kand-ub}
For all odd $k \geq 3$, $\alpha(\kand) \leq \alpha'_k$. For all even $k \geq 2$, $\alpha(\kand) \leq 2\alpha'_{k+1}$.
\end{lemma}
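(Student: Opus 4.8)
\textbf{Proof proposal for \cref{lemma:kand-ub}.}

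The plan is to exhibit, for each $k$, an explicit ``hard'' pair of symmetric distributions (or rather, a single $\CD_N^* \in \Delta_k$ together with the corresponding optimal $\CD_Y^*$) witnessing that $\alpha(\kand) \leq \alpha'_k$ (odd $k$) or $2\alpha'_{k+1}$ (even $k$). By \cref{eqn:alpha-optimize-over-dn}, it suffices to produce \emph{one} distribution $\CD_N \in \Delta_k$ for which $\frac{\beta_{\{k\}}(\CD_N)}{\gamma_{\{k\},k}(\mu(\CD_N))} \leq \alpha'_k$ (resp.\ $2\alpha'_{k+1}$), since $\alpha(f_{S,k})$ is an infimum over all $\CD_N$. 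Guided by the max-min heuristic of \cref{sec:max-min} and small-case computer search, the natural guess for odd $k$ is $\CD_N^*$ supported entirely on Hamming weight $(k+1)/2$, i.e.\ $\CD_N^*\langle (k+1)/2\rangle = 1$; for even $k$ the guess is $\CD_N^*$ supported on weights $k/2$ and $k/2+1$ with the appropriate weights so that $\mu(\CD_N^*)$ matches the ``reduced'' odd case on $k+1$ variables (this is the source of the factor $2\alpha'_{k+1}$). First I would fix this $\CD_N^*$ and compute $\mu(\CD_N^*)$ via \cref{lemma:sym-bool-mu}: for the odd case, $\mu(\CD_N^*) = \epsilon_{(k+1)/2,k} = \frac1k$.

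Next I would compute the two quantities in the ratio. For the denominator, since $\kand = f_{\{k\},k}$ we have $s = t = k$, so $\epsilon_{t,k} = 1$ and by \cref{lemma:sym-bool-gamma} we are always in the regime $\mu \in [-1,\epsilon_{k,k})$ (as $\mu(\CD_N^*) = \frac1k < 1$), giving $\gamma_{\{k\},k}(\mu) = \frac{1+\mu}{1+\epsilon_{s,k}} = \frac{1+\mu}{2}$; at $\mu = \frac1k$ this is $\frac{k+1}{2k}$. For the numerator, I would use \cref{lemma:sym-bool-lambda} with $S = \{k\}$, $i = (k+1)/2$ to get $\lambda_{\{k\}}(\CD_N^*,p)$ as an explicit polynomial in $p$: the only surviving term is $s = k$, and the inner sum over $j$ collapses since $s = k$ forces $j = i$, yielding $\lambda_{\{k\}}(\CD_N^*,p) = q^{k-i}p^{i} = (1-p)^{(k-1)/2}p^{(k+1)/2}$. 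Then $\beta_{\{k\}}(\CD_N^*) = \sup_{p\in[0,1]}(1-p)^{(k-1)/2}p^{(k+1)/2}$; elementary calculus shows this is maximized at $p^* = \frac{k+1}{2k}$, giving $\beta_{\{k\}}(\CD_N^*) = \left(\frac{k-1}{2k}\right)^{(k-1)/2}\left(\frac{k+1}{2k}\right)^{(k+1)/2}$. Dividing by $\gamma_{\{k\},k}(1/k) = \frac{k+1}{2k}$ cancels one factor of $\frac{k+1}{2k}$, leaving $\left(\frac{k-1}{2k}\right)^{(k-1)/2}\left(\frac{k+1}{2k}\right)^{(k-1)/2} = \left(\frac{(k-1)(k+1)}{4k^2}\right)^{(k-1)/2} = \alpha'_k$, exactly as desired. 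For the even case I would run the analogous computation with $\CD_N^*$ supported on $\{k/2, k/2+1\}$, where the extra factor of $2$ should emerge from the fact that $\gamma_{\{k\},k}$ picks up a different normalization than the odd-$(k+1)$ computation; I would verify the bookkeeping reduces cleanly to $2\alpha'_{k+1}$.

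The main obstacle I anticipate is the even case: there $\CD_N^*$ is genuinely two-point rather than a point mass, so $\lambda_{\{k\}}(\CD_N^*,p)$ is a sum of two monomials in $p$ and its supremum over $p$ no longer has a one-line closed form from first derivative alone — I would need to guess the maximizer (presumably $p^* = \frac{k+2}{2(k+1)}$ or similar, matching the odd-$(k+1)$ pattern) and verify it by a sign analysis of the derivative, or alternatively show directly that the two-point $\CD_N^*$ on $k$ variables gives the same ratio as the point-mass $\CD_N^*$ on $k+1$ variables via a value-preserving correspondence between the two CSPs. Everything else is routine: plugging into \cref{lemma:sym-bool-mu}, \cref{lemma:sym-bool-gamma}, and \cref{lemma:sym-bool-lambda}, then simplifying. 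The matching lower bound $\alpha(\kand) \geq \alpha'_k$ (resp.\ $2\alpha'_{k+1}$) is deferred to the companion lemma and uses the full saddle-point argument, so it is not needed here.
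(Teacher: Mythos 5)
Your odd-$k$ argument matches the paper's proof exactly: the same $\CD_N^*$ (point mass at Hamming weight $(k+1)/2$), the same $p^* = \frac{k+1}{2k}$, and the same simplification yielding $\alpha'_k$. That part is correct and complete.

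The even-$k$ case has a concrete gap. Your heuristic for choosing the weights of $\CD_N^*$ --- picking the two-point distribution on $\{k/2, k/2+1\}$ whose marginal $\mu$ matches the odd-$(k+1)$ case, namely $\mu = \frac{1}{k+1}$ --- yields a different distribution from the one that actually witnesses the bound, and that distribution does \emph{not} give a ratio of $2\alpha'_{k+1}$. For example, at $k=2$: matching the marginal gives $\CD_N^* = (0, \frac23, \frac13)$, for which $\lambda_{\{2\}}(\CD_N^*, p) = \frac{p(2-p)}{3}$ is increasing on $[0,1]$, so $\beta = \frac13$ (attained at $p=1$); with $\gamma_{\{2\},2}(\frac13) = \frac23$ the ratio is $\frac12$, which is strictly larger than $2\alpha'_3 = \frac49$. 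So this $\CD_N^*$ fails to certify the upper bound. The paper instead uses $\CD_N^*\langle k/2\rangle = \frac{(k/2+1)^2}{(k/2)^2 + (k/2+1)^2}$ and $\CD_N^*\langle k/2+1\rangle = \frac{(k/2)^2}{(k/2)^2 + (k/2+1)^2}$ (so $(0,\frac45,\frac15)$ for $k=2$), with marginal $\mu(\CD_N^*) = \frac{k/2}{(k/2)^2 + (k/2+1)^2} \neq \frac{1}{k+1}$; with these weights the supremum is at the interior point $p^* = \frac{k+2}{2(k+1)}$ (which you did guess correctly), and the ratio simplifies to $2\alpha'_{k+1}$. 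The choice of weights here is the genuinely nonobvious ingredient in the even case; you would need to discover it by small-case search or by imposing that $p^*$ be an interior critical point of $\lambda_{\{k\}}(\CD_N^*, \cdot)$, rather than by your stated marginal-matching heuristic. Your alternative suggestion (a direct value-preserving correspondence between $k$-variable and $(k+1)$-variable CSPs) is not the route the paper takes and would itself require a separate argument.
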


\begin{lemma}\label{lemma:kand-lb}
For all odd $k \geq 3$, $\alpha(\kand) \geq \alpha'_k$. For all even $k \geq 2$, $\alpha(\kand) \geq 2\alpha'_{k+1}$.
\end{lemma}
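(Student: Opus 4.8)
\textbf{Proof proposal for \cref{lemma:kand-ub,lemma:kand-lb}.}

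The plan is to execute the ``max-min method'' outlined in \cref{sec:max-min}, with the explicit choices $\CD_N^*$ supported on a single Hamming weight and $p^*$ a simple rational function of $k$. Recall that $\kand = f_{\{k\},k}$, so $S = \{k\}$ and $s = t = k$, which means $\epsilon_{s,k} = \epsilon_{t,k} = 1$; hence \cref{lemma:sym-bool-gamma} degenerates, giving $\gamma_{\{k\},k}(\mu) = \frac{1+\mu}{2}$ for all $\mu \in [-1,1]$ (the first case, since $\epsilon_{k,k}=1$). Also, by \cref{lemma:sym-bool-lambda} with $S = \{k\}$, only the $j = i$ term survives (we need $s - j \le k - i$ with $s = k$, forcing $j \ge i$, and $j \le i$), so \[ \lambda_{\{k\}}(\CD,p) = \sum_{i=0}^k q^{k-i} p^{i}\,\CD\langle i\rangle, \] where $q = 1-p$; this is a pleasantly simple linear form in $\CD$. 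For odd $k$ I would guess $\CD_N^*\langle (k+1)/2\rangle = 1$ and $p^* = \frac{k+1}{2k}$ (found via Mathematica on small cases, as the text indicates); for even $k$, I expect $\CD_N^*$ supported on $\{k/2, k/2+1\}$ (or equivalently one reduces to the odd case $k+1$), with the factor of $2$ in $2\alpha'_{k+1}$ arising from the extra degree of freedom.

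For the upper bound \cref{lemma:kand-ub}, I would substitute $\CD_N^*$ into $\frac{\beta_{\{k\}}(\CD_N^*)}{\gamma_{\{k\},k}(\mu(\CD_N^*))}$. Here $\beta_{\{k\}}(\CD_N^*) = \sup_{p\in[0,1]} q^{k-m}p^{m}$ with $m = (k+1)/2$, a one-variable calculus problem maximized at $p = m/k = p^*$, giving value $\left(\frac{k-m}{k}\right)^{k-m}\left(\frac{m}{k}\right)^{m} = \left(\frac{k-1}{2k}\right)^{(k-1)/2}\left(\frac{k+1}{2k}\right)^{(k+1)/2}$. Meanwhile $\mu(\CD_N^*) = \epsilon_{m,k} = \frac{1}{k}$ by \cref{lemma:sym-bool-mu}, so $\gamma_{\{k\},k}(\mu(\CD_N^*)) = \frac{1 + 1/k}{2} = \frac{k+1}{2k}$. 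Dividing, the $\left(\frac{k+1}{2k}\right)$ factor drops by one power and I should recover $\left(\frac{k-1}{2k}\right)^{(k-1)/2}\left(\frac{k+1}{2k}\right)^{(k-1)/2} = \left(\frac{(k-1)(k+1)}{4k^2}\right)^{(k-1)/2} = \alpha'_k$ exactly. The even case is the analogous (slightly messier) computation with the two-point support.

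For the lower bound \cref{lemma:kand-lb}, which is the technical heart, I would invoke \cref{eqn:max-min}: it suffices to show \[ \inf_{\CD_N \in \Delta_k}\frac{\lambda_{\{k\}}(\CD_N,p^*)}{\gamma_{\{k\},k}(\mu(\CD_N))} \ge \alpha'_k. \] Both numerator and denominator are linear in the coordinates $\CD_N\langle 0\rangle,\ldots,\CD_N\langle k\rangle$ (using \cref{lemma:sym-bool-mu} for $\mu$ and the simplified $\gamma$ above), so by \cref{prop:lin-opt} the infimum over the simplex $\Delta_k$ is attained at a vertex $\CD_N = \mathbf{e}_i$, i.e.\ $\CD_N\langle i\rangle = 1$. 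Thus the lower bound reduces to verifying, for each $i \in \{0,1,\ldots,k\}$, the scalar inequality \[ \frac{(q^*)^{k-i}(p^*)^{i}}{\frac{1+\epsilon_{i,k}}{2}} \;=\; \frac{(q^*)^{k-i}(p^*)^{i}}{i/k} \;\ge\; \alpha'_k, \] where $p^* = \frac{k+1}{2k}$, $q^* = \frac{k-1}{2k}$ (and the $i=0$ case is handled separately since $\gamma$ would be $0$ there — but $\gamma_{\{k\},k}(-1) = 0$ means that marginal is infeasible unless we note $\CD_N\langle 0\rangle=1$ forces $\mu = -1$ and then $\gamma_Y = 0$, so this vertex is actually excluded or gives a vacuous ratio; I would check the boundary convention carefully). \textbf{The main obstacle} I anticipate is precisely this family of $k+1$ scalar inequalities: showing $k \cdot (q^*)^{k-i}(p^*)^{i} \ge i\,\alpha'_k$ for all $i$ amounts to proving that the function $i \mapsto (q^*)^{k-i}(p^*)^{i}/i$ is minimized at $i = (k+1)/2$, which requires a monotonicity/convexity argument — e.g.\ examining the ratio of consecutive terms $\frac{(p^*/q^*)^{} \cdot i}{i+1}$ and showing it crosses $1$ exactly at $i = (k-1)/2$, using $p^*/q^* = \frac{k+1}{k-1}$. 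This is elementary but needs care with the off-by-one between odd and even $k$, and for even $k$ the optimal $\CD_N^*$ being two-point (not a vertex) means \cref{prop:lin-opt} must be applied after a reparametrization, or one passes through the relation $\alpha(\kand) = 2\alpha'_{k+1}$ via a reduction relating even-$k$ and odd-$(k+1)$ instances. I would also need the matching upper-bound lemma \cref{lemma:kand-ub}'s $\CD_Y^*$ (namely $\CD_Y^*\langle 0\rangle = \frac{\epsilon_{k,k}-\mu^*}{1+\epsilon_{k,k}}$, $\CD_Y^*\langle k\rangle = \frac{1+\mu^*}{1+\epsilon_{k,k}}$ with $\mu^* = 1/k$) to confirm $(\CD_N^*,\CD_Y^*)$ realizes the ratio, closing the loop.
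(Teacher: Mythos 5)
Your proposal is correct and follows essentially the same route as the paper: apply the max-min inequality with $p^*=\frac{k+1}{2k}$ (odd $k$) or $p^*=\frac12+\frac1{2(k+1)}$ (even $k$), observe that both numerator and denominator are linear in $\CD_N\langle 0\rangle,\dots,\CD_N\langle k\rangle$, invoke \cref{prop:lin-opt} to reduce to the $k+1$ vertex inequalities $k\,(q^*)^{k-i}(p^*)^i\ge i\,\alpha'_k$, and verify those by showing the ratio of consecutive terms crosses $1$ at $i=(k-1)/2$ (the paper phrases this as an induction outward from the two equality points $i=(k\pm1)/2$, using $r=p^*/q^*=\frac{k+1}{k-1}$, but it is the same monotonicity observation you make). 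Two of your worries are unfounded and need no special treatment: the $i=0$ vertex causes no trouble because the inequality in cross-multiplied form $(q^*)^k\ge 0$ is trivially true (equivalently, in \cref{prop:lin-opt} a vertex with zero denominator gives ratio $+\infty$ and never constrains the minimum); and for even $k$ the fact that the original minimizer $\CD_N^*$ is a two-point distribution does not require any reparametrization — \cref{prop:lin-opt} lower-bounds the infimum over the whole simplex by the vertex minimum regardless of where that infimum is attained, and indeed equality in $\frac12 k r^{i-k/2}\ge i$ holds at both $i=k/2$ and $i=k/2+1$, so the two-point distribution achieves the bound.
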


To begin, we give explicit formulas for $\gamma_{\{k\},k}(\mu(\CD))$ and $\lambda_{\{k\}}(\CD,p)$. Note that the smallest element of $\{k\}$ is $k$, and $\epsilon_{k,k} = 1$. Thus, for $\CD \in \Delta_k$, we have by \cref{lemma:sym-bool-gamma,lemma:sym-bool-mu} that
\begin{equation}\label{eqn:kand-gamma}
    \gamma_{\{k\},k}(\mu(\CD)) = \frac{1+\sum_{i=0}^k (-1+\frac{2i}k)\,\CD\langle i \rangle}{2} = \sum_{i=0}^k \frac{i}k \,\CD\langle i \rangle.
\end{equation}
Similarly, we can apply \cref{lemma:sym-bool-lambda} with $s = k$; for each $i \in \kz$, $\max\{0,s-(k-i)\}=\min\{i,k\} = i$, so we need only consider $j=i$, and then $\binom{i}{j} = \binom{k-i}{s-j} = 1$. Thus, for $q = 1-p$, we have
\begin{equation}\label{eqn:kand-lambda}
    \lambda_{\{k\}}(\CD,p)  = \sum_{i=0}^k q^{k-i}p^i\,\CD\langle i \rangle
\end{equation}

Now, we prove \cref{lemma:kand-ub} directly:

\begin{proof}[Proof of \cref{lemma:kand-ub}]
Consider the case where $k$ is odd. Define $\CD_N^*$ by $\CD_N^*\langle (k+1)/2 \rangle=1$ and let $p^* = \frac12 + \frac1{2k}$. Since \[ \alpha(\kand) \leq \frac{\beta_{\{k\}}(\CD_N^*)}{\gamma_{\{k\},k}(\mu(\CD_N^*))} \text{ and } \beta_{\{k\}}(\CD_N) = \sup_{p \in [0,1]} \lambda_{\{k\}}(\CD_N^*,p), \] by \cref{eqn:alpha-optimize-over-dn,eqn:cgsv-bool-lgb}, respectively, it suffices to check that $p^*$ maximizes $\lambda_{\{k\}}(\CD_N^*,\cdot)$ and \[ \frac{\lambda_{\{k\}}(\CD_N^*,p^*)}{\gamma_{\{k\},k}(\mu(\CD_N^*))} = \alpha'_k. \] Indeed, by \cref{eqn:kand-lambda}, \[ \lambda_{\{k\}}(\CD_N^*,p) = (1-p)^{(k-1)/2} p^{(k+1)/2}. \] To show $p^*$ maximizes $\lambda_{\{k\}}(\CD_N^*,\cdot)$, we calculate its derivative: \[ \frac{d}{dp}\left[(1-p)^{(k-1)/2} p^{(k+1)/2}\right] = - (1-p)^{(k-3)/2}p^{(k-1)/2}\left(kp-\frac{k+1}2\right), \] which has zeros only at $0,1,$ and $p^*$. Thus, $\lambda_{\{k\}}(\CD_N^*,\cdot)$ has critical points only at $0,1,$ and $p^*$, and it is maximized at $p^*$ since it vanishes at $0$ and $1$. Finally, by \cref{eqn:kand-gamma,eqn:kand-lambda} and the definition of $\alpha'_k$, \[ \frac{\lambda_{\{k\}}(\CD_N^*,p^*)}{\gamma_{\{k\},k}(\mu(\CD_N^*))} = \frac{\left(\frac12-\frac1{2k}\right)^{(k-1)/2} \left(\frac12+\frac1{2k}\right)^{(k+1)/2}}{\frac12 \left(1+\frac1k\right)} = \alpha'_k, \] as desired.

Similarly, consider the case where $k$ is even; here, we define $\CD_N^*$ by $\CD_N^*\langle k/2 \rangle = \frac{\left(\frac{k}2+1\right)^2}{\left(\frac{k}2\right)^2+\left(\frac{k}2+1\right)^2}$ and $\CD_N^*\langle \frac{k}2+1 \rangle = \frac{\left(\frac{k}2\right)^2}{\left(\frac{k}2\right)^2+\left(\frac{k}2+1\right)^2}$, and set $p^* = \frac12+\frac1{2(k+1)}$. Using \cref{eqn:kand-lambda} to calculate the derivative of $\lambda_{\{k\}}(\CD_N^*,\cdot)$ yields
\begin{multline*}
    \frac{d}{dp}\left[\frac{\left(\frac{k}2+1\right)^2}{\left(\frac{k}2\right)^2+\left(\frac{k}2+1\right)^2} (1-p)^{k/2}p^{k/2} + \frac{\left(\frac{k}2\right)^2}{\left(\frac{k}2\right)^2+\left(\frac{k}2+1\right)^2}(1-p)^{k/2-1}p^{k/2+1}\right] \\
    = -\frac{k}{2+2k+2k^2}(1-p)^{k/2-2} p^{k/2-1} \left(\frac{k}2+1-2p\right)\left((k+1)p-\left(\frac{k}2+1\right)\right),
\end{multline*} so $\lambda_{\{k\}}(\CD_N^*,\cdot)$ has critical points at $0,1,\frac12+\frac{k}4$. and $p^*$; $p^*$ is the only critical point in the interval $[0,1]$ for which $\lambda_{\{k\}}(\CD_N^*,\cdot)$ is positive, and hence is its maximum. Finally, it can be verified algebraically using \cref{eqn:kand-gamma,eqn:kand-lambda} that $\frac{\lambda_{\{k\}}(\CD_N^*,p^*)}{\gamma_{\{k\},k}(\mu(\CD_N^*))} = 2\alpha'_{k+1}$, as desired.
\end{proof}

We prove \cref{lemma:kand-lb} using the max-min method.

\begin{proof}[Proof of \cref{lemma:kand-lb}]
First, suppose $k \geq 3$ is odd. Set $p^* = \frac12+\frac1{2k} = \frac{k+1}{2k}$. We want to show that
\begin{align*}
    \alpha'_k &\leq \inf_{\CD_N \in \Delta_k} \frac{\lambda_{\{k\}}(\CD_N,p^*)}{\gamma_{\{k\},k}(\mu(\CD_N))} \tag{max-min inequality, i.e., \cref{eqn:max-min}} \\
    &= \inf_{\CD_N \in \Delta_k} \frac{\sum_{i=0}^k (1-p^*)^{k-i}(p^*)^i\,\CD_N\langle i \rangle}{\sum_{i=0}^k \frac{i}k \,\CD_N\langle i \rangle} \tag{\cref{eqn:kand-gamma,eqn:kand-lambda}}.
\end{align*}

By \cref{prop:lin-opt}, it suffices to check that \[\forall i \in \kz,\quad (1-p^*)^{k-i}(p^*)^i \geq \alpha'_k \cdot \frac{i}k. \] By definition of $\alpha'_k$, we have that $\alpha'_k  = (1-p^*)^{(k-1)/2} (p^*)^{(k-1)/2}$. Defining $r = \frac{p^*}{1-p^*} = \frac{k+1}{k-1}$ (so that $p^* = r(1-p^*)$), factoring out $(1-p^*)^k$, and simplifying, we can rewrite our desired inequality as
\begin{equation}\label{eqn:kand-odd-lb-goal}
    \forall i \in \kz, \quad \frac12 (k-1) r^{i-\frac{k-1}2} \geq i.
\end{equation}
When $i = (k+1)/2$ or $(k-1)/2$, we have equality in \cref{eqn:kand-odd-lb-goal}. We extend to the other values of $i$ by induction. Indeed, when $i \geq (k+1)/2$, then ``$i$ satisfies \cref{eqn:kand-odd-lb-goal}'' implies ``$i+1$ satisfies \cref{eqn:kand-odd-lb-goal}'' because $r i \geq i + 1$, and when $i \leq (k-1)/2$, then ``$i$ satisfies \cref{eqn:kand-odd-lb-goal}'' implies ``$i-1$ satisfies \cref{eqn:kand-odd-lb-goal}'' because $\frac1r i \geq i - 1$.

Similarly, in the case where $k \geq 2$ is even, we set $p^* = \frac12+\frac1{2(k+1)}$ and $r = \frac{p^*}{1-p^*} = \frac{k+2}{k}$. In this case, for $i \in \kz$ the following analogue of \cref{eqn:kand-odd-lb-goal} can be derived: \[ \forall i \in \kz, \quad \frac12 k r^{i-\frac{k}2} \geq i, \] and these inequalities follow from the same inductive argument.
\end{proof}

\subsection{$\Th^{k-1}_k$ for even $k$}\label{sec:k-1-k-analysis}

In this subsection, we prove \cref{thm:k-1-k-approximability} (on the sketching approximability of $\Th^{k-1}_k$ for even $k \geq 2$). It is necessary and sufficient to prove the following two lemmas:

\begin{lemma}\label{lemma:k-1-k-ub}
For all even $k \geq 2$, $\alpha(\Th^{k-1}_k) \leq \frac{k}{2} \alpha_{k-1}'$.
\end{lemma}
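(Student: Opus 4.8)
The proof of \cref{lemma:k-1-k-ub} will follow the max-min method of \cref{sec:max-min}: we will exhibit a single distribution $\CD_N^* \in \Delta_k$ together with a value $p^* \in [0,1]$, verify that $p^*$ maximizes $\lambda_{\{k-1,k\}}(\CD_N^*,\cdot)$ over $[0,1]$ (so that $\beta_{\{k-1,k\}}(\CD_N^*) = \lambda_{\{k-1,k\}}(\CD_N^*,p^*)$), and then invoke \cref{eqn:alpha-optimize-over-dn}, which gives $\alpha(\Th^{k-1}_k) \le \beta_{\{k-1,k\}}(\CD_N^*)/\gamma_{\{k-1,k\},k}(\mu(\CD_N^*))$. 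It therefore suffices to show this ratio equals $\frac{k}{2}\alpha'_{k-1}$. Here $S = \{k-1,k\}$ has smallest element $k-1$, with $\epsilon_{k-1,k} = \frac{k-2}{k}$, and largest element $k$, with $\epsilon_{k,k} = 1$; so \cref{lemma:sym-bool-gamma} gives $\gamma_{\{k-1,k\},k}(\mu) = \frac{k(1+\mu)}{2(k-1)}$ on the range $\mu \in [-1,\frac{k-2}{k})$ that we will use.

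For even $k \ge 4$, the plan is to take $p^* = \frac{k}{2(k-1)}$ — the same value that maximizes the relevant polynomial in the proof of \cref{lemma:kand-ub} for $(k-1)\textsf{AND}$, which is natural since $k-1$ is odd — and to take $\CD_N^*$ supported on the two Hamming weights $k/2$ and $k/2+1$, with $\CD_N^*\langle k/2 \rangle = \frac{k^2}{2((k-1)^2+1)}$ and $\CD_N^*\langle k/2+1 \rangle = \frac{(k-2)^2}{2((k-1)^2+1)}$ (these sum to $1$, and were found with computer assistance for small $k$). Specializing \cref{lemma:sym-bool-lambda} to $S = \{k-1,k\}$, the coefficient of $\CD\langle i\rangle$ in $\lambda_{\{k-1,k\}}(\CD,p)$ for $1 \le i \le k-1$ simplifies to $q^{k-1-i}p^{i-1}(qp + iq^2 + (k-i)p^2)$ with $q = 1-p$; summing the $i = k/2$ and $i = k/2+1$ contributions against $\CD_N^*$ and extracting the factor $q^{k/2-2}p^{k/2-1}$ writes $\lambda_{\{k-1,k\}}(\CD_N^*,p)$ as $q^{k/2-2}p^{k/2-1}$ times an explicit cubic in $p$. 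The key point, which I expect to fall out of a short computation using $p^* + q^* = 1$, $p^* - q^* = \frac1{k-1}$, and $\alpha'_{k-1} = (p^*q^*)^{(k-2)/2}$, is that \emph{regardless of how the mass is split between weights $k/2$ and $k/2+1$} one has $\lambda_{\{k-1,k\}}(\CD_N^*,p^*) = \frac{k}{2}\alpha'_{k-1}\cdot\gamma_{\{k-1,k\},k}(\mu(\CD_N^*))$; the specific split above is then pinned down by the single extra requirement that $p^*$ be a critical point of $\lambda_{\{k-1,k\}}(\CD_N^*,\cdot)$.

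It then remains to check that $p^*$ is the \emph{global} maximizer on $[0,1]$, and that $\mu(\CD_N^*) < \epsilon_{k-1,k}$ so that the correct branch of $\gamma_{\{k-1,k\},k}$ is used. The marginal inequality, via \cref{lemma:sym-bool-mu}, reduces to $0 < k^2 - 3k + 4$, which holds for all $k$. For global maximality, the plan is to differentiate $\lambda_{\{k-1,k\}}(\CD_N^*,\cdot)$ and factor out $(2(k-1)p - k)$ (whose unique root is $p^*$); the remaining polynomial factor should have no zero in $(0,1)$ — for example, at $k = 4$ this factor is a negative multiple of $12p^2 - 11p + 4$, whose discriminant is negative — so that $\lambda_{\{k-1,k\}}(\CD_N^*,\cdot)$ is unimodal on $[0,1]$ with peak at $p^*$. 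Combining everything gives $\alpha(\Th^{k-1}_k) \le \lambda_{\{k-1,k\}}(\CD_N^*,p^*)/\gamma_{\{k-1,k\},k}(\mu(\CD_N^*)) = \frac{k}{2}\alpha'_{k-1}$, as desired. The boundary case $k = 2$ is degenerate: $\Th^1_2 = f_{\{1,2\},2}$ has $k/2 = 1 \in S$ and so is already streaming approximation-resistant by \cref{ex:one-wise-indep-sym}, while $\frac{k}{2}\alpha'_{k-1} = \alpha'_1 = 1$ makes the claimed bound vacuous there. The main obstacle is the usual one for the max-min method: correctly guessing the two-point distribution $\CD_N^*$ (done here with computer assistance), and then the somewhat involved polynomial bookkeeping needed to certify that $p^*$ is the global maximizer of $\lambda_{\{k-1,k\}}(\CD_N^*,\cdot)$ for \emph{all} even $k \ge 4$ simultaneously rather than only for the small cases one can verify by machine.
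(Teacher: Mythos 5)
Your proposal matches the paper's proof essentially exactly: you choose the same $p^* = \frac{k}{2(k-1)}$ and the same two-point $\CD_N^*$ supported on weights $k/2$ and $k/2+1$ (your expressions $\frac{k^2}{2((k-1)^2+1)}$ and $\frac{(k-2)^2}{2((k-1)^2+1)}$ are algebraically identical to the paper's $\frac{(k/2)^2}{(k/2)^2+(k/2-1)^2}$ and $\frac{(k/2-1)^2}{(k/2)^2+(k/2-1)^2}$), and your plan of differentiating $\lambda_{\{k-1,k\}}(\CD_N^*,\cdot)$, factoring out the linear term vanishing at $p^*$, and ruling out roots of the residual cubic in $(0,1)$ is precisely what the paper does. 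Your side-observation that the ratio $\lambda_{\{k-1,k\}}(\cdot,p^*)/\gamma_{\{k-1,k\},k}(\mu(\cdot))$ equals $\frac{k}{2}\alpha'_{k-1}$ for \emph{any} mixture of weights $k/2$ and $k/2+1$, so that the specific split is pinned down solely by criticality of $p^*$, is a genuine (and correct — it reflects the equality cases of the lower-bound inequality at $i=k/2,k/2+1$) extra insight that the paper does not state explicitly. The one place you stop short is the uniform-in-$k$ certification that the residual cubic $\xi(p)$ has no root in $(0,1)$: you verify it for $k=4$ (and your factorization $\xi(p) = -8(p-1)(12p^2-11p+4)$ there is correct) and flag the general case as the main obstacle. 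The paper closes that gap with the substitution $p = \frac{1}{1+x}$, which turns $\xi$ into a cubic in $x$ whose coefficients are cubics in $k$ that are all positive once $k \ge 6$, so there are no positive roots in $x$ and hence no roots of $\xi$ in $(0,1)$; $k=2,4$ are checked directly. That substitution is the one piece you would still need to supply.
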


\begin{lemma}\label{lemma:k-1-k-lb}
For all even $k \geq 2$, $\alpha(\Th^{k-1}_k) \geq \frac{k}{2} \alpha_{k-1}'$.
\end{lemma}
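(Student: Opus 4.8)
\textbf{Proof proposal for \cref{lemma:k-1-k-lb}.} The plan is to apply the max-min method, exactly as in the proof of \cref{lemma:kand-lb} for $\kand$. First I would record the explicit formulas for the relevant quantities. Since $S = \{k-1,k\}$ has smallest element $s = k-1$ and $\epsilon_{k-1,k} = 1 - \frac2k$, \cref{lemma:sym-bool-gamma} together with \cref{lemma:sym-bool-mu} gives a piecewise-linear expression for $\gamma_{\{k-1,k\},k}(\mu(\CD_N))$; in the relevant range of $\mu$ (to be confirmed below) this is the linear branch $\frac{1+\mu(\CD_N)}{1 + \epsilon_{k-1,k}} = \frac{k}{2(k-1)}\left(1 + \sum_{i=0}^k \epsilon_{i,k}\,\CD_N\langle i\rangle\right)$, which simplifies to a nonnegative linear functional of the $\CD_N\langle i\rangle$. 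Likewise \cref{lemma:sym-bool-lambda} with $S = \{k-1,k\}$ yields $\lambda_{\{k-1,k\}}(\CD_N,p)$ as a degree-$k$ polynomial in $p$, linear in each $\CD_N\langle i\rangle$, with explicit binomial coefficients (the $s=k$ term contributes $q^{k-i}p^i$ and the $s=k-1$ term contributes $(k-i)q^{k-i+1}p^{i-1} + i\,q^{k-i-1}p^{i+1}$, after collapsing the inner sum).

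Next I would guess the optimal pair. Guided by the even-$k$ analogy with \cref{thm:kand-approximability} and by small-case computation in Mathematica, I expect $\CD_N^*$ to be supported on two consecutive Hamming weights near $k/2$ and $p^*$ to be an explicit rational close to $\frac12$; the target value $\frac{k}{2}\alpha'_{k-1}$ and \cref{lemma:k-1-k-ub} tell me which optimizer to aim for (indeed \cref{lemma:k-1-k-ub} will exhibit the very $\CD_N^*$ and $p^*$ achieving the bound, so I can simply reuse them here). With $p^*$ fixed, the max-min inequality \cref{eqn:max-min} reduces the task to showing $\frac{k}{2}\alpha'_{k-1} \le \inf_{\CD_N \in \Delta_k} \frac{\lambda_{\{k-1,k\}}(\CD_N,p^*)}{\gamma_{\{k-1,k\},k}(\mu(\CD_N))}$. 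Since both numerator and denominator are nonnegative linear functionals of the vector $(\CD_N\langle0\rangle,\ldots,\CD_N\langle k\rangle)$, \cref{prop:lin-opt} reduces this to a finite family of scalar inequalities, one per $i \in \kz$: namely that the coefficient of $\CD_N\langle i\rangle$ in $\lambda_{\{k-1,k\}}(\cdot,p^*)$ is at least $\frac{k}{2}\alpha'_{k-1}$ times the coefficient of $\CD_N\langle i\rangle$ in $\gamma_{\{k-1,k\},k}(\mu(\cdot))$.

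To finish I would verify these $k+1$ inequalities by setting $r = \frac{p^*}{1-p^*}$, factoring out $(1-p^*)^k$, and rewriting each inequality as a statement comparing a linear-in-$i$ quantity with a power $r^{i - c}$ for an appropriate center $c$; equality should hold at the one or two Hamming weights in $\supp(\CD_N^*)$, and the remaining cases follow by a monotonicity/induction argument on $i$ in each direction (using that $r > 1$ pushes the exponential above the linear term away from the center), mirroring the induction around \cref{eqn:kand-odd-lb-goal}. I also need to check that $\mu(\CD_N^*)$ lands in the linear branch of $\gamma_{\{k-1,k\},k}$ used above — if for some $i$ the corresponding $\mu$ falls in the flat branch $\gamma = 1$ or the other linear branch, \cref{prop:lin-opt} still applies since it only needs a global nonnegative linear lower bound on the denominator, so I would instead lower-bound $\gamma_{\{k-1,k\},k}(\mu)$ by its minorizing linear function $\frac{1+\mu}{1+\epsilon_{k-1,k}}$ valid on all of $[-1,1]$ (as noted parenthetically in \cref{lemma:sym-bool-gamma}, $\gamma_{S,k}$ is a min of such linear pieces). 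The main obstacle I anticipate is purely computational: guessing the exact two-point $\CD_N^*$ and the exact $p^*$, and then checking that $\lambda_{\{k-1,k\}}(\CD_N^*,\cdot)$ is genuinely maximized at $p^*$ on $[0,1]$ (by computing its derivative, which factors as a low-degree polynomial times $(1-p)^a p^b$, and confirming $p^*$ is the unique interior critical point where $\lambda$ is positive) — this is the even-$k$ analogue of the slightly delicate derivative computation in the proof of \cref{lemma:kand-ub}, and the algebra is where computer assistance is needed.
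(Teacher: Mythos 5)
Your proposal follows the paper's proof essentially step for step: fix $p^*$ from the companion upper-bound lemma, apply the max-min inequality \cref{eqn:max-min}, replace $\gamma_{\{k-1,k\},k}(\mu(\cdot))$ by a linear functional of $\CD_N$, reduce to coordinate-wise scalar inequalities via \cref{prop:lin-opt}, and close by an induction on $i$ away from the tight Hamming weights. Two corrections to the write-up, neither fatal but both worth fixing. First, the direction of the auxiliary bound on $\gamma$ is stated backwards: since $\gamma$ sits in the \emph{denominator} of the ratio you are lower-bounding, you must replace it by a linear \emph{upper} bound (a majorant) of $\gamma_{\{k-1,k\},k}$, not a minorant. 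The function $\frac{1+\mu}{1+\epsilon_{k-1,k}} = \sum_i \frac{i}{k-1}\CD_N\langle i\rangle$ that you name is in fact such a majorant — it is one branch of the $\min$ defining $\gamma_{\{k-1,k\},k}$ in \cref{lemma:sym-bool-gamma} — so the computation is correct, but the words ``lower-bound'' and ``minorizing'' should read ``upper-bound'' and ``majorizing''. Second, the concluding check that $\lambda_{\{k-1,k\}}(\CD_N^*,\cdot)$ is maximized at $p^*$ on $[0,1]$ (via the derivative computation) is not needed for this direction of the argument: the max-min inequality \cref{eqn:max-min} gives $\alpha(\Th^{k-1}_k) \ge \inf_{\CD_N} \lambda_S(\CD_N,p)/\gamma_{S,k}(\mu(\CD_N))$ for \emph{every} fixed $p$, so any choice of $p^*$ is admissible here; verifying that $p^*$ is actually the maximizer belongs only to the proof of \cref{lemma:k-1-k-ub}.
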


Firstly, we give explicit formulas for $\gamma_{\{k-1,k\},k}$ and $\lambda_{\{k-1,k\}}$. We have $\Th_k^{k-1} = f_{\{k-1,k\},k}$, and $\epsilon_{k-1,k} = -1+\frac{2(k-1)}{k} = 1-\frac2k$. Thus, \cref{lemma:sym-bool-gamma,lemma:sym-bool-mu} give
\begin{equation}
    \gamma_{\{k-1,k\},k}(\mu(\CD)) = \min \left\{\frac{1 + \sum_{i=0}^k (-1+\frac{2i}k) \,\CD\langle i \rangle}{2-\frac2k} ,1\right\} = \min\left\{\sum_{i=0}^k \frac{i}{k-1}\,\CD\langle i \rangle,1\right\}.\label{eqn:k-1-k-gamma}
\end{equation}
Next, we calculate $\lambda_{\{k-1,k\}}(\CD,p)$ with \cref{lemma:sym-bool-lambda}. Let $q = 1-p$, and let us examine the coefficient on $\CD\langle i \rangle$. $s=k$ contributes $q^{k-i}p^k$. In the case $i \leq k-1$, $s=k-1$ contributes $(k-i)q^{k-i-1}p^{i+1}$ for $j = i$, and in the case $i \geq 1$, $s=k-1$ contributes $iq^{k-i+1}p^{i-1}$ for $j = i-1$. Thus, altogether we can write
\begin{equation}
    \lambda_{\{k-1,k\}} (\CD,p) = \sum_{i=0}^k q^{k-i-1} p^{i-1} \left((k-i)p^2+ pq + iq^2\right) \,\CD\langle i \rangle.\label{eqn:k-1-k-lambda}
\end{equation}

Now, we prove \cref{lemma:k-1-k-ub,lemma:k-1-k-lb}.

\begin{proof}[Proof of \cref{lemma:k-1-k-ub}]
As in the proof of \cref{lemma:kand-ub}, it suffices to construct $\CD_N^*$ and $p^*$ such that $p^*$ maximizes $\lambda_{\{k-1,k\}}(\CD_N^*,\cdot)$ and $\frac{\lambda_{\{k-1,k\}}(\CD_N^*,p^*)}{\gamma_{\{k-1,k\},k}(\mu(\CD_N^*))} = \frac{k}2 \alpha'_{k-1}$.

We again let $p^* = \frac12+\frac1{2(k-1)}$, but define $\CD_N^*$ by $\CD_N^*\langle k/2 \rangle=\frac{\left(\frac{k}2\right)^2}{\left(\frac{k}2\right)^2+\left(\frac{k}2-1\right)^2}$ and $\CD_N^*\langle \frac{k}2+1 \rangle=\frac{\left(\frac{k}2-1\right)^2}{\left(\frac{k}2\right)^2+\left(\frac{k}2-1\right)^2}$. By \cref{eqn:k-1-k-lambda}, the derivative of $\lambda_{\{k-1,k\}}(\CD_N^*,\cdot)$ is now \begin{multline*}
    \frac{d}{dp}\Bigg[\frac{\left(\frac{k}2\right)^2}{\left(\frac{k}2\right)^2+\left(\frac{k}2-1\right)^2} (1-p)^{k/2-1} p^{k/2-1} \left(\frac{k}2p^2+pq+\frac{k}2q^2\right) + \\ \frac{\left(\frac{k}2-1\right)^2}{\left(\frac{k}2\right)^2+\left(\frac{k}2-1\right)^2}(1-p)^{k/2-2} p^{k/2} \left(\left(\frac{k}2-1\right)p^2+pq+\left(\frac{k}2+1\right)q^2\right)\Bigg] \\
    = -\frac1{8(k^2-2k+2)} (1-p)^{k/2-3}p^{k/2-2}(-k+(2(k-1)p) \xi(p),
\end{multline*} where $\xi(p)$ is the cubic \[ \xi(p) = -8k(k-1)p^3 + 2(k^3+k^2+6k-12)p^2 - 2(k^3-4)p + k^2(k-2). \] Thus, $\lambda_{\{k-1,k\}}$'s critical points on the interval $[0,1]$ are $0,1,p^*$ and any roots of $\xi$ in this interval. We claim that $\xi$ has no additional roots in the interval $(0,1)$. This can be verified directly by calculating roots for $k = 2,4$, so assume WLOG $k \geq 6$.

Suppose $\xi(p) = 0$ for some $p \in (0,1)$, and let $x = \frac1p - 1 \in (0,\infty)$. Then $p = \frac1{1+x}$; plugging this in for $p$ and multiplying through by $(x+1)^3$ gives the new cubic
\begin{equation}\label{eqn:for-descartes}
    (k^3-2k^2)x^3+(k^3-6k^2+8)x^2+(k^3-4k^2+12k-8)x+(k^3-8k^2+20k-16) = 0
\end{equation}
whose coefficients are cubic in $k$. It can be verified by calculating the roots of each coefficient of $x$ in \cref{eqn:for-descartes} that all coefficients are positive for $k \geq 6$. Thus, \cref{eqn:for-descartes} cannot have roots for positive $x$, a contradiction. Hence $\lambda_{\{k-1,k\}}(\CD_N^*,\cdot)$ is maximized at $p^*$. Finally, it can be verified that $\frac{\lambda_{\{k-1,k\}}(\CD_N^*,p^*)}{\gamma_{\{k-1,k\},k}(\mu(\CD_N^*))} = \frac{k}2\alpha'_{k-1}$, as desired.
\end{proof}

\begin{proof}[Proof of \cref{lemma:k-1-k-lb}]
Define $p^* = \frac12+\frac1{2(k-1)}$. Following the proof of \cref{lemma:kand-lb} and using the lower bound $\gamma_{\{k-1,k\},k}(\mu(\CD_N)) \leq \sum_{i=0}^k \frac{i}{k-1} \,\CD_N\langle i \rangle$, it suffices to show that \[ \frac{k}2 \alpha'_{k-1} \leq \inf_{\CD_N \in \Delta_k} \frac{\sum_{i=0}^k (1-p^*)^{k-i-1} (p^*)^{i-1} ((k-i)(p^*)^2+p^*(1-p^*)+i(1-p^*)^2) \,\CD_N\langle i \rangle}{\sum_{i=0}^k \frac{i}{k-1}\,\CD_N\langle i \rangle} \] for which by \cref{prop:lin-opt}, it in turn suffices to prove that for each $i \in \kz$, \[ \frac{k}2 \alpha'_{k-1} \frac{i}{k-1} \leq (1-p^*)^{k-i-1} (p^*)^{i-1} ((k-i)(p^*)^2+p^*(1-p^*)+i(1-p^*)^2). \] We again observe that $\alpha'_{k-1} = (1-p^*)^{k/2-1} (p^*)^{k/2-1}$, define $r = \frac{p^*}{1-p^*} = \frac{k}{k-2}$, and factor out $(1-p^*)^{k-1}$, which simplifies our desired inequality to
\begin{equation}\label{eqn:k-1-k-lb-goal}
    \frac{1}{2} r^{i-\frac{k}2-1} \cdot \frac{k-2}{k-1} \left(i + r + (k-i) r^2\right) \geq i.
\end{equation} for each $i \in \kz$. Again, we assume $k \geq 6$ WLOG; the bases cases $i = k/2-1,k/2$ can be verified directly, and we proceed by induction. If \cref{eqn:k-1-k-lb-goal} holds for $i$, and we seek to prove it for $i+1$, it suffices to cross-multiply and instead prove the inequality \[ r(i+1+r+(k-(i+1))r^2)i \geq (i+1) (i+r+(k-i)r^2), \] which simplifies to \[ (k-2i)(k-1)(k^2-4i-4) \leq 0, \] which holds whenever $ k/2\leq i \leq (k^2-4)/4$ (and $(k^2-4)/4\geq k$ for all $k \geq 6$). The other direction (where $i \leq k/2-1$ and we induct downwards) is similar.
\end{proof}

\begin{observation}\label{obs:th34-streaming-lb}
For $\Th^3_4$ the optimal $\CD_N^* = (0,0,\frac45,\frac15,0)$ does participate in a padded one-wise pair with $\CD_Y^* = (\frac4{15},0,0,\frac{11}{15},0)$ (given by $\CD_0 = (0,0,0,1,0)$, $\tau = \frac15$, $\CD'_N = (0,0,1,0,0)$, and $\CD'_Y=(\frac4{15},0,0,\frac{8}{15},0)$) so we can rule out \emph{streaming} $(\frac49+\epsilon)$-approximations to $\mbcsp[\Th^3_4]$ in $o(\sqrt n)$ space.
\end{observation}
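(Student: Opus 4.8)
The plan is to directly verify the two facts that make \cref{thm:cgsv-streaming-lb} applicable: (i) that $(\CD_N^*,\CD_Y^*)$ is a padded one-wise pair, and (ii) that $\beta_{\{3,4\}}(\CD_N^*)/\gamma_{\{3,4\}}(\CD_Y^*) = \frac49 = \alpha(\Th^3_4)$. Neither step is hard; the whole content is bookkeeping, most of which is already done inside the proof of \cref{lemma:k-1-k-ub}.

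For (i), I would check the claimed convex decompositions by inspecting Hamming-weight profiles: with $\tau = \frac15$, $\CD_0 = (0,0,0,1,0)$, $\CD'_N = (0,0,1,0,0)$ and $\CD'_Y = (\frac13,0,0,\frac23,0)$, one has $\CD_N^* = \tau\CD_0 + (1-\tau)\CD'_N$ and $\CD_Y^* = \tau\CD_0 + (1-\tau)\CD'_Y$ by direct computation. It then remains to confirm that $\CD'_N$ and $\CD'_Y$ are one-wise independent; since both are symmetric, by \cref{prop:bool-sym} (together with the fact that $\vecmu$ is the constant vector $(\mu,\dots,\mu)$) this reduces to $\mu(\CD'_N) = \mu(\CD'_Y) = 0$, which follows from \cref{lemma:sym-bool-mu}: $\mu(\CD'_N) = \epsilon_{2,4} = 0$ and $\mu(\CD'_Y) = \frac13\epsilon_{0,4} + \frac23\epsilon_{3,4} = -\frac13 + \frac13 = 0$. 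Hence $(\CD_N^*,\CD_Y^*)$ is a padded one-wise pair, and in particular $\mu(\CD_N^*) = \mu(\CD_Y^*) = \tau\,\epsilon_{3,4} = \frac1{10}$.

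For (ii), the key observation is that $\CD_N^* = (0,0,\frac45,\frac15,0)$ together with $p^* = \frac12 + \frac1{2(k-1)} = \frac23$ is precisely the witness constructed in the proof of \cref{lemma:k-1-k-ub} specialized to $k=4$ (there $\CD_N^*\langle k/2\rangle = \frac{(k/2)^2}{(k/2)^2+(k/2-1)^2} = \frac45$ and $\CD_N^*\langle k/2+1\rangle = \frac15$). That proof already establishes both that $p^*$ maximizes $\lambda_{\{3,4\}}(\CD_N^*,\cdot)$ over $[0,1]$ — the offending cubic $\xi$ is checked to have no root in $(0,1)$ for $k=4$ — and that $\lambda_{\{3,4\}}(\CD_N^*,p^*)/\gamma_{\{3,4\},4}(\mu(\CD_N^*)) = \frac{k}{2}\alpha'_{k-1} = 2\alpha'_3 = \frac49$. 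So $\beta_{\{3,4\}}(\CD_N^*) = \lambda_{\{3,4\}}(\CD_N^*,\tfrac23) = \frac{44}{135}$. It then suffices to observe that $\gamma_{\{3,4\}}(\CD_Y^*)$ equals $\gamma_{\{3,4\},4}(\mu(\CD_N^*))$, i.e., that $\CD_Y^*$ attains the supremum defining $\gamma_{\{3,4\},4}$ at the common marginal $\mu = \frac1{10}$: by \cref{eqn:cgsv-bool-lgb}, $\gamma_{\{3,4\}}(\CD_Y^*) = \CD_Y^*\langle 3\rangle + \CD_Y^*\langle 4\rangle = \frac{11}{15}$, and by \cref{lemma:sym-bool-gamma} (with $s=3$, $t=4$, $\epsilon_{3,4}=\frac12$, $\epsilon_{4,4}=1$) we get $\gamma_{\{3,4\},4}(\frac1{10}) = \min\{\frac{1+1/10}{3/2},\,1\} = \frac{11}{15}$. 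Thus $\beta_{\{3,4\}}(\CD_N^*)/\gamma_{\{3,4\}}(\CD_Y^*) = \frac49$.

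Combining (i) and (ii), \cref{thm:cgsv-streaming-lb} applied to $f = \Th^3_4$ and the padded one-wise pair $(\CD_N^*,\CD_Y^*)$ gives that for every $\epsilon > 0$ there is $\tau > 0$ with every streaming algorithm that $(\frac49+\epsilon)$-approximates $\mbcsp[\Th^3_4]$ requiring at least $\tau\sqrt n$ space for large $n$; together with the $O(\log n)$-space sketching upper bound from \cref{thm:k-1-k-approximability}, this shows the streaming approximability of $\mbcsp[\Th^3_4]$ is exactly $\frac49$. The only real obstacle is cosmetic: one must (a) trust/trace the $k=4$ specialization of \cref{lemma:k-1-k-ub} so as not to re-maximize the degree-$4$ polynomial $\lambda_{\{3,4\}}(\CD_N^*,\cdot)$ from scratch, and (b) confirm that the explicitly exhibited $\CD_Y^*$ is the correct $\gamma$-maximizer for the marginal forced by the padding — there is no new conceptual content beyond the $\Th^{k-1}_k$ analysis.
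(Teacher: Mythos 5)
Your proposal is correct and essentially unpacks the computation the paper leaves compressed inside the observation: verify the two base distributions in the claimed convex decomposition have zero marginal, verify the marginals of $\CD_N^*,\CD_Y^*$ agree, and invoke \cref{lemma:k-1-k-ub} at $k=4$ (with $p^*=\frac23$) to conclude $\beta_{\{3,4\}}(\CD_N^*)/\gamma_{\{3,4\}}(\CD_Y^*)=\frac49$. The arithmetic checks out --- in particular $\mu(\CD_N^*)=\mu(\CD_Y^*)=\frac1{10}$, $\lambda_{\{3,4\}}(\CD_N^*,\frac23)=\frac{44}{135}$, and $\gamma_{\{3,4\}}(\CD_Y^*)=\frac{11}{15}$, giving $\frac49$.

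One thing worth flagging explicitly: you quietly corrected a typo in the paper's stated decomposition. The observation lists $\CD'_Y=(\frac4{15},0,0,\frac{8}{15},0)$, which sums to $\frac{12}{15}\neq 1$ and so is not a distribution; what is written is $(1-\tau)\CD'_Y$, not $\CD'_Y$. Your replacement $\CD'_Y=(\frac13,0,0,\frac23,0)$ is the correct padding component (it is a genuine distribution, has $\mu=0$ by \cref{lemma:sym-bool-mu}, and satisfies $\frac15\CD_0+\frac45\CD'_Y=\CD_Y^*$). Good catch --- this is exactly the sort of sanity check that the paper's telegraphic ``given by'' phrasing makes easy to skip.
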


\subsection{$f_{\{(k+1)/2\},k}$ for (small) odd $k$}\label{sec:k+1/2-analysis}

In this subsection, we prove bounds on the sketching approximability of $f_{\{(k+1)/2\},k}$ for odd $k\in \{3,\dots,51\}$. Define $\CD_{0,k} \in \Delta_k$ by $\CD_{0,k}\langle 0 \rangle=\frac{k-1}{2k}$ and $\CD_{0,k}\langle k \rangle = \frac{k+1}{2k}$. We prove the following two lemmas:

\begin{lemma}\label{lemma:k+1/2-ub}
For all odd $k \geq 3$, $\alpha(f_{\{(k+1)/2\},k}) \leq \lambda_{\{(k+1)/2\}}(\CD_{0,k},p'_k)$, where $p'_k \eqdef \frac{3 k - k^2 + \sqrt{4 k + k^2 - 2 k^3 + k^4}}{4k}$.
\end{lemma}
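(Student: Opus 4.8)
\textbf{Proof proposal for \cref{lemma:k+1/2-ub}.} The plan is to follow the ``upper bound'' half of the max-min method, exactly as in the proofs of \cref{lemma:kand-ub} and \cref{lemma:k-1-k-ub}. That is, I would exhibit the specific distribution $\CD_N^* = \CD_{0,k}$ (supported on the two Hamming weights $0$ and $k$) and the specific perturbation probability $p^* = p'_k$, and show that these witness the claimed bound via the chain
\[
\alpha(f_{\{(k+1)/2\},k}) \leq \frac{\beta_{\{(k+1)/2\}}(\CD_{0,k})}{\gamma_{\{(k+1)/2\},k}(\mu(\CD_{0,k}))} = \frac{\lambda_{\{(k+1)/2\}}(\CD_{0,k},p'_k)}{\gamma_{\{(k+1)/2\},k}(\mu(\CD_{0,k}))},
\]
where the first inequality is \cref{eqn:alpha-optimize-over-dn} and the equality will follow once I check that $p'_k$ maximizes $\lambda_{\{(k+1)/2\}}(\CD_{0,k},\cdot)$ over $[0,1]$ (using $\beta_S(\CD) = \sup_{p\in[0,1]}\lambda_S(\CD,p)$ from \cref{eqn:cgsv-bool-lgb}). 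Finally I would observe that $\mu(\CD_{0,k}) = 0$: by \cref{lemma:sym-bool-mu}, $\mu(\CD_{0,k}) = \epsilon_{0,k}\cdot\frac{k-1}{2k} + \epsilon_{k,k}\cdot\frac{k+1}{2k} = (-1)\cdot\frac{k-1}{2k} + 1\cdot\frac{k+1}{2k} = \frac1k$. Hmm — this is $\frac1k$, not $0$; I should double-check, but in any case $\gamma_{\{(k+1)/2\},k}(\mu(\CD_{0,k}))$ is then computed directly from \cref{lemma:sym-bool-gamma} with $s = t = (k+1)/2$, and whatever value it takes, the statement of the lemma only asks for the bound $\lambda_{\{(k+1)/2\}}(\CD_{0,k},p'_k)$ itself, so it suffices to show $\gamma_{\{(k+1)/2\},k}(\mu(\CD_{0,k})) = 1$, i.e. that $\mu(\CD_{0,k})$ lies in the ``flat'' region $[\epsilon_{s,k},\epsilon_{t,k}]$ — here $s=t$, so the flat region is the single point $\epsilon_{(k+1)/2,k} = \frac1k$, and indeed $\mu(\CD_{0,k}) = \frac1k$ hits it exactly. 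Good: so $\gamma_{\{(k+1)/2\},k}(\mu(\CD_{0,k})) = 1$ and the ratio is just $\lambda_{\{(k+1)/2\}}(\CD_{0,k},p'_k)$.

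The substantive computational step is to verify that $p'_k$ is the maximizer of $g(p) \eqdef \lambda_{\{(k+1)/2\}}(\CD_{0,k},p)$ on $[0,1]$. Using \cref{lemma:sym-bool-lambda} with $S = \{(k+1)/2\}$ and the two-atom support $\{0,k\}$ of $\CD_{0,k}$, the only surviving terms are $i=0$ and $i=k$: for $i=0$ the inner sum forces $j=0$, contributing $\binom{k}{(k+1)/2} q^{(k+1)/2} p^{(k-1)/2}$; for $i=k$ it forces $j=(k+1)/2$, contributing $\binom{k}{(k+1)/2} q^{(k-1)/2} p^{(k+1)/2}$. Hence
\[
g(p) = \binom{k}{(k+1)/2}\left(\frac{k-1}{2k}\,q^{(k+1)/2}p^{(k-1)/2} + \frac{k+1}{2k}\,q^{(k-1)/2}p^{(k+1)/2}\right),
\]
with $q = 1-p$. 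Then I would compute $g'(p)$, factor out the common $q^{(k-3)/2}p^{(k-3)/2}$, and show the remaining factor is (up to sign and positive constants) the quadratic $2kp^2 + (k^2 - 3k)p - (k^2-k)/\!\;$-- or whatever quadratic actually appears -- whose roots are $p'_k$ and one root outside $[0,1]$; the value $p'_k = \frac{3k - k^2 + \sqrt{4k + k^2 - 2k^3 + k^4}}{4k}$ is exactly the expression obtained by applying the quadratic formula to that quadratic, so this is a matter of matching coefficients. Since $g$ vanishes at $p=0$ and $p=1$ and is positive in between, $p'_k$ being the unique interior critical point forces it to be the maximizer. This is the kind of routine-but-fiddly algebra the paper elsewhere delegates to Mathematica, so I would do likewise.

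The main obstacle, such as it is, is purely bookkeeping: getting the binomial/power exponents right in \cref{lemma:sym-bool-lambda} for the two-atom distribution, and then correctly extracting the relevant quadratic from $g'$ so that its root matches the stated closed form for $p'_k$. There is no conceptual difficulty — unlike the lower bounds \cref{lemma:kand-lb} and \cref{lemma:k-1-k-lb}, this upper-bound lemma does \emph{not} require the saddle-point property or \cref{prop:lin-opt}; it only needs that $\CD_{0,k}$ is a \emph{feasible} choice of $\CD_N$ and that $p'_k$ maximizes $\lambda$ along that fixed distribution, which is a single-variable calculus exercise. One small point to be careful about: I should confirm $p'_k \in [0,1]$ (so that it is a legitimate maximizer over the constraint set rather than an unconstrained critical point), which follows by estimating the surd for $k\geq 3$, and note the degenerate check at $k=3$ if the generic factorization of $g'$ happens to collapse there.
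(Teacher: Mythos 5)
Your proposal is correct and follows the paper's proof essentially verbatim: observe $\mu(\CD_{0,k})=\tfrac1k=\epsilon_{(k+1)/2,k}$ so $\gamma_{\{(k+1)/2\},k}(\mu(\CD_{0,k}))=1$ (the paper establishes this in the sentence immediately following the lemma, not inside the proof, but it's the same argument), then reduce to the single-variable calculus of maximizing $\lambda_{\{(k+1)/2\}}(\CD_{0,k},\cdot)$ by taking the derivative, factoring out $(pq)^{(k-3)/2}$, and showing that the remaining quadratic $4kp^2+(2k^2-6k)p+(-k^2+2k-1)$ has $p'_k$ as its only root in $[0,1]$ (the other root $\tfrac{3k-k^2-\sqrt{4k+k^2-2k^3+k^4}}{4k}$ being nonpositive because $(3k-k^2)^2-(4k+k^2-2k^3+k^4)=-4k(k-1)^2\leq 0$). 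Your algebraic bookkeeping for $g(p)$ from \cref{lemma:sym-bool-lambda} matches the paper's \cref{eqn:k+1/2-d0_k}, and your plan to handle $p'_k\in[0,1]$ by locating the other root is exactly what the paper does.
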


\begin{lemma}\label{lemma:k+1/2-lb}
The following holds for all odd $k \in \{3,\ldots,51\}$. For all $p \in [0,1]$, the expression $ \frac{\lambda_{\{(k+1)/2\}}(\cdot,p)}{\gamma_{\{(k+1)/2\},k}(\mu(\cdot))}$ is minimized at $\CD_{0,k}$.
\end{lemma}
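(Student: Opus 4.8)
The plan is to prove \cref{lemma:k+1/2-lb} by the max-min method, exactly as executed for \cref{lemma:kand-lb} and \cref{lemma:k-1-k-lb}, but with the extra wrinkle that the claim is stated for \emph{every} $p\in[0,1]$ rather than a single cleverly-chosen $p^*$. Concretely, fix odd $k\in\{3,\dots,51\}$ and $p\in[0,1]$. By the explicit formula \cref{lemma:sym-bool-gamma} (and \cref{lemma:sym-bool-mu}), $\gamma_{\{(k+1)/2\},k}(\mu(\CD))$ is a piecewise-linear, nonnegative function of the vector $(\CD\langle 0\rangle,\dots,\CD\langle k\rangle)$; and by \cref{lemma:sym-bool-lambda}, $\lambda_{\{(k+1)/2\}}(\CD,p)$ is linear in that same vector with nonnegative coefficients (for $p\in[0,1]$, $q=1-p\in[0,1]$). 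So the ratio $\lambda_{\{(k+1)/2\}}(\cdot,p)/\gamma_{\{(k+1)/2\},k}(\mu(\cdot))$ is a ratio of a linear form and a piecewise-linear form over the simplex $\Delta_k$, and \cref{prop:lin-opt} reduces the minimization to a finite check. First I would break the simplex into the (at most three) regions on which $\gamma_{\{(k+1)/2\},k}(\mu(\cdot))$ is linear — recalling from \cref{lemma:sym-bool-gamma} that for the singleton set $S=\{(k+1)/2\}$ we have $s=t=(k+1)/2$, so $\gamma$ equals $\min\{\tfrac{1+\mu}{1+\epsilon_{s,k}},1,\tfrac{1-\mu}{1-\epsilon_{s,k}}\}$ — and apply \cref{prop:lin-opt} on each region separately, so that it suffices to check the inequality $\lambda_{\{(k+1)/2\}}(\vece_i,p)\ge \lambda_{\{(k+1)/2\}}(\CD_{0,k},p)\cdot \gamma_{\{(k+1)/2\},k}(\mu(\vece_i))/\gamma_{\{(k+1)/2\},k}(\mu(\CD_{0,k}))$ at the vertices $\vece_i$ (Hamming weight $i$, $i\in\kz$) of each region, plus the boundary distributions where $\mu=\epsilon_{s,k}$ (i.e.\ $\gamma=1$).

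The heart of the argument is then the family of scalar inequalities indexed by $i\in\kz$: writing $c_i(p)\eqdef \lambda_{\{(k+1)/2\}}(\vece_i,p)$ (the parenthesized binomial sum from \cref{lemma:sym-bool-lambda} with $s=(k+1)/2$), and using that $\CD_{0,k}$ is supported on $\{0,k\}$ so $\mu(\CD_{0,k})=\epsilon_{0,k}\cdot\tfrac{k-1}{2k}+\epsilon_{k,k}\cdot\tfrac{k+1}{2k}=\tfrac1k$ and hence (since $\tfrac1k<\epsilon_{s,k}$ for $k\ge3$) $\gamma_{\{(k+1)/2\},k}(\mu(\CD_{0,k}))=\tfrac{1+1/k}{1+\epsilon_{s,k}}$, the claim becomes a statement of the form ``$c_i(p)\ge (\text{affine in }i)\cdot c_0(p)\cdot q^k + \dots$'' — more precisely a polynomial inequality in $p$ (and $i$, and $k$) after clearing denominators. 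I would first reduce the number of free cases: by \cref{lemma:sym-bool-lambda}, $\lambda_{\{(k+1)/2\}}(\CD_{0,k},p)=\tfrac{k-1}{2k}\binom{k}{(k+1)/2}q^{(k-1)/2}p^{(k+1)/2}+\tfrac{k+1}{2k}\binom{k}{(k-1)/2}q^{(k+1)/2}p^{(k-1)/2}$, which is a clean closed form, and $p'_k$ from \cref{lemma:k+1/2-ub} is exactly its maximizer over $[0,1]$ (obtained by setting the derivative, a degree-two polynomial in $p$ after factoring, to zero). With that closed form in hand, the remaining task is to verify finitely many polynomial inequalities in the single variable $p\in[0,1]$, for each of the $\le k+1$ values of $i$ and each of the $\le 3$ linear regions, for each fixed $k\in\{3,\dots,51\}$. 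Since $k$ ranges over a finite set and the inequalities are univariate polynomials of bounded degree, each can be certified by computer algebra (e.g.\ checking nonnegativity on $[0,1]$ via Sturm sequences or by exhibiting an explicit sum-of-squares / positivity certificate), which matches the paper's stated reliance on Mathematica.

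The main obstacle is precisely that last step: unlike the $\kand$ and $\Th^{k-1}_k$ analyses, where a slick substitution $r=p^*/(1-p^*)$ and a one-line induction on $i$ handled all weights uniformly for all $k$, here there seems to be no uniform-in-$k$ closed form, which is exactly why the lemma is restricted to $k\le 51$ and why ``\cref{sec:other-analysis}'' notes the method is not a silver bullet. So I expect the proof to proceed by: (i) establishing the structural reduction (ratio of linear over piecewise-linear $\Rightarrow$ vertex check via \cref{prop:lin-opt}) once and for all, cleanly and $k$-independently; (ii) reducing the vertex check to an explicit finite list of univariate polynomial inequalities $P_{k,i,\text{region}}(p)\ge 0$ on $[0,1]$; and (iii) discharging that list by computer-verified positivity certificates, with the Mathematica notebook cited. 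One subtlety to be careful about: at the region boundary $\mu=\epsilon_{s,k}$ the function $\gamma$ is continuous but only piecewise-differentiable, so I would phrase the whole argument in terms of \cref{prop:lin-opt}'s convex-combination formulation (writing any $\CD\in\Delta_k$ as a combination of vertices of the relevant region) rather than calculus, which sidesteps any non-smoothness entirely and also automatically covers the degenerate cases where some $\epsilon_{i,k}$ coincide or where a region is empty.
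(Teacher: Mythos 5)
Your overall plan — split $\Delta_k$ into the (two) regions on which $\gamma_{\{(k+1)/2\},k}(\mu(\cdot))$ is linear, apply \cref{prop:lin-opt} on each to reduce to a finite vertex check, and discharge the resulting polynomial inequalities in $p$ by computer algebra — is exactly the paper's approach. But there is a genuine gap in the reduction to a finite check. You correctly note that in addition to the ``pure'' vertices $\CD_i$ (all mass on Hamming weight $i$), one must also account for ``boundary distributions where $\mu = \epsilon_{s,k}$,'' but you never identify what these are as a finite list of extreme points, and your count of ``$\leq k+1$ values of $i$'' per region drops them entirely. The polytope $\Delta_k^+ = \Delta_k \cap \{\mu \leq \tfrac1k\}$ is not a simplex: slicing $\Delta_k$ by the hyperplane $\mu = \tfrac1k$ creates a new vertex $\CD_{i,j}$ on each edge from $\CD_i$ to $\CD_j$ with $i < (k+1)/2 < j$, namely $\CD_{i,j}\langle i\rangle = \tfrac{2j-(k+1)}{2(j-i)}$, $\CD_{i,j}\langle j\rangle = \tfrac{(k+1)-2i}{2(j-i)}$, so $\mu(\CD_{i,j})=\tfrac1k$. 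That is $\Theta(k^2)$ extra vertices, and \cref{prop:lin-opt} requires the target point to be a nonnegative combination of the chosen $\vecy(i)$'s, so they cannot be omitted. The paper's proof makes this explicit: it writes every $\CD\in\Delta_k^\pm$ as a convex combination of $\{\CD_i\}\cup\{\CD_{i,j}\}$ and then verifies $O(k^2)$ (not $O(k)$) degree-$k$ polynomial inequalities in $p$ with Mathematica.

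Worse, the omission is fatal to the argument, not just to the bookkeeping: the conjectured minimizer $\CD_{0,k}$ \emph{is} one of the $\CD_{i,j}$ (take $i=0$, $j=k$), not a pure $\CD_i$, so restricting the check to $\{\CD_i\}_{i\in\kz}$ would never even encounter the point at which the minimum is attained. (A side slip: you write ``since $\tfrac1k < \epsilon_{s,k}$ for $k\ge 3$,'' but with $s=(k+1)/2$ one has $\epsilon_{s,k} = \tfrac1k$ exactly, so $\CD_{0,k}$ sits on the hyperplane where both branches of $\gamma$ give $1$. This doesn't change the answer — $\gamma(\mu(\CD_{0,k}))=1$ either way — but it is another hint that you should be thinking of the hyperplane face, and its vertices, as first-class objects.) To repair the argument, supplement your step (ii) with the explicit decomposition of $\Delta_k^\pm$ into convex combinations of $\{\CD_i\}\cup\{\CD_{i,j}\}$, and expand your list of inequalities to range over the $\CD_{i,j}$ as well; the rest of your proposal then matches the paper.
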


We begin by writing an explicit formula for $\lambda_{\{(k+1)/2\}}$. \cref{lemma:sym-bool-lambda} gives \[ \lambda_{\{(k+1)/2\}}(\CD,p) = \sum_{i=0}^k \left(\sum_{j=\max\{0,i-(k-1)/2\}}^{\min\{i,(k+1)/2\}} \binom{i}j \binom{k}{(k+1)/2-j} (1-p)^{(k+1)/2+i-2j} p^{(k-1)/2-i+2j}\right)\,\CD\langle i \rangle. \] For $i \leq (k-1)/2$, the sum over $j$ goes from $0$ to $i$, and for $i \geq (k+1)/2$, it goes from $i-(k-1)/2$ to $(k+1)/2$. Thus, plugging in $\CD_{0,k}$, we get:
\begin{equation}
    \lambda_{\{(k+1)/2\}}(\CD_{0,k},p) = \binom{k}{(k+1)/2} \left(\frac{k-1}{2k}(1-p)^{(k+1)/2}p^{(k-1)/2} + \frac{k+1}{2k}(1-p)^{(k-1)/2}p^{(k+1)/2}\right).\label{eqn:k+1/2-d0_k}
\end{equation}

By \cref{lemma:sym-bool-gamma,lemma:sym-bool-mu}, $\gamma_{\{(k+1)/2\},k}(\mu(\CD_{0,k})) = \gamma_{\{(k+1)/2\},k}(\frac1k) = 1$. Thus, \cref{lemma:k+1/2-ub,lemma:k+1/2-lb} together imply the following theorem:

\begin{theorem}\label{thm:k+1/2-approximability}
For odd $k \in \{3,\ldots,51\}$, \[ \alpha(f_{\{(k+1)/2\},k}) = \binom{k}{(k+1)/2} \left(\frac{k-1}{2k}(1-p'_k)^{(k+1)/2}(p'_k)^{(k-1)/2} + \frac{k+1}{2k}(1-p'_k)^{(k-1)/2}(p'_k)^{(k+1)/2}\right), \] where $p'_k = \frac{3 k - k^2 + \sqrt{4 k + k^2 - 2 k^3 + k^4}}{4k}$ as in \cref{lemma:k+1/2-ub}.
\end{theorem}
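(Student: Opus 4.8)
The plan is to obtain \cref{thm:k+1/2-approximability} as the combination of the upper bound \cref{lemma:k+1/2-ub} and the matching lower bound \cref{lemma:k+1/2-lb}, exactly as indicated just before the theorem statement. Note first that the displayed closed form is by definition $\lambda_{\{(k+1)/2\}}(\CD_{0,k},p'_k)$ (via \cref{eqn:k+1/2-d0_k}), and since $\mu(\CD_{0,k}) = \tfrac1k = \epsilon_{(k+1)/2,k}$, \cref{lemma:sym-bool-gamma,lemma:sym-bool-mu} give $\gamma_{\{(k+1)/2\},k}(\mu(\CD_{0,k})) = 1$; so modulo the two lemmas there is nothing further to do. For \cref{lemma:k+1/2-ub} I would use the formulation $\alpha(f_{\{(k+1)/2\},k}) = \inf_{\CD_N \in \Delta_k} \beta_{\{(k+1)/2\}}(\CD_N)/\gamma_{\{(k+1)/2\},k}(\mu(\CD_N))$ of \cref{eqn:alpha-optimize-over-dn}, plug in $\CD_N = \CD_{0,k}$, and use that the denominator is $1$, so $\alpha(f_{\{(k+1)/2\},k}) \le \beta_{\{(k+1)/2\}}(\CD_{0,k}) = \sup_{p\in[0,1]} \lambda_{\{(k+1)/2\}}(\CD_{0,k},p)$. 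It then remains to show this supremum is attained at $p'_k$: starting from \cref{eqn:k+1/2-d0_k} and collapsing the convex combination, $\lambda_{\{(k+1)/2\}}(\CD_{0,k},p)$ equals a positive constant times $(1-p)^{(k-1)/2}p^{(k-1)/2}(k-1+2p)$, whose derivative factors as $(1-p)^{(k-3)/2}p^{(k-3)/2}$ times a quadratic in $p$; the quadratic formula shows exactly one of its two roots lies in $(0,1)$, namely $p'_k$ with discriminant $k^4 - 2k^3 + k^2 + 4k$, and since the expression vanishes at $p\in\{0,1\}$ its maximum over $[0,1]$ is at $p'_k$. This gives $\alpha(f_{\{(k+1)/2\},k}) \le \lambda_{\{(k+1)/2\}}(\CD_{0,k},p'_k)$, a routine (computer-assisted) calculation.

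For the lower bound I would invoke the max-min inequality \cref{eqn:max-min}, which gives $\alpha(f_{\{(k+1)/2\},k}) \ge \sup_{p\in[0,1]} \inf_{\CD_N\in\Delta_k} \lambda_{\{(k+1)/2\}}(\CD_N,p)/\gamma_{\{(k+1)/2\},k}(\mu(\CD_N))$. By \cref{lemma:k+1/2-lb}, for each fixed $p$ the inner infimum is attained at $\CD_{0,k}$, where it equals $\lambda_{\{(k+1)/2\}}(\CD_{0,k},p)$ (again because $\gamma_{\{(k+1)/2\},k}(\mu(\CD_{0,k})) = 1$); taking the supremum over $p$ and reusing the argmax computation above yields $\alpha(f_{\{(k+1)/2\},k}) \ge \lambda_{\{(k+1)/2\}}(\CD_{0,k},p'_k)$, matching the upper bound. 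So the theorem follows once \cref{lemma:k+1/2-lb} is established.

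The real work, and the main obstacle, is \cref{lemma:k+1/2-lb}: for every $p\in[0,1]$, $\CD_{0,k}$ minimizes $\lambda_{\{(k+1)/2\}}(\CD_N,p)/\gamma_{\{(k+1)/2\},k}(\mu(\CD_N))$ over $\Delta_k$. My plan is to split $\Delta_k$ along the hyperplane $\mu(\CD_N) = \tfrac1k$. On each of the two halves $\gamma_{\{(k+1)/2\},k}(\mu(\cdot))$ is a single linear functional of the coordinates $\CD_N\langle 0\rangle,\ldots,\CD_N\langle k\rangle$ (immediate from \cref{lemma:sym-bool-gamma,lemma:sym-bool-mu}, using that $\mu(\CD_N)$ is itself linear in these coordinates), so the objective is a ratio of nonnegative linear functions and \cref{prop:lin-opt} applies. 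Here the naive choice of the standard basis (the single-weight distributions $\mathbf{e}_i$) as the spanning set fails — unlike for $\kand$, where $\gamma$ happens to be globally linear — because $\CD_{0,k}$ is supported on $\{0,k\}$ and lies on the ridge $\mu = \tfrac1k$ rather than at a vertex of $\Delta_k$, and a direct check shows the resulting per-coordinate inequalities are violated at the extreme Hamming weights. Instead one should span each half-polytope $\Delta_k \cap \{\mu \le \tfrac1k\}$ and $\Delta_k \cap \{\mu \ge \tfrac1k\}$ by its own vertices: the single-weight distributions on the relevant side, together with the two-weight distributions on index pairs straddling $(k+1)/2$ that lie on the ridge $\mu = \tfrac1k$, of which $\CD_{0,k}$ is the $\{0,k\}$ case.

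Applying \cref{prop:lin-opt} with this spanning set reduces the lemma to a finite list of univariate polynomial inequalities in $p$ on $[0,1]$, one per vertex of the two half-polytopes (parametrized by $k$ and the index pair), each asserting that the vertex's objective value is at least $\lambda_{\{(k+1)/2\}}(\CD_{0,k},p)$ — equivalently, that $\CD_{0,k}$ minimizes $\lambda_{\{(k+1)/2\}}(\cdot,p)$ on the ridge and beats every single-weight distribution after dividing by its $\gamma$-value. I expect verifying this system to be the genuine difficulty: there appears to be no clean uniform-in-$k$ argument, so each value of $k$ is checked with computer assistance, which is precisely why the statement — and hence \cref{thm:k+1/2-approximability} — is confined to $k\in\{3,\ldots,51\}$.
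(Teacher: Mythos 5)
Your proposal is correct and takes essentially the same approach as the paper: the theorem reduces to \cref{lemma:k+1/2-ub,lemma:k+1/2-lb} once one notes $\gamma_{\{(k+1)/2\},k}(\mu(\CD_{0,k})) = \gamma_{\{(k+1)/2\},k}(1/k) = 1$; the upper bound is a derivative computation with the same quadratic and discriminant $k^4-2k^3+k^2+4k$; and the lower bound is established exactly as you describe, splitting $\Delta_k$ along the hyperplane $\mu(\cdot)=1/k$, writing the vertices of each half-polytope as the single-weight distributions on the relevant side together with the two-weight distributions $\CD_{i,j}$ (for $i<(k+1)/2<j$) on the ridge, applying \cref{prop:lin-opt}, and verifying the resulting $O(k^2)$ per-vertex polynomial inequalities in $p$ by computer for each odd $k\in\{3,\ldots,51\}$.
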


Recall that $\rho(f_{(k+1)/2,k}) = \binom{k}{(k+1)/2} 2^{-k}$. Although we currently lack a lower bound on $\alpha(f_{\{(k+1)/2\},k})$ for large odd $k$, the upper bound from \cref{lemma:k+1/2-ub} suffices to prove \cref{thm:k+1/2-asymptotic-lb}, i.e., it can be verified that \[ \lim_{k \text{ odd} \to \infty} \frac{\binom{k}{(k+1)/2} \left( \frac{k-1}{2k}(1-p'_k)^{(k+1)/2}(p'_k)^{(k-1)/2} + \frac{k+1}{2k}(1-p'_k)^{(k-1)/2}(p'_k)^{(k+1)/2}\right)}{\rho(f_{\{(k+1)/2\},k})} = 1. \]

We remark that for $f_{\{(k+1)/2\},k}$, our lower bound (\cref{lemma:k+1/2-lb}) is \emph{stronger} than what we were able to prove for $\kand$ (\cref{lemma:kand-lb}) and $\Th^{k-1}_k$ (\cref{lemma:k-1-k-lb}) because the inequality holds regardless of $p$. This is fortunate for us, as the optimal $p^*$ from \cref{lemma:k+1/2-ub} is rather messy.\footnote{The analogous statement is false for e.g. $\threeand$, where we had $\CD_N^* = (0,0,1,0)$, but at $p=\frac34$, \[ \frac{\lambda_{\{3\}}((0,\frac12,\frac12,0),\frac34)}{\gamma_{\{3\},3}(\mu(0,\frac12,\frac12,0))} = \frac3{16} \leq \frac{27}{128} =\frac{\lambda_{\{3\}}((0,0,1,0),\frac34)}{\gamma_{\{3\},3}(\mu(0,0,1,0))}. \]} It remains to prove \cref{lemma:k+1/2-ub,lemma:k+1/2-lb}.

\begin{proof}[Proof of \cref{lemma:k+1/2-ub}]
Taking the derivative with respect to $p$ of \cref{eqn:k+1/2-d0_k} yields \[ \frac{d}{dp}\left[\lambda_{\{(k+1)/2\}} (\CD_{0,k},p)\right] = - \frac1{4k} \binom{k}{(k+1)/2} (pq)^{(k-3)/2} (4kp^2 + (2k^2-6k)p + (-k^2+2k-1)), \] where $q=1-p$. Thus, $\lambda_{\{(k+1)/2\}} (\CD_{0,k},\cdot)$ has critical points at $p=0,1,p'_k,$ and $\frac{3 k - k^2 - \sqrt{4 k + k^2 - 2 k^3 + k^4}}{4k}$. This last value is nonpositive for all $k \geq 0$ (since $(3k-k^2)^2-(4 k + k^2 - 2 k^3 + k^4)=-4k(k-1)^2$).
\end{proof}

The proof of our lower bound (\cref{lemma:k+1/2-lb}) is slightly different than those of our earlier lower bounds (i.e., \cref{lemma:kand-lb,lemma:k-1-k-lb}) in the following sense. For $i \in \kz$, let $\CD_i \in \Delta_k$ be defined by $\CD_i\langle i \rangle=1$. For $\kand$ (\cref{lemma:kand-lb}), we used the fact that $\frac{\lambda_{\{k\}}(\cdot,p^*)}{\gamma_{\{k\},k}(\mu(\cdot))}$ is a ratio of linear functions, and thus using \cref{prop:lin-opt}, it is sufficient to verify the lower bound at $\CD_0,\ldots,\CD_k$. For $\Th_k^{k-1}$ (\cref{lemma:k-1-k-lb}), $\frac{\lambda_{\{k-1,k\}}(\cdot,p^*)}{\gamma_{\{k-1,k\},k}(\mu(\cdot))}$ is \emph{not} a ratio of linear functions, because the denominator $\gamma_{\{k-1,k\},k}(\mu(\CD)) = \min\{\sum_{i=0}^k \frac{i}{k-1} \,\CD\langle i \rangle, 1\}$ is not linear over $\Delta_k$. However, we managed to carry out the proof by upper-bounding the denominator with the linear function $\gamma'(\CD) = \sum_{i=0}^k \frac{i}{k-1} \CD\langle i \rangle$, and then invoking \cref{prop:lin-opt} (again, to show that it suffices to verify the lower bound at $\CD_0,\ldots,\CD_k$).

For $f_{\{(k+1)/2\},k}$, we show that it suffices to verify the lower bound on a larger (but still finite) set of distributions.

\begin{proof}[Proof of \cref{lemma:k+1/2-lb}]
Recalling that $\epsilon_{(k+1)/2,k} = \frac1k$, let $\Delta^+_k = \{\CD \in \Delta_k : \mu(\CD) \leq \frac1k\}$ and $\Delta^-_k = \{\CD \in \Delta_k : \mu(\CD) \geq \frac1k\}$. Note that $\Delta_k^+ \cup \Delta_k^- = \Delta_k$, and restricted to either $\Delta_k^+$ or $\Delta_k^-$, $\gamma_{\{(k+1)/2\},k}(\mu(\cdot))$ is linear and thus we can apply \cref{prop:lin-opt} to $\frac{\lambda_{\{k-1,k\}}(\cdot,p^*)}{\gamma_{\{k-1,k\},k}(\mu(\cdot))}$.

Let $\CD_{i,j} \in \Delta_k$, for $i < (k+1)/2, j > (k+1)/2$, be defined by $\CD_{i,j}\langle i \rangle=\frac{2j-(k+1)}{2(j-i)}$ and $\CD_{i,j}\langle j \rangle=\frac{(k+1)-2i}{2(j-i)}$. Note that $\mu(\CD_{i,j}) = \frac1k$ for each $i,j$. We claim that $\{\CD_i\}_{i \leq (k+1)/2} \cup \{\CD_{i,j}\}$ are the extreme points of $\Delta_k^+$, or more precisely, that every distribution $\CD \in \Delta_k^+$ can be represented as a convex combination of these distributions. Indeed, this follows constructively from the procedure which, given a distribution $\CD$, subtracts from each $\CD\langle i \rangle$ for $i < (k+1)/2$ (adding to the coefficient of the corresponding $\CD_i$) until the marginal of the (renormalized) distribution is $\frac1k$, and then subtracts from pairs $\CD\langle i \rangle,\CD\langle j \rangle$ with $i < (k+1)/2$ and $j > (k+1)/2$, adding it to the coefficient of the appropriate $\CD_{i,j}$) until $\CD$ vanishes (i.e., $\CD\langle i \rangle$ is zero for all $i \in \kz$). Similarly, every distribution $\CD \in \Delta_k^-$ can be represented as a convex combination of the distributions $\{\CD_i\}_{i \geq (k+1)/2} \cup \{\CD_{i,j}\}$. Thus, by \cref{prop:lin-opt}, it is sufficient to verify that \[\frac{\lambda_{\{(k+1)/2\}}(\CD,p)}{\gamma_{\{(k+1)/2\},k}(\mu(\CD))} \geq \frac{\lambda_{\{(k+1)/2\}}(\CD_N^*,p)}{\gamma_{\{(k+1)/2\},k}(\mu(\CD_N^*))} \] for each $\CD \in \{\CD_i\}\cup\{\CD_{i,j}\}$. Treating $p$ as a variable, for each odd $k\in\{3,\ldots,51\}$ we produce a list of $O(k^2)$ degree-$k$ polynomial inequalities in $p$ which we verify using Mathematica.
\end{proof}

\subsection{Other symmetric predicates}\label{sec:other-analysis}

In \cref{table:other-sym-funcs} below, we list four more symmetric Boolean predicates (beyond $\kand$, $\Th^{k-1}_k$, and $f_{\{(k+1)/2\},k}$) whose sketching approximability we have analytically resolved using the ``max-min method''. These values were calculated using two subroutines in the Mathematica code, \texttt{estimateAlpha} --- which numerically or symbolically estimates the $\CD_N$, with a given support, which minimizes $\alpha$ --- and \texttt{testMinMax} --- which, given a particular $\CD_N$, calculates $p^*$ for that $\CD_N$ and checks analytically whether lower-bounding by evaluating $\lambda_S$ at $p^*$ proves that $\CD_N$ is minimal.

\begin{table}[h]
    \centering
    \begin{tabular}{|c|c|c|c|}
        \hline
         $S$ & $k$ & $\alpha$ & $\CD_N^*$ \\ \hline
         $\{2,3\}$ & $3$ & $\frac12+\frac{\sqrt3}{18} \approx 0.5962$ & $(0,\frac12,0,\frac12)$ \\ 
         $\{4,5\}$ & $5$ & $8\rroot(P_1) \approx 0.2831$ & $(0,0,1-\rroot(P_2),\rroot(P_2),0,0)$ \\
         $\{4\}$ & $5$ & $8\rroot(P_3) \approx 0.2394$ & $(0,0,1-\rroot(P_4),\rroot(P_4),0,0)$ \\
         $\{3,4,5\}$ & $5$ & $\frac12+\frac{3\sqrt5}{125} \approx 0.5537$ & $(0,\frac12,0,0,0,\frac12)$\\
         \hline
    \end{tabular}
    \caption{Symmetric predicates for which we have analytically calculated exact $\alpha$ values using the ``max-min method''. For a polynomial $P : \BR \to \BR$ with a \emph{unique} positive real root, let $\rroot(p)$ denote that root, and define the polynomials $P_1(z)=-72+4890z-108999z^2+800000z^3, P_2(z)=-908+5021z-9001z^2+5158z^3$, $P_3(z) = -60+5745z-183426z^2+1953125z^3$, $P_4(z) = -344+1770z-3102z^2+1811z^3$. (We note that in the $f_{\{4\},5}$ and $f_{\{4,5\},5}$ calculations, we were required to check equality of roots numerically (to high precision) instead of analytically).}
    \label{table:other-sym-funcs}
\end{table}

We remark that two of the cases in \cref{table:other-sym-funcs} (as well as $\kand$), the optimal $\CD_N$ is rational and supported on two coordinates. However, in the other two cases in \cref{table:other-sym-funcs}, the optimal $\CD_N$ involves roots of a cubic.

In \cref{sec:k+1/2-analysis}, we showed that $\CD_N^*$ defined by $\CD_N^*\langle 0 \rangle=\frac{k-1}{2k}$ and $\CD_N^*\langle k \rangle=\frac{k+1}{2k}$ is optimal for $f_{\{(k+1)/2\},k}$ for odd $k \in \{3,\ldots,51\}$. Using the same $\CD_N^*$, we are also able to resolve 11 other cases in which $S$ is ``close to'' $\{(k+1)/2\}$; for instance, $S=\{5,6\},\{5,6,7\},\{5,7\}$ for $k=9$. (We have omitted the values of $\alpha$ and $\CD_N$ because they are defined using the roots of polynomials of degree up to 8.)

In all previously-mentioned cases, the condition ``$\CD_N^*$ has support size $2$'' was helpful, as it makes the optimization problem over $\CD_N^*$ essentially univariate; however, we have confirmed analytically in two other cases ($S=\{3\},k=4$ and $S=\{3,5\},k=5$) that ``max-min method on distributions with support size two'' does not suffice for tight bounds on $\alpha$ (see \texttt{testDistsWithSupportSize2} in the Mathematica code). However, using the max-min method with $\CD_N$ supported on two levels still achieves decent (but not tight) bounds on $\alpha$. For $S = \{3\},k=4$, using $\CD_N = (\frac14,0,0,0,\frac34)$, we get the bounds $\alpha(f_{\{3\},4}) \in [0.3209,0.3295]$ (the difference being $2.67\%$). For $S = \{3,5\},k=5$, using $\CD_N = (\frac14,0,0,0,\frac34,0)$, we get $\alpha(f_{\{3,5\},5}) \in [0.3416,0.3635]$ (the difference being $6.42\%$).

Finally, we have also analyzed cases where we get numerical solutions which are very close to tight, but we lack analytical solutions because they likely involve roots of high-degree polynomials. For instance, in the case $S = \{4,5,6\}, k=6$,  setting $\CD_N = (0,0,0,0.930013,0,0,0.069987)$  gives $\alpha(f_{\{4,5,6\},6}) \in [0.44409972,0.44409973]$, differing only by 0.000003\%. (We conjecture here that $\alpha=\frac49$.) For $S=\{6,7,8\},k=8$, using
$\CD_N=(0,0,0,0,0.699501,0.300499)$, we get the bounds $\alpha(f_{\{6,7,8\},8}) \in [0.20848,0.20854]$ (the difference being $0.02\%$).\footnote{Interestingly, in this latter case, we get bounds differing by $2.12\%$ using $\CD_N=(0,0,0,0,\frac9{13},\frac4{13},0,0,0)$ in an attempt to continue the pattern from $f_{\{7,8\},8}$ and $f_{\{8\},8}$ (where we set $\CD_N^* = (0,0,0,0,\frac{16}{25},\frac9{25},0,0,0)$ and $(0,0,0,0,\frac{25}{41},\frac{16}{41},0,0,0)$ in \cref{sec:k-1-k-analysis} and \cref{sec:kand-analysis}, respectively).}

\section{Simple sketching algorithms for threshold predicates}\label{sec:thresh-alg}

Let $f_{S,k} = \Th^t_k$ be a threshold predicate (so that $S = \{t,\ldots,k\}$). The main goal of this section is to prove \cref{thm:thresh-bias-alg}, giving a simple ``bias-based'' sketching algorithm for $\mbTh$. Following our definition of the bias of a variable for $\mtwoand$ instances in \cref{sec:mdcut-algorithm}, given an instance $\Psi$ of $\mbTh$, for $i \in [n]$, let $\bias_\Psi(i)$ denote the total weight of clauses in which $x_i$ appears positively minus the weight of those in which it appears negatively; that is, if $\Psi$ consists of clauses $((\vecb(\ell),\vecj(\ell),w(\ell))_{\ell\in[m]}$, then \[ \bias_\Psi(i) \eqdef \sum_{\ell\in[m]:~j(\ell)_t = i} (-1)^{b(\ell)_t} w_\ell, \] and $\bias_\Psi \eqdef \frac1{kW_\Psi} \sum_{i=1}^n |\bias_\Psi(i)|$. $\bias_\Psi$ is measurable using $1$-norm sketching algorithms (i.e., \cref{thm:l1-sketching} due to \cite{Ind06,KNW10}, as used also in \cite{GVV17,CGV20,CGSV21-boolean}):

\begin{corollary}\label{cor:bias-sketching}
For every predicate $f : \BZ_2^k \to \{0,1\}$ and every $\epsilon>0$, there exists an $O(\log n/\epsilon^2)$-space randomized sketching algorithm for the following problem: The input is an instance $\Psi$ of $\mbTh$ (given as a stream of constraints), and the goal is to estimate $\bias_\Psi$ up to a multiplicative factor of $1\pm\epsilon$.
\end{corollary}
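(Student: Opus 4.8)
The plan is to reduce the problem to the $1$-norm sketching primitive of \cref{thm:l1-sketching} in exactly the same way as was done for $\mtwoand$ in \cref{sec:mdcut-algorithm}. The key observation is that, writing $\bias_\Psi(i)$ for the signed weight of appearances of variable $i$ as in the displayed definition, the quantity $k W_\Psi \cdot \bias_\Psi$ is precisely the $1$-norm of the vector $\vecv \in \BZ^n$ with $v_i = \bias_\Psi(i)$; and this vector is presented to us in a streaming-friendly way. Concretely, each constraint $(\vecb(\ell),\vecj(\ell),w(\ell))$ contributes $k$ additive updates to $\vecv$, namely $v_{j(\ell)_t} \mathrel{+}= (-1)^{b(\ell)_t} w(\ell)$ for $t \in [k]$; so from a stream of $m$ constraints we produce a stream of $km$ updates from the set $[n] \times \{-O(n^c),\ldots,O(n^c)\}$ (using the \emph{a priori} bounds on $m$ and on constraint weights discussed in \cref{sec:streaming}). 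Apply \cref{thm:l1-sketching} to this derived update stream with error parameter $\epsilon$ to obtain an estimate $\widehat{N}$ of $\|\vecv\|_1$ with $(1-\epsilon)\|\vecv\|_1 \le \widehat{N} \le (1+\epsilon)\|\vecv\|_1$.

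The only remaining wrinkle is that $\bias_\Psi = \|\vecv\|_1 / (k W_\Psi)$, so we also need $W_\Psi = \sum_\ell w(\ell)$. But $W_\Psi$ is a single nonnegative integer bounded by $O(n^{c+1})$ (a sum of $m \le O(n^c)$ weights each $O(n^c)$), and it can be maintained exactly in $O(\log n)$ space by a running counter as the constraints arrive. Outputting $\widehat{b} \eqdef \widehat{N}/(k W_\Psi)$ then gives a $(1\pm\epsilon)$-multiplicative estimate of $\bias_\Psi$. The space usage is $O(\log n/\epsilon^2)$ for the $1$-norm sketch plus $O(\log n)$ for the weight counter, i.e. $O(\log n/\epsilon^2)$ overall; and the whole construction is a sketch (the $1$-norm sketch is a sketch by \cref{thm:l1-sketching}, and exact integer counters compose trivially under addition), so the composability property is inherited.

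There is essentially no ``hard part'' here: the statement is a direct corollary of \cref{thm:l1-sketching}, and the content is just the bookkeeping observation that $\bias_\Psi(i)$ is a coordinate of a linearly-updated vector and that $W_\Psi$ is cheaply maintainable. The one point to be careful about is the definition of the update alphabet: we should note that $km = O(n^c)$ updates still fit the hypotheses of \cref{thm:l1-sketching} after adjusting the constant $c$, and that each individual update magnitude $|(-1)^{b(\ell)_t} w(\ell)| = w(\ell)$ is $O(n^c)$ by assumption, so the derived stream is indeed of the required form. (As in prior work, we suppress the success-probability amplification: a single invocation of \cref{thm:l1-sketching} already succeeds with probability $2/3$, which meets the definition of a randomized sketching algorithm.) This corollary is then exactly what is needed downstream: \cref{thm:thresh-bias-alg} will feed $\widehat{b}$ into a piecewise-linear post-processing function $\gamma$ to recover an $(\alpha(\Th^t_k)-\epsilon)$-approximation.
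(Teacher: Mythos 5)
Your proposal is correct and takes exactly the same route as the paper: reduce to the $1$-norm sketching primitive of \cref{thm:l1-sketching} by interpreting $(\bias_\Psi(1),\ldots,\bias_\Psi(n))$ as a linearly-updated vector, with each constraint contributing $k$ coordinate updates, and then divide by the (exactly maintained) normalization $kW_\Psi$. The paper's own proof is a single terse sentence that elides the normalization and (via what looks like a typo, writing updates $wb_t$ rather than $(-1)^{b_t}w$) the sign convention; your version spells both out correctly, so if anything it is a more careful rendering of the same argument.
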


\begin{proof}
Invoke the $1$-norm sketching algorithm from \cref{thm:l1-sketching} as follows: On each input constraint $(\vecb,\vecj,\ell)$ with weight $w$, insert the updates $(j_1,wb_1),\ldots,(j_k,wb_k)$ into the stream (and normalize appropriately).
\end{proof}

Now recall the definitions of $\beta_{S,k}(\mu)$ and $\gamma_{S,k}(\mu)$ from \cref{eqn:alpha-optimize-over-mu}. Our simple algorithm for $\mbTh$ relies on the following two lemmas, which we prove below:

\begin{lemma}\label{lemma:thresh-value-ub}
$\val_\Psi \leq \gamma_{S,k}(\bias_\Psi)$.
\end{lemma}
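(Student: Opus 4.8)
\textbf{Proof proposal for \cref{lemma:thresh-value-ub}.} The plan is to reduce the statement to a purely ``distributional'' inequality about a single assignment, and then invoke the already-established structural facts about template distributions and the explicit formula for $\gamma_{S,k}$ from \cref{lemma:sym-bool-gamma}. Concretely, let $\vecx^* \in \BZ_2^n$ be an optimal assignment, so $\val_\Psi = \val_\Psi(\vecx^*)$. Following the $\mtwoand$ analysis in \cref{sec:mdcut-template-alg}, I would form the template distribution $\CD_\Psi^{\vecx^*} \in \Delta_k$ (or rather its symmetrization; by \cref{prop:bool-sym} only the Hamming-weight profile matters for the quantities in play). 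The two facts to assemble are: (i) $\val_\Psi(\vecx^*) = \gamma_S(\CD_\Psi^{\vecx^*})$ — this is the generalization of \cref{lemma:2and-gamma}, i.e.\ $\gamma_f(\CD) = \lambda_f(\CD,1) = \E_{\vecb\sim\CD}[\val_{\Psi^\CD}(\vecb)]$ evaluated at $p=1$, which just says that the fraction of satisfied constraints equals the total template mass landing in the satisfying weights $S$; and (ii) $\mu(\CD_\Psi^{\vecx^*})$ is, up to sign conventions, exactly $\frac1{kW_\Psi}\sum_i (-1)^{x^*_i}\bias_\Psi(i)$ — this is the generalization of \cref{lemma:2and-mu}.

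From (ii), since $|\mu(\CD_\Psi^{\vecx^*})| = \frac1{kW_\Psi}\left|\sum_i (-1)^{x^*_i}\bias_\Psi(i)\right| \le \frac1{kW_\Psi}\sum_i |\bias_\Psi(i)| = \bias_\Psi$, we get that the marginal of the optimal template lies in $[-\bias_\Psi,\bias_\Psi]$. Now combine with (i): $\val_\Psi = \gamma_S(\CD_\Psi^{\vecx^*}) \le \gamma_{S,k}(\mu(\CD_\Psi^{\vecx^*}))$ by the very definition of $\gamma_{S,k}$ as a supremum of $\gamma_S$ over distributions with a given marginal (\cref{eqn:beta_Sk-gamma_Sk-def}). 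Finally, it remains to argue $\gamma_{S,k}(\mu(\CD_\Psi^{\vecx^*})) \le \gamma_{S,k}(\bias_\Psi)$. For a threshold predicate $\Th^t_k = f_{\{t,\dots,k\},k}$ the largest element of $S$ is $k$ with $\epsilon_{k,k}=1$, so the formula in \cref{lemma:sym-bool-gamma} degenerates to $\gamma_{S,k}(\mu) = \min\{\frac{1+\mu}{1+\epsilon_{t,k}},1\}$, which is nondecreasing in $\mu$ on $[-1,1]$. Hence $\mu(\CD_\Psi^{\vecx^*}) \le \bias_\Psi$ already gives $\gamma_{S,k}(\mu(\CD_\Psi^{\vecx^*})) \le \gamma_{S,k}(\bias_\Psi)$, completing the proof.

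I expect the only genuinely delicate point to be bookkeeping around sign conventions and symmetrization: making sure the marginal $\mu$ of the (symmetrized) template distribution matches the quantity $\frac1{kW_\Psi}\sum_i(-1)^{x^*_i}\bias_\Psi(i)$ with the correct sign (the paper inserts an ``extra negative sign'' in the definition of $\vecmu$, and $\bias_\Psi$ here carries a factor of $\frac1k$ absorbing the arity), and that \cref{prop:bool-sym}\,(i) legitimately lets us pass to $\Sym(\CD_\Psi^{\vecx^*})$ without changing $\gamma_S$ or $\mu$. Everything else is either a direct appeal to \cref{lemma:sym-bool-gamma}, to the definition of $\gamma_{S,k}$, or to the monotonicity of $\mu \mapsto \gamma_{S,k}(\mu)$ for threshold $S$. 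If one prefers to avoid even invoking monotonicity, an alternative is to observe directly that $\gamma_{S,k}$ is symmetric-ish/unimodal and that for thresholds the relevant branch is the increasing one; but the monotonicity route above is cleanest.
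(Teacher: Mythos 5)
Your proposal is correct and is essentially the same argument the paper gives: pass to the template distribution of an optimal assignment, equate its value with $\gamma_S$ of that distribution (\cref{item:d-val} of \cref{prop:d} at $p=1$), upper-bound by $\gamma_{S,k}$ of the symmetrized marginal (using \cref{eqn:beta_Sk-gamma_Sk-def} and \cref{prop:bool-sym}), and then use the marginal bound $\mus(\CD_\Psi^{\vecx}) \le \bias_\Psi$ from \cref{item:d-mu} of \cref{prop:d} together with monotonicity of $\gamma_{S,k}$ for threshold predicates. The only superficial differences are that you route through $|\mu|$ to get the marginal bound (the paper's \cref{item:d-mu} of \cref{prop:d} already states the needed one-sided inequality directly) and that you spell out the degeneration of the \cref{lemma:sym-bool-gamma} formula for threshold $S$, which the paper invokes implicitly as ``monotonicity of $\gamma_{S,k}$.''
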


\begin{lemma}\label{lemma:thresh-value-lb}
$\val_\Psi \geq \beta_{S,k}(\bias_\Psi)$.
\end{lemma}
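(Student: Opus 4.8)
The plan is to generalize the proof of the $\mtwoand$ value lower bound \cref{lemma:mdcut-alg-lb}: exhibit one concrete assignment --- the ``majority assignment'' --- whose template distribution certifies $\val_\Psi \geq \beta_{S,k}(\bias_\Psi)$, using the general template-distribution machinery for $\mbf$ problems (from \cref{sec:template-dists-mbf} and \cref{prop:d}) in place of the $\Delta(\BZ_2^2)$-specific calculations of \cref{sec:mdcut-template-alg}. Concretely, let $\tilde\vecx\in\BZ_2^n$ be the majority assignment, defined by $\tilde x_i\eqdef\1_{\bias_\Psi(i)\ge 0}$, and let $\CD_\Psi^{\tilde\vecx}\in\Delta(\BZ_2^k)$ be its template distribution.

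I would then establish two facts. First, by the value identity for template distributions (\cref{item:d-val} of \cref{prop:d}, the generalization of \cref{lemma:2and-lambda}), $\lambda_S(\CD_\Psi^{\tilde\vecx},p)=\E_{\veca\sim\Bern_p^n}[\val_\Psi(\veca+\tilde\vecx)]\le\val_\Psi$ for every $p\in[0,1]$, so $\beta_S(\CD_\Psi^{\tilde\vecx})=\sup_{p}\lambda_S(\CD_\Psi^{\tilde\vecx},p)\le\val_\Psi$; since $\Th^t_k$ is symmetric, \cref{prop:bool-sym}(i) gives $\lambda_S(\CD_\Psi^{\tilde\vecx},p)=\lambda_S(\Sym(\CD_\Psi^{\tilde\vecx}),p)$ and hence $\beta_S(\Sym(\CD_\Psi^{\tilde\vecx}))\le\val_\Psi$. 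Second --- the key computation --- I would show $\mu(\Sym(\CD_\Psi^{\tilde\vecx}))=\bias_\Psi$. Here one uses that symmetrization preserves each $\CD\langle i\rangle$, hence (via \cref{lemma:sym-bool-mu}) preserves the scalar marginal; that $\epsilon_{i,k}=-1+\tfrac{2i}{k}$ is exactly the average over the $k$ coordinates of $(-1)^{c_j+1}$ for any $\vecc\in\BZ_2^k$ of Hamming weight $i$; and that unwinding the definition of $\CD_\Psi^{\tilde\vecx}$ and regrouping by variable gives $\mu(\Sym(\CD_\Psi^{\tilde\vecx}))=\frac1{kW_\Psi}\sum_{i=1}^n(-1)^{\tilde x_i+1}\bias_\Psi(i)$. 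By the choice of $\tilde\vecx$ each summand equals $|\bias_\Psi(i)|$, so the sum is $kW_\Psi\,\bias_\Psi$. This is the $k$-ary analogue of \cref{lemma:2and-mu}.

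Combining the two facts: since $\mu(\Sym(\CD_\Psi^{\tilde\vecx}))=\bias_\Psi$ (and $\bias_\Psi\in[0,1]$ lies in the valid range of $\mu(\cdot)$ over $\Delta_k$), the distribution $\Sym(\CD_\Psi^{\tilde\vecx})$ is feasible for the infimum defining $\beta_{S,k}(\bias_\Psi)$, whence $\beta_{S,k}(\bias_\Psi)\le\beta_S(\Sym(\CD_\Psi^{\tilde\vecx}))\le\val_\Psi$, as desired. Note that nothing in this argument uses that $S$ is an ``up-set'', so it in fact proves the analogous bound for every symmetric predicate $f_{S,k}$.

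The only genuine obstacle is the marginal computation in the second fact: it is bookkeeping rather than conceptual, but one has to line up the bias normalization $\tfrac1{kW_\Psi}$ with the definition of $\mu$ on symmetrized distributions, and the crucial observation (exactly as in \cref{lemma:2and-mu}) is that the majority assignment is precisely the one making every per-variable contribution $(-1)^{\tilde x_i+1}\bias_\Psi(i)$ nonnegative. The one conceptual point is to remember that $\mu$ is defined only on symmetric distributions, which forces the detour through $\Sym(\CD_\Psi^{\tilde\vecx})$ --- harmless by \cref{prop:bool-sym}(i).
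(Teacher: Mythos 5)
Your proof is correct and is essentially the same as the paper's: you pick the same majority assignment $\tilde\vecx$, apply \cref{item:d-val} of \cref{prop:d} and the probabilistic method to get $\beta_S(\CD_\Psi^{\tilde\vecx})\le\val_\Psi$, then pass through symmetrization and the feasibility of $\Sym(\CD_\Psi^{\tilde\vecx})$ for the infimum defining $\beta_{S,k}$. The only cosmetic difference is that you re-derive the marginal identity $\mus(\CD_\Psi^{\tilde\vecx})=\bias_\Psi$ from scratch, whereas the paper simply cites \cref{item:d-mu,item:d-bias} of \cref{prop:d}, which already package exactly that bookkeeping.
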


Together, these two lemmas imply that outputting $\alpha(\Th^t_k)\cdot \gamma_{S,k}(\bias_\Psi)$ gives an $\alpha(\Th^t_k)$-approximation to $\mbTh$, since $\alpha(\Th^t_k) = \inf_{\mu \in [-1,1]} \frac{\beta_{S,k}(\mu)}{\gamma_{S,k}(\mu)}$ (\cref{eqn:alpha-optimize-over-mu}). We can implement this as a small-space sketching algorithm (losing an arbitrarily small additive constant $\epsilon > 0$ in the approximation ratio) because $\gamma_{S,k}(\cdot)$ is piecewise linear by \cref{lemma:sym-bool-gamma}:

\begin{proof}[Proof of \cref{thm:thresh-bias-alg}]
To get an $(\alpha-\epsilon)$-approximation to $\val_\Psi$, let $\delta > 0$ be small enough such that $\frac{1-\delta}{1+\delta}\alpha(\Th^t_k) \geq \alpha(\Th^t_k)-\epsilon$. We claim that calculating an estimate $\hat{b}$ for $\bias_\Psi$ (using \cref{cor:bias-sketching}) up to a multiplicative $\delta$ factor and outputting $\hat{v} = \alpha(\Th^t_k)\gamma_{S,k}(\frac{\hat{b}}{1+\delta})$ is sufficient. Note that $\gamma_{S,k}$ is monotone by \cref{lemma:sym-bool-gamma} because $\Th^t_k$ is a threshold function.

Indeed, suppose $\hat{b} \in [(1-\delta)\bias_\Psi,(1+\delta)\bias_\Psi]$; then $\frac{\hat{b}}{1+\delta} \in [\frac{1-\delta}{1+\delta}\bias_\Psi,\bias_\Psi]$. Now we observe
\begin{align*}
    \gamma_{S,k}\left(\frac{\hat{b}}{1+\delta}\right) &\geq \gamma_{S,k}\left(\frac{1-\delta}{1+\delta} \bias_\Psi\right) \tag{monotonicity of $\gamma_{S,k}$} \\ 
    &=\min\left\{\frac{1+\frac{1-\delta}{1+\delta}\bias_\Psi}{1+\epsilon_{s,k}},1\right\} \tag{\cref{lemma:sym-bool-gamma}} \\
    &\geq \frac{1-\delta}{1+\delta} \min\left\{\frac{1+\bias_\Psi}{1+\epsilon_{s,k}},1\right\} \tag{$\delta>0$} \\
    &= \frac{1-\delta}{1+\delta} \gamma_{S,k}(\bias_\Psi) \tag{\cref{lemma:sym-bool-gamma}}.
\end{align*}
Then we conclude
\begin{align*}
    (\alpha(\Th^t_k)-\epsilon)\val_\Psi &\leq (\alpha(\Th^t_k)-\epsilon)\gamma_{S,k}(\bias_\Psi) \tag{\cref{lemma:thresh-value-ub}} \\
    &\leq \alpha(\Th^t_k) \cdot \frac{1-\delta}{1+\delta}\gamma_{S,k}(\bias_\Psi) \tag{assumption on $\delta$} \\
    &\leq \hat{v} \tag{our observation} \\
    &\leq \alpha(\Th^t_k)\gamma_{S,k}(\bias_\Psi) \tag{monotonicity of $\gamma_{S,k}$} \\
    &\leq \beta_{S,k}(\bias_\Psi) \tag{\cref{eqn:alpha-optimize-over-mu}} \\
    &\leq \val_\Psi \tag{\cref{lemma:thresh-value-lb}} ,
\end{align*} as desired.
\end{proof}

\subsection{Proving the lemmas}\label{sec:template-dists-mbf}

To prove \cref{lemma:thresh-value-lb,lemma:thresh-value-ub}, we generalize the definition of template distributions for $\mtwoand$ from \cref{sec:mdcut-template-alg} to all $\mbf$ problems. For a predicate $f : \BZ_2^k \to \{0,1\}$, an instance $\Psi$ of $\mbf$ on $n$ variables, and an assignment $\vecx \in \BZ_2^n$, let $\CD^\vecx_\Psi \in \Delta(\BZ_2^k)$ denote the following \emph{template distribution}: If $\Psi$'s constraints are $(\vecb(\ell),\vecj(\ell),w(\ell))_{\ell \in [m]}$, we sample $\ell$ with probability $\frac{w(\ell)}{W_\Psi}$ and output $\vecb(\ell) + \vecx|_{\vecj(\ell)}$. That is, $\CD_\Psi^\vecx(\veca)$ is the fractional weight of constraints $\ell$ such that plugging in the assignment $\vecx$ and negating according to $\vecb(\ell)$ results in $\veca$. Note that $\CD_\Psi^\vecx$ is not necessarily symmetric; however, we can still define a scalar ``marginal'' $\mus(\CD) \eqdef \mu(\Sym(\CD))$. The important properties of $\CD_\Psi^\vecx$ are summarized in the following proposition:

\begin{proposition}\label{prop:d}
Let $\Psi$ be an instance of $\mbTh$. Then:
\begin{enumerate}[label={\roman*.},ref={\roman*}]
\item For any $p \in [0,1]$ and $\vecx \in \BZ_2^n$, $\E_{\veca \sim \Bern_p^n}[\val_\Psi(\veca+\vecx)] = \lambda_S(\CD^\vecx_\Psi,p)$.\label{item:d-val}
\item For every $\vecx \in \BZ_2^n$, $\mus(\CD^\vecx_\Psi) = \frac1{kW_\Psi} \sum_{i=1}^n (-1)^{x_i+1} \bias_\Psi(i) \leq \bias_\Psi$.\label{item:d-mu}
\item If for every $i$, $\bias_\Psi(i) > 0 \implies x_i = 1$ and $\bias_\Psi(i) < 0 \implies x_i = 0$, then $\mus(\CD^\vecx_\Psi) = \bias_\Psi$.\label{item:d-bias}
\end{enumerate}
\end{proposition}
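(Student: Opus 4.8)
\textbf{Proof proposal for \cref{prop:d}.} The plan is to unwind the definition of the template distribution $\CD^\vecx_\Psi$ and match each of the three claimed identities against the corresponding definition of $\lambda_S$, $\bias_\Psi$, and $\gamma$-type quantities. Recall that $\CD^\vecx_\Psi$ is obtained by sampling a constraint $\ell$ with probability $\tfrac{w(\ell)}{W_\Psi}$ and outputting $\vecb(\ell) + \vecx|_{\vecj(\ell)}$. The central observation tying everything together is that a single constraint $(\vecb(\ell),\vecj(\ell),w(\ell))$ of $\Psi$, under the assignment $\veca+\vecx$, is satisfied exactly when $\Th^t_k(\vecb(\ell) + (\veca+\vecx)|_{\vecj(\ell)}) = 1$, and $\vecb(\ell) + (\veca+\vecx)|_{\vecj(\ell)} = (\vecb(\ell) + \vecx|_{\vecj(\ell)}) + \veca|_{\vecj(\ell)}$; so if we first draw the constraint (producing the ``base string'' $\vecb(\ell)+\vecx|_{\vecj(\ell)} \sim \CD^\vecx_\Psi$) and then add an independent $\Bern_p^k$-perturbation on its $k$ coordinates, the satisfaction probability is exactly $\gamma_S$-style expectation defining $\lambda_S$. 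Since $\vecj(\ell)$ consists of distinct indices, $\veca|_{\vecj(\ell)}$ really is an independent $\Bern_p^k$ vector when $\veca \sim \Bern_p^n$. This is \cref{item:d-val}.

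For \cref{item:d-val} concretely, I would write $\E_{\veca\sim\Bern_p^n}[\val_\Psi(\veca+\vecx)] = \tfrac1{W_\Psi}\sum_\ell w(\ell)\,\E_{\veca}[\Th^t_k(\vecb(\ell)+\vecx|_{\vecj(\ell)} + \veca|_{\vecj(\ell)})]$, then pull the expectation over $\veca|_{\vecj(\ell)}\sim\Bern_p^k$ inside and recognize this as $\E_{\veca'\sim\Bern_p^k}[\val_{\Psi^{\CD^\vecx_\Psi}}(\veca')] = \lambda_S(\CD^\vecx_\Psi, p)$ by the definition of the canonical instance $\Psi^\CD$ and of $\lambda_S$ in \cref{eqn:cgsv-bool-lgb} (or \cref{lemma:sym-bool-lambda}, using $\lambda_S(\CD,p) = \lambda_S(\Sym(\CD),p)$ from \cref{prop:bool-sym}). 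For \cref{item:d-mu}, I would compute $\mus(\CD^\vecx_\Psi) = \mu(\Sym(\CD^\vecx_\Psi)) = \E_{\veca'\sim\CD^\vecx_\Psi}[(-1)^{a'_1+1}]$ and, by symmetrization plus linearity of expectation, rewrite it as $\tfrac1k\sum_{r=1}^k \E_{\veca'\sim\CD^\vecx_\Psi}[(-1)^{a'_r+1}]$. Expanding the sample: $a'_r = b(\ell)_r + x_{j(\ell)_r}$, so $(-1)^{a'_r+1} = (-1)^{b(\ell)_r + x_{j(\ell)_r} + 1}$; summing $w(\ell)/W_\Psi$ over all $(\ell,r)$ and reindexing by the variable $i = j(\ell)_r$ gives $\tfrac1{kW_\Psi}\sum_{i=1}^n (-1)^{x_i+1}\bias_\Psi(i)$, using the definition of $\bias_\Psi(i) = \sum_{\ell,t:\,j(\ell)_t=i}(-1)^{b(\ell)_t}w_\ell$. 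The inequality $\mus(\CD^\vecx_\Psi)\le\bias_\Psi$ is then immediate from the triangle inequality and $|{-}1^{x_i+1}| = 1$, against $\bias_\Psi = \tfrac1{kW_\Psi}\sum_i|\bias_\Psi(i)|$. Finally \cref{item:d-bias} is the equality case of that triangle inequality: the hypothesis says $(-1)^{x_i+1}\bias_\Psi(i) = |\bias_\Psi(i)|$ for every $i$ with $\bias_\Psi(i)\ne0$ (and the terms with $\bias_\Psi(i)=0$ contribute nothing), so the sum equals $\sum_i|\bias_\Psi(i)| = kW_\Psi\bias_\Psi$.

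There is no real obstacle here — the proposition is ``omitted, follows from definitions and linearity of expectation'' in spirit, and the only point requiring a moment's care is the reindexing in \cref{item:d-mu}: one must be careful that each hyperedge-vertex incidence $(\ell,r)$ with $j(\ell)_r = i$ is counted exactly once on both sides, and that the sign $(-1)^{x_i+1}$ only depends on $i$ (not on $\ell$ or $r$), which is what lets it factor out of the inner sum. I would also flag that \cref{item:d-val} specializes correctly: taking $p=1$ recovers $\val_\Psi(\vecone+\vecx) = \gamma_S(\CD^\vecx_\Psi)$ and, more usefully for the downstream lemmas, taking $\vecx$ arbitrary and optimizing over $p$ gives $\sup_p\E_{\veca\sim\Bern_p^n}[\val_\Psi(\veca+\vecx)] = \beta_S(\CD^\vecx_\Psi)\le\val_\Psi$, which is exactly the ingredient \cref{lemma:thresh-value-lb} needs.
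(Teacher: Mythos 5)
Your proof is correct and is precisely the elaboration the paper gestures at when it says the proposition "follows immediately from definitions and linearity of expectation"; all three parts check out, including the careful reindexing from constraint–slot pairs $(\ell,r)$ to variables $i$ in part (ii) and the observation that $\veca|_{\vecj(\ell)}\sim\Bern_p^k$ because $\vecj(\ell)$ has distinct entries. (As a side note, your sign $(-1)^{x_i+1}$ in part (ii) is the correct one and matches the statement of \cref{prop:d}; the earlier statement of \cref{lemma:2and-mu}, which writes $(-1)^{x_i}$, appears to contain a sign typo inconsistent with its own "iff" clause.)
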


\begin{proof}
Omitted (follows immediately from definitions and linearity of expectation).
\end{proof}

Now, we are equipped to prove the lemmas:

\begin{proof}[Proof of \cref{lemma:thresh-value-ub}]
Let $\vecx \in \BZ_2^n$ denote the optimal assignment for $\Psi$. Then
\begin{align*}
    \val_\Psi &= \val_\Psi(\vecx) \tag{def. of $\vecx$} \\
    &= \lambda_S(\CD_\Psi^\vecx, 1) \tag{\cref{item:d-val} of \cref{prop:d} with $p=1$} \\
    &= \gamma_S(\CD_\Psi^\vecx) \tag{\cref{eqn:cgsv-bool-lgb}} \\
    &\leq \gamma_{S,k}(\mus(\CD_\Psi^\vecx)) \tag{\cref{eqn:beta_Sk-gamma_Sk-def,prop:bool-sym}} \\
    &\leq \gamma_{S,k}(\bias(\Psi)) \tag{\cref{item:d-mu} of \cref{prop:d} and monotonicity of $\gamma_{S,k}$} \\
\end{align*}
as desired.
\end{proof}

\begin{proof}[Proof of \cref{lemma:thresh-value-lb}]
Let $\vecx \in \BZ_2^n$ denote the assignment assigning $x_i$ to $1$ if $\bias_\Psi(i) \geq 0$ and $0$ otherwise. Now
\begin{align*}
    \val_\Psi &\geq \sup_{p \in [0,1]} \left(\E_{\veca \sim \Bern_p^n}[\val_{\Psi}(\veca+\vecx)]\right) \tag{probabilistic method} \\
    &= \sup_{p \in [0,1]} (\lambda_S(\CD_\Psi^\vecx,p)) \tag{\cref{item:d-val} of \cref{prop:d}} \\
    & \geq \beta_S(\CD_\Psi^\vecx) \tag{\cref{eqn:cgsv-bool-lgb}} \\
    &\geq \beta_{S,k}(\mus(\CD_\Psi^\vecx)) \tag{\cref{eqn:beta_Sk-gamma_Sk-def,prop:bool-sym}} \\
    &= \beta_{S,k}(\bias(\Psi)) \tag{\cref{item:d-bias} of \cref{prop:d}}
\end{align*}
as desired.
\end{proof}

\subsection{The classical algorithm}

Finally, we state another consequence of \cref{lemma:thresh-value-ub} --- a simple randomized, $O(n)$-time-and-space streaming algorithm for \emph{outputting} approximately-optimal assignments when the max-min method applies.

\begin{theorem}\label{thm:thresh-bias-output-alg}
Let $\Th^t_k$ be a threshold predicate and $p^* \in [0,1]$ be such that the max-min method applies, i.e., \[ \alpha(\Th^t_k) = \inf_{\CD_N \in \Delta_k} \left(\frac{\lambda_S(\CD_N,p^*)}{\gamma_{S,k}(\mu(\CD_N))}\right). \] Then the following algorithm, on input $\Psi$, outputs an assignment with expected value at least $\alpha(\Th^t_k) \cdot \val_\Psi$: Assign every variable to $1$ if $\bias_\Psi(i) \geq 0$, and $0$ otherwise, and then flip each variable's assignment independently with probability $p^*$.
\end{theorem}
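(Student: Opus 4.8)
The plan is to run the algorithm described in \cref{thm:thresh-bias-output-alg} and analyze its output assignment via the template-distribution machinery from \cref{sec:template-dists-mbf}. Let $\vecx \in \BZ_2^n$ be the ``majority assignment'', i.e., $x_i = \1_{\bias_\Psi(i)\geq0}$, and let $\vecy = \veca + \vecx$ where $\veca \sim \Bern_{p^*}^n$ is the random perturbation. First I would observe that, by \cref{item:d-val} of \cref{prop:d}, the expected value of $\vecy$ is exactly $\E_{\veca\sim\Bern_{p^*}^n}[\val_\Psi(\veca+\vecx)] = \lambda_S(\CD_\Psi^{\vecx},p^*)$; this is the key point where the \emph{fixed constant} $p^*$ (rather than an instance-dependent optimum) enters cleanly. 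Next, since $\vecx$ is the majority assignment, \cref{item:d-bias} of \cref{prop:d} gives $\mus(\CD_\Psi^{\vecx}) = \bias_\Psi$, so I can rewrite the expected value as $\lambda_S(\CD_\Psi^{\vecx},p^*)$ with $\mu(\Sym(\CD_\Psi^{\vecx})) = \bias_\Psi$.

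The second main step is the chain of inequalities bounding this expected value below by $\alpha(\Th^t_k)\,\val_\Psi$. By the max-min hypothesis, $\alpha(\Th^t_k) = \inf_{\CD_N\in\Delta_k}\frac{\lambda_S(\CD_N,p^*)}{\gamma_{S,k}(\mu(\CD_N))}$, and by \cref{prop:bool-sym} we have $\lambda_S(\CD_\Psi^{\vecx},p^*) = \lambda_S(\Sym(\CD_\Psi^{\vecx}),p^*)$, with $\Sym(\CD_\Psi^{\vecx})\in\Delta_k$. Hence
\[
\lambda_S(\CD_\Psi^{\vecx},p^*) = \lambda_S(\Sym(\CD_\Psi^{\vecx}),p^*) \geq \alpha(\Th^t_k)\cdot\gamma_{S,k}(\mu(\Sym(\CD_\Psi^{\vecx}))) = \alpha(\Th^t_k)\cdot\gamma_{S,k}(\bias_\Psi).
\]
Finally, \cref{lemma:thresh-value-ub} gives $\val_\Psi \leq \gamma_{S,k}(\bias_\Psi)$, so $\lambda_S(\CD_\Psi^{\vecx},p^*) \geq \alpha(\Th^t_k)\,\val_\Psi$. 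Combining with the first step, the algorithm's output has expected value at least $\alpha(\Th^t_k)\,\val_\Psi$, as claimed.

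Regarding the algorithmic claim that this runs in $O(n)$ time and space: the majority assignment can be computed by maintaining a running counter $\bias_\Psi(i)$ for each of the $n$ variables as constraints stream by (each constraint touches $k = O(1)$ variables), which is $O(n)$ space and $O(1)$ amortized time per constraint; then at the end we make one pass to threshold each counter and independently flip with probability $p^*$. I would note that \cref{thm:thresh-bias-output-alg} as stated only asserts the approximation guarantee, so this resource accounting is a remark rather than part of the formal claim.

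I expect the main (minor) obstacle to be making sure the symmetrization step is invoked correctly: the template distribution $\CD_\Psi^{\vecx}$ need not itself be symmetric, so one must be careful to apply $\Sym(\cdot)$ before feeding it into the max-min identity, and to check that $\mus(\CD_\Psi^{\vecx}) \eqdef \mu(\Sym(\CD_\Psi^{\vecx}))$ is precisely the scalar marginal appearing in \cref{lemma:sym-bool-gamma} and in the definition of $\gamma_{S,k}$. Everything else follows immediately from \cref{prop:d}, \cref{prop:bool-sym}, \cref{lemma:thresh-value-ub}, and the max-min hypothesis; there is no genuinely hard content beyond assembling these pieces.
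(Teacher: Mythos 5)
Your proof is correct and is essentially identical to the paper's: both reduce the analysis to $\E_{\veca}[\val_\Psi(\vecx+\veca)] = \lambda_S(\CD_\Psi^\vecx,p^*)$ via \cref{item:d-val} of \cref{prop:d}, apply the max-min hypothesis (through $\Sym(\CD_\Psi^\vecx)$ and \cref{prop:bool-sym}) to lower-bound this by $\alpha(\Th^t_k)\gamma_{S,k}(\mus(\CD_\Psi^\vecx))$, rewrite $\mus(\CD_\Psi^\vecx) = \bias_\Psi$ using \cref{item:d-bias}, and close with \cref{lemma:thresh-value-ub}. Your explicit flagging of the symmetrization step is a point of care the paper handles implicitly in the line ``Applying the max-min method to $\Sym(\CD_\Psi^\vecx)$ and using \cref{prop:bool-sym}'' but adds no new content.
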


\begin{proof}[Proof of \cref{thm:thresh-bias-output-alg}]
Let $p^*$ be as in the theorem statement, and define $\vecx$ as in the proof of \cref{lemma:thresh-value-lb}. We output the assignment $\vecx + \veca$ for $\veca \sim \Bern_{p^*}^n$, and our goal is to show that its expected value is at least $\alpha(\Th^t_k) \val_\Psi$.

Applying the max-min method to $\Sym(\CD_\Psi^\vecx)$ and using \cref{prop:bool-sym}, we have:
\begin{equation}\label{eqn:max-min-for-alg}
    \lambda_S(\CD_\Psi^\vecx,p^*) \geq \alpha(\Th^t_k) \gamma_{S,k}(\mus(\CD_\Psi^\vecx)).
\end{equation}
Thus our expected output value is
\begin{align*}
    \E_{\veca\sim\Bern_{p^*}}[\val_\Psi(\vecx + \veca)] &= \lambda_S(\CD_\Psi^\vecx,p^*) \tag{\cref{item:d-val} of \cref{prop:d}} \\
    &\geq \alpha(\Th^t_k)\gamma_{S,k}(\mus(\CD_\Psi^\vecx)) \tag{\cref{eqn:max-min-for-alg}} \\
    &= \alpha(\Th^t_k) \gamma_{S,k}(\bias_\Psi) \tag{\cref{item:d-bias} of \cref{prop:d}} \\
    &\geq \alpha(\Th^t_k) \val_\Psi \tag{\cref{lemma:thresh-value-ub}},
\end{align*}
as desired.
\end{proof}

\section{\cite{CGSV21-boolean} is incomplete: Streaming lower bounds for $\mthreeand$?}\label{sec:cgsv-streaming-failure-3and}

In this section, we prove \cref{thm:cgsv-streaming-failure-3and}. We begin with a few lemmas.

\begin{lemma}\label{lemma:3and-unique-minimum}
For $\CD \in \Delta_3$, the expression \[  \frac{\lambda_{\{3\}}(\CD,\frac13\CD\langle 1 \rangle+\frac23\langle2\rangle+\CD\langle3\rangle)}{\gamma_{\{3\},3}(\mu(\CD))} \] is minimized uniquely at $\CD = (0,0,1,0)$, with value $\frac29$.
\end{lemma}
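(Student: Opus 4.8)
\textbf{Proof plan for \cref{lemma:3and-unique-minimum}.}

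The plan is to reduce everything to a clean optimization over the three-dimensional simplex $\Delta_3$ parameterized by $(\CD\langle 0\rangle,\CD\langle 1\rangle,\CD\langle 2\rangle,\CD\langle 3\rangle)$, and then exploit the linearity trick of \cref{prop:lin-opt}. First I would extract the explicit formula for $\lambda_{\{3\}}(\CD,p)$ from \cref{lemma:sym-bool-lambda} with $k=3$, $S=\{3\}$; as computed in \cref{eqn:kand-lambda} this is $\lambda_{\{3\}}(\CD,p) = \sum_{i=0}^3 (1-p)^{3-i}p^i\,\CD\langle i\rangle = (1-p)^3\CD\langle 0\rangle + (1-p)^2 p\,\CD\langle 1\rangle + (1-p)p^2\,\CD\langle 2\rangle + p^3\CD\langle 3\rangle$. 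The key observation, already noted in the proof outline of \cref{thm:cgsv-streaming-failure-3and}, is that the four ``vertex'' functions $p\mapsto(1-p)^{3-i}p^i$ for $i=0,1,2,3$ are maximized at $p = 0,\tfrac13,\tfrac23,1$ respectively; so substituting $p = \tfrac13\CD\langle 1\rangle + \tfrac23\CD\langle 2\rangle + \CD\langle 3\rangle$ amounts to evaluating each $\CD_i$-component ``at its own optimum'' in a convex-combination sense. Next I would invoke \cref{lemma:sym-bool-gamma} with $S=\{3\}$ (so $s=t=3$, $\epsilon_{3,3}=1$, $\epsilon_{3,3}$ the largest relevant value) together with \cref{eqn:kand-gamma}: this gives the genuinely \emph{linear} denominator $\gamma_{\{3\},3}(\mu(\CD)) = \tfrac13\CD\langle 1\rangle + \tfrac23\CD\langle 2\rangle + \CD\langle 3\rangle$ — which is exactly the same linear form appearing in the substituted $p$. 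Call this common value $q(\CD)$.

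With these substitutions in hand, the ratio becomes $\frac{(1-q)^3\CD\langle 0\rangle + (1-q)^2 q\,\CD\langle 1\rangle + (1-q)q^2\,\CD\langle 2\rangle + q^3\CD\langle 3\rangle}{q}$ where $q = q(\CD)$ is linear in $\CD$. This is \emph{not} quite a ratio of linear functions because $q$ appears nonlinearly in the numerator, so \cref{prop:lin-opt} does not apply directly. The right move — and I expect this to be the technical heart of the argument — is to prove the pointwise inequality $(1-q)^{3-i}q^i \ge \tfrac{i}{3}\cdot\frac{4}{27}$ for $i = 0,1,2,3$ and all $q\in[0,1]$, i.e.\ each numerator coefficient $(1-q)^{3-i}q^i$ (as a function of $q$, hence of $\CD$) dominates $\frac29\cdot\frac i3$ (the value $\frac29$ being $\frac{\lambda_{\{3\}}((0,0,1,0),\tfrac23)}{\gamma_{\{3\},3}(\mu(0,0,1,0))} = \frac{4/27}{2/3}$). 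For $i=0$ this is trivial; for $i=3$ it reads $q^3\ge \frac29 q$, i.e.\ $q^2\ge\frac29$, which is \emph{false} near $q=0$ — so I need to be more careful: the inequality I actually want, for the purpose of showing the whole ratio is $\ge\frac29$, is $(1-q)^{3-i}q^i \ge \frac29\cdot\frac i3$ only after summing against the $\CD\langle i\rangle$ weights, using that $\sum_i \frac i3\CD\langle i\rangle = q$. So the correct formulation is: show $\sum_{i=0}^3 (1-q)^{3-i}q^i\,\CD\langle i\rangle \ge \frac29 q$ given that $\CD\langle i\rangle\ge 0$ and $\frac13\CD\langle 1\rangle + \frac23\CD\langle 2\rangle + \CD\langle 3\rangle = q$. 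Since both sides are linear in the $\CD\langle i\rangle$ for fixed $q$ (because $q$ is now a \emph{constraint}, not a free variable), I can apply \cref{prop:lin-opt} on the slice $\{q(\CD) = q\}$: it suffices to check the inequality at the extreme points of this slice, which are $\CD_0 = (1,0,0,0)$ (only when $q=0$) and the two-point distributions mixing level $0$ with one of levels $1,2,3$, plus the pure distributions $\CD_1,\CD_2,\CD_3$. Checking each of these finitely many cases reduces to elementary single-variable polynomial inequalities in $q$.

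Finally, for the \emph{uniqueness} claim I would trace through which extreme points give equality. Equality $(1-q)^{3-i}q^i = \frac{4}{27}\cdot\frac i3$ at the relevant extreme distribution forces $q$ to equal the maximizer $i/3$ of $(1-q)^{3-i}q^i$, and then a short case analysis shows equality propagates to the overall ratio only when all the mass sits at level $i=2$ with $q = \tfrac23$, i.e.\ $\CD = (0,0,1,0)$; at every other point at least one summand is strictly larger (using strict concavity/unimodality of $q\mapsto(1-q)^{3-i}q^i$ away from its peak, and the fact that a nontrivial mixture cannot have every component simultaneously at its own peak since the peaks $0,\tfrac13,\tfrac23,1$ are distinct). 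The main obstacle I anticipate is bookkeeping the extreme-point/equality analysis cleanly — making sure the ``slice'' argument is airtight when $q$ itself is being varied, and handling the degenerate slices $q=0$ and $q=1$ separately — rather than any deep inequality; the polynomial inequalities themselves are low-degree and can be verified by hand or, following the paper's convention, with computer assistance.
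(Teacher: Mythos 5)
Your approach is genuinely different from the paper's. The paper substitutes directly and expands the ratio as a single rational function of $\CD\langle 0\rangle,\dots,\CD\langle 3\rangle$, then delegates the minimization --- both the value $\frac29$ and the uniqueness of $(0,0,1,0)$ --- to Mathematica. Your slice-and-vertex reduction is more structured: fixing the scalar $q(\CD)=\frac13\CD\langle 1\rangle+\frac23\CD\langle 2\rangle+\CD\langle 3\rangle$ to a constant $q_0$ makes the ratio affine in $\CD$ on the slice polytope $\{q(\CD)=q_0\}\cap\Delta_3$, so it is minimized at a slice vertex, and slice vertices are supported on at most two Hamming-weight levels. This cleanly reduces a four-variable rational optimization to finitely many one-parameter polynomial problems, which is a nicer route to the same finite verification.

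That said, two genuine gaps remain. First, your enumeration of slice vertices is incomplete: you list only the pure distributions and the two-point mixtures of level $0$ with levels $1$, $2$, $3$, but the mixtures supported on $\{1,2\}$, $\{1,3\}$, and $\{2,3\}$ also arise as slice vertices whenever $q_0>1/3$. This omission is not harmless: the edges $\{1,2\}$ and $\{2,3\}$ both pass through the claimed minimizer $(0,0,1,0)$ and are essential to the uniqueness check, and the edge $\{1,3\}$ has a nontrivial interior critical point that needs verifying. Second, the uniqueness sketch does not actually work as written. It falls back on the coefficientwise inequality $(1-q)^{3-i}q^i\ge\frac{4}{27}\cdot\frac i3$ and a ``cannot all peak simultaneously'' heuristic --- but you had already, correctly, discarded the coefficientwise inequality as false, and the heuristic fails to bound the ratio: along the $\{0,3\}$ and $\{1,3\}$ edges the minimum is attained at an \emph{interior} point, not at the peak of any summand (for $\{1,3\}$, at $t=11/32$ with ratio exactly $\frac{15}{64}\approx 0.234$; for $\{0,3\}$, near $t\approx 0.54$ with ratio $\approx 0.240$), and it takes an explicit calculation to see these stay above $\frac29$. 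Conversely, for the edges through $(0,0,1,0)$ one should exhibit strict monotonicity: e.g., parameterizing the $\{1,2\}$ edge by $\CD_t=(0,1-t,t,0)$ gives ratio $\frac{2(2-t)(1-t+t^2)}{9}$, whose derivative is $-\frac23(1-t)^2\le 0$, strict for $t<1$. None of this is deep, but it is precisely the case analysis your sketch elides; carrying it out in full, over all six pairs, is necessary whether done by hand or, as the paper does, by Mathematica.
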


\begin{proof}
Letting $p = \frac13\CD\langle 1 \rangle+\frac23\CD\langle2\rangle+\CD\langle3\rangle$ and $q = 1-p$, by \cref{lemma:sym-bool-lambda,lemma:sym-bool-gamma,lemma:sym-bool-mu} the expression expands to \[ \frac{\CD\langle 0 \rangle \, p^3 + \CD\langle 1 \rangle \, p^2(1-p)+\CD\langle3\rangle \, p(1-p)^2 + \CD\langle3\rangle \, (1-p)^3}{\frac12(1-\CD\langle 0 \rangle -\frac13\CD\langle 1 \rangle+\frac13\CD\langle2\rangle+\CD\langle3\rangle)}. \] The expression's minimum, and its uniqueness, are confirmed analytically in the Mathematica code.
\end{proof}

\begin{lemma}\label{lemma:3and-top-lemma}
Let $X$ be a compact topological space, $Y \subseteq X$ a closed subspace, $Z$ a topological space, and $f : X \to Z$ a continuous map. Let $x^* \in X, z^* \in Z$ be such that $f^{-1}(z^*) = \{x^*\}$. Let $\{x_i\}_{i \in \BN}$ be a sequence of points in $Y$ such that $\{f(x_i)\}_{i \in \BN}$ converges to $z^*$. Then $x^* \in Y$.
\end{lemma}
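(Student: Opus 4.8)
The statement is a standard compactness fact: a sequence in a closed subspace $Y$ whose images under a continuous map converge to a value $z^*$ that is attained uniquely at $x^*$ must have $x^* \in Y$. The plan is to extract a convergent subsequence of $\{x_i\}$ using compactness of $X$, identify its limit as $x^*$ via continuity of $f$ and the uniqueness hypothesis $f^{-1}(z^*) = \{x^*\}$, and then conclude $x^* \in Y$ using that $Y$ is closed. Concretely, first I would invoke sequential compactness — or, more carefully, since $X$ is only assumed compact and not metrizable, I would argue via nets or subnets: let $(x_{i_j})$ be a convergent subnet of the net $(x_i)$, with limit $x^\dagger \in X$. (If one prefers to stay in the sequential world, note that in the application $X = \Delta_3$ is a subset of Euclidean space, hence metrizable, so ordinary subsequences suffice; I would state the lemma's proof for general compact $X$ using subnets to keep it clean, or simply remark that the metrizable case is all that is needed.)

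Next, I would use continuity of $f$: since $x_{i_j} \to x^\dagger$, we have $f(x_{i_j}) \to f(x^\dagger)$. But $(x_{i_j})$ is a subnet of $(x_i)$, and $f(x_i) \to z^*$ by hypothesis, so $f(x_{i_j}) \to z^*$ as well. Since $Z$ is a topological space — here I would want to assume it is Hausdorff, which holds in the application where $Z = \mathbb{R}$, so that limits are unique — we conclude $f(x^\dagger) = z^*$. By the hypothesis $f^{-1}(z^*) = \{x^*\}$, this forces $x^\dagger = x^*$.

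Finally, since every $x_i \in Y$ and $Y$ is closed in $X$, the limit $x^\dagger = x^*$ of the (sub)net lies in $Y$, which is exactly the conclusion. I do not expect any genuine obstacle here; the only subtlety worth flagging is the point-set hygiene about compactness versus sequential compactness and the implicit Hausdorff assumption on $Z$. Since the lemma is only ever applied with $X$ a compact subset of a finite-dimensional simplex and $Z = \mathbb{R}$, I would either (a) prove it in that metrizable setting with honest subsequences, or (b) prove it for general compact $X$ and Hausdorff $Z$ with subnets, and remark that the metric case is what we use. Either way the proof is three short lines once the right framework is fixed.

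\begin{proof}
Since $X$ is compact, the net $(x_i)_{i \in \BN}$ has a convergent subnet $(x_{i_j})$ with some limit $x^\dagger \in X$; when $X$ is metrizable (as in all our applications, where $X \subseteq \Delta_3$), this is just a convergent subsequence. As each $x_{i_j} \in Y$ and $Y$ is closed, we have $x^\dagger \in Y$. By continuity of $f$, $f(x_{i_j}) \to f(x^\dagger)$. On the other hand, $(x_{i_j})$ is a subnet of $(x_i)$, and $f(x_i) \to z^*$, so $f(x_{i_j}) \to z^*$; since $Z$ is Hausdorff, limits are unique, giving $f(x^\dagger) = z^*$. The hypothesis $f^{-1}(z^*) = \{x^*\}$ then yields $x^\dagger = x^*$, so $x^* = x^\dagger \in Y$, as desired.
\end{proof}
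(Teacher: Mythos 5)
Your proof is correct and follows essentially the same route as the paper's: extract a convergent subsequence (or subnet) by compactness, place the limit in $Y$ by closedness, identify it as $x^*$ by continuity and the uniqueness hypothesis. You do a bit more point-set hygiene than the paper — the paper invokes "subsequence" for a general compact $X$ and implicitly uses uniqueness of limits in $Z$, neither of which is literally justified without metrizability/Hausdorff assumptions; you correctly flag both subtleties and note they hold in the intended application ($X \subseteq \Delta_3$, $Z = \mathbb{R}$), which is an improvement in rigor but not a different argument.
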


\begin{proof}
By compactness of $X$, there is a subsequence $\{x_{j_i}\}_{i \in \BN}$ which converges to a limit $\tilde{x}$. By closure, $\tilde{x} \in Y$. By continuity, $f(\tilde x) = z^*$, so $\tilde x = x^*$.
\end{proof}

Finally, we have:

\begin{proof}[Proof of \cref{thm:cgsv-streaming-failure-3and}]
By \cref{lemma:3and-unique-minimum}, $\frac{\beta_{\{3\}}(\CD_N)}{\gamma_{\{3\},3}(\mu(\CD_N))}$ is minimized \emph{uniquely} at $\CD_N^* = (0,0,1,0)$. By \cref{lemma:sym-bool-mu} we have $\mu(\CD_N^*) = \frac13$, and by inspection from the proof of \cref{lemma:sym-bool-gamma} below, $\gamma_{\{3\}}(\CD_Y)$ with $\mu(\CD_Y)=\frac13$ is uniquely minimized by $\CD_Y^*=(\frac13,0,0,\frac23)$.

Finally, we rule out the possibility of an infinite sequence of padded one-wise pairs which achieve ratios arbitrarily close to $\frac29$ using topological properties. View a distribution $\CD \in \Delta_3$ as the vector $(\CD\langle 0 \rangle,\CD\langle 1 \rangle,\CD\langle2\rangle,\CD\langle3\rangle) \in \BR^4$. Let $D \subset \BR^4$ denote the set of such distributions. Let $M\subset D \times D \subset\BR^8$ denote the subset of pairs of distributions with matching marginals, and let $M' \subset M$ denote the subset of pairs with uniform marginals and $P \subset M$ the subset of padded one-wise pairs. $D$, $M$, $M'$, and $P$ are compact (under the Euclidean topology); indeed, $D$, $M$, and $M'$ are bounded and defined by a finite collection of linear equalities and strict inequalities, and letting $M' \subset M$ denote the subset of pairs of distributions with matching \emph{uniform} marginals, $P$ is the image of the compact set $[0,1] \times D \times M' \subset \BR^{13}$ under the continuous map $\tau \times \CD_0 \times (\CD'_Y,\CD'_N) \mapsto (\tau \CD_0 + (1-\tau) \CD'_Y,\tau \CD_0+(1-\tau)\CD'_N)$. Hence, $P$ is closed.

Now the function \[ \alpha : M \to \BR \cup \{\infty\}: (\CD_N,\CD_Y) \mapsto \frac{\beta_{\{3\}}(\CD_N)}{\gamma_{\{3\}}(\CD_Y)} \] is continuous, since a ratio of continuous functions is continuous, and $\beta_{\{3\}}$ is a single-variable supremum of a continuous function (i.e., $\lambda_S$) over a compact interval, which is in general continuous in the remaining variables. Thus, if there were a sequence of padded one-wise pairs $\{(\CD_N^{(i)},\CD_Y^{(i)}) \in P\}_{i \in \BN}$ such that $\alpha(\CD_N^{(i)},\CD_Y^{(i)})$ converges to $\frac29$ as $i \to \infty$, since $M$ is compact and $P$ is closed, \cref{lemma:3and-top-lemma,lemma:3and-unique-minimum} imply that $(\CD_N^*,\CD_Y^*) \in P$, a contradiction.
\end{proof}

\chapter{Conclusions and future directions}\label{chap:conclusions}

\newcommand{\ltf}{\mathrm{Pot}}
\newcommand{\GLST}{\mathrm{GLST}}

\newthought{Open directions for further progress abound} in the quest to understand the approximability of CSPs in the streaming setting. In this chapter, we collect several such questions.

\section{Unique games-hardness vs. streaming hardness}

We begin by giving a concrete account of a problem suggested by Chou \emph{et al.}~\cite{CGSV21-boolean}, who write:

\begin{quote}
    ``[There are] some strange gaps in our knowledge. For instance, it would be natural to suspect that (conditional) inapproximability in the polynomial time setting should also lead to (unconditional) inapproximability in the streaming setting. But we don't have a formal theorem proving this.''
\end{quote}

That is, the question is the following: If $\bgd\mF$ is classically UG-hard according to the dichotomy theorem of Raghavendra~\cite{Rag08}, then is it also necessarily hard for $\sqrt n$-space streaming algorithms? An interesting special case is when Raghavendra's dichotomy theorem~\cite{Rag08} implies that $\mF$ is \emph{approximation resistant} (under the UGC). Must it also be streaming approximation-resistant according to \cite{CGSV21-boolean,CGSV21-finite}? Specifically, \cite{CGSV21-boolean} considers the construction by Potechin~\cite{Pot19} of a balanced linear threshold predicate $f_{\ltf} : \BZ_2^k \to \{0,1\}$ such that $\mbcsp[f_{\ltf}]$ is UG-approximation resistant according to \cite{Rag08}; they observe that since $f_{\ltf}$ is a balanced linear threshold function, it cannot support one-wise independence (cf. \cref{ex:one-wise-indep}), so new tools are needed to prove the streaming approximation-resistance of $\mbcsp[f_{\ltf}]$.

Here, we give an explicit formulation of this open problem. For a distribution $\CD \in \Delta(\BZ_2^k)$, let $\vecmu^{(2)}(\CD) \in [-1,1]^{k \times k}$ denote $\CD$'s matrix of \emph{second-order marginals}, which is the symmetric matrix with entries $\mu^{(2)}(\CD)_{i,j} = \E_{\vecb \sim \CD}[(-1)^{b_i+b_j}]$ for $i,j \in [k]$. Now \cite{Pot19} phrases the \cite{Rag08} criterion for approximation-resistance in the following way:

\begin{definition}\label{def:perfect-integrality-gap}
A Boolean predicate $f : \BZ_2^k \to \{0,1\}$ \emph{supports a perfect integrality gap instance for the standard SDP} if there exists $\CD_N \in \Delta(\BZ_2^k)$ and $\CD_{Y,\vecb} \in \Delta(\BZ_2^k)$ for each $\vecb \in \supp(\CD_N)$ such that:

\begin{enumerate}[label={\roman*.},ref={\roman*}]
    \item For all $\veca \in \BZ_2^k$, $\val_{\Psi(\CD_N)}(\veca) = \rho(f)$.\label{item:sdp-no-dist}
    \item For all $\vecb \in \supp(\CD_N)$, $\val_{\Psi(\CD_{Y,\vecb})}(\vecb)=1$.\label{item:sdp-yes-dists}
    \item For all $\vecb,\vecb' \in \supp(\CD_N)$, $\vecmu(\CD_{Y,\vecb})=\vecmu(\CD_{Y,\vecb'})$ and $\vecmu^{(2)}(\CD_{Y,\vecb})=\vecmu^{(2)}(\CD_{Y,\vecb'})$.
\end{enumerate}
\end{definition}

In comparison with the \cite{CGSV21-boolean} dichotomy theorem for streaming (\cref{thm:cgsv-bool-dichotomy}), note that \cref{item:sdp-no-dist} of \cref{def:perfect-integrality-gap} is \emph{stronger} than the condition $\beta_f(\CD_N)=\rho(f)$ since $\Psi(\CD_N)$'s value is bounded on \emph{every} assignment, not just random symmetric assignments. On the other hand, there is no single $\yes$ distribution $\CD_Y$; instead, we have a \emph{collection} of $\yes$ distributions with identical (first-order) marginals and second-order marginals.

\begin{example}
Guruswami, Levin, Sudan, and Trevisan~\cite{GLST98} studied the predicate $f_{\GLST} : \BZ_2^4 \to \{0,1\}$ defined by $f_{\GLST}(b_1,b_2,b_3,b_4) = 1 + b_2 + b_1b_3 + (1-b_1)b_4$. As Potechin~\cite{Pot19} observes, $f_{\GLST}$ supports a perfect integrality gap instance. Indeed, take $\CD_N = \Unif_{\{(0,0,0,0),(0,1,0,0)\}}$, $\CD_{Y,(0,0,0,0)}=\Unif_{\{(0,0,0,1),(0,1,1,0),(1,0,1,0),(1,1,0,1)\}}$, and $\CD_{Y,(0,1,0,0)} = \Unif_{\{(1,0,0,1),(0,1,0,1),(0,0,1,0),(1,1,1,0)\}}$; in particular, $\vecmu(\CD_{Y,(0,0,0,0)})=\vecmu(\CD_{Y,(0,1,0,0)})=\veczero$ and $\vecmu^{(2)}(\CD_{Y,(0,0,0,0)})$ and $\vecmu^{(2)}(\CD_{Y,(0,1,0,0)})$ are both identically zero except for the $(3,4)$/$(4,3)$ entries, which are $-1$. We observe that $f_{\GLST}$ also supports one-wise independence, e.g., it supports the one-wise independent $\CD = \Unif_{\{(0,1,0,0),(1,0,1,1)\}}$.
\end{example}

\begin{openproblem}\label{prob:ug-vs-strm}
Show that if $f : \BZ_2^k \to \{0,1\}$ supports a perfect integrality gap instance for the standard SDP, then $\mbf$ is streaming approximation-resistant in $\sqrt n$ space, i.e., for every $\epsilon > 0$, every streaming algorithm which $(\rho(f)+\epsilon)$-approximates $\mbf$ uses at least $\Omega(\sqrt n)$ space.
\end{openproblem}

One interesting aspect of \cref{prob:ug-vs-strm} is that the perfect integrality gap assumption for $f$ does not seem to suggest any particular $\yes$ distribution to use to prove a streaming lower bound.

\section{Random-ordering and linear-space streaming}\label{sec:conc-rand-lspace}

Recall from our discussion in \cref{sec:mcut-discussion} that \cite{KKS15} and \cite{KK19} proved $\sqrt n$-space random-ordering and linear-space adversarial-ordering lower bounds against $\mcut$, respectively. As discussed in \cref{sec:cgsvv}, \cite{CGS+22} subsequently extended the linear-space lower bounds to so-called ``wide'' families of predicates. However, to the best of our knowledge there is no evidence that either of these lower bounds doesn't extend to all families which are $\sqrt n$-space, adversarial-ordering streaming approximation-resistant according to \cite{CGSV21-finite}, i.e., all families which ``weakly support one-wise independence'' (see the end of \cref{sec:cgsv-discussion} above). An ambitious goal along these lines would be to prove lower bounds against algorithms which can use \emph{both} linear space and random input order:

\begin{openproblem}\label{prob:better-lbs}
Show that if $\CF$ is a family of predicates weakly supporting one-wise independence, then for every $\epsilon > 0$, $(\rho(\CF)+\epsilon)$-approximating $\mF$ with a random-order streaming algorithm requires $\Omega(n)$ space.
\end{openproblem}

We view \cref{prob:better-lbs} as likely being true but requiring substantially more ``in-the-weeds'' combinatorial analyses than were required even in \cite{KK19,CGS+22}.

For families which don't have these one-wise independence properties, on the other hand, the situation appears to be much murkier. Even in the simplest case, that of $\mdcut$, we discussed in \cref{sec:mdcut-rand-linear} the fact that our hard instances for $\sqrt n$-space adversarial-order streaming break down both for $O(\log n)$-space random-ordering streaming and $o(n)$-space adversarial-order streaming. Indeed, the only lower bounds in these settings for $\mdcut$ we are aware of come from the trivial reduction from $\mcut$, which implies only hardness of $(\frac12+\epsilon)$-approximation. We conjecture, therefore, that the ideas of \cite{CKP+21,CGS+21} used for distinguishing the hard instances from \cite{CGV20} extend to general approximation algorithms for $\mdcut$ in these settings:

\begin{openproblem}
Show that for every $\epsilon > 0$, $\mdcut$ can be $(\frac12-\epsilon)$-approximated by:
\begin{enumerate}[label={\roman*.},ref={\roman*}]
    \item $O(\polylog n)$-space random-order streaming algorithms.
    \item $o(n)$-space adversarial-order streaming algorithms.
\end{enumerate}
\end{openproblem}

\section{Streaming (vs. sketching) lower bounds}

In our joint work \cite{BHP+22}, we show that \cite{CGSV21-boolean}'s streaming lower bounds, in particular the padded one-wise pair criterion (\cref{thm:cgsv-streaming-lb}), cannot rule out streaming approximations which beat the best sketching approximations in $\sqrt n$ space. That is, while $\sqrt n$-space sketching algorithms cannot $(\frac29+\epsilon)$-approximate $\mdcut$ for every $\epsilon > 0$, the techniques in \cite{CGSV21-boolean} cannot strengthen this statement to hold for all streaming algorithms (\cref{thm:cgsv-streaming-failure-3and}). However, as we mentioned in \cref{obs:cgsv-streaming-3and-lb}, \cite{CGSV21-boolean}'s techniques do show that $\sqrt n$-space streaming algorithms cannot $\approx 0.2362$-approximate $\mdcut$. Thus, we believe the following conjecture is quite plausible:

\begin{openproblem}\label{prob:3and}
Show that for every $\epsilon > 0$, streaming algorithms which $(\frac29+\epsilon)$-approximate $\mthreeand$ use at least $\Omega(\sqrt n)$ space.
\end{openproblem}

However, we are not currently aware of any candidate communication game which could be used to prove \cref{prob:3and}.

Another potential direction for $\sqrt n$-space streaming is to try to extend to the setting $k=2,q>2$. (\cite[Proposition 2.12]{CGSV21-boolean} shows that in the case $k=q=2$, distributions with matching marginals are always padded one-wise pairs.) Indeed, we construct the following example in the case $q=3,k=2$:

\begin{example}
We'll construct two distributions $\CD_Y,\CD_N \in \Delta(\BZ_3^2)$ corresponding to distributions in the $k=2,q=3$ case (as discussed at the end of \cref{sec:cgsv-discussion}). Let \[ \CD_Y = \left\{(0,1) \text{ w.p. } \frac14, (1,1) \text{ w.p. } \frac14, (2,2) \text{ w.p. } \frac12\right\} \] and \[ \CD_N = \left\{(0,2) \text{ w.p. } \frac14, (1,2) \text{ w.p. } \frac14, (2,1) \text{ w.p. } \frac12\right\}. \] Then both $\CD_Y$ and $\CD_N$ have first-coordinate marginals $\{0 \text{ w.p. } \frac14, 1 \text{ w.p. } \frac14, 2 \text{ w.p. } \frac12\}$ and second-coordinate marginals $\{1 \text{ w.p. } \frac12, 2 \text{ w.p. }\frac12\}$. Thus, their marginals match, but are not uniform; moreover, $\CD_Y$ and $\CD_N$ have disjoint supports, and so they are not a padded one-wise pair.
\end{example}

\section{Multi-pass lower bounds}

The recent multi-pass cycle-counting lower bounds of Assadi and N~\cite{AN21} imply roughly that for every fixed $\epsilon > 0$ and $p \in \BN$, $(1-3\epsilon,1-2\epsilon)\text{-}\mcut$ requires $\Omega(n^{1-O(p\epsilon)})$ space for $p$-pass streaming algorithms (see \S5.3 in their paper).

\begin{openproblem}\label{prob:multi-pass}
Prove that for every $\epsilon > 0$ and $p \in \BN$, there exists $\delta > 0$ such that every $p$-pass streaming algorithm which $(\frac12+\epsilon)$-approximates $\mcut$ uses $\Omega(n^\delta)$ space. (Even better, $\Omega(n)$ space, and in the random-ordering setting!)
\end{openproblem}

Such a bound could not follow from reductions from cycle-counting \`a la \cite{AKSY20,AN21}, which have the following general structure: The $\yes$ and $\no$ instances are unions of vertex-disjoint cycles and paths, and among the cycles, more are even in the $\yes$ instances and more are odd in the $\no$ instances. $\no$ instances have lower value than $\yes$ instances because every cut fails to cut at least one edge in every odd cycle. However, since odd cycles have length at least $3$, the $\no$ instances have value at least $\frac23$ and thus cannot be used to prove approximation-resistance. 

Instead, if it is indeed possible, we expect \cref{prob:multi-pass} to be resolved using a version of the $\seqbpd$ game (see \cref{def:seqbpd}) which has $p$ ``rounds'' in which each $\Bob_t$ gets to see his input again (at the end of each round, $\Bob_T$ gets to send a message to $\Bob_1$). Note that this ``$p$-round, $T$-player'' $\seqbpd$ game is \emph{not} the same as (standard) $\seqbpd$ with $pT$ players; e.g., in Round $1$ $\Bob_1$ can store some ``hash'' of his input and then verify this hash in Round $2$. Thus, $\Bob_1$'s input in Round $2$ is not ``independent'' of the previous inputs, unlike in single-round $\seqbpd$. We would therefore have to carry out a $\seqbpd$ hardness analysis \`a la \cite{KK19} in the setting where, starting in Round $2$, the input $(M_t,\vecz(t))$ for each $\Bob_t$'s is drawn from a ``typical'', but not uniform, distribution. Proving hardness in this setting seems far beyond current lower-bound techniques.

\vspace{1in}
\hrulefill\hspace{0.2cm} \decoone \hspace{0.2cm} \hrulefill

\setstretch{\dnormalspacing}

\backmatter



\end{document}